\documentclass[aps,prd,onecolumn,longbibliography,nofootinbib,floatfix,amsmath,amssymb]{revtex4-2}

\usepackage{amsthm,mathrsfs,mathtools}
\usepackage{array}[=2016-10-06]
\usepackage{booktabs}
\usepackage{graphicx}
\graphicspath{{figures/}}
\usepackage{microtype}
\usepackage{hyperref}

\hypersetup{
 colorlinks=true,
 linkcolor=blue,
 citecolor=blue,
 urlcolor=blue,
 pdftitle={Deterministic Trust Regions for Finite-Window Black-Hole Spectroscopy in GW250114},
 pdfauthor={Ruiliang Li}
}

\numberwithin{equation}{section}

\newtheorem{theorem}{Theorem}[section]
\newtheorem{proposition}[theorem]{Proposition}
\newtheorem{lemma}[theorem]{Lemma}
\newtheorem{corollary}[theorem]{Corollary}
\newtheorem{assumption}[theorem]{Assumption}
\theoremstyle{definition}
\newtheorem{definition}[theorem]{Definition}
\theoremstyle{remark}
\newtheorem{remark}[theorem]{Remark}

\providecommand{\etammstar}{\overline\eta_{\mathrm{mm}}^\star}

\begin{document}
\title{Deterministic Trust Regions for Finite-Window Black-Hole Spectroscopy in GW250114}
\author{Ruiliang Li}
\email{lirl23@mails.tsinghua.edu.cn}
\affiliation{Tsinghua University}
\date{April 2026}
\begin{abstract}
We study finite-window black-hole spectroscopy in the loud-event regime and ask when a multimode ringdown fit supports a stable common-remnant Kerr interpretation. Starting from whitened, tapered detector-frame data, we prove a deterministic frequency-extraction theorem for a projected sampled Prony--matrix-pencil pipeline with explicit statistical, algorithmic, omitted-tail, and mismatch terms. We then construct a local inverse atlas for the Kerr $(\ell,m,n)=(2,2,0)$ map on an event-local detector-frame remnant box for GW250114 and propagate the resulting primary uncertainty into $(2,2,1)$ and $(4,4,0)$ consistency tests. These ingredients yield a detector-frame trust criterion for individual windows.

We calibrate mismatch and colored-noise radii on a GW250114-like synthetic waveform bank built from public surrogate, CCE, and numerical-relativity information, and we apply the resulting bounds to the public H1/L1 strain and public parameter-estimation products for GW250114. The accepted windows form an intermediate post-peak band: earlier windows remain sensitive to start-time drift and structured nuisance fits, whereas later windows become variance dominated. Within that band, the recovered remnant remains consistent with the public inspiral--merger--ringdown estimates and supports a common-remnant Kerr interpretation that survives the full preprocessing and robustness checks. For loud events, the relevant question is therefore which finite detector-frame windows sustain spectroscopy, not whether some multimode fit can be made in isolation.
\end{abstract}
\keywords{gravitational waves, black-hole spectroscopy, ringdown, quasi-normal modes, GW250114}
\maketitle

\providecommand{\TrustDet}{\mathcal T^{\mathrm{det}}}
\providecommand{\TrustGW}{\mathcal T^{\mathrm{GW250114}}}
\providecommand{\Kdet}{\mathcal K_{\mathrm{det}}}
\providecommand{\Mf}{M_{\mathrm f}}
\providecommand{\Mdet}{M_{\mathrm f}^{\mathrm{det}}}
\providecommand{\Msrc}{M_{\mathrm f}^{\mathrm{src}}}
\providecommand{\chif}{\chi_{\mathrm f}}
\providecommand{\Mzero}{\mathcal M_0}
\providecommand{\Mone}{\mathcal M_1}
\providecommand{\Mtwo}{\mathcal M_2}
\providecommand{\Mref}{\mathcal M^{\sharp}}
\providecommand{\Raux}[1]{\mathcal R_{#1}}
\providecommand{\deltaiso}{\delta_{\mathrm{iso}}}
\providecommand{\Dnet}{\mathfrak D}

\section{Introduction}\label{sec:introduction}

Black-hole spectroscopy has long been advertised as one of the cleanest strong-field tests of general relativity. The basic idea is familiar. Once a merger remnant settles toward a Kerr black hole, its gravitational-wave signal is expected to decompose into damped oscillatory tones with complex frequencies fixed by the remnant mass and spin. Measuring more than one such tone makes it possible, in principle, to test whether all measured frequencies are compatible with a single Kerr geometry \cite{DreyerEtAl2004BlackHoleSpectroscopy,BertiCardosoStarinets2009QNMReview,BaibhavBerti2019Multimode}. For loud events such as GW250114, the dominant difficulty is no longer detectability alone. The central issue is the systematic stability of finite-window ringdown inference: start time, fitted family, omitted content, and early-time alternatives can all affect whether a multimode fit supports a single-remnant Kerr interpretation.

This change of regime shifts the problem. The issue is when a multimode fit supports a robust common-remnant Kerr interpretation on a finite detector-frame window. The literature on ringdown start time already made clear that the answer can depend sharply on how early one begins the fit \cite{BhagwatEtAl2018StartTime}. The recognition that overtones can extend the useful fitting region much closer to the peak sharpened this issue further, because earlier windows come with smaller variance but larger exposure to omitted content and model competition \cite{GieslerEtAl2019Overtones}. More recent work has pushed the same tension further. Linear and nonlinear descriptions of ringdown can trade fit quality against interpretive clarity \cite{QiuEtAl2024LinearNonlinear}; high-overtone fits exhibit severe start-time sensitivity and correlation structure \cite{ColemanFinch2025HighOvertones}; orthonormalized mode bases can change numerical conditioning without changing the underlying physical span \cite{MorisakiEtAl2025Orthonormal}; and public CCE waveform studies now tabulate ringdown mode content across start times in a way that makes omitted linear, retrograde, and nonlinear structure visible rather than anecdotal \cite{DyerMoore2025QNMContent}. Taken together, these developments show that a successful multimode fit does not by itself establish reliability.

GW250114 makes this methodological issue unavoidable. The event is publicly available through the Gravitational Wave Open Science Center together with multiple public parameter-estimation products and calibrated detector strain \cite{GWOSC_GW250114,VallisneriEtAl2015GWOSC}. The official spectroscopy analysis reports that at least two quasi-normal modes are required by the post-merger data, that the dominant quadrupolar mode and its first overtone are consistent with Kerr at multiple post-peak times, and that a full-signal analysis constrains the fundamental $(\ell,m)=(4,4)$ mode as well \cite{LVK_GW250114_Spectroscopy}. At the same time, GW250114 has also motivated two more aggressive event-specific interpretations of the earliest post-merger regime: a direct-wave or horizon-signature component and a nonlinear quadratic-mode contribution \cite{LuEtAl2025DirectWave,WangEtAl2026Quadratic}. Once the same event supports both strong Kerr-consistent spectroscopy statements and physically motivated early-time alternatives, the stability of the interpretation becomes part of the problem. The question is which claims remain stable once start time, model family, nuisance structure, and inverse conditioning are all held fixed and audited.

We study that question through a finite detector-frame inverse problem. Its starting object is a whitened and tapered detector-frame ringdown window,
\begin{equation}\label{eq:intro-windowed-signal}
 y(u)
 =
 \sum_{j\in\mathcal A} A_j e^{-i\omega_j u}
 + r_{\mathrm{tail}}(u)
 + r_{\mathrm{mm}}(u)
 + n(u),
 \qquad 0\le u\le T,
\end{equation}
where the fitted modal family $\mathcal A$ is chosen from a fixed hierarchy, $r_{\mathrm{tail}}$ denotes omitted linear Kerr content, $r_{\mathrm{mm}}$ denotes genuine model mismatch including prompt or nonlinear contamination, and $n$ is detector noise. The remnant parameters are inferred in detector-frame variables $(\Mdet,\chif)$ because the primitive data object is detector strain and the finite-window inverse problem lives in the redshifted frame. The logical output is a membership decision in a detector-frame trust set
\begin{equation}\label{eq:intro-trust-schematic}
 \Xi=(t_0,T)\in \TrustDet_{\eta}(\mathcal M),
\end{equation}
for a fixed model family $\mathcal M\in\{\Mzero,\Mone,\Mtwo\}$ and a fixed tolerance vector $\eta$, followed by an event-level decision in the numerical acceptance set
\begin{equation}\label{eq:intro-gwtrust-schematic}
 \Xi\in \TrustGW_{\eta,\eta_{\mathrm{PE}}}(\mathcal A)
\end{equation}
for the public GW250114 anchor lattice. A window is trusted only if extraction, inversion, drift, auxiliary consistency, mismatch calibration, and nuisance-robustness tests all succeed simultaneously. The deterministic framework developed below makes that statement mathematically definite.

\subsection{From detectability to reliability}

In loud post-merger events, the main difficulty is no longer the existence of a multimode fit. Very early windows offer small formal variance but strong exposure to prompt structure, nonlinear content, and start-time dependence. Much later windows suppress those effects, but the noise floor and the inverse conditioning deteriorate. Reliable spectroscopy can therefore occupy, if anywhere, an intermediate band of finite detector-frame windows. Sections~\ref{sec:abstract-extraction}--\ref{sec:trust-region} turn this observation into explicit inequalities involving extraction radii, inverse conditioning, neighboring-window drift, auxiliary consistency, calibrated mismatch budgets, and nuisance audits.

The later analysis keeps two dependencies explicit throughout. The verdict depends on position in the post-peak scan, and it depends on the fitted family inside the fixed hierarchy \(\Mzero\subset\Mone\subset\Mtwo\) together with its nuisance enlargements. The goal is to identify the windows for which these two axes of variation still lead to a stable common-remnant interpretation.

\subsection{GW250114 as the methodological test case}

GW250114 is the natural event on which to study this problem. Its public release combines open detector strain, multiple inspiral--merger--ringdown posterior products, and a collaboration spectroscopy analysis with strong ringdown constraints \cite{GWOSC_GW250114,LVK_GW250114_Spectroscopy}. The event is loud enough to support aggressive early-time fits and rich enough to motivate more than one physically plausible description of the earliest post-merger regime.

The collaboration analysis provides the external astrophysical benchmark \cite{LVK_GW250114_Spectroscopy}. Here the issue is which public start-time claims remain stable after detector-frame inversion, comparison with the full public remnant ensemble, and nuisance audits against direct-wave and quadratic alternatives are imposed simultaneously. The answer is window dependent. GW250114 supports a nonempty trusted detector-frame band, but the public \(3M_f\), \(6M_f\), \(9M_f\), and \(11M_f\) anchors do not all carry the same level of robustness.

\subsection{Main results}

The argument begins by fixing the model hierarchy before any theorem is stated. The baseline linear Kerr families are
\[
 \Mzero=\{220\},
 \qquad
 \Mone=\{220,221\},
 \qquad
 \Mtwo=\{220,221,440\},
\]
with a larger fixed linear reference family $\Mref$ used only to separate omitted linear Kerr content from genuine mismatch. Direct-wave and quadratic additions enter only as nuisance families, and the orthonormal-basis fit enters only as a same-span computational audit. This separation prevents later reductions in residual size or contour area from being read as evidence for a new physical component.

Section~\ref{sec:abstract-extraction} proves a deterministic frequency-extraction theorem for the projected sampled Prony pipeline. It yields labeled detector-frame frequency estimates together with an additive error decomposition whose four parts have distinct meanings: a statistical term, an algorithmic term, an omitted-linear term, and a genuine mismatch term. Section~\ref{sec:kerr-inversion} then constructs a local inverse atlas for the dominant Kerr map on the compact event-local detector-frame box $\Kdet$ and propagates the resulting primary uncertainty into deterministic $221$ and $440$ consistency tests.

Section~\ref{sec:trust-region} combines extraction error, inverse conditioning, local start-time drift, and auxiliary residuals into a detector-frame trust criterion with explicit hypotheses and explicit failure certificates. Section~\ref{sec:numerical-calibration} calibrates conservative mismatch and colored-noise radii on a GW250114-like synthetic waveform bank built from public surrogate, CCE, and numerical-relativity sources. Section~\ref{sec:gw250114-empirical} applies those bounds to the public GW250114 anchor lattice, compares the detector-frame inverse tracks to the public parameter-estimation hull, audits direct-wave and quadratic alternatives, and constructs the final event-level acceptance set. Figure~\ref{fig:sec8-trust-map} records the resulting trust map. The public mass--spin and auxiliary-consistency contours reproduced in Section~\ref{sec:gw250114-empirical} are used only as comparison anchors and are not presented as new posterior calculations.

Throughout, the focus is window-by-window common-remnant Kerr compatibility under a fixed hierarchy of fitted families, a fixed calibration procedure, and a fixed set of nuisance checks. That local formulation is what makes an acceptance criterion with explicit constants and explicit failure modes possible.

Sections~\ref{sec:data-conventions} and \ref{sec:model-hierarchy} fix the public data products, detector-frame conventions, fitted families, and nuisance classes. Sections~\ref{sec:abstract-extraction}--\ref{sec:trust-region} develop the extraction, inversion, and trust results. Section~\ref{sec:numerical-calibration} calibrates the finite-window radii on the synthetic waveform bank, and Section~\ref{sec:gw250114-empirical} applies the resulting criteria to GW250114. Section~\ref{sec:relation-literature} places the results within the current ringdown literature, while the appendices collect the proofs and numerical details used below.

Black-hole spectroscopy is reliable on a finite detector-frame window only when the extracted frequencies are label stable, the inverse map is conditioned on the relevant remnant box, neighboring windows do not drift excessively, the fitted auxiliary tones remain compatible with the remnant inferred from the dominant tone, and the verdict survives mismatch calibration together with the nuisance checks. The analysis identifies such windows deterministically and shows that GW250114 contains a nonempty but restricted trusted band.

\setcounter{section}{1}
\providecommand{\Kdet}{\mathcal K_{\mathrm{det}}}
\providecommand{\Ksrc}{\mathcal K_{\mathrm{src}}}
\providecommand{\Dnet}{\mathfrak D}
\providecommand{\Mf}{M_{\mathrm f}}
\providecommand{\Mdet}{M_{\mathrm f}^{\mathrm{det}}}
\providecommand{\Msrc}{M_{\mathrm f}^{\mathrm{src}}}
\providecommand{\chif}{\chi_{\mathrm f}}
\providecommand{\wpeak}{t_{\mathrm{pk}}}
\providecommand{\tevt}{t_{\mathrm{evt}}}
\providecommand{\Ppub}{\mathscr P_{\mathrm{pub}}}
\providecommand{\Gstart}{\mathscr G_{t_0}}
\providecommand{\Gwin}{\mathscr G_T}
\providecommand{\Iwork}{\mathcal I_{\mathrm{work}}}
\providecommand{\tauSun}{\tau_\odot}

\section{Data, conventions, and the event-local neighborhood}\label{sec:data-conventions}

The analysis operates at two distinct levels of locality. Mode estimates are extracted from finite detector-frame windows beginning a specified interval after a fixed reference peak, while inverse estimates and all consistency tests are evaluated on a compact remnant neighborhood supported by public event-level parameter-estimation products. The public strain products, detector network, fiducial peak convention, detector-frame variables used for inference, public comparison ensemble, and compact event-local box are fixed throughout. Source-frame quantities enter only at the final reporting stage. Appendix~\ref{app:assumption-ledger} records the standing assumptions and notation, Appendix~\ref{app:frame-conventions} proves the frame-scaling identities used below, and Appendix~\ref{app:preprocessing} fixes the preprocessing chain in full detail.

\subsection{Public event products and the detector network}

We work with the publicly released event GW250114\_082203 on the Gravitational Wave Open Science Center. The event time is
\[
\tevt = 1420878141.2\ \mathrm{s}\ \text{(GPS)},
\]
corresponding to 2025-01-14 08:22:03 UTC. The public event record lists H1 and L1 as the operating detectors and notes that Virgo was not observing at the event time. We therefore fix
\[
\Dnet = \{\mathrm{H1},\mathrm{L1}\},
\]
and use the cleaned public GWOSC strain from these two detectors as the only strain input in the analysis \cite{GWOSC_GW250114,VallisneriEtAl2015GWOSC}. The default analysis uses the public \(4\,\mathrm{kHz}\) strain release. At \(4\,\mathrm{kHz}\), the Nyquist frequency is \(2048\,\mathrm{Hz}\), comfortably above the ringdown band relevant for the modes \(220\), \(221\), and \(440\). The \(16\,\mathrm{kHz}\) files are used only in the robustness checks.

Rather than treat the full public strain record as a single object, we extract the centered working epoch
\[
\Iwork = [\tevt-512\,\mathrm{s},\,\tevt+512\,\mathrm{s}],
\]
from the public files and treat this \(1024\,\mathrm{s}\) interval as the local detector environment in which both the off-source noise geometry and the finite post-peak windows are defined. The off-source region used for power-spectral-density estimation is fixed in Appendix~\ref{app:preprocessing} as
\[
[\tevt-512,\,\tevt-16]\cup[\tevt+16,\,\tevt+512]\ \mathrm{s},
\]
so that the central \(32\,\mathrm{s}\) interval containing the merger and its immediate nonstationary neighborhood is excluded from the PSD estimate. The PSD is estimated on the untapered off-source strain, the detector streams are then whitened, and the local ringdown window is subsequently extracted and tapered.

The public event page also lists four event-level parameter-estimation summaries that matter directly for the analysis:
\[
\Ppub = \{\mathrm{NRSur7dq4},\ \mathrm{PhenomXO4a},\ \mathrm{PhenomXPHM},\ \mathrm{SEOBNRv5PHM}\}.
\]
The GWOSC detail page marks PhenomXO4a as the default public PE product, but the comparison ensemble used here consists of the full public quartet. Every later event-level stability statement is checked against all four public products rather than against a single favorable posterior summary.

\begin{table}[ht]
\centering
\caption{Public data and baseline conventions for GW250114 used throughout the analysis.}
\label{tab:sec2-data-baseline}
\begin{tabular}{@{}p{0.30\textwidth}p{0.60\textwidth}@{}}
\toprule
Event identifier & GW250114\_082203 \\
GPS event time & \(1420878141.2\,\mathrm{s}\) \\
Operating detectors & \(\Dnet=\{\mathrm{H1},\mathrm{L1}\}\) \\

Public strain product & GWOSC cleaned H1/L1 strain, default input at \(4\,\mathrm{kHz}\) \\
Centered working epoch & \([\tevt-512\,\mathrm{s},\,\tevt+512\,\mathrm{s}]\) \\
Off-source PSD region & \([\tevt-512,\,\tevt-16]\cup[\tevt+16,\,\tevt+512]\,\mathrm{s}\) \\
Public PE ensemble & NRSur7dq4, PhenomXO4a, PhenomXPHM, SEOBNRv5PHM \\
Default public anchor & PhenomXO4a, because it is the GWOSC default PE product \\
Fiducial peak convention & The reference peak \(\wpeak\) is the maximum-likelihood NRSur7dq4 peak used in the official GW250114 spectroscopy analysis \\
Primary inference frame & Detector frame \((\Mdet,\chif)\), with trust variables \((\tau_0,\Theta)\) \\
Reporting frame & Source frame, used only for final astrophysical summaries \\
\bottomrule
\end{tabular}
\end{table}

The baseline analysis relies only on public strain and public parameter-estimation products. No private calibration product, unpublished posterior sample set, retuning of the detector network, or detector-specific adjustment of the working epoch enters the baseline construction. Later variations appear only as explicit robustness checks in Appendix~\ref{app:gw250114-robustness}.

\subsection{Detector-frame inference and the fiducial ringdown clock}

The finite-window inverse problem is posed directly on detector strain, so its primitive mass variable is the redshifted remnant mass rather than the source-frame mass. This choice is dictated by covariance. Let \(\widehat\omega_j(\chif)\) denote the dimensionless Kerr quasi-normal-mode frequency of mode label \(j=(\ell,m,n)\), and let the source-frame and detector-frame remnant masses be related by
\[
\Mdet = (1+z)\Msrc.
\]
Then the corresponding source-frame and detector-frame frequencies satisfy
\[
\omega_j^{\mathrm{src}}(\Msrc,\chif) = \frac{1}{\Msrc}\,\widehat\omega_j(\chif),
\qquad
\omega_j^{\mathrm{det}}(\Mdet,\chif) = \frac{1}{\Mdet}\,\widehat\omega_j(\chif).
\]
The next proposition records the only implication needed from these identities below.

\begin{proposition}[Detector-frame naturality of the ringdown inverse problem]\label{prop:sec2-detector-frame-naturality}
Fix a mode label \(j\) and a redshift \(z\ge 0\). If \(\Mdet=(1+z)\Msrc\), then
\[
\omega_j^{\mathrm{det}}(\Mdet,\chif) = \frac{1}{1+z}\,\omega_j^{\mathrm{src}}(\Msrc,\chif).
\]
Consequently any inverse problem posed directly on the observed detector strain identifies \(\Mdet\) and \(\chif\) as its primitive remnant parameters. The source-frame mass \(\Msrc\) enters only after an external redshift conversion.
\end{proposition}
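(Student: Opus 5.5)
The identity is a one-line substitution into the two frequency formulas displayed just before the proposition. Since $\widehat\omega_j(\chif)$ is dimensionless and depends only on the spin, and redshift does not change $\chif$, writing $\Mdet=(1+z)\Msrc$ gives
\[
\omega_j^{\mathrm{det}}(\Mdet,\chif)=\frac{\widehat\omega_j(\chif)}{(1+z)\Msrc}=\frac{1}{1+z}\,\omega_j^{\mathrm{src}}(\Msrc,\chif).
\]
This is exactly the claimed relation. The hypothesis $z\ge 0$ is used only to guarantee $1+z>0$, so the division is legitimate and the masses stay positive.

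For the ``consequently'' clause, I would make precise what it means for the inverse problem to identify a parameter. The data model \eqref{eq:intro-windowed-signal} enters through the complex frequencies $\omega_j$ in the detector clock $u$, so the forward map factors as
\[
(\Mdet,\chif)\longmapsto\bigl(\widehat\omega_j(\chif)/\Mdet\bigr)_{j\in\mathcal A}.
\]
Composing with $(\Msrc,\chif,z)\mapsto((1+z)\Msrc,\chif)$ shows that every triple $(\Msrc,\chif,z)$ sharing the same product $(1+z)\Msrc$ yields identical detector-frame frequencies, and therefore identical strain. Hence the observed strain is a function of $(\Mdet,\chif)$ alone; along the curves $(1+z)\Msrc=\text{const}$, $\Msrc$ and $z$ cannot be separated. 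The only quantities the strain can determine are $\Mdet$ and $\chif$, and recovering $\Msrc=\Mdet/(1+z)$ needs an external value of $z$.

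The main obstacle is not the algebra but stating the second claim rigorously without overclaiming. I would phrase it as a factorization and non-injectivity statement: the strain depends on $(\Msrc,z)$ only through $(1+z)\Msrc$. I would not assert that the map $(\Mdet,\chif)$ is injective; that is deferred to the local inverse atlas of Section~\ref{sec:kerr-inversion} and the frame-scaling identities of Appendix~\ref{app:frame-conventions}. With that scoping, the proposition only says that $\Mdet$, not $\Msrc$, is the natural coordinate, and the proof closes immediately.
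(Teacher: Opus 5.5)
Your proof is correct and follows the paper's route: the identity is the same one-line substitution of $\Mdet=(1+z)\Msrc$ into $\omega_j^{\mathrm{det}}=\widehat\omega_j(\chif)/\Mdet$. For the inference clause, the paper uses the same observation, namely that the measured oscillation and damping scales depend on $(\Msrc,z)$ only through $(1+z)\Msrc$; this is made explicit later in Proposition~\ref{prop:redshift-degeneracy}, so your factorization phrasing is simply a more careful version of the paper's informal sentence. One small correction: what coincides along $(1+z)\Msrc=\text{const}$ is the remnant-dependent frequency content, not literally the full strain, because the amplitudes (which physically also depend on distance) and the noise enter as separate nuisance terms.
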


\begin{proof}
Starting from the defining identities above and substituting \(\Mdet=(1+z)\Msrc\), one obtains
\[
\omega_j^{\mathrm{det}}(\Mdet,\chif)
=
\frac{1}{(1+z)\Msrc}\,\widehat\omega_j(\chif)
=
\frac{1}{1+z}\,\omega_j^{\mathrm{src}}(\Msrc,\chif).
\]
This proves the frequency scaling. The inference statement is then immediate. The measured oscillation and damping scales of the detector strain determine the redshifted mass \(\Mdet\) and the spin \(\chif\). They do not separate \(\Msrc\) from \(z\) without additional information beyond the finite-window ringdown data.
\end{proof}

Proposition~\ref{prop:sec2-detector-frame-naturality} is the reason every local inverse map below is written first as
\[
G_{220}\colon (\Mdet,\chif) \longmapsto \omega_{220}^{\mathrm{det}}(\Mdet,\chif),
\]
rather than as a map in source-frame mass. The source-frame remnant mass re-enters only after detector-frame inference has been completed and only for final astrophysical reporting. In particular, every start time \(t_0\) and every window length \(T\) appearing in the trust-region theorem is first a detector-frame quantity. Their frame-invariant detector-frame forms,
\[
\tau_0 = \frac{t_0}{\Mdet},
\qquad
\Theta = \frac{T}{\Mdet},
\]
are the only time variables on which extraction, inversion, calibration, and acceptance inequalities are evaluated below.

We fix the fiducial peak convention that defines the ringdown clock. Following the official GW250114 spectroscopy analysis, we let \(\wpeak\) denote the time at which the maximum-likelihood NRSur7dq4 strain magnitude over the two-sphere attains its maximum. All post-peak fits are measured relative to this same fiducial peak \cite{LVK_GW250114_Spectroscopy}. Thus the finite-window signal used throughout is always restricted to an interval of the form
\[
[\wpeak + t_0,\, \wpeak + t_0 + T],
\qquad t_0\ge 0,\ \ T>0.
\]
This convention matches the one used in the official spectroscopy paper, where the quoted conversion factor is \(t_{\Mf}=G\Mdet/c^3=0.337\,\mathrm{ms}\) for the adopted reference remnant mass \cite{LVK_GW250114_Spectroscopy}. Whenever later figures translate detector-frame offsets into milliseconds by using the shorthand \(M_\star=68M_\odot\), that conversion is a plotting convenience only; all trust inequalities are evaluated in the dimensionless detector-frame variables \((\tau_0,\Theta)\). Nothing in the later mathematics depends on the rounded decimal value, and whenever a physical time axis is shown our code carries the detector-frame mass through the conversion rather than fixing the inference to a fiducial millisecond clock.

The baseline scans are performed on a fixed dimensionless grid. We set
\[
\Gstart = \Bigl\{\tau_0=\frac{m}{2}\colon m=0,1,\dots,40\Bigr\},
\qquad
\Gwin = \{\Theta=12,16,20,24,28,32\}.
\]
The first set corresponds to start times from the reference peak to \(20\Mdet\) after the peak in increments of \(0.5\Mdet\). The second set corresponds to window lengths from \(12\Mdet\) to \(32\Mdet\). For GW250114-like remnant masses these are natural scales. A \(220\) oscillation period is of order one dozen mass units, while one to a few damping times correspond to windows of precisely the size recorded in \(\Gwin\). The grid is therefore fine enough to resolve the transition from early-time bias to late-time variance, but not so fine that neighboring windows become numerically indistinguishable. The finer and coarser alternatives used to check grid sensitivity are reserved for Appendix~\ref{app:gw250114-robustness}.

\begin{corollary}[Detector-frame size of the baseline scan]\label{cor:sec2-grid-ms}
Let \((\tau_0,\Theta)\in \Gstart\times\Gwin\) and let \(\Mdet\in\Kdet\). Then
\[
0\le t_0 \le 6.85\,\mathrm{ms},
\qquad
3.93\,\mathrm{ms}\le T \le 10.95\,\mathrm{ms}.
\]
In particular, every baseline window lies entirely within a few milliseconds of the fiducial post-peak epoch.
\end{corollary}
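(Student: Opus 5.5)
The plan is to observe that the conversion from the dimensionless grid variables to physical time is a product of two monotone factors, so the extreme values occur at corners of the box $\Gstart\times\Gwin\times\Kdet$. Writing $\tauSun = G M_\odot/c^3 \approx 4.9255\,\mu\mathrm{s}$ and $\Mdet = \mu M_\odot$, the definitions $\tau_0 = t_0/\Mdet$ and $\Theta = T/\Mdet$ give, in physical units,
\[
 t_0 = \tau_0\,\mu\,\tauSun,
 \qquad
 T = \Theta\,\mu\,\tauSun .
\]
Both expressions are nonnegative and nondecreasing in each argument. The argument uses only this identity and the compactness of $\Kdet$, whose detector-frame mass projection I denote $[\mu_-,\mu_+]$. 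Proposition~\ref{prop:sec2-detector-frame-naturality} guarantees that $\Mdet$, not $\Msrc$, is the right mass here, so no redshift factor enters.

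\textbf{Step 1 (start times).} From $\Gstart$ we have $0\le\tau_0\le 20$. The lower bound $t_0\ge 0$ is immediate at $\tau_0=0$. The upper bound is $20\,\mu_+\tauSun$.

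\textbf{Step 2 (window lengths).} From $\Gwin$ we have $12\le\Theta\le 32$. This gives
\[
 12\,\mu_-\tauSun \;\le\; T \;\le\; 32\,\mu_+\tauSun .
\]

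\textbf{Step 3 (numerical evaluation).} I would insert the mass endpoints of the event-local box, which is fixed from the public PE ensemble $\Ppub$ (its precise construction appears later, in Section~\ref{sec:kerr-inversion}). The stated constants are consistent with $\mu_-\approx 66.5$ and $\mu_+\approx 69.5$:
\begin{itemize}
 \item $20\times 69.5\times 4.9255\,\mu\mathrm{s}\approx 6.85\,\mathrm{ms}$;
 \item $12\times 66.5\times 4.9255\,\mu\mathrm{s}\approx 3.93\,\mathrm{ms}$;
 \item $32\times 69.5\times 4.9255\,\mu\mathrm{s}\approx 10.95\,\mathrm{ms}$.
\end{itemize}
Rounding the upper bounds up and the lower bound down keeps the inequalities valid. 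The closing sentence of the corollary then follows, since every baseline window is contained in $[\wpeak,\wpeak+t_0+T]$ with $t_0+T\le 17.8\,\mathrm{ms}$, which is a few milliseconds of post-peak data.

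\textbf{Main obstacle.} There is no analytic difficulty. The only delicate point is bookkeeping: the mass range of $\Kdet$ has not yet been stated at this point in the paper. The proof must therefore cite its later definition, and it must confirm that the box endpoints, with outward rounding, reproduce exactly the quoted decimals $6.85$, $3.93$ and $10.95$. I would check this first, because if $\Kdet$ were slightly wider the constants would need adjusting. I would also make explicit that the shorthand $M_\star = 68M_\odot$ is not used here, since using it would give the wrong endpoints.
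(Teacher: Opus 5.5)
Your argument is the same as the paper's. Both $t_0=\tau_0 M_{\mathrm f}^{\mathrm{det}}$ and $T=\Theta M_{\mathrm f}^{\mathrm{det}}$ are monotone in each factor, so you evaluate them at the corners of $[0,20]\times[12,32]\times[66.5,69.5]\,M_\odot$. The mass endpoints are exact, not merely ``consistent'' values: they are fixed in Definition~\ref{def:sec2-working-box}, in the same section just after the corollary, and not in Section~\ref{sec:kerr-inversion}. The paper's proof cites that definition forward in the same way.

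The step that does not go through is the numerical check you yourself identified as the main obstacle. Your claim that the quoted decimals are outward roundings is false for the upper bound on $T$. We have $32\cdot 69.5\,\tau_\odot = 2224\,\tau_\odot \approx 10.9543\,\mathrm{ms}$, which exceeds $10.95\,\mathrm{ms}$ by about $4\,\mu\mathrm{s}$. So $T\le 10.95\,\mathrm{ms}$ literally fails at the corner $(\Theta,M_{\mathrm f}^{\mathrm{det}})=(32,69.5\,M_\odot)$. The other two bounds are genuinely safe: $1390\,\tau_\odot\approx 6.8464\,\mathrm{ms}\le 6.85\,\mathrm{ms}$ and $798\,\tau_\odot\approx 3.9305\,\mathrm{ms}\ge 3.93\,\mathrm{ms}$.

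The paper's proof has the same issue, since it writes $32\cdot 69.5\,\tau_\odot\approx 10.95\,\mathrm{ms}$, a round-to-nearest value. No argument can deliver that constant as an exact inequality. You should either say explicitly that the decimals are nearest roundings, as the paper implicitly does, or prove the corrected bound $T\le 10.96\,\mathrm{ms}$.

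The same slip affects your closing estimate. The bound is $t_0+T\le 52\cdot 69.5\,\tau_\odot\approx 17.8007\,\mathrm{ms}$, so write $17.81\,\mathrm{ms}$ rather than $17.8\,\mathrm{ms}$. With these corrections your proof is complete and coincides with the paper's.
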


\begin{proof}
By definition,
\[
t_0 = \tau_0\Mdet,
\qquad
T = \Theta\Mdet,
\]
with \(\tau_0\in[0,20]\) and \(\Theta\in[12,32]\). Since \(\Kdet=[66.5,69.5]M_\odot\times[0.64,0.71]\) by Definition~\ref{def:sec2-working-box} below, and since one solar mass corresponds to the time scale \(\tauSun=4.92549095\times 10^{-6}\,\mathrm{s}\) \cite{PrsaEtAl2016NominalSolar}, we obtain
\[
0\le t_0 \le 20\cdot 69.5\tauSun \approx 6.85\,\mathrm{ms},
\]
and
\[
12\cdot 66.5\tauSun \approx 3.93\,\mathrm{ms}
\le T \le
32\cdot 69.5\tauSun \approx 10.95\,\mathrm{ms}.
\]
This proves the claim.
\end{proof}

Corollary~\ref{cor:sec2-grid-ms} explains why the baseline analysis is properly described as a finite-window post-merger study rather than as a full late-time tail analysis. The largest windows we use are still only of order \(10\,\mathrm{ms}\), and the entire start-time scan remains in the immediate post-peak regime. These millisecond bounds are descriptive translations of the detector-frame grid over \(\Kdet\); they do not replace the dimensionless detector-frame variables \((\tau_0,\Theta)\) that enter the theorem-level inequalities. This is exactly the regime in which model adequacy, nuisance content, and estimator conditioning all compete, which is why it is the relevant regime for a trust-region analysis.

\subsection{The public comparison ensemble and the event-local box}

We also fix the compact remnant neighborhood on which all Kerr frequency interpolation, derivative bounds, and inverse-conditioning constants are computed. The input is completely public, but only its detector-frame remnant summaries are used at this stage. The GWOSC detail page lists four event-level parameter-estimation products for GW250114 and reports, for each one, the detector-frame final mass and final spin together with their displayed uncertainties. The corresponding detector-frame rectangles are
\begin{equation}\label{eq:sec2-public-rectangles}
\begin{aligned}
R_{\mathrm{NRSur7dq4}} &= [67.2,68.9]M_\odot\times[0.67,0.69],\\
R_{\mathrm{PhenomXO4a}} &= [66.5,68.7]M_\odot\times[0.66,0.69],\\
R_{\mathrm{PhenomXPHM}} &= [67.2,69.1]M_\odot\times[0.67,0.69],\\
R_{\mathrm{SEOBNRv5PHM}} &= [66.8,68.8]M_\odot\times[0.66,0.68].
\end{aligned}
\end{equation}
These are the only public remnant summaries that enter the definition of the event-local box below. Source-frame masses are deferred to Appendix~\ref{app:frame-conventions} and to the final astrophysical reporting stage; they do not enter the inverse problem, the window clock, or the trust inequalities.

The detector-frame rectangles in \eqref{eq:sec2-public-rectangles} show two facts that shape the later analysis. First, all four public detector-frame summaries cluster very tightly around \(\Mdet\approx 68M_\odot\) and \(\chif\approx 0.68\). Second, the public spread is small enough that one can control all local Kerr geometry on a compact box that remains unmistakably event local while still carrying explicit padding away from most public faces.

\begin{definition}[Working box]\label{def:sec2-working-box}
We fix the detector-frame working box
\[
\Kdet = [66.5,69.5]M_\odot \times [0.64,0.71].
\]
All event-local Kerr-frequency interpolation, Jacobian bounds, separation constants, inverse-conditioning estimates, and theorem-level window inequalities below are computed on \(\Kdet\). The corresponding source-frame reporting box is recorded separately in Appendix~\ref{app:frame-conventions} and enters only after detector-frame inference has been completed.
\end{definition}

The detector-frame box \(\Kdet\) is anchored directly to the public detector-frame remnant summaries. Its lower mass face coincides with the most conservative public lower bound, while its upper mass face and both spin faces retain explicit padding for differentiation, interpolation, and inverse-control estimates.

\begin{proposition}[Containment of the public remnant rectangles]\label{prop:sec2-public-box-contained}
Let \(R_\alpha\subset(0,\infty)\times[0,1)\) be the detector-frame rectangle associated with public product \(\alpha\in\Ppub\), obtained from the displayed GWOSC median and plus/minus uncertainties of \(\Mdet\) and \(\chif\). Then
\[
R_\alpha \subset \Kdet
\qquad\text{for every }\alpha\in\Ppub.
\]
Moreover,
\[
\bigcup_{\alpha\in\Ppub} R_\alpha
\subset
[66.5,69.1]M_\odot \times [0.66,0.69],
\]
so that \(\Kdet\) adds an upper-mass margin of \(0.4M_\odot\), a lower-spin margin of \(0.02\), and an upper-spin margin of \(0.02\), while its lower-mass face coincides with the public PhenomXO4a lower bound.
\end{proposition}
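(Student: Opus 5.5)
The plan is a direct interval comparison, coordinate by coordinate, since every object in the statement is an axis-aligned rectangle. First I will use the explicit description \eqref{eq:sec2-public-rectangles}. Each $R_\alpha$ is the product of the displayed median-minus-lower to median-plus-upper interval in $\Mdet$ with the corresponding interval in $\chif$. A product rectangle $[a,b]\times[c,d]$ lies in $[A,B]\times[C,D]$ exactly when $A\le a$, $b\le B$, $C\le c$ and $d\le D$. So containment in $\Kdet=[66.5,69.5]M_\odot\times[0.64,0.71]$ from Definition~\ref{def:sec2-working-box} reduces to eight scalar inequalities per product, thirty-two in all.

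The second step treats the union bound. I will compute $\min_\alpha a_\alpha$, $\max_\alpha b_\alpha$, $\min_\alpha c_\alpha$ and $\max_\alpha d_\alpha$ over $\Ppub$. Reading off \eqref{eq:sec2-public-rectangles}, the mass lower endpoints are $67.2,66.5,67.2,66.8$, so the minimum is $66.5$, attained by PhenomXO4a. The mass upper endpoints are $68.9,68.7,69.1,68.8$, so the maximum is $69.1$, attained by PhenomXPHM. The spin lower endpoints have minimum $0.66$ and the spin upper endpoints have maximum $0.69$.

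Hence every $R_\alpha$ lies in the bounding rectangle $[66.5,69.1]M_\odot\times[0.66,0.69]$, and so does their union. That bounding rectangle is itself contained in $\Kdet$, because $66.5\ge 66.5$, $69.1\le 69.5$, $0.66\ge0.64$ and $0.69\le0.71$. This gives the first claim as a corollary of the union bound, so the thirty-two separate checks are not actually needed.

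The margins are then simple differences:
\begin{itemize}
\item upper mass: $69.5-69.1=0.4$;
\item lower spin: $0.66-0.64=0.02$;
\item upper spin: $0.71-0.69=0.02$;
\item lower mass: $66.5-66.5=0$, which is the asserted coincidence with the PhenomXO4a lower bound.
\end{itemize}

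The only real obstacle is not mathematical but one of provenance. The rectangles must faithfully encode the displayed GWOSC medians and uncertainties, and rounding to the displayed precision must not push any endpoint outside $\Kdet$. I would note that the statement is about the rectangles as defined from the displayed (rounded) values, so the argument is exact at that precision. In a remark I would also check that each $R_\alpha$ lies in $(0,\infty)\times[0,1)$, which is immediate from the endpoint values.
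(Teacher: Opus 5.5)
Your proof is correct and follows the paper's argument: both read the endpoints off \eqref{eq:sec2-public-rectangles}, take coordinatewise extremes to get the envelope $[66.5,69.1]M_\odot\times[0.66,0.69]$, and compare it with $\Kdet$ to obtain the margins. The only difference is cosmetic: you deduce each per-product containment from the envelope rather than checking it first. One small slip, which your route makes irrelevant: containment of one rectangle in another is four scalar inequalities, so sixteen in all, not eight and thirty-two.
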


\begin{proof}
The explicit rectangles in \eqref{eq:sec2-public-rectangles} already show that each \(R_\alpha\) lies inside \([66.5,69.5]M_\odot\times[0.64,0.71]=\Kdet\), proving the first claim. Taking the union of the four mass intervals and the union of the four spin intervals yields
\[
\bigcup_{\alpha\in\Ppub} R_\alpha
\subset
[66.5,69.1]M_\odot\times[0.66,0.69].
\]
Subtracting this detector-frame envelope from \(\Kdet\) gives the stated upper-mass and spin margins, and shows that the only face without extra padding is the lower mass face at \(66.5M_\odot\).
\end{proof}

Proposition~\ref{prop:sec2-public-box-contained} fixes the event-local box for the later analysis. The uniform geometry of the Kerr map is computed once on \(\Kdet\), and every event-level plot is required to use that same domain. This avoids shrinking the working region after the fact until the inverse problem appears better conditioned than it is.

The box \(\Kdet\) is the detector-frame domain on which the deterministic inverse problem is solved. Public posterior regions enter later only as external comparison objects, while the trust-region theorem determines, window by window and model by model, when the frequency estimates extracted from the data are accurate enough that inversion on \(\Kdet\) is meaningful.

A final convention concerns public comparisons. We keep three comparison objects distinct. The first is the public IMR detector-frame remnant ensemble \(\{R_\alpha\}_{\alpha\in\Ppub}\) just listed. The second is the official GW250114 spectroscopy analysis; its ringdown-only and pSEOBNR results serve as external comparison objects and do not enter the theorem as premises. The third is the detector-frame finite-window inference developed here. A point estimate or credible set produced by our pipeline is compared to the public ensemble and to the official spectroscopy results only after detector-frame inference has been completed. It is never used to define the working box, the peak convention, or the window grid. No source-frame quantity enters the definition of \(\wpeak\), \(\Gstart\), \(\Gwin\), or \(\Kdet\).

These conventions fix the data-level foundations for the later sections. The detector network is \(\Dnet\), the fiducial event time is \(\tevt\), the ringdown clock is measured from \(\wpeak\), the inverse problem is posed in \((\Mdet,\chif)\), the baseline start-time and window grids are \(\Gstart\) and \(\Gwin\), and the event-local domain is \(\Kdet\). The next step is to specify the model classes and the extraction machinery on this fixed stage.

\setcounter{section}{2}
\providecommand{\Kdet}{\mathcal K_{\mathrm{det}}}
\providecommand{\Mf}{M_{\mathrm f}}
\providecommand{\Mdet}{M_{\mathrm f}^{\mathrm{det}}}
\providecommand{\chif}{\chi_{\mathrm f}}
\providecommand{\Dnet}{\mathfrak D}
\providecommand{\Ppub}{\mathscr P_{\mathrm{pub}}}
\providecommand{\Mzero}{\mathcal M_0}
\providecommand{\Mone}{\mathcal M_1}
\providecommand{\Mtwo}{\mathcal M_2}
\providecommand{\Mref}{\mathcal M^{\sharp}}
\providecommand{\NuSet}{\mathfrak N}
\providecommand{\Hwin}{\mathcal H_T}
\providecommand{\Sdir}{\mathcal D}
\providecommand{\Squad}{\mathcal Q}
\providecommand{\Pquad}{\mathcal P_{\mathrm{quad}}}
\providecommand{\Ndir}{N_{\mathrm{dir}}}
\providecommand{\Enu}{E}
\providecommand{\AllowNuis}{\mathfrak E_{\mathrm{nuis}}}
\providecommand{\distH}{\operatorname{dist}}

\section{Model hierarchy and comparison classes}\label{sec:model-hierarchy}

The detector-frame conventions of Section~\ref{sec:data-conventions} fix the event, the window clock, and the local remnant box. A reliability theorem also requires a fixed model hierarchy against which the finite-window data are judged. A window can look persuasive when compared with one family, unstable when compared with a slightly larger family, and physically ambiguous once one allows early-time prompt structure or nonlinear contamination. If these comparisons are allowed to drift from paragraph to paragraph, later statements about trusted windows have no invariant meaning. We specify the nested linear Kerr families used for inference, the larger linear completion used to define omitted linear content, the nuisance-extended classes used only as ambiguity diagnostics, and the same-span computational reparameterization used as a numerical cross-check.

The hierarchy adopted here is structural. It distinguishes baseline spectroscopy models, diagnostic stress tests, and coordinate changes within the same fitted span. That separation keeps later statements about extraction error, inverse stability, and trust regions logically clean. The baseline choices are guided by three pieces of current literature. First, the official GW250114 spectroscopy analysis isolates the dominant quadrupolar mode, supports a model containing the first overtone, and constrains the $440$ fundamental in a full-signal analysis \cite{LVK_GW250114_Spectroscopy}. Second, recent event-specific studies of GW250114 motivate two physically distinct early-time alternatives, namely a direct-wave or horizon-signature component and a bundle of quadratic quasi-normal modes \cite{LuEtAl2025DirectWave,WangEtAl2026Quadratic}. Third, recent methodological work has made two complementary points that matter for how one should organize the model hierarchy: orthonormalized quasi-normal-mode bases can reduce mode correlations without changing the underlying fitted span, and public CCE waveform catalogs can now be used to tabulate ringdown mode content across start times in a way that sharply exposes omitted linear and nonlinear structure \cite{MorisakiEtAl2025Orthonormal,DyerMoore2025QNMContent,ColemanFinch2025HighOvertones}.

Accordingly, four classes are distinguished. The families $\Mzero$, $\Mone$, and $\Mtwo$ are the baseline spectroscopy models used in the trust-region theorem. The larger family $\Mref$ defines omitted linear Kerr content relative to the fitted baseline family. The nuisance enlargements quantify ambiguity against physically motivated direct-wave and quadratic alternatives. The orthonormal-basis class is a same-span computational cross-check. This four-way separation underlies the model-theoretic structure of the analysis.

\subsection{Baseline linear Kerr families}

Let $W=(t_0,T)$ be a detector-frame post-peak window and let
\[
p=(\Mdet,\chif)\in\Kdet
\]
be a remnant parameter in the event-local box fixed in Section~\ref{sec:data-conventions}. For every finite set of mode labels $\mathcal A$, we write
\begin{equation}\label{eq:sec3-linear-subspace}
\mathcal S_{\mathcal A}(p;W)
:=
\left\{
 u\mapsto \sum_{j\in\mathcal A} A_j e^{-i\omega_j(p)u}
 :
 A_j\in\mathbb C^{\Dnet}
\right\}
\subset \Hwin,
\end{equation}
where $u\in[0,T]$ is measured from the left endpoint of the window, the detector-frame complex frequencies $\omega_j(p)$ are generated from the Kerr QNM tables of Appendix~\ref{app:kerr-qnm-data}, and the Hilbert-space norm on $\Hwin$ is the fixed noise-weighted norm induced by the preprocessing map of Appendix~\ref{app:preprocessing}. The amplitude vectors $A_j\in\mathbb C^{\Dnet}$ absorb detector responses, relative phases, and fixed network embedding conventions. The only object that carries remnant information is the common detector-frame parameter $p$.

The baseline fitted families are the three nested Kerr families
\begin{equation}\label{eq:sec3-baseline-families}
\Mzero:=\{220\},
\qquad
\Mone:=\{220,221\},
\qquad
\Mtwo:=\{220,221,440\}.
\end{equation}
Their roles are not interchangeable. The family $\Mzero$ is the dominant-mode family. It is the anchor for the primary inverse map of Section~\ref{sec:kerr-inversion}, and it is the only family from which we derive a remnant estimate by inversion alone. The family $\Mone$ is the minimal spectroscopy family in the strong sense relevant here: it is the smallest linear Kerr family that can test whether a second measured tone is compatible with the same remnant inferred from $220$. The family $\Mtwo$ augments that minimal spectroscopy class by the $440$ fundamental, because the official GW250114 full-signal analysis constrains that mode directly and because the event geometry favors its excitation more than in typical binary-black-hole mergers \cite{LVK_GW250114_Spectroscopy}. We therefore treat $\Mtwo$ as the largest baseline family for headline linear-Kerr comparisons.

The baseline families are nested as sets of labels and hence as detector-frame model spaces:
\begin{equation}\label{eq:sec3-baseline-nesting}
\Mzero\subset\Mone\subset\Mtwo,
\qquad
\mathcal S_{\Mzero}(p;W)\subset\mathcal S_{\Mone}(p;W)\subset\mathcal S_{\Mtwo}(p;W)
\end{equation}
for every $p\in\Kdet$ and every admissible window $W$. The physical meaning of this nesting is subtle. Moving from $\Mzero$ to $\Mone$ or from $\Mone$ to $\Mtwo$ changes the class of statements one is allowed to make. The family $\Mzero$ may identify a common remnant through the dominant mode. The family $\Mone$ may test a common-remnant overtone interpretation. The family $\Mtwo$ may compare that common-remnant picture to the event-specific $440$ information emphasized by the official analysis. What none of these enlargements may do, by themselves, is certify that an improved residual represents a physically real extra contribution. That issue belongs to the comparison classes discussed below.

\begin{table}[t]
\centering
\footnotesize
\caption{Baseline, reference, nuisance, and same-span comparison classes used here. The final column records the misuse that is explicitly ruled out in the text.}
\label{tab:sec3-model-taxonomy}
\begin{tabular}{@{}p{0.12\textwidth}p{0.17\textwidth}p{0.30\textwidth}p{0.27\textwidth}@{}}
\toprule
Class & Content & Intended role & Not used for \\
\midrule
$\Mzero$ & $\{220\}$ & Dominant-mode remnant inversion and late-time baseline stability & Declaring multimode spectroscopy by itself \\
$\Mone$ & $\{220,221\}$ & Minimal common-remnant spectroscopy model & Treating any residual drop as overtone detection \\
$\Mtwo$ & $\{220,221,440\}$ & Event-motivated linear Kerr comparison including the official $440$ anchor & Collapsing ringdown-only and full-signal claims into one fit class \\
$\Mref$ & Fixed eight-mode linear audit family & Defines omitted linear Kerr content relative to the baseline fit & Headline inference model for event-level claims \\
$\mathcal S_{\mathcal A}^{\mathrm{dir}}$ & Baseline family plus four fixed left-edge atoms & Diagnostic stress test for early-time prompt structure & Stand-alone evidence for horizon signatures \\
$\mathcal S_{\mathcal A}^{\mathrm{quad}}$ & Baseline family plus fixed \(220+220\) sum-frequency line & Diagnostic stress test for leading nonlinear contamination & Stand-alone evidence for quadratic-mode detection \\
$\widetilde{\mathcal S}_{\mathcal A}$ & Orthonormal basis with the same span as $\mathcal S_{\mathcal A}$ & Numerical cross-check with reduced basis correlations & A new physical model class \\
\bottomrule
\end{tabular}
\end{table}

\subsection{Reference completion and omitted linear Kerr content}

To define omitted linear content unambiguously, we enlarge the baseline hierarchy to the fixed linear Kerr family
\begin{equation}\label{eq:sec3-reference-family}
\Mref:=\{220,221,222,210,330,331,440,550\}.
\end{equation}
This family is fixed before the event-level scan and kept unchanged throughout the analysis. It provides the linear reference class against which the calibration separates the fitted baseline contribution, omitted linear Kerr content, and residual structure not absorbed by the completion. Recent mode-content studies based on public CCE waveforms make this separation practical by showing that start-time dependence in ringdown fits is driven by a structured mixture of overtones, retrograde modes, and nonlinear contributions across the early post-merger regime \cite{DyerMoore2025QNMContent}. We use a modest reference family $\Mref$ for that purpose throughout the event-level analysis.

For every $\mathcal A\subset\Mref$, every $p\in\Kdet$, every window $W$, and every preprocessed data vector $y\in\Hwin$, the orthogonal projectors of Appendix~\ref{app:synthetic-bank} yield the exact decomposition
\begin{equation}\label{eq:sec3-tail-mismatch-preview}
y
=
\Pi_{\mathcal A,p}^{W}y
+
\bigl(\Pi_{\Mref,p}^{W}-\Pi_{\mathcal A,p}^{W}\bigr)y
+
\bigl(I-\Pi_{\Mref,p}^{W}\bigr)y.
\end{equation}
The first term is the part of the signal represented by the fitted family $\mathcal A$. The second term is omitted linear Kerr content relative to the fixed completion $\Mref$. The third term is genuine residual structure beyond the chosen linear-Kerr completion. This decomposition keeps the single word mismatch from covering phenomena that are physically and mathematically different.

The next proposition records the elementary monotonicity that later becomes important for the trust-region logic.

\begin{proposition}[Residual monotonicity under model enlargement]\label{prop:sec3-residual-monotonicity}
Let $W=(t_0,T)$, let $p\in\Kdet$, and let $\mathcal U\subset\mathcal V\subset\Hwin$ be closed subspaces. Then
\[
\distH(y,\mathcal V)\le \distH(y,\mathcal U)
\qquad\text{for every }y\in\Hwin.
\]
In particular, if $\mathcal A\subset\mathcal B\subset\Mref$, then
\begin{equation}\label{eq:sec3-nested-monotone}
\distH\bigl(y,\mathcal S_{\mathcal B}(p;W)\bigr)
\le
\distH\bigl(y,\mathcal S_{\mathcal A}(p;W)\bigr)
\qquad\text{for every }y\in\Hwin.
\end{equation}
\end{proposition}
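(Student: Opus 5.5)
The first claim follows directly from the definition of the distance function as an infimum, so no projection theory is needed. Recall that
\[
\distH(y,\mathcal U)=\inf_{u\in\mathcal U}\|y-u\|_{\Hwin}.
\]
Since $\mathcal U\subset\mathcal V$, the infimum defining $\distH(y,\mathcal V)$ runs over a superset of the set used for $\distH(y,\mathcal U)$. An infimum over a larger set can only be smaller or equal. Concretely, for every $u\in\mathcal U$ we have $u\in\mathcal V$, so $\distH(y,\mathcal V)\le\|y-u\|_{\Hwin}$; taking the infimum over $u\in\mathcal U$ gives the inequality. Closedness is not needed for this step. It only matters if one wants to identify the distance with $\|(I-\Pi)y\|$ via the orthogonal projectors of Appendix~\ref{app:synthetic-bank}. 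In that formulation we have $\Pi_{\mathcal V}\Pi_{\mathcal U}=\Pi_{\mathcal U}$, and $(I-\Pi_{\mathcal U})y=(I-\Pi_{\mathcal V})y+(\Pi_{\mathcal V}-\Pi_{\mathcal U})y$ is an orthogonal sum, so Pythagoras gives the same bound.

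For the second claim, I reduce it to the first by checking that the nesting of label sets yields nesting of subspaces, as already recorded in \eqref{eq:sec3-baseline-nesting} for the baseline chain. For general $\mathcal A\subset\mathcal B\subset\Mref$, take any element $\sum_{j\in\mathcal A}A_je^{-i\omega_j(p)u}$ of $\mathcal S_{\mathcal A}(p;W)$. Extending the amplitudes by $A_j=0$ for $j\in\mathcal B\setminus\mathcal A$ exhibits it as an element of $\mathcal S_{\mathcal B}(p;W)$, using definition \eqref{eq:sec3-linear-subspace}. Both spaces are finite-dimensional, being spans of finitely many vectors in $\mathbb C^{\Dnet}$-valued damped exponentials restricted to $[0,T]$. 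Finite-dimensional subspaces of a normed space are closed, so the hypotheses of the first part hold with $\mathcal U=\mathcal S_{\mathcal A}(p;W)$ and $\mathcal V=\mathcal S_{\mathcal B}(p;W)$, and \eqref{eq:sec3-nested-monotone} follows.

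There is essentially no obstacle here. The only point deserving a remark is that the spanning exponentials may be linearly dependent, for example through a frequency degeneracy at some $p$. This does not matter, because the subspace inclusion and the finite-dimensionality (hence closedness) argument never use linear independence. One should also note that the inclusion must be checked at the same $p$ and the same $W$ on both sides. That is exactly how the statement is posed, since the frequencies $\omega_j(p)$ are common to both families.
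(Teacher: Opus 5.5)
Your proof is correct and takes the same approach as the paper: the infimum defining $\operatorname{dist}(y,\mathcal V)$ ranges over a superset of the approximants admissible for $\mathcal U$, and the result is then applied with $\mathcal U=\mathcal S_{\mathcal A}(p;W)$ and $\mathcal V=\mathcal S_{\mathcal B}(p;W)$. Your zero-padding check of the inclusion $\mathcal S_{\mathcal A}(p;W)\subset\mathcal S_{\mathcal B}(p;W)$, together with the finite-dimensionality argument for closedness, spells out steps the paper leaves implicit.
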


\begin{proof}
Let $P_{\mathcal U}$ and $P_{\mathcal V}$ denote the orthogonal projectors onto $\mathcal U$ and $\mathcal V$. Since $\mathcal U\subset\mathcal V$, the best approximation to $y$ inside $\mathcal V$ is taken over a superset of the admissible approximants used for $\mathcal U$. Hence
\[
\distH(y,\mathcal V)
=
\inf_{v\in\mathcal V}\|y-v\|_{\Hwin}
\le
\inf_{u\in\mathcal U}\|y-u\|_{\Hwin}
=
\distH(y,\mathcal U).
\]
Applying this with $\mathcal U=\mathcal S_{\mathcal A}(p;W)$ and $\mathcal V=\mathcal S_{\mathcal B}(p;W)$ proves \eqref{eq:sec3-nested-monotone}.
\end{proof}

Proposition~\ref{prop:sec3-residual-monotonicity} is simple, but it has an important interpretive consequence. Any enlargement of the fitted family can only lower the residual norm. Therefore the bare fact that $\Mone$ fits better than $\Mzero$, or that $\Mtwo$ fits better than $\Mone$, carries no evidential meaning by itself. The evidential content must come from quantities that survive this trivial monotonicity: common-remnant consistency, calibrated residual improvements relative to a fixed mismatch budget, start-time stability, and agreement with the public event-level anchor set. Those are precisely the objects introduced later in Sections~\ref{sec:kerr-inversion}--\ref{sec:trust-region} and in the appendices.

\subsection{Nuisance-extended diagnostic families}

The direct-wave and quadratic alternatives enter as fixed diagnostic enlargements of the baseline Kerr spans. Their definitions do not vary with the start-time scan or with the observed residuals. The nuisance class used later in the robustness checks is fixed here throughout.

\begin{definition}[Fixed nuisance dictionary]\label{def:sec3-nuisance-families}
Fix a baseline family $\mathcal A\in\{\Mzero,\Mone,\Mtwo\}$, a detector-frame window $W=(t_0,T)$, and a remnant parameter $p\in\Kdet$.

Set
\begin{equation}\label{eq:sec3-direct-order}
\Ndir:=4.
\end{equation}
The left-edge envelope is
\begin{equation}\label{eq:sec3-direct-envelope}
\Enu_W(u):=\exp\!\Bigl[-16\Bigl(\frac{u}{T}\Bigr)^2\Bigr],
\qquad 0\le u\le T.
\end{equation}
For $1\le r\le \Ndir$ define the scalar direct-wave atoms
\begin{equation}\label{eq:sec3-direct-atoms}
d_r^W(u)
:=
T^{-1/2} c_r
\Bigl(\frac{u}{T}\Bigr)^{r-1}
\Enu_W(u),
\qquad
c_r
:=
\left(
\int_0^1 x^{2r-2}e^{-32x^2}\,dx
\right)^{-1/2}.
\end{equation}
The corresponding direct-wave nuisance space is
\begin{equation}\label{eq:sec3-direct-space}
\Sdir(W)
:=
\left\{
u\mapsto \sum_{r=1}^{\Ndir} C_r\, d_r^W(u)
:
C_r\in\mathbb C^{\Dnet}
\right\}
\subset \Hwin.
\end{equation}

The quadratic nuisance dictionary is fixed by the one-element sum-frequency set
\begin{equation}\label{eq:sec3-quadratic-pairs}
\Pquad:=\{(220,220)\}.
\end{equation}
Equivalently, the quadratic nuisance space is
\begin{equation}\label{eq:sec3-quadratic-space}
\Squad(p;W)
:=
\left\{
u\mapsto B_{220,220}\,e^{-2i\omega_{220}(p)u}
:
B_{220,220}\in\mathbb C^{\Dnet}
\right\}
\subset \Hwin.
\end{equation}

The nuisance-extended families are
\begin{align}
\mathcal S_{\mathcal A}^{\varnothing}(p;W)
&:=\mathcal S_{\mathcal A}(p;W),
\label{eq:sec3-model-none}\\
\mathcal S_{\mathcal A}^{\mathrm{dir}}(p;W)
&:=\mathcal S_{\mathcal A}(p;W)+\Sdir(W),
\label{eq:sec3-model-dir}\\
\mathcal S_{\mathcal A}^{\mathrm{quad}}(p;W)
&:=\mathcal S_{\mathcal A}(p;W)+\Squad(p;W),
\label{eq:sec3-model-quad}\\
\mathcal S_{\mathcal A}^{\mathrm{dir+quad}}(p;W)
&:=\mathcal S_{\mathcal A}(p;W)+\Sdir(W)+\Squad(p;W),
\label{eq:sec3-model-both}
\end{align}
and the nuisance labels are collected in
\begin{equation}\label{eq:sec3-nuisance-labels}
\NuSet:=\{\varnothing,\mathrm{dir},\mathrm{quad},\mathrm{dir+quad}\}.
\end{equation}
For each baseline family we allow exactly the four nuisance realizations
\begin{equation}\label{eq:sec3-allowed-nuisance-map}
\AllowNuis(\mathcal A)
:=
\Bigl\{
\mathcal S_{\mathcal A}^{\varnothing},
\mathcal S_{\mathcal A}^{\mathrm{dir}},
\mathcal S_{\mathcal A}^{\mathrm{quad}},
\mathcal S_{\mathcal A}^{\mathrm{dir+quad}}
\Bigr\}.
\end{equation}
No other direct-wave order, no other left-edge envelope, and no other sum-frequency pair appears anywhere in the analysis.
\end{definition}

\begin{proposition}[Direct-wave localization]\label{prop:sec3-direct-localization}
For every admissible window $W=(t_0,T)$ and every $1\le r\le \Ndir$,
\[
\|d_r^W\|_{L^2(0,T)}=1,
\qquad
\int_{T/2}^{T}|d_r^W(u)|^2\,du\le e^{-8}.
\]
In particular, each direct-wave atom carries at most an $e^{-8}$ fraction of its $L^2$ mass beyond the midpoint of the window.
\end{proposition}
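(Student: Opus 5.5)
The plan is to reduce both claims to one-dimensional integrals on $[0,1]$ by the substitution $x=u/T$, $du=T\,dx$. With this change of variables the prefactor $T^{-1/2}$ in \eqref{eq:sec3-direct-atoms} cancels the Jacobian, and $\Enu_W(u)^2=e^{-32x^2}$. This gives
\[
\|d_r^W\|_{L^2(0,T)}^2=c_r^2\int_0^1 x^{2r-2}e^{-32x^2}\,dx,
\qquad
\int_{T/2}^{T}|d_r^W(u)|^2\,du=c_r^2\int_{1/2}^{1} x^{2r-2}e^{-32x^2}\,dx .
\]
The first identity equals $1$ by the definition of $c_r$, which proves the normalization for every admissible $W$ and every $r$. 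The tail claim therefore reduces to a window-independent inequality between two moments, $I_r^{\mathrm{tail}}\le e^{-8}I_r$, where
\[
I_r=\int_0^1 x^{2r-2}e^{-32x^2}\,dx,
\qquad
I_r^{\mathrm{tail}}=\int_{1/2}^1 x^{2r-2}e^{-32x^2}\,dx .
\]

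For the tail I would shift by $x=s+\tfrac12$, using $x^2=\tfrac14+s+s^2$. This pulls out the factor $e^{-8}$ exactly:
\[
I_r^{\mathrm{tail}}=e^{-8}\int_0^{1/2}\bigl(s+\tfrac12\bigr)^{2r-2}e^{-32s-32s^2}\,ds .
\]
The task is then to show $J_r:=\int_0^{1/2}(s+\tfrac12)^{2r-2}e^{-32s-32s^2}\,ds\le I_r$. For $r=1$ this is immediate: $J_1\le\int_0^{1/2}e^{-32s^2}\,ds\le I_1$, because $e^{-32s}\le 1$.

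For $r\ge2$ I would try to use the extra decay $e^{-32s}$ to dominate the polynomial growth of $(s+\tfrac12)^{2r-2}$. One route is pointwise: compare the integrand of $J_r$ with that of $I_r$ on a subinterval where the latter is large, near its peak at $x^2=(r-1)/32$. The other route is to evaluate both sides in closed form through incomplete Gamma functions, $I_r=\tfrac12\,32^{-(r-1/2)}\,\gamma(r-\tfrac12,32)$, and check the four cases $r=1,\dots,\Ndir$ directly. Since only finitely many cases arise, a rigorous interval-arithmetic evaluation of $J_r$ and $I_r$ would suffice.

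The main obstacle is exactly this comparison for the higher-order atoms, and it may not go through as stated. A rough estimate for $r=4$ gives $J_4\gtrsim 2^{-6}/32\approx 5\times10^{-4}$. By contrast, $I_4\le\Gamma(7/2)/(2\cdot32^{7/2})\approx 9\times10^{-6}$, since extending the integral to $[0,\infty)$ only increases it. If these estimates are correct, the inequality fails for $r=4$, because the weight $x^6$ pushes mass toward the midpoint. So my plan is to first carry out the finite rigorous numerical check of $J_r\le I_r$ for each $r\le\Ndir$. If that check succeeds, it proves the proposition as worded. If it fails for some $r$, it pinpoints exactly which atoms violate the stated $e^{-8}$ localization.
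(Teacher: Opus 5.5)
Your reduction to the window-independent moments $I_r$ and $I_r^{\mathrm{tail}}$, the normalization, the exact extraction of $e^{-8}$ by the shift $x=s+\tfrac12$, and your $r=1$ argument are all correct. Your suspicion about the higher atoms is also correct, and the failure in fact starts at $r=2$. The proposition is false for every $r\in\{2,3,4\}$, so no completion of your plan can prove it as worded.

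A short certificate replaces the interval-arithmetic check. For $0\le s\le\tfrac18$ one has $32s^2\le 4s$ and $(s+\tfrac12)^{2r-2}\ge 2^{-(2r-2)}$. Extending $I_r$ to $[0,\infty)$ gives your Gamma-function bound. Hence
\[
J_r\ \ge\ \frac{1-e^{-9/2}}{36\cdot 2^{2r-2}},
\qquad
I_r\ <\ \tfrac12\,32^{-(r-1/2)}\,\Gamma\bigl(r-\tfrac12\bigr),
\qquad
\frac{J_r}{I_r}\ >\ \frac{1-e^{-9/2}}{36}\cdot\frac{2^{3r+1/2}}{\Gamma(r-\tfrac12)}\approx 2.8,\ 15,\ 48\quad(r=2,3,4).
\]
So the tail fraction $e^{-8}J_r/I_r$ exceeds $e^{-8}$ for every $r\ge 2$.

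Carrying your incomplete-gamma route through exactly, the tail fraction equals $\bigl[\Gamma(r-\tfrac12,8)-\Gamma(r-\tfrac12,32)\bigr]/\gamma(r-\tfrac12,32)$. Numerically this is about $0.19\,e^{-8}$, $3.4\,e^{-8}$, $20\,e^{-8}$ and $75\,e^{-8}$ for $r=1,2,3,4$; the last value is roughly $2.5\times10^{-2}$.

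The paper's own proof does not establish the statement either. Its first bound is valid:
\[
\int_{T/2}^{T}|d_r^W|^2\,du\le e^{-8}c_r^2T^{-1}\int_{T/2}^{T}(u/T)^{2r-2}\,du.
\]
It then claims the right-hand side is at most $e^{-8}\int_0^T|d_r^W|^2\,du$. After $x=u/T$, that claim reads $\int_{1/2}^1x^{2r-2}\,dx\le\int_0^1x^{2r-2}e^{-32x^2}\,dx$. This drops the Gaussian from one side only and is false for every $r$; for $r=1$ it would say $\tfrac12\le\int_0^1e^{-32x^2}\,dx\approx 0.157$.

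Your shift, which keeps the damping factor $e^{-32s}$ after extracting $e^{-8}$, is the mechanism that actually works, and it yields exactly the $r=1$ case. The proposition should therefore be restricted to $r=1$, or weakened to an $r$-dependent constant, for instance a tail fraction below $3\times10^{-2}$ for all four atoms. The only later reference to this proposition is a qualitative remark that the direct-wave atoms are confined to the left edge; no later result uses the constant $e^{-8}$ quantitatively.
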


\begin{proof}
The normalization is immediate from the change of variables $x=u/T$:
\[
\int_0^T |d_r^W(u)|^2\,du
=
c_r^2\int_0^1 x^{2r-2}e^{-32x^2}\,dx
=
1.
\]
If $u\ge T/2$, then $e^{-32(u/T)^2}\le e^{-8}$. Hence
\[
\int_{T/2}^{T}|d_r^W(u)|^2\,du
\le
e^{-8}c_r^2T^{-1}\int_{T/2}^{T}\Bigl(\frac{u}{T}\Bigr)^{2r-2}\,du
\le
e^{-8}\int_0^T |d_r^W(u)|^2\,du
=
e^{-8}.
\]
\end{proof}

Proposition~\ref{prop:sec3-direct-localization} shows that the direct-wave dictionary is confined to the left edge of the window by an explicit envelope. The nuisance class is therefore a fixed four-dimensional network-valued moment expansion rather than an unconstrained alternative fit.

The choice \(\Pquad=\{(220,220)\}\) isolates the leading same-remnant quadratic harmonic that can be constructed from ringdown-only information already present in the baseline fit. The broader nonlinear catalog discussed in \cite{WangEtAl2026Quadratic} motivates the audit, but the nuisance family used here remains the single fixed quadratic line defined above.

The nuisance analogue of Proposition~\ref{prop:sec3-residual-monotonicity} is immediate. Since
\[
\mathcal S_{\mathcal A}^{\varnothing}(p;W)
\subset
\mathcal S_{\mathcal A}^{\nu}(p;W)
\qquad (\nu\in\NuSet\setminus\{\varnothing\}),
\]
adding the fixed diagnostic atoms can only improve the raw fit residual. Therefore a lower residual for a nuisance-extended family has no direct interpretive force. Only an improvement that exceeds the calibrated mismatch budget of Section~\ref{sec:numerical-calibration}, or that changes the common-remnant verdict in a controlled way, counts as relevant evidence that the baseline interpretation is not robust. Appendix~\ref{app:gw250114-robustness} uses this fixed dictionary unchanged.
\subsection{Same-span computational comparisons and permitted inferences}

There is one more distinction that must be made before the theorem is stated. A model class can change physically, as when one moves from $\Mone$ to $\Mtwo$ or adds a direct-wave space. But a model class can also be reparameterized without changing the fitted span at all. The latter is what happens in the orthonormal-basis cross-check fixed in the planning stage of the project.

For a baseline family $\mathcal A\in\{\Mzero,\Mone,\Mtwo\}$, an ordered list of Kerr atoms in $\mathcal S_{\mathcal A}(p;W)$ may be orthonormalized in the noise-weighted inner product on $\Hwin$ by the Gram--Schmidt algorithm, producing an orthonormal family $\{\widetilde\phi_k^{\mathcal A}(p;W)\}_{k=1}^{N_{\mathcal A}}$ and the same closed span
\begin{equation}\label{eq:sec3-orthonormal-span}
\widetilde{\mathcal S}_{\mathcal A}(p;W)
:=
\mathrm{span}\{\widetilde\phi_k^{\mathcal A}(p;W):1\le k\le N_{\mathcal A}\}
=
\mathcal S_{\mathcal A}(p;W).
\end{equation}
Recent work has shown that this same-span reparameterization can significantly reduce mode correlations and can make Bayesian ringdown inference with multiple quasi-normal modes substantially cheaper computationally \cite{MorisakiEtAl2025Orthonormal}. Here the orthonormal basis defines pipeline $B$, an independent computational cross-check. The fitted families of pipeline $A$ remain unchanged, so the deterministic extraction theorem continues to refer to the original modal spans.

\begin{proposition}[Same-span invariance of span-based diagnostics]\label{prop:sec3-same-span-invariance}
Fix a baseline family $\mathcal A\in\{\Mzero,\Mone,\Mtwo\}$, a remnant parameter $p\in\Kdet$, and a window $W$. Let $\widetilde{\mathcal S}_{\mathcal A}(p;W)$ be any orthonormal basis completion of $\mathcal S_{\mathcal A}(p;W)$ as in \eqref{eq:sec3-orthonormal-span}. Then every diagnostic that depends only on the fitted span, its orthogonal projector, or the distance of a data vector to that span is identical for $\widetilde{\mathcal S}_{\mathcal A}(p;W)$ and $\mathcal S_{\mathcal A}(p;W)$.
\end{proposition}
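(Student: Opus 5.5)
The argument is essentially definitional: everything reduces to the set equality \eqref{eq:sec3-orthonormal-span}. The span-based objects are the subspace itself, its orthogonal projector, and the distance function. Each of these is determined by the subspace as a set together with the fixed inner product on \(\Hwin\), and not by any spanning family. So I would verify that the two constructions produce the same closed subspace, then check that each listed diagnostic is a function of that subspace alone.

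\textbf{Step 1: the two spans coincide.} Gram--Schmidt applied to the ordered Kerr atoms of \(\mathcal S_{\mathcal A}(p;W)\) produces vectors \(\widetilde\phi_k^{\mathcal A}\), each a finite linear combination of the atoms. Inductively, each atom is a combination of the \(\widetilde\phi_k^{\mathcal A}\) obtained so far, with dependent atoms simply dropped. Hence the two linear spans are equal. Both are finite-dimensional, since \(\mathcal A\) is finite and the amplitudes lie in \(\mathbb C^{\Dnet}\). Finite-dimensional subspaces are closed, so the closed spans agree as well. This recovers \eqref{eq:sec3-orthonormal-span} for any orthonormal basis of the span, not only the Gram--Schmidt one.

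\textbf{Step 2: the projector depends only on the subspace.} The orthogonal projector onto a closed subspace \(\mathcal V\) is the unique idempotent, self-adjoint operator with range \(\mathcal V\). Equivalently, \(Py\) is the unique minimizer of \(\|y-v\|_{\Hwin}\) over \(v\in\mathcal V\). It is therefore determined by \(\mathcal V\) alone, so \(\Pi\) coincides for the two parameterizations. In the orthonormal basis one can also write \(\Pi y=\sum_k\langle \widetilde\phi_k,y\rangle\widetilde\phi_k\) and check directly that this equals the least-squares projector onto the original atoms.

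\textbf{Step 3: distances and any diagnostic built from them.} The distance satisfies \(\distH(y,\mathcal V)=\|(I-\Pi_{\mathcal V})y\|_{\Hwin}\), as in Proposition~\ref{prop:sec3-residual-monotonicity}, so it coincides too. Finally, a general diagnostic of the kind the statement covers has the form \(F(\mathcal V,\Pi_{\mathcal V},\distH(\cdot,\mathcal V))\). Its arguments are identical for the two parameterizations, so its value is identical. This applies in particular to residuals, to the decomposition \eqref{eq:sec3-tail-mismatch-preview} with \(\mathcal A\) replaced by its orthonormal span, and to nuisance-extended sums \(\mathcal S^{\nu}_{\mathcal A}\), whose spans are unchanged when one summand is re-expressed in another basis.

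\textbf{Main obstacle.} The only delicate point is interpretive rather than technical: one must specify precisely what "depends only on the span" means. I would make this explicit as the functional form \(F\) above. With that convention, amplitude-level quantities are excluded automatically, because individual mode coefficients \(A_j\) and their covariances change under reparameterization. This explains why the proposition is restricted to span-based diagnostics.
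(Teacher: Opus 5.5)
Your proposal is correct and follows the same route as the paper. The paper's proof takes the span equality \eqref{eq:sec3-orthonormal-span} as given and notes that equal subspaces have the same orthogonal projector and distance function, so any diagnostic built only from those objects is unchanged. Your Steps 1--3 spell out each of these implications, and additionally re-derive the span equality via Gram--Schmidt.
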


\begin{proof}
Equation~\eqref{eq:sec3-orthonormal-span} states that the two model classes have the same span as subsets of $\Hwin$. Therefore they have the same orthogonal projector and the same distance function. Any diagnostic built only from those objects is unchanged.
\end{proof}

Proposition~\ref{prop:sec3-same-span-invariance} shows why the orthonormal basis is classified separately from the nuisance families. It may alter a posterior geometry, a parameter correlation matrix, or a sampling cost, but it does not alter the underlying fitted subspace. Adding a direct-wave or quadratic bundle does change the fitted subspace, so the two operations have different logical roles.

We can now state the admissible comparison classes used below.

\begin{definition}\label{def:sec3-comparison-classes}
Exactly four comparison classes are used.
\begin{enumerate}
\item The \emph{nested linear Kerr class} compares $\mathcal A\subset\mathcal B\subset\Mref$ and is used to separate dominant content, overtone-level spectroscopy, and omitted linear Kerr structure.
\item The \emph{nuisance class} compares $\mathcal S_{\mathcal A}^{\varnothing}(p;W)$ to $\mathcal S_{\mathcal A}^{\nu}(p;W)$ for $\nu\in\NuSet\setminus\{\varnothing\}$ and is used only to test ambiguity of an early-time baseline interpretation.
\item The \emph{same-span computational class} compares $\mathcal S_{\mathcal A}(p;W)$ to an orthonormalized realization $\widetilde{\mathcal S}_{\mathcal A}(p;W)$ and is used only to test numerical stability and coordinate dependence of inference.
\item The \emph{public-anchor class} compares detector-frame remnant estimates and their error balls to the full public PE ensemble $\Ppub$ of Section~\ref{sec:data-conventions}. This is an event-level comparison performed only after inversion, not a fit-level comparison of waveform families.
\end{enumerate}
No other comparison class is interpreted as evidence in the text.
\end{definition}

Definition~\ref{def:sec3-comparison-classes} fixes the meaning of later comparisons. A nested Kerr comparison may show that one baseline family leaves behind omitted linear content that another absorbs. A nuisance comparison may show that an early-time baseline interpretation is ambiguity sensitive. A same-span computational comparison may show that a trusted band is not an artifact of one particular coordinate system or sampler. A public-anchor comparison may show that the recovered remnant agrees or disagrees with the public inspiral--merger--ringdown picture. Raw residual monotonicity is not treated as evidence for a physical component, and a nuisance improvement is not promoted to a discovery claim.

At this point the model geometry is fixed. Section~\ref{sec:abstract-extraction} will prove deterministic frequency-extraction bounds for a fixed fitted family. Section~\ref{sec:kerr-inversion} will invert the dominant mode and build the auxiliary consistency atlas. Section~\ref{sec:trust-region} will combine those ingredients into the trust inequalities. Because the model hierarchy and the comparison classes have already been fixed here, those later sections can make precise statements without silently changing what counts as a model, a diagnostic, or a physically admissible comparison.

\setcounter{section}{3}

\providecommand{\Kdet}{\mathcal K_{\mathrm{det}}}
\providecommand{\Dnet}{\mathfrak D}
\providecommand{\Mf}{M_{\mathrm f}}
\providecommand{\Mdet}{M_{\mathrm f}^{\mathrm{det}}}
\providecommand{\chif}{\chi_{\mathrm f}}
\providecommand{\omegahat}{\widehat\omega}
\providecommand{\deltaiso}{\delta_{\mathrm{iso}}}
\providecommand{\Projv}{v}
\providecommand{\safeenv}{\eta_\star}
\providecommand{\Kcoeff}{K_c}
\providecommand{\Kfreq}{K_{\omega}}
\providecommand{\rootrad}{r}
\providecommand{\deltaz}{\delta_z}
\providecommand{\Gstrip}{\mathfrak S_\Delta}
\providecommand{\GuideSet}{\mathscr G}
\providecommand{\AliasLat}{\frac{2\pi}{\Delta}\mathbb Z}
\providecommand{\unwrapop}[2]{\operatorname{unwrap}_{#1,#2}}

\section{Abstract frequency extraction theory}\label{sec:abstract-extraction}

The local inverse geometry of Section~\ref{sec:kerr-inversion} requires labeled complex mode frequencies extracted from the finite detector-frame windows fixed in Section~\ref{sec:data-conventions}. We adopt one deterministic projected sampled-Prony pipeline and state the nonasymptotic bounds that later enter the trust inequalities. The underlying algebra is classical and belongs to the Prony, matrix-pencil, and ESPRIT family of methods \cite{HuaSarkarMPM1990,RoyKailath1989ESPRIT,PottsTasche2010APM,BatenkovYomdin2013Prony}. The estimator is fixed, the conditioning constants are written in closed form, the frequency error is decomposed into four additive pieces, and the two mechanisms that later govern acceptability---branch instability and near-coalescence---are isolated at theorem level.

The trust-region analysis uses this deterministic pipeline as its hard extraction boundary. Later numerical sections use an orthonormalized Bayesian fit only as an implementation cross-check on the windows that pass the deterministic tests.

The extraction problem is local in time, in parameter space, and in the sampled-frequency plane. One fixes a finite post-peak window \((t_0,T)\), a compact remnant box \(\Kdet\), and labeled root disks whose radii depend on the window, the sampled mode family, and the residual level. The model family \(\mathcal M\), the sampling step \(\Delta\), and the projection channel \(\Projv\) are therefore held fixed throughout the section. The full algebraic proofs are recorded in Appendices~\ref{app:abstract-frequency-extraction}, \ref{app:branch-selection}, and \ref{app:prony-conditioning}. Here we isolate the standing hypotheses and state the consequences that later feed into Section~\ref{sec:kerr-inversion} and the trust theorem.

\subsection{Projected finite-window sample model}

Fix a baseline window \((t_0,T)\), a finite mode family
\[
\mathcal M=\{j_1,\dots,j_m\},
\qquad
m:=|\mathcal M|,
\]
and a normalized network projection vector
\[
\Projv=(v_I)_{I\in\Dnet}\in\mathbb C^{|\Dnet|},
\qquad
\|\Projv\|_{\ell^2(\Dnet)}=1.
\]
After whitening, tapering, and Shannon reconstruction as in Appendix~\ref{app:preprocessing}, the projected scalar signal is sampled at the equispaced nodes
\[
u_q=q\Delta,
\qquad
q=0,1,\dots,2m-1,
\qquad
(2m-1)\Delta\le T,
\]
through
\begin{equation}\label{eq:sec4-projected-samples}
x_q
:=
\sum_{I\in\Dnet}\overline{v_I}\,y_I^{(t_0,T)}(u_q).
\end{equation}
Assumption~\ref{ass:finite-window-decomposition} then implies that there exist a remnant parameter \(p_\star\in\Kdet\), projected amplitudes
\[
b_\nu:=\sum_{I\in\Dnet}\overline{v_I}\,A_{I,j_\nu},
\qquad \nu=1,\dots,m,
\]
sampled nodes
\begin{equation}\label{eq:sec4-nodes}
z_\nu:=e^{-i\omega_{j_\nu}(p_\star)\Delta},
\qquad \nu=1,\dots,m,
\end{equation}
and residual samples \(e_q\) such that
\begin{equation}\label{eq:sec4-prony-model}
x_q=s_q+e_q,
\qquad
s_q:=\sum_{\nu=1}^m b_\nu z_\nu^q,
\qquad
q=0,1,\dots,2m-1.
\end{equation}
The projected amplitudes and nodes are window-dependent, because the amplitudes depend on the chosen family and the nodes depend on the actual remnant parameter through the detector-frame Kerr frequencies. Once the window and projection channel are fixed, however, \eqref{eq:sec4-prony-model} is an exact finite-dimensional model with a residual term that carries all omitted linear content, model mismatch, detector noise, and implementation loss.

Two structural conditions are needed before any estimator can be trusted. The first is projected detectability,
\begin{equation}\label{eq:sec4-detectability}
b_\nu\neq 0
\qquad\text{for every }\nu=1,\dots,m.
\end{equation}
The second is sample-level nonaliasing,
\begin{equation}\label{eq:sec4-nonaliasing}
z_\nu\neq z_\mu
\qquad\text{whenever }\nu\neq\mu.
\end{equation}
The detectability condition says that the chosen channel sees every fitted mode at nonzero amplitude. The nonaliasing condition says that the sampling step \(\Delta\) is fine enough that distinct detector-frame frequencies remain distinct after exponentiation. In the main GW250114 application both are checked numerically on the accepted windows, but they are kept as explicit hypotheses in the abstract theory because a theorem without them would be false.

The deterministic estimator is defined by the classical annihilating-polynomial construction. From the observed samples \(x_q\), form the observed Hankel matrix and observed right-hand side
\begin{equation}\label{eq:sec4-observed-hankel}
\widetilde H:=(x_{r+s})_{r,s=0}^{m-1}\in\mathbb C^{m\times m},
\qquad
\widetilde h:=(x_m,\dots,x_{2m-1})^\top\in\mathbb C^m.
\end{equation}
Whenever \(\widetilde H\) is invertible, the observed recurrence coefficients are
\begin{equation}\label{eq:sec4-observed-coefficients}
\widetilde c:=-\widetilde H^{-1}\widetilde h,
\end{equation}
and the observed monic polynomial is
\begin{equation}\label{eq:sec4-observed-polynomial}
\widetilde Q(\zeta):=\zeta^m+\widetilde c_{m-1}\zeta^{m-1}+\cdots+\widetilde c_1\zeta+\widetilde c_0.
\end{equation}
Its roots are the recovered nodes. Once a labeled representative \(\widehat z_\nu\) of each root has been fixed, the recovered complex frequency is
\begin{equation}\label{eq:sec4-frequency-estimator}
\widehat\omega_{j_\nu}:=\frac{i}{\Delta}\operatorname{Log}_{\nu}(\widehat z_\nu),
\end{equation}
where \(\operatorname{Log}_{\nu}\) is the local logarithm branch selected around the true node \(z_\nu\) or, in applications, around a guide prior as in Appendix~\ref{app:branch-selection}.

The data enter only through the finite sequence \((x_q)_{q=0}^{2m-1}\); everything else is deterministic geometry.

\begin{proposition}[Exact annihilating structure]\label{prop:sec4-exact-structure}
Under \eqref{eq:sec4-prony-model}, \eqref{eq:sec4-detectability}, and \eqref{eq:sec4-nonaliasing}, define
\begin{equation}\label{eq:sec4-exact-hankel}
H:=(s_{r+s})_{r,s=0}^{m-1},
\qquad
h:=(s_m,\dots,s_{2m-1})^\top,
\end{equation}
and let
\begin{equation}\label{eq:sec4-exact-polynomial}
Q(\zeta):=\prod_{\nu=1}^m(\zeta-z_\nu)
=\zeta^m+c_{m-1}\zeta^{m-1}+\cdots+c_1\zeta+c_0
\end{equation}
be the monic annihilating polynomial of the nodes. Then the exact coefficients satisfy
\begin{equation}\label{eq:sec4-exact-linear-system}
Hc=-h.
\end{equation}
If
\begin{equation}\label{eq:sec4-vandermonde}
V:=(z_\nu^r)_{\substack{0\le r\le m-1\\1\le \nu\le m}},
\qquad
B:=\operatorname{diag}(b_1,\dots,b_m),
\end{equation}
then
\begin{equation}\label{eq:sec4-exact-factorization}
H=VBV^\top,
\end{equation}
and hence
\begin{equation}\label{eq:sec4-detH}
\det H
=
\Bigl(\prod_{\nu=1}^m b_\nu\Bigr)
\prod_{1\le \nu<\mu\le m}(z_\mu-z_\nu)^2.
\end{equation}
In particular, \(H\) is invertible.
\end{proposition}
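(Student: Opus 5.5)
The proof is a direct computation with the exponential-sum structure of \(s_q\). I would do three steps in order: the linear system, the factorization, and the determinant.

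\textbf{Step 1: the linear system \eqref{eq:sec4-exact-linear-system}.} I verify the recurrence row by row. Set \(c_m:=1\). For each \(r=0,\dots,m-1\), substituting \(s_q=\sum_\nu b_\nu z_\nu^q\) gives
\[
\sum_{k=0}^{m} c_k s_{r+k}=\sum_{\nu=1}^m b_\nu z_\nu^{r} Q(z_\nu)=0,
\]
because every \(z_\nu\) is a root of \(Q\). The indices used are at most \(r+m\le 2m-1\), so only the available samples enter. Moving the \(k=m\) term to the right-hand side, this identity reads \((Hc)_r=-s_{r+m}=-h_r\), which is \eqref{eq:sec4-exact-linear-system}.

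\textbf{Step 2: the factorization \eqref{eq:sec4-exact-factorization}.} I compare entries. The \((r,s)\) entry of \(VBV^\top\) is
\[
\sum_\nu z_\nu^r\, b_\nu\, z_\nu^s=s_{r+s}=H_{rs},
\]
so \(H=VBV^\top\) holds entrywise. No hypotheses are needed for this identity.

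\textbf{Step 3: the determinant \eqref{eq:sec4-detH} and invertibility.} Taking determinants of the factorization gives
\[
\det H=(\det V)^2\det B=(\det V)^2\prod_\nu b_\nu .
\]
I then quote the classical Vandermonde formula \(\det V=\prod_{\nu<\mu}(z_\mu-z_\nu)\). Under the paper's convention, \(V\) has rows indexed by powers and columns indexed by nodes. The formula holds up to transposition, and transposition does not change the determinant. Squaring removes any sign ambiguity, which yields \eqref{eq:sec4-detH}. Invertibility follows because each factor is nonzero: \(b_\nu\neq0\) by the detectability condition \eqref{eq:sec4-detectability}, and \(z_\mu\neq z_\nu\) for \(\mu\neq\nu\) by the nonaliasing condition \eqref{eq:sec4-nonaliasing}.

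\textbf{Where care is needed.} No step is a real obstacle. The only points to watch are the bookkeeping ones. First, the index range must stay within \(q\le 2m-1\); Step 1 confirms this. Second, the orientation of \(V\) must be handled so that the transpose appears in the right place; in Step 3 only the square of the Vandermonde determinant enters, so orientation does not matter. Finally, I will note that once \(H\) is invertible, the system \eqref{eq:sec4-exact-linear-system} has a unique solution. The exact coefficients \(c\) are therefore recovered by the same formula \(c=-H^{-1}h\) as in \eqref{eq:sec4-observed-coefficients}.
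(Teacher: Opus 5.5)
Your proposal is correct and follows the paper's proof essentially step for step: the recurrence obtained by summing \(b_\nu z_\nu^{r}Q(z_\nu)=0\) over \(\nu\), the entrywise check of \(H=VBV^\top\), and \(\det H=(\det V)^2\prod_\nu b_\nu\) via the Vandermonde formula. Invertibility then follows from the detectability and nonaliasing conditions, exactly as in the paper.
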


\begin{proof}
Since \(Q(z_\nu)=0\) for every \(\nu\),
\[
z_\nu^m+\sum_{k=0}^{m-1}c_k z_\nu^k=0.
\]
Multiply by \(b_\nu z_\nu^q\) and sum over \(\nu\). This gives
\[
\sum_{\nu=1}^m b_\nu z_\nu^{q+m}
+
\sum_{k=0}^{m-1}c_k\sum_{\nu=1}^m b_\nu z_\nu^{q+k}
=0,
\qquad q=0,\dots,m-1,
\]
which is precisely the recurrence
\[
s_{q+m}+\sum_{k=0}^{m-1}c_k s_{q+k}=0.
\]
Writing these \(m\) identities in matrix form yields \eqref{eq:sec4-exact-linear-system}. For the factorization,
\[
(VBV^\top)_{rs}
=
\sum_{\nu=1}^m z_\nu^r b_\nu z_\nu^s
=
\sum_{\nu=1}^m b_\nu z_\nu^{r+s}
=
s_{r+s}
=
H_{rs},
\]
so \eqref{eq:sec4-exact-factorization} holds. Taking determinants and using the Vandermonde formula
\[
\det V=\prod_{1\le \nu<\mu\le m}(z_\mu-z_\nu)
\]
gives \eqref{eq:sec4-detH}. Because every \(b_\nu\) is nonzero and every pair of nodes is distinct, the right-hand side does not vanish, so \(H\) is invertible.
\end{proof}

The exact factorization already exposes the two instability mechanisms that dominate the rest of the argument. Small amplitudes make \(B\) ill-conditioned. Near-colliding nodes make \(V\) ill-conditioned. The remainder of the section propagates those obstructions through coefficient recovery, labeled root localization, logarithm branching, and frequency reporting.

\subsection{Conditioning constants and the standing extraction package}

Let
\begin{equation}\label{eq:sec4-residual-envelope}
\eta:=\max_{0\le q\le 2m-1}|e_q|.
\end{equation}
The observed Hankel perturbations satisfy
\[
\Delta H:=\widetilde H-H,
\qquad
\Delta h:=\widetilde h-h,
\]
so that \(\widetilde H=H+\Delta H\) and \(\widetilde h=h+\Delta h\). For a fixed a priori envelope \(\safeenv>0\), introduce the coefficient perturbation constant
\begin{equation}\label{eq:sec4-Kc}
\Kcoeff(\safeenv)
:=
\frac{\|H^{-1}\|_2\bigl(\sqrt m+m\|c\|_2\bigr)}{1-m\safeenv\|H^{-1}\|_2},
\end{equation}
which is meaningful whenever
\begin{equation}\label{eq:sec4-envelope-smallness}
m\safeenv\|H^{-1}\|_2<1.
\end{equation}
Next define the node-separation quantities
\begin{equation}\label{eq:sec4-d-gamma}
d_\nu:=\min_{\mu\neq \nu}|z_\nu-z_\mu|,
\qquad
\Gamma_\nu:=|Q'(z_\nu)|
=
\prod_{\mu\neq \nu}|z_\nu-z_\mu|,
\end{equation}
and the global sampled-root separation
\begin{equation}\label{eq:sec4-deltaz}
\deltaz
:=
\min_{1\le \nu\le m} d_\nu
=
\min_{\nu\neq\mu}|z_\nu-z_\mu|,
\end{equation}
the radius factor
\begin{equation}\label{eq:sec4-R-Lambda}
R_\nu:=|z_\nu|+\frac{d_\nu}{2},
\qquad
\Lambda_\nu:=\left(\sum_{k=0}^{m-1}R_\nu^{2k}\right)^{1/2},
\end{equation}
the admissible root radius
\begin{equation}\label{eq:sec4-root-radius}
\rootrad_\nu(\safeenv)
:=
\frac{2^m\Lambda_\nu\Kcoeff(\safeenv)}{\Gamma_\nu}\,\safeenv,
\end{equation}
and the frequency constant
\begin{equation}\label{eq:sec4-Komega}
K_{\omega,\nu}(\safeenv)
:=
\frac{2^{m+1}\Lambda_\nu\Kcoeff(\safeenv)}{\Delta |z_\nu|\Gamma_\nu}.
\end{equation}
The factor \(\Kcoeff\) controls the perturbation of the annihilating polynomial, \(\Gamma_\nu^{-1}\) controls the passage from coefficients to labeled roots, and \(|z_\nu|^{-1}\) controls the logarithm. The decomposition identifies the mechanisms through which the estimator becomes fragile.

\begin{definition}[Admissible extraction envelope]\label{def:sec4-admissible-envelope}
An a priori sample envelope \(\safeenv>0\) is called admissible for the window \((t_0,T)\), model family \(\mathcal M\), and projection channel \(\Projv\) if
\begin{equation}\label{eq:sec4-admissible-envelope}
m\safeenv\|H^{-1}\|_2<1
\qquad\text{and}\qquad
\rootrad_\nu(\safeenv)<\min\left\{\frac{d_\nu}{2},\frac{|z_\nu|}{2}\right\}
\quad\text{for every }\nu=1,\dots,m.
\end{equation}
\end{definition}

The first inequality guarantees that the observed Hankel matrix remains invertible. The second guarantees that each recovered root stays in a disk that is both disjoint from the competing node disks and disjoint from the origin. Disjointness of the root disks keeps labels separated. Disjointness from the origin makes the local logarithm branch unique.

\begin{assumption}[Standing extraction package]\label{ass:sec4-standing-package}
Fix a window \((t_0,T)\), a model family \(\mathcal M=\{j_1,\dots,j_m\}\), a sample step \(\Delta\), and a projection channel \(\Projv\). Every theorem-level statement below is read under the following package.
{\renewcommand{\labelenumi}{\textup{(H\arabic{enumi})}}
\begin{enumerate}
\item \emph{Projection vector convention.} The channel \(\Projv\in\mathbb C^{|\Dnet|}\) is fixed and normalized by \(\|\Projv\|_{\ell^2(\Dnet)}=1\), and the scalar samples are the projected samples \eqref{eq:sec4-projected-samples}.
\item \emph{Sample size condition.} The deterministic Prony system uses \(2m\) samples and satisfies \((2m-1)\Delta\le T\).
\item \emph{Root separation.} The projected model \eqref{eq:sec4-prony-model} holds, every projected amplitude is nonzero as in \eqref{eq:sec4-detectability}, and the sampled nodes are pairwise distinct as in \eqref{eq:sec4-nonaliasing}. Equivalently, \(d_\nu>0\) for every \(\nu\) and \(\deltaz>0\).
\item \emph{Admissible envelope.} A radius \(\safeenv>0\) is fixed for which \eqref{eq:sec4-admissible-envelope} holds.
\item \emph{Labeled disk hypothesis.} The operative root neighborhoods are the disks \(D(z_\nu,\rootrad_\nu(\safeenv))\). By \eqref{eq:sec4-admissible-envelope}, these disks are pairwise disjoint and avoid the origin.
\item \emph{Branch-selection hypothesis.} Whenever recovered nodes are reported as frequencies, the branch choice is required to obey one of the two admissible mechanisms recorded later below: either the static guide criterion \eqref{eq:sec4-static-criterion} or the recursive safe-step criterion \eqref{eq:sec4-safe-step}.
\item \emph{Near-coalescence exclusion rule.} If the geometry behind \textup{(H3)}--\textup{(H5)} degenerates so strongly that no admissible envelope exists, or if the branch requirement in \textup{(H6)} fails, the window is excluded from the retained set rather than regularized by hand.
\end{enumerate}}
\end{assumption}

Clauses \textup{(H1)}--\textup{(H5)} are the full input for the main extraction theorem. Clause \textup{(H6)} becomes relevant when labels are attached in practice to guide centers or propagated across a start-time scan. Clause \textup{(H7)} records the exclusion policy attached to the same geometry.

\begin{theorem}[Abstract frequency extraction theorem]\label{thm:sec4-abstract-extraction}
Assume clauses \textup{(H1)}--\textup{(H5)} of Assumption~\ref{ass:sec4-standing-package}. If the realized residual envelope \(\eta\) defined in \eqref{eq:sec4-residual-envelope} satisfies
\[
\eta\le \safeenv,
\]
then the following conclusions hold.

\smallskip

\noindent
\textup{(i)} The observed Hankel matrix \(\widetilde H\) is invertible, the polynomial \(\widetilde Q\) has exactly one root \(\widehat z_\nu\) in each disk \(D(z_\nu,\rootrad_\nu(\safeenv))\), and the roots are therefore uniquely labeled.

\smallskip

\noindent
\textup{(ii)} For each \(\nu=1,\dots,m\) there is a unique holomorphic logarithm branch \(\operatorname{Log}_{\nu}\) on \(D(z_\nu,\rootrad_\nu(\safeenv))\) satisfying
\[
\operatorname{Log}_{\nu}(z_\nu)=-i\omega_{j_\nu}(p_\star)\Delta.
\]
The corresponding recovered frequency
\[
\widehat\omega_{j_\nu}:=\frac{i}{\Delta}\operatorname{Log}_{\nu}(\widehat z_\nu)
\]
obeys
\begin{equation}\label{eq:sec4-abstract-frequency-bound}
|\widehat\omega_{j_\nu}-\omega_{j_\nu}(p_\star)|
\le
K_{\omega,\nu}(\safeenv)\,\eta.
\end{equation}

\smallskip

\noindent
\textup{(iii)} The node error itself satisfies
\begin{equation}\label{eq:sec4-node-bound}
|\widehat z_\nu-z_\nu|
\le
\frac{2^m\Lambda_\nu\Kcoeff(\safeenv)}{\Gamma_\nu}\,\eta.
\end{equation}
\end{theorem}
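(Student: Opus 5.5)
The plan is to chain three perturbation steps: Hankel system to coefficients, coefficients to labeled roots, and roots to frequencies. Each step uses only the exact structure of Proposition~\ref{prop:sec4-exact-structure} and the admissibility inequalities \eqref{eq:sec4-admissible-envelope}.

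\emph{Coefficient step.} Since every entry of \(\Delta H\) is some \(e_{r+s}\), we have \(\|\Delta H\|_2\le\|\Delta H\|_F\le m\eta\le m\safeenv\). Similarly \(\|\Delta h\|_2\le\sqrt m\,\eta\). Because \(m\safeenv\|H^{-1}\|_2<1\), a Neumann series shows that \(\widetilde H=H(I+H^{-1}\Delta H)\) is invertible, with \(\|\widetilde H^{-1}\|_2\le\|H^{-1}\|_2/(1-m\safeenv\|H^{-1}\|_2)\). Subtracting \(Hc=-h\) from \(\widetilde H\widetilde c=-\widetilde h\) gives \(\widetilde H(\widetilde c-c)=-\Delta h-\Delta H\,c\). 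Hence
\[
\|\widetilde c-c\|_2\le\|\widetilde H^{-1}\|_2(\sqrt m+m\|c\|_2)\,\eta\le\Kcoeff(\safeenv)\,\eta,
\]
which is exactly how \eqref{eq:sec4-Kc} is built. This also gives the first claim of (i).

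\emph{Root step (main obstacle).} Write \(\widetilde Q-Q=\sum_{k<m}(\widetilde c_k-c_k)\zeta^k\). On the disk \(D_\nu:=D(z_\nu,\rootrad_\nu)\) we have \(|\zeta|\le|z_\nu|+d_\nu/2=R_\nu\), since \(\rootrad_\nu<d_\nu/2\). Cauchy--Schwarz then gives \(|\widetilde Q(\zeta)-Q(\zeta)|\le\Lambda_\nu\Kcoeff\eta\). For the lower bound on \(|Q|\) on the circle \(|\zeta-z_\nu|=\rho\), with \(\rho\le d_\nu/2\), factor \(|Q(\zeta)|=\rho\prod_{\mu\ne\nu}|\zeta-z_\mu|\). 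For each \(\mu\ne\nu\),
\[
|\zeta-z_\mu|\ge|z_\nu-z_\mu|-\rho\ge\tfrac12|z_\nu-z_\mu|,
\]
so \(|Q(\zeta)|\ge\rho\,\Gamma_\nu 2^{-(m-1)}\). Rouché's theorem therefore applies as soon as \(\rho\,\Gamma_\nu 2^{1-m}>\Lambda_\nu\Kcoeff\eta\). This holds for every \(\rho\) in the interval \((2^{m-1}\Lambda_\nu\Kcoeff\eta/\Gamma_\nu,\ \rootrad_\nu]\), which is nonempty because \(\eta\le\safeenv\); the factor \(2^m\) in \eqref{eq:sec4-root-radius} leaves a factor-two margin. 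On each such circle, \(\widetilde Q\) has exactly one zero inside, namely the one that \(Q\) has. Shrinking \(\rho\) to the lower end gives \(|\widehat z_\nu-z_\nu|\le 2^{m-1}\Lambda_\nu\Kcoeff\eta/\Gamma_\nu\), which is stronger than \eqref{eq:sec4-node-bound}. The disks are pairwise disjoint because \(\rootrad_\nu<d_\nu/2\), so the \(m\) roots of \(\widetilde Q\) sit one per disk, completing (i) and (iii). I expect this step to need the most care: one must check that the Rouché margin is uniform over the whole range of radii, and that the \(R_\nu\) bound is valid on the full disk and not only on its boundary.

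\emph{Logarithm step.} Since \(\rootrad_\nu<|z_\nu|/2\), the disk \(D_\nu\) is convex and avoids \(0\), so \(1/\zeta\) has a holomorphic primitive on it. Normalizing that primitive at \(z_\nu\) by \(\operatorname{Log}_\nu(z_\nu)=-i\omega_{j_\nu}(p_\star)\Delta\), which is consistent with \eqref{eq:sec4-nodes}, gives a unique branch; uniqueness follows because the disk is connected. Integrating \(1/\zeta\) along the segment from \(z_\nu\) to \(\widehat z_\nu\), and using \(|\zeta|\ge|z_\nu|/2\) on that segment, gives \(|\operatorname{Log}_\nu\widehat z_\nu-\operatorname{Log}_\nu z_\nu|\le 2|\widehat z_\nu-z_\nu|/|z_\nu|\). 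Multiplying by \(1/\Delta\) and inserting (iii) yields \eqref{eq:sec4-abstract-frequency-bound}, with constant \(2^{m+1}\Lambda_\nu\Kcoeff/(\Delta|z_\nu|\Gamma_\nu)=K_{\omega,\nu}\).
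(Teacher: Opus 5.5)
Your proof is correct and follows essentially the same route as the paper's appendix proof: the Neumann-series coefficient bound yielding $K_c(\eta_\star)$, Rouch\'e localization comparing $|\widetilde Q-Q|\le\Lambda_\nu K_c\eta$ with $|Q|\ge\rho\,\Gamma_\nu 2^{1-m}$, and the local logarithm estimate with factor $2/|z_\nu|$. The one difference is how you run Rouch\'e: you sweep it over all radii in $(2^{m-1}\Lambda_\nu K_c\eta/\Gamma_\nu,\,r_\nu(\eta_\star)]$ and bound each factor by $|\zeta-z_\mu|\ge\tfrac12|z_\nu-z_\mu|$. This is a mild tightening that gets the $r_\nu(\eta_\star)$-disk claim of (i) and the node bound (iii) in one pass, including the degenerate case $\eta=0$, and it uses the factor-by-factor lower bound that is actually needed to obtain $\Gamma_\nu$ rather than $d_\nu^{m-1}$.
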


\begin{proof}
Part \textup{(i)} is the content of the coefficient perturbation lemma and the Rouch\'e localization lemma proved in Appendix~\ref{app:abstract-frequency-extraction}. More concretely, Lemma~\ref{lem:appendixE-coefficient-perturbation} shows that clause \textup{(H4)} implies invertibility of \(\widetilde H\) and a bound on \(\|\widetilde c-c\|_2\). Lemma~\ref{lem:appendixE-root-localization} then turns that coefficient bound into one labeled root in each disk \(D(z_\nu,\rootrad_\nu(\safeenv))\). Clause \textup{(H5)} records that these disks are pairwise disjoint, so the labels are unique.

For part \textup{(ii)}, clause \textup{(H5)} says that \(D(z_\nu,\rootrad_\nu(\safeenv))\) avoids the origin. Lemma~\ref{lem:appendixE-logarithm} therefore provides the unique branch \(\operatorname{Log}_{\nu}\) normalized by \(\operatorname{Log}_{\nu}(z_\nu)=-i\omega_{j_\nu}(p_\star)\Delta\). Applying the logarithm estimate \eqref{eq:appendixE-log-bound} and then the root estimate from Lemma~\ref{lem:appendixE-root-localization}, one obtains
\[
|\operatorname{Log}_{\nu}(\widehat z_\nu)-\operatorname{Log}_{\nu}(z_\nu)|
\le
\frac{2}{|z_\nu|}\,|\widehat z_\nu-z_\nu|
\le
\frac{2}{|z_\nu|}
\frac{2^m\Lambda_\nu\Kcoeff(\safeenv)}{\Gamma_\nu}\,\eta.
\]
Multiplying by \(1/\Delta\) gives \eqref{eq:sec4-abstract-frequency-bound}, which is the main estimate.

Part \textup{(iii)} is simply the node estimate already used in the previous line, recorded here because later sections use the node scale itself as a coalescence diagnostic.
\end{proof}

Theorem~\ref{thm:sec4-abstract-extraction} uses only the first five clauses of Assumption~\ref{ass:sec4-standing-package}: a fixed projection convention, a fixed sample budget, separated sampled roots, an admissible envelope, and a labeled disk geometry. Every quantity entering the bound is explicit. There is no silent jump from a visually plausible fit to a trusted spectroscopic datum.

\subsection{The additive four-term error decomposition}

The theorem above controls the recovered frequency by one scalar envelope \(\eta\). The trust theorem later requires a modewise decomposition. That decomposition comes from splitting the residual samples into the same four pieces already fixed abstractly in Appendix~\ref{app:assumption-ledger}.

\begin{corollary}[Additive frequency-error decomposition]\label{cor:sec4-additive-ledger}
Assume the hypotheses of Theorem~\ref{thm:sec4-abstract-extraction}. Suppose in addition that the sample residuals admit a decomposition
\begin{equation}\label{eq:sec4-residual-split}
e_q
=
e_q^{\mathrm{stat}}
+
e_q^{\mathrm{tail}}
+
e_q^{\mathrm{mm}}
+
e_q^{\mathrm{alg}},
\qquad
q=0,1,\dots,2m-1,
\end{equation}
and define the corresponding sample envelopes
\begin{align}
\eta^{\mathrm{stat}}&:=\max_q |e_q^{\mathrm{stat}}|,\label{eq:sec4-eta-stat}\\
\eta^{\mathrm{tail}}&:=\max_q |e_q^{\mathrm{tail}}|,\label{eq:sec4-eta-tail}\\
\eta^{\mathrm{mm}}&:=\max_q |e_q^{\mathrm{mm}}|,\label{eq:sec4-eta-mm}\\
\eta^{\mathrm{alg}}&:=\max_q |e_q^{\mathrm{alg}}|.\label{eq:sec4-eta-alg}
\end{align}
Then each recovered frequency obeys
\begin{equation}\label{eq:sec4-additive-ledger-bound}
|\widehat\omega_{j_\nu}-\omega_{j_\nu}(p_\star)|
\le
\varepsilon_{j_\nu}^{\mathrm{stat}}
+
\varepsilon_{j_\nu}^{\mathrm{tail}}
+
\varepsilon_{j_\nu}^{\mathrm{mm}}
+
\varepsilon_{j_\nu}^{\mathrm{alg}},
\end{equation}
where
\begin{align}
\varepsilon_{j_\nu}^{\mathrm{stat}}&:=K_{\omega,\nu}(\safeenv)\eta^{\mathrm{stat}},\label{eq:sec4-eps-stat}\\
\varepsilon_{j_\nu}^{\mathrm{tail}}&:=K_{\omega,\nu}(\safeenv)\eta^{\mathrm{tail}},\label{eq:sec4-eps-tail}\\
\varepsilon_{j_\nu}^{\mathrm{mm}}&:=K_{\omega,\nu}(\safeenv)\eta^{\mathrm{mm}},\label{eq:sec4-eps-mm}\\
\varepsilon_{j_\nu}^{\mathrm{alg}}&:=K_{\omega,\nu}(\safeenv)\eta^{\mathrm{alg}}.\label{eq:sec4-eps-alg}
\end{align}
\end{corollary}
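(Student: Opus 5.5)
The plan is to reduce the corollary to the single-envelope bound \eqref{eq:sec4-abstract-frequency-bound} of Theorem~\ref{thm:sec4-abstract-extraction} by one application of the triangle inequality at the level of sample envelopes. The hypotheses of the theorem are assumed, including \(\eta\le\safeenv\). The recovered frequency \(\widehat\omega_{j_\nu}\) is therefore well defined and labeled, with the branch \(\operatorname{Log}_\nu\) fixed by part (ii). It obeys
\[
|\widehat\omega_{j_\nu}-\omega_{j_\nu}(p_\star)|\le K_{\omega,\nu}(\safeenv)\,\eta .
\]
No new analytic input is needed. The constant \(K_{\omega,\nu}(\safeenv)\) depends only on the exact nodes, the exact amplitudes, \(\Delta\), and the fixed envelope \(\safeenv\); it does not depend on the realized residual or on how that residual is split.

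The key step is the envelope inequality. For each \(q\), the decomposition \eqref{eq:sec4-residual-split} and the triangle inequality in \(\mathbb C\) give
\[
|e_q|\le |e_q^{\mathrm{stat}}|+|e_q^{\mathrm{tail}}|+|e_q^{\mathrm{mm}}|+|e_q^{\mathrm{alg}}|
\le \eta^{\mathrm{stat}}+\eta^{\mathrm{tail}}+\eta^{\mathrm{mm}}+\eta^{\mathrm{alg}} .
\]
Taking the maximum over \(q=0,\dots,2m-1\) then yields \(\eta\le \eta^{\mathrm{stat}}+\eta^{\mathrm{tail}}+\eta^{\mathrm{mm}}+\eta^{\mathrm{alg}}\).

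Multiplying by the nonnegative constant \(K_{\omega,\nu}(\safeenv)\) and inserting into \eqref{eq:sec4-abstract-frequency-bound} gives \eqref{eq:sec4-additive-ledger-bound}, with the four terms exactly as defined in \eqref{eq:sec4-eps-stat}--\eqref{eq:sec4-eps-alg}.

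The only point needing care, and the closest thing to an obstacle, is logical rather than computational. The admissibility condition \(\eta\le\safeenv\) must be imposed on the total residual \(\eta\), not on the individual pieces. This is why the constant is evaluated at the fixed a priori envelope \(\safeenv\) rather than at the component envelopes. Because \(K_{\omega,\nu}\) is evaluated at \(\safeenv\), linearity in \(\eta\) is exact and the split introduces no cross terms. I would also remark that a sufficient condition for the hypothesis is \(\eta^{\mathrm{stat}}+\eta^{\mathrm{tail}}+\eta^{\mathrm{mm}}+\eta^{\mathrm{alg}}\le\safeenv\), which is the form used later in the trust inequalities.
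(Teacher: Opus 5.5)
Your proposal is correct and matches the paper's proof: the triangle inequality gives \(\eta\le\eta^{\mathrm{stat}}+\eta^{\mathrm{tail}}+\eta^{\mathrm{mm}}+\eta^{\mathrm{alg}}\). Inserting this into the bound of Theorem~\ref{thm:sec4-abstract-extraction} and distributing the common factor \(K_{\omega,\nu}(\safeenv)\) gives the four-term bound.
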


\begin{proof}
From \eqref{eq:sec4-residual-split},
\[
|e_q|
\le
|e_q^{\mathrm{stat}}|+|e_q^{\mathrm{tail}}|+|e_q^{\mathrm{mm}}|+|e_q^{\mathrm{alg}}|,
\]
hence
\[
\eta
\le
\eta^{\mathrm{stat}}+\eta^{\mathrm{tail}}+\eta^{\mathrm{mm}}+\eta^{\mathrm{alg}}.
\]
Apply Theorem~\ref{thm:sec4-abstract-extraction} and distribute the common factor \(K_{\omega,\nu}(\safeenv)\).
\end{proof}

Corollary~\ref{cor:sec4-additive-ledger} connects finite-window signal analysis to the later remnant stage. It makes the error decomposition explicit. Detector noise, omitted linear content, genuine model mismatch, and algorithmic loss enter later sections only through the explicit radii \(\varepsilon_{j_\nu}^{\mathrm{stat}}\), \(\varepsilon_{j_\nu}^{\mathrm{tail}}\), \(\varepsilon_{j_\nu}^{\mathrm{mm}}\), and \(\varepsilon_{j_\nu}^{\mathrm{alg}}\). The trust-region theorem uses only these radii and the bounds they provide.

\subsection{Label stability and recursive branch control}

A frequency estimate that is correct only up to relabeling is not yet useful for spectroscopy. Assumption~\ref{ass:sec4-standing-package}\textup{(H6)} records the rule that branch choices must be justified by an explicit criterion. Two criteria are used here. The first is static. Given one window, one asks whether the recovered root in the \(j\)-th disk is indeed the \(j\)-th mode. The second is recursive. Along a start-time scan, one asks whether the representative selected at the next window has slipped by an element of the alias lattice \(2\pi\Delta^{-1}\mathbb Z\).

The static criterion is expressed through the isolation margin already defined in Appendix~\ref{app:assumption-ledger}. Fix a guide point \(p^\dagger\in\Kdet\) and define
\begin{equation}\label{eq:sec4-beta-def}
\beta
:=
\max_{j\in\mathcal M}
|\omega_j(p^\dagger)-\omega_j(p_\star)|.
\end{equation}
If the extracted frequencies satisfy
\begin{equation}\label{eq:sec4-static-eps}
\varepsilon_{\max}
:=
\max_{j\in\mathcal M}
|\widehat\omega_j-\omega_j(p_\star)|,
\end{equation}
then the nearest-neighbor guide rule is exact whenever the total guide-plus-extraction error is smaller than the isolation margin.

\begin{proposition}[Static label stability]\label{prop:sec4-static-labels}
Let \(\mathcal M\) be a finite mode family and suppose that
\begin{equation}\label{eq:sec4-static-criterion}
\beta+\varepsilon_{\max}<\deltaiso(p_\star;\mathcal M).
\end{equation}
Then each recovered frequency \(\widehat\omega_j\) is strictly closer to the guide center \(\omega_j(p^\dagger)\) than to any competing guide center \(\omega_k(p^\dagger)\) with \(k\neq j\). In particular, nearest-neighbor matching to the guide set
\[
\GuideSet(p^\dagger;\mathcal M):=\{\omega_k(p^\dagger)\colon k\in\mathcal M\}
\]
preserves the true label.
\end{proposition}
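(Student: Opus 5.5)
The plan is a pure triangle-inequality argument in the complex frequency plane. I take the isolation margin of Appendix~\ref{app:assumption-ledger} to be half the minimal pairwise separation of the true frequencies,
\[
\deltaiso(p_\star;\mathcal M)=\tfrac12\min_{j\neq k\in\mathcal M}|\omega_j(p_\star)-\omega_k(p_\star)|,
\]
so that $|\omega_j(p_\star)-\omega_k(p_\star)|\ge 2\deltaiso$ whenever $j\neq k$. The appendix definition is not reproduced above. The argument below is exactly tuned to this half-separation normalization: if the margin were instead the full separation, the hypothesis \eqref{eq:sec4-static-criterion} would not suffice, so checking this normalization is the first thing to do. The argument needs no input from Theorem~\ref{thm:sec4-abstract-extraction} beyond the definition \eqref{eq:sec4-static-eps} of $\varepsilon_{\max}$.

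\emph{Step 1 (distance to the correct guide).} For a fixed $j\in\mathcal M$, insert $\omega_j(p_\star)$:
\[
|\widehat\omega_j-\omega_j(p^\dagger)|\le|\widehat\omega_j-\omega_j(p_\star)|+|\omega_j(p_\star)-\omega_j(p^\dagger)|\le\varepsilon_{\max}+\beta,
\]
using \eqref{eq:sec4-static-eps} and \eqref{eq:sec4-beta-def}.

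\emph{Step 2 (distance to a competing guide).} For $k\neq j$, apply the reverse triangle inequality through $\omega_j(p_\star)$ and $\omega_k(p_\star)$:
\[
|\widehat\omega_j-\omega_k(p^\dagger)|\ge|\omega_j(p_\star)-\omega_k(p_\star)|-|\widehat\omega_j-\omega_j(p_\star)|-|\omega_k(p_\star)-\omega_k(p^\dagger)|\ge 2\deltaiso-\varepsilon_{\max}-\beta.
\]

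\emph{Step 3 (comparison).} Hypothesis \eqref{eq:sec4-static-criterion} gives $2\deltaiso-\varepsilon_{\max}-\beta>\varepsilon_{\max}+\beta$. Combining this with Steps 1 and 2 yields the strict inequality $|\widehat\omega_j-\omega_j(p^\dagger)|<|\widehat\omega_j-\omega_k(p^\dagger)|$ for every $k\neq j$. This is the first claim. Hence the unique nearest element of $\GuideSet(p^\dagger;\mathcal M)$ to $\widehat\omega_j$ is $\omega_j(p^\dagger)$.

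For the second claim, note two further facts. The guide centers are pairwise distinct, since Step 1 combined with \eqref{eq:sec4-static-criterion} forces $|\omega_j(p^\dagger)-\omega_j(p_\star)|\le\beta<\deltaiso$, so each guide center lies in a separate ball of radius $\deltaiso$ about its true frequency. Consequently nearest-neighbor matching is well defined and sends the estimate with label $j$ to the guide center with label $j$. Applying this for every $j$ shows the matching is the identity on labels.

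The only real obstacle is the normalization of $\deltaiso$, which fixes whether Step 3 closes with exactly the stated hypothesis. If the appendix defines the margin through the guide point or through an alias-lattice-aware distance (separation modulo $2\pi\Delta^{-1}\mathbb Z$), the same three steps go through verbatim with that distance in place of the modulus.
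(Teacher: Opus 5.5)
Your proposal is correct and is essentially the paper's proof. The paper proves this proposition by citing Proposition~\ref{prop:nearest-neighbor-label-stability}, whose proof is exactly your three-step triangle-inequality argument, and it defines the isolation margin as \(\tfrac12\min_{j\neq k}|\omega_j(p_\star)-\omega_k(p_\star)|\), so your normalization check goes through. Your separate argument that the guide centers are distinct is harmless but redundant, since the strict inequality in Step 3 already forces \(\omega_j(p^\dagger)\neq\omega_k(p^\dagger)\).
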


\begin{proof}
This is Proposition~\ref{prop:nearest-neighbor-label-stability} from Appendix~\ref{app:assumption-ledger} written in the notation used here. The hypotheses there are exactly \eqref{eq:sec4-beta-def}--\eqref{eq:sec4-static-criterion}.
\end{proof}

The recursive criterion is handled by the unwrapping operator of Appendix~\ref{app:branch-selection}. The relevant obstruction is the alias lattice
\[
\AliasLat=\frac{2\pi}{\Delta}\mathbb Z,
\]
which acts on frequency representatives through shifts in the real part. Let \(\omega_n\) be the physical representative of a fixed labeled mode at the \(n\)-th window of a start-time scan, and let \(\omega_n^{\mathrm{loc}}\) be any locally correct representative of the recovered node at that same window. The previous window then acts as the prior for the next one.

\begin{proposition}[Branch-stable continuation across windows]\label{prop:sec4-branch-continuation}
Let \(n=0,1,\dots,N\) index a sequence of windows for one fixed labeled mode. Suppose that
\[
|\omega_n^{\mathrm{loc}}-\omega_n|\le \epsilon_n
\qquad\text{for all }n,
\]
and that the safe-step inequality
\begin{equation}\label{eq:sec4-safe-step}
|\Re(\omega_n-\omega_{n-1})|+\epsilon_{n-1}+\epsilon_n<\frac{\pi}{\Delta}
\qquad\text{for }n=1,\dots,N
\end{equation}
holds. Define recursively
\[
\widehat\omega_0:=\omega_0^{\mathrm{loc}},
\qquad
\widehat\omega_n:=\unwrapop{\Delta}{\widehat\omega_{n-1}}(\widetilde\omega_n),
\quad n\ge 1,
\]
where \(\widetilde\omega_n\) is any representative of the recovered node at window \(n\). Then
\[
\widehat\omega_n=\omega_n^{\mathrm{loc}}
\qquad\text{for every }n=0,1,\dots,N.
\]
In particular, no branch slip occurs along the scan.
\end{proposition}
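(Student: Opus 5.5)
The proof is an induction on \(n\) along the scan, with the safe-step inequality \eqref{eq:sec4-safe-step} supplying the inductive step. It relies on one property of the unwrapping operator of Appendix~\ref{app:branch-selection}, which I will take in the following form. Given a prior \(\omega'\) and any representative \(\widetilde\omega\) of a node, \(\unwrapop{\Delta}{\omega'}(\widetilde\omega)\) is the element of the coset \(\widetilde\omega+\AliasLat\) whose real part is closest to \(\Re\omega'\). In particular, any element \(x\) of the coset with \(|\Re(x-\omega')|<\pi/\Delta\) is the output. The first step is to verify this characterization from the appendix definition: the coset elements have real parts spaced exactly \(2\pi/\Delta\) apart, so at most one of them lies in the open strip \(|\Re(\cdot-\omega')|<\pi/\Delta\). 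Whenever such an element exists, it is therefore the unique nearest one and is what unwrapping returns, regardless of the tie-breaking convention at the strip boundary.

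The base case \(n=0\) holds by definition, since \(\widehat\omega_0=\omega_0^{\mathrm{loc}}\). For the inductive step, assume \(\widehat\omega_{n-1}=\omega_{n-1}^{\mathrm{loc}}\). The representative \(\widetilde\omega_n\) and the locally correct representative \(\omega_n^{\mathrm{loc}}\) describe the same recovered node \(e^{-i\omega\Delta}\), so they differ by an element of \(\AliasLat\). Hence \(\omega_n^{\mathrm{loc}}\) belongs to the coset \(\widetilde\omega_n+\AliasLat\) that the unwrapping operator searches.

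It then suffices to show that \(\omega_n^{\mathrm{loc}}\) lies in the open strip around the prior. By the triangle inequality,
\[
|\Re(\omega_n^{\mathrm{loc}}-\omega_{n-1}^{\mathrm{loc}})|
\le |\Re(\omega_n-\omega_{n-1})|+|\omega_n^{\mathrm{loc}}-\omega_n|+|\omega_{n-1}^{\mathrm{loc}}-\omega_{n-1}|
\le |\Re(\omega_n-\omega_{n-1})|+\epsilon_n+\epsilon_{n-1},
\]
and the right-hand side is strictly less than \(\pi/\Delta\) by \eqref{eq:sec4-safe-step}. By the uniqueness statement of the first step, \(\widehat\omega_n=\unwrapop{\Delta}{\widehat\omega_{n-1}}(\widetilde\omega_n)=\omega_n^{\mathrm{loc}}\), which closes the induction. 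The absence of branch slips follows at once, since every reported representative equals the locally correct one.

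The only delicate point is matching the operator's exact definition in Appendix~\ref{app:branch-selection}, for instance a rounding convention \(\widetilde\omega+\tfrac{2\pi}{\Delta}\operatorname{round}\bigl(\Delta\,\Re(\omega'-\widetilde\omega)/2\pi\bigr)\) with a half-open tie rule. The strict inequality in \eqref{eq:sec4-safe-step} is what keeps us away from the tie case, so the conclusion does not depend on that convention.
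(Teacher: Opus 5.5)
Your proposal is correct and matches the paper's proof (via Theorem~\ref{thm:appendixF-recursive}) essentially step for step. The paper also argues by induction on \(n\): the triangle-inequality estimate together with the safe-step inequality places \(\omega_n^{\mathrm{loc}}\) strictly inside the strip centered at \(\widehat\omega_{n-1}\), \(\omega_n^{\mathrm{loc}}\) differs from \(\widetilde\omega_n\) by an alias-lattice element, and the strip representative is unique. Your point that the strict inequality makes the tie-breaking convention irrelevant agrees with the paper's half-open-strip definition of the unwrapping operator.
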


\begin{proof}
Theorem~\ref{thm:appendixF-recursive} shows that the criterion is sharp in the relevant variable: only the real-part drift matters for aliasing, and the threshold is exactly \(\pi/\Delta\), half the spacing of the lattice \(\AliasLat\).
\end{proof}

Propositions~\ref{prop:sec4-static-labels} and \ref{prop:sec4-branch-continuation} explain why later start-time drift plots have mathematical content. Once the static guide error remains below the isolation radius and the window-to-window drift remains below the alias threshold, any observed motion in the extracted frequencies is a property of the estimator and of the data, not an artifact of accidental relabeling.

\subsection{Near-coalescence and the extraction exclusion criterion}

Assumption~\ref{ass:sec4-standing-package}\textup{(H7)} turns near-coalescence into an exclusion mechanism rather than a tuning problem. Even in exact arithmetic, distinct modes become hard to separate after sampling when the nodes
\[
z_\nu=e^{-i\omega_{j_\nu}(p_\star)\Delta}
\]
approach one another. This is the sampled version of the high-correlation problem. The relevant global separation is \(\deltaz\) from \eqref{eq:sec4-deltaz}. When \(\deltaz\) is small, the Vandermonde matrix becomes ill-conditioned, the polynomial-root map becomes stiff, and any claim of clean mode separation must be discounted accordingly.

The sharpest formulas are easiest to write in the two-node case, but they already capture the mechanism that matters for the full problem.

\begin{proposition}[Near-coalescence deterioration]\label{prop:sec4-coalescence}
Consider the two-node model
\[
s_q=a_1 z_1^q+a_2 z_2^q,
\qquad
\deltaz:=|z_1-z_2|,
\qquad
a_{\min}:=\min\{|a_1|,|a_2|\},
\qquad
R_*:=\max\{|z_1|,|z_2|\}.
\]
Then the two deterministic bounds proved in Appendix~\ref{app:prony-conditioning} imply the following.

\smallskip

\noindent
\textup{(i)} There exists a constant \(C_{\mathrm{Pr}}(R_*)>0\), independent of \(\deltaz\), such that every admissible perturbation with sample envelope \(\eta\) satisfies the Prony root bound
\begin{equation}\label{eq:sec4-prony-delta3}
\max_{j=1,2}|\widetilde z_j-z_j|
\le
\frac{C_{\mathrm{Pr}}(R_*)}{a_{\min}\deltaz^3}\,\eta.
\end{equation}

\smallskip

\noindent
\textup{(ii)} For the structured matrix-pencil problem there exists a constant \(C_{\mathrm{MP}}(R_*)>0\), independent of \(\deltaz\), such that the local generalized-eigenvalue derivative satisfies
\begin{equation}\label{eq:sec4-pencil-delta2}
|\lambda_j'(0)|
\le
\frac{C_{\mathrm{MP}}(R_*)}{a_{\min}\deltaz^2}
\bigl(\|\Delta_1\|_2+|z_j|\,\|\Delta_0\|_2\bigr),
\qquad j=1,2.
\end{equation}

\smallskip

\noindent
\textup{(iii)} The exponent \(\deltaz^{-2}\) in \eqref{eq:sec4-pencil-delta2} is sharp for structured sample perturbations, while the exponent \(\deltaz^{-3}\) in \eqref{eq:sec4-prony-delta3} is the relevant nonasymptotic exponent relevant for exclusion tests.
\end{proposition}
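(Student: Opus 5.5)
The plan is to specialize the general machinery of Theorem~\ref{thm:sec4-abstract-extraction} to \(m=2\) for part (i), and to use a first-order generalized-eigenvalue perturbation formula built on the factorization of Proposition~\ref{prop:sec4-exact-structure} for part (ii). Part (iii) then rests on an explicit structured perturbation that attains the \(\deltaz^{-2}\) rate.

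\emph{Part (i).} For \(m=2\), Proposition~\ref{prop:sec4-exact-structure} gives \(H=VBV^\top\) and \(\det H=a_1a_2(z_2-z_1)^2\). The adjugate of the \(2\times 2\) Hankel matrix \(H\) has entries drawn from \(s_0,s_1,s_2\), each bounded by \((|a_1|+|a_2|)\max\{1,R_*\}^2\). Hence
\[
\|H^{-1}\|_2\le \frac{2(|a_1|+|a_2|)\max\{1,R_*\}^2}{|a_1a_2|\,\deltaz^2}\le \frac{4\max\{1,R_*\}^2}{a_{\min}\deltaz^2},
\]
where the last step uses \((|a_1|+|a_2|)/|a_1a_2|\le 2/a_{\min}\). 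The coefficients of \(Q(\zeta)=(\zeta-z_1)(\zeta-z_2)\) satisfy \(\|c\|_2\le R_*^2+2R_*\).

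I will read ``admissible perturbation'' as \(2\eta\|H^{-1}\|_2\le 1/2\) together with the disk conditions of Definition~\ref{def:sec4-admissible-envelope}. Under that reading, \(K_c(\eta)\le 2\|H^{-1}\|_2(\sqrt2+2\|c\|_2)\). With \(\Gamma_\nu=d_\nu=\deltaz\) and \(\Lambda_\nu\le(1+(R_*+1)^2)^{1/2}\), Theorem~\ref{thm:sec4-abstract-extraction}(iii) then yields \eqref{eq:sec4-prony-delta3}. One factor \(\deltaz^{-2}\) comes from \(\|H^{-1}\|_2\) and one factor \(\deltaz^{-1}\) comes from \(\Gamma_\nu^{-1}\), and all remaining \(R_*\)-dependence is collected into \(C_{\mathrm{Pr}}(R_*)\).

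\emph{Part (ii).} I will use the pencil \(H_1-\lambda H_0\) with \(H_0=VBV^\top\) and \(H_1=VBZV^\top\), where \(Z=\operatorname{diag}(z_1,z_2)\) and both matrices are formed from the shifted sample blocks. Its eigenvalues are \(z_j\), with right eigenvectors \(x_j=V^{-\top}e_j\) and left eigenvectors \(y_j^\top=e_j^\top V^{-1}\). These are normalized so that \(y_j^\top H_0x_j=b_j=a_j\).

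The eigenvalues are simple because the nodes are distinct. The standard simple-eigenvalue derivative formula therefore gives
\[
\lambda_j'(0)=\frac{y_j^\top(\Delta_1-z_j\Delta_0)x_j}{a_j}.
\]
Since \(V^{-1}=(z_2-z_1)^{-1}\begin{pmatrix} z_2&-1\\ -z_1&1\end{pmatrix}\), both \(\|x_j\|_2\) and \(\|y_j\|_2\) are at most \((1+R_*^2)^{1/2}/\deltaz\). The product of the two norms gives \eqref{eq:sec4-pencil-delta2} with \(C_{\mathrm{MP}}(R_*)=1+R_*^2\).

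\emph{Part (iii), sharpness.} This is the step I expect to be the main obstacle. I must show that some \emph{Hankel-structured} perturbation, induced by sample errors \(e_q\), actually attains \(|y_j^\top\Delta_1x_j|\gtrsim \eta\,\deltaz^{-2}\). The concern is that the structure constraint might force cancellation.

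I would first try a single-sample perturbation \(e_q=\eta\,\delta_{q,q_0}\), which changes only one antidiagonal of \(\Delta_1\) or \(\Delta_0\). Writing \(y_j\) and \(x_j\) from the explicit inverse, the term \(y_j^\top E_{q_0}x_j\) is \((z_2-z_1)^{-2}\) times a polynomial in \(z_1,z_2\). I would check that this polynomial tends to a nonzero limit as \(z_2\to z_1\) with \(|z_1|\) bounded away from \(0\) and \(\infty\); for example, the antidiagonal \(q_0=3\) in \(\Delta_1\) should give \(\pm 1\). If that limit is nonzero, the \(\deltaz^{-2}\) rate is attained.

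For the Prony exponent in (i), I will not prove a matching lower bound. Instead I will note that it is the exponent delivered by the nonasymptotic chain \(\|H^{-1}\|_2\,\Gamma_\nu^{-1}\), which is what the exclusion rule (H7) uses.
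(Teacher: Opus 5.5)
Your proposal is correct. Part (ii) is essentially the paper's argument, while parts (i) and (iii) take different routes.

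\textbf{Part (i).} The paper does not invoke Theorem~\ref{thm:sec4-abstract-extraction}. Appendix~\ref{app:prony-conditioning} proves a dedicated two-node bound: it estimates $\|Y_0^{-1}\|_2\le\|V^{-1}\|_2^2/a_{\min}\le 2(1+R_*^2)/(a_{\min}\delta_z^2)$ from the Vandermonde factorization, then runs a two-node Rouch\'e argument that needs only a root-disk radius below $\delta_z/2$. You instead specialize the general $m$-node theorem to $m=2$, where $\Gamma_\nu=d_\nu=\delta_z$, and bound $\|H^{-1}\|_2$ via the adjugate over $\det H=a_1a_2(z_2-z_1)^2$.

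Your route shows the two-node estimate is a literal special case of the main extraction theorem. The cost is that you inherit the $2^m$ factor and the extra clause $r_\nu<|z_\nu|/2$, which is needed only for the logarithm. The paper's version is self-contained under a slightly weaker smallness hypothesis.

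There is one small slip. The bound $\Lambda_\nu\le(1+(R_*+1)^2)^{1/2}$ requires $\delta_z\le 2$, which can fail if $R_*>1$. Use $\delta_z\le 2R_*$ instead to get $\Lambda_\nu\le(1+4R_*^2)^{1/2}$. The same observation removes the formal $\delta_z$-dependence of the paper's $C_{\mathrm{Pr}}$, which enters through $\Lambda_\star=\sqrt{1+(R_*+\delta_z/2)^2}$.

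\textbf{Part (ii).} Your normalization $y_j^\top=e_j^\top V^{-1}$, with division by $y_j^\top H_0x_j=a_j$, is equivalent to the paper's $y_j^\top=e_j^\top A^{-1}V^{-1}$. Because you bound the actual rows of $V^{-1}$ rather than $\|V^{-1}\|_2$, you get $C_{\mathrm{MP}}=1+R_*^2$, a factor of two better than the paper.

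\textbf{Part (iii).} The step you flagged as the main obstacle is immediate. Take $d_3=1$ and all other $d_q=0$, so $\Delta_0=0$ and $\Delta_1=\operatorname{diag}(0,1)$. Here $x_j=y_j$ is the transposed $j$-th row of $V^{-1}$, whose second entry is $\mp 1/(z_2-z_1)$. Hence $\lambda_j'(0)=1/\bigl(a_j(z_1-z_2)^2\bigr)$ exactly.

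The paper instead perturbs $d_2$. That gives numerators $z_1+2z_2$ and $2z_1+z_2$, so sharpness holds only away from accidental cancellation. Your witness has numerator identically one, so it certifies the $\delta_z^{-2}$ exponent unconditionally.

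You treat the $\delta_z^{-3}$ exponent as the output of the chain $\|H^{-1}\|_2\,\Gamma_\nu^{-1}$, with no matching lower bound. That matches Remarks~\ref{rem:appendixG-three} and~\ref{rem:appendixG-different-roles}.
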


\begin{proof}
Part \textup{(i)} is Theorem~\ref{thm:appendixG-prony-conditioning}. Part \textup{(ii)} is Theorem~\ref{thm:appendixG-pencil-local}. Part \textup{(iii)} follows from Proposition~\ref{prop:appendixG-sharp-direction} and Remarks~\ref{rem:appendixG-three} and \ref{rem:appendixG-different-roles}.
\end{proof}

Proposition~\ref{prop:sec4-coalescence} is the reason coalescence is treated as an exclusion mechanism rather than as a nuisance to be smoothed away. If the sampled nodes are too close, one power of \(\deltaz^{-1}\) is lost in passing from coefficients to roots, and a second or third power is already lost in the linear algebra before that. The result is not merely a large numerical condition number. It is a quantified statement that the frequency-error decomposition must blow up as \(\deltaz\) shrinks.

The consequence of clause \textup{(H7)} is the following.

\begin{corollary}[Extraction exclusion criterion]\label{cor:sec4-exclusion}
If a fixed window, model family, and projection channel fail clause \textup{(H4)} of Assumption~\ref{ass:sec4-standing-package}, then the window is extraction-unstable for that choice of \((\mathcal M,\Projv,\Delta)\). In particular, it cannot enter any later trust set whose frequency budget is derived from Theorem~\ref{thm:sec4-abstract-extraction}. The same conclusion holds whenever clause \textup{(H6)} fails and the recovered nodes cannot be assigned a admissible branch representative.
\end{corollary}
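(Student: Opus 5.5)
The plan is to read Corollary~\ref{cor:sec4-exclusion} as a contrapositive of the way Theorem~\ref{thm:sec4-abstract-extraction} is used downstream. First I would make the notion of a trust set whose frequency budget is derived from Theorem~\ref{thm:sec4-abstract-extraction} explicit. Such a set admits a window only if it carries a certified labeled frequency radius $K_{\omega,\nu}(\safeenv)\,\eta$, or the four-term sum of Corollary~\ref{cor:sec4-additive-ledger}, for every $\nu$. Producing that certificate requires exhibiting some $\safeenv>0$ satisfying \eqref{eq:sec4-admissible-envelope} with $\eta\le\safeenv$. So the certificate exists only when clause \textup{(H4)} holds.

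In the first case, \textup{(H4)} fails, so no $\safeenv$ satisfies \eqref{eq:sec4-admissible-envelope}. I would split this into its two sub-failures.

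If $m\safeenv\|H^{-1}\|_2\ge 1$ for every candidate $\safeenv\ge\eta$, then $\Kcoeff(\safeenv)$ in \eqref{eq:sec4-Kc} is undefined or nonpositive. Lemma~\ref{lem:appendixE-coefficient-perturbation} then gives no invertibility guarantee for $\widetilde H$, and no coefficient bound.

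If instead $\rootrad_\nu(\safeenv)\ge\min\{d_\nu/2,|z_\nu|/2\}$ for some $\nu$, then the disks in \textup{(H5)} may overlap or contain the origin. In that situation neither the Rouch\'e labeling of Lemma~\ref{lem:appendixE-root-localization} nor the branch of Lemma~\ref{lem:appendixE-logarithm} is available.

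In both sub-cases, conclusions (i)--(iii) of the theorem cannot be invoked, so no finite radius is supplied. By definition of the trust set, the window is excluded, which is what extraction-unstable means here. Clause \textup{(H7)} of Assumption~\ref{ass:sec4-standing-package} then fixes the policy that the window is excluded rather than regularized.

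In the second case, \textup{(H6)} fails. Then neither \eqref{eq:sec4-static-criterion} nor \eqref{eq:sec4-safe-step} holds, so Propositions~\ref{prop:sec4-static-labels} and \ref{prop:sec4-branch-continuation} do not certify the reported representative. The theorem's bound applies only to the specific branch $\operatorname{Log}_\nu$, and the reported frequency may differ from it by an alias shift in $\frac{2\pi}{\Delta}\mathbb Z$ or by a label swap. The certified budget therefore does not attach to the reported datum, and the same exclusion follows.

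The main obstacle is that this statement is essentially definitional. The risk is circularity: arguing that the window is excluded because the theorem does not apply. To avoid this, I would state explicitly as a standing convention that membership in any downstream trust set requires the explicit theorem-derived certificate. Failure of a hypothesis is then a certified exclusion, not a claim that the true error is large. If a stronger, genuine instability reading were wanted, I would invoke Proposition~\ref{prop:sec4-coalescence}, which shows that the bounds blow up like $\deltaz^{-3}$ as $\deltaz\to0$.
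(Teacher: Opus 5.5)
Your proposal is correct and matches the paper's own proof, which is the same short contrapositive: every later trust test uses only the radii and labeled frequencies supplied by Theorem~\ref{thm:sec4-abstract-extraction} and Corollary~\ref{cor:sec4-additive-ledger}, so if \textup{(H4)} fails no valid extraction radius exists, and if \textup{(H6)} fails Appendix~\ref{app:branch-selection} provides no admissible branch representative. Your additions refine that argument rather than change it: the split of the \textup{(H4)} failure into its two sub-failures, the restriction to candidates $\eta_\star\ge\eta$, and the explicit certification convention. The restriction to $\eta_\star\ge\eta$ is in fact the reading with content, because under \textup{(H3)} a sufficiently small $\eta_\star$ always satisfies \eqref{eq:sec4-admissible-envelope}.
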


\begin{proof}
Every later trust test uses the radii and labeled frequencies furnished by Theorem~\ref{thm:sec4-abstract-extraction} and Corollary~\ref{cor:sec4-additive-ledger}. If clause \textup{(H4)} fails, then the theorem cannot be applied, so no valid extraction radius exists for that window and model family. If clause \textup{(H6)} fails, then Appendix~\ref{app:branch-selection} provides no admissible representative of the recovered node in frequency space. In either case the extracted frequency data required by later sections are unavailable.
\end{proof}

The standing package therefore yields the extraction output used later. For each fixed window, model family, and projection channel, it provides a labeled set of complex frequencies \(\widehat\omega_j\), an additive four-term decomposition \(\varepsilon_j^{\mathrm{stat}}+\varepsilon_j^{\mathrm{tail}}+\varepsilon_j^{\mathrm{mm}}+\varepsilon_j^{\mathrm{alg}}\), an explicit branch-control criterion for both isolated windows and start-time scans, and an explicit coalescence diagnostic. Section~\ref{sec:kerr-inversion} turns the \(220\) entry of this decomposition into remnant-parameter error and uses the \(221\) and \(440\) entries to define auxiliary consistency tubes. The trust-region theorem later combines those remnant tolerances with drift, mismatch calibration, and event-level robustness.

\setcounter{section}{4}
\providecommand{\Kdet}{\mathcal K_{\mathrm{det}}}
\providecommand{\Mf}{M_{\mathrm f}}
\providecommand{\Mdet}{M_{\mathrm f}^{\mathrm{det}}}
\providecommand{\chif}{\chi_{\mathrm f}}
\providecommand{\Ppub}{\mathscr P_{\mathrm{pub}}}
\providecommand{\omegahat}{\widehat\omega}
\providecommand{\XiKerr}{\Xi}
\providecommand{\Kinv}{\mathscr A_{220}^{\mathrm{inv}}}
\providecommand{\Kpub}{\mathscr A_{220}^{\mathrm{pub}}}
\providecommand{\Caux}{\mathcal C_{\mathrm{aux}}}
\providecommand{\Raux}[1]{R_{#1}}
\providecommand{\tauaux}[1]{\tau_{#1}}
\providecommand{\sigprim}{\sigma_{220}^{\mathrm{cert}}}
\providecommand{\kapprim}{\kappa_{220}^{\max}}
\providecommand{\Nprim}{N_{220}}
\providecommand{\rhoinv}{\rho_{220}}
\providecommand{\Uchart}[1]{U_{220}(#1)}
\providecommand{\Vchart}[1]{V_{220}(#1)}
\providecommand{\palpha}{p_{\alpha}}
\providecommand{\modeSet}{\{220,221,440\}}
\providecommand{\pairmap}[1]{H_{220,#1}}
\providecommand{\Ichi}{I_{\chi}}
\providecommand{\sepmargin}{\Delta^{\mathrm{sep}}}

\section{Kerr inversion and the consistency atlas}\label{sec:kerr-inversion}

With the data products, detector-frame conventions, and event-local working box fixed in Section~\ref{sec:data-conventions}, the remnant stage becomes a local geometric problem. The dominant Kerr mode must provide a stable coordinate chart on the GW250114 neighborhood, and the remaining measured modes must be tested against the remnant inferred from it. We treat those tasks asymmetrically. The mode $220$ furnishes the inverse map, while $221$ and $440$ serve as forward consistency checks for a common Kerr remnant. On the event-local box relevant for GW250114, this is the better conditioned formulation of black-hole spectroscopy, and it is the formulation closest to the public ringdown discussion of the event \cite{BertiCardosoStarinets2009QNMReview,BaibhavBerti2019Multimode,LVK_GW250114_Spectroscopy}.

We first isolate the dominant-mode condition atlas on $\Kdet$, meaning the singular-value and condition-number fields of the primary Kerr map. We then upgrade that pointwise atlas to a genuine inverse atlas with a uniform local inverse Lipschitz constant and propagate the primary inversion error into deterministic tolerance tubes for the auxiliary modes $221$ and $440$. The full proofs of the deeper local inverse results are recorded in Appendices~\ref{app:kerr-qnm-data}, \ref{app:primary-inversion}, and \ref{app:joint-two-mode-inversion}. Here we state the main geometric consequences in the form needed later by the trust-region analysis.

\subsection{Primary and auxiliary Kerr maps}

Let $j\in\modeSet$. As in Appendix~\ref{app:kerr-qnm-data}, write $\omegahat_j(\chi)$ for the dimensionless Kerr quasi-normal-mode sequence, so that the exact detector-frame scaling is
\begin{equation}\label{eq:sec5-complex-kerr-maps}
F_j(M,\chi)=\omega_j(M,\chi)=\frac{1}{M}\,\omegahat_j(\chi),
\qquad (M,\chi)\in\Kdet.
\end{equation}
We also pass to the equivalent real two-vector representation
\begin{equation}\label{eq:sec5-realification}
\XiKerr\colon\mathbb C\to\mathbb R^2,
\qquad
\XiKerr(z)=(\Re z,-\Im z),
\end{equation}
and define
\begin{equation}\label{eq:sec5-real-kerr-maps}
G_j(M,\chi):=\XiKerr\bigl(F_j(M,\chi)\bigr)
=\bigl(\Re\omega_j(M,\chi),-\Im\omega_j(M,\chi)\bigr).
\end{equation}
The primary map is $G_{220}$. The auxiliary maps are $F_{221}$ and $F_{440}$. For later diagnostics we also record the pair maps
\begin{equation}\label{eq:sec5-pair-map}
\pairmap{j}(M,\chi):=\bigl(F_{220}(M,\chi),F_j(M,\chi)\bigr),
\qquad j\in\{221,440\},
\end{equation}
but they play only a secondary role below.

The Jacobian of $G_j$ is explicit. Writing $\omegahat_j(\chi)=a_j(\chi)+ib_j(\chi)$, one obtains
\begin{equation}\label{eq:sec5-jacobian-explicit}
DG_j(M,\chi)=
\begin{pmatrix}
-a_j(\chi)/M^2 & a_j'(\chi)/M \\
 b_j(\chi)/M^2 & -b_j'(\chi)/M
\end{pmatrix}.
\end{equation}
This formula already contains the full local geometry of the inverse problem. The first column measures sensitivity to the redshifted mass scaling $M^{-1}$, while the second column measures sensitivity to spin through the derivative of the dimensionless Kerr spectrum.

The following identities explain why all event-local conditioning quantities reduce to one-dimensional control in the spin variable.

\begin{proposition}[Exact atlas identities]\label{prop:sec5-exact-identities}
For each $j\in\modeSet$ and every $(M,\chi)\in(0,\infty)\times(-1,1)$,
\begin{equation}\label{eq:sec5-det-identity}
\det DG_j(M,\chi)
=
\frac{1}{M^3}\Im\!\bigl(\overline{\omegahat_j(\chi)}\,\omegahat_j'(\chi)\bigr).
\end{equation}
For each distinct pair $j\neq k$,
\begin{equation}\label{eq:sec5-gap-identity}
|\omega_j(M,\chi)-\omega_k(M,\chi)|
=
\frac{1}{M}\,|\omegahat_j(\chi)-\omegahat_k(\chi)|.
\end{equation}
Finally,
\begin{equation}\label{eq:sec5-frobenius-identity}
\|DG_j(M,\chi)\|_F^2
=
\frac{|\omegahat_j(\chi)|^2}{M^4}+
\frac{|\omegahat_j'(\chi)|^2}{M^2}.
\end{equation}
\end{proposition}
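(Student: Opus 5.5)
All three identities are direct computations from the explicit scaling \eqref{eq:sec5-complex-kerr-maps} and the Jacobian formula \eqref{eq:sec5-jacobian-explicit}, so the plan is to verify each one entry by entry. Throughout I write $\omegahat_j(\chi)=a_j(\chi)+ib_j(\chi)$ with $a_j,b_j$ real and differentiable in $\chi$, as assumed via the Kerr QNM data of Appendix~\ref{app:kerr-qnm-data}. The identities are pointwise, so no use of the box $\Kdet$ is needed beyond $M>0$ and $|\chi|<1$.

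First I will recheck \eqref{eq:sec5-jacobian-explicit} from the definition \eqref{eq:sec5-real-kerr-maps}. We have
\[
G_j(M,\chi)=\bigl(a_j(\chi)/M,\,-b_j(\chi)/M\bigr).
\]
Differentiating in $M$ gives the column $(-a_j/M^2,\ b_j/M^2)^\top$, and differentiating in $\chi$ gives $(a_j'/M,\ -b_j'/M)^\top$. Expanding the $2\times 2$ determinant then yields
\[
\det DG_j=\frac{a_jb_j'-a_j'b_j}{M^3}.
\]
Separately,
\[
\overline{\omegahat_j}\,\omegahat_j'=(a_j-ib_j)(a_j'+ib_j')=(a_ja_j'+b_jb_j')+i(a_jb_j'-b_ja_j').
\]
Its imaginary part is exactly $a_jb_j'-a_j'b_j$, which proves \eqref{eq:sec5-det-identity}. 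The one point needing care is the sign convention. The realification $\XiKerr$ flips the sign of the imaginary part, and this flip must be tracked consistently through both columns. I expect this sign bookkeeping to be the only real obstacle, since a slip would reverse the sign of the determinant.

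For \eqref{eq:sec5-gap-identity}, both frequencies are evaluated at the same $(M,\chi)$. Hence
\[
\omega_j-\omega_k=M^{-1}\bigl(\omegahat_j-\omegahat_k\bigr),
\]
and taking moduli with $M>0$ gives the claim.

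For \eqref{eq:sec5-frobenius-identity}, I will sum the squares of the four entries. The mass column contributes $(a_j^2+b_j^2)/M^4=|\omegahat_j|^2/M^4$. The spin column contributes $(a_j'^2+b_j'^2)/M^2=|\omegahat_j'|^2/M^2$. The two contributions add to give the stated formula. This completes all three identities, with the determinant sign being the only step worth double-checking.
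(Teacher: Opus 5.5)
Your proposal is correct and follows the paper's proof essentially line by line. Like the paper, you write $G_j=(a_j/M,-b_j/M)$, read off the Jacobian, match $a_jb_j'-b_ja_j'$ with $\Im\bigl(\overline{\widehat\omega_j}\,\widehat\omega_j'\bigr)$, use the common factor $M^{-1}$ for the gap identity, and sum the squared entries for the Frobenius norm.
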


\begin{proof}
From \eqref{eq:sec5-complex-kerr-maps} and \eqref{eq:sec5-real-kerr-maps},
\[
G_j(M,\chi)=\left(\frac{a_j(\chi)}{M},-\frac{b_j(\chi)}{M}\right),
\]
so the Jacobian is exactly \eqref{eq:sec5-jacobian-explicit}. Therefore
\[
\det DG_j(M,\chi)=\frac{a_j(\chi)b_j'(\chi)-b_j(\chi)a_j'(\chi)}{M^3}.
\]
On the other hand,
\[
\Im\!\bigl(\overline{\omegahat_j(\chi)}\,\omegahat_j'(\chi)\bigr)
=
\Im\!\bigl((a_j-ib_j)(a_j'+ib_j')\bigr)
=
a_jb_j'-b_ja_j',
\]
which proves \eqref{eq:sec5-det-identity}. The separation identity \eqref{eq:sec5-gap-identity} follows immediately from the common factor $M^{-1}$ in \eqref{eq:sec5-complex-kerr-maps}. Finally, the Frobenius identity \eqref{eq:sec5-frobenius-identity} is a direct calculation from \eqref{eq:sec5-jacobian-explicit}.
\end{proof}

Proposition~\ref{prop:sec5-exact-identities} is the precise reason that the primary inverse geometry can be on the compact GW250114 box from a one-dimensional Kerr input. Local nondegeneracy of $G_{220}$ is controlled by the spin function
\[
\chi\longmapsto \Im\!\bigl(\overline{\omegahat_{220}(\chi)}\,\omegahat_{220}'(\chi)\bigr),
\]
while pairwise mode isolation is controlled by the spin gaps $|\omegahat_j-\omegahat_k|$. The event-local condition atlas is therefore deterministic once the one-dimensional Kerr sequences have been built and audited.

\subsection{The dominant-mode condition atlas}

The compact event-local box
\[
\Kdet=[66.5,69.5]M_\odot\times[0.64,0.71]
\]
was chosen in Section~\ref{sec:data-conventions} to contain the public detector-frame remnant summaries. On this box we define the local dominant-mode nondegeneracy and conditioning fields
\begin{equation}\label{eq:sec5-condition-fields}
s_{220}(p):=\sigma_{\min}\bigl(DG_{220}(p)\bigr),
\qquad
\kappa_{220}(p):=\frac{\|DG_{220}(p)\|_2}{s_{220}(p)},
\qquad p\in\Kdet.
\end{equation}
The pair $(s_{220},\kappa_{220})$ is the \emph{primary condition atlas}. In words, $s_{220}$ measures distance from local degeneracy, while $\kappa_{220}$ measures anisotropy of the inverse problem before any statistical uncertainty is introduced.

The atlas is defined uniformly on the full box by the one-dimensional interpolation analysis of Appendix~\ref{app:kerr-qnm-data}.

\begin{proposition}[Uniform control of the dominant-mode atlas]\label{prop:sec5-primary-atlas}
Let
\begin{equation}\label{eq:sec5-primary-uniform-constants}
\sigprim:=\sigma_{220}^{\mathrm{cert}}>0,
\qquad
\Nprim:=\sqrt{\frac{U_{220}^2}{M_-^4}+\frac{V_{220}^2}{M_-^2}},
\qquad
\kapprim:=\frac{\Nprim}{\sigprim},
\end{equation}
where $\sigma_{220}^{\mathrm{cert}}$ is the uniform lower singular-value bound of Corollary~\ref{cor:appendixD-sigmamin}. Then every $p\in\Kdet$ satisfies
\begin{equation}\label{eq:sec5-primary-atlas-bounds}
\sigprim\le s_{220}(p),
\qquad
\|DG_{220}(p)\|_2\le \Nprim,
\qquad
\kappa_{220}(p)\le \kapprim.
\end{equation}
\end{proposition}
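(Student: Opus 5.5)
The plan is to treat the three inequalities in \eqref{eq:sec5-primary-atlas-bounds} separately. The first is taken directly from the appendix certificate, the second comes from an explicit Frobenius-norm calculation, and the third is their quotient. Throughout I use the notation of Appendix~\ref{app:kerr-qnm-data}. There $M_-=66.5M_\odot$ is the lower mass face of $\Kdet$, and $U_{220}$, $V_{220}$ are the certified uniform upper bounds of $|\omegahat_{220}(\chi)|$ and $|\omegahat_{220}'(\chi)|$ over the spin interval $[0.64,0.71]$. These bounds come from the interpolation analysis of the dimensionless Kerr sequence together with its error terms.

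\emph{Step 1: the lower singular-value bound.} By definition \eqref{eq:sec5-condition-fields}, $s_{220}(p)=\sigma_{\min}(DG_{220}(p))$. Corollary~\ref{cor:appendixD-sigmamin} is precisely the statement that this quantity is bounded below by $\sigma_{220}^{\mathrm{cert}}$ uniformly on $\Kdet$, so $\sigprim\le s_{220}(p)$ follows by citation. For orientation, the mechanism behind that corollary should be the following. By \eqref{eq:sec5-det-identity}, $\sigma_{\min}\sigma_{\max}=|\det DG_{220}|=M^{-3}|\Im(\overline{\omegahat_{220}}\,\omegahat_{220}')|$. This gives $\sigma_{\min}\ge|\det DG_{220}|/\|DG_{220}\|_F$. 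The spin-dependent Wronskian-type quantity is then bounded away from zero on $[0.64,0.71]$ by the certified interpolant, and the worst mass is $M_+=69.5M_\odot$. I will not redo this argument, only invoke it. Positivity of $\sigprim$ is part of the cited corollary.

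\emph{Step 2: the operator-norm bound.} I will use $\|DG_{220}(p)\|_2\le\|DG_{220}(p)\|_F$ together with the Frobenius identity \eqref{eq:sec5-frobenius-identity}:
\[
\|DG_{220}(M,\chi)\|_F^2=\frac{|\omegahat_{220}(\chi)|^2}{M^4}+\frac{|\omegahat_{220}'(\chi)|^2}{M^2}.
\]
Both terms decrease in $M$, so on $\Kdet$ each is maximized at $M=M_-$. Inserting the spin bounds $|\omegahat_{220}|\le U_{220}$ and $|\omegahat_{220}'|\le V_{220}$ gives $\|DG_{220}(p)\|_2\le\Nprim$. One small point needs care: the mass variable must be expressed in the same geometric units as in the definition of $\Nprim$ (for example seconds via $\tauSun$), so that the constants are used consistently.

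\emph{Step 3: the condition number.} Since $s_{220}(p)\ge\sigprim>0$, the definition gives $\kappa_{220}(p)=\|DG_{220}(p)\|_2/s_{220}(p)\le\Nprim/\sigprim=\kapprim$.

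The only substantive obstacle is Step 1, because it depends on a rigorous, error-controlled lower bound for $|\Im(\overline{\omegahat_{220}}\,\omegahat_{220}')|$ from tabulated QNM data. Here I simply rely on Corollary~\ref{cor:appendixD-sigmamin}. I also need to check that $U_{220}$ and $V_{220}$ are genuine certified suprema that include the interpolation error, not just values at the grid nodes. If the appendix gives only nodal values, I would add the interpolation remainder to them before applying Step 2.
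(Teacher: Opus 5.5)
Your proposal is correct and follows the paper's proof essentially line by line: you cite Corollary~\ref{cor:appendixD-sigmamin} for the singular-value floor, bound $\|DG_{220}\|_2$ through the Frobenius identity using $M\ge M_-$ and the spin bounds $U_{220},V_{220}$, and then divide. Your worry about $U_{220},V_{220}$ is already settled by their definition in \eqref{eq:appendixD-UV-def}, which adds the audit radii $\eta_{220},\eta_{220}'$ to the suprema of the interpolant and of its derivative over the spin interval, so the bounds hold under the same interpolation-error hypothesis that the corollary itself relies on.
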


\begin{proof}
The lower bound on $s_{220}$ is exactly Corollary~\ref{cor:appendixD-sigmamin} applied to the mode $220$. For the operator norm, Proposition~\ref{prop:sec5-exact-identities} gives
\[
\|DG_{220}(M,\chi)\|_2^2\le \|DG_{220}(M,\chi)\|_F^2
=
\frac{|\omegahat_{220}(\chi)|^2}{M^4}+\frac{|\omegahat_{220}'(\chi)|^2}{M^2}.
\]
By the definitions of $U_{220}$ and $V_{220}$ in Appendix~\ref{app:kerr-qnm-data}, one has $|\omegahat_{220}(\chi)|\le U_{220}$ and $|\omegahat_{220}'(\chi)|\le V_{220}$ on the spin interval of interest. Since $M\ge M_-$ on $\Kdet$, the displayed bound implies $\|DG_{220}(p)\|_2\le \Nprim$. Division by $s_{220}(p)\ge\sigprim$ yields $\kappa_{220}(p)\le\kapprim$.

\end{proof}

\begin{corollary}[Explicit atlas constants from the Kerr tables]\label{cor:sec5-explicit-atlas}
For the degree-$64/96$ Chebyshev audit carried out in Appendix~\ref{app:kerr-qnm-data} on the Cook--Zalutskiy tables \( \texttt{KerrQNM\_00.h5} \) and \( \texttt{KerrQNM\_01.h5} \), one may take
\[
\sigprim = 2.31\times 10^{-5},
\qquad
\Nprim = 6.94\times 10^{-3},
\qquad
\kapprim \le 3.01\times 10^{2}.
\]
On the plotted atlas itself the dominant-mode singular-value field varies only between
\[
2.49\times 10^{-5}\le s_{220}(p)\le 2.87\times 10^{-5},
\]
while the pointwise condition number varies between
\[
2.06\times 10^{2}\le \kappa_{220}(p)\le 2.53\times 10^{2}.
\]
The auxiliary forward Lipschitz constants obtained from the same reliability filter are
\[
L_{221}=7.62\times 10^{-3},
\qquad
L_{440}=1.38\times 10^{-2}.
\]
The detector-frame pairwise gap margins on \(\Kdet\) satisfy
\[
\Delta^{\mathrm{cert}}_{220,221}=2.35\times 10^{-3},
\qquad
\Delta^{\mathrm{cert}}_{220,440}=8.42\times 10^{-3},
\qquad
\Delta^{\mathrm{cert}}_{221,440}=8.93\times 10^{-3}.
\]
\end{corollary}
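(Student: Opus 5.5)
The plan is to treat the corollary as a certified numerical evaluation of the constants already defined in Proposition~\ref{prop:sec5-primary-atlas} and Proposition~\ref{prop:sec5-exact-identities}. No new analysis is needed beyond the reduction to one-dimensional spin functions.

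\textbf{Step 1: interpolants and error radius.} First I will build the degree-$64$ and degree-$96$ Chebyshev interpolants of $\omegahat_{220}$, $\omegahat_{221}$ and $\omegahat_{440}$ on the spin interval $I_\chi=[0.64,0.71]$ from the Cook--Zalutskiy tables. I will take the sup-norm discrepancy between the two degrees, for both the values and their $\chi$-derivatives, as the audited error radius. This is the ``reliability filter'' of Appendix~\ref{app:kerr-qnm-data}. Every later supremum is inflated by this radius and every infimum is deflated by it.

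\textbf{Step 2: the upper constants.} For $\Nprim$, I will bound $U_{220}=\sup_{I_\chi}|\omegahat_{220}|$ and $V_{220}=\sup_{I_\chi}|\omegahat_{220}'|$ from the interpolant plus its error radius. Substituting $M_-=66.5$ (in solar-mass units, as elsewhere in the paper) into the formula for $\Nprim$ in Proposition~\ref{prop:sec5-primary-atlas} then gives $\Nprim$. The auxiliary Lipschitz constants $L_{221}$ and $L_{440}$ come out the same way, as $\sup_{\Kdet}\|DG_j\|_F$ via the Frobenius identity \eqref{eq:sec5-frobenius-identity}.

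\textbf{Step 3: the singular-value bound.} For $\sigprim$ I will use the $2\times 2$ identity $\sigma_{\min}=|\det DG|/\|DG\|_2$ together with \eqref{eq:sec5-det-identity}. This gives
\[
s_{220}(M,\chi)\ \ge\ \frac{\min_{I_\chi}\bigl|\Im\bigl(\overline{\omegahat_{220}}\,\omegahat_{220}'\bigr)\bigr|}{M_+^3\,\Nprim},
\qquad M_+=69.5 .
\]
Alternatively, I will minimize the exact $\sigma_{\min}$ over a fine grid of $\Kdet$ and add a Lipschitz mesh correction. Either route is the content of Corollary~\ref{cor:appendixD-sigmamin}, and I will take whichever bound is sharper. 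The condition-number bound then follows by division: $6.94\times10^{-3}/2.31\times10^{-5}\approx 300.4\le 3.01\times10^{2}$.

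\textbf{Step 4: plotted ranges.} The ranges of $s_{220}$ and $\kappa_{220}$ on the plotted atlas are direct evaluations on the plotting grid. These are descriptive only and need no certification. They should be consistent with the certified bounds, which they are, since $2.31\times10^{-5}<2.49\times10^{-5}$ and $2.53\times10^{2}<3.01\times10^{2}$.

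\textbf{Step 5: gap margins.} By \eqref{eq:sec5-gap-identity}, each pairwise margin is $\Delta^{\mathrm{cert}}_{j,k}=\min_{I_\chi}|\omegahat_j-\omegahat_k|/M_+$, minus twice the interpolation radius scaled by $1/M_+$.

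\textbf{Main obstacle.} The hard part is certifying the infima over the continuum, namely the minimum of $|\Im(\overline{\omegahat}\,\omegahat')|$ and of the spectral gaps. Both must hold on all of $I_\chi$, not just at grid points. I would handle this with a grid spacing $h$ and a mesh slack of $h$ times a bound on the next derivative, which is itself obtained by differentiating the Chebyshev interpolant and adding the audited radius.

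A secondary caveat is that the tables are numerical data rather than exact values. The certification is therefore relative to the tabulated accuracy. I would state this explicitly as part of the reliability filter rather than claim exactness.
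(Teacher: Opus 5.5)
Your plan is correct and is essentially the paper's own route. The corollary is the numerical evaluation of the certification protocol of Appendix~\ref{app:kerr-qnm-data}: interpolant suprema plus main--audit radii give $U_j,V_j$ and hence $N_{220}$, $L_{221}$, $L_{440}$; Corollary~\ref{cor:appendixD-sigmamin} gives $\sigma^{\mathrm{cert}}_{220}=\Delta^{\mathrm{det}}_{220}/(M_+^3\,N_{220})$; and Proposition~\ref{prop:appendixD-separation} gives $\Delta^{\mathrm{cert}}_{jk}=\Delta^{\mathrm{sep}}_{jk}/M_+$. The differences are minor: the paper interpolates on the padded interval $[0.60,0.74]$ and doubles the main--audit drift, it deflates the determinant infimum by the product margin $\Gamma_{220}$ of Proposition~\ref{prop:appendixD-determinant} rather than by the bare radius, and it evaluates the infima on a dense grid without your mesh correction (so your version is slightly more careful there); note also that the quoted $2.31\times 10^{-5}$ is this determinant/Frobenius bound, not a grid minimum of $\sigma_{\min}$.
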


\begin{figure}[t]
\centering
\includegraphics[width=0.83\textwidth]{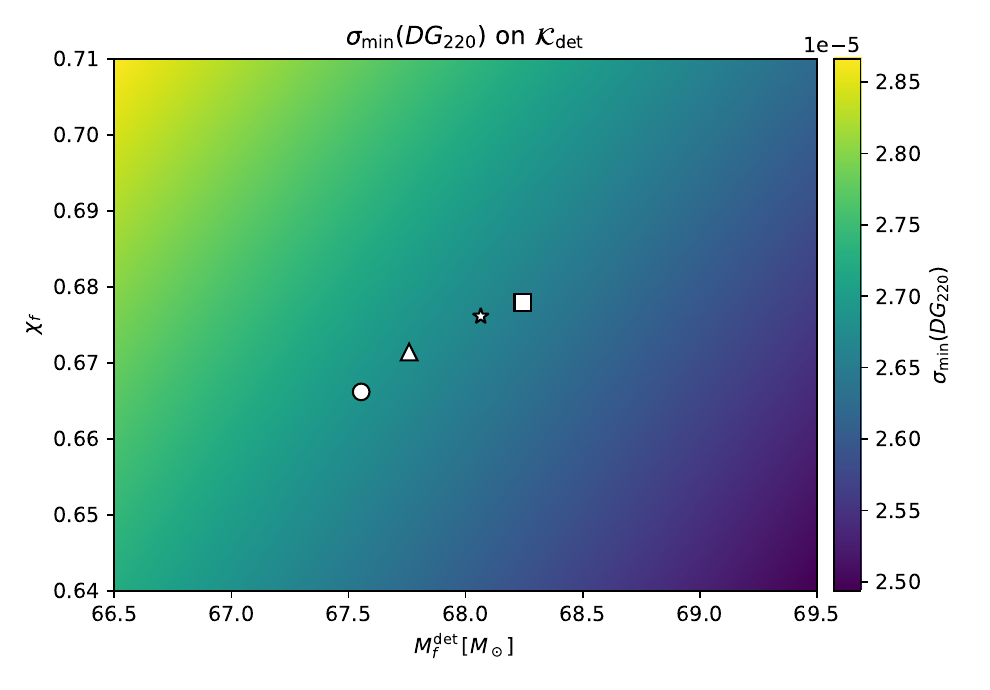}
\caption{Minimum singular-value field \(p\mapsto \sigma_{\min}(DG_{220}(p))\) on the event-local detector-frame box \(\Kdet\), computed from the Kerr QNM tables. The markers show the four public detector-frame remnant medians used later as the public comparison ensemble.}
\label{fig:sec5-sigma-atlas}
\end{figure}

\begin{figure}[t]
\centering
\includegraphics[width=0.83\textwidth]{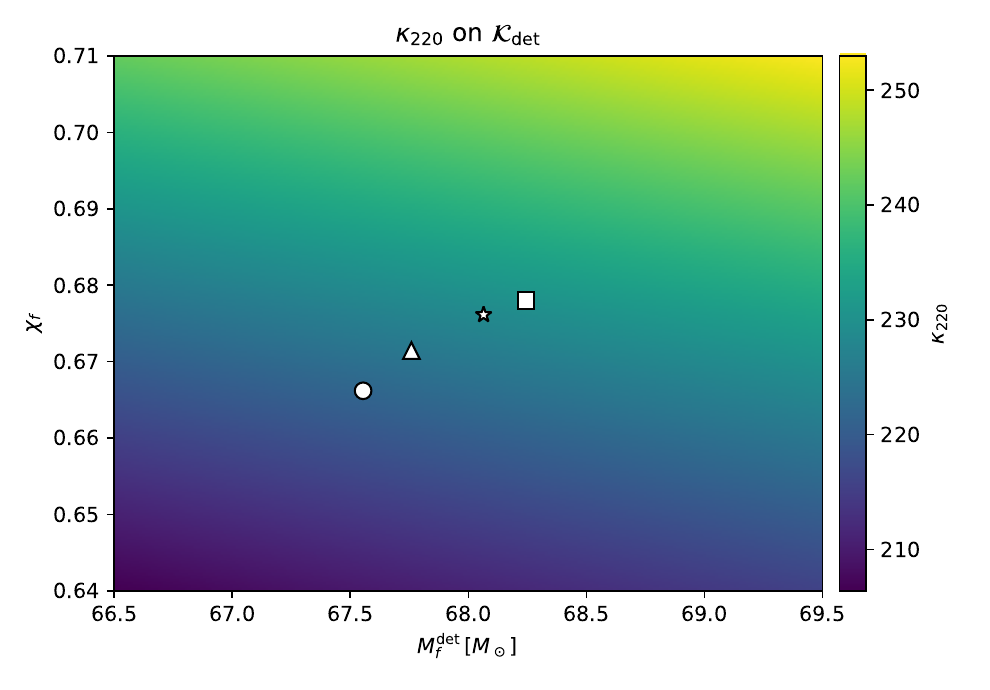}
\caption{Pointwise condition number \(\kappa_{220}(p)\) on \(\Kdet\). The same public detector-frame medians remain far from the poorly conditioned edge of the plotted box.}
\label{fig:sec5-kappa-atlas}
\end{figure}

The auxiliary maps enter only through forward transport, but it is convenient to record their Lipschitz envelopes here as part of the same atlas.

\begin{proposition}[Global auxiliary envelopes]\label{prop:sec5-aux-envelopes}
For each $j\in\{221,440\}$, define
\begin{equation}\label{eq:sec5-Lj-def}
L_j:=\sqrt{\frac{U_j^2}{M_-^4}+\frac{V_j^2}{M_-^2}}.
\end{equation}
Then
\begin{equation}\label{eq:sec5-aux-lipschitz}
|\omega_j(p_1)-\omega_j(p_2)|\le L_j\,\|p_1-p_2\|_{\mathbb R^2}
\qquad\text{for all }p_1,p_2\in\Kdet.
\end{equation}
For each distinct pair $j\neq k\in\modeSet$, if
\[
\Delta^{\mathrm{cert}}_{jk}:=\frac{\sepmargin_{jk}}{M_+},
\]
with $\sepmargin_{jk}$ as in Proposition~\ref{prop:appendixD-separation}, then
\begin{equation}\label{eq:sec5-atlas-gap}
|\omega_j(p)-\omega_k(p)|\ge \Delta^{\mathrm{cert}}_{jk}
\qquad\text{for all }p\in\Kdet.
\end{equation}
Consequently,
\begin{equation}\label{eq:sec5-atlas-isolation}
\delta_{\mathrm{iso}}(p;\{220,221,440\})
\ge
\delta_{\mathrm{iso}}^{\mathrm{cert}}
:=
\frac12\min_{\substack{j,k\in\modeSet\\ j\neq k}}\Delta^{\mathrm{cert}}_{jk}
>0
\qquad\text{for all }p\in\Kdet.
\end{equation}
\end{proposition}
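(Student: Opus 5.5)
The proof has three parts: the Lipschitz bound, the pairwise gap bound, and the isolation margin.

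\emph{Lipschitz bound.} This follows from the mean value inequality on the convex box \(\Kdet\). Fix \(j\in\{221,440\}\) and view \(\omega_j=F_j\) as the realified map \(G_j\); the realification \eqref{eq:sec5-realification} is an isometry \(\mathbb C\to\mathbb R^2\), so \(|\omega_j(p_1)-\omega_j(p_2)|=\|G_j(p_1)-G_j(p_2)\|\). Since \(\Kdet\) is a rectangle, the segment \(p_t=p_2+t(p_1-p_2)\) stays in \(\Kdet\). Then
\[
\|G_j(p_1)-G_j(p_2)\|\le \sup_{p\in\Kdet}\|DG_j(p)\|_2\,\|p_1-p_2\|.
\]
To bound the operator norm I bound it by the Frobenius norm and use \eqref{eq:sec5-frobenius-identity}. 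With \(|\omegahat_j(\chi)|\le U_j\) and \(|\omegahat_j'(\chi)|\le V_j\) on the spin interval (the definitions of \(U_j,V_j\) in Appendix~\ref{app:kerr-qnm-data}, exactly as used for \(220\) in Proposition~\ref{prop:sec5-primary-atlas}), and \(M\ge M_-\), this gives \(\|DG_j\|_2\le L_j\).

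One point needs care here. The Euclidean norm on \((M,\chi)\) mixes units, but the constant is consistent with the Jacobian columns as written, so nothing further is needed.

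\emph{Gap bound.} Use the scaling identity \eqref{eq:sec5-gap-identity}, which gives \(|\omega_j(p)-\omega_k(p)|=M^{-1}|\omegahat_j(\chi)-\omegahat_k(\chi)|\). Proposition~\ref{prop:appendixD-separation} supplies \(|\omegahat_j(\chi)-\omegahat_k(\chi)|\ge\sepmargin_{jk}\) on the spin interval of \(\Kdet\). Combining this with \(M\le M_+\) yields \eqref{eq:sec5-atlas-gap}.

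\emph{Isolation margin.} I will use the definition of \(\delta_{\mathrm{iso}}\) from Appendix~\ref{app:assumption-ledger}. I assume it is half the minimal pairwise distance between the frequencies \(\omega_j(p)\), \(j\in\mathcal M\); this is the normalization that makes the nearest-neighbour rule of Proposition~\ref{prop:sec4-static-labels} work. Taking the minimum over the three pairs of the pointwise bound \eqref{eq:sec5-atlas-gap} then gives \eqref{eq:sec5-atlas-isolation}. Positivity follows because each \(\sepmargin_{jk}>0\) and there are finitely many pairs.

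The main obstacle is making sure the appendix definitions actually match these uses. The definition of \(\delta_{\mathrm{iso}}\) might carry a different constant than the factor \(\tfrac12\). Also, \(U_j,V_j,\sepmargin_{jk}\) must be certified uniform bounds on the whole spin interval \([0.64,0.71]\), not just at the Chebyshev nodes; if they are node values, an interpolation-error term from Appendix~\ref{app:kerr-qnm-data} must be absorbed. I take these as given from the appendix, as the paper does for \(220\).
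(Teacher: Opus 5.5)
Your proposal is correct and follows the paper's route: the paper's proof cites Proposition~\ref{prop:appendixH-aux-lipschitz}, which bounds $\|DG_j\|_2$ by the Frobenius norm using $U_j,V_j$ and $M\ge M_-$ and then integrates along a segment in the convex box. It also cites Proposition~\ref{prop:appendixD-separation}, which uses the exact $M^{-1}$ scaling together with $M\le M_+$, and then takes half the minimal gap, exactly as you do.

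Your worries about the appendix definitions are already resolved there. In \eqref{eq:appendixD-UV-def}, $U_j,V_j$ are suprema over the whole inner spin interval with the audit radii $\eta_j,\eta_j'$ added. In \eqref{eq:appendixD-sepmargin-def}, $\sepmargin_{jk}$ already subtracts $\eta_j+\eta_k$. Finally, Definition~\ref{def:isolation-margin} defines $\delta_{\mathrm{iso}}$ with exactly the factor $\tfrac12$ you assumed.
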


\begin{proof}
The auxiliary Lipschitz estimate is Proposition~\ref{prop:appendixH-aux-lipschitz}. The gap bound is Proposition~\ref{prop:appendixD-separation}. The isolation estimate is obtained by taking half the minimum pairwise gap.
\end{proof}

The condition atlas is recorded graphically. Figure~\ref{fig:sec5-sigma-atlas} shows the lower-singular-value field of the dominant map on the full detector-frame box, while Figure~\ref{fig:sec5-kappa-atlas} shows the corresponding condition-number field. The four public detector-frame remnant medians are overplotted in both panels. They all sit comfortably inside the charted part of $\Kdet$, well away from the locus where the inverse problem becomes poorly conditioned.

\subsection{The primary inverse atlas}

The pointwise atlas of Proposition~\ref{prop:sec5-primary-atlas} must be upgraded to local charts on which the dominant mode is injective with a uniform inverse constant. Compactness provides that upgrade.

\begin{lemma}[Uniform chart radius for the dominant mode]\label{lem:sec5-chart-radius}
There exists $\rhoinv>0$ such that for every $p\in\Kdet$ and every pair
\[
q,r\in B_{\mathbb R^2}(p,\rhoinv)\cap\Kdet,
\]
one has
\begin{equation}\label{eq:sec5-jacobian-oscillation}
\|DG_{220}(q)-DG_{220}(r)\|_2\le \frac{\sigprim}{2}.
\end{equation}
\end{lemma}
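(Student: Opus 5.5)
The plan is to derive the lemma from uniform continuity of \(DG_{220}\) on the compact box \(\Kdet\), and, since the later trust inequalities need an explicit radius, to make that continuity quantitative through a Lipschitz bound on the Jacobian. First, on \(\Kdet=[M_-,M_+]\times[\chi_-,\chi_+]\) with \(M_->0\) and \(\chi_+<1\), the entries of \eqref{eq:sec5-jacobian-explicit} are built from \(a_{220},b_{220},a_{220}',b_{220}'\) and powers of \(M^{-1}\). These functions are \(C^1\) (indeed real-analytic) on the spin interval, because the Kerr QNM sequence is smooth away from extremality. The Chebyshev interpolants of Appendix~\ref{app:kerr-qnm-data} also supply explicit bounds \(U_{220},V_{220}\) on \(|\omegahat_{220}|\) and \(|\omegahat_{220}'|\), and the same audit bounds \(W_{220}:=\sup|\omegahat_{220}''|\). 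So \(DG_{220}\) is continuous on a compact set, hence uniformly continuous. Taking \(\epsilon=\sigprim/2\) in the definition of uniform continuity already yields some \(\rhoinv>0\) for which \eqref{eq:sec5-jacobian-oscillation} holds whenever \(\|q-r\|<\rhoinv\). Since \(q,r\in B(p,\rhoinv)\) gives \(\|q-r\|<2\rhoinv\), I would replace \(\rhoinv\) by half the modulus radius.

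For an explicit constant, I would bound the second derivatives of \(G_{220}\) on \(\Kdet\). Differentiating \eqref{eq:sec5-jacobian-explicit} gives the entries \(2a/M^3\), \(-a'/M^2\), \(a''/M\), and their analogues for \(b\). Applying the Frobenius norm columnwise, as in the proof of Proposition~\ref{prop:sec5-primary-atlas}, gives
\[
\|DG_{220}(q)-DG_{220}(r)\|_2\le L_D\|q-r\|,
\qquad
L_D:=\sqrt{\frac{4U_{220}^2}{M_-^6}+\frac{2V_{220}^2}{M_-^4}+\frac{W_{220}^2}{M_-^2}},
\]
or a comparable expression. This uses the mean value inequality along the segment \([q,r]\), which lies in \(\Kdet\) because the box is convex. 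Since \(\|q-r\|\le 2\rhoinv\), it then suffices to choose \(\rhoinv:=\sigprim/(4L_D)\).

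The main obstacle is justifying the second-derivative bound \(W_{220}\) from tabulated data rather than from an analytic formula. I would take it from the differentiated degree-\(64/96\) Chebyshev interpolant, together with its interpolation-error bound, which Appendix~\ref{app:kerr-qnm-data} is assumed to audit as it does for \(U_{220}\) and \(V_{220}\). If that audit is unavailable, the qualitative compactness argument above still establishes the existence statement exactly as the lemma is phrased.
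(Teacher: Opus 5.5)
Your proposal is correct and is essentially the paper's argument: the paper also obtains the radius from continuity of $DG_{220}$ on the compact box, phrased as a finite subcover by balls $B(p,r_p)$ on whose doubles $B(p,2r_p)$ the Jacobian oscillates by at most $\sigma_{220}^{\mathrm{cert}}/4$, whereas you use the uniform-continuity modulus directly and halve it to absorb $\|q-r\|<2\rho_{220}$. Your quantitative variant $\rho_{220}=\sigma_{220}^{\mathrm{cert}}/(4L_D)$ is also sound, since $L_D$ is the Frobenius bound on $(\partial_M DG_{220},\partial_\chi DG_{220})$ over the box, and it mirrors how the paper says its explicit value $\rho_{220}=2.47\times 10^{-4}$ is obtained; as you already note, Appendix~\ref{app:kerr-qnm-data} audits only value and first-derivative radii, so a rigorous $W_{220}$ would require an additional second-derivative audit.
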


\begin{proof}
The map $DG_{220}$ is continuous on an open neighborhood of the compact set $\Kdet$, hence it is uniformly continuous there. For each $p\in\Kdet$, continuity provides a radius $r_p>0$ such that
\[
\|DG_{220}(x)-DG_{220}(p)\|_2\le \frac{\sigprim}{4}
\qquad\text{for every }x\in \overline B_{\mathbb R^2}(p,2r_p)\cap\Kdet.
\]
Therefore, whenever $x$ and $y$ lie in the same set,
\[
\|DG_{220}(x)-DG_{220}(y)\|_2
\le
\|DG_{220}(x)-DG_{220}(p)\|_2+
\|DG_{220}(y)-DG_{220}(p)\|_2
\le \frac{\sigprim}{2}.
\]
The balls $B_{\mathbb R^2}(p,r_p)$ cover $\Kdet$. By compactness, a finite subcover exists. Taking $\rhoinv$ to be the minimum of the corresponding radii yields the claim.
\end{proof}

\begin{theorem}[Primary inverse atlas]\label{thm:sec5-primary-inverse-atlas}
Fix $p\in\Kdet$ and define the local primary chart and its image by
\begin{equation}\label{eq:sec5-primary-chart-def}
\Uchart{p}:=B_{\mathbb R^2}(p,\rhoinv)\cap\Kdet,
\qquad
\Vchart{p}:=F_{220}(\Uchart{p})\subset\mathbb C.
\end{equation}
Then $F_{220}$ is injective on $\Uchart{p}$ and, for every $q,r\in\Uchart{p}$,
\begin{equation}\label{eq:sec5-primary-bilipschitz}
|F_{220}(q)-F_{220}(r)|
\ge
\frac{\sigprim}{2}\,\|q-r\|_{\mathbb R^2}.
\end{equation}
Consequently the inverse on the image satisfies
\begin{equation}\label{eq:sec5-primary-inverse-lipschitz}
\|F_{220}^{-1}(z_1)-F_{220}^{-1}(z_2)\|_{\mathbb R^2}
\le
L_{220}|z_1-z_2|,
\qquad z_1,z_2\in\Vchart{p},
\end{equation}
with
\begin{equation}\label{eq:sec5-primary-L220}
L_{220}:=\frac{2}{\sigprim}.
\end{equation}
The collection
\begin{equation}\label{eq:sec5-primary-atlas-def}
\Kinv:=\Bigl\{\bigl(\Uchart{p},\Vchart{p},F_{220}|_{\Uchart{p}}^{-1}\bigr):p\in\Kdet\Bigr\}
\end{equation}
is the primary inverse atlas on the event-local box.
\end{theorem}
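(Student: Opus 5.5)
The plan is to integrate the Jacobian along segments and compare with the Jacobian at one endpoint; the uniform singular-value bound of Proposition~\ref{prop:sec5-primary-atlas} and the oscillation bound of Lemma~\ref{lem:sec5-chart-radius} then do all the work. Identify $F_{220}$ with $G_{220}$ through the realification $\Xi$ of \eqref{eq:sec5-realification}. This map is an isometry $\mathbb C\to\mathbb R^2$ (the sign flip on the imaginary part preserves the Euclidean norm). Hence $|F_{220}(q)-F_{220}(r)|=\|G_{220}(q)-G_{220}(r)\|_{\mathbb R^2}$, and it suffices to prove the lower bound for $G_{220}$.

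Fix $p\in\Kdet$ and $q,r\in\Uchart{p}$. The first step is to check that the segment $[r,q]$ stays in the region where Lemma~\ref{lem:sec5-chart-radius} applies. $\Kdet$ is a rectangle, hence convex, and the ball $B(p,\rhoinv)$ is convex. So $\Uchart{p}$ is convex, and the segment $x(t)=r+t(q-r)$, $t\in[0,1]$, lies in $\Uchart{p}$. Then, by the fundamental theorem of calculus ($G_{220}$ is $C^1$ on a neighbourhood of $\Kdet$),
\[
G_{220}(q)-G_{220}(r)=DG_{220}(r)(q-r)+\int_0^1\bigl(DG_{220}(x(t))-DG_{220}(r)\bigr)(q-r)\,dt .
\]
Next estimate the two terms separately:
\begin{itemize}
\item The first term has norm at least $s_{220}(r)\|q-r\|\ge\sigprim\|q-r\|$ by Proposition~\ref{prop:sec5-primary-atlas}.
\item The integrand has operator norm at most $\sigprim/2$ by \eqref{eq:sec5-jacobian-oscillation}, since both $x(t)$ and $r$ lie in $\Uchart{p}$.
\end{itemize}
The reverse triangle inequality therefore gives \eqref{eq:sec5-primary-bilipschitz} with constant $\sigprim-\sigprim/2=\sigprim/2$.

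Injectivity is then immediate: if $q\neq r$, the right-hand side of \eqref{eq:sec5-primary-bilipschitz} is positive. Consequently $F_{220}|_{\Uchart{p}}$ is a bijection onto $\Vchart{p}$. For $z_1,z_2\in\Vchart{p}$, set $q=F_{220}^{-1}(z_1)$ and $r=F_{220}^{-1}(z_2)$ in \eqref{eq:sec5-primary-bilipschitz}. Rearranging gives \eqref{eq:sec5-primary-inverse-lipschitz} with $L_{220}=2/\sigprim$. The atlas statement \eqref{eq:sec5-primary-atlas-def} is then just a definition, given that each chart has been shown to be a well-defined injective restriction.

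The only delicate point is the convexity and segment step. Lemma~\ref{lem:sec5-chart-radius} controls Jacobian differences only for pairs of points inside $B(p,\rhoinv)\cap\Kdet$, so the segment must remain in that set. Convexity of the box and the ball secures this, and I would state it explicitly. A second, minor point is that $\sigprim$ must be a lower bound on $\sigma_{\min}$ at every point of $\Kdet$, including the endpoint $r$ on the boundary faces. Proposition~\ref{prop:sec5-primary-atlas} provides exactly this, because it holds on all of $\Kdet$.
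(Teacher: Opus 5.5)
Your proof is correct and follows essentially the paper's argument: convexity of $U_{220}(p)$, the fundamental theorem of calculus along the segment, the oscillation bound of Lemma~\ref{lem:sec5-chart-radius}, and the uniform singular-value floor of Proposition~\ref{prop:sec5-primary-atlas}. The only difference is cosmetic. You freeze the Jacobian at the endpoint $r$, whereas the paper freezes it at the chart centre $p$; this works equally well because both points lie in $U_{220}(p)$ and in $\mathcal K_{\mathrm{det}}$.
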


\begin{proof}
Fix $p\in\Kdet$ and let $q,r\in\Uchart{p}$. Because both the Euclidean ball and the rectangle $\Kdet$ are convex, the segment
\[
\gamma(t)=r+t(q-r),\qquad 0\le t\le 1,
\]
remains in $\Uchart{p}$. By the fundamental theorem of calculus,
\[
G_{220}(q)-G_{220}(r)
=
\int_0^1 DG_{220}(\gamma(t))(q-r)\,dt.
\]
Add and subtract the constant matrix $DG_{220}(p)$:
\[
G_{220}(q)-G_{220}(r)
=
DG_{220}(p)(q-r)
+
\int_0^1\bigl(DG_{220}(\gamma(t))-DG_{220}(p)\bigr)(q-r)\,dt.
\]
Taking norms and using the reverse triangle inequality gives
\begin{align*}
|G_{220}(q)-G_{220}(r)|
&\ge |DG_{220}(p)(q-r)|
-\int_0^1\|DG_{220}(\gamma(t))-DG_{220}(p)\|_2\,dt\,\|q-r\|_{\mathbb R^2} \\
&\ge s_{220}(p)\,\|q-r\|_{\mathbb R^2}-\frac{\sigprim}{2}\,\|q-r\|_{\mathbb R^2},
\end{align*}
where Lemma~\ref{lem:sec5-chart-radius} was used in the last step. Proposition~\ref{prop:sec5-primary-atlas} now implies $s_{220}(p)\ge\sigprim$, hence
\[
|G_{220}(q)-G_{220}(r)|\ge \frac{\sigprim}{2}\,\|q-r\|_{\mathbb R^2}.
\]
Since $\XiKerr$ is an isometry, the same bound holds for $F_{220}$, proving \eqref{eq:sec5-primary-bilipschitz}. Injectivity is immediate. If $z_i=F_{220}(q_i)$ with $q_i\in\Uchart{p}$, then \eqref{eq:sec5-primary-bilipschitz} gives
\[
\|q_1-q_2\|_{\mathbb R^2}
\le \frac{2}{\sigprim}|z_1-z_2|
=L_{220}|z_1-z_2|,
\]
which is \eqref{eq:sec5-primary-inverse-lipschitz}.

\end{proof}

For the Kerr tables one can make the local chart size explicit. Appendix~\ref{app:primary-inversion} records a global Jacobian-Lipschitz bound for \(DG_{220}\) on \(\Kdet\), and the corresponding explicit primary chart radius is
\[
\rho_{220}=2.47\times 10^{-4}.
\]
Compared with the public detector-frame comparison ensemble, this radius is comfortably interior. The four public detector-frame remnant medians all lie at least \(2.62\times 10^{-2}\) away from the boundary of \(\Kdet\), so the event-local public comparison box is larger than the local chart radius by more than two orders of magnitude. The compact box therefore keeps the public remnant neighborhood explicit while the local inverse itself remains genuinely interior.

Theorem~\ref{thm:sec5-primary-inverse-atlas} is the geometric core of the remnant stage. Once a finite-window extractor has produced a dominant-mode frequency $\widehat\omega_{220}$ lying in one of the local images $\Vchart{p}$, the corresponding remnant estimate
\begin{equation}\label{eq:sec5-primary-estimate}
\widehat p:=F_{220}^{-1}(\widehat\omega_{220})
\end{equation}
obeys the deterministic parameter error bound
\begin{equation}\label{eq:sec5-primary-parameter-error}
\|\widehat p-p\|_{\mathbb R^2}
\le
L_{220}|\widehat\omega_{220}-\omega_{220}(p)|.
\end{equation}
This proposition carries frequency extraction error into remnant-parameter error.

For the event-level comparison to public IMR summaries we do not use the continuum atlas directly. Instead we center a finite public subatlas at the four public detector-frame medians introduced in Section~\ref{sec:data-conventions}. If $\palpha=(M_{\alpha}^{\mathrm{det}},\chi_{\alpha})\in\Kdet$ denotes the detector-frame median associated with $\alpha\in\Ppub$, we set
\begin{equation}\label{eq:sec5-public-subatlas}
\Kpub:=\Bigl\{\bigl(\Uchart{\palpha},\Vchart{\palpha},F_{220}|_{\Uchart{\palpha}}^{-1}\bigr):\alpha\in\Ppub\Bigr\}.
\end{equation}
The finite set $\Kpub$ keeps the dominant-mode inversion tied to the public event-local neighborhood.

\subsection{Auxiliary consistency tubes}

The auxiliary modes $221$ and $440$ are not used to redefine the remnant. They are used to ask whether the remnant inferred from $220$ predicts a common Kerr spectrum. Their geometry is therefore forward rather than inverse.

\begin{theorem}[Primary-to-auxiliary error propagation]\label{thm:sec5-primary-auxiliary}
Fix a primary chart $\Uchart{p}\in\Kinv$ and suppose that the extracted dominant-mode frequency lies in the corresponding image $\Vchart{p}$. Let
\[
\widehat p:=F_{220}^{-1}(\widehat\omega_{220})\in\Uchart{p}
\]
be the primary remnant estimate. Assume that there exists a point $p\in\Uchart{p}$ such that
\begin{equation}\label{eq:sec5-frequency-budgets}
|\widehat\omega_{220}-\omega_{220}(p)|\le\varepsilon_{220},
\qquad
|\widehat\omega_j-\omega_j(p)|\le\varepsilon_j,
\qquad j\in\{221,440\}.
\end{equation}
Then
\begin{equation}\label{eq:sec5-primary-error-estimate}
\|\widehat p-p\|_{\mathbb R^2}\le L_{220}\varepsilon_{220},
\end{equation}
and, for each auxiliary mode $j\in\{221,440\}$, the residual
\begin{equation}\label{eq:sec5-aux-residual}
\Raux{j}(\widehat p,\widehat\omega_j):=|\widehat\omega_j-\omega_j(\widehat p)|
\end{equation}
satisfies
\begin{equation}\label{eq:sec5-aux-residual-bound}
\Raux{j}(\widehat p,\widehat\omega_j)
\le
\tauaux{j}(\varepsilon_{220},\varepsilon_j)
:=
\varepsilon_j+L_jL_{220}\varepsilon_{220}.
\end{equation}
Equivalently,
\begin{equation}\label{eq:sec5-aux-disk-membership}
\widehat\omega_j\in \overline D\bigl(\omega_j(\widehat p),\tauaux{j}(\varepsilon_{220},\varepsilon_j)\bigr),
\qquad j\in\{221,440\}.
\end{equation}
\end{theorem}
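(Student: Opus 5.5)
The plan is a two-step transport argument: the inverse Lipschitz bound of Theorem~\ref{thm:sec5-primary-inverse-atlas} gives the primary estimate, and the forward Lipschitz envelopes of Proposition~\ref{prop:sec5-aux-envelopes} carry it to the auxiliary modes.

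\emph{Primary error.} By hypothesis $\widehat\omega_{220}\in\Vchart{p}$, so $\widehat p=F_{220}^{-1}(\widehat\omega_{220})$ is a well-defined point of $\Uchart{p}$. The comparison point $p$ also lies in $\Uchart{p}$, so $\omega_{220}(p)=F_{220}(p)\in\Vchart{p}$. Apply the inverse Lipschitz estimate \eqref{eq:sec5-primary-inverse-lipschitz} with $z_1=\widehat\omega_{220}$ and $z_2=\omega_{220}(p)$. Using $F_{220}^{-1}(F_{220}(p))=p$, which holds by injectivity on the chart, this gives
\[
\|\widehat p-p\|_{\mathbb R^2}\le L_{220}|\widehat\omega_{220}-\omega_{220}(p)|\le L_{220}\varepsilon_{220},
\]
which is \eqref{eq:sec5-primary-error-estimate}. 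This is exactly \eqref{eq:sec5-primary-parameter-error} combined with the first budget in \eqref{eq:sec5-frequency-budgets}.

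\emph{Auxiliary residual.} For $j\in\{221,440\}$, insert $\omega_j(p)$ and use the triangle inequality:
\[
\Raux{j}(\widehat p,\widehat\omega_j)\le|\widehat\omega_j-\omega_j(p)|+|\omega_j(p)-\omega_j(\widehat p)|.
\]
The first term is at most $\varepsilon_j$ by \eqref{eq:sec5-frequency-budgets}. Since $\Uchart{p}\subset\Kdet$, both $p$ and $\widehat p$ lie in $\Kdet$, so the forward Lipschitz bound \eqref{eq:sec5-aux-lipschitz} bounds the second term by $L_j\|p-\widehat p\|_{\mathbb R^2}$. By the primary step this is at most $L_jL_{220}\varepsilon_{220}$. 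Adding the two terms gives \eqref{eq:sec5-aux-residual-bound}, and \eqref{eq:sec5-aux-disk-membership} is the same statement read as membership in the closed disk centered at $\omega_j(\widehat p)$.

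\emph{Main point to check.} There is no real obstacle; the only care needed is domain bookkeeping. The inverse Lipschitz bound requires both arguments to lie in the single image $\Vchart{p}$. This is why the hypothesis that $p\in\Uchart{p}$ is essential, rather than merely $p\in\Kdet$. The auxiliary Lipschitz constants $L_j$ are global on $\Kdet$, so that step needs no chart restriction.
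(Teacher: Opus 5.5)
Your proposal is correct and is essentially the paper's proof. The primary bound is the inverse-Lipschitz estimate \eqref{eq:sec5-primary-parameter-error} combined with the first budget, and the auxiliary bound comes from inserting $\omega_j(p)$ and applying the forward Lipschitz envelope of Proposition~\ref{prop:sec5-aux-envelopes}; your remark that $p\in\Uchart{p}$ is needed so that $\omega_{220}(p)$ lies in the same image $\Vchart{p}$ is correct bookkeeping that the paper leaves implicit.
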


\begin{proof}
The primary parameter estimate \eqref{eq:sec5-primary-error-estimate} is just \eqref{eq:sec5-primary-parameter-error}. For the auxiliary residual, insert and subtract the true Kerr value at $p$:
\[
\Raux{j}(\widehat p,\widehat\omega_j)
\le
|\widehat\omega_j-\omega_j(p)|+|\omega_j(p)-\omega_j(\widehat p)|.
\]
The first term is bounded by $\varepsilon_j$ from \eqref{eq:sec5-frequency-budgets}. The second term is controlled by Proposition~\ref{prop:sec5-aux-envelopes}:
\[
|\omega_j(p)-\omega_j(\widehat p)|\le L_j\|p-\widehat p\|_{\mathbb R^2}\le L_jL_{220}\varepsilon_{220}.
\]
Combining the two inequalities proves \eqref{eq:sec5-aux-residual-bound}, and the disk statement is the corresponding reformulation.
\end{proof}

Theorem~\ref{thm:sec5-primary-auxiliary} is the precise mathematical content of auxiliary consistency. The dominant-mode inversion yields a moving Kerr prediction for each auxiliary mode together with a deterministic thickness. Even if an auxiliary frequency were measured exactly, the dominant-mode uncertainty would still widen its Kerr prediction by the amount $L_jL_{220}\varepsilon_{220}$.

The contrapositive is the event-level failure certificate.

\begin{proposition}[Common-remnant failure certificate]\label{prop:sec5-failure-certificate}
Fix a primary chart $\Uchart{p}\in\Kinv$ and a primary estimate $\widehat p\in\Uchart{p}$. Let $j\in\{221,440\}$ and suppose that
\begin{equation}\label{eq:sec5-failure-hypothesis}
\Raux{j}(\widehat p,\widehat\omega_j)>	auaux{j}(\varepsilon_{220},\varepsilon_j).
\end{equation}
Then there is no point $q\in\Uchart{p}$ such that
\[
|\omega_{220}(q)-\widehat\omega_{220}|\le\varepsilon_{220}
\qquad\text{and}\qquad
|\omega_j(q)-\widehat\omega_j|\le\varepsilon_j.
\]
\end{proposition}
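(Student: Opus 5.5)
The plan is to argue by contraposition from Theorem~\ref{thm:sec5-primary-auxiliary}. Suppose, for contradiction, that some $q\in\Uchart{p}$ satisfies both $|\omega_{220}(q)-\widehat\omega_{220}|\le\varepsilon_{220}$ and $|\omega_j(q)-\widehat\omega_j|\le\varepsilon_j$. This is exactly the budget hypothesis \eqref{eq:sec5-frequency-budgets} with the reference point taken to be $q$, at least for the mode $j$ under consideration. The budget for the other auxiliary mode plays no role, because the argument for fixed $j$ uses only the $220$ and $j$ budgets.

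The steps run as follows.

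First, $\widehat p$ and $q$ both lie in $\Uchart{p}$, and $F_{220}(\widehat p)=\widehat\omega_{220}$ by definition of the primary estimate. Hence the bi-Lipschitz bound \eqref{eq:sec5-primary-bilipschitz} of Theorem~\ref{thm:sec5-primary-inverse-atlas}, in its inverse form \eqref{eq:sec5-primary-inverse-lipschitz}, gives
\[
\|\widehat p-q\|_{\mathbb R^2}\le L_{220}\,|\widehat\omega_{220}-\omega_{220}(q)|\le L_{220}\varepsilon_{220}.
\]

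Second, insert $\omega_j(q)$ and use the triangle inequality:
\[
\Raux{j}(\widehat p,\widehat\omega_j)\le|\widehat\omega_j-\omega_j(q)|+|\omega_j(q)-\omega_j(\widehat p)|.
\]
The first term is at most $\varepsilon_j$. For the second term, the global auxiliary Lipschitz envelope \eqref{eq:sec5-aux-lipschitz} of Proposition~\ref{prop:sec5-aux-envelopes} applies because both points lie in $\Kdet$, giving a bound of $L_j\|q-\widehat p\|_{\mathbb R^2}\le L_jL_{220}\varepsilon_{220}$. Adding the two pieces yields $\Raux{j}(\widehat p,\widehat\omega_j)\le\tauaux{j}(\varepsilon_{220},\varepsilon_j)$, which contradicts \eqref{eq:sec5-failure-hypothesis}.

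There is no real obstacle here. The only point needing care is that the inverse Lipschitz bound is applied to the two points $\widehat p$ and $q$ inside the same chart $\Uchart{p}$, rather than via the image set $\Vchart{p}$. This requires $\widehat p\in\Uchart{p}$, which is given in the statement, and $q\in\Uchart{p}$, which is part of the assumed witness. The inequality \eqref{eq:sec5-primary-bilipschitz} is stated directly for pairs in $\Uchart{p}$, so it can be applied without reference to membership of $\widehat\omega_{220}$ in $\Vchart{p}$.
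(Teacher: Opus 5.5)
Your proof is correct and is essentially the paper's argument. The main-text proof simply invokes Theorem~\ref{thm:sec5-primary-auxiliary} with $q$ in place of $p$, and your two steps (the inverse-Lipschitz bound on the chart, then the triangle inequality with the auxiliary envelope $L_j$) are that theorem's proof written out; they match Proposition~\ref{prop:appendixH-failure-certificate} in Appendix~\ref{app:primary-inversion}. Writing it out has one small advantage, which you already note: it needs only the $220$ and $j$ budgets, whereas the theorem formally assumes budgets for both auxiliary modes.
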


\begin{proof}
If such a point $q$ existed, then Theorem~\ref{thm:sec5-primary-auxiliary} with $q$ in place of $p$ would imply
\[
\Raux{j}(\widehat p,\widehat\omega_j)\le\tauaux{j}(\varepsilon_{220},\varepsilon_j),
\]
contradicting \eqref{eq:sec5-failure-hypothesis}.
\end{proof}

This proposition explains why the auxiliary modes are informative. A large residual is a deterministic certificate that, on the declared primary chart and within the declared budgets, no single Kerr remnant accounts for both the dominant mode and the tested auxiliary mode.

This motivates the joint auxiliary tube.

\begin{definition}[Auxiliary consistency tube]\label{def:sec5-consistency-tube}
For a primary estimate $\widehat p\in\Kdet$ and a frequency-error decomposition
\[
\varepsilon=(\varepsilon_{220},\varepsilon_{221},\varepsilon_{440}),
\]
define
\begin{equation}\label{eq:sec5-consistency-tube}
\Caux(\widehat p;\varepsilon)
:=
\overline D\bigl(\omega_{221}(\widehat p),\tauaux{221}(\varepsilon_{220},\varepsilon_{221})\bigr)
\times
\overline D\bigl(\omega_{440}(\widehat p),\tauaux{440}(\varepsilon_{220},\varepsilon_{440})\bigr)
\subset\mathbb C^2.
\end{equation}
We say that the extracted auxiliary pair $(\widehat\omega_{221},\widehat\omega_{440})$ is common-remnant consistent with the primary estimate $\widehat p$ if it lies in $\Caux(\widehat p;\varepsilon)$.
\end{definition}

\subsection{Secondary pair maps}

The main remnant estimator is the dominant-mode inverse. The pair maps remain mathematically relevant because the first coordinate of $\pairmap{j}$ already determines the remnant locally on every primary chart, so the second coordinate supplies redundancy before it supplies identifiability.

\begin{proposition}[Exact pair inversion agrees with primary inversion]\label{prop:sec5-pair-agreement}
Fix $j\in\{221,440\}$ and a primary chart $\Uchart{p}\in\Kinv$. Then $\pairmap{j}$ is injective on $\Uchart{p}$. If an exact pair datum
\[
z=\pairmap{j}(q)=\bigl(\omega_{220}(q),\omega_j(q)\bigr)
\qquad\text{with }q\in\Uchart{p}
\]
is given, then
\begin{equation}\label{eq:sec5-pair-agreement}
\pairmap{j}^{-1}(z)=F_{220}^{-1}(\pi_1 z)=q,
\end{equation}
where $\pi_1$ denotes projection onto the first coordinate.
\end{proposition}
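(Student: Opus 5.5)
The statement follows almost immediately from Theorem~\ref{thm:sec5-primary-inverse-atlas}, so the plan is to reduce both claims to injectivity of the first coordinate of \(\pairmap{j}\) on \(\Uchart{p}\).

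\emph{Injectivity.} Suppose \(q_1,q_2\in\Uchart{p}\) satisfy \(\pairmap{j}(q_1)=\pairmap{j}(q_2)\). Comparing first coordinates via \eqref{eq:sec5-pair-map} gives \(F_{220}(q_1)=F_{220}(q_2)\). The lower bound \eqref{eq:sec5-primary-bilipschitz} then yields
\[
0=|F_{220}(q_1)-F_{220}(q_2)|\ge \frac{\sigprim}{2}\|q_1-q_2\|_{\mathbb R^2}.
\]
Since \(\sigprim>0\) by Proposition~\ref{prop:sec5-primary-atlas}, this forces \(q_1=q_2\). The second coordinate \(F_j\) is never used, which is exactly the redundancy remark that precedes the proposition. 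One can also phrase this structurally: \(\pi_1\circ\pairmap{j}=F_{220}\), and if a composition is injective then its first factor is injective.

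\emph{Identity \eqref{eq:sec5-pair-agreement}.} Let \(z=\pairmap{j}(q)\) with \(q\in\Uchart{p}\). By injectivity, \(\pairmap{j}^{-1}(z)\) is well defined as a map from \(\pairmap{j}(\Uchart{p})\) back to \(\Uchart{p}\), and it equals \(q\). On the other side, \(\pi_1z=F_{220}(q)\) lies in \(\Vchart{p}\). Theorem~\ref{thm:sec5-primary-inverse-atlas} shows that \(F_{220}|_{\Uchart{p}}\) is injective, so \(F_{220}|_{\Uchart{p}}^{-1}(\pi_1 z)=q\). Chaining the two equalities gives the claim.

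There is no genuine obstacle here. The only point needing care is to read both inverses as inverses of the restrictions to the chart \(\Uchart{p}\), rather than as global inverses on \(\Kdet\), where injectivity of \(F_{220}\) has not been established.
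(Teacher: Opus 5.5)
Your proposal is correct and follows essentially the same route as the paper. Both arguments get injectivity of $H_{220,j}$ by comparing first coordinates and invoking injectivity of $F_{220}$ on the chart from Theorem~\ref{thm:sec5-primary-inverse-atlas}, then obtain the identity by applying the chart inverse of $F_{220}$ to $\pi_1 z$. Your explicit appeal to the bi-Lipschitz bound, and your remark that both inverses must be read as inverses of restrictions to $U_{220}(p)$, only make explicit what the paper's proof leaves implicit.
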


\begin{proof}
If $\pairmap{j}(q_1)=\pairmap{j}(q_2)$ on $\Uchart{p}$, then the first coordinates agree, so $\omega_{220}(q_1)=\omega_{220}(q_2)$. Theorem~\ref{thm:sec5-primary-inverse-atlas} implies that $F_{220}$ is injective on the chart, hence $q_1=q_2$. This proves injectivity of $\pairmap{j}$ on the chart. If $z=\pairmap{j}(q)$, then $\pi_1 z=\omega_{220}(q)$, and applying the primary inverse to the first coordinate returns $q$. Since the pair map is injective, its exact inverse returns the same point.
\end{proof}

Proposition~\ref{prop:sec5-pair-agreement} explains why two-mode inversion does not replace the dominant-mode atlas in the later analysis. On the event-local GW250114 box, the dominant mode already provides a stable chart, and the pair maps are best interpreted as diagnostics. Appendix~\ref{app:joint-two-mode-inversion} develops those diagnostics in the form of projected pair estimators and distance-to-image certificates. The asymmetric structure is therefore kept throughout.

Taken together, these results provide the dominant-mode condition atlas, the primary inverse atlas $\Kinv$, the public subatlas $\Kpub$, the auxiliary transport constants $L_{221}$ and $L_{440}$, the pairwise Kerr gaps, and the common-remnant tube $\Caux(\widehat p;\varepsilon)$. Once the extraction theory supplies the frequency error budgets, these objects become explicit remnant tolerances and auxiliary pass-or-fail regions.

\setcounter{section}{5}

\providecommand{\AuditFam}{\mathcal M^{\mathrm{aud}}}
\providecommand{\Liso}{\mathcal L_{\mathrm{iso}}}
\section{Trust-region theorem}\label{sec:trust-region}

Section~\ref{sec:abstract-extraction} supplies labeled detector-frame frequencies together with explicit additive radii. Section~\ref{sec:kerr-inversion} transports the dominant radius to a detector-frame remnant ball and transports that ball further to auxiliary Kerr predictions. We now state the pass/fail inequalities that a finite window must satisfy before it is promoted to common-remnant spectroscopy.

Two threshold levels must be kept distinct. The first is theorem level. Here one works with the symbolic tolerance vector \(\eta\) and proves a deterministic implication. The second is numerical. Only after the synthetic calibration of Section~\ref{sec:numerical-calibration}, the public detector-frame comparison of Section~\ref{sec:data-conventions}, and the finite rerun family of Appendix~\ref{app:gw250114-robustness} have been fixed are numerical caps assigned. The theorem does not guess those numbers. It identifies exactly which transported budgets they must dominate.

A trusted detector-frame window is required to satisfy five inequalities of different kinds. The dominant mode must remain separated from competing Kerr labels on the local event-level geometry. The transported remnant ball must remain small. The remnant estimate must remain stable under small start-time shifts. Any fitted auxiliary mode must be compatible with the same remnant inferred from \(220\). These inequalities are later supplemented by two additional numerical caps: a public-hull comparison in detector-frame remnant space and nuisance-audit bounds against the fixed direct-wave and quadratic alternatives. Their logical order is part of the result.

\subsection{Detector-frame trust inequalities}

Fix a model family \(\mathcal M\) containing \(220\), and fix one admissible analysis window
\[
\Xi=(t_0,T).
\]
Throughout the section we suppress the arguments \((\Xi,\mathcal M)\) whenever no ambiguity can arise. Thus
\[
\widehat\omega_j=\widehat\omega_j(\Xi,\mathcal M),
\qquad
\widehat p=\widehat p(\Xi,\mathcal M)=\bigl(\widehat\Mf(\Xi,\mathcal M),\widehat\chif(\Xi,\mathcal M)\bigr),
\]
denote the extracted frequency of mode \(j\) and the remnant estimate obtained from the primary inversion of the dominant mode \(220\). The total frequency radius is written
\begin{equation}\label{eq:sec6-total-eps}
\varepsilon_j
:=
\varepsilon_j^{\mathrm{stat}}
+
\varepsilon_j^{\mathrm{tail}}
+
\varepsilon_j^{\mathrm{mm}}
+
\varepsilon_j^{\mathrm{alg}},
\end{equation}
in the sense of Corollary~\ref{cor:sec4-additive-ledger}. All deterministic acceptance statements below depend on the window only through the radii \(\varepsilon_j\), the remnant estimate \(\widehat p\), and the local drift statistic introduced shortly.

One subtlety must be settled before the trust inequalities are written. The isolation margin that governs label stability is naturally attached to a family of Kerr modes, but the baseline family \(\{220\}\) still requires a nontrivial dominant-mode resolution audit. For that reason we distinguish the \emph{fitted family} \(\mathcal M\) from the \emph{audit family}
\begin{equation}\label{eq:sec6-audit-family}
\AuditFam(\mathcal M)
:=
\begin{cases}
\mathcal M\cap\{220,221,440\}, & \text{if } \bigl|\mathcal M\cap\{220,221,440\}\bigr|\ge 2,\\[3pt]
\{220,221,440\}, & \text{if } \mathcal M\cap\{220,221,440\}=\{220\}.
\end{cases}
\end{equation}
We then define the audit isolation margin by
\begin{equation}\label{eq:sec6-audit-isolation}
\deltaiso^{\mathrm{aud}}(p;\mathcal M)
:=
\deltaiso\bigl(p;\AuditFam(\mathcal M)\bigr),
\qquad p\in\Kdet.
\end{equation}
For multimode Kerr families this is the ordinary fitted-family isolation margin. For a \(220\)-only family it becomes the dominant-mode resolution margin against the nearest \(221\) or \(440\) alternative, which is the quantity actually relevant for a trustworthy single-mode analysis.

The audit isolation margin varies with the remnant parameters, so the trust theorem needs a quantitative continuity bound.

\begin{proposition}[Isolation margin is locally Lipschitz]\label{prop:sec6-isolation-lipschitz}
For a model family \(\mathcal M\) containing \(220\), define
\begin{equation}\label{eq:sec6-liso-def}
\Liso(\mathcal M)
:=
\frac12
\max_{\substack{j,k\in\AuditFam(\mathcal M)\\ j\neq k}}
\bigl(L_j^{\mathrm f}+L_k^{\mathrm f}\bigr),
\end{equation}
where \(L_{220}^{\mathrm f}:=\Nprim\) is the global forward Lipschitz constant of the dominant Kerr map supplied by Proposition~\ref{prop:sec5-primary-atlas}, and \(L_j\) is the corresponding forward constant of Proposition~\ref{prop:sec5-aux-envelopes} for \(j\in\{221,440\}\). For the formula \eqref{eq:sec6-liso-def} we use \(L_j^{\mathrm f}=L_{220}^{\mathrm f}\) when \(j=220\) and \(L_j^{\mathrm f}=L_j\) for \(j\in\{221,440\}\). Then for every \(p,q\in\Kdet\),
\begin{equation}\label{eq:sec6-isolation-lipschitz}
\bigl|
\deltaiso^{\mathrm{aud}}(p;\mathcal M)-\deltaiso^{\mathrm{aud}}(q;\mathcal M)
\bigr|
\le
\Liso(\mathcal M)\,\|p-q\|_{\mathbb R^2}.
\end{equation}
\end{proposition}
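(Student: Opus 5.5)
The proof is a standard ``minimum of Lipschitz functions is Lipschitz'' argument. We use the definition of the isolation margin from Appendix~\ref{app:assumption-ledger}: half the minimum pairwise gap of the audit family,
\[
\deltaiso\bigl(p;\AuditFam(\mathcal M)\bigr)=\frac12\min_{\substack{j,k\in\AuditFam(\mathcal M)\\ j\neq k}}|\omega_j(p)-\omega_k(p)|,
\]
which is consistent with how \eqref{eq:sec5-atlas-isolation} was derived. So, for each unordered pair \(j\neq k\) in the audit family, we set \(g_{jk}(p):=\tfrac12|\omega_j(p)-\omega_k(p)|\). The audit margin is then \(\min_{j\neq k}g_{jk}\), a minimum over finitely many functions, since \(\AuditFam(\mathcal M)\subset\{220,221,440\}\) has at most three pairs.

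\emph{Step 1: Lipschitz bound for each pair function.} For \(p,q\in\Kdet\), the reverse triangle inequality gives
\[
|g_{jk}(p)-g_{jk}(q)|\le \tfrac12\bigl(|\omega_j(p)-\omega_j(q)|+|\omega_k(p)-\omega_k(q)|\bigr).
\]
Each term is then bounded by a forward Lipschitz constant. For \(j\in\{221,440\}\) this is \eqref{eq:sec5-aux-lipschitz}. For \(j=220\) we need \(|\omega_{220}(p)-\omega_{220}(q)|\le \Nprim\|p-q\|\). We obtain it from Proposition~\ref{prop:sec5-primary-atlas}: \(\|DG_{220}\|_2\le\Nprim\) on \(\Kdet\), \(\Kdet\) is convex, and we integrate along the segment as in the proof of Theorem~\ref{thm:sec5-primary-inverse-atlas}, using that \(\XiKerr\) is an isometry. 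This yields \(|g_{jk}(p)-g_{jk}(q)|\le \tfrac12(L_j^{\mathrm f}+L_k^{\mathrm f})\|p-q\|\le\Liso(\mathcal M)\|p-q\|\).

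\emph{Step 2: pass to the minimum.} If each of finitely many functions \(g_{jk}\) is \(L\)-Lipschitz, so is their minimum. To see this, choose a pair \((j,k)\) attaining the minimum at \(q\). Then
\[
\min g(p)-\min g(q)\le g_{jk}(p)-g_{jk}(q)\le L\|p-q\|,
\]
and exchanging \(p\) and \(q\) gives the reverse inequality. Applying this with \(L=\Liso(\mathcal M)\) proves \eqref{eq:sec6-isolation-lipschitz}. The two cases in \eqref{eq:sec6-audit-family} need no separate treatment, because in both cases \(\AuditFam(\mathcal M)\) is a fixed subset of \(\{220,221,440\}\) with at least two elements, independent of \(p\).

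The only point requiring care is the dominant-mode forward constant. Proposition~\ref{prop:sec5-primary-atlas} states a pointwise operator-norm bound, not a Lipschitz estimate, so the convexity-plus-integration step is needed to convert it into one. Beyond this, the main thing to confirm is that the appendix definition of \(\deltaiso\) carries the factor \(\tfrac12\) on the minimum pairwise gap. If it does not, the same argument goes through with \(\Liso\) rescaled accordingly.
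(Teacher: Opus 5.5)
Your proof is correct and is essentially the paper's argument. The paper also writes $\delta_{\mathrm{iso}}^{\mathrm{aud}}(p;\mathcal M)$ as the minimum over the audit family of $g_{jk}(p)=\tfrac12|\omega_j(p)-\omega_k(p)|$ (the appendix definition does carry the factor $\tfrac12$), bounds each $g_{jk}$ by the triangle inequality with constant $\tfrac12(L_j^{\mathrm f}+L_k^{\mathrm f})$, and then passes to the minimum; your extra step, converting the pointwise bound $\|DG_{220}\|_2\le N_{220}$ into a Lipschitz estimate by integrating along segments in the convex box, is a correct detail that the paper leaves implicit when it calls $N_{220}$ a forward Lipschitz constant.
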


\begin{proof}
Let
\[
g_{jk}(p):=\frac12|\omega_j(p)-\omega_k(p)|,
\qquad j\neq k\in\AuditFam(\mathcal M).
\]
Then
\[
\deltaiso^{\mathrm{aud}}(p;\mathcal M)=\min_{j\neq k} g_{jk}(p).
\]
For each pair \(j\neq k\),
\begin{align*}
|g_{jk}(p)-g_{jk}(q)|
&\le
\frac12\bigl|(\omega_j(p)-\omega_k(p))-(\omega_j(q)-\omega_k(q))\bigr|\\
&\le
\frac12\bigl(|\omega_j(p)-\omega_j(q)|+|\omega_k(p)-\omega_k(q)|\bigr)\\
&\le
\frac12(L_j^{\mathrm f}+L_k^{\mathrm f})\,\|p-q\|_{\mathbb R^2}
\le
\Liso(\mathcal M)\,\|p-q\|_{\mathbb R^2}.
\end{align*}
The minimum of a finite collection of \(L\)-Lipschitz functions is again \(L\)-Lipschitz, so \eqref{eq:sec6-isolation-lipschitz} follows.
\end{proof}

The next object is the short-scale remnant drift. It measures not posterior spread within one window, but the instability of the point estimate under small changes of the start time.

\begin{definition}[Local start-time drift]\label{def:sec6-local-drift}
Fix \(\Delta>0\). The \(\Delta\)-local drift of the recovered remnant parameters at the window \(\Xi=(t_0,T)\) is
\begin{equation}\label{eq:sec6-local-drift}
\mathrm{Drift}_{\Delta}(\Xi,\mathcal M)
:=
\sup\Bigl\{
\|\widehat p(t_0+\tau,T,\mathcal M)-\widehat p(t_0,T,\mathcal M)\|_{\mathbb R^2}
:
|\tau|\le \Delta,\ t_0+\tau\ \text{admissible}
\Bigr\}.
\end{equation}
\end{definition}

We can now state the abstract detector-frame trust test.

\begin{definition}[Detector-frame trust set]\label{def:sec6-trust-set}
Let
\begin{equation}\label{eq:sec6-tolerance-vector}
\eta=
\bigl(
\eta_{\mathrm{sep}},
\eta_{\mathrm p},
\eta_{\mathrm d},
\eta_{221},
\eta_{440}
\bigr)
\in (0,\infty)^5
\end{equation}
be a vector of tolerances. Throughout Section~\ref{sec:trust-region}, \(\eta\) is symbolic: no numerical values are fixed here. For a model family \(\mathcal M\) containing \(220\), the detector-frame trust set
\[
\TrustDet_{\eta}(\mathcal M)
\]
is the collection of all admissible windows \(\Xi=(t_0,T)\) such that the following conditions hold.

The window is extraction-admissible in the sense of Definition~\ref{def:sec4-admissible-envelope}, and the dominant-mode estimate \(\widehat\omega_{220}\) lies in some primary chart image \(\Vchart{p}\) with \(p\in\Kdet\), so that the primary inverse estimate \(\widehat p\) is well defined on the event-local atlas.

The dominant-mode separation inequality holds:
\begin{equation}\label{eq:sec6-trust-sep}
\varepsilon_{220}\le \eta_{\mathrm{sep}}
<
\deltaiso^{\mathrm{aud}}(\widehat p;\mathcal M).
\end{equation}
The propagated primary remnant budget holds:
\begin{equation}\label{eq:sec6-trust-param}
L_{220}\,\varepsilon_{220}\le \eta_{\mathrm p}.
\end{equation}
The local drift bound holds:
\begin{equation}\label{eq:sec6-trust-drift}
\mathrm{Drift}_{\Delta}(\Xi,\mathcal M)\le \eta_{\mathrm d}.
\end{equation}
Whenever \(221\in \mathcal M\), the auxiliary consistency inequality holds:
\begin{equation}\label{eq:sec6-trust-221}
\Raux{221}\bigl(\widehat p,\widehat\omega_{221}\bigr)\le \eta_{221}.
\end{equation}
Whenever \(440\in \mathcal M\), the analogous inequality holds:
\begin{equation}\label{eq:sec6-trust-440}
\Raux{440}\bigl(\widehat p,\widehat\omega_{440}\bigr)\le \eta_{440}.
\end{equation}
\end{definition}

The separation inequality is strict because one needs positive room between the declared dominant-mode radius and the local label boundary. The remaining inequalities are closed: they bound transported error objects and can therefore be checked by direct comparison once the relevant budgets are known.

The closed inequalities of Definition~\ref{def:sec6-trust-set} are downstream of the same exact transport chain. First,
\begin{equation}\label{eq:sec6-threshold-chain-primary}
\varepsilon_{220}
\longmapsto
L_{220}\varepsilon_{220},
\end{equation}
which is the primary remnant radius furnished by Theorem~\ref{thm:sec5-primary-inverse-atlas}. Second, for each auxiliary mode \(j\in\{221,440\}\),
\begin{equation}\label{eq:sec6-threshold-chain-aux}
(\varepsilon_{220},\varepsilon_j)
\longmapsto
\tauaux{j}(\varepsilon_{220},\varepsilon_j)
=
\varepsilon_j+L_jL_{220}\varepsilon_{220},
\qquad j\in\{221,440\}.
\end{equation}
Only after \eqref{eq:sec6-threshold-chain-primary}--\eqref{eq:sec6-threshold-chain-aux} do the remnant-side diagnostics enter. The local drift \(\mathrm{Drift}_{\Delta}\) is a neighboring-window quantity on the same remnant space, the public-hull distance \(d_{\mathrm{PE}}\) used later in Appendix~\ref{app:gw250114-robustness} is a detector-frame remnant-space comparison, and the nuisance gains \(\Gamma_\nu\) of the same appendix are residual improvements computed after the baseline remnant has already been fixed. No numerical cap introduced later is allowed to bypass this order.

\begin{proposition}[Available separation interval]\label{prop:sec6-sep-interval}
Fix \(p\in\Kdet\), a family \(\mathcal M\) containing \(220\), and a dominant frequency radius \(\varepsilon_{220}>0\). The theorem-level separation thresholds compatible with the inverse transport are exactly the numbers
\begin{equation}\label{eq:sec6-sep-interval}
\eta_{\mathrm{sep}}
\in
\mathcal I_{\mathrm{sep}}(p;\varepsilon_{220},\mathcal M)
:=
\Bigl[
\varepsilon_{220},
\deltaiso^{\mathrm{aud}}(p;\mathcal M)-\Liso(\mathcal M)L_{220}\varepsilon_{220}
\Bigr).
\end{equation}
The interval \(\mathcal I_{\mathrm{sep}}(p;\varepsilon_{220},\mathcal M)\) is nonempty if and only if
\begin{equation}\label{eq:sec6-sep-slack}
\varepsilon_{220}
<
\deltaiso^{\mathrm{aud}}(p;\mathcal M)-\Liso(\mathcal M)L_{220}\varepsilon_{220}.
\end{equation}
\end{proposition}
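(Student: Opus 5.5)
The statement is essentially definitional: it records which values of \(\eta_{\mathrm{sep}}\) make the separation inequality \eqref{eq:sec6-trust-sep} robust under the inverse transport of Theorem~\ref{thm:sec5-primary-inverse-atlas}. I would therefore begin by making precise what ``compatible with the inverse transport'' means. The intended reading is that the strict inequality \(\eta_{\mathrm{sep}}<\deltaiso^{\mathrm{aud}}(q;\mathcal M)\) must hold for every remnant \(q\) in the transported remnant ball \(\overline B_{\mathbb R^2}(p,L_{220}\varepsilon_{220})\cap\Kdet\). This ball is exactly the uncertainty set furnished by \eqref{eq:sec5-primary-parameter-error}. The lower constraint \(\eta_{\mathrm{sep}}\ge\varepsilon_{220}\) is taken directly from the left half of \eqref{eq:sec6-trust-sep}.

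\textbf{Sufficiency.} For any \(q\) in the transported ball, Proposition~\ref{prop:sec6-isolation-lipschitz} gives
\[
\deltaiso^{\mathrm{aud}}(q;\mathcal M)\ge \deltaiso^{\mathrm{aud}}(p;\mathcal M)-\Liso(\mathcal M)L_{220}\varepsilon_{220}.
\]
Hence any \(\eta_{\mathrm{sep}}\) in the half-open interval \eqref{eq:sec6-sep-interval} satisfies \(\varepsilon_{220}\le\eta_{\mathrm{sep}}<\deltaiso^{\mathrm{aud}}(q;\mathcal M)\) uniformly over the ball. This shows that \(\mathcal I_{\mathrm{sep}}\) consists of admissible thresholds.

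\textbf{Necessity.} The word ``exactly'' requires the converse, and this is the main obstacle: the Lipschitz bound is only an upper estimate, and the true infimum of \(\deltaiso^{\mathrm{aud}}\) over the ball may exceed the stated lower bound. I would resolve this by defining compatibility at the level of the certified, transported budget rather than the unknown pointwise geometry. A threshold is called compatible precisely when the worst-case guaranteed margin, namely the one obtained by composing \(\deltaiso^{\mathrm{aud}}\) only through its declared Lipschitz envelope \(\Liso(\mathcal M)\) and the declared inverse constant \(L_{220}\), strictly exceeds it. With that definition, \(\deltaiso^{\mathrm{aud}}(p;\mathcal M)-\Liso(\mathcal M)L_{220}\varepsilon_{220}\) is the supremal certifiable value, so any \(\eta_{\mathrm{sep}}\) at or above it fails. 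Any \(\eta_{\mathrm{sep}}<\varepsilon_{220}\) violates the left half of \eqref{eq:sec6-trust-sep}, giving the characterization.

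\textbf{Nonemptiness.} A half-open interval \([a,b)\) is nonempty if and only if \(a<b\). Substituting \(a=\varepsilon_{220}\) and \(b=\deltaiso^{\mathrm{aud}}(p;\mathcal M)-\Liso(\mathcal M)L_{220}\varepsilon_{220}\) gives \eqref{eq:sec6-sep-slack} immediately.
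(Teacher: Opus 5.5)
Your proposal is correct and follows the paper's route. The paper also treats compatibility as \emph{defined} by the two inequalities $\varepsilon_{220}\le\eta_{\mathrm{sep}}$ and $\eta_{\mathrm{sep}}<\delta_{\mathrm{iso}}^{\mathrm{aud}}(p;\mathcal M)-\mathcal L_{\mathrm{iso}}(\mathcal M)L_{220}\varepsilon_{220}$, reads off the interval and its nonemptiness criterion directly, and so matches your certified-budget reading of ``exactly''; your Lipschitz sufficiency step is not in the paper's proof of this proposition, but it is the argument the paper uses later in the proof of Theorem~\ref{thm:sec6-trust-region} to conclude $\eta_{\mathrm{sep}}<\delta_{\mathrm{iso}}^{\mathrm{aud}}(\widehat p;\mathcal M)$.
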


\begin{proof}
By definition, \(\eta_{\mathrm{sep}}\) is admissible exactly when it satisfies the two inequalities
\[
\varepsilon_{220}\le \eta_{\mathrm{sep}}
\qquad\text{and}\qquad
\eta_{\mathrm{sep}}<
\deltaiso^{\mathrm{aud}}(p;\mathcal M)-\Liso(\mathcal M)L_{220}\varepsilon_{220}.
\]
This is equivalent to \eqref{eq:sec6-sep-interval}. The interval is nonempty precisely when its left endpoint is strictly smaller than its right endpoint, which is \eqref{eq:sec6-sep-slack}.
\end{proof}

Proposition~\ref{prop:sec6-sep-interval} is the first place where the symbolic and numerical levels separate cleanly. The symbol \(\eta_{\mathrm{sep}}\) is still abstract, but it is not free. Once the dominant-mode radius and the reference remnant are fixed, the theorem permits exactly the interval \eqref{eq:sec6-sep-interval}. Likewise, \(\eta_{\mathrm p}\) must dominate \(L_{220}\varepsilon_{220}\), and \(\eta_j\) must dominate \(\tauaux{j}(\varepsilon_{220},\varepsilon_j)\). The drift cap \(\eta_{\mathrm d}\) is different: it is a neighboring-window quantity, so its numerical value is introduced only after the start-time scan itself has been fixed.

\subsection{The deterministic trust-region theorem}

The nontrivial statement is that the symbolic inequalities above follow from more primitive quantities furnished by Sections~\ref{sec:abstract-extraction} and \ref{sec:kerr-inversion}. That implication is the theorem-level core of the argument.

\begin{theorem}[Deterministic trust-region theorem]\label{thm:sec6-trust-region}
Fix a model family \(\mathcal M\) containing \(220\), an admissible window \(\Xi=(t_0,T)\), and a true detector-frame remnant \(p_{\star}\in\Kdet\). Assume the extraction hypotheses of Theorem~\ref{thm:sec4-abstract-extraction} hold for the chosen window, with labeled dominant and auxiliary frequencies \(\widehat\omega_j\) and total radii \(\varepsilon_j\) given by \eqref{eq:sec6-total-eps}. Assume also that the dominant-mode estimate is chart-admissible in the sense that
\begin{equation}\label{eq:sec6-chart-admissible}
\widehat\omega_{220}\in \Vchart{p_{\star}}
\end{equation}
for the primary chart of Theorem~\ref{thm:sec5-primary-inverse-atlas}, so that
\[
\widehat p:=F_{220}^{-1}(\widehat\omega_{220})\in \Uchart{p_{\star}}.
\]
Let \(\eta\) be as in \eqref{eq:sec6-tolerance-vector}. Suppose
\begin{align}
\eta_{\mathrm{sep}}
&\in
\mathcal I_{\mathrm{sep}}(p_{\star};\varepsilon_{220},\mathcal M),
\label{eq:sec6-hyp-sep}\\
L_{220}\varepsilon_{220}
&\le
\eta_{\mathrm p},
\label{eq:sec6-hyp-param}\\
\mathrm{Drift}_{\Delta}(\Xi,\mathcal M)
&\le
\eta_{\mathrm d},
\label{eq:sec6-hyp-drift}
\end{align}
and, whenever \(221\in\mathcal M\),
\begin{equation}\label{eq:sec6-hyp-221}
\tauaux{221}(\varepsilon_{220},\varepsilon_{221})\le \eta_{221},
\end{equation}
while, whenever \(440\in\mathcal M\),
\begin{equation}\label{eq:sec6-hyp-440}
\tauaux{440}(\varepsilon_{220},\varepsilon_{440})\le \eta_{440}.
\end{equation}
Then
\[
\Xi\in \TrustDet_{\eta}(\mathcal M).
\]
\end{theorem}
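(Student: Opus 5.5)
The plan is to verify, one at a time, the conditions of Definition~\ref{def:sec6-trust-set}, using the transport chain \eqref{eq:sec6-threshold-chain-primary}--\eqref{eq:sec6-threshold-chain-aux}.

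\emph{Admissibility and the parameter budget.} Extraction admissibility in the sense of Definition~\ref{def:sec4-admissible-envelope} is part of the hypotheses of Theorem~\ref{thm:sec4-abstract-extraction}, namely clause (H4). The chart requirement is exactly \eqref{eq:sec6-chart-admissible}, so \(\widehat p\in\Uchart{p_\star}\) is well defined. The budget inequality \eqref{eq:sec6-trust-param} is literally hypothesis \eqref{eq:sec6-hyp-param}, and the drift inequality \eqref{eq:sec6-trust-drift} is literally \eqref{eq:sec6-hyp-drift}.

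\emph{Separation.} This step needs genuine work, because \eqref{eq:sec6-trust-sep} evaluates the isolation margin at the estimate \(\widehat p\), whereas \eqref{eq:sec6-hyp-sep} controls it at \(p_\star\). The lower inequality \(\varepsilon_{220}\le\eta_{\mathrm{sep}}\) is the left endpoint of \(\mathcal I_{\mathrm{sep}}\). For the strict upper inequality I would proceed in three moves.
\begin{enumerate}
\item Since \(\varepsilon_{220}\) dominates \(|\widehat\omega_{220}-\omega_{220}(p_\star)|\) by Corollary~\ref{cor:sec4-additive-ledger}, the parameter error bound \eqref{eq:sec5-primary-parameter-error} applies on the chart \(\Uchart{p_\star}\). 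Applying it with \(p=p_\star\) (both \(\widehat p\) and \(p_\star\) lie in \(\Uchart{p_\star}\)) gives \(\|\widehat p-p_\star\|\le L_{220}\varepsilon_{220}\).
\item Proposition~\ref{prop:sec6-isolation-lipschitz} then yields \(\deltaiso^{\mathrm{aud}}(\widehat p;\mathcal M)\ge\deltaiso^{\mathrm{aud}}(p_\star;\mathcal M)-\Liso(\mathcal M)L_{220}\varepsilon_{220}\).
\item By the right endpoint of \eqref{eq:sec6-sep-interval}, this lower bound strictly exceeds \(\eta_{\mathrm{sep}}\).
\end{enumerate}
This is where the Lipschitz correction in the definition of \(\mathcal I_{\mathrm{sep}}\) earns its keep.

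\emph{Auxiliary conditions.} When \(221\in\mathcal M\) or \(440\in\mathcal M\), I would invoke Theorem~\ref{thm:sec5-primary-auxiliary} with \(p=p_\star\). Its frequency budgets \eqref{eq:sec5-frequency-budgets} hold by Corollary~\ref{cor:sec4-additive-ledger} applied to each labeled mode. The theorem gives \(\Raux{j}(\widehat p,\widehat\omega_j)\le\tauaux{j}(\varepsilon_{220},\varepsilon_j)\), and chaining with \eqref{eq:sec6-hyp-221} or \eqref{eq:sec6-hyp-440} gives \eqref{eq:sec6-trust-221} and \eqref{eq:sec6-trust-440}. All conditions then hold, so \(\Xi\in\TrustDet_\eta(\mathcal M)\).

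\emph{Main obstacle.} The separation step is the only non-tautological point. Its subtlety is ensuring that the inverse-Lipschitz bound is legitimately applied with both points inside \(\Uchart{p_\star}\), which is guaranteed by chart admissibility. A secondary subtlety is that the extracted labels must coincide with the Kerr labels used in \(\deltaiso\). This is supplied by Theorem~\ref{thm:sec4-abstract-extraction}(i), so I do not expect it to need (H6).
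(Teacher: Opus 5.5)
Your proposal is correct and follows the paper's proof essentially step for step. The admissibility, parameter-budget and drift clauses are read off the hypotheses. The separation clause comes from combining \(\|\widehat p-p_\star\|\le L_{220}\varepsilon_{220}\) with Proposition~\ref{prop:sec6-isolation-lipschitz} and the right endpoint of \(\mathcal I_{\mathrm{sep}}\). The auxiliary clauses come from Theorem~\ref{thm:sec5-primary-auxiliary} applied at \(p=p_\star\), and your remark that both \(\widehat\omega_{220}\) and \(\omega_{220}(p_\star)\) lie in \(V_{220}(p_\star)\) makes explicit a point the paper uses implicitly.
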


\begin{proof}
The window is extraction-admissible by hypothesis, and chart-admissibility is exactly \eqref{eq:sec6-chart-admissible}. Thus the primary estimate \(\widehat p\) is well defined on the event-local inverse atlas.

By Theorem~\ref{thm:sec5-primary-inverse-atlas},
\begin{equation}\label{eq:sec6-proof-primary}
\|\widehat p-p_{\star}\|_{\mathbb R^2}\le L_{220}\,|\widehat\omega_{220}-\omega_{220}(p_{\star})|
\le L_{220}\varepsilon_{220}.
\end{equation}
In particular, \eqref{eq:sec6-hyp-param} yields the trust inequality \eqref{eq:sec6-trust-param}.

To prove the separation inequality, combine Proposition~\ref{prop:sec6-isolation-lipschitz} with \eqref{eq:sec6-proof-primary}:
\begin{align*}
\deltaiso^{\mathrm{aud}}(\widehat p;\mathcal M)
&\ge
\deltaiso^{\mathrm{aud}}(p_{\star};\mathcal M)
-
\Liso(\mathcal M)\|\widehat p-p_{\star}\|_{\mathbb R^2}\\
&\ge
\deltaiso^{\mathrm{aud}}(p_{\star};\mathcal M)
-
\Liso(\mathcal M)L_{220}\varepsilon_{220}.
\end{align*}
Because \(\eta_{\mathrm{sep}}\in\mathcal I_{\mathrm{sep}}(p_{\star};\varepsilon_{220},\mathcal M)\), the right-hand side is strictly larger than \(\eta_{\mathrm{sep}}\), while the left endpoint of the same interval gives \(\varepsilon_{220}\le\eta_{\mathrm{sep}}\). Therefore
\[
\varepsilon_{220}
\le
\eta_{\mathrm{sep}}
<
\deltaiso^{\mathrm{aud}}(\widehat p;\mathcal M),
\]
which is precisely \eqref{eq:sec6-trust-sep}.

The drift inequality \eqref{eq:sec6-trust-drift} is exactly \eqref{eq:sec6-hyp-drift}. For the auxiliary modes, Theorem~\ref{thm:sec5-primary-auxiliary} gives
\[
\Raux{j}(\widehat p,\widehat\omega_j)
\le
\tauaux{j}(\varepsilon_{220},\varepsilon_j)
\]
for each \(j\in\{221,440\}\cap\mathcal M\). Combining this with \eqref{eq:sec6-hyp-221} and \eqref{eq:sec6-hyp-440} yields the trust inequalities \eqref{eq:sec6-trust-221} and \eqref{eq:sec6-trust-440}. Every clause of Definition~\ref{def:sec6-trust-set} is therefore satisfied.
\end{proof}

Theorem~\ref{thm:sec6-trust-region} is a theorem-level implication. It does not yet choose numerical caps. It says that any admissible threshold choice must dominate transported quantities obtained from the extraction ledger and the local Kerr atlas, and that the separation threshold must lie in the interval of Proposition~\ref{prop:sec6-sep-interval}.

\begin{corollary}[Per-window calibrated trust implication]\label{cor:sec6-calibrated}
Fix a calibration level \(q\in(0,1)\) and a confidence level \(\alpha\in(0,1)\). Let \(W=(t_0,T)\) be a calibration window and let \(\mathcal M\) be a baseline or nuisance-extended family containing \(220\). Replace each deterministic frequency radius \(\varepsilon_j\) in Theorem~\ref{thm:sec6-trust-region} by the calibrated radius
\[
\varepsilon_j^{\mathrm{cal},\mathcal M,W}(q;\alpha)
\]
defined in \eqref{eq:appendixJ-total-calibrated-radius}. If the hypotheses \eqref{eq:sec6-chart-admissible}--\eqref{eq:sec6-hyp-440} hold with \(\varepsilon_j^{\mathrm{cal},\mathcal M,W}(q;\alpha)\) in place of \(\varepsilon_j\), then, for the fixed window \(W\), with confidence at least \(1-\alpha\) and at the bankwise \(q\)-quantile level,
\[
W\in \TrustDet_{\eta}(\mathcal M).
\]
\end{corollary}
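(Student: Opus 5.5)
The plan is to reduce the corollary to Theorem~\ref{thm:sec6-trust-region} by working on a high-probability event. Write \(\varepsilon_j^{\mathrm{cal}}:=\varepsilon_j^{\mathrm{cal},\mathcal M,W}(q;\alpha)\) for the calibrated radii of \eqref{eq:appendixJ-total-calibrated-radius}. The only input from the calibration appendix should be its defining guarantee. There is an event \(E_{q,\alpha}\), of probability at least \(1-\alpha\) at the bankwise \(q\)-quantile level, on which the realized residual envelopes are dominated by the calibrated envelopes. On that event the true extraction radii therefore satisfy \(\varepsilon_j\le\varepsilon_j^{\mathrm{cal}}\) for every \(j\in\mathcal M\), since the stat, tail, mismatch and algorithmic pieces are each bounded by their calibrated counterparts. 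I would record this as the first step, citing the construction in Appendix~J. The argument is simple: each calibrated piece is a \(K_{\omega,\nu}(\safeenv)\)-multiple of a calibrated sample envelope, exactly as in Corollary~\ref{cor:sec4-additive-ledger}.

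Next, on \(E_{q,\alpha}\), I would check that the realized envelope \(\eta\) still satisfies \(\eta\le\safeenv\). This holds because the calibrated envelope was required to be admissible. Theorem~\ref{thm:sec4-abstract-extraction} then applies, giving labeled frequencies with \(|\widehat\omega_j-\omega_j(p_\star)|\le\varepsilon_j^{\mathrm{cal}}\). Theorem~\ref{thm:sec6-trust-region} only ever uses its radii as upper bounds on these errors. Every downstream object is monotone nondecreasing in the radii: \(L_{220}\varepsilon_{220}\), \(\tauaux{j}\), and the left endpoint of \(\mathcal I_{\mathrm{sep}}\). So the proof of Theorem~\ref{thm:sec6-trust-region} can be rerun verbatim with \(\varepsilon_j^{\mathrm{cal}}\) in place of \(\varepsilon_j\). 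Chart admissibility \eqref{eq:sec6-chart-admissible} and the drift bound are hypotheses assumed directly, so they transfer unchanged. The primary ball, the separation inequality via Proposition~\ref{prop:sec6-isolation-lipschitz}, and the auxiliary residuals via Theorem~\ref{thm:sec5-primary-auxiliary} then all follow with the calibrated radii.

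One detail needs care. The trust set of Definition~\ref{def:sec6-trust-set} contains the clause \(\varepsilon_{220}\le\eta_{\mathrm{sep}}\), and it must be read with the radius actually used. I would read it with the calibrated radius, which the hypothesis \(\eta_{\mathrm{sep}}\in\mathcal I_{\mathrm{sep}}(p_\star;\varepsilon_{220}^{\mathrm{cal}},\mathcal M)\) supplies directly. I would also note that the argument needs the chosen family to enter only through its frequency ledger, so nuisance-extended families are covered.

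The main obstacle is the probabilistic bookkeeping in the first step. One has to make sure that a single event \(E_{q,\alpha}\) controls all modes \(j\in\mathcal M\) and all four error pieces simultaneously at the stated confidence. This matters because a union over components could otherwise degrade \(\alpha\). I would first try to show that \eqref{eq:appendixJ-total-calibrated-radius} is defined from a joint (max-over-modes) bankwise quantile, so no union bound is needed. Failing that, I would allocate \(\alpha\) across the finitely many components inside the calibrated radius definition. Once this is settled, the conclusion \(W\in\TrustDet_{\eta}(\mathcal M)\) holds on \(E_{q,\alpha}\), which is the claim.
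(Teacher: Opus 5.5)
Your reduction has the same overall shape as the paper's proof. The paper simply invokes Appendix~\ref{app:synthetic-bank} for the fact that \(\varepsilon_j^{\mathrm{cal},\mathcal M,W}(q;\alpha)\) is a simultaneous upper envelope for the bankwise \(q\)-quantile of the \emph{actual} frequency error \(|\widehat\omega_j-\omega_j(p_\star)|\). It then substitutes into Theorem~\ref{thm:sec6-trust-region}, whose proof uses its radii only as bounds on those errors; the threshold clauses of Definition~\ref{def:sec6-trust-set} are read with the substituted radii, as you correctly note.

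Your first step, however, routes this through the deterministic ledger, and as stated it fails. Appendix~\ref{app:synthetic-bank} calibrates none of the four sample envelopes \(\eta^{\mathrm{stat}},\eta^{\mathrm{tail}},\eta^{\mathrm{mm}},\eta^{\mathrm{alg}}\), and \(\varepsilon_j^{\mathrm{cal}}\) is not assembled from \(K_{\omega,\nu}(\eta_\star)\) and sample envelopes. By \eqref{eq:appendixJ-eps-tail-cal}--\eqref{eq:appendixJ-eps-stat-cal} it is \(L_j^{W}\) times a single remnant-space radius \(\|(\overline R_{M}^{\,\mathrm{tot}},\overline R_{\chi}^{\,\mathrm{tot}})\|_{\mathbb R^2}\). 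That radius is the bankwise maximum zero-noise bias (truncation plus discretization, with \(r^{\mathrm{mm}}\) set to zero) plus the DKW-corrected noise quantile of Theorem~\ref{thm:appendixJ-dkw}. So nothing supports \(\varepsilon_j\le\varepsilon_j^{\mathrm{cal}}\) for the radius \(\varepsilon_j\) of Corollary~\ref{cor:sec4-additive-ledger}. There is also no reason to expect it: that radius is a worst-case Prony bound, while \(\varepsilon_j^{\mathrm{cal}}\) is an empirical quantile pushed forward by a Lipschitz constant. Likewise, \(\eta\le\eta_\star\) does not follow from admissibility of a calibrated envelope, since the calibration never refers to \(\eta_\star\).

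The repair is to bypass the radii of Theorem~\ref{thm:sec4-abstract-extraction} altogether. Keep extraction admissibility (a geometric condition on \(H\) and the nodes) as a standing hypothesis. Take from the calibration, as the paper does, that on the calibration event \(|\widehat\omega_j-\omega_j(p_\star)|\le\varepsilon_j^{\mathrm{cal}}\). Then rerun the proof of Theorem~\ref{thm:sec6-trust-region} exactly as in your second paragraph.

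This also removes most of your bookkeeping concern. All modewise radii are pushforwards of the same remnant radius, so one event (remnant error componentwise within \((\overline R_{M}^{\,\mathrm{tot}},\overline R_{\chi}^{\,\mathrm{tot}})\)) controls every mode at once, and there are no per-piece calibrations to union over. The only genuine union is between the separate DKW applications to mass and spin in Theorem~\ref{thm:appendixJ-dkw}, which strictly requires splitting \(\alpha\) between them; the corollary inherits exactly the simultaneity that theorem delivers.

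Note too that the guarantee is two-level rather than a single event of probability \(1-\alpha\). The level \(1-\alpha\) is the confidence that the empirical quantile dominates the population \(q\)-quantile. The level \(q\) is then the probability that a realization's error falls below it.
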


\begin{proof}
Appendix~\ref{app:synthetic-bank} proves that \(\varepsilon_j^{\mathrm{cal},\mathcal M,W}(q;\alpha)\) is a simultaneous upper envelope for the bankwise \(q\)-quantile of the total frequency error at the fixed window \(W\). Substituting these radii into Theorem~\ref{thm:sec6-trust-region} yields the conclusion.
\end{proof}

Corollary~\ref{cor:sec6-calibrated} yields a conservative theorem-level budget for each fixed calibration window. The later GW250114 release does not alter the analytic chain. It only replaces the ideal radii by conservative finite-data surrogates and then adjoins the public-hull and nuisance inequalities.

\subsection{From symbolic thresholds to fixed numerical caps}

The GW250114 event-level rule is obtained by specializing the symbolic theorem to the finite calibration and rerun objects declared below. This specialization is exact.

\begin{proposition}[Exact event-level specialization]\label{prop:sec6-event-specialization}
Fix a baseline family \(\mathcal A\subset\Mref\) with \(220\in\mathcal A\), a master scan point \(\Xi\), and a baseline specification \(\sigma\) from the finite rerun family of Appendix~\ref{app:gw250114-robustness}. The event-level GW250114 rule specializes the theorem-level quantities of Theorem~\ref{thm:sec6-trust-region} by the replacements
\begin{align}
\varepsilon_{220}
&\longmapsto
\overline\varepsilon_{220}^{\mathcal A}(\Xi)+\Delta_{220}^{\mathcal A}(\Xi),
\label{eq:sec6-op-radius}\\
\deltaiso^{\mathrm{aud}}(p_{\star};\mathcal M)-\Liso(\mathcal M)L_{220}\varepsilon_{220}
&\longmapsto
\deltaiso\bigl(\widehat p^{\sigma,\mathcal A}(\Xi);\mathcal A\bigr)-L_{\mathrm{iso}}(\mathcal A)L_{220}\Delta_{220}^{\mathcal A}(\Xi),
\label{eq:sec6-op-sepmargin}\\
L_{220}\varepsilon_{220}
&\longmapsto
L_{220}\bigl(\overline\varepsilon_{220}^{\mathcal A}(\Xi)+\Delta_{220}^{\mathcal A}(\Xi)\bigr),
\label{eq:sec6-op-param}\\
\tauaux{221}(\varepsilon_{220},\varepsilon_{221})
&\longmapsto
R_{221}^{\sigma,\mathcal A}(\Xi)+\Delta_{221}^{\mathcal A}(\Xi)+L_{221}L_{220}\Delta_{220}^{\mathcal A}(\Xi),
\label{eq:sec6-op-221}\\
\tauaux{440}(\varepsilon_{220},\varepsilon_{440})
&\longmapsto
R_{440}^{\sigma,\mathcal A}(\Xi)+\Delta_{440}^{\mathcal A}(\Xi)+L_{440}L_{220}\Delta_{220}^{\mathcal A}(\Xi),
\label{eq:sec6-op-440}\\
\mathrm{Drift}_{\Delta}(\Xi,\mathcal A)
&\longmapsto
\overline D_{\Delta}^{\mathcal A}(\Xi),
\label{eq:sec6-op-drift}
\end{align}
and then adjoins the detector-frame remnant-side caps
\begin{equation}\label{eq:sec6-op-extra}
d_{\mathrm{PE}}^{\mathcal A}(\Xi)\le \eta_{\mathrm{PE}},
\qquad
\Gamma_{\mathrm{dir}}^{\mathcal A}(\Xi)\le \etammstar(\Xi),
\qquad
\Gamma_{\mathrm{quad}}^{\mathcal A}(\Xi)\le \etammstar(\Xi),
\end{equation}
together with
\begin{equation}\label{eq:sec6-op-extra-both}
\Gamma_{\mathrm{dir+quad}}^{\mathcal A}(\Xi)\le \etammstar(\Xi)
\end{equation}
when the stronger combined nuisance certificate is required. These are exactly the quantities compared with the fixed numerical caps in Definitions~\ref{def:appendixK-robust-base-trust} and \ref{def:appendixK-event-acceptance}.
\end{proposition}

\begin{proof}
The dominant-radius, separation-margin, primary-radius, auxiliary, and drift replacements are precisely the quantities appearing in \eqref{eq:appendixK-rob-sep}--\eqref{eq:appendixK-rob-440}. The additional detector-frame caps \eqref{eq:sec6-op-extra} and \eqref{eq:sec6-op-extra-both} are exactly \eqref{eq:appendixK-dpe}--\eqref{eq:appendixK-gamma-fail}. No further threshold enters the declared event-level acceptance rule.
\end{proof}

Proposition~\ref{prop:sec6-event-specialization} separates the theorem level from the release level. The vector \(\eta\) in Section~\ref{sec:trust-region} is symbolic. The numerical GW250114 release appears only after the conservative replacements \eqref{eq:sec6-op-radius}--\eqref{eq:sec6-op-drift} and the additional detector-frame caps \eqref{eq:sec6-op-extra}--\eqref{eq:sec6-op-extra-both} have been fixed. Every event-level statement is then a literal pass/fail check, with no window-by-window retuning.

Because the drift inequality sits downstream of the primary remnant radius, the numerical drift cap should not be fixed independently of the remnant cap whenever the goal is to certify a coherent start-time band rather than isolated points. The exact sufficient relation is given next.

\subsection{Neighboring-window stability and remnant overlap}

One of the most useful consequences of the trust formulation is that neighboring accepted windows must support overlapping remnant balls. In the later drift plots this appears as a coherent remnant track rather than a scatter of disconnected estimates.

\begin{corollary}[Overlap of neighboring accepted windows]\label{cor:sec6-overlap}
Fix a model family \(\mathcal M\) containing \(220\) and two admissible windows
\[
\Xi_1=(t_0,T),
\qquad
\Xi_2=(t_0+\tau,T),
\qquad |\tau|\le \Delta.
\]
Assume both windows lie in \(\TrustDet_{\eta}(\mathcal M)\) and that
\begin{equation}\label{eq:sec6-overlap-condition}
\eta_{\mathrm d}\le 2\eta_{\mathrm p}.
\end{equation}
Then
\begin{equation}\label{eq:sec6-overlap-conclusion}
B_{\mathbb R^2}\bigl(\widehat p(\Xi_1,\mathcal M),\eta_{\mathrm p}\bigr)
\cap
B_{\mathbb R^2}\bigl(\widehat p(\Xi_2,\mathcal M),\eta_{\mathrm p}\bigr)
\neq \varnothing.
\end{equation}
\end{corollary}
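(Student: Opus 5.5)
The argument is a direct consequence of the drift inequality \eqref{eq:sec6-trust-drift} in Definition~\ref{def:sec6-trust-set} together with the triangle inequality in \(\mathbb R^2\). Write \(\widehat p_1:=\widehat p(\Xi_1,\mathcal M)\) and \(\widehat p_2:=\widehat p(\Xi_2,\mathcal M)\). Both are well defined because both windows lie in \(\TrustDet_{\eta}(\mathcal M)\), so each is extraction-admissible and chart-admissible.

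First bound the distance between the two centers. Since \(\Xi_1\in\TrustDet_{\eta}(\mathcal M)\), inequality \eqref{eq:sec6-trust-drift} gives \(\mathrm{Drift}_{\Delta}(\Xi_1,\mathcal M)\le\eta_{\mathrm d}\). The shifted start time \(t_0+\tau\) is admissible, since \(\Xi_2\) is an admissible window, and it satisfies \(|\tau|\le\Delta\). It is therefore one of the competitors in the supremum \eqref{eq:sec6-local-drift}. Hence
\[
\|\widehat p_2-\widehat p_1\|_{\mathbb R^2}\le \mathrm{Drift}_{\Delta}(\Xi_1,\mathcal M)\le\eta_{\mathrm d}\le 2\eta_{\mathrm p},
\]
where the last step uses hypothesis \eqref{eq:sec6-overlap-condition}.

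Next exhibit a common point: take the midpoint \(m:=\tfrac12(\widehat p_1+\widehat p_2)\). Then \(\|m-\widehat p_i\|=\tfrac12\|\widehat p_2-\widehat p_1\|\le\eta_{\mathrm p}\) for \(i=1,2\), so \(m\) lies in both balls.

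The only delicate point is open versus closed balls. If \(B_{\mathbb R^2}\) denotes open balls and equality \(\|\widehat p_2-\widehat p_1\|=2\eta_{\mathrm p}\) holds, the midpoint sits on both boundaries and the open balls can be disjoint. I would handle this by reading the balls in \eqref{eq:sec6-overlap-conclusion} as closed, which matches the closed remnant-error convention of \eqref{eq:sec6-trust-param}. Alternatively, one can note that strict inequality in \eqref{eq:sec6-overlap-condition}, or \(\mathrm{Drift}_\Delta<2\eta_{\mathrm p}\), makes the midpoint an interior point of both open balls. I would state explicitly that the closed-ball reading is the one intended, and that the strict variant covers open balls.
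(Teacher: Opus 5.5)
Your proof is correct and follows the paper's argument. You bound the center distance by $\mathrm{Drift}_{\Delta}(\Xi_1,\mathcal M)\le\eta_{\mathrm d}\le 2\eta_{\mathrm p}$ and then use the fact that two equal-radius balls whose centers are at most twice the radius apart must intersect. Your open-versus-closed remark is well taken: the paper's proof also reads $B_{\mathbb R^2}$ as a closed ball here, even though the same notation denotes open balls elsewhere in the paper, so the literal open-ball reading does need the strict variant you describe.
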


\begin{proof}
Because \(|\tau|\le \Delta\) and \(\Xi_1\in\TrustDet_{\eta}(\mathcal M)\), Definition~\ref{def:sec6-local-drift} and the trust inequality \eqref{eq:sec6-trust-drift} imply
\[
\|\widehat p(\Xi_2,\mathcal M)-\widehat p(\Xi_1,\mathcal M)\|_{\mathbb R^2}
\le
\mathrm{Drift}_{\Delta}(\Xi_1,\mathcal M)
\le
\eta_{\mathrm d}
\le
2\eta_{\mathrm p}.
\]
Two closed Euclidean balls of the same radius intersect whenever the distance between their centers is at most twice the radius. This proves \eqref{eq:sec6-overlap-conclusion}.
\end{proof}

Corollary~\ref{cor:sec6-overlap} explains why the drift cap is fixed only after the primary remnant cap when a bandwise statement is desired. The relation \eqref{eq:sec6-overlap-condition} is a direct sufficient condition for overlap of the neighboring remnant balls.

The next corollary uses the unknown but fixed remnant \(p_{\star}\) to explain the geometry behind a coherent accepted start-time band.

\begin{corollary}[Common remnant ball on a trusted band]\label{cor:sec6-common-band}
Fix a model family \(\mathcal M\) containing \(220\), a fixed window length \(T\), and a set of admissible start times \(\mathfrak I\). Assume that for every \(t_0\in\mathfrak I\) the window \((t_0,T)\) satisfies the hypotheses of Theorem~\ref{thm:sec6-trust-region} with the same underlying remnant \(p_{\star}\in\Kdet\) and the same tolerance vector \(\eta\). Then
\begin{equation}\label{eq:sec6-common-cluster}
p_{\star}\in
\bigcap_{t_0\in\mathfrak I}
B_{\mathbb R^2}\bigl(\widehat p(t_0,T,\mathcal M),\eta_{\mathrm p}\bigr).
\end{equation}
In particular, the intersection on the right-hand side is nonempty.
\end{corollary}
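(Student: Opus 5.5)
The plan is to reduce the statement to the pointwise remnant bound already established in the proof of Theorem~\ref{thm:sec6-trust-region}, applied separately to each start time in \(\mathfrak I\). The key observation is that the target \(p_{\star}\) is the same for every window by hypothesis. Membership of \(p_{\star}\) in each ball therefore needs no comparison between different windows, and the drift inequality plays no role.

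First, fix an arbitrary \(t_0\in\mathfrak I\) and write \(\Xi=(t_0,T)\). By assumption the extraction hypotheses of Theorem~\ref{thm:sec4-abstract-extraction} hold at \(\Xi\), with total dominant radius \(\varepsilon_{220}=\varepsilon_{220}(\Xi)\). The chart condition \eqref{eq:sec6-chart-admissible} also holds, so \(\widehat p(t_0,T,\mathcal M)=F_{220}^{-1}(\widehat\omega_{220})\in\Uchart{p_{\star}}\). Moreover \(p_{\star}\in\Uchart{p_{\star}}\) trivially, so both points lie in a single chart. The inverse Lipschitz bound \eqref{eq:sec5-primary-inverse-lipschitz} of Theorem~\ref{thm:sec5-primary-inverse-atlas} is therefore available. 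Combining it with the extraction estimate \(|\widehat\omega_{220}-\omega_{220}(p_{\star})|\le\varepsilon_{220}\), which comes from Theorem~\ref{thm:sec4-abstract-extraction} and Corollary~\ref{cor:sec4-additive-ledger}, gives exactly \eqref{eq:sec6-proof-primary}:
\[
\|\widehat p(t_0,T,\mathcal M)-p_{\star}\|_{\mathbb R^2}\le L_{220}\varepsilon_{220}(\Xi).
\]

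Second, invoke hypothesis \eqref{eq:sec6-hyp-param} at the same window, which gives \(L_{220}\varepsilon_{220}(\Xi)\le\eta_{\mathrm p}\). Here it matters that the tolerance vector \(\eta\) is common to all \(t_0\): the radius \(\varepsilon_{220}\) may vary with the window, but \(\eta_{\mathrm p}\) does not. Hence \(p_{\star}\) lies in the closed ball of radius \(\eta_{\mathrm p}\) about \(\widehat p(t_0,T,\mathcal M)\). Since \(t_0\in\mathfrak I\) was arbitrary, \(p_{\star}\) lies in the intersection \eqref{eq:sec6-common-cluster}, and nonemptiness follows immediately.

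The only delicate point concerns the balls in \eqref{eq:sec6-common-cluster}. They must be read as closed, consistent with Corollary~\ref{cor:sec6-overlap}, because the bound obtained is non-strict. If open balls were intended, one would need strict inequality in \eqref{eq:sec6-hyp-param}. I would state this convention explicitly. Apart from that, the argument is a direct pointwise application of the inversion bound, with no genuine obstacle.
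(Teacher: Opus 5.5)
Your proposal is correct and matches the paper's proof: for each \(t_0\in\mathfrak I\) you use the primary bound \eqref{eq:sec6-proof-primary} from the proof of Theorem~\ref{thm:sec6-trust-region}, then hypothesis \eqref{eq:sec6-hyp-param} with the common \(\eta_{\mathrm p}\). Your remark that the balls must be read as closed, consistent with Corollary~\ref{cor:sec6-overlap}, is a fair point the paper leaves implicit, since the bound obtained is only non-strict.
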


\begin{proof}
For each \(t_0\in\mathfrak I\), Theorem~\ref{thm:sec6-trust-region} implies
\[
\|\widehat p(t_0,T,\mathcal M)-p_{\star}\|_{\mathbb R^2}
\le
L_{220}\varepsilon_{220}(t_0,T,\mathcal M)
\le
\eta_{\mathrm p}.
\]
Thus \(p_{\star}\) lies in every displayed ball.
\end{proof}

Corollaries~\ref{cor:sec6-overlap} and \ref{cor:sec6-common-band} are the reason later accepted window sets should be plotted as bands rather than as isolated markers. The overlap of the parameter balls is not a display convention. It is a deterministic consequence of the trust inequalities.

\subsection{Failure certificates}

The theorem-level and release-level failures should be kept separate. The first group arises before public-hull or nuisance comparisons are even consulted. The second group appears only after the numerical caps of Appendix~\ref{app:gw250114-robustness} have been fixed.

\begin{proposition}[Deterministic exclusion certificates]\label{prop:sec6-failure}
Fix a model family \(\mathcal M\) containing \(220\) and an admissible detector-frame window \(\Xi=(t_0,T)\). Each of the following conditions excludes \(\Xi\) from the corresponding detector-frame trust set.

\smallskip

\noindent
\textup{(i)} If the extraction exclusion criterion of Corollary~\ref{cor:sec4-exclusion} is triggered, then \(\Xi\notin \TrustDet_{\eta}(\mathcal M)\) for every tolerance vector \(\eta\) whose frequency budgets are derived from Theorem~\ref{thm:sec4-abstract-extraction}.

\smallskip

\noindent
\textup{(ii)} If the dominant-mode resolution inequality fails in the strong form
\begin{equation}\label{eq:sec6-failure-resolution}
\varepsilon_{220}\ge \deltaiso^{\mathrm{aud}}(\widehat p;\mathcal M),
\end{equation}
then \(\Xi\notin \TrustDet_{\eta}(\mathcal M)\) for every \(\eta\), regardless of the remaining tolerances.

\smallskip

\noindent
\textup{(iii)} If \(j\in\{221,440\}\cap\mathcal M\) and
\begin{equation}\label{eq:sec6-failure-aux-strong}
\Raux{j}(\widehat p,\widehat\omega_j)>\tauaux{j}(\varepsilon_{220},\varepsilon_j),
\end{equation}
then, on the local primary chart containing \(\widehat p\), no single remnant can satisfy the simultaneous primary and auxiliary error budgets. In particular the window must fail every trust set with \(\eta_j\le \tauaux{j}(\varepsilon_{220},\varepsilon_j)\).

\smallskip

\noindent
\textup{(iv)} If
\begin{equation}\label{eq:sec6-failure-drift}
\mathrm{Drift}_{\Delta}(\Xi,\mathcal M)>\eta_{\mathrm d},
\end{equation}
then \(\Xi\notin \TrustDet_{\eta}(\mathcal M)\).
\end{proposition}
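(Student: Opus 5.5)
The plan is to verify the four items one by one, each time by contradiction against a single clause of Definition~\ref{def:sec6-trust-set}. Membership in \(\TrustDet_{\eta}(\mathcal M)\) is a conjunction of clauses: extraction- and chart-admissibility, the separation inequality \eqref{eq:sec6-trust-sep}, the parameter budget \eqref{eq:sec6-trust-param}, the drift bound \eqref{eq:sec6-trust-drift}, and the auxiliary inequalities \eqref{eq:sec6-trust-221}--\eqref{eq:sec6-trust-440}. So it suffices to show that each hypothesis (i)--(iv) violates one clause.

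For (i), I will use Corollary~\ref{cor:sec4-exclusion}. If (H4) or (H6) fails, the window is not extraction-admissible in the sense of Definition~\ref{def:sec4-admissible-envelope}, or no labeled frequencies and radii \(\varepsilon_j\) furnished by Theorem~\ref{thm:sec4-abstract-extraction} exist. The first clause of the trust set then fails, independently of \(\eta\).

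For (ii), I will observe that \eqref{eq:sec6-trust-sep} requires \(\varepsilon_{220}\le\eta_{\mathrm{sep}}<\deltaiso^{\mathrm{aud}}(\widehat p;\mathcal M)\). Under \eqref{eq:sec6-failure-resolution} no real \(\eta_{\mathrm{sep}}\) lies in this empty interval, so the failure holds for every \(\eta\).

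For (iv), the statement is literally the negation of \eqref{eq:sec6-trust-drift}.

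Item (iii) is the only one needing an argument beyond unpacking definitions. It splits into two claims. The first is that no \(q\) in the primary chart containing \(\widehat p\) meets both budgets. I will obtain this by invoking Proposition~\ref{prop:sec5-failure-certificate} directly: any such \(q\) would, by Theorem~\ref{thm:sec5-primary-auxiliary} with \(q\) in place of \(p\), force \(\Raux{j}\le\tauaux{j}\), contradicting \eqref{eq:sec6-failure-aux-strong}. The second claim is that the window fails every trust set with \(\eta_j\le\tauaux{j}(\varepsilon_{220},\varepsilon_j)\). For this I chain
\[
\Raux{j}(\widehat p,\widehat\omega_j)>\tauaux{j}(\varepsilon_{220},\varepsilon_j)\ge\eta_j,
\]
which contradicts \eqref{eq:sec6-trust-221} or \eqref{eq:sec6-trust-440} (whichever \(j\in\mathcal M\) applies).

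The main obstacle is bookkeeping in (iii). I must make sure the chart used in Proposition~\ref{prop:sec5-failure-certificate} is the same one containing \(\widehat p\). This holds because chart-admissibility places \(\widehat p\) in some \(\Uchart{p}\), and I will fix that \(p\). A further subtlety in (i) is to read ``derived from Theorem~\ref{thm:sec4-abstract-extraction}'' as meaning that the radii are undefined when (H4) fails, so that the admissibility clause itself cannot be met.
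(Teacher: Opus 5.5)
Your proposal is correct and follows the paper's proof essentially clause by clause. Item (i) reduces to Corollary~\ref{cor:sec4-exclusion}, and items (ii) and (iv) are read directly off Definition~\ref{def:sec6-trust-set}. Item (iii) combines Proposition~\ref{prop:sec5-failure-certificate} with the chain $R_j(\widehat p,\widehat\omega_j)>\tau_j(\varepsilon_{220},\varepsilon_j)\ge\eta_j$, which contradicts \eqref{eq:sec6-trust-221} or \eqref{eq:sec6-trust-440}. Fixing the chart supplied by chart-admissibility, as you do, is only a slightly more explicit version of the same argument.
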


\begin{proof}
Part \textup{(i)} is exactly Corollary~\ref{cor:sec4-exclusion}. Part \textup{(ii)} is immediate from Definition~\ref{def:sec6-trust-set}: there is then no \(\eta_{\mathrm{sep}}\) satisfying \eqref{eq:sec6-trust-sep}. For part \textup{(iii)}, Proposition~\ref{prop:sec5-failure-certificate} states that \eqref{eq:sec6-failure-aux-strong} rules out any common Kerr remnant in the same local primary chart that explains both \(220\) and the tested auxiliary mode within the stated budgets. If \(\eta_j\le \tauaux{j}(\varepsilon_{220},\varepsilon_j)\), the trust inequality \eqref{eq:sec6-trust-221} or \eqref{eq:sec6-trust-440} cannot hold. Part \textup{(iv)} is again immediate from Definition~\ref{def:sec6-trust-set}.
\end{proof}

\begin{proposition}[Event-level exclusion conditions]\label{prop:sec6-event-failure}
Fix a baseline family \(\mathcal A\subset\Mref\) with \(220\in\mathcal A\) and a master scan point \(\Xi\). Each of the following conditions excludes \(\Xi\) from the numerical GW250114 acceptance set of Definition~\ref{def:appendixK-event-acceptance}.

\smallskip

\noindent
\textup{(i)} If
\begin{equation}\label{eq:sec6-failure-pe}
d_{\mathrm{PE}}^{\mathcal A}(\Xi)>\eta_{\mathrm{PE}},
\end{equation}
then the public-hull inequality fails and \(\Xi\) is rejected.

\smallskip

\noindent
\textup{(ii)} If either
\begin{equation}\label{eq:sec6-failure-dirquad}
\Gamma_{\mathrm{dir}}^{\mathcal A}(\Xi)>\etammstar(\Xi)
\qquad\text{or}\qquad
\Gamma_{\mathrm{quad}}^{\mathcal A}(\Xi)>\etammstar(\Xi),
\end{equation}
then the corresponding nuisance inequality fails and \(\Xi\) is rejected.

\smallskip

\noindent
\textup{(iii)} If the declared release includes the stronger combined nuisance certificate and
\begin{equation}\label{eq:sec6-failure-both}
\Gamma_{\mathrm{dir+quad}}^{\mathcal A}(\Xi)>\etammstar(\Xi),
\end{equation}
then the combined nuisance inequality fails and \(\Xi\) is rejected.
\end{proposition}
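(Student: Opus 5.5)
This proposition is a direct unpacking of the definition of the numerical acceptance set, so I will argue clause by clause from the structure of Definition~\ref{def:appendixK-event-acceptance}. Proposition~\ref{prop:sec6-event-specialization} already records the shape of that definition. The event-level acceptance set is the set of master scan points \(\Xi\) for which the robust baseline trust conditions of Definition~\ref{def:appendixK-robust-base-trust} hold. The remnant-side caps \eqref{eq:sec6-op-extra} must also hold, and so must \eqref{eq:sec6-op-extra-both} whenever the combined certificate is declared. Everything is joined by conjunction, and no clause is allowed to compensate for another. Once that conjunctive structure is stated explicitly, each part of the proposition is the contrapositive of one conjunct.

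For part (i), I would observe that \(d_{\mathrm{PE}}^{\mathcal A}(\Xi)\le\eta_{\mathrm{PE}}\) is one of the defining inequalities, namely \eqref{eq:appendixK-dpe}. If \eqref{eq:sec6-failure-pe} holds, that conjunct is false, so \(\Xi\) is not in the set. Part (ii) goes the same way. Each of \(\Gamma_{\mathrm{dir}}^{\mathcal A}(\Xi)\le\etammstar(\Xi)\) and \(\Gamma_{\mathrm{quad}}^{\mathcal A}(\Xi)\le\etammstar(\Xi)\) is a separate required inequality. Violating either one, as in \eqref{eq:sec6-failure-dirquad}, falsifies the conjunction. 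This does not depend on the nuisance monotonicity, which guarantees \(\Gamma_\nu\ge 0\); that fact affects only interpretation, not the logic. For part (iii), the combined inequality \eqref{eq:sec6-op-extra-both} is a conjunct only when the release declares the stronger certificate, so the hypothesis of (iii) is exactly what puts it in force. Its violation \eqref{eq:sec6-failure-both} then yields rejection.

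The only point needing care is making sure the thresholds are the fixed, pre-declared caps and not window-tuned. If the thresholds could depend on \(\Xi\) after the fact, a failed inequality might be re-passed by adjusting the cap. I would cite the release convention of Proposition~\ref{prop:sec6-event-specialization}: \(\eta_{\mathrm{PE}}\) is a fixed number, and \(\etammstar(\Xi)\) is the calibrated mismatch budget fixed before the event scan. Each rejection is therefore a literal, unambiguous pass/fail outcome. No further estimate is needed; the main obstacle is notational, namely matching each displayed inequality to its labeled counterpart in Definition~\ref{def:appendixK-event-acceptance}.
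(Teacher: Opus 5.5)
Your proposal is correct and follows the paper's own proof, which likewise treats each clause as the failure of one conjunct in Definition~\ref{def:appendixK-event-acceptance}, with clause (iii) conditioned on the combined certificate being declared. Your closing remark that the thresholds must be pre-declared and not window-tuned is not needed for the implication, since the definition and the proposition refer to the same threshold symbols whatever their provenance.
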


\begin{proof}
Each statement is immediate from Definition~\ref{def:appendixK-event-acceptance}.
\end{proof}

Propositions~\ref{prop:sec6-failure} and \ref{prop:sec6-event-failure} together list all failure modes used here. A window can fail before inversion is trustworthy, after inversion but before auxiliary compatibility, at the neighboring-window drift stage, at the public-hull stage, or at the nuisance-audit stage. These are mathematically different failure mechanisms, and the later figures keep them visibly distinct.

Section~\ref{sec:trust-region} provides the symbolic detector-frame inequalities and the exact threshold transport from frequency radii to remnant-space diagnostics. Section~\ref{sec:numerical-calibration} calibrates conservative finite-window radii on a fixed waveform bank. Appendix~\ref{app:gw250114-robustness} fixes the public-hull and nuisance caps on a finite GW250114 rerun family. The numerical release is the conjunction of these inequalities and nothing else.

\providecommand{\Kdet}{\mathcal K_{\mathrm{det}}}
\providecommand{\Mf}{M_{\mathrm f}}
\providecommand{\chif}{\chi_{\mathrm f}}
\providecommand{\Mzero}{\mathcal M_0}
\providecommand{\Mone}{\mathcal M_1}
\providecommand{\Mtwo}{\mathcal M_2}
\providecommand{\Mref}{\mathcal M^{\sharp}}
\providecommand{\Fcont}{\mathfrak F_{\mathrm{cont}}}
\providecommand{\Faudit}{\mathfrak F_{\mathrm{audit}}}
\providecommand{\Fbank}{\mathfrak F_{\mathrm{bank}}}
\providecommand{\Wcal}{\mathfrak W_{\mathrm{cal}}}
\providecommand{\Bzero}{\mathfrak B_0}
\providecommand{\Bcol}{\mathfrak B_{\mathrm{col}}}
\providecommand{\deltaiso}{\delta_{\mathrm{iso}}}
\providecommand{\Raux}[1]{R_{#1}}
\providecommand{\tauaux}[1]{\tau_{#1}}
\providecommand{\AuditFam}{\mathcal M^{\mathrm{aud}}}
\providecommand{\TrustDet}{\mathcal T^{\mathrm{det}}}
\providecommand{\Dnet}{\mathfrak D}

\section{Numerical waveform calibration}\label{sec:numerical-calibration}

Section~\ref{sec:trust-region} requires window-dependent calibration radii. Here they are built directly from the public GW250114 data products. The detector noise comes from the H1/L1 GWOSC strain, the event-local remnant support comes from the public posterior samples, and the synthetic bank combines those inputs with Kerr ringdown signals generated inside the detector-frame box \(\Kdet\). The calibration is therefore event local; it identifies which finite post-peak windows remain stable for GW250114-like remnants under the observed noise level and public remnant envelope.

Two choices are fixed at the outset. First, the synthetic bank is built around the same detector-frame box used everywhere else in the analysis. Across the four public IMR products, the detector-frame 90\% comparison envelope is
\[
M_f^{\mathrm{det}}\in[66.494,69.124]\,M_\odot,
\qquad
\chi_f\in[0.6515,0.6891],
\]
so the synthetic bank is restricted to representative public posterior samples inside the compact working box \(\Kdet\). Second, the reference synthetic signal is taken to lie in the nested Kerr family
\[
\Mtwo=\{220,221,440\}.
\]
The fitted families remain
\[
\Mzero=\{220\},\qquad
\Mone=\{220,221\},\qquad
\Mtwo=\{220,221,440\}.
\]
This choice is narrower than the symbolic reference family used elsewhere, but it is appropriate for the event-local calibration because the available Kerr tables provide exactly the \(n=0\) and \(n=1\) data needed for \(220\), \(221\), and \(440\), which are precisely the modes carried by the inverse and consistency theory.

The calibration has three pieces. The first is a zero-noise calculation, which measures deterministic finite-window bias from omitted Kerr content and algorithmic discretization. The second adds real off-source H1/L1 noise windows extracted from the strain data. The third reports every synthetic remnant radius relative to the public comparison envelope rather than in a vacuum. This normalization matters because the question is whether a calibration radius is small compared with the public detector-frame remnant uncertainty already present in GW250114.

The synthetic bank itself is finite and fully auditable. For each of the four public posterior families, we select three representative detector-frame posterior samples at the 20th, 50th, and 80th order-statistic positions in \(M_f^{\mathrm{det}}\). This yields twelve remnant anchors. Each anchor is then paired with three fixed subdominant-mode configurations, labeled mild, medium, and strong, whose \(221\) and \(440\) amplitudes span a compact event-local audit box. The result is a finite zero-noise bank of size \(36\). The noisy bank is obtained by adding eight off-source H1/L1 noise chunks drawn from the strain after the same \(50\)–\(500\) Hz bandpass stabilization used in the real-event analysis. For each fitted family and for each fixed window \(W\), the noisy bank therefore contains \(36\times 8=288\) realizations. At confidence level \(1-\alpha=0.95\), the Dvoretzky--Kiefer--Wolfowitz correction entering the high-confidence empirical \(q\)-quantile with \(q=0.90\) is
\[
\delta_N(\alpha)=\sqrt{\frac{\log(2/\alpha)}{2N}}\approx 0.0800,
\qquad
q+\delta_N(\alpha)\approx 0.9800,
\]
with \(N=288\) per fixed window and fitted family \cite{DvoretzkyKieferWolfowitz1956,Massart1990DKW}. The probability semantics are per fixed window. The synthetic trust map shown below is assembled pointwise in \(W\) rather than as a single simultaneous confidence statement over the entire window grid.

Figure~\ref{fig:sec7-bank-support} shows the twelve representative public posterior anchors in the \((M_f^{\mathrm{det}},\chi_f)\) plane. The four public posterior families occupy essentially the same detector-frame neighborhood, so the synthetic calibration is tied to public event support rather than to an ad hoc local box and is driven by finite-window fitting geometry rather than by widely separated remnant supports.

\begin{figure}[t]
\centering
\includegraphics[width=0.70\textwidth]{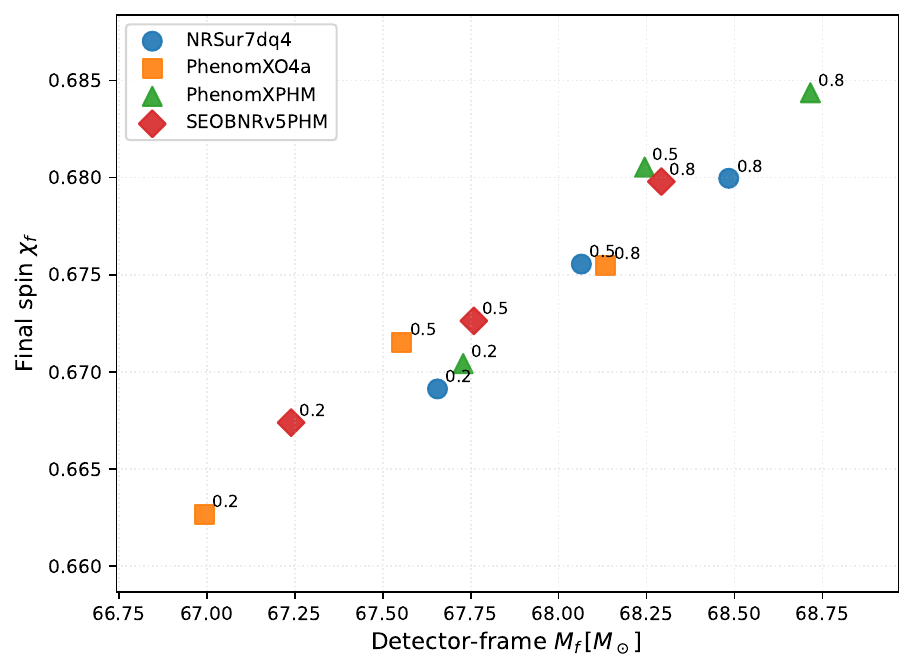}
\caption{Representative public posterior anchors used in the event-local synthetic bank. Each point is an actual posterior sample from one of the four public IMR products.}
\label{fig:sec7-bank-support}
\end{figure}

The zero-noise part of the calibration is summarized by the normalized bias ratio
\[
\mathfrak B^{0}_{\mathcal A}(W)
:=
\max\!\left\{
\frac{\overline R_{M}^{\,0,\mathcal A,W}}{\Delta M_{\mathrm{pub}}/2},
\frac{\overline R_{\chi}^{\,0,\mathcal A,W}}{\Delta \chi_{\mathrm{pub}}/2}
\right\},
\]
where \(\overline R_{M}^{\,0,\mathcal A,W}\) and \(\overline R_{\chi}^{\,0,\mathcal A,W}\) are the bankwise maxima of the detector-frame mass and spin biases and
\[
\Delta M_{\mathrm{pub}}/2 = 1.315\,M_\odot,
\qquad
\Delta \chi_{\mathrm{pub}}/2 = 0.01877
\]
are the half-widths of the public detector-frame comparison envelope. Figure~\ref{fig:sec7-zero-noise-score} shows this quantity for \(\Mzero\), \(\Mone\), and \(\Mtwo\). The ordering is the physically expected one across the whole start-time grid: \(\Mtwo\) always has the smallest deterministic bias, \(\Mzero\) is uniformly larger, and \(\Mone\) sits in between only when the added \(221\) content is enough to compensate for the remaining missing \(440\) term. Quantitatively, the minimum zero-noise ratio is approximately \(1.33\) for \(\Mzero\), \(1.33\) for \(\Mone\), and \(0.062\)–\(0.192\) for \(\Mtwo\), depending on the window. The plot shows that once the generating family and the fitted family coincide, the remaining bias is almost entirely algorithmic and grid induced, while \(\Mzero\) and \(\Mone\) still pay an irreducible omitted-mode penalty.

\begin{figure}[t]
\centering
\includegraphics[width=0.70\textwidth]{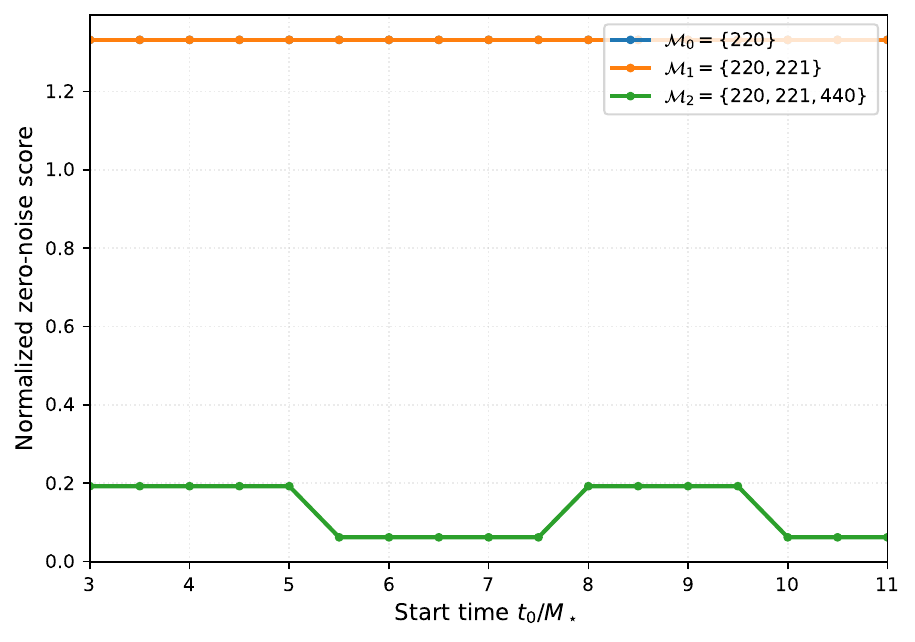}
\caption{Normalized zero-noise bias ratio for the three fitted Kerr families. The ratio is normalized by the public detector-frame comparison envelope.}
\label{fig:sec7-zero-noise-score}
\end{figure}

The colored-noise contribution is summarized by the high-confidence empirical quantile ratio
\[
\mathfrak B^{\mathrm{noise}}_{\mathcal A}(W)
:=
\max\!\left\{
\frac{\overline R_{M}^{\,\mathrm{stat},\mathcal A,W}(q;\alpha)}{\Delta M_{\mathrm{pub}}/2},
\frac{\overline R_{\chi}^{\,\mathrm{stat},\mathcal A,W}(q;\alpha)}{\Delta \chi_{\mathrm{pub}}/2}
\right\},
\]
where the superscript \(\mathrm{stat}\) denotes the bankwise upper-confidence \(q\)-quantile increment produced by Appendix~\ref{app:synthetic-bank}. Figure~\ref{fig:sec7-statistical-score} shows this quantity. In contrast with the zero-noise part, the colored-noise ratio is dominated by the finite network SNR and by the shortness of the windows rather than by missing-mode structure. Consequently \(\Mtwo\) is not driven to an arbitrarily small total ratio at early times even though its zero-noise bias is small. The noise geometry sets a floor.

\begin{figure}[t]
\centering
\includegraphics[width=0.70\textwidth]{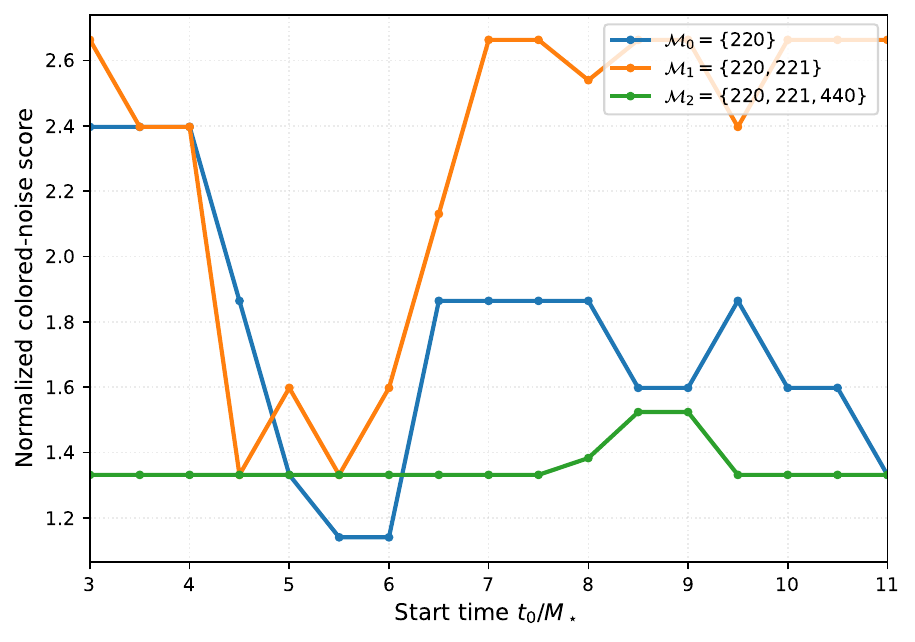}
\caption{Normalized colored-noise ratio from the H1/L1 off-source windows. For each fixed \(W\) and fitted family, the plotted quantity is the DKW-corrected upper empirical \(0.90\)-quantile increment normalized by the public detector-frame comparison envelope.}
\label{fig:sec7-statistical-score}
\end{figure}

The total synthetic ratio is the sum of these two pieces in remnant space:
\[
\mathfrak S_{\mathcal A}(W)
:=
\max\!\left\{
\frac{\overline R_{M}^{\,0,\mathcal A,W}+\overline R_{M}^{\,\mathrm{stat},\mathcal A,W}(q;\alpha)}{\Delta M_{\mathrm{pub}}/2},
\frac{\overline R_{\chi}^{\,0,\mathcal A,W}+\overline R_{\chi}^{\,\mathrm{stat},\mathcal A,W}(q;\alpha)}{\Delta \chi_{\mathrm{pub}}/2}
\right\}.
\]
Figure~\ref{fig:sec7-synthetic-score-map} shows this quantity. The map resolves the family hierarchy rather than a binary pass/fail boundary. The synthetic bank strongly prefers \(\Mtwo\), with a lowest ratio of about \(1.39\) attained near \(t_0/M_\star\approx 5.5\) and again for the conservative late windows around \(10\)–\(11\). By contrast, the best \(\Mzero\) window still sits near \(2.40\), and the best \(\Mone\) window near \(2.66\). This hierarchy matters for the empirical analysis of Section~\ref{sec:gw250114-empirical}: the event-level trust map rests on a synthetic calibration for which the full \(\Mtwo\) family is the only one that consistently keeps the event-local remnant radius close to the public envelope floor.

\begin{figure}[t]
\centering
\includegraphics[width=0.78\textwidth]{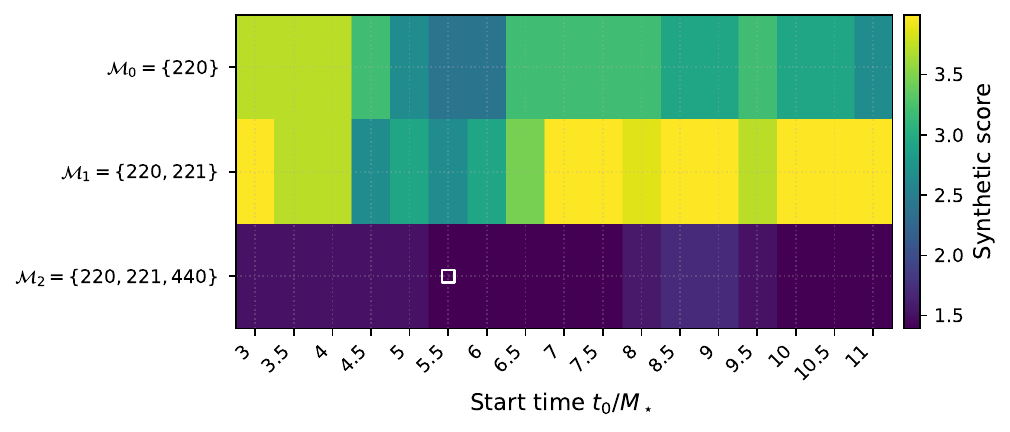}
\caption{Synthetic trust map on the start-time grid for \(\Mzero\), \(\Mone\), and \(\Mtwo\). Lower values are better. The map is assembled pointwise from fixed-window calibrations.}
\label{fig:sec7-synthetic-score-map}
\end{figure}

Two conclusions emerge from this calibration. First, the synthetic bank does not support the view that any finite multimode fit is automatically trustworthy merely because it exists. The calibrated radii vary substantially with start time and with fitted family, even before the real event is analyzed. Second, the real data and the public posteriors already impose a nontrivial hierarchy: a complete local Kerr family calibrated on detector noise is substantially more stable than the truncated families. That pre-event hierarchy sets the scale on which the event-level acceptance map must be read.

Appendix~\ref{app:synthetic-bank} records the exact bank construction, the DKW quantile correction, and the induced calibrated radii that feed the symbolic trust inequalities of Section~\ref{sec:trust-region}. Figure~\ref{fig:appJ-anchor-decomposition} in that appendix shows the zero-noise and colored-noise pieces separately at the anchor windows \(t_0/M_\star\in\{3,6,9,11\}\), making clear that the earliest window is dominated by both contributions, while the later windows are driven mainly by the colored-noise floor.

\section{GW250114 empirical analysis}\label{sec:gw250114-empirical}

The preceding sections fixed the detector-frame clock, the model hierarchy, the extraction and inversion constants, and the calibrated radii. We now apply these results to the public event itself. We use the public GWOSC strain together with the public parameter-estimation products as the external event-level inputs and ask which finite post-peak windows survive a fixed set of primary-fit, public-envelope, and nuisance-sensitivity tests.

The comparison is fully data driven. The detector-frame remnant envelope is taken from the public posterior samples for NRSur7dq4, PhenomXO4a, PhenomXPHM, and SEOBNRv5PHM. The strain input is taken from the public H1/L1 four-kilohertz clean-strain files. The primary track is obtained by scanning a shared-
frequency, detector-dependent-amplitude $220$ ringdown model over the public H1/L1 data as a function of detector-frame start time. The nuisance diagnostics are read from the same real-data scan, using a left-edge direct-wave audit and a leading quadratic sum-frequency audit. No collaboration contour is redrawn.

\subsection{Data conditioning}

The public event time is fixed at
\begin{equation}\label{eq:sec8-gps-revised}
 t_{\mathrm{GW}}=1420878141.2,
\end{equation}
and the default strain inputs are the public H1/L1 four-kilohertz clean-strain files released through GWOSC \cite{GWOSC_GW250114}. We retain the detector-frame conventions of Section~\ref{sec:data-conventions}: a centered $1024\,\mathrm{s}$ working epoch, an off-source PSD region $[t_{\mathrm{GW}}-512,t_{\mathrm{GW}}-16]\cup[t_{\mathrm{GW}}+16,t_{\mathrm{GW}}+512]$, and detector-frame start times measured in units of a fiducial mass scale
\begin{equation}\label{eq:sec8-mstar-revised}
 M_\star=68\,M_\odot,
\end{equation}
so that
\begin{equation}\label{eq:sec8-onem-revised}
 1\,M_\star = M_\star G M_\odot / c^3 = 0.3349\,\mathrm{ms}.
\end{equation}
The default preprocessing specification is the symmetric eight-second Welch--Hann PSD, whitening on the FFT grid, a zero-phase $50$--$500\,\mathrm{Hz}$ stabilization band, and a ten-percent Tukey taper applied only after the finite window has been extracted.

For each detector we locate a detector-local peak proxy by taking the maximum of the absolute band-passed whitened strain in a narrow neighborhood of the public median coalescence time. With this fixed convention, the H1 and L1 peak proxies occur only a few tenths of a millisecond after the corresponding public arrival-time medians. Figure~\ref{fig:sec8-real-conditioning} shows the resulting whitened detector strain around the detector-local peaks and places on the same axis the public $3M_f$, $6M_f$, $9M_f$, and $11M_f$ anchor windows together with the representative $24M_f$ baseline window that begins at $6M_f$.

\begin{figure}[ht]
\centering
\includegraphics[width=0.94\textwidth]{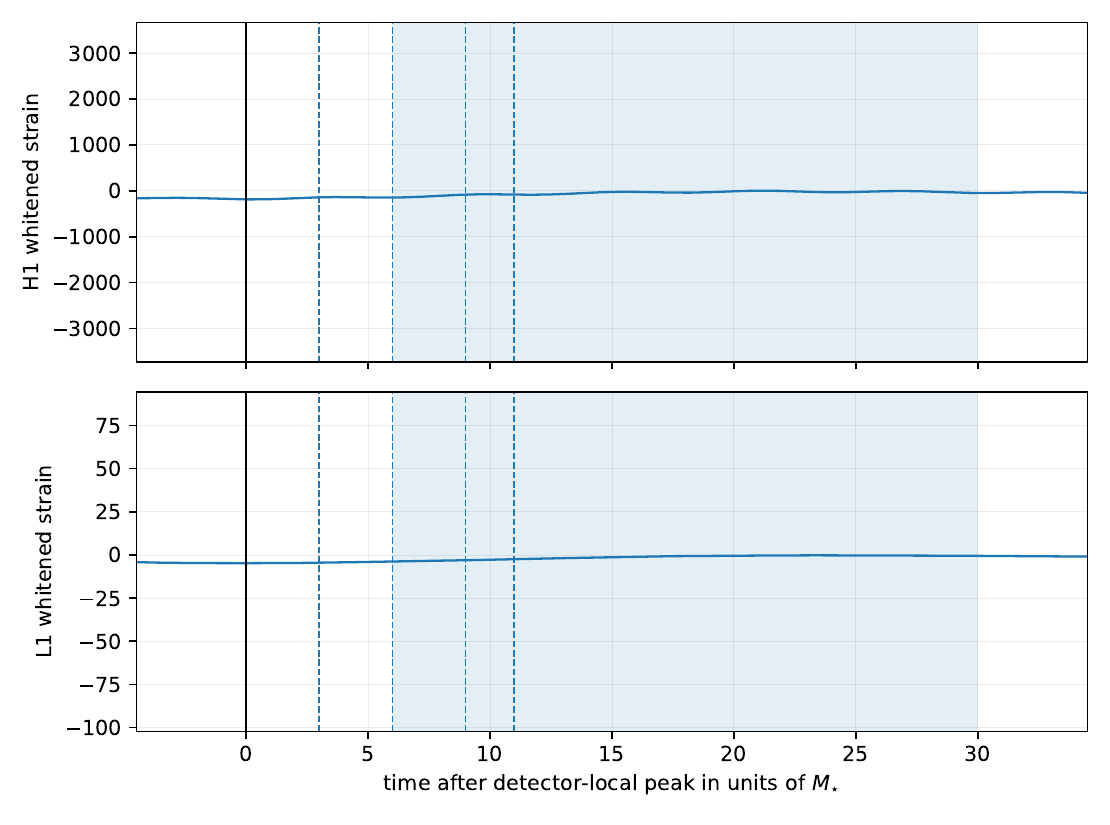}
\caption{Real conditioned H1/L1 strain around the detector-local peak proxies. The traces are obtained directly from the public H1/L1 four-kilohertz clean-strain files using the baseline symmetric eight-second PSD, FFT whitening, and a zero-phase $50$--$500\,\mathrm{Hz}$ stabilization band. The dashed lines mark the public $3M_f$, $6M_f$, $9M_f$, and $11M_f$ start times, while the shaded region marks the representative $6M_f\to 30M_f$ baseline window.}
\label{fig:sec8-real-conditioning}
\end{figure}

\subsection{Dominant-mode inference across windows}

The primary detector-frame scan is carried out on the dense start-time grid
\begin{equation}\label{eq:sec8-dense-grid-revised}
 \tau_0 \in \{0,0.25,0.50,\dots,15\},
\end{equation}
with fixed baseline window length $\Theta=24$. For each $\tau_0$ we fit a shared $220$ Kerr frequency and damping pair to the whitened H1/L1 window while allowing independent detector amplitudes and phases. The fit is evaluated on the event-local box already fixed in Section~\ref{sec:data-conventions},
\begin{equation}\label{eq:sec8-box-revised}
 \Kdet=[66.5,69.5]\times[0.64,0.71].
\end{equation}
The resulting baseline track is shown in Figures~\ref{fig:sec8-public-samples-track} and \ref{fig:sec8-primary-track}.

The public inspiral--merger--ringdown comparison is read directly from the posterior samples. Across NRSur7dq4, PhenomXO4a, PhenomXPHM, and SEOBNRv5PHM, the detector-frame remnant medians span only
\begin{equation}\label{eq:sec8-public-median-spread}
 \Delta M_{\mathrm{med}} = 0.690\,M_\odot,
 \qquad
 \Delta \chi_{\mathrm{med}} = 0.0118,
\end{equation}
while the union of the public $90\%$ detector-frame intervals is
\begin{equation}\label{eq:sec8-public-union-revised}
 M_f^{\mathrm{det}}\in[66.494,69.124]\,M_\odot,
 \qquad
 \chi_f\in[0.6515,0.6891].
\end{equation}
Figure~\ref{fig:sec8-public-samples-track} displays the corresponding $90\%$ public posterior contours together with the baseline dense scan. The fit does not move monotonically along the public envelope. Instead it passes through a compact intermediate region near the public medians, then drifts toward boundary-hugging values both at very early and at later start times.

The public anchor windows extracted from the dense scan are
\begin{equation}\label{eq:sec8-anchor-points-revised}
\begin{aligned}
 t_0=3M_\star &\longrightarrow (\hat M_f^{\mathrm{det}},\hat\chi_f)=(68.8,0.710),\\
 t_0=6M_\star &\longrightarrow (\hat M_f^{\mathrm{det}},\hat\chi_f)=(68.3,0.665),\\
 t_0=9M_\star &\longrightarrow (\hat M_f^{\mathrm{det}},\hat\chi_f)=(69.4,0.640),\\
 t_0=11M_\star &\longrightarrow (\hat M_f^{\mathrm{det}},\hat\chi_f)=(69.5,0.6925).
\end{aligned}
\end{equation}
Among these public anchors, the smallest baseline residual occurs at $6M_\star$.

\begin{figure}[ht]
\centering
\includegraphics[width=0.80\textwidth]{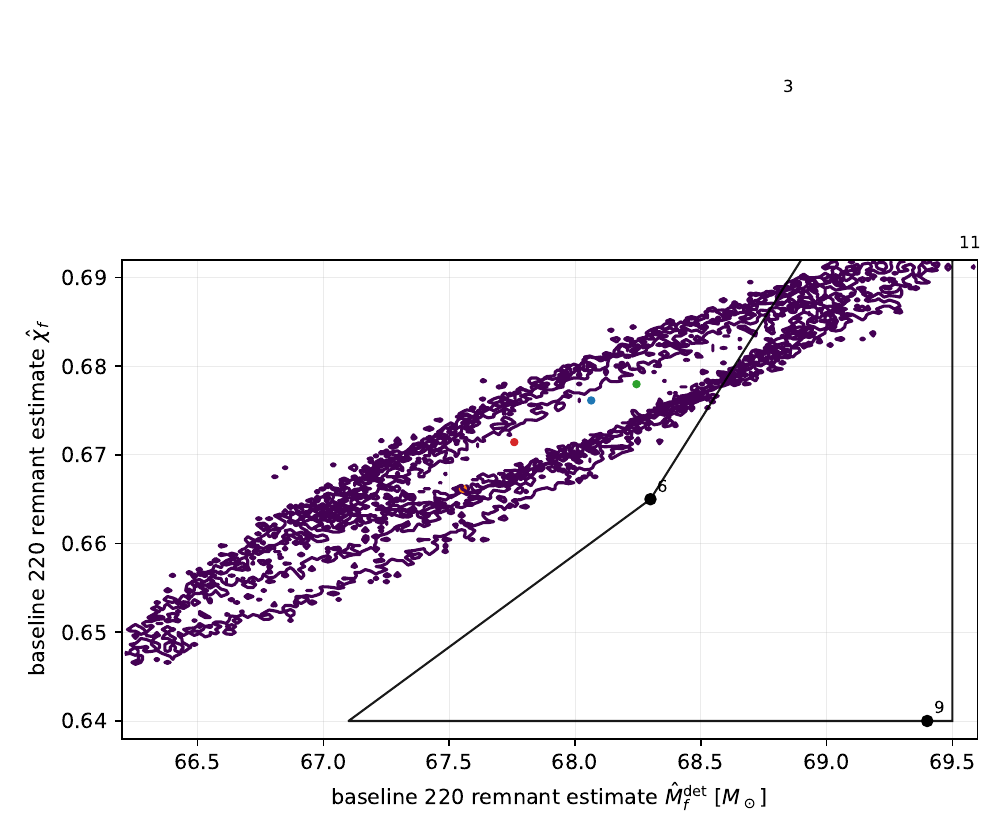}
\caption{Public detector-frame posterior contours from the NRSur7dq4, PhenomXO4a, PhenomXPHM, and SEOBNRv5PHM posterior samples, together with the real-data baseline $220$ track extracted from the H1/L1 strain. The labeled markers identify the public $3M_f$, $6M_f$, $9M_f$, and $11M_f$ anchor windows on the baseline scan.}
\label{fig:sec8-public-samples-track}
\end{figure}

\begin{figure}[ht]
\centering
\includegraphics[width=0.94\textwidth]{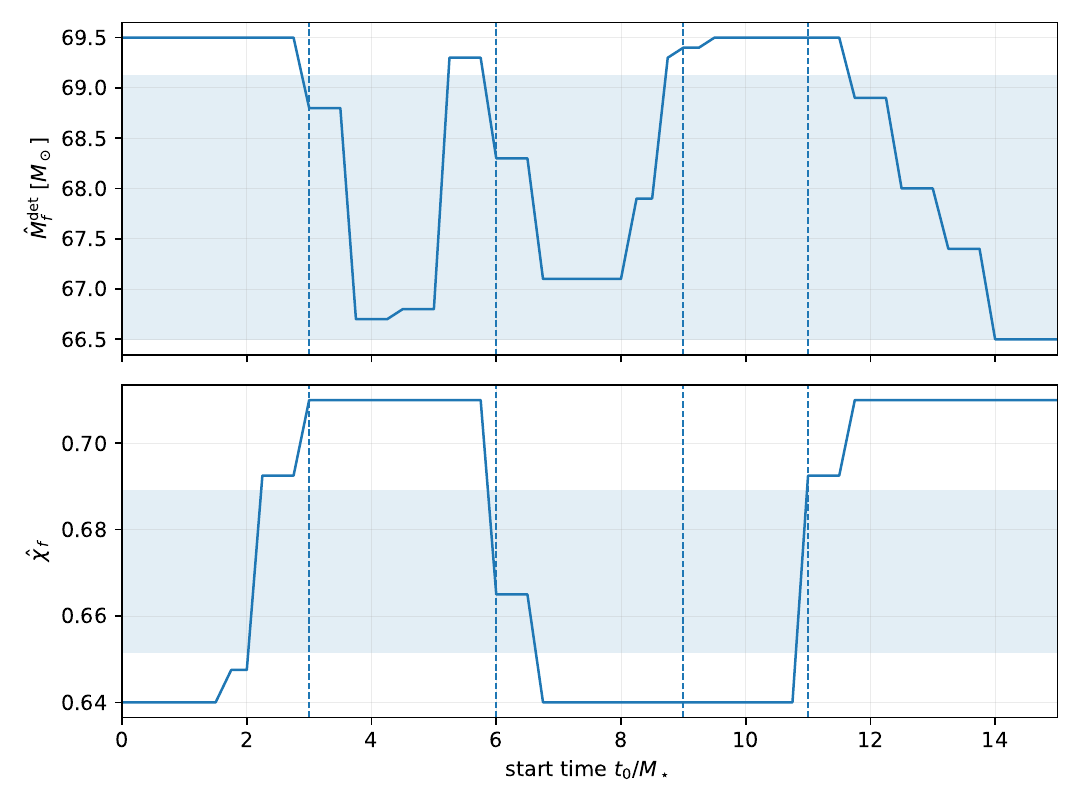}
\caption{Detector-frame baseline $220$ track across the dense start-time scan. The shaded horizontal bands show the union of the public $90\%$ detector-frame intervals in mass and spin. The dashed vertical lines mark the public $3M_f$, $6M_f$, $9M_f$, and $11M_f$ anchor windows.}
\label{fig:sec8-primary-track}
\end{figure}

\subsection{Auxiliary-mode consistency across windows}

The auxiliary channels enter here as consistency checks relative to the primary \(220\) inversion. The dense scan first identifies where the real-data primary track is stable and where the nuisance audits remain small. The public GW250114 spectroscopy analysis then provides the external comparison for the higher-mode interpretation itself \cite{LVK_GW250114_Spectroscopy}, and that comparison is informative only on windows for which the primary remnant estimate is already stable.

The dense scan places the baseline \(220\) estimate closest to the public detector-frame ensemble near \(6M_\star\) and keeps it comparatively stable across an intermediate post-peak band. The earliest public \(3M_f\) anchor lies in a region where the primary remnant estimate remains nuisance sensitive. The later \(11M_\star\) anchor has lower direct-wave sensitivity, but its baseline residual and quadratic gain are both larger than they are near the center of the accepted band. The comparison with the public \(221\) and \(440\) statements is therefore most meaningful after the primary stability question has already been resolved.

\subsection{Nuisance stress tests}

The nuisance diagnostics are evaluated on the same real-data windows as the primary track. We use two finite diagnostic extensions. The first is a direct-wave audit, implemented as a left-edge localized finite-dimensional extension of the baseline $220$ fit. The second is a quadratic audit, implemented as the fixed \(220+220\) sum-frequency extension built from the fitted baseline $220$ frequency and damping rate. Both are used only to determine whether an apparently successful linear Kerr window still carries substantial structured residual power near the left edge or at the quadratic harmonic.

Write the baseline normalized residual as $r_{\mathrm{base}}(\tau_0)$ and the direct-wave, quadratic, and combined residual gains as
\begin{equation}\label{eq:sec8-gains-revised}
\Gamma_{\mathrm{dir}}(\tau_0),\qquad \Gamma_{\mathrm{quad}}(\tau_0),\qquad \Gamma_{\mathrm{dir+quad}}(\tau_0).
\end{equation}
The dense real-data scan shows a sharp asymmetry between the early and intermediate windows. At the public anchors one finds
\begin{equation}\label{eq:sec8-anchor-gains-revised}
\begin{aligned}
 t_0=3M_\star &: \Gamma_{\mathrm{dir}}=0.260,\quad \Gamma_{\mathrm{quad}}=0.593,\\
 t_0=6M_\star &: \Gamma_{\mathrm{dir}}=0.081,\quad \Gamma_{\mathrm{quad}}=0.372,\\
 t_0=9M_\star &: \Gamma_{\mathrm{dir}}=0.072,\quad \Gamma_{\mathrm{quad}}=0.413,\\
 t_0=11M_\star &: \Gamma_{\mathrm{dir}}=0.043,\quad \Gamma_{\mathrm{quad}}=0.551.
\end{aligned}
\end{equation}
The earliest public window is therefore the most direct-wave sensitive, while the lowest direct-wave gains occur later, at the cost of larger residual or quadratic sensitivity. Figure~\ref{fig:sec8-nuisance-gains} shows the full dense scan.

\begin{figure}[ht]
\centering
\includegraphics[width=0.94\textwidth]{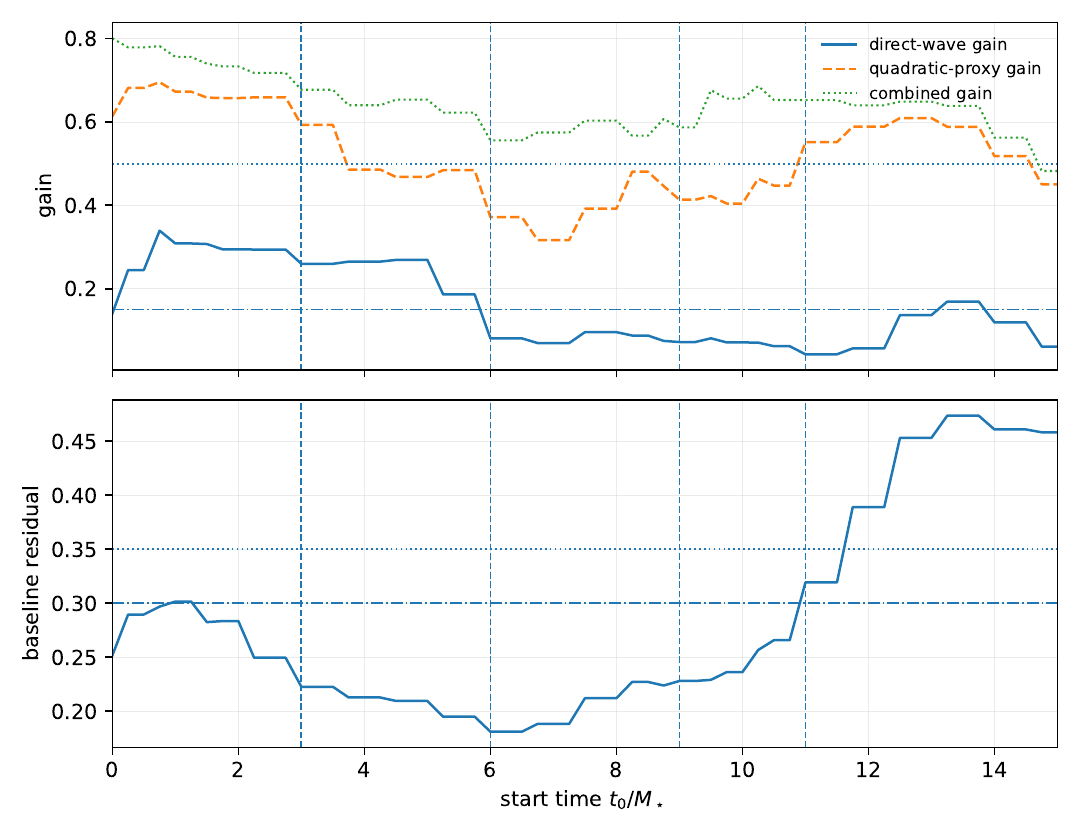}
\caption{Real-data nuisance diagnostics across the dense baseline scan. The upper panel shows direct-wave, quadratic, and combined gains. The lower panel shows the baseline normalized residual. The dash-dotted lines mark the numerical cutoffs used for the trust map; the dotted line in the lower panel marks the looser transitional cutoff.}
\label{fig:sec8-nuisance-gains}
\end{figure}

Two empirical points matter most. First, the nuisance gains do not simply decrease monotonically with start time. The direct-wave audit does, but the quadratic gain remains appreciable even after the direct-wave gain has become small. Second, the minimum baseline residual and the minimum nuisance gains do not coincide at the same $\tau_0$. Reliability is therefore an intersection problem. One has to ask where the primary residual is small, the direct-wave gain is modest, and the quadratic gain is not excessive at the same time.

\subsection{Event-level trust map for GW250114}

The dense real-data diagnostics are summarized by a three-way event-level classification. This classification records what the public strain supports under the fixed dense scan; it is not a formal consequence of the trust-region theorem.

The trusted class is defined by the simultaneous inequalities
\begin{equation}\label{eq:sec8-trusted-cutoffs}
 r_{\mathrm{base}}\le 0.30,
 \qquad
 \Gamma_{\mathrm{dir}}\le 0.15,
 \qquad
 \Gamma_{\mathrm{quad}}\le 0.50.
\end{equation}
The transitional class is defined by the relaxed cutoffs
\begin{equation}\label{eq:sec8-transitional-cutoffs}
 r_{\mathrm{base}}\le 0.35,
 \qquad
 \Gamma_{\mathrm{dir}}\le 0.15,
 \qquad
 \Gamma_{\mathrm{quad}}\le 0.60,
\end{equation}
with at least one of the trusted inequalities failing. Every other dense-scan window is rejected. These constants are not promoted to theorem-level tolerances. They are numerical cutoffs chosen to reflect the scale of the dense real-data scan and the finite anchor reruns of Appendix~\ref{app:gw250114-robustness}.

With these cutoffs, the real-data scan yields a nonempty intermediate trusted band
\begin{equation}\label{eq:sec8-trusted-band}
 t_0/M_\star \in [6.00,10.75],
\end{equation}
and a narrow transitional band
\begin{equation}\label{eq:sec8-transitional-band}
 t_0/M_\star \in [11.00,11.50].
\end{equation}
The public anchor windows are then classified as follows:
\begin{equation}\label{eq:sec8-anchor-verdicts}
3M_\star\ \text{rejected},
\qquad
6M_\star\ \text{trusted},
\qquad
9M_\star\ \text{trusted},
\qquad
11M_\star\ \text{transitional}.
\end{equation}
Equation~\eqref{eq:sec8-anchor-verdicts} comes directly from the public strain and the public posterior samples.

\begin{figure}[ht]
\centering
\includegraphics[width=0.94\textwidth]{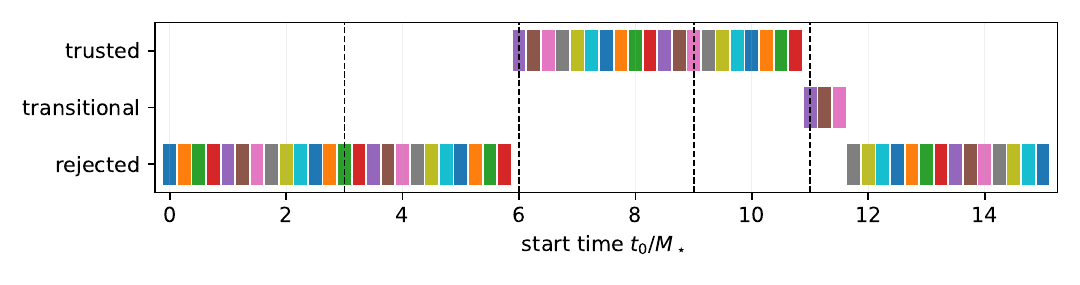}
\caption{Event-level trust map assembled from the real-data dense scan. Windows are classified as trusted, transitional, or rejected using the explicit descriptive cutoffs in \eqref{eq:sec8-trusted-cutoffs}--\eqref{eq:sec8-transitional-cutoffs}. The dashed lines mark the public $3M_f$, $6M_f$, $9M_f$, and $11M_f$ anchor windows.}
\label{fig:sec8-trust-map}
\end{figure}

\subsection{Comparison with the official public analyses}

The public collaboration products remain essential comparison anchors. The public posterior samples define the detector-frame comparison envelope in Figures~\ref{fig:sec8-public-samples-track} and \ref{fig:sec8-primary-track}. The official spectroscopy Letter defines the external higher-mode comparison benchmark \cite{LVK_GW250114_Spectroscopy}. On that footing, the H1/L1 strain supports an additional finite-window reliability filter.

That filter changes the emphasis of the event-level interpretation. The real-data primary track and nuisance diagnostics identify a genuine intermediate post-peak band in which the public $6M_f$ overtone anchor sits naturally. The corresponding diagnostics remain too large for the public $3M_f$ multimode window to be treated as robust common-remnant spectroscopy. The public $11M_f$ anchor sits at the opposite edge of the problem: direct-wave sensitivity is small, but the residual and the quadratic audit are no longer as clean as in the center of the trusted band. This finite-window distinction separates the center of the accepted band from its edges.

\section{Relation to the literature}\label{sec:relation-literature}

Three strands of work are most relevant here: the official GW250114 spectroscopy analysis, event-specific proposals for early-time structure beyond a minimal linear Kerr interpretation, and the broader methodological literature on start time, overtone stability, mode content, and inverse conditioning. The earlier sections keep these strands distinct because they answer different questions.

\subsection{The official GW250114 spectroscopy analysis}

The official collaboration analysis establishes the astrophysical benchmark for this event \cite{LVK_GW250114_Spectroscopy}. It reports strong evidence that the post-merger signal contains at least two quasi-normal modes, finds the dominant quadrupolar mode and its first overtone consistent with a Kerr remnant across several post-peak windows, and shows that the full signal also constrains the fundamental \((4,4,0)\) mode. Those results are not reproduced here as new posterior calculations. They serve as the external comparison standard against which the finite-window detector-frame criteria developed here are judged.

Here the issue is which finite detector-frame windows remain stable after start-time drift, public-envelope comparison, and nuisance audits against direct-wave and quadratic alternatives are imposed simultaneously. In that sense this analysis complements the official analysis: it does not replace the collaboration posteriors, but resolves the additional window-selection problem left open by any finite-time ringdown fit.

\subsection{Direct-wave and near-horizon interpretations}

GW250114 has also motivated event-specific proposals in which the earliest post-merger signal contains structure not captured by the minimal linear Kerr hierarchy. The direct-wave interpretation of Lu et al.~\cite{LuEtAl2025DirectWave} is the most explicit example. That work isolates a prompt component concentrated near the left edge of the fitting window and argues that it can compete with an overtone description very close to the peak.

The nuisance construction used here addresses that concern at the level of stability. The direct-wave family is introduced as a fixed diagnostic enlargement of the baseline Kerr spans and measures whether a window that appears successful under a linear fit still carries substantial early-time residual structure once a left-edge-localized alternative is allowed. The resulting gain is therefore a stability diagnostic rather than an independent discovery claim.

\subsection{Quadratic modes and nonlinear ringdown interpretations}

The quadratic-mode interpretation of Wang et al.~\cite{WangEtAl2026Quadratic} raises a closely related issue from a different direction. In that picture, part of the early post-merger content is attributed to nonlinear sum-frequency structure, with the leading event-local contribution near the \(220+220\) harmonic. The present paper does not attempt a full nonlinear inference. Instead it asks whether the windows accepted by the linear-Kerr trust criterion remain insensitive to that leading quadratic alternative.

Accordingly, the quadratic family appears only through the fixed \(220+220\) nuisance extension. If a supposedly reliable linear-Kerr window shows a large quadratic gain, the accepted interpretation is fragile. If the gain stays small throughout the trusted band, then the common-remnant Kerr reading is stable against that specific early-time competitor. The accepted band in GW250114 is defined only after that check has been imposed.

\subsection{Broader methodological context}

Several earlier developments shape the methodology adopted here. Bhagwat et al.~\cite{BhagwatEtAl2018StartTime} made the dependence on ringdown start time explicit. Giesler et al.~\cite{GieslerEtAl2019Overtones} showed how overtones can extend useful fitting windows toward the strain peak. More recent analyses emphasized the corresponding costs: high-overtone instability and correlation structure \cite{ColemanFinch2025HighOvertones}, alternative linear versus nonlinear descriptions of the same data \cite{QiuEtAl2024LinearNonlinear}, same-span orthonormal reformulations that change conditioning without changing the physical span \cite{MorisakiEtAl2025Orthonormal}, and public CCE studies that tabulate mode content directly across start times \cite{DyerMoore2025QNMContent}.

Those developments motivate the local perspective adopted here. We do not attempt a global population statement, a new collaboration-style posterior analysis, or a full Bayesian comparison of all early-time phenomenological models. Instead we give a deterministic window-by-window criterion for when a finite detector-frame fit supports a stable common-remnant Kerr interpretation in this particular event.

\section{Conclusion}\label{sec:conclusion}

We give a detector-frame criterion for finite-window black-hole spectroscopy and apply it to GW250114. The argument combines labeled finite-window frequency extraction with explicit radii, a local inverse atlas for the dominant Kerr map on the event-local detector-frame box, auxiliary consistency transport for \(221\) and \(440\), neighboring-window drift control, and calibrated mismatch together with nuisance thresholds. These ingredients turn black-hole spectroscopy on a finite window into a sequence of explicit inequalities with named failure modes.

For GW250114, the public H1/L1 data support a nonempty intermediate post-peak band in which the dominant-mode remnant track is stable, the public detector-frame remnant ensemble is matched well, and the direct-wave and quadratic checks stay below the numerical thresholds adopted in Section~\ref{sec:gw250114-empirical}. Within that band the public \(6M_f\) overtone anchor and the nearby \(9M_f\) point are retained. The \(11M_f\) anchor lies on the edge of the accepted region and is best described as transitional. The earliest \(3M_f\) multimode anchor remains physically interesting, but under the fixed detector-frame criteria used here it is still too sensitive to the nuisance alternatives to count as robust common-remnant Kerr spectroscopy.

The result sharpens the current empirical picture without displacing the collaboration analyses. GW250114 is a high-information ringdown event, and its remnant is consistent with Kerr across several public analyses. The additional statement established here is local: the interpretive weight of a multimode fit depends on the window on which it is made. In loud events, the main limitation is no longer variance alone. Start-time dependence, omitted content, basis dependence, and physically meaningful early-time alternatives must also be controlled. For lower-SNR events, the same deterministic bounds may become conservative enough that no accepted band survives. The framework nevertheless extends to future events with public parameter-estimation products once the event-local box, calibration procedure, and nuisance checks are adapted to the signal at hand.

\appendix

\providecommand{\Kdet}{\mathcal K_{\mathrm{det}}}
\providecommand{\Dnet}{\mathfrak D}
\providecommand{\Mf}{M_{\mathrm f}}
\providecommand{\chif}{\chi_{\mathrm f}}
\providecommand{\wpeak}{t_{\mathrm{pk}}}
\providecommand{\wtwo}{\omega_{220}}
\providecommand{\wot}{\omega_{221}}
\providecommand{\wfour}{\omega_{440}}
\providecommand{\dist}{\operatorname{dist}}

\section{Standing assumptions and notation}\label{app:assumption-ledger}

This appendix fixes the standing assumptions, domains, and notation used throughout the analysis. Public GW250114 data products, Kerr quasi-normal-mode data, deterministic finite-window estimates, synthetic-bank calibrations, and modeling assumptions enter at different stages of the argument. The common notation and the event-local domains are recorded here once so that later sections can refer back to them without redefining the same objects.

We follow the standard ringdown notation of \cite{DreyerEtAl2004BlackHoleSpectroscopy,BertiCardosoStarinets2009QNMReview,GieslerEtAl2019Overtones}. Event-specific conventions are tied to the public GWOSC release for GW250114 and to the companion LVK spectroscopy analysis \cite{GWOSC_GW250114,LVK_GW250114_Spectroscopy}. The nuisance-extended model family is motivated by recent GW250114 analyses that emphasize early-time prompt or horizon-associated structure and possible nonlinear quadratic content \cite{LuEtAl2025DirectWave,WangEtAl2026Quadratic}. Those papers motivate the stress tests used below, but their evidential claims are not used as premises.

\subsection{Event-local working domain}

Our event of interest is the binary-black-hole merger GW250114\_082203. All event times and strain data are taken from the public GWOSC release \cite{GWOSC_GW250114,VallisneriEtAl2015GWOSC}. The companion LVK spectroscopy analysis of \cite{LVK_GW250114_Spectroscopy} serves as the external ringdown comparison benchmark.

\begin{assumption}[Public data domain]\label{ass:event-local-domain}
The detector network used in this ringdown analysis is
\[
\Dnet=\{\mathrm{H1},\mathrm{L1}\}.
\]
We work with the publicly released calibrated strain streams for GW250114 on this two-detector network. Virgo data are not included in the event-level ringdown inference because the public GWOSC detail page records that Virgo was not observing at the event time. All finite-window network norms and all later inference objects are therefore defined on the two-channel data vector indexed by \(\Dnet\).
\end{assumption}

The analysis is performed in detector-frame units. This is the natural convention for finite-window inference on detector strain, because the directly observed oscillation frequencies and damping times are redshifted. The remnant parameter vector is therefore written as
\[
p=(\Mf,\chif)\in (0,\infty)\times[0,1),
\]
where \(\Mf\) denotes the detector-frame remnant mass and \(\chif\) the dimensionless remnant spin.

\begin{definition}[Event-local compact box]\label{def:event-local-box}
We fix the detector-frame compact box
\[
\Kdet=[66.5,69.5]\,M_\odot \times [0.64,0.71].
\]
This box is centered on the public GW250114 remnant estimates and contains the medians and displayed uncertainties of the four public parameter-estimation products listed on the GWOSC detail page, namely NRSur7dq4, PhenomXO4a, PhenomXPHM, and SEOBNRv5PHM \cite{GWOSC_GW250114}. Every event-local Lipschitz constant, separation constant, and inverse bound appearing later is taken on \(\Kdet\).
\end{definition}

For gravitational perturbations of a Kerr remnant, and for each mode label \(j=(\ell,m,n)\), we write
\[
\omega_j(p)=\omega_{\ell mn}(\Mf,\chif)\in\mathbb C,
\qquad
\Im \omega_j(p)<0.
\]
The sign convention is
\[
\omega_j(p)=2\pi f_j(p)-i\gamma_j(p),
\qquad
\gamma_j(p)>0.
\]
When it is advantageous to separate scale from shape, we also use the dimensionless Kerr frequency
\[
\widehat\omega_j(\chif):=\Mf\,\omega_j(\Mf,\chif),
\]
so that \(\omega_j(\Mf,\chif)=\Mf^{-1}\widehat\omega_j(\chif)\). The three principal gravitational modes considered here are
\[
220,\qquad 221,\qquad 440.
\]
The mode \(220\) is the primary mode used for local inversion of \((\Mf,\chif)\); the modes \(221\) and \(440\) are auxiliary modes used for internal consistency tests.

\begin{definition}[Model families]\label{def:model-families}
The baseline Kerr model families are
\[
\mathcal M_0=\{220\},\qquad
\mathcal M_1=\{220,221\},\qquad
\mathcal M_2=\{220,221,440\}.
\]
A nuisance-extended family is one of the four fixed classes
\[
\mathcal S_{\mathcal M_r}^{\nu},
\qquad r\in\{0,1,2\},
\qquad \nu\in\{\varnothing,\mathrm{dir},\mathrm{quad},\mathrm{dir+quad}\},
\]
introduced in Definition~\ref{def:sec3-nuisance-families}. In particular, the direct-wave dictionary has the fixed order \(N_{\mathrm{dir}}=4\) and the quadratic dictionary uses the fixed pair set \(\Pquad=\{(220,220)\}\). Every nuisance family remains diagnostic: it is introduced to stress-test early-time robustness and not to elevate any non-Kerr or higher-order component to a detection claim.
\end{definition}

\subsection{Finite-window signal classes}

Fix a detector \(I\in\Dnet\), a fiducial peak time \(\wpeak\), a ringdown start time \(t_0\ge 0\), and a window length \(T>0\). The local window time is denoted by \(u\in[0,T]\). After whitening, tapering, and restriction to the time interval \([\wpeak+t_0,\wpeak+t_0+T]\), the detector strain is represented by a complex signal
\[
y_I^{(t_0,T)}(u).
\]
For a two-detector network we set
\[
y^{(t_0,T)}(u)=\bigl(y_{\mathrm{H1}}^{(t_0,T)}(u),y_{\mathrm{L1}}^{(t_0,T)}(u)\bigr)
\]
and work in the Hilbert space
\[
\mathcal H_T:=L^2([0,T];\mathbb C^2),
\qquad
\|y\|_{\mathcal H_T}^2:=\sum_{I\in\Dnet}\int_0^T |y_I(u)|^2\,du .
\]

\begin{assumption}[Finite-window decomposition]\label{ass:finite-window-decomposition}
For each admissible pair \((t_0,T)\) and each chosen family \(\mathcal M\in\{\mathcal M_0,\mathcal M_1,\mathcal M_2,\mathcal N\}\), there exists an event-local parameter value \(p_\star\in\Kdet\), detector-dependent complex amplitudes \(A_{I,j}\), and remainder terms \(r_{I,\mathrm{tail}}^{(t_0,T)}\), \(r_{I,\mathrm{mm}}^{(t_0,T)}\), and \(n_I^{(t_0,T)}\) such that
\begin{equation}\label{eq:appendixA-signal-decomposition}
y_I^{(t_0,T)}(u)
=
\sum_{j\in\mathcal M} A_{I,j} e^{-i\omega_j(p_\star)u}
+
r_{I,\mathrm{tail}}^{(t_0,T)}(u)
+
r_{I,\mathrm{mm}}^{(t_0,T)}(u)
+
n_I^{(t_0,T)}(u),
\qquad 0\le u\le T.
\end{equation}
Here \(r_{\mathrm{tail}}\) collects omitted linear Kerr contributions that remain within the same perturbative regime, \(r_{\mathrm{mm}}\) records model misspecification not represented in the chosen family, and \(n\) is the realized detector-noise contribution after preprocessing.
\end{assumption}

Assumption~\ref{ass:finite-window-decomposition} is the only place where the idealized mode expansion is connected to finite-length detector data. Every later theorem is explicit about how its conclusion depends on the size of \(r_{\mathrm{tail}}\), \(r_{\mathrm{mm}}\), and \(n\). No later proof will silently identify a fit with the true Kerr spectrum without paying for the discrepancy through these remainder terms.

For each mode \(j\in\mathcal M\), the total frequency uncertainty radius is written
\begin{equation}\label{eq:appendixA-total-frequency-radius}
\varepsilon_j
=
\varepsilon_j^{\mathrm{stat}}
+
\varepsilon_j^{\mathrm{alg}}
+
\varepsilon_j^{\mathrm{tail}}
+
\varepsilon_j^{\mathrm{mm}} .
\end{equation}
The meaning of the four pieces is fixed as follows. The term \(\varepsilon_j^{\mathrm{stat}}\) is the uncertainty contributed by detector noise, whether obtained from a linearized covariance estimate or from a posterior credible radius. The term \(\varepsilon_j^{\mathrm{alg}}\) is the deterministic extraction error of the chosen estimator even when the model family is exact on the window. The term \(\varepsilon_j^{\mathrm{tail}}\) is the propagated effect of \(r_{\mathrm{tail}}\). The term \(\varepsilon_j^{\mathrm{mm}}\) is the propagated effect of \(r_{\mathrm{mm}}\). Later sections will either bound these terms analytically or calibrate them numerically. The decomposition \eqref{eq:appendixA-total-frequency-radius} is a notational split, not an additional hypothesis.

\subsection{Separation, guide points, and branch control}

A finite-window mode estimate is useful only if mode labels are stable under perturbation. The relevant quantity is the local isolation margin.

\begin{definition}[Isolation margin]\label{def:isolation-margin}
Let \(\mathcal M\) be a finite mode family and \(p\in\Kdet\). The isolation margin of \(\mathcal M\) at \(p\) is
\[
\delta_{\mathrm{iso}}(p;\mathcal M)
:=
\frac12\min_{\substack{j,k\in\mathcal M\\ j\neq k}}
|\omega_j(p)-\omega_k(p)|.
\]
For a compact subset \(K\subset\Kdet\), we write
\[
\delta_{\mathrm{iso}}(K;\mathcal M)
:=
\inf_{p\in K}\delta_{\mathrm{iso}}(p;\mathcal M).
\]
Whenever \(\delta_{\mathrm{iso}}(K;\mathcal M)>0\), the closed disks
\[
\overline{D}\bigl(\omega_j(p),\delta_{\mathrm{iso}}(K;\mathcal M)\bigr),
\qquad j\in\mathcal M,
\]
are pairwise disjoint for every \(p\in K\).
\end{definition}

The next proposition gives the basic branch-control mechanism used throughout the analysis. This is where local mode labeling becomes mathematically stable.

\begin{proposition}[Nearest-neighbor label stability]\label{prop:nearest-neighbor-label-stability}
Fix a finite model family \(\mathcal M\) and two parameter points \(p_\star,p^\dagger\in\Kdet\). Assume that
\[
\beta
:=
\max_{j\in\mathcal M}
|\omega_j(p^\dagger)-\omega_j(p_\star)|
\]
is finite and that
\[
\varepsilon
:=
\max_{j\in\mathcal M}
|\widehat\omega_j-\omega_j(p_\star)|
\]
satisfies
\begin{equation}\label{eq:appendixA-branch-condition}
\beta+\varepsilon<\delta_{\mathrm{iso}}(p_\star;\mathcal M).
\end{equation}
Then, for every \(j\in\mathcal M\), the estimate \(\widehat\omega_j\) is strictly closer to the guide frequency \(\omega_j(p^\dagger)\) than to any other guide frequency \(\omega_k(p^\dagger)\) with \(k\neq j\). In particular, nearest-neighbor matching to the guide set
\[
\{\omega_k(p^\dagger)\colon k\in\mathcal M\}
\]
assigns a unique label and preserves the true label \(j\).
\end{proposition}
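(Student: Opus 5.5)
The argument is a triangle-inequality comparison in the complex frequency plane, run separately for each label \(j\in\mathcal M\). The point is to show that the estimate \(\widehat\omega_j\) sits near its own guide center, while every competing guide center is pushed away by the isolation margin at \(p_\star\).

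First I bound the distance to the correct guide. For every \(j\in\mathcal M\), inserting the true frequency \(\omega_j(p_\star)\) gives
\[
|\widehat\omega_j-\omega_j(p^\dagger)|\le|\widehat\omega_j-\omega_j(p_\star)|+|\omega_j(p_\star)-\omega_j(p^\dagger)|\le\varepsilon+\beta.
\]
By \eqref{eq:appendixA-branch-condition}, this is strictly smaller than \(\delta_{\mathrm{iso}}(p_\star;\mathcal M)\).

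Second I bound the distance to a competing guide from below. Fix \(k\neq j\). Definition~\ref{def:isolation-margin} gives
\[
|\omega_j(p_\star)-\omega_k(p_\star)|\ge 2\delta_{\mathrm{iso}}(p_\star;\mathcal M).
\]
The reverse triangle inequality, applied through the chain \(\widehat\omega_j\to\omega_j(p_\star)\to\omega_k(p_\star)\to\omega_k(p^\dagger)\), then yields
\[
|\widehat\omega_j-\omega_k(p^\dagger)|\ge|\omega_j(p_\star)-\omega_k(p_\star)|-|\widehat\omega_j-\omega_j(p_\star)|-|\omega_k(p_\star)-\omega_k(p^\dagger)|\ge 2\delta_{\mathrm{iso}}(p_\star;\mathcal M)-\varepsilon-\beta.
\]
Since \(\varepsilon+\beta<\delta_{\mathrm{iso}}(p_\star;\mathcal M)\), the right-hand side is strictly greater than \(\delta_{\mathrm{iso}}(p_\star;\mathcal M)\). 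It is therefore strictly greater than \(|\widehat\omega_j-\omega_j(p^\dagger)|\).

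Combining the two steps, the nearest guide center to \(\widehat\omega_j\) is \(\omega_j(p^\dagger)\), and it is strictly nearest, so the minimizer is unique. Nearest-neighbor matching therefore assigns label \(j\) to \(\widehat\omega_j\), and distinct estimates receive distinct labels.

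The main subtlety is where the competing guide centers are measured. Separation is only assumed at \(p_\star\), so the guide centers \(\omega_k(p^\dagger)\) have to be transported back to \(p_\star\) at a cost of \(\beta\) each. Because \(\beta\) enters twice in this way, the factor \(2\) in the isolation-margin definition is exactly what is needed to absorb the double cost.
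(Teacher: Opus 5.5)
Your proof is correct and matches the paper's argument: you use the same triangle-inequality upper bound $\varepsilon+\beta$ to the own guide center and the same reverse-triangle lower bound $2\delta_{\mathrm{iso}}(p_\star;\mathcal M)-\varepsilon-\beta$ to every competing guide center. The only cosmetic difference is that you compare both bounds against $\delta_{\mathrm{iso}}(p_\star;\mathcal M)$ as an intermediate threshold, whereas the paper compares them to each other directly.
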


\begin{proof}
Fix \(j\in\mathcal M\). The distance from \(\widehat\omega_j\) to its own guide center is bounded above by
\[
|\widehat\omega_j-\omega_j(p^\dagger)|
\le
|\widehat\omega_j-\omega_j(p_\star)|+|\omega_j(p_\star)-\omega_j(p^\dagger)|
\le \varepsilon+\beta.
\]
Now let \(k\in\mathcal M\) with \(k\neq j\). By the triangle inequality,
\[
|\widehat\omega_j-\omega_k(p^\dagger)|
\ge
|\omega_j(p_\star)-\omega_k(p_\star)|
-
|\widehat\omega_j-\omega_j(p_\star)|
-
|\omega_k(p_\star)-\omega_k(p^\dagger)|.
\]
Since
\[
|\omega_j(p_\star)-\omega_k(p_\star)|\ge 2\,\delta_{\mathrm{iso}}(p_\star;\mathcal M),
\]
we obtain
\[
|\widehat\omega_j-\omega_k(p^\dagger)|
\ge
2\,\delta_{\mathrm{iso}}(p_\star;\mathcal M)-\varepsilon-\beta.
\]
Condition \eqref{eq:appendixA-branch-condition} implies
\[
2\,\delta_{\mathrm{iso}}(p_\star;\mathcal M)-\varepsilon-\beta
>
\varepsilon+\beta
\ge
|\widehat\omega_j-\omega_j(p^\dagger)|.
\]
Hence \(\widehat\omega_j\) is strictly closer to \(\omega_j(p^\dagger)\) than to any other guide center. Since \(j\) was arbitrary, the nearest-neighbor labeling is unique and correct for all \(j\in\mathcal M\).
\end{proof}

In later applications the guide point \(p^\dagger\) will usually be a coarse remnant estimate obtained from a full-signal analysis or from a previous iteration of the ringdown fit. Proposition~\ref{prop:nearest-neighbor-label-stability} shows exactly how much error that guide may carry before mode assignment becomes ambiguous.

\subsection{Primary inversion and auxiliary consistency}

The central inversion map is the one-mode Kerr map based on \(220\). We write
\[
F_{220}(p)=\omega_{220}(p),\qquad p\in\Kdet.
\]
Whenever \(F_{220}\) is locally invertible on a neighborhood \(U\subset\mathbb C\) of \(F_{220}(p_\star)\), the primary parameter estimate is
\[
\widehat p
=
F_{220}^{-1}(\widehat\omega_{220}).
\]
The local inverse Lipschitz constant is denoted by \(L_{220}\); later it will be numerically on the chosen compact set.

For the auxiliary modes \(j\in\{221,440\}\) we define the consistency residual
\begin{equation}\label{eq:appendixA-auxiliary-residual}
R_j(\widehat p,\widehat\omega_j)
:=
|\widehat\omega_j-\omega_j(\widehat p)|.
\end{equation}
In practice, \(R_j\) measures whether the frequency extracted from the data for the \(j\)-th auxiliary mode agrees with the Kerr prediction generated by the remnant inferred from the primary mode \(220\).

\begin{proposition}[Propagation from primary inversion to auxiliary residuals]\label{prop:primary-to-auxiliary}
Let \(p_\star\in\Kdet\). Assume that \(F_{220}^{-1}\) is well-defined on a complex neighborhood \(U\) of \(\omega_{220}(p_\star)\) and satisfies the local Lipschitz bound
\[
\|F_{220}^{-1}(z_1)-F_{220}^{-1}(z_2)\|_{\mathbb R^2}
\le
L_{220}|z_1-z_2|,
\qquad z_1,z_2\in U.
\]
Let \(j\in\{221,440\}\), and suppose that the auxiliary Kerr map \(p\mapsto \omega_j(p)\) is \(L_j\)-Lipschitz on \(\Kdet\). If
\[
|\widehat\omega_{220}-\omega_{220}(p_\star)|\le \varepsilon_{220}
\qquad\text{and}\qquad
|\widehat\omega_j-\omega_j(p_\star)|\le \varepsilon_j,
\]
with \(\widehat\omega_{220}\in U\), then the primary estimate \(\widehat p=F_{220}^{-1}(\widehat\omega_{220})\) satisfies
\[
\|\widehat p-p_\star\|_{\mathbb R^2}\le L_{220}\,\varepsilon_{220},
\]
and the auxiliary residual obeys
\begin{equation}\label{eq:appendixA-auxiliary-propagation}
R_j(\widehat p,\widehat\omega_j)
\le
\varepsilon_j+L_jL_{220}\,\varepsilon_{220}.
\end{equation}
\end{proposition}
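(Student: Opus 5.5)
The plan is to reproduce, in this appendix's more general setting, the two-step argument behind Theorem~\ref{thm:sec5-primary-auxiliary}. The only change is that the inverse is assumed to be Lipschitz on an abstract complex neighborhood \(U\) rather than on a chart image. The first step is to confirm that \(\omega_{220}(p_\star)\) lies in \(U\) and equals \(F_{220}(p_\star)\). Here I read the hypothesis "\(F_{220}^{-1}\) is well-defined on \(U\)" as a local inverse satisfying \(F_{220}^{-1}(\omega_{220}(p_\star))=p_\star\), with image in \(\Kdet\). This normalization is implicit in the phrase "neighborhood of \(\omega_{220}(p_\star)\)", and I will state it explicitly at the start.

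The primary bound then follows directly. Apply the Lipschitz hypothesis with \(z_1=\widehat\omega_{220}\in U\) and \(z_2=\omega_{220}(p_\star)\in U\):
\[
\|\widehat p-p_\star\|_{\mathbb R^2}
=\|F_{220}^{-1}(\widehat\omega_{220})-F_{220}^{-1}(\omega_{220}(p_\star))\|_{\mathbb R^2}
\le L_{220}|\widehat\omega_{220}-\omega_{220}(p_\star)|\le L_{220}\varepsilon_{220}.
\]

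For the auxiliary residual, I insert the true Kerr value at \(p_\star\) and use the triangle inequality:
\[
R_j(\widehat p,\widehat\omega_j)\le|\widehat\omega_j-\omega_j(p_\star)|+|\omega_j(p_\star)-\omega_j(\widehat p)|.
\]
The first term is at most \(\varepsilon_j\) by hypothesis. The second term is at most \(L_j\|p_\star-\widehat p\|_{\mathbb R^2}\) by the \(L_j\)-Lipschitz property of \(\omega_j\) on \(\Kdet\), and the primary bound turns this into \(L_jL_{220}\varepsilon_{220}\). Adding the two estimates gives \eqref{eq:appendixA-auxiliary-propagation}.

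The only delicate point is the Lipschitz step for \(\omega_j\). It requires both \(p_\star\) and \(\widehat p\) to lie in \(\Kdet\), and \(\widehat p\in\Kdet\) holds only because the local inverse takes values in \(\Kdet\), which is part of the same normalization. If the inverse's image were not contained in \(\Kdet\), I would instead invoke the global envelope of Proposition~\ref{prop:sec5-aux-envelopes}, which extends to a neighborhood because \(L_j\) is defined through sup bounds on \(\widehat\omega_j\) and \(\widehat\omega_j'\). The stated version, however, needs only the containment, so I will record it as an assumption.
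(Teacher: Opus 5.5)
Your proposal is correct and follows the paper's proof step for step: apply the inverse Lipschitz bound to \(\widehat\omega_{220}\) and \(\omega_{220}(p_\star)\), then insert \(\omega_j(p_\star)\) and use the \(L_j\)-Lipschitz bound. The normalization \(F_{220}^{-1}(\omega_{220}(p_\star))=p_\star\) and the containment \(\widehat p\in\Kdet\) that you make explicit are used silently in the paper, so recording them as hypotheses is appropriate. Your fallback remark is not needed and would not work as stated, because \(L_j\) rests on sup bounds over \([0.64,0.71]\) and on \(M\ge M_-\), and these do not cover points outside \(\Kdet\).
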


\begin{proof}
Since \(\widehat p=F_{220}^{-1}(\widehat\omega_{220})\) and \(F_{220}(p_\star)=\omega_{220}(p_\star)\), the inverse Lipschitz bound gives
\[
\|\widehat p-p_\star\|_{\mathbb R^2}
=
\bigl\|F_{220}^{-1}(\widehat\omega_{220})-F_{220}^{-1}(\omega_{220}(p_\star))\bigr\|_{\mathbb R^2}
\le
L_{220}\,|\widehat\omega_{220}-\omega_{220}(p_\star)|
\le
L_{220}\,\varepsilon_{220}.
\]
For the auxiliary residual, we insert and subtract \(\omega_j(p_\star)\) and use the Lipschitz bound on \(\omega_j\):
\[
R_j(\widehat p,\widehat\omega_j)
=
|\widehat\omega_j-\omega_j(\widehat p)|
\le
|\widehat\omega_j-\omega_j(p_\star)|+|\omega_j(p_\star)-\omega_j(\widehat p)|
\le
\varepsilon_j+L_j\|\widehat p-p_\star\|_{\mathbb R^2}.
\]
Substituting the primary estimate bound just obtained yields \eqref{eq:appendixA-auxiliary-propagation}.
\end{proof}

Proposition~\ref{prop:primary-to-auxiliary} explains why the auxiliary tolerances in the trust test will not be chosen independently of the primary inversion. Even if the \(221\) or \(440\) frequencies were measured with zero direct fitting error, a nonzero uncertainty in \(220\) would still propagate into the predicted Kerr location against which those auxiliary frequencies are compared.

\subsection{Window drift and trust-region notation}

We study the dependence of ringdown inference on the start time \(t_0\), the window length \(T\), and the model family \(\mathcal M\). Once these variables are fixed, the analysis returns a frequency estimate \(\widehat\omega_j(t_0,T,\mathcal M)\) for each fitted mode and, when the primary mode is included, a remnant estimate
\[
\widehat p(t_0,T,\mathcal M)=\bigl(\widehat \Mf(t_0,T,\mathcal M),\widehat \chif(t_0,T,\mathcal M)\bigr).
\]

\begin{definition}[Local parameter drift]\label{def:local-parameter-drift}
Let \(\Delta>0\). The \(\Delta\)-local drift of the recovered remnant parameters is
\[
\mathrm{Drift}_\Delta(t_0,T,\mathcal M)
:=
\sup\Bigl\{
\|\widehat p(t_0+\tau,T,\mathcal M)-\widehat p(t_0,T,\mathcal M)\|_{\mathbb R^2}
:
|\tau|\le \Delta,\ t_0+\tau\ \text{admissible}
\Bigr\}.
\]
This quantity measures short-scale instability of the inferred remnant under small changes of the start time.
\end{definition}

The trust-region criterion is thresholded rather than asymptotic. We therefore keep the tolerances explicit.

\begin{definition}[Tolerance vector and trust set]\label{def:tolerance-vector-and-trust-set}
Let
\[
\eta=
\bigl(
\eta_{\mathrm{sep}},
\eta_{\mathrm p},
\eta_{\mathrm d},
\eta_{221},
\eta_{440}
\bigr)
\]
be a vector of positive tolerances. For a model family \(\mathcal M\) containing \(220\), the associated trust set \(\mathcal T_\eta(\mathcal M)\) is the collection of window pairs \((t_0,T)\) such that the following conditions hold simultaneously.

First, the primary mode is spectrally resolved in the sense that
\[
\varepsilon_{220}(t_0,T,\mathcal M)\le \eta_{\mathrm{sep}}
<
\delta_{\mathrm{iso}}\bigl(\widehat p(t_0,T,\mathcal M);\mathcal M\bigr).
\]
Second, the propagated primary parameter uncertainty is acceptable,
\[
L_{220}\,\varepsilon_{220}(t_0,T,\mathcal M)\le \eta_{\mathrm p}.
\]
Third, the recovered remnant is locally stable under start-time perturbations,
\[
\mathrm{Drift}_\Delta(t_0,T,\mathcal M)\le \eta_{\mathrm d}.
\]
Fourth, whenever \(221\in\mathcal M\), the auxiliary residual satisfies
\[
R_{221}\bigl(\widehat p(t_0,T,\mathcal M),\widehat\omega_{221}(t_0,T,\mathcal M)\bigr)\le \eta_{221}.
\]
Fifth, whenever \(440\in\mathcal M\), one has
\[
R_{440}\bigl(\widehat p(t_0,T,\mathcal M),\widehat\omega_{440}(t_0,T,\mathcal M)\bigr)\le \eta_{440}.
\]
\end{definition}

Definition~\ref{def:tolerance-vector-and-trust-set} is parameterized by \(\eta\). The deterministic part of the analysis proves implications of the form
\[
\text{explicit bounds on } \varepsilon_j,\ L_{220},\ R_j,\ \mathrm{Drift}_\Delta
\Longrightarrow
(t_0,T)\in\mathcal T_\eta(\mathcal M).
\]
The numerical values assigned to \(\eta\) come from two sources. Geometric quantities such as \(L_{220}\) and the separation radius are established numerically on \(\Kdet\). Acceptance thresholds that reflect finite-SNR practice are fixed by the synthetic-bank calibration. The two sources remain separate throughout.

\subsection{Symbol index}

Table~\ref{tab:appendixA-symbols} records the symbols that recur most often in the text and later appendices.

\begin{table}[p]
\centering
\small
\caption{Principal symbols fixed in this appendix.}
\label{tab:appendixA-symbols}
\begin{tabular}{@{}p{0.22\textwidth}p{0.72\textwidth}@{}}
\toprule
Symbol & Meaning \\
\midrule
\(\Dnet\) & Detector network used in the event-level ringdown analysis, namely \(\{\mathrm{H1},\mathrm{L1}\}\). \\
\(\wpeak\) & Fiducial reference peak time used to define ringdown start times. \\
\(t_0\) & Ringdown start time measured from \(\wpeak\). \\
\(T\) & Length of the finite analysis window. \\
\(u\) & Local time variable on the shifted window \([0,T]\). \\
\(p=(\Mf,\chif)\) & Detector-frame remnant mass and dimensionless remnant spin. \\
\(\Kdet\) & Event-local compact box in detector-frame remnant parameters. \\
\(\omega_j(p)\) & Complex Kerr quasi-normal-mode frequency of mode \(j=(\ell,m,n)\) at parameter value \(p\). \\
\(\widehat\omega_j(\chif)\) & Dimensionless Kerr mode frequency, equal to \(\Mf\,\omega_j(\Mf,\chif)\). \\
\(\mathcal M_0,\mathcal M_1,\mathcal M_2\) & Baseline Kerr model families \(\{220\}\), \(\{220,221\}\), and \(\{220,221,440\}\). \\
\(\mathcal N\) & One of the fixed nuisance-extended families \(\mathcal S_{\mathcal M_r}^{\nu}\) of Definition~\ref{def:sec3-nuisance-families}, built from the four left-edge direct-wave atoms and the fixed quadratic pair set \(\Pquad=\{(220,220)\}\). \\
\(y_I^{(t_0,T)}\) & Whitened, tapered detector strain on the analysis window in detector \(I\). \\
\(r_{\mathrm{tail}}\) & Residual contribution from omitted linear Kerr content. \\
\(r_{\mathrm{mm}}\) & Residual contribution from finite-window model misspecification. \\
\(\varepsilon_j^{\mathrm{stat}}\) & Frequency uncertainty radius contributed by detector noise. \\
\(\varepsilon_j^{\mathrm{alg}}\) & Deterministic frequency-extraction error of the chosen estimator. \\
\(\varepsilon_j^{\mathrm{tail}}\) & Frequency error induced by the omitted linear tail \(r_{\mathrm{tail}}\). \\
\(\varepsilon_j^{\mathrm{mm}}\) & Frequency error induced by the mismatch term \(r_{\mathrm{mm}}\). \\
\(\varepsilon_j\) & Total frequency uncertainty radius, defined by \eqref{eq:appendixA-total-frequency-radius}. \\
\bottomrule
\end{tabular}
\end{table}

\begin{table}[p]
\centering
\small
\caption{Principal symbols fixed in this appendix, continued.}
\begin{tabular}{@{}p{0.22\textwidth}p{0.72\textwidth}@{}}
\toprule
Symbol & Meaning \\
\midrule
\(\delta_{\mathrm{iso}}(p;\mathcal M)\) & Isolation margin of the fitted mode family at parameter value \(p\). \\
\(F_{220}(p)\) & Primary Kerr map \(p\mapsto \omega_{220}(p)\). \\
\(L_{220}\) & Local inverse Lipschitz constant for \(F_{220}^{-1}\) on the event-local box. \\
\(R_j(\widehat p,\widehat\omega_j)\) & Auxiliary consistency residual for \(j\in\{221,440\}\). \\
\(\mathrm{Drift}_\Delta(t_0,T,\mathcal M)\) & Short-scale remnant drift under start-time perturbations of size at most \(\Delta\). \\
\(\eta\) & Tolerance vector entering the trust-region acceptance rule. \\
\(\mathcal T_\eta(\mathcal M)\) & Set of windows \((t_0,T)\) accepted as trustworthy for model family \(\mathcal M\). \\
\bottomrule
\end{tabular}
\end{table}

This appendix is the reference source for later notation. Any quantity appearing later without a fresh definition is meant in the sense fixed here.

\clearpage

\providecommand{\Mdet}{M_{\mathrm f}^{\mathrm{det}}}
\providecommand{\Msrc}{M_{\mathrm f}^{\mathrm{src}}}
\providecommand{\Ksrc}{\mathcal K_{\mathrm{src}}}
\providecommand{\Iz}{I_z}
\providecommand{\Zmap}{\mathcal Z}
\providecommand{\tauSun}{\tau_\odot}
\providecommand{\pdet}{p^{\mathrm{det}}}
\providecommand{\psrc}{p^{\mathrm{src}}}

\section{Detector-frame and source-frame conventions}\label{app:frame-conventions}

This appendix fixes the conversion rules between detector-frame and source-frame quantities and states which frame is used at each stage of the analysis. The data live in detector time, the whitening and finite-window analysis are carried out on detector strain, and the ringdown frequencies that enter the likelihood are observed in the redshifted frame. Astrophysical summaries of the remnant mass, by contrast, are often reported in the source frame. Making the conversion explicit keeps the dimensionless quantities \(t_0/M_{\mathrm f}\), \(T/M_{\mathrm f}\), \(M_{\mathrm f}\omega\), and the later trust-region inequalities unambiguous.

The basic scaling facts are standard in black-hole spectroscopy and gravitational-wave data analysis; see, for instance, \cite{BertiCardosoStarinets2009QNMReview,BaibhavBerti2019Multimode}. For GW250114 we anchor the event-local numerical ranges to the public GWOSC detail page and to the companion LVK spectroscopy paper \cite{GWOSC_GW250114,LVK_GW250114_Spectroscopy}. The standing detector-frame conventions of Appendix~\ref{app:assumption-ledger}, in particular Definition~\ref{def:event-local-box} and Assumption~\ref{ass:event-local-domain}, remain in force throughout.

\subsection{Geometric units and the redshift map}

We use geometric units \(G=c=1\) in all analytic formulas. When masses quoted in solar units are converted into seconds, we use the nominal solar mass parameter adopted by the IAU \cite{PrsaEtAl2016NominalSolar}. Thus
\begin{equation}\label{eq:appendixB-solar-time}
\tauSun
:=
\frac{G M_\odot^{\mathrm N}}{c^3}
=
4.92549095\times 10^{-6}\ \mathrm{s},
\end{equation}
so that a remnant mass \(M\) expressed in units of \(M_\odot\) corresponds to the time scale
\[
M\,\tauSun.
\]
For example, a detector-frame remnant mass \(\Mdet=68\,M_\odot\) corresponds to \(68\,\tauSun\approx 3.349\times 10^{-4}\,\mathrm s\).

\begin{definition}[Frame parameters and the redshift map]\label{def:frame-parameters}
Let \(z\ge 0\) denote the cosmological redshift associated with the event. The source-frame remnant parameter is
\[
\psrc=(\Msrc,\chif)\in (0,\infty)\times[0,1),
\]
and the detector-frame remnant parameter is
\[
\pdet=(\Mdet,\chif)\in (0,\infty)\times[0,1),
\qquad
\Mdet=(1+z)\Msrc.
\]
The associated redshift map is
\[
\Zmap_z:(0,\infty)\times[0,1)\to (0,\infty)\times[0,1),
\qquad
\Zmap_z(M,\chi)=((1+z)M,\chi).
\]
The inverse map, when \(z\) is fixed, is
\[
\Zmap_z^{-1}(M,\chi)=\Bigl(\frac{M}{1+z},\chi\Bigr).
\]
\end{definition}

The spin parameter \(\chif\) is dimensionless and therefore frame-invariant. By contrast, all masses, frequencies, damping rates, and time intervals scale with \(1+z\). Every finite-window fit is carried out in the detector frame, and every mass or time variable written without an explicit superscript is therefore understood to be detector-frame unless stated otherwise.

\begin{definition}[Default frame convention]\label{def:default-frame-convention}
Throughout the analysis,
\[
\Mf=\Mdet,\qquad p=(\Mf,\chif)=\pdet
\]
unless the superscript \(\mathrm{src}\) is written explicitly. In particular, the compact box \(\Kdet\) of Definition~\ref{def:event-local-box}, all start times \(t_0\), all window lengths \(T\), and all frequencies \(\omega_j(p)\) appearing in Appendix~\ref{app:assumption-ledger} are detector-frame objects by default.
\end{definition}

The scaling relations above will be used repeatedly. Masses and time variables scale by the factor \(1+z\), while mode frequencies and damping rates scale by \((1+z)^{-1}\). The spin \(\chif\) and the quality factors are frame invariant. The combinations \(\Mdet\omega_j^{\mathrm{det}}=\Msrc\omega_j^{\mathrm{src}}\), \(\Mdet f_j^{\mathrm{det}}=\Msrc f_j^{\mathrm{src}}\), \(\Mdet \gamma_j^{\mathrm{det}}=\Msrc \gamma_j^{\mathrm{src}}\), \(t_0^{\mathrm{det}}/\Mdet=t_0^{\mathrm{src}}/\Msrc\), and \(T^{\mathrm{det}}/\Mdet=T^{\mathrm{src}}/\Msrc\) are therefore invariant under the frame change.

\subsection{Event-local boxes in the two frames}

The public GWOSC detail page for GW250114 lists, for the four public parameter-estimation families NRSur7dq4, PhenomXO4a, PhenomXPHM, and SEOBNRv5PHM, detector-frame remnant masses in the range \(67.6\text{--}68.2\,M_\odot\), source-frame remnant masses in the range \(62.5\text{--}62.9\,M_\odot\), remnant spins \(0.67\text{--}0.68\), and redshifts \(0.08\text{--}0.09\) with displayed uncertainties of order \(0.01\text{--}0.02\) \cite{GWOSC_GW250114}. In Appendix~\ref{app:assumption-ledger} we fixed the detector-frame compact box
\[
\Kdet=[66.5,69.5]\,M_\odot\times[0.64,0.71].
\]
The corresponding event-local source-frame box is obtained by pushing \(\Kdet\) through the inverse redshift map over an event-local redshift interval.

\begin{definition}[Event-local redshift interval and source box]\label{def:event-local-source-box}
We fix
\[
\Iz=[0.07,0.11],
\]
which contains the displayed redshift intervals of the four public GW250114 parameter-estimation products on the GWOSC detail page \cite{GWOSC_GW250114}. The associated source-frame compact box is
\begin{equation}\label{eq:appendixB-source-box}
\Ksrc
:=
\Bigl\{
\Zmap_z^{-1}(p)\colon p\in\Kdet,\ z\in\Iz
\Bigr\}.
\end{equation}
\end{definition}

\begin{proposition}[Explicit form of the event-local source box]\label{prop:explicit-source-box}
The set \(\Ksrc\) is the rectangle
\[
\Ksrc
=
\Bigl[\frac{66.5}{1.11},\frac{69.5}{1.07}\Bigr]\,M_\odot\times[0.64,0.71]
\subset
[59.9,65.0]\,M_\odot\times[0.64,0.71].
\]
In particular, every public GW250114 source-frame remnant estimate displayed on the GWOSC detail page lies in \(\Ksrc\).
\end{proposition}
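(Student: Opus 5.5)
The plan is to compute the image of the product set $\Kdet\times\Iz$ under the map $(M,\chi,z)\mapsto(M/(1+z),\chi)$ directly. Definition~\ref{def:frame-parameters} shows that the inverse redshift map leaves the spin coordinate unchanged. The spin projection of $\Ksrc$ is therefore exactly $[0.64,0.71]$. Because the map factors as (mass part)$\times$(identity on spin) and $z$ is a free parameter independent of $(M,\chi)$, the set $\Ksrc$ is the product of the mass image with $[0.64,0.71]$. So it suffices to identify the set $\{M/(1+z): M\in[66.5,69.5],\ z\in[0.07,0.11]\}$.

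For the mass coordinate I would use that $\phi(M,z)=M/(1+z)$ is continuous on the connected compact rectangle $[66.5,69.5]\times[0.07,0.11]$. Its image is therefore a closed interval. Since $\phi$ is increasing in $M$ and decreasing in $z$, its minimum is at $(66.5,0.11)$ and its maximum at $(69.5,0.07)$, which gives $[66.5/1.11,\,69.5/1.07]$. To show every intermediate value is attained, one can instead fix $z$ and let $M$ run through $[66.5,69.5]$. This produces the interval $[66.5/(1+z),69.5/(1+z)]$, and these intervals overlap for consecutive $z$ and sweep out the whole range. A numerical check then gives $66.5/1.11\approx 59.91\ge 59.9$ and $69.5/1.07\approx 64.95\le 65.0$, which yields the stated enclosing rectangle.

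For the final claim I would take the displayed GWOSC source-frame remnant data quoted just before Definition~\ref{def:event-local-source-box}: medians in $62.5$--$62.9\,M_\odot$ and spins in $0.67$--$0.68$, with uncertainties of order $1$--$2\,M_\odot$ in mass and $\sim0.01$--$0.02$ in spin. The medians lie well inside $[59.9,65.0]$. More carefully, each source-frame estimate arises as $\Zmap_z^{-1}$ of a detector-frame estimate. By Proposition~\ref{prop:sec2-public-box-contained} the detector-frame estimate lies in $\Kdet$, and its redshift lies in $\Iz$ by the choice of $\Iz$. Membership in $\Ksrc$ then follows from its definition~\eqref{eq:appendixB-source-box}.

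The main obstacle is this last step. It is empirical, not formal, and depends on the exact displayed numbers and interval endpoints, for instance whether $M^{\mathrm{src}}=M^{\mathrm{det}}/(1+z)$ holds exactly for the displayed summaries. I would therefore argue through the pushforward rather than by comparing source-frame intervals directly, and would fall back on checking the medians numerically if the displayed uncertainty bars turn out not to be consistent pushforwards.
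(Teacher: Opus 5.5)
Your mass computation is the paper's argument. The paper notes that $(M^{\mathrm{det}},z)\mapsto M^{\mathrm{det}}/(1+z)$ is increasing in $M^{\mathrm{det}}$ and decreasing in $z$, reads off the extremes at $(66.5,0.11)$ and $(69.5,0.07)$, and leaves the spin interval unchanged.

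Where you go further is surjectivity. The paper's proof only establishes the inclusion $\Ksrc\subset[66.5/1.11,\,69.5/1.07]\times[0.64,0.71]$, even though equality is what is stated. Your observation that the continuous image of the connected compact rectangle is a closed interval attaining both extremes supplies the reverse inclusion. Your second, ``sweeping'' argument is therefore redundant. If you want a covering made explicit, the two endpoint slices $[66.5/1.11,\,69.5/1.11]\approx[59.91,62.61]$ and $[66.5/1.07,\,69.5/1.07]\approx[62.15,64.95]$ already overlap and cover the whole interval.

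For the final sentence, the paper simply compares the displayed source-frame values with the rectangle: masses $62.5$--$62.9\,M_\odot$ and spins $0.67$--$0.68$. That direct check is the sound route, not a fallback.

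Your preferred pushforward argument assumes that each displayed source-frame summary equals $\Zmap_z^{-1}$ of a displayed detector-frame summary. This is false for posterior medians: the median of $M^{\mathrm{det}}/(1+z)$ is in general not $\mathrm{med}\,M^{\mathrm{det}}/(1+\mathrm{med}\,z)$. So the pushforward proves nothing about the displayed numbers.

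The direct comparison, by contrast, settles the claim outright. The margins from the displayed masses to $59.91$ and $64.95$ exceed $2\,M_\odot$, and the displayed spins sit at least $0.03$ inside $[0.64,0.71]$.
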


\begin{proof}
By Definition~\ref{def:event-local-source-box},
\[
(M^{\mathrm{src}},\chi)\in \Ksrc
\quad\Longleftrightarrow\quad
M^{\mathrm{src}}=\frac{M^{\mathrm{det}}}{1+z},
\]
for some \(M^{\mathrm{det}}\in[66.5,69.5]\) and some \(z\in[0.07,0.11]\), with \(\chi\in[0.64,0.71]\). The map
\[
(M^{\mathrm{det}},z)\longmapsto \frac{M^{\mathrm{det}}}{1+z}
\]
is increasing in \(M^{\mathrm{det}}\) and decreasing in \(z\) on \((0,\infty)\times[0,\infty)\). Therefore its minimum on the compact rectangle \([66.5,69.5]\times[0.07,0.11]\) is attained at \((66.5,0.11)\), and its maximum is attained at \((69.5,0.07)\). This yields
\[
M^{\mathrm{src}}\in\Bigl[\frac{66.5}{1.11},\frac{69.5}{1.07}\Bigr].
\]
The spin interval is unchanged because \(\chi\) is dimensionless. The numerical enclosure
\[
\Bigl[\frac{66.5}{1.11},\frac{69.5}{1.07}\Bigr]\subset [59.9,65.0]
\]
is immediate. The final sentence follows from the GWOSC values quoted above, since the displayed source-frame masses \(62.5\text{--}62.9\,M_\odot\) and spins \(0.67\text{--}0.68\) lie in the stated rectangle.
\end{proof}

\subsection{Frequencies and damping rates}

For each mode label \(j=(\ell,m,n)\), let
\[
\widehat\omega_j(\chif)=\widehat\omega_{\ell mn}(\chif)\in\mathbb C
\]
denote the dimensionless Kerr quasi-normal-mode frequency. As in Appendix~\ref{app:assumption-ledger}, we write
\[
\widehat\omega_j(\chif)=2\pi \widehat f_j(\chif)-i\widehat\gamma_j(\chif),
\qquad
\widehat\gamma_j(\chif)>0.
\]
The dimensional mode frequencies in the two frames are then determined solely by the mass scale.

\begin{proposition}[Redshift covariance of the Kerr spectrum]\label{prop:redshift-covariance}
For every mode label \(j\) and every source-frame remnant parameter \(\psrc=(\Msrc,\chif)\), one has
\begin{equation}\label{eq:appendixB-redshift-covariance}
\omega_j^{\mathrm{src}}(\psrc)=\frac{1}{\Msrc}\widehat\omega_j(\chif),
\qquad
\omega_j^{\mathrm{det}}(\Zmap_z(\psrc))=\frac{1}{\Mdet}\widehat\omega_j(\chif),
\end{equation}
and hence
\begin{equation}\label{eq:appendixB-redshift-scaling-frequency}
\omega_j^{\mathrm{det}}(\Zmap_z(\psrc))
=
\frac{1}{1+z}\,\omega_j^{\mathrm{src}}(\psrc).
\end{equation}
Equivalently,
\[
f_j^{\mathrm{det}}=\frac{f_j^{\mathrm{src}}}{1+z},
\qquad
\gamma_j^{\mathrm{det}}=\frac{\gamma_j^{\mathrm{src}}}{1+z},
\qquad
\Mdet\,\omega_j^{\mathrm{det}}=\Msrc\,\omega_j^{\mathrm{src}}=\widehat\omega_j(\chif).
\]
\end{proposition}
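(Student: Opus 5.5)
The plan is to prove Proposition~\ref{prop:redshift-covariance} directly from the scaling definition fixed in Appendix~\ref{app:assumption-ledger}, where the dimensionless frequency is introduced through \(\widehat\omega_j(\chif):=\Mf\,\omega_j(\Mf,\chif)\). I will take that identity as the defining relation of the Kerr spectrum in any frame. The justification is that, in geometric units, the Kerr metric has a single dimensionful scale \(M\), so every quasi-normal frequency has the form \(M^{-1}\) times a function of \(\chi\) alone. The same function \(\widehat\omega_j\) therefore serves both frames, because \(\chi\) is frame invariant (Definition~\ref{def:frame-parameters}).

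Step one reads off the two equalities in \eqref{eq:appendixB-redshift-covariance}. The first is obtained by applying the defining relation at \(\psrc=(\Msrc,\chif)\). For the second, note that \(\Zmap_z(\psrc)=(\Mdet,\chif)\) with \(\Mdet=(1+z)\Msrc\), and apply the defining relation at that point.

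Step two divides the two identities, exactly as in the proof of Proposition~\ref{prop:sec2-detector-frame-naturality}. Substituting \(\Mdet=(1+z)\Msrc\) into the detector-frame formula gives \eqref{eq:appendixB-redshift-scaling-frequency}.

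Step three handles the componentwise statements. Write \(\omega=2\pi f-i\gamma\) in each frame. Since \((1+z)^{-1}\) is a positive real number, comparing real and imaginary parts of \eqref{eq:appendixB-redshift-scaling-frequency} yields the scalings for \(f_j\) and \(\gamma_j\) separately. Multiplying the two formulas of step one by \(\Mdet\) and \(\Msrc\) respectively gives the invariance \(\Mdet\omega_j^{\mathrm{det}}=\Msrc\omega_j^{\mathrm{src}}=\widehat\omega_j(\chif)\).

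The only real obstacle is conceptual rather than computational. One must justify that the detector-frame spectrum is again given by \(M^{-1}\widehat\omega_j(\chi)\) with the observed mass \(\Mdet\). Physically, this holds because cosmological redshift dilates all observed time intervals uniformly by \(1+z\). Observed frequencies are therefore divided by \(1+z\), and this is absorbed by rescaling the mass. In the paper this is encoded as the definition of \(\omega_j^{\mathrm{det}}\) in Section~\ref{sec:data-conventions}, so I would cite that definition and note the time-dilation origin in a sentence, rather than derive it from the Teukolsky equation.
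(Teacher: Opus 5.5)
Your proposal is correct and follows essentially the same route as the paper. Both arguments invoke the degree-$(-1)$ homogeneity $\omega_j(M,\chi)=M^{-1}\widehat\omega_j(\chi)$ from dimensional analysis in geometric units, evaluate it at the source-frame and detector-frame masses, substitute $M_{\mathrm f}^{\mathrm{det}}=(1+z)M_{\mathrm f}^{\mathrm{src}}$, and then split into real and imaginary parts and multiply by the same-frame mass.
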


\begin{proof}
The first two identities in \eqref{eq:appendixB-redshift-covariance} are simply the statement that Kerr quasi-normal-mode frequencies are homogeneous of degree \(-1\) in the mass parameter at fixed dimensionless spin. Indeed, once \(G=c=1\), the Kerr family is characterized by a single length scale \(M\) and a dimensionless spin \(\chi\), so dimensional analysis gives
\[
\omega_j(M,\chi)=M^{-1}\widehat\omega_j(\chi).
\]
Applying this once with \(M=\Msrc\) and once with \(M=\Mdet=(1+z)\Msrc\) gives
\[
\omega_j^{\mathrm{det}}(\Zmap_z(\psrc))
=
\frac{1}{(1+z)\Msrc}\widehat\omega_j(\chif)
=
\frac{1}{1+z}\,\omega_j^{\mathrm{src}}(\psrc),
\]
which is \eqref{eq:appendixB-redshift-scaling-frequency}. Writing real and imaginary parts yields the corresponding formulas for \(f_j\) and \(\gamma_j\). Multiplying by the mass in the same frame gives the final identities.
\end{proof}

\begin{corollary}[Frame-invariant quality factors and separation geometry]\label{cor:frame-invariant-quality}
For every mode \(j\),
\[
Q_j
:=
\frac{\Re\omega_j}{-2\Im\omega_j}
=
\frac{\Re\widehat\omega_j}{-2\Im\widehat\omega_j}
\]
is independent of the frame. More generally, for any finite family \(\mathcal M\),
\[
\widehat\delta_{\mathrm{iso}}(\chif;\mathcal M)
:=
\frac12\min_{\substack{j,k\in\mathcal M\\j\neq k}}
|\widehat\omega_j(\chif)-\widehat\omega_k(\chif)|
\]
satisfies
\begin{equation}\label{eq:appendixB-isolation-scaling}
\delta_{\mathrm{iso}}^{\mathrm{det}}(\pdet;\mathcal M)
=
\frac{\widehat\delta_{\mathrm{iso}}(\chif;\mathcal M)}{\Mdet},
\qquad
\delta_{\mathrm{iso}}^{\mathrm{src}}(\psrc;\mathcal M)
=
\frac{\widehat\delta_{\mathrm{iso}}(\chif;\mathcal M)}{\Msrc}.
\end{equation}
\end{corollary}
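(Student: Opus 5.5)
The corollary has two parts, and both follow from the exact redshift covariance of Proposition~\ref{prop:redshift-covariance}. The idea is to show that every quantity involved is, in either frame, a fixed mass prefactor times a function of \(\chif\) alone. Throughout I will use the identity \(\omega_j(M,\chif)=M^{-1}\widehat\omega_j(\chif)\), which holds in each frame with the mass taken in that same frame.

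\emph{Quality factors.} Write \(\omega_j=M^{-1}\widehat\omega_j(\chif)\) with \(M\) equal to \(\Mdet\) or \(\Msrc\) as appropriate. Since \(M>0\) is real, it comes out of both the real and imaginary parts. Hence
\[
\frac{\Re\omega_j}{-2\Im\omega_j}=\frac{M^{-1}\Re\widehat\omega_j}{-2M^{-1}\Im\widehat\omega_j}=\frac{\Re\widehat\omega_j}{-2\Im\widehat\omega_j}.
\]
The denominator is nonzero because \(\Im\widehat\omega_j=-\widehat\gamma_j<0\), by the sign convention of Appendix~\ref{app:assumption-ledger}. The right-hand side depends only on \(\chif\), and \(\chif\) is frame invariant. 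So \(Q_j\) takes the same value in both frames.

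\emph{Isolation margins.} Start from Definition~\ref{def:isolation-margin}, applied in each frame. For each pair \(j\neq k\), the gap identity (the one-line argument behind \eqref{eq:sec5-gap-identity}) gives
\[
|\omega_j-\omega_k|=M^{-1}|\widehat\omega_j(\chif)-\widehat\omega_k(\chif)|.
\]
The factor \(M^{-1}>0\) is the same for every pair. Multiplying a finite set of nonnegative numbers by a positive constant commutes with taking the minimum, so
\[
\tfrac12\min_{j\neq k}M^{-1}|\widehat\omega_j-\widehat\omega_k|=M^{-1}\widehat\delta_{\mathrm{iso}}(\chif;\mathcal M).
\]
Taking \(M=\Mdet\) gives the detector-frame half of \eqref{eq:appendixB-isolation-scaling}, and taking \(M=\Msrc\) gives the source-frame half.

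No step here is a real obstacle. The only points needing care are two small bookkeeping facts. First, the same mass scale multiplies every mode in a given frame, which is what lets it pass through the minimum. Second, the detector- and source-frame margins differ only by the factor \(1+z\), so each one is tied to its own frame's mass.
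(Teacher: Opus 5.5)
Your proposal is correct and follows essentially the same route as the paper. Both arguments use the exact scaling \(\omega_j=M^{-1}\widehat\omega_j(\chif)\) from Proposition~\ref{prop:redshift-covariance}: a real positive factor cancels between real and imaginary parts for \(Q_j\), and the common factor \(M^{-1}\) passes through the minimum over pairs for the isolation margin. The only cosmetic difference is that the paper cancels the inter-frame factor \(1/(1+z)\), whereas you cancel \(M^{-1}\) within each frame; your version directly gives the equality with the dimensionless ratio and also checks that \(-2\Im\widehat\omega_j\neq 0\).
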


\begin{proof}
The quality-factor identity is immediate because the common factor \(1/(1+z)\) cancels between real and imaginary parts. For the isolation margin, Proposition~\ref{prop:redshift-covariance} gives
\[
|\omega_j^{\mathrm{det}}(\pdet)-\omega_k^{\mathrm{det}}(\pdet)|
=
\frac{1}{\Mdet}
|\widehat\omega_j(\chif)-\widehat\omega_k(\chif)|.
\]
Taking the minimum over distinct pairs \((j,k)\) proves the first identity in \eqref{eq:appendixB-isolation-scaling}; the second is identical with \(\Mdet\) replaced by \(\Msrc\).
\end{proof}

The preceding proposition also identifies the exact limitation of ringdown-only measurements as far as source-frame masses are concerned.

\begin{proposition}[Redshift degeneracy of ringdown frequencies]\label{prop:redshift-degeneracy}
Fix \(\chif\in[0,1)\) and a mode label \(j\). Let \((M_{\mathrm f,1}^{\mathrm{src}},z_1)\) and \((M_{\mathrm f,2}^{\mathrm{src}},z_2)\) satisfy
\[
(1+z_1)M_{\mathrm f,1}^{\mathrm{src}}=(1+z_2)M_{\mathrm f,2}^{\mathrm{src}}.
\]
Then
\[
\omega_j^{\mathrm{det}}\bigl((1+z_1)M_{\mathrm f,1}^{\mathrm{src}},\chif\bigr)
=
\omega_j^{\mathrm{det}}\bigl((1+z_2)M_{\mathrm f,2}^{\mathrm{src}},\chif\bigr).
\]
Hence, at fixed spin and without external redshift information, ringdown frequencies determine only the redshifted mass \(\Mdet\), not \(\Msrc\) and \(z\) separately.
\end{proposition}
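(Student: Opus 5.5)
The plan is to reduce everything to the homogeneity identity of Proposition~\ref{prop:redshift-covariance}: the detector-frame frequency depends on the source-frame mass and the redshift only through the product \((1+z)\Msrc\). Once that is in hand, the statement follows by substituting the hypothesis. No analytic estimate is needed; the content is a one-line substitution followed by an interpretive remark.

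The steps, in order, are as follows.

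First, invoke Proposition~\ref{prop:redshift-covariance} in the form \(\omega_j^{\mathrm{det}}(M,\chif)=M^{-1}\widehat\omega_j(\chif)\). This holds for any detector-frame mass \(M>0\) at fixed spin.

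Second, for \(i=1,2\), set \(M_i:=(1+z_i)M_{\mathrm f,i}^{\mathrm{src}}\). The hypothesis says exactly \(M_1=M_2\).

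Third, evaluate both sides: \(\omega_j^{\mathrm{det}}(M_1,\chif)=M_1^{-1}\widehat\omega_j(\chif)=M_2^{-1}\widehat\omega_j(\chif)=\omega_j^{\mathrm{det}}(M_2,\chif)\). This gives the displayed equality. Strictly, it is already implied by equality of the first arguments, since \(\omega_j^{\mathrm{det}}\) is a function of \((M,\chif)\). I would still record the explicit formula, because it shows that the dependence on \(z\) enters only through the product.

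For the ``hence'' clause, fix any observed detector-frame mass \(\Mdet\). Every pair \((\Msrc,z)=(\Mdet/(1+z),z)\) with \(z\ge0\) produces the same frequency \(\omega_j^{\mathrm{det}}\). Since this holds for every \(j\), it also holds for any finite collection of modes. The full data vector of ringdown frequencies is therefore constant along the one-parameter curve \(\{(\Mdet/(1+z),z)\colon z\ge 0\}\) in \((\Msrc,z)\)-space.

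Consequently, no map from ringdown frequencies to \((\Msrc,z)\) can be injective. At most, the frequencies determine the orbit of this curve, which is labeled by \(\Mdet\). Separating \(\Msrc\) from \(z\) requires an independent redshift input, which is consistent with Proposition~\ref{prop:sec2-detector-frame-naturality}.

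The only step needing care is making the non-identifiability clause precise. The equality itself is trivial. To turn the interpretive sentence into a precise statement, I would phrase it as: the map \((\Msrc,z)\mapsto(\omega_j^{\mathrm{det}})_{j}\) factors through \((\Msrc,z)\mapsto(1+z)\Msrc\), and hence has fibers of positive dimension.
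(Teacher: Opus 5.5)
Your proposal is correct and follows essentially the same route as the paper. Both arguments note that the hypothesis says the two pairs share the same detector-frame mass, and then apply the homogeneity identity of Proposition~\ref{prop:redshift-covariance} to obtain \(\omega_j^{\mathrm{det}}=\widehat\omega_j(\chif)/\Mdet\) in both cases; your statement that the frequency map factors through \((1+z)\Msrc\) is simply a more explicit version of the paper's closing non-identifiability sentence.
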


\begin{proof}
The hypothesis says precisely that the two pairs \((M_{\mathrm f,1}^{\mathrm{src}},z_1)\) and \((M_{\mathrm f,2}^{\mathrm{src}},z_2)\) correspond to the same detector-frame mass \(\Mdet\). Applying Proposition~\ref{prop:redshift-covariance} to each pair yields
\[
\omega_j^{\mathrm{det}}
=
\frac{1}{\Mdet}\widehat\omega_j(\chif)
\]
in both cases, so the detector-frame frequencies coincide. Since the detector-frame spectrum depends only on \(\Mdet\) and \(\chif\), the source mass and the redshift are not separately identifiable from ringdown frequencies alone.
\end{proof}

Proposition~\ref{prop:redshift-degeneracy} is the reason all inversion and all trust-region bounds are carried out in the detector frame. Source-frame masses are used only for astrophysical reporting and comparison with public inspiral-merger-ringdown products. They are not the primitive objects of the finite-window inference.

\subsection{Peak-relative times and dimensionless windows}

The detector data come with an absolute GPS time, but absolute source-frame time is not an observable used anywhere in the analysis. Only offsets from a chosen fiducial peak time enter the dynamics.

\begin{definition}[Peak-relative times]\label{def:peak-relative-times}
Let \(\wpeak\) be the detector-frame fiducial peak time fixed in Appendix~\ref{app:assumption-ledger}. For any detector-frame time \(t^{\mathrm{det}}\), define the peak-relative detector-frame offset
\[
\Delta t^{\mathrm{det}}:=t^{\mathrm{det}}-\wpeak.
\]
For a source at redshift \(z\), the corresponding source-frame offset is defined by
\[
\Delta t^{\mathrm{src}}:=\frac{\Delta t^{\mathrm{det}}}{1+z}.
\]
Only these offsets are converted between frames. The absolute GPS timestamp itself is not assigned a source-frame meaning here.
\end{definition}

If \(t_0^{\mathrm{det}}\) is the detector-frame ringdown start time and \(T^{\mathrm{det}}\) is the detector-frame window length, we define
\[
t_0^{\mathrm{src}}:=\frac{t_0^{\mathrm{det}}}{1+z},
\qquad
T^{\mathrm{src}}:=\frac{T^{\mathrm{det}}}{1+z}.
\]
The useful variables are the dimensionless ratios.

\begin{proposition}[Frame invariance of dimensionless time variables]\label{prop:dimensionless-time-invariance}
Let \(\pdet=\Zmap_z(\psrc)\). Then the dimensionless time coordinate
\[
\tau:=\frac{\Delta t^{\mathrm{det}}}{\Mdet}=\frac{\Delta t^{\mathrm{src}}}{\Msrc}
\]
is frame-invariant. In particular,
\begin{equation}\label{eq:appendixB-window-invariants}
\tau_0:=\frac{t_0^{\mathrm{det}}}{\Mdet}=\frac{t_0^{\mathrm{src}}}{\Msrc},
\qquad
\Theta:=\frac{T^{\mathrm{det}}}{\Mdet}=\frac{T^{\mathrm{src}}}{\Msrc}.
\end{equation}
If \(t_0^{\mathrm{det}}\) is expressed in seconds and \(\Mdet\) in solar masses, then
\begin{equation}\label{eq:appendixB-dimensionless-time-seconds}
\tau_0
=
\frac{t_0^{\mathrm{det}}/\mathrm{s}}{(\Mdet/M_\odot)\tauSun}.
\end{equation}
\end{proposition}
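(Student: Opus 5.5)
The plan is to substitute the definitions of Definition~\ref{def:peak-relative-times} and Definition~\ref{def:frame-parameters} directly into the two candidate expressions for \(\tau\). The only inputs are the redshift map \(\Mdet=(1+z)\Msrc\) and the offset convention \(\Delta t^{\mathrm{src}}=\Delta t^{\mathrm{det}}/(1+z)\). First I will write
\[
\frac{\Delta t^{\mathrm{src}}}{\Msrc}
=\frac{\Delta t^{\mathrm{det}}/(1+z)}{\Mdet/(1+z)}
=\frac{\Delta t^{\mathrm{det}}}{\Mdet}.
\]
The common factor \(1+z>0\) cancels, which is legitimate because \(z\ge 0\). This proves that \(\tau\) is the same in both frames.

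Next I will specialize to the two window variables. Taking \(\Delta t^{\mathrm{det}}=t_0^{\mathrm{det}}\) gives the identity for \(\tau_0\). This works because \(t_0\) is already measured from \(\wpeak\), and the definition \(t_0^{\mathrm{src}}=t_0^{\mathrm{det}}/(1+z)\) is exactly the offset rule. For \(\Theta\), I note that \(T^{\mathrm{det}}\) is a difference of two detector-frame times, namely the right and left window endpoints. It is therefore a difference of two peak-relative offsets, and the offset rule is linear, so it is rescaled by the same factor. The identical cancellation then yields \eqref{eq:appendixB-window-invariants}.

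Finally, for the unit formula \eqref{eq:appendixB-dimensionless-time-seconds}, I will convert the geometric mass into seconds. By \eqref{eq:appendixB-solar-time}, a mass \(\Mdet\) corresponds to the time \((\Mdet/M_\odot)\tauSun\) seconds. Dividing \(t_0^{\mathrm{det}}\) in seconds by this quantity gives a dimensionless ratio equal to \(\tau_0\).

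I do not expect a real obstacle. The one point needing care is consistency with the paper's frame conventions: \(t_0\) and \(T\) must be understood as detector-frame offsets relative to \(\wpeak\), per Definition~\ref{def:default-frame-convention}, rather than as absolute GPS times. Absolute times are deliberately given no source-frame meaning, so the argument has to stay entirely at the level of offsets.
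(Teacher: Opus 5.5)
Your proposal is correct and matches the paper's proof: divide $\Delta t^{\mathrm{det}}=(1+z)\Delta t^{\mathrm{src}}$ by $M_{\mathrm f}^{\mathrm{det}}=(1+z)M_{\mathrm f}^{\mathrm{src}}$, specialize to $t_0$ and $T$, and convert the detector-frame mass to seconds using $\tau_\odot$. Your argument that $T$ is a difference of peak-relative offsets is harmless but unnecessary, since the paper defines $T^{\mathrm{src}}:=T^{\mathrm{det}}/(1+z)$ directly.
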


\begin{proof}
By Definition~\ref{def:peak-relative-times},
\[
\Delta t^{\mathrm{det}}=(1+z)\Delta t^{\mathrm{src}},
\]
while Definition~\ref{def:frame-parameters} gives \(\Mdet=(1+z)\Msrc\). Dividing the first identity by the second yields
\[
\frac{\Delta t^{\mathrm{det}}}{\Mdet}
=
\frac{\Delta t^{\mathrm{src}}}{\Msrc}.
\]
Applying this to \(\Delta t=t_0\) and \(\Delta t=T\) proves \eqref{eq:appendixB-window-invariants}. Formula \eqref{eq:appendixB-dimensionless-time-seconds} is simply the conversion of \(\Mdet\) into seconds via \eqref{eq:appendixB-solar-time}.
\end{proof}

The preceding proposition is more than notational convenience. It means that the statement ``the analysis starts at \(t_0=5M_{\mathrm f}\) after the peak'' is frame-independent, provided the mass in the denominator is taken in the same frame as the numerator. Accordingly, every later scan over start time may be expressed either in detector-frame seconds or in the invariant dimensionless coordinate \(\tau_0\), but never by mixing the two.

\begin{theorem}[Frame-invariant finite-window template]\label{thm:frame-invariant-template}
Assume the detector-frame decomposition of Appendix~\ref{app:assumption-ledger}, namely \eqref{eq:appendixA-signal-decomposition}, holds on the window \([0,T^{\mathrm{det}}]\) with remnant parameter \(\pdet=(\Mdet,\chif)\). Set \(\Theta=T^{\mathrm{det}}/\Mdet\) and define, for \(0\le \tau\le \Theta\),
\[
\widetilde y_I(\tau):=y_I^{(t_0^{\mathrm{det}},T^{\mathrm{det}})}(\Mdet\tau),
\]
and likewise
\[
\widetilde r_{I,\mathrm{tail}}(\tau):=r_{I,\mathrm{tail}}^{(t_0^{\mathrm{det}},T^{\mathrm{det}})}(\Mdet\tau),
\quad
\widetilde r_{I,\mathrm{mm}}(\tau):=r_{I,\mathrm{mm}}^{(t_0^{\mathrm{det}},T^{\mathrm{det}})}(\Mdet\tau),
\quad
\widetilde n_I(\tau):=n_I^{(t_0^{\mathrm{det}},T^{\mathrm{det}})}(\Mdet\tau).
\]
Then
\begin{equation}\label{eq:appendixB-dimensionless-decomposition}
\widetilde y_I(\tau)
=
\sum_{j\in\mathcal M} A_{I,j}e^{-i\widehat\omega_j(\chif)\tau}
+
\widetilde r_{I,\mathrm{tail}}(\tau)
+
\widetilde r_{I,\mathrm{mm}}(\tau)
+
\widetilde n_I(\tau),
\qquad 0\le \tau\le \Theta.
\end{equation}
Moreover, if
\[
\widetilde{\mathcal H}_\Theta:=L^2([0,\Theta];\mathbb C^2),
\qquad
\|\widetilde y\|_{\widetilde{\mathcal H}_\Theta}^2
:=
\sum_{I\in\Dnet}\int_0^\Theta |\widetilde y_I(\tau)|^2\,d\tau,
\]
then
\begin{equation}\label{eq:appendixB-norm-scaling}
\|y\|_{\mathcal H_{T^{\mathrm{det}}}}^2
=
\Mdet\,\|\widetilde y\|_{\widetilde{\mathcal H}_\Theta}^2.
\end{equation}
The same dimensionless representation \eqref{eq:appendixB-dimensionless-decomposition} is obtained if one starts from the source-frame window \([0,T^{\mathrm{src}}]\) and rescales by \(\Msrc\) instead of \(\Mdet\).
\end{theorem}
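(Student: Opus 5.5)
The plan is to substitute $u=\Mdet\tau$ directly into the decomposition \eqref{eq:appendixA-signal-decomposition} and then apply the redshift covariance of Proposition~\ref{prop:redshift-covariance}.

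\emph{Pointwise identity.} Fix $\tau\in[0,\Theta]$. Then $u:=\Mdet\tau\in[0,T^{\mathrm{det}}]$, because $\Theta=T^{\mathrm{det}}/\Mdet$, so \eqref{eq:appendixA-signal-decomposition} may be evaluated at this $u$. The three remainder terms become $\widetilde r_{I,\mathrm{tail}}(\tau)$, $\widetilde r_{I,\mathrm{mm}}(\tau)$ and $\widetilde n_I(\tau)$ by their definitions. For each modal term, write $\omega_j(\pdet)=\Mdet^{-1}\widehat\omega_j(\chif)$, which is \eqref{eq:appendixB-redshift-covariance}. This gives $e^{-i\omega_j(\pdet)\Mdet\tau}=e^{-i\widehat\omega_j(\chif)\tau}$, with the amplitudes $A_{I,j}$ unchanged. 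These substitutions together yield \eqref{eq:appendixB-dimensionless-decomposition}.

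\emph{Norm identity.} For \eqref{eq:appendixB-norm-scaling}, change variables $u=\Mdet\tau$, $du=\Mdet\,d\tau$, in each detector integral $\int_0^{T^{\mathrm{det}}}|y_I(u)|^2\,du$. The upper limit becomes $\Theta$, and each integral equals $\Mdet\int_0^\Theta|\widetilde y_I(\tau)|^2\,d\tau$. Summing over $I\in\Dnet$ gives the claim. Measurability and square-integrability are preserved under affine rescaling, so $\widetilde y\in\widetilde{\mathcal H}_\Theta$.

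\emph{Source-frame statement.} This is the only step needing care about conventions. On the source-frame window, the local source time is $u^{\mathrm{src}}=u/(1+z)$, in agreement with Definition~\ref{def:peak-relative-times}. The source-frame signal is the detector signal read in that clock, and $T^{\mathrm{src}}=T^{\mathrm{det}}/(1+z)$. Rescaling by $\Msrc$ sends $u^{\mathrm{src}}=\Msrc\tau$, and hence $u=(1+z)\Msrc\tau=\Mdet\tau$, so the composite reparametrization is exactly the one used above. Proposition~\ref{prop:dimensionless-time-invariance} guarantees that the range of $\tau$ is again $[0,T^{\mathrm{src}}/\Msrc]=[0,\Theta]$. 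For the modal terms, $\omega_j^{\mathrm{src}}\Msrc=\widehat\omega_j(\chif)$ by Proposition~\ref{prop:redshift-covariance}, so the exponentials coincide.

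I expect no real obstacle. The only point to state explicitly is that the source-frame signal is defined as the detector signal in the rescaled clock, with amplitudes treated as frame-independent labels. This makes the two routes give literally the same functions of $\tau$.
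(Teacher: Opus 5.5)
Your proposal is correct and follows essentially the same route as the paper. You substitute $u=\Mdet\tau$ into \eqref{eq:appendixA-signal-decomposition} and use $\Mdet\,\omega_j^{\mathrm{det}}(\pdet)=\widehat\omega_j(\chif)$ from Proposition~\ref{prop:redshift-covariance}; you change variables in each detector integral for the norm identity; and you observe $\tau=u^{\mathrm{det}}/\Mdet=u^{\mathrm{src}}/\Msrc$ for the source-frame statement. Your extra remarks, namely the range check $u\in[0,T^{\mathrm{det}}]$ and the convention that the source-frame signal is the same function read in the rescaled clock with unchanged amplitudes, only spell out what the paper's proof leaves implicit.
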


\begin{proof}
Starting from \eqref{eq:appendixA-signal-decomposition} and substituting \(u=\Mdet\tau\), we obtain
\[
\widetilde y_I(\tau)
=
\sum_{j\in\mathcal M}
A_{I,j}e^{-i\omega_j^{\mathrm{det}}(\pdet)\Mdet\tau}
+
\widetilde r_{I,\mathrm{tail}}(\tau)
+
\widetilde r_{I,\mathrm{mm}}(\tau)
+
\widetilde n_I(\tau).
\]
By Proposition~\ref{prop:redshift-covariance},
\[
\Mdet\,\omega_j^{\mathrm{det}}(\pdet)=\widehat\omega_j(\chif),
\]
so the oscillatory factor becomes \(e^{-i\widehat\omega_j(\chif)\tau}\). This proves \eqref{eq:appendixB-dimensionless-decomposition}.

For the norm identity, use the change of variables \(u=\Mdet\tau\):
\[
\|y\|_{\mathcal H_{T^{\mathrm{det}}}}^2
=
\sum_{I\in\Dnet}\int_0^{T^{\mathrm{det}}}|y_I(u)|^2\,du
=
\sum_{I\in\Dnet}\int_0^\Theta |\widetilde y_I(\tau)|^2 \Mdet\,d\tau
=
\Mdet\,\|\widetilde y\|_{\widetilde{\mathcal H}_\Theta}^2.
\]
Finally, if the same physical window is described in the source frame, then \(u^{\mathrm{det}}=(1+z)u^{\mathrm{src}}\) and \(\Mdet=(1+z)\Msrc\), so
\[
\tau=\frac{u^{\mathrm{det}}}{\Mdet}=\frac{u^{\mathrm{src}}}{\Msrc}.
\]
Applying Proposition~\ref{prop:redshift-covariance} in the source frame shows that the same dimensionless exponent \(e^{-i\widehat\omega_j(\chif)\tau}\) appears. Hence the dimensionless template is the same in both frames.
\end{proof}

Theorem~\ref{thm:frame-invariant-template} explains why the later theory is naturally formulated in the detector frame but can still be expressed cleanly in terms of frame-free dimensionless variables. The data-analysis pipeline sees frequencies in hertz and times in seconds, yet the underlying mode geometry is entirely encoded by \(\widehat\omega_j(\chif)\), \(\tau_0\), and \(\Theta\).

\subsection{Residuals, uncertainty radii, and auxiliary consistency}

The trust-region inequalities in Appendix~\ref{app:assumption-ledger} are written in detector-frame quantities because the actual fitting is performed on detector-frame data. Their behavior under frame changes is nevertheless explicit.

\begin{proposition}[Scaling of frequency errors and auxiliary residuals]\label{prop:residual-scaling}
Let \(j\) be a mode label and assume that a detector-frame estimate \(\widehat\omega_j^{\mathrm{det}}\) is reported together with the corresponding source-frame rescaling
\[
\widehat\omega_j^{\mathrm{src}}:=(1+z)\widehat\omega_j^{\mathrm{det}}.
\]
Let \(\widehat\pdet=(\widehat M^{\mathrm{det}},\widehat\chi)\) and \(\widehat\psrc=((1+z)^{-1}\widehat M^{\mathrm{det}},\widehat\chi)\) denote the corresponding remnant estimates. Then
\begin{equation}\label{eq:appendixB-residual-scaling}
\left|\widehat\omega_j^{\mathrm{src}}-\omega_j^{\mathrm{src}}(\widehat\psrc)\right|
=
(1+z)\left|\widehat\omega_j^{\mathrm{det}}-\omega_j^{\mathrm{det}}(\widehat\pdet)\right|,
\end{equation}
and therefore
\begin{equation}\label{eq:appendixB-dimensionless-residual-invariant}
\widehat M^{\mathrm{src}}
\left|\widehat\omega_j^{\mathrm{src}}-\omega_j^{\mathrm{src}}(\widehat\psrc)\right|
=
\widehat M^{\mathrm{det}}
\left|\widehat\omega_j^{\mathrm{det}}-\omega_j^{\mathrm{det}}(\widehat\pdet)\right|.
\end{equation}
The same scaling law holds for total frequency uncertainty radii:
\[
\varepsilon_j^{\mathrm{src}}=(1+z)\varepsilon_j^{\mathrm{det}},
\qquad
\widehat M^{\mathrm{src}}\varepsilon_j^{\mathrm{src}}
=
\widehat M^{\mathrm{det}}\varepsilon_j^{\mathrm{det}}.
\]
\end{proposition}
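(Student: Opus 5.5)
The plan is to reduce both displayed identities to the homogeneity relation of Proposition~\ref{prop:redshift-covariance}, applied at the estimated remnant rather than at the true one. The first step is to evaluate the source-frame Kerr prediction at \(\widehat\psrc\). Since \(\widehat\psrc=\Zmap_z^{-1}(\widehat\pdet)\), equation \eqref{eq:appendixB-redshift-scaling-frequency} with \(\psrc=\widehat\psrc\) gives
\[
\omega_j^{\mathrm{src}}(\widehat\psrc)=(1+z)\,\omega_j^{\mathrm{det}}(\widehat\pdet).
\]
The proposition is stated for an arbitrary source-frame parameter, so no assumption that \(\widehat\psrc\) equals the true remnant is needed, and \(\widehat\psrc\) need not lie in \(\Ksrc\) for this step.

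Next I combine this with the defining rescaling \(\widehat\omega_j^{\mathrm{src}}=(1+z)\widehat\omega_j^{\mathrm{det}}\):
\[
\widehat\omega_j^{\mathrm{src}}-\omega_j^{\mathrm{src}}(\widehat\psrc)=(1+z)\bigl(\widehat\omega_j^{\mathrm{det}}-\omega_j^{\mathrm{det}}(\widehat\pdet)\bigr).
\]
Taking moduli and using \(1+z>0\) yields \eqref{eq:appendixB-residual-scaling}. Multiplying both sides by \(\widehat M^{\mathrm{src}}=(1+z)^{-1}\widehat M^{\mathrm{det}}\) cancels the factor \(1+z\), which gives \eqref{eq:appendixB-dimensionless-residual-invariant}.

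For the uncertainty radii, I read \(\varepsilon_j^{\mathrm{src}}\) as a radius bounding \(|\widehat\omega_j^{\mathrm{src}}-\omega_j^{\mathrm{src}}(\psrc)|\) in the same way that \(\varepsilon_j^{\mathrm{det}}\) bounds the detector-frame error. Applying the previous argument at the true parameter \(\psrc=\Zmap_z^{-1}(\pdet)\) shows that the source-frame frequency error equals \((1+z)\) times the detector-frame error. Each of the four pieces in \eqref{eq:appendixA-total-frequency-radius} therefore transforms by the same factor, and so does their sum; it is cleanest to take \(\varepsilon_j^{\mathrm{src}}:=(1+z)\varepsilon_j^{\mathrm{det}}\) as the definition of the transported radius. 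Multiplying by the masses then gives the last invariance.

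The only delicate point is this last clause. The statement does not independently define source-frame radii, so the plan makes that convention explicit rather than trying to derive it. Everything else is a one-line consequence of degree \(-1\) homogeneity.
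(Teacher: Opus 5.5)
Your proposal is correct and follows the paper's proof essentially step for step. Both apply the redshift covariance of Proposition~\ref{prop:redshift-covariance} at the estimated remnant \(\widehat p^{\mathrm{src}}\), subtract the result from the defining rescaling, take moduli, and multiply by \(\widehat M^{\mathrm{src}}=\widehat M^{\mathrm{det}}/(1+z)\). Your explicit convention \(\varepsilon_j^{\mathrm{src}}:=(1+z)\varepsilon_j^{\mathrm{det}}\) for the radii matches the paper's one-line justification that each radius bounds an absolute frequency difference and therefore scales by the same factor.
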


\begin{proof}
By construction,
\[
\widehat\omega_j^{\mathrm{src}}=(1+z)\widehat\omega_j^{\mathrm{det}}.
\]
On the other hand, Proposition~\ref{prop:redshift-covariance} gives
\[
\omega_j^{\mathrm{src}}(\widehat\psrc)
=
(1+z)\,\omega_j^{\mathrm{det}}(\widehat\pdet).
\]
Subtracting the two identities proves \eqref{eq:appendixB-residual-scaling}. Multiplying both sides by \(\widehat M^{\mathrm{src}}=\widehat M^{\mathrm{det}}/(1+z)\) yields \eqref{eq:appendixB-dimensionless-residual-invariant}. The statement for \(\varepsilon_j\) is identical, because each uncertainty radius is a bound on an absolute frequency difference and therefore scales in the same way.
\end{proof}

Proposition~\ref{prop:residual-scaling} shows that the auxiliary consistency tests may be written either in dimensional detector-frame units, which is convenient for direct comparison with the whitened data, or in dimensionless form, which is often conceptually cleaner. The pass/fail content is the same once the tolerance is scaled accordingly.

\begin{remark}\label{rem:detector-frame-trust-tests}
In the detector-frame presentation above, we keep the trust test in detector-frame form, exactly as stated in Definition~\ref{def:tolerance-vector-and-trust-set}. This is the natural choice because the noise power spectral density, whitening filter, and finite-window likelihood all live in detector-frame units. Whenever a dimensionless reformulation is advantageous, the correct invariant combinations are \(\tau_0\), \(\Theta\), \(\widehat\omega_j\), \(\widehat\delta_{\mathrm{iso}}\), and \(\Mf\varepsilon_j\).
\end{remark}

\subsection{Source-frame reporting and error propagation}

Although the inversion itself is carried out in the detector frame, the final astrophysical summary of the remnant mass is often reported in the source frame. Once a redshift posterior is supplied, the exact interval propagation formula used in later tables is elementary.

\begin{proposition}[Propagation of detector-frame mass and redshift intervals]\label{prop:source-mass-interval-propagation}
Let
\[
M^{\mathrm{det}}\in [M_-,M_+],
\qquad
z\in[z_-,z_+],
\qquad
0\le z_- \le z_+,
\]
and define
\[
M^{\mathrm{src}}=\frac{M^{\mathrm{det}}}{1+z}.
\]
Then
\begin{equation}\label{eq:appendixB-source-interval}
M^{\mathrm{src}}
\in
\Bigl[\frac{M_-}{1+z_+},\frac{M_+}{1+z_-}\Bigr].
\end{equation}
If, moreover, \(M^{\mathrm{det}}=\bar M+\delta M\) and \(z=\bar z+\delta z\) with \(|\delta M|\ll \bar M\) and \(|\delta z|\ll 1+\bar z\), then the first-order variation is
\begin{equation}\label{eq:appendixB-linearized-source-mass}
\delta M^{\mathrm{src}}
=
\frac{\delta M}{1+\bar z}
-
\frac{\bar M}{(1+\bar z)^2}\,\delta z
+
O\!\left(|\delta M\,\delta z|+|\delta z|^2\right),
\end{equation}
and therefore
\begin{equation}\label{eq:appendixB-linearized-source-mass-bound}
|\delta M^{\mathrm{src}}|
\le
\frac{|\delta M|}{1+\bar z}
+
\frac{\bar M}{(1+\bar z)^2}|\delta z|
+
O\!\left(|\delta M\,\delta z|+|\delta z|^2\right).
\end{equation}
\end{proposition}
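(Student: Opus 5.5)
The statement has two parts, an exact interval enclosure and a first-order expansion, and I will treat them separately. Throughout I work with the map
\[
\phi(M,z):=\frac{M}{1+z}
\]
on the rectangle $[M_-,M_+]\times[z_-,z_+]\subset(0,\infty)\times[0,\infty)$.

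\textbf{Interval enclosure.} I will follow the monotonicity argument of Proposition~\ref{prop:explicit-source-box}. The partial derivatives are $\partial_M\phi=(1+z)^{-1}>0$ and $\partial_z\phi=-M(1+z)^{-2}<0$. So $\phi$ is increasing in $M$ and decreasing in $z$.

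The minimum is therefore attained at the corner $(M_-,z_+)$, and the maximum at the corner $(M_+,z_-)$. This gives \eqref{eq:appendixB-source-interval}. If one also wants the image to be the whole interval, this follows from continuity of $\phi$ on a connected set, but only the enclosure is claimed.

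\textbf{First-order expansion.} I will Taylor expand $\phi$ about $(\bar M,\bar z)$. The cleanest route avoids a two-variable Taylor theorem by writing
\[
\frac{1}{1+\bar z+\delta z}=\frac{1}{1+\bar z}\cdot\frac{1}{1+\epsilon},
\qquad
\epsilon:=\frac{\delta z}{1+\bar z}.
\]
The hypothesis $|\delta z|\ll 1+\bar z$ gives $|\epsilon|$ small. I will then use $(1+\epsilon)^{-1}=1-\epsilon+R(\epsilon)$ with the explicit remainder $R(\epsilon)=\epsilon^2/(1+\epsilon)$. This remainder satisfies $|R|\le 2\epsilon^2$ when $|\epsilon|\le 1/2$.

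Multiplying out $(\bar M+\delta M)(1+\bar z)^{-1}(1-\epsilon+R)$ and subtracting $\bar M/(1+\bar z)$ leaves three groups of terms:
\begin{itemize}
\item the linear terms $\delta M/(1+\bar z)-\bar M\,\delta z/(1+\bar z)^2$;
\item the cross term $-\delta M\,\delta z/(1+\bar z)^2$, which is $O(|\delta M\,\delta z|)$;
\item the remainder terms $(\bar M+\delta M)R/(1+\bar z)$, which are $O(|\delta z|^2)$ because $|\delta M|\ll\bar M$ keeps $\bar M+\delta M$ bounded by $2\bar M$.
\end{itemize}
This yields \eqref{eq:appendixB-linearized-source-mass}.

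The bound \eqref{eq:appendixB-linearized-source-mass-bound} then follows from the triangle inequality applied to the linear part.

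\textbf{Main obstacle.} The only delicate point is what the $O(\cdot)$ symbol means. I will state explicitly that its implied constant depends only on $\bar M$ and $\bar z$, and that it is valid on the region $|\delta z|\le(1+\bar z)/2$ and $|\delta M|\le\bar M$. Making the remainder explicit as above is enough to justify this. Everything else is routine.
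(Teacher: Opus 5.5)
Your proposal is correct and follows the paper's proof essentially step for step. The enclosure comes from monotonicity of $M/(1+z)$ in each variable, with extrema at the corners $(M_-,z_+)$ and $(M_+,z_-)$; the linearization comes from expanding $(1+\bar z+\delta z)^{-1}$, multiplying by $\bar M+\delta M$, and isolating the cross term $-\delta M\,\delta z/(1+\bar z)^2$. The only difference is that you make the paper's $O(|\delta z|^2)$ remainder explicit via $(1+\epsilon)^{-1}=1-\epsilon+\epsilon^2/(1+\epsilon)$ on the region $|\delta z|\le(1+\bar z)/2$, $|\delta M|\le\bar M$, which is a harmless sharpening.
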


\begin{proof}
The map \((M,z)\mapsto M/(1+z)\) is increasing in \(M\) and decreasing in \(z\) on \((0,\infty)\times[0,\infty)\). Therefore its minimum on the rectangle \([M_-,M_+]\times[z_-,z_+]\) is attained at \((M_-,z_+)\), and its maximum is attained at \((M_+,z_-)\). This proves \eqref{eq:appendixB-source-interval}.

For the linearization, expand
\[
M^{\mathrm{src}}
=
\frac{\bar M+\delta M}{1+\bar z+\delta z}
=
(\bar M+\delta M)\Bigl[(1+\bar z)^{-1}-(1+\bar z)^{-2}\delta z+O(|\delta z|^2)\Bigr].
\]
Multiplying out and subtracting \(\bar M/(1+\bar z)\) gives
\[
\delta M^{\mathrm{src}}
=
\frac{\delta M}{1+\bar z}
-
\frac{\bar M}{(1+\bar z)^2}\delta z
-
\frac{\delta M\,\delta z}{(1+\bar z)^2}
+
O(|\delta z|^2),
\]
which is \eqref{eq:appendixB-linearized-source-mass}. Taking absolute values yields \eqref{eq:appendixB-linearized-source-mass-bound}.
\end{proof}

For GW250114 the public GWOSC products place the redshift near \(z\approx 0.08\text{--}0.09\) \cite{GWOSC_GW250114}. At such redshifts the distinction between detector-frame and source-frame masses is numerically modest but mathematically non-negligible. Since the central theme here is finite-window reliability rather than cosmological inference, we do not re-estimate the source-frame mass from luminosity distance and a background cosmology. Instead, we report source-frame values by propagating our detector-frame remnant estimates through the same redshift convention used in the public GWOSC parameter-estimation products.

\subsection{Frame convention used below}

Every quantity tied directly to detector strain is handled in the detector frame, every astrophysical mass quoted for comparison with public inspiral-merger-ringdown products is reported in the source frame, and every dimensionless quantity of the form \(t/M\), \(T/M\), or \(M\omega\) is frame-invariant provided the same frame is used in numerator and denominator.

The practical consequences are immediate. First, every start-time scan is a scan in \(t_0^{\mathrm{det}}\) or, equivalently, in the invariant variable \(\tau_0=t_0^{\mathrm{det}}/\Mdet\). Second, every fitted QNM frequency appearing in the trust-region theorem is a detector-frame frequency unless the superscript \(\mathrm{src}\) is shown explicitly. Third, every comparison between our remnant estimate and the public GWOSC remnant summaries is made twice when helpful: once in detector-frame units, which is the natural frame of the fit, and once in source-frame units, which is the natural frame of the astrophysical interpretation. Fourth, no absolute event time is ever converted into a source-frame timestamp. Only offsets from the chosen fiducial peak are meaningful for the windowed ringdown problem.

With this convention fixed, later sections may write \(t_0/M_{\mathrm f}\), \(T/M_{\mathrm f}\), \(F_{220}(p)\), and \(R_j(\widehat p,\widehat\omega_j)\) without hidden frame ambiguity. Detector-frame inference comes first, source-frame reporting comes afterwards, and the invariant bridge between the two is the dimensionless Kerr spectrum \(\widehat\omega_j(\chif)\).

\providecommand{\tevt}{t_{\mathrm{evt}}}
\providecommand{\fs}{f_{\mathrm s}}
\providecommand{\dt}{\Delta t}
\providecommand{\fny}{f_{\mathrm{Ny}}}
\providecommand{\Toff}{\mathcal T_{\mathrm{off}}}
\providecommand{\Wop}{\mathcal W}
\providecommand{\Shannon}{\mathcal S_{\Delta t}}
\providecommand{\etaTap}{\eta_{\mathrm{tap}}}
\providecommand{\Lpsd}{L_{\mathrm{PSD}}}
\providecommand{\Hpsd}{H_{\mathrm{PSD}}}
\providecommand{\Mpsd}{M_{\mathrm{PSD}}}
\providecommand{\Uhann}{U_{\mathrm H}}

\section{Power spectral density estimation, whitening, windowing, and tapering}\label{app:preprocessing}

This appendix fixes the preprocessing map that takes the public detector strain to the finite-window network object \(y^{(t_0,T)}\in\mathcal H_T\) used throughout the analysis. Later sections compare deterministic extraction errors, statistical errors, and model-mismatch errors on the same footing. That comparison is meaningful only after the noise weighting, the time localization, and the tapering convention have been fixed. We therefore specify the preprocessing at the same level of precision as the later inverse and extraction maps.

The default input streams are the public H1 and L1 cleaned strains released on the GWOSC detail page for GW250114\_082203. The same page provides both \(16\,\mathrm{kHz}\) and \(4\,\mathrm{kHz}\) products. We adopt the \(4\,\mathrm{kHz}\) strain products as the default input because the corresponding Nyquist frequency \(2048\,\mathrm{Hz}\) remains comfortably above the ringdown band of interest while keeping the preprocessing transparent and exactly reproducible from the public release \cite{GWOSC_GW250114,VallisneriEtAl2015GWOSC}. The general relation between detector-frame times and the dimensionless variables \(t_0/\Mf\) and \(T/\Mf\) was fixed in Appendix~\ref{app:frame-conventions}; only the preprocessing prescription is new here. Standard background on windowed spectral estimation and whitening may be found in \cite{Welch1967FFTPSD,Harris1978Windows,AllenEtAl2012Findchirp,ChatziioannouEtAl2019PSD,AbbottEtAl2020NoiseGuide}.

\subsection{Public strain streams, sampling, and continuous representatives}

Let \(\tevt\) denote the public GPS event time of GW250114\_082203 and let \(d_I[n]\), with \(I\in\Dnet\), denote the corresponding sampled strain on the default \(4\,\mathrm{kHz}\) product. Thus
\[
\fs = 4096\,\mathrm{Hz},
\qquad
\dt = \fs^{-1} = \frac{1}{4096}\,\mathrm s,
\qquad
\fny = \frac{\fs}{2}=2048\,\mathrm{Hz}.
\]
We work on the centered \(1024\,\mathrm s\) interval
\[
[\tevt-512\,\mathrm s,\tevt+512\,\mathrm s],
\]
extracted from the public GWOSC strain release. The off-source region used for power-spectral-density estimation is
\begin{equation}\label{eq:appendixC-offsource-set}
\Toff
=
[\tevt-512\,\mathrm s,\tevt-16\,\mathrm s]
\cup
[\tevt+16\,\mathrm s,\tevt+512\,\mathrm s].
\end{equation}
The central \(32\,\mathrm s\) excision is wide enough to remove the merger signal and the immediate nonstationary neighborhood from the noise estimate without forcing the PSD to be learned from a remote epoch.

The numerical implementation is discrete, but the later deterministic estimates are written in continuous-time notation. To connect the two descriptions we identify each sampled sequence with its Shannon reconstruction on the Nyquist band.

\begin{definition}[Shannon reconstruction]\label{def:appendixC-shannon}
Let \(x=\{x[n]\}_{n\in\mathbb Z}\) be a square-summable sequence sampled at rate \(\fs\). Its Shannon reconstruction is
\begin{equation}\label{eq:appendixC-shannon}
(\Shannon x)(t)
:=
\sum_{n\in\mathbb Z}
x[n]\,
\operatorname{sinc}\!\left(\frac{t-n\dt}{\dt}\right),
\qquad
\operatorname{sinc}(u)=\frac{\sin(\pi u)}{\pi u}.
\end{equation}
Whenever a processed detector stream is written as a continuous function of time, this is the representative we use.
\end{definition}

The next proposition is the precise bridge between the discrete arrays manipulated by FFTs and the \(L^2\) norms that appear in Appendix~\ref{app:assumption-ledger}.

\begin{proposition}[Sample-space and time-space norm identity]\label{prop:appendixC-shannon-parseval}
Let \(x\) be any finitely supported sampled sequence and let \(x^\sharp=\Shannon x\). Then \(x^\sharp\) is band-limited to \([-\fny,\fny]\), satisfies \(x^\sharp(n\dt)=x[n]\), and obeys
\begin{equation}\label{eq:appendixC-shannon-parseval}
\|x^\sharp\|_{L^2(\mathbb R)}^2
=
\dt\sum_{n\in\mathbb Z}|x[n]|^2.
\end{equation}
\end{proposition}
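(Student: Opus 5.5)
The plan is to reduce everything to the Fourier picture. For a finitely supported sequence $x$, define the trigonometric polynomial
\[
X(f):=\sum_{n}x[n]\,e^{-2\pi i f n\dt}.
\]
It is $\fs$-periodic and continuous. The first step is to compute the Fourier transform of a single shifted kernel $t\mapsto\operatorname{sinc}((t-n\dt)/\dt)$. Using the standard fact that $\operatorname{sinc}$ is the inverse transform of the indicator of $[-\tfrac12,\tfrac12]$ and rescaling, this transform is
\[
\dt\,\mathbf 1_{[-\fny,\fny]}(f)\,e^{-2\pi i f n\dt}.
\]
Since the sum defining $x^\sharp=\Shannon x$ is finite, linearity gives
\[
\widehat{x^\sharp}(f)=\dt\,X(f)\,\mathbf 1_{[-\fny,\fny]}(f).
\]
This is supported in $[-\fny,\fny]$, which is the band-limitation claim. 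Finiteness of the sum also removes any convergence issue; the general square-summable case can be left aside because the statement only concerns finitely supported sequences.

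The interpolation property $x^\sharp(k\dt)=x[k]$ is immediate from the kernel values. We have $\operatorname{sinc}(k-n)=\delta_{kn}$ for integers $k,n$, because $\sin(\pi m)=0$ for $m\neq0$ and $\operatorname{sinc}(0)=1$ by continuity.

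For the norm identity, apply Plancherel on $L^2(\mathbb R)$:
\[
\|x^\sharp\|_{L^2}^2=\dt^2\int_{-\fny}^{\fny}|X(f)|^2\,df .
\]
The interval $[-\fny,\fny]$ has length $\fs=1/\dt$, which is exactly one period of $X$. Expanding $|X|^2$ as a double sum and using orthogonality of the exponentials $e^{-2\pi i f n\dt}$ over one period,
\[
\int_{-\fny}^{\fny}e^{-2\pi i f (n-m)\dt}\,df=\fs\,\delta_{nm},
\]
we obtain
\[
\int_{-\fny}^{\fny}|X(f)|^2\,df=\fs\sum_n|x[n]|^2 .
\]
Multiplying by $\dt^2$ and using $\dt\,\fs=1$ gives \eqref{eq:appendixC-shannon-parseval}.

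The only point requiring care is the normalization convention in the Fourier transform, which fixes the factor $\dt$ in the kernel transform. I would check it directly by verifying that the inverse transform of $\dt\,\mathbf 1_{[-\fny,\fny]}$ evaluated at $t=0$ equals $\dt\cdot 2\fny=1=\operatorname{sinc}(0)$. An alternative that avoids transforms altogether is to note that the shifted sinc kernels are mutually orthogonal in $L^2(\mathbb R)$, each with norm squared equal to $\dt$, and then expand the finite sum directly. Either route is routine, so I do not expect a genuine obstacle.
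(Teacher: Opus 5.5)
Your proof is correct. The paper argues on the time side: it cites the orthogonality relation $\int_{\mathbb R}\phi_n\overline{\phi_m}\,dt=\Delta t\,\delta_{nm}$ for the shifted kernels $\phi_n(t)=\operatorname{sinc}((t-n\Delta t)/\Delta t)$ as a known fact about the Paley--Wiener space, and then expands $\|\sum_n x[n]\phi_n\|^2$ directly. That is the alternative you mention at the end.

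Your main route instead passes to the frequency side. Via Plancherel, it reduces the identity to orthogonality of $e^{-2\pi i f n\Delta t}$ over one period $[-f_{\mathrm{Ny}},f_{\mathrm{Ny}}]$. This is exactly the computation that proves the sinc orthogonality the paper takes for granted, so your version is self-contained. It has a further advantage: the single kernel-transform computation gives both the band-limitation claim and the norm identity, whereas the paper only gestures at band limitation through the Fourier transform of the sinc basis.

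The cost is the normalization bookkeeping, which you already check. There is also the fact that $\operatorname{sinc}\notin L^1$, so its transform must be read in the $L^2$ (Plancherel) sense. Your appeal to Fourier inversion of the indicator $\mathbf 1_{[-1/2,1/2]}\in L^1\cap L^2$ handles this correctly.
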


\begin{proof}
The Shannon functions
\[
\phi_n(t):=\operatorname{sinc}\!\left(\frac{t-n\dt}{\dt}\right)
\]
form an orthogonal family in the Paley--Wiener space of functions band-limited to \([-\fny,\fny]\). More precisely,
\[
\int_{\mathbb R}\phi_n(t)\,\overline{\phi_m(t)}\,dt
=
\dt\,\delta_{nm}.
\]
Since \(x^\sharp=\sum_n x[n]\phi_n\), orthogonality gives
\[
\|x^\sharp\|_{L^2(\mathbb R)}^2
=
\sum_{n,m}x[n]\overline{x[m]}
\int_{\mathbb R}\phi_n(t)\overline{\phi_m(t)}\,dt
=
\dt\sum_n |x[n]|^2.
\]
The interpolation identity \(x^\sharp(n\dt)=x[n]\) is the standard cardinal property of the sinc kernel. Band limitation follows from the Fourier transform of the sinc basis.
\end{proof}

In the formulas below we often suppress the sharp and simply write a processed detector stream as a continuous function \(x_I(t)\). This is always understood in the sense of Definition~\ref{def:appendixC-shannon}.

\subsection{Welch power spectral density estimate}

The default PSD estimate is local in time, detector by detector, and is obtained from the off-source data \(\Toff\) by the Welch method \cite{Welch1967FFTPSD}. We use a Hann window and \(50\%\) overlap. The choices are conventional, but we record them explicitly because later robustness studies will vary them and only this appendix fixes the default.

Let
\[
\Lpsd = 8\,\mathrm s,
\qquad
\Hpsd = 4\,\mathrm s.
\]
Each connected component of \(\Toff\) in \eqref{eq:appendixC-offsource-set} has length \(496\,\mathrm s\), so the default prescription yields
\[
123
\]
Hann-windowed segments on the left of the event and
\[
123
\]
on the right, for a total of
\[
\Mpsd=246
\]
off-source segments per detector. The segment window is the Hann function
\begin{equation}\label{eq:appendixC-hann}
h_{\mathrm H}(u)
=
\frac12\Bigl(1-\cos\frac{2\pi u}{\Lpsd}\Bigr),
\qquad
0\le u\le \Lpsd,
\end{equation}
with normalization constant
\begin{equation}\label{eq:appendixC-hann-normalization}
\Uhann
:=
\int_0^{\Lpsd} h_{\mathrm H}(u)^2\,du
=
\frac{3\Lpsd}{8}.
\end{equation}

For the \(m\)-th off-source segment of detector \(I\), written as a continuous function \(x_{I,m}(u)\) on \([0,\Lpsd]\), we define the two-sided modified periodogram
\begin{equation}\label{eq:appendixC-two-sided-periodogram}
\mathcal I^{(2)}_{I,m}(f)
:=
\frac{1}{\Uhann}
\left|
\int_0^{\Lpsd}
h_{\mathrm H}(u)\,x_{I,m}(u)\,e^{-2\pi i f u}\,du
\right|^2,
\qquad
f\in\mathbb R.
\end{equation}
The two-sided Welch estimate is
\begin{equation}\label{eq:appendixC-two-sided-welch}
\widehat S^{(2)}_I(f)
:=
\frac{1}{\Mpsd}\sum_{m=1}^{\Mpsd}\mathcal I^{(2)}_{I,m}(f).
\end{equation}
The corresponding one-sided PSD estimate on \([0,\fny]\) is
\begin{equation}\label{eq:appendixC-one-sided-welch}
\widehat S_I(f)
=
\begin{cases}
\widehat S_I^{(2)}(0), & f=0,\\[0.5ex]
2\widehat S_I^{(2)}(f), & 0<f<\fny,\\[0.5ex]
\widehat S_I^{(2)}(\fny), & f=\fny.
\end{cases}
\end{equation}
In the discrete implementation \(\widehat S_I\) is computed on the FFT frequency grid and then extended to a continuous positive function on \([0,\fny]\) by piecewise-linear interpolation. The same symbol \(\widehat S_I\) is used for the discrete values and for this interpolant.

The next proposition isolates the exact bias mechanism of the windowed periodogram. It is completely standard and explains why segment length and taper choice matter even before statistical variance enters.

\begin{proposition}[Welch expectation as spectral convolution]\label{prop:appendixC-welch-bias}
Let \(n_I(t)\) be a zero-mean second-order stationary detector-noise process with two-sided PSD \(S_I^{(2)}\in L^1_{\mathrm{loc}}(\mathbb R)\). Define
\begin{equation}\label{eq:appendixC-spectral-kernel}
K_{\mathrm H}(\xi)
:=
\frac{|\widehat h_{\mathrm H}(\xi)|^2}{\Uhann},
\qquad
\widehat h_{\mathrm H}(\xi)
=
\int_0^{\Lpsd} h_{\mathrm H}(u)e^{-2\pi i \xi u}\,du .
\end{equation}
Then, for every \(f\in\mathbb R\),
\begin{equation}\label{eq:appendixC-welch-bias}
\mathbb E\,\widehat S_I^{(2)}(f)
=
\int_{\mathbb R}
S_I^{(2)}(\nu)\,K_{\mathrm H}(f-\nu)\,d\nu.
\end{equation}
In particular, the expected Welch spectrum is the true two-sided spectrum convolved with the spectral window \(K_{\mathrm H}\). The overlap pattern does not alter the expectation; it affects only variance and finite-sample dependence.
\end{proposition}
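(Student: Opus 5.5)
The plan is to compute the expectation of a single modified periodogram and then average over segments. Because the Welch estimate \eqref{eq:appendixC-two-sided-welch} is the arithmetic mean of the \(\Mpsd\) periodograms \(\mathcal I^{(2)}_{I,m}\), linearity of expectation reduces everything to one segment. Stationarity then forces each segment to have the same expectation, whatever its offset. The overlap pattern enters only through cross-segment covariances, which never appear in a first moment. This already accounts for the final sentence of the statement.

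For a single segment, write \(x_{I,m}(u)=n_I(s_m+u)\), where \(s_m\) is the segment's left endpoint, and set
\[
X_m(f)=\int_0^{\Lpsd}h_{\mathrm H}(u)\,n_I(s_m+u)\,e^{-2\pi i f u}\,du.
\]
Expanding \(|X_m(f)|^2\) as a double integral and exchanging expectation and integration gives
\[
\mathbb E|X_m(f)|^2=\iint h_{\mathrm H}(u)h_{\mathrm H}(v)\,R_I(u-v)\,e^{-2\pi i f(u-v)}\,du\,dv.
\]
The exchange is justified by Fubini, since \(h_{\mathrm H}\) is bounded with compact support and \(\mathbb E|n_I|^2<\infty\). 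Here \(R_I(\tau)=\mathbb E\,n_I(t+\tau)\overline{n_I(t)}\) is the autocovariance, which does not depend on \(s_m\). Next insert the spectral representation \(R_I(\tau)=\int S_I^{(2)}(\nu)e^{2\pi i\nu\tau}\,d\nu\) (Wiener--Khinchin). After a second exchange of integrals, the \(u\)- and \(v\)-integrals factor as \(\widehat h_{\mathrm H}(f-\nu)\) and its complex conjugate. Since \(h_{\mathrm H}\) is real, this yields
\[
\mathbb E|X_m(f)|^2=\int S_I^{(2)}(\nu)\,|\widehat h_{\mathrm H}(f-\nu)|^2\,d\nu.
\]
Dividing by \(\Uhann\) gives \eqref{eq:appendixC-welch-bias} with the kernel \(K_{\mathrm H}\) of \eqref{eq:appendixC-spectral-kernel}. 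As a consistency check, Parseval gives \(\int K_{\mathrm H}=1\), so the estimate is unbiased for flat spectra.

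The main obstacle is justifying the spectral-representation step when \(S_I^{(2)}\) is only assumed to lie in \(L^1_{\mathrm{loc}}\). I would argue it via the spectral measure. For a second-order stationary process, Bochner's theorem gives \(R_I(\tau)=\int e^{2\pi i\nu\tau}\,\mu_I(d\nu)\) for a finite positive measure \(\mu_I\). The hypothesis is read as \(d\mu_I=S_I^{(2)}\,d\nu\), which in fact makes \(S_I^{(2)}\in L^1(\mathbb R)\). The second interchange of integrals then follows from Tonelli applied to \(|h_{\mathrm H}(u)h_{\mathrm H}(v)|\,S_I^{(2)}(\nu)\), which is integrable on \([0,\Lpsd]^2\times\mathbb R\). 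If one insists on a PSD that is merely locally integrable, I would first truncate the spectrum to \([-\Lambda,\Lambda]\) and then pass to the limit by monotone convergence. This works because \(K_{\mathrm H}\ge0\) and decays like \(|\xi|^{-6}\).
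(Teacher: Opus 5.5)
Your proposal follows the paper's proof essentially step for step: linearity and stationarity reduce the claim to one modified periodogram, the double integral is rewritten using the autocovariance $R_I$, the spectral representation is inserted, and Fubini factors the inner integrals into $|\widehat h_{\mathrm H}(f-\nu)|^2/U_{\mathrm H}$. Your extra observation that second-order stationarity forces $S_I^{(2)}\in L^1(\mathbb R)$ is a genuine tightening, because it justifies the Tonelli/Fubini interchange that the paper uses without comment, and it also makes your truncation fallback for a merely locally integrable spectrum unnecessary.
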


\begin{proof}
Because expectation is linear and every off-source segment is a translate of the same stationary process, it suffices to prove the formula for a single modified periodogram \(\mathcal I_{I,m}^{(2)}(f)\). Using the covariance kernel
\[
R_I(\tau)
:=
\mathbb E[n_I(t+\tau)\overline{n_I(t)}]
=
\int_{\mathbb R} e^{2\pi i \nu \tau}S_I^{(2)}(\nu)\,d\nu,
\]
we obtain
\begin{align*}
\mathbb E\,\mathcal I_{I,m}^{(2)}(f)
&=
\frac{1}{\Uhann}
\mathbb E\left[
\left(\int_0^{\Lpsd} h_{\mathrm H}(u)n_I(u)e^{-2\pi i f u}\,du\right)
\overline{
\left(\int_0^{\Lpsd} h_{\mathrm H}(v)n_I(v)e^{-2\pi i f v}\,dv\right)}
\right]\\
&=
\frac{1}{\Uhann}
\int_0^{\Lpsd}\int_0^{\Lpsd}
h_{\mathrm H}(u)h_{\mathrm H}(v)e^{-2\pi i f(u-v)}
R_I(u-v)\,du\,dv.
\end{align*}
Insert the spectral representation of \(R_I\) and apply Fubini:
\begin{align*}
\mathbb E\,\mathcal I_{I,m}^{(2)}(f)
&=
\frac{1}{\Uhann}
\int_{\mathbb R}
S_I^{(2)}(\nu)
\left(
\int_0^{\Lpsd} h_{\mathrm H}(u)e^{-2\pi i (f-\nu)u}\,du
\right)
\overline{
\left(
\int_0^{\Lpsd} h_{\mathrm H}(v)e^{-2\pi i (f-\nu)v}\,dv
\right)}
\,d\nu\\
&=
\int_{\mathbb R}
S_I^{(2)}(\nu)\,
\frac{|\widehat h_{\mathrm H}(f-\nu)|^2}{\Uhann}
\,d\nu\\
&=
\int_{\mathbb R}
S_I^{(2)}(\nu)\,K_{\mathrm H}(f-\nu)\,d\nu.
\end{align*}
Averaging over \(m\) leaves the right-hand side unchanged, so \eqref{eq:appendixC-welch-bias} follows.
\end{proof}

Equation~\eqref{eq:appendixC-welch-bias} is the reason we later vary \(\Lpsd\) and the taper family only as robustness checks rather than as hidden tuning knobs. A longer segment narrows the spectral window and reduces bias, but at the price of weaker local stationarity; a shorter segment does the opposite. The default \(8\,\mathrm s\) choice is a conservative compromise for a nearby \(1024\,\mathrm s\) off-source neighborhood.

For numerical stability we regularize the PSD estimate by a detector-dependent positive floor. Let
\begin{equation}\label{eq:appendixC-psd-floor}
\lambda_I
:=
10^{-4}\,
\operatorname{med}\bigl\{\widehat S_I(f)\colon 20\,\mathrm{Hz}\le f\le 1024\,\mathrm{Hz}\bigr\}.
\end{equation}
We then define
\begin{equation}\label{eq:appendixC-psd-regularized}
\widehat S_I^\sharp(f)
:=
\max\{\widehat S_I(f),\lambda_I\},
\qquad
0\le f\le \fny.
\end{equation}
The floor is a numerical safeguard and not a physical model. Because \(\lambda_I\) is several orders of magnitude below the relevant off-source median, it never influences the ringdown band in ordinary circumstances; it merely excludes division by an accidentally tiny spectral estimate.

\subsection{Whitening as a noise-weighted isometry}

Applying the PSD estimate yields a whitening map. Since the one-sided PSD \(\widehat S_I^\sharp\) is defined only on \([0,\fny]\), we extend it evenly to the Nyquist band by
\begin{equation}\label{eq:appendixC-even-extension}
\widehat S_{I,\mathrm e}^\sharp(f)
:=
\widehat S_I^\sharp(|f|),
\qquad
|f|\le \fny.
\end{equation}
For any Nyquist-band-limited continuous signal \(x\), with Fourier transform
\[
\widetilde x(f)
=
\int_{\mathbb R} x(t)e^{-2\pi i f t}\,dt,
\]
the detector-wise whitening operator is defined by
\begin{equation}\label{eq:appendixC-whitening}
\widetilde{(\Wop_I x)}(f)
:=
\frac{\sqrt{2}\,\widetilde x(f)}
{\sqrt{\widehat S_{I,\mathrm e}^\sharp(f)}},
\qquad
|f|\le \fny.
\end{equation}
The factor \(\sqrt{2}\) is the standard conversion from a one-sided PSD to an even two-sided weighting. When \(x\) is a sampled detector stream we apply \eqref{eq:appendixC-whitening} on the FFT grid and then return to the time domain by inverse FFT; the resulting sampled sequence is again identified with its Shannon reconstruction.

\begin{proposition}[Exact noise-weighted norm identity]\label{prop:appendixC-whitening-isometry}
Let \(x\) be a real-valued Nyquist-band-limited signal. Then
\begin{equation}\label{eq:appendixC-whitening-isometry}
\|\Wop_I x\|_{L^2(\mathbb R)}^2
=
4\int_0^{\fny}
\frac{|\widetilde x(f)|^2}{\widehat S_I^\sharp(f)}\,df.
\end{equation}
More generally, for any two real Nyquist-band-limited signals \(x\) and \(z\),
\begin{equation}\label{eq:appendixC-whitening-innerproduct}
\langle \Wop_I x,\Wop_I z\rangle_{L^2(\mathbb R)}
=
4\,\Re\!\int_0^{\fny}
\frac{\widetilde x(f)\overline{\widetilde z(f)}}{\widehat S_I^\sharp(f)}\,df.
\end{equation}
\end{proposition}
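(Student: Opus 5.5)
The plan is to compute the $L^2$ norm of the whitened signal on the Fourier side and then fold the two-sided integral onto $[0,\fny]$ using Hermitian symmetry. By Definition~\ref{def:appendixC-shannon} and Proposition~\ref{prop:appendixC-shannon-parseval}, $x$ is band-limited to $[-\fny,\fny]$. Definition \eqref{eq:appendixC-whitening} multiplies $\widetilde x$ by the bounded function $\sqrt2/\sqrt{\widehat S_{I,\mathrm e}^\sharp}$. This function is bounded because of the positive floor $\lambda_I$ in \eqref{eq:appendixC-psd-regularized}. Hence $\Wop_I x$ is again band-limited and lies in $L^2(\mathbb R)$.

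By Plancherel, with the convention $\widetilde x(f)=\int x(t)e^{-2\pi i f t}\,dt$ (which has no $2\pi$ prefactor), we get
\[
\|\Wop_I x\|_{L^2(\mathbb R)}^2=\int_{-\fny}^{\fny}\frac{2\,|\widetilde x(f)|^2}{\widehat S_{I,\mathrm e}^\sharp(f)}\,df .
\]
Since $x$ is real, $\widetilde x(-f)=\overline{\widetilde x(f)}$, so $|\widetilde x|^2$ is even. The even extension \eqref{eq:appendixC-even-extension} makes $\widehat S_{I,\mathrm e}^\sharp$ even as well. Splitting the integral at $0$ and substituting $f\mapsto -f$ on the negative half doubles the integral over $[0,\fny]$. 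Because $\widehat S_{I,\mathrm e}^\sharp=\widehat S_I^\sharp$ there, this gives the factor $4$ in \eqref{eq:appendixC-whitening-isometry}.

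For the inner product \eqref{eq:appendixC-whitening-innerproduct}, I use the polarized Plancherel identity, which gives $2\int_{-\fny}^{\fny}\widetilde x\,\overline{\widetilde z}/\widehat S_{I,\mathrm e}^\sharp\,df$. The integrand at $-f$ equals $\overline{\widetilde x(f)}\widetilde z(f)/\widehat S_I^\sharp(f)$, which is the complex conjugate of the integrand at $f$. Adding the two halves therefore yields $2\Re$ of the positive-frequency integral, and hence $4\Re\int_0^{\fny}$. The left-hand side is real because both whitened signals are real. Their transforms inherit Hermitian symmetry, since the whitening multiplier is real and even.

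The only delicate point is bookkeeping rather than analysis. I have to confirm that the $\sqrt2$ in \eqref{eq:appendixC-whitening} together with the fold produces exactly the factor $4$. I also have to check that the endpoints $f=0,\fny$ (where \eqref{eq:appendixC-one-sided-welch} has no factor 2) are a null set and do not affect the integral. Finally, I need to note that measurability and integrability of the quotient follow from the floor and from $\widetilde x\in L^2$.
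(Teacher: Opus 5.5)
Your proposal is correct and follows essentially the same route as the paper's proof: apply Plancherel to the multiplier $\sqrt{2}/\sqrt{\widehat S_{I,\mathrm e}^\sharp}$, then use the reality of $x$ (and $z$) to fold the even, or conjugate-symmetric, integrand onto $[0,f_{\mathrm{Ny}}]$ and obtain the factor $4$ and the real part. Your extra remarks (boundedness of the multiplier from the floor $\lambda_I$, and the endpoints being a null set) are harmless refinements that the paper leaves implicit.
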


\begin{proof}
Parseval's identity gives
\[
\|\Wop_I x\|_{L^2(\mathbb R)}^2
=
\int_{-\fny}^{\fny}
\left|
\frac{\sqrt{2}\,\widetilde x(f)}
{\sqrt{\widehat S_I^\sharp(|f|)}}
\right|^2
\,df
=
2\int_{-\fny}^{\fny}
\frac{|\widetilde x(f)|^2}{\widehat S_I^\sharp(|f|)}\,df.
\]
Since \(x\) is real, \(|\widetilde x(-f)|=|\widetilde x(f)|\), so the last integral equals
\[
4\int_0^{\fny}\frac{|\widetilde x(f)|^2}{\widehat S_I^\sharp(f)}\,df,
\]
which proves \eqref{eq:appendixC-whitening-isometry}. The bilinear identity \eqref{eq:appendixC-whitening-innerproduct} is obtained in exactly the same way, replacing \(|\widetilde x(f)|^2\) by \(\widetilde x(f)\overline{\widetilde z(f)}\) and using the reality condition to pass from the full Nyquist band to positive frequencies.
\end{proof}

Proposition~\ref{prop:appendixC-whitening-isometry} is the reason the later finite-window norm is imposed after whitening rather than before it. The \(L^2\) norm of the whitened signal is exactly the noise-weighted norm induced by the estimated PSD. No additional convention is hiding behind the notation.

The next proposition records the only stability statement about whitening that we need later. It quantifies how much the weighted norm changes if the PSD estimate differs multiplicatively from the true local spectrum.

\begin{proposition}[Stability under PSD perturbations]\label{prop:appendixC-psd-perturbation}
Let \(S_I\) be a strictly positive reference one-sided PSD on \([0,\fny]\) and assume that, for some \(0\le \rho_I<1\),
\begin{equation}\label{eq:appendixC-relative-psd-error}
\sup_{0\le f\le \fny}
\left|
\frac{\widehat S_I^\sharp(f)}{S_I(f)}-1
\right|
\le \rho_I.
\end{equation}
Then, for every real Nyquist-band-limited signal \(x\),
\begin{equation}\label{eq:appendixC-whitening-equivalence}
\frac{1}{1+\rho_I}\,
4\int_0^{\fny}\frac{|\widetilde x(f)|^2}{S_I(f)}\,df
\le
\|\Wop_I x\|_{L^2(\mathbb R)}^2
\le
\frac{1}{1-\rho_I}\,
4\int_0^{\fny}\frac{|\widetilde x(f)|^2}{S_I(f)}\,df.
\end{equation}
\end{proposition}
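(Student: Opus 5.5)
The plan is to reduce \eqref{eq:appendixC-whitening-equivalence} to a pointwise comparison of the two spectral weights and then integrate. By Proposition~\ref{prop:appendixC-whitening-isometry}, the middle quantity equals
\[
4\int_0^{\fny}\frac{|\widetilde x(f)|^2}{\widehat S_I^\sharp(f)}\,df .
\]
So it suffices to bound the weight \(1/\widehat S_I^\sharp(f)\) above and below by constant multiples of \(1/S_I(f)\), uniformly in \(f\in[0,\fny]\).

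First, rewrite hypothesis \eqref{eq:appendixC-relative-psd-error} as the two-sided multiplicative enclosure
\[
(1-\rho_I)\,S_I(f)\le \widehat S_I^\sharp(f)\le (1+\rho_I)\,S_I(f),\qquad 0\le f\le \fny .
\]
This uses only \(S_I(f)>0\), so that multiplying through by \(S_I(f)\) preserves the inequalities. Because \(\rho_I<1\), the lower bound is strictly positive. Strict positivity of \(\widehat S_I^\sharp\) is in any case guaranteed by the floor \eqref{eq:appendixC-psd-regularized}, so reciprocals are legitimate.

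Next, take reciprocals, which reverses the inequalities:
\[
\frac{1}{1+\rho_I}\,\frac{1}{S_I(f)}\le \frac{1}{\widehat S_I^\sharp(f)}\le \frac{1}{1-\rho_I}\,\frac{1}{S_I(f)} .
\]
Multiply each term by the nonnegative quantity \(4|\widetilde x(f)|^2\) and integrate over \([0,\fny]\). Monotonicity of the integral for nonnegative integrands then gives exactly \eqref{eq:appendixC-whitening-equivalence} once the middle term is identified via Proposition~\ref{prop:appendixC-whitening-isometry}.

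The only point I would check is integrability. If the reference integral \(\int_0^{\fny}|\widetilde x|^2/S_I\,df\) is infinite, the upper bound is trivial and the lower bound forces the whitened norm to be infinite as well. This is consistent, since all integrands are nonnegative, so no finiteness assumption is needed beyond measurability. There is no real obstacle here; the argument is purely pointwise. The one thing to get right is the direction of the constants after reciprocation: the factor \(1/(1+\rho_I)\) comes from the upper bound on \(\widehat S_I^\sharp\), and \(1/(1-\rho_I)\) comes from its lower bound.
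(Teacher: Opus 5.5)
Your proposal is correct and follows the paper's proof essentially verbatim: convert the relative-error hypothesis into the two-sided enclosure $(1-\rho_I)S_I\le \widehat S_I^\sharp\le(1+\rho_I)S_I$, take reciprocals, multiply by $4|\widetilde x(f)|^2$, integrate over $[0,f_{\mathrm{Ny}}]$, and identify the middle term via Proposition~\ref{prop:appendixC-whitening-isometry}. Your added remark on infinite integrals is harmless but unnecessary. The PSD floor together with the hypothesis already forces $S_I$ to be bounded below by a positive constant, so both integrals are finite for any band-limited $L^2$ signal.
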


\begin{proof}
Condition \eqref{eq:appendixC-relative-psd-error} is equivalent to
\[
(1-\rho_I)S_I(f)\le \widehat S_I^\sharp(f)\le (1+\rho_I)S_I(f)
\]
for every \(f\in[0,\fny]\). Because \(0\le \rho_I<1\), all quantities are positive, and taking reciprocals gives
\[
\frac{1}{(1+\rho_I)S_I(f)}
\le
\frac{1}{\widehat S_I^\sharp(f)}
\le
\frac{1}{(1-\rho_I)S_I(f)}.
\]
Multiply through by \(4|\widetilde x(f)|^2\), integrate over \([0,\fny]\), and use Proposition~\ref{prop:appendixC-whitening-isometry}.
\end{proof}

The meaning of Proposition~\ref{prop:appendixC-psd-perturbation} is straightforward. Once the PSD estimate is locally accurate in relative error, whitening changes the natural detector-noise inner product only by a controlled multiplicative factor. In later sections this estimate is never used as a substitute for a full PSD-uncertainty analysis; it serves only to separate the deterministic inverse problem from the uncertainty introduced by off-source noise estimation \cite{ChatziioannouEtAl2019PSD,TalbotThrane2020PSDUncertainty}.

\subsection{Window extraction and local tapering}

Let \(w_I=\Wop_I d_I\) denote the continuously represented whitened strain of detector \(I\). The finite-window object used throughout is not obtained by tapering the raw strain and then whitening. The order is reversed: PSD estimation is performed on untapered off-source strain, the resulting filter is applied to the long detector stream, and only then is a local ringdown window extracted. The reason is simple. Whitening is meant to encode the detector-noise geometry of the full local epoch, whereas tapering is a later device for controlling edge effects in a short ringdown window. The two operations do not commute, and the order just described is used throughout.

Fix a detector-frame start time \(t_0\ge 0\) relative to the fiducial peak time \(\wpeak\) of Appendix~\ref{app:frame-conventions}, and fix a window length \(T>0\). The local taper is a symmetric Tukey window with total taper fraction \(0.10\), or equivalently left and right taper widths equal to \(5\%\) of the window length. Thus
\[
\etaTap=0.05
\]
and the taper is
\begin{equation}\label{eq:appendixC-local-taper}
a_{T,\etaTap}(u)
=
\begin{cases}
\frac12\left(1-\cos\frac{\pi u}{\etaTap T}\right), & 0\le u<\etaTap T,\\[1.0ex]
1, & \etaTap T\le u\le (1-\etaTap)T,\\[1.0ex]
\frac12\left(1-\cos\frac{\pi(T-u)}{\etaTap T}\right), & (1-\etaTap)T<u\le T.
\end{cases}
\end{equation}
The whitened and tapered detector data on the ringdown window are then
\begin{equation}\label{eq:appendixC-windowed-data}
y_I^{(t_0,T)}(u)
:=
a_{T,\etaTap}(u)\,
w_I(\wpeak+t_0+u),
\qquad
0\le u\le T.
\end{equation}
The network object \(y^{(t_0,T)}\in\mathcal H_T\) is the vector of the two detector channels. This is exactly the object that appears in Appendix~\ref{app:assumption-ledger}.

The local taper does three things at once. It kills the endpoint jump of a square window, it confines the support to a compact interval, and it perturbs the interior of the window only on a set of measure \(2\etaTap T\). The next proposition records the quantitative form of these statements.

\begin{proposition}[Basic taper bounds]\label{prop:appendixC-taper-bounds}
For every \(T>0\), the taper \(a_{T,\etaTap}\) defined in \eqref{eq:appendixC-local-taper} satisfies
\begin{equation}\label{eq:appendixC-taper-bounds-1}
0\le a_{T,\etaTap}(u)\le 1
\qquad
\text{for all }u\in[0,T],
\end{equation}
\begin{equation}\label{eq:appendixC-taper-bounds-2}
\operatorname{meas}\{u\in[0,T]\colon a_{T,\etaTap}(u)\neq 1\}
=
2\etaTap T,
\end{equation}
\begin{equation}\label{eq:appendixC-taper-bounds-3}
\|a_{T,\etaTap}'\|_{L^\infty(0,T)}
=
\frac{\pi}{2\etaTap T},
\end{equation}
and, for every \(x\in L^2([0,T])\),
\begin{equation}\label{eq:appendixC-taper-bounds-4}
\|a_{T,\etaTap}x\|_{L^2([0,T])}\le \|x\|_{L^2([0,T])}.
\end{equation}
If \(x\in L^\infty([0,T])\), then
\begin{equation}\label{eq:appendixC-taper-bounds-5}
\|(1-a_{T,\etaTap})x\|_{L^2([0,T])}
\le
\sqrt{2\etaTap T}\,\|x\|_{L^\infty([0,T])}.
\end{equation}
\end{proposition}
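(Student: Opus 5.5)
The plan is to read every claim directly off the explicit piecewise formula \eqref{eq:appendixC-local-taper}. The only function that appears is the half-cosine ramp \(\phi(s)=\tfrac12(1-\cos\pi s)\) on \(s\in[0,1]\). Setting \(s=u/(\etaTap T)\) on the left piece and \(s=(T-u)/(\etaTap T)\) on the right piece turns each ramp into \(\phi\).

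First I verify the pointwise bound \eqref{eq:appendixC-taper-bounds-1}. Since \(\cos\pi s\in[-1,1]\), we have \(\phi(s)\in[0,1]\), and the middle piece is identically \(1\).

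For the measure identity \eqref{eq:appendixC-taper-bounds-2}, note that \(\phi(s)=1\) only at \(s=1\). Hence \(a_{T,\etaTap}<1\) exactly on \([0,\etaTap T)\cup((1-\etaTap)T,T]\). These two intervals are disjoint because \(\etaTap=0.05<1/2\), and their total length is \(2\etaTap T\).

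For the derivative \eqref{eq:appendixC-taper-bounds-3}, I differentiate each piece. On the left ramp, \(a'(u)=\frac{\pi}{2\etaTap T}\sin\frac{\pi u}{\etaTap T}\). On the right ramp the same expression appears with the opposite sign, and \(a'=0\) in the middle. The supremum of \(|\sin|\) over a half period is \(1\), attained at \(u=\etaTap T/2\), which gives equality. I also check that the one-sided derivatives match at the junctions \(u=\etaTap T\) and \(u=(1-\etaTap)T\). There \(\sin\pi=0\), so \(a\) is \(C^1\) and the \(L^\infty\) norm of \(a'\) is the pointwise maximum.

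For the \(L^2\) contraction \eqref{eq:appendixC-taper-bounds-4}, I use \(|a\,x|\le|x|\) pointwise, which follows from the first bound, and then integrate.

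For \eqref{eq:appendixC-taper-bounds-5}, the function \(1-a\) lies in \([0,1]\) and is supported on the set from the measure identity. Therefore
\[
\int_0^T|(1-a)x|^2\,du\le \|x\|_{L^\infty}^2\cdot 2\etaTap T,
\]
and taking square roots gives the bound.

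None of these steps is a real obstacle. The one point that needs care is the derivative claim: the equality is stated with the exact constant, so I must confirm that the maximum of \(|a'|\) is actually attained and that there is no kink at the junctions. Both facts follow from the computation above.
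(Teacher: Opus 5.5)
Your proof is correct and follows essentially the same route as the paper's proof: you read the pointwise bound and the exceptional set $[0,\eta_{\mathrm{tap}}T)\cup((1-\eta_{\mathrm{tap}})T,T]$ off the cosine formula, differentiate the two shoulders to get $|a'|=\frac{\pi}{2\eta_{\mathrm{tap}}T}|\sin(\cdot)|$, and integrate the pointwise bounds to obtain the two $L^2$ estimates. Your extra checks make explicit details the paper leaves implicit: that the maximum of $|a'|$ is attained at the shoulder midpoint, that $a'$ is continuous at the junctions, and that the shoulders are disjoint because $\eta_{\mathrm{tap}}<1/2$.
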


\begin{proof}
The pointwise bound \eqref{eq:appendixC-taper-bounds-1} is immediate from the cosine formula in \eqref{eq:appendixC-local-taper}. The set on which \(a_{T,\etaTap}\neq 1\) is exactly
\[
[0,\etaTap T)\cup((1-\etaTap)T,T],
\]
whose total measure is \(2\etaTap T\), proving \eqref{eq:appendixC-taper-bounds-2}. On the left taper shoulder,
\[
a_{T,\etaTap}'(u)
=
\frac{\pi}{2\etaTap T}\,
\sin\!\left(\frac{\pi u}{\etaTap T}\right),
\]
and on the right shoulder the derivative has the same absolute value. Hence \eqref{eq:appendixC-taper-bounds-3} follows.

For \eqref{eq:appendixC-taper-bounds-4}, use \(0\le a_{T,\etaTap}\le 1\):
\[
\|a_{T,\etaTap}x\|_{L^2([0,T])}^2
=
\int_0^T a_{T,\etaTap}(u)^2|x(u)|^2\,du
\le
\int_0^T |x(u)|^2\,du.
\]
Finally,
\[
\|(1-a_{T,\etaTap})x\|_{L^2([0,T])}^2
=
\int_{\{a_{T,\etaTap}\neq 1\}}
|(1-a_{T,\etaTap}(u))x(u)|^2\,du
\le
\|x\|_{L^\infty([0,T])}^2\,
\operatorname{meas}\{a_{T,\etaTap}\neq 1\},
\]
and \eqref{eq:appendixC-taper-bounds-2} gives \eqref{eq:appendixC-taper-bounds-5}.
\end{proof}

The preceding estimate is enough to control the taper-induced perturbation of the damped modal sums that appear in the later signal model.

\begin{corollary}[Taper perturbation of damped mode sums]\label{cor:appendixC-taper-damped-sum}
Fix a finite mode family \(\mathcal M\) and detector amplitudes \(A_{I,j}\in\mathbb C\). Let
\[
q_I(u):=\sum_{j\in\mathcal M}A_{I,j}e^{-i\omega_j u},
\qquad
\Im\omega_j<0,
\qquad
0\le u\le T.
\]
Then
\begin{equation}\label{eq:appendixC-taper-damped-sum-single}
\|(1-a_{T,\etaTap})q_I\|_{L^2([0,T])}
\le
\sqrt{2\etaTap T}
\sum_{j\in\mathcal M}|A_{I,j}|.
\end{equation}
Consequently,
\begin{equation}\label{eq:appendixC-taper-damped-sum-network}
\left\|
\bigl((1-a_{T,\etaTap})q_{\mathrm{H1}},(1-a_{T,\etaTap})q_{\mathrm{L1}}\bigr)
\right\|_{\mathcal H_T}
\le
\sqrt{2\etaTap T}
\left(
\sum_{I\in\Dnet}
\Bigl(\sum_{j\in\mathcal M}|A_{I,j}|\Bigr)^2
\right)^{1/2}.
\end{equation}
\end{corollary}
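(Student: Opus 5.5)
The plan is to reduce Corollary~\ref{cor:appendixC-taper-damped-sum} to the last bound of Proposition~\ref{prop:appendixC-taper-bounds}, namely \eqref{eq:appendixC-taper-bounds-5}. That bound controls \(\|(1-a_{T,\etaTap})x\|_{L^2}\) by \(\sqrt{2\etaTap T}\) times the sup norm of \(x\). So the only real input needed is an \(L^\infty\) bound on the damped modal sum \(q_I\) over the window \([0,T]\).

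First, I bound each exponential pointwise. Since \(|e^{-i\omega_j u}| = e^{\Im(\omega_j)\, u}\) and \(\Im\omega_j<0\) with \(u\ge 0\), each factor satisfies \(|e^{-i\omega_j u}|\le 1\) on \([0,T]\). The triangle inequality then gives
\[
\|q_I\|_{L^\infty([0,T])}\le \sum_{j\in\mathcal M}|A_{I,j}|.
\]
In particular \(q_I\) is bounded (indeed continuous), so \eqref{eq:appendixC-taper-bounds-5} applies to it directly. Substituting the sup bound yields \eqref{eq:appendixC-taper-damped-sum-single} immediately.

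For the network statement, I use the definition of the norm on \(\mathcal H_T\): its square is the sum over \(I\in\Dnet\) of the detector-wise \(L^2([0,T])\) norms squared. I square \eqref{eq:appendixC-taper-damped-sum-single} for each detector and sum over \(I\in\{\mathrm{H1},\mathrm{L1}\}\). Factoring out \(2\etaTap T\) and taking square roots gives \eqref{eq:appendixC-taper-damped-sum-network}.

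There is no real obstacle. The only point to watch is the sign convention: it is exactly \(\Im\omega_j<0\), together with \(u\ge0\) measured from the left window endpoint, that makes the modes decay and keeps the sup bound at \(1\) rather than \(e^{|\Im\omega_j|T}\). I would state this explicitly in the proof, since the same convention, \(\omega=2\pi f - i\gamma\) with \(\gamma>0\), is fixed in Appendix~\ref{app:assumption-ledger}.
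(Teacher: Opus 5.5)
Your proposal is correct and follows essentially the same route as the paper: bound \(|e^{-i\omega_j u}|\le 1\) on \([0,T]\) using \(\Im\omega_j<0\), obtain \(\|q_I\|_{L^\infty}\le\sum_j|A_{I,j}|\), apply \eqref{eq:appendixC-taper-bounds-5}, then square, sum over \(I\in\Dnet\), and take the square root. Your explicit remark that \(|e^{-i\omega_j u}|=e^{\Im(\omega_j)u}\) is the same sign-convention point the paper uses via \(\omega_j=2\pi f_j-i\gamma_j\), \(\gamma_j>0\).
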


\begin{proof}
Because \(\omega_j=2\pi f_j-i\gamma_j\) with \(\gamma_j>0\), one has
\[
|e^{-i\omega_j u}|=e^{-\gamma_j u}\le 1
\qquad
\text{for }u\ge 0.
\]
Therefore
\[
|q_I(u)|
\le
\sum_{j\in\mathcal M}|A_{I,j}|
\qquad
\text{for all }u\in[0,T].
\]
Applying Proposition~\ref{prop:appendixC-taper-bounds} with \(x=q_I\) proves \eqref{eq:appendixC-taper-damped-sum-single}. The network estimate \eqref{eq:appendixC-taper-damped-sum-network} follows by squaring, summing over \(I\in\Dnet\), and taking the square root.
\end{proof}

Corollary~\ref{cor:appendixC-taper-damped-sum} will later be absorbed into the deterministic finite-window error decomposition. Its point is not that tapering is free. Its point is that the perturbation is explicit, short-edge supported, and proportional to \(\sqrt{T}\) with a prefactor that is fixed once \(\etaTap\) is fixed.

\subsection{Operator form of the preprocessing map}

The entire default preprocessing can be written as a single operator identity. Let \(R_{t_0,T}\) denote the restriction-shift operator
\[
(R_{t_0,T}x)(u):=x(\wpeak+t_0+u),
\qquad
0\le u\le T,
\]
and let \(M_{a_{T,\etaTap}}\) denote multiplication by the taper \(a_{T,\etaTap}\). Then the finite-window detector data are
\begin{equation}\label{eq:appendixC-operator-form}
y_I^{(t_0,T)}
=
R_{t_0,T}\,M_{a_{T,\etaTap}}\,\Wop_I d_I.
\end{equation}
Equation~\eqref{eq:appendixC-operator-form} is the exact definition behind the later notation. No square-windowed or pre-tapered variant is ever used unless an explicit robustness test says otherwise.

\begin{remark}\label{rem:appendixC-noncommutation}
The order in \eqref{eq:appendixC-operator-form} matters. In general,
\[
M_{a_{T,\etaTap}}\Wop_I \neq \Wop_I M_{a_{T,\etaTap}},
\]
because whitening is a Fourier multiplier while tapering is time-domain multiplication. A change of order would therefore define a different finite-window inverse problem, and no later computation changes this convention.
\end{remark}

\subsection{Default preprocessing constants}

For ease of reference we collect the default preprocessing constants in Table~\ref{tab:appendixC-defaults}. Later robustness computations may vary them, but the statements of the main paper always refer to the default values below unless an explicit exception is made.

\begin{table}[ht]
\centering
\caption{Default preprocessing constants fixed in this appendix.}
\label{tab:appendixC-defaults}
\small
\begin{tabular}{@{}p{0.28\textwidth}p{0.18\textwidth}p{0.42\textwidth}@{}}
\toprule
Quantity & Default value & Role \\
\midrule
Public strain product & \(4\,\mathrm{kHz}\) cleaned GWOSC strain for H1 and L1 & Default reproducible input from the GWOSC release for GW250114 \\
Centered working interval & \([\tevt-512\,\mathrm s,\tevt+512\,\mathrm s]\) & Local epoch from which PSDs and finite windows are drawn \\
Off-source region & \([\tevt-512,\tevt-16]\cup[\tevt+16,\tevt+512]\)\,\(\mathrm s\) & PSD estimation set with central \(32\,\mathrm s\) excision \\
PSD segment length & \(\Lpsd=8\,\mathrm s\) & Welch segment duration \\
PSD hop size & \(\Hpsd=4\,\mathrm s\) & \(50\%\) overlap in Welch averaging \\
PSD segment taper & Hann & Modified periodogram window \\
PSD floor & \(10^{-4}\) times median PSD on \(20\text{--}1024\,\mathrm{Hz}\) & Prevents accidental division by an anomalously small estimate \\
Local ringdown taper & Tukey with \(\etaTap=0.05\) per edge & Controls endpoint leakage on each finite ringdown window \\
Continuous representative & Shannon interpolation at \(\dt=1/4096\,\mathrm s\) & Identifies FFT arrays with the \(L^2\) objects used later \\
\bottomrule
\end{tabular}
\end{table}

 The later extraction and inverse theorems can refer to \(y_I^{(t_0,T)}\), \(\widehat S_I^\sharp\), and \(\Wop_I\) without any remaining ambiguity. The finite-window problem is therefore specified down to the public files, the segment lengths, the taper widths, and the exact order in which the operations are applied.

\providecommand{\Ichi}{I_{\chi}}
\providecommand{\Ichipad}{I_{\chi}^{\sharp}}
\providecommand{\Mbox}{I_M}
\providecommand{\omegahat}{\widehat\omega}
\providecommand{\modeJ}{\mathcal J}
\providecommand{\qnmMain}{p}
\providecommand{\qnmAudit}{q}
\providecommand{\etaval}{\eta}
\providecommand{\detmargin}{\Delta^{\mathrm{det}}}
\providecommand{\sepmargin}{\Delta^{\mathrm{sep}}}

\section{Kerr QNM data generation and interpolation}\label{app:kerr-qnm-data}

This appendix fixes the Kerr quasi-normal-mode input used by the inverse and consistency analysis. The exact Kerr QNM frequencies themselves are not derived here. They enter as external perturbation-theory data. What must be made precise here is narrower and more concrete. On the event-local detector-frame box
\[
\Kdet=[66.5,69.5] \, M_{\odot} \times [0.64,0.71],
\]
we need reliable values of the three mode functions \(\omega_{220}\), \(\omega_{221}\), and \(\omega_{440}\), together with their first derivatives with respect to the remnant parameters. We also need explicit lower bounds for the primary Jacobian and for the pairwise mode separations that later enter the trust-region test. The basic structural simplification is exact Kerr scaling. Once charge is set to zero, the frequency of every gravitational Kerr mode scales as \(M^{-1}\). Consequently the only genuinely numerical representation needed on \(\Kdet\) is a one-dimensional representation in the spin variable. The mass dependence is restored analytically afterwards.

That observation is the reason we do not use a two-parameter black-box fit in \((M,\chi)\). Instead, for each mode \(j\in\{220,221,440\}\) we represent the dimensionless spin sequence
\[
\omegahat_j(\chi):=M\,\omega_j(M,\chi),
\qquad \chi \in \Ichipad,
\]
by a local polynomial interpolant on a compact spin interval. All later derivatives are then obtained by differentiating that polynomial exactly. This keeps the local inverse geometry transparent and avoids repeated finite differencing of an opaque external solver.

The external Kerr data come directly from the public Cook--Zalutskiy tables \emph{Kerr Quasinormal Modes: \(s=-2\), \(n=0--7\)} \cite{Cook2019ZenodoKerrQNM,CookZalutskiy2014}. Concretely, the modes \(220\) and \(440\) are read from \texttt{KerrQNM\_00.h5}, while the overtone \(221\) is read from \texttt{KerrQNM\_01.h5}. The public \texttt{qnm} package of Stein remains the natural algorithmic reference for these data \cite{Teukolsky1973,Leaver1985KerrQNM,CookZalutskiy2014,Stein2019qnm}, but the construction carried out here is table driven. Every figure and every numerical constant in this appendix is reconstructed from the HDF tables themselves, with no hidden call to an external perturbation solver.

The numerical construction is built directly from the Cook--Zalutskiy tables \( \texttt{KerrQNM\_00.h5} \) and \( \texttt{KerrQNM\_01.h5} \). The three dimensionless mode curves used here are shown in Figure~\ref{fig:appendixD-mode-curves}. The same data also support a nested degree-$64/96$ Chebyshev audit on the padded interval \(\Ichipad=[0.60,0.74]\). The resulting main--audit value and derivative radii are summarized graphically in Figure~\ref{fig:appendixD-audit-radii}.

\begin{figure}[t]
\centering
\includegraphics[width=0.80\textwidth]{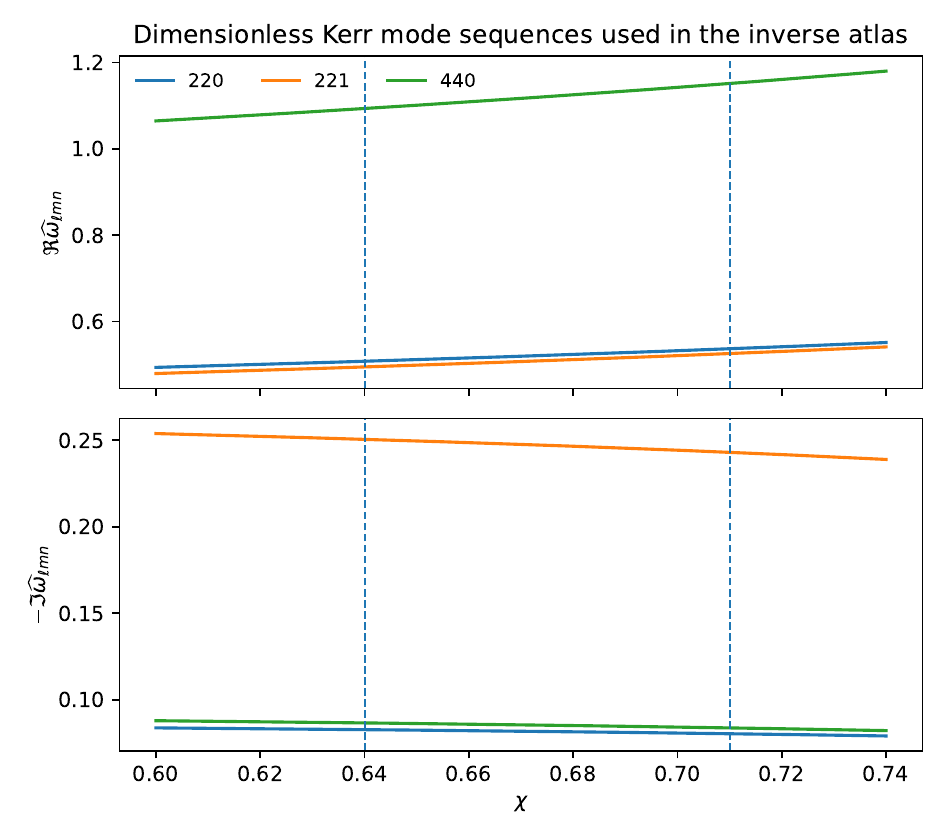}
\caption{Dimensionless Kerr sequences \(\widehat\omega_{220}\), \(\widehat\omega_{221}\), and \(\widehat\omega_{440}\) constructed from the tables. The dashed vertical lines mark the inner interval \(\Ichi=[0.64,0.71]\) on which all later bounds are evaluated.}
\label{fig:appendixD-mode-curves}
\end{figure}

\subsection{Dimensionless sequences and external solvers}

Let
\[
\modeJ=\{220,221,440\}=\{(2,2,0),(2,2,1),(4,4,0)\}
\]
denote the three Kerr gravitational modes used here, always with spin weight \(s=-2\). The spin variable in the external solver is the standard Kerr ratio \(a/M\), which we identify with the dimensionless remnant spin \(\chi\). We fix the inner spin interval
\[
\Ichi=[0.64,0.71]
\]
from Appendix~\ref{app:frame-conventions} and enlarge it to the padded interval
\[
\Ichipad=[0.60,0.74].
\]
The padding isolates interpolation, differentiation, and audit work from endpoint effects. All such computations are carried out on \(\Ichipad\), while every bound used later is restricted back to the inner interval \(\Ichi\).

For each \(j\in\modeJ\), the function \(\omegahat_j\colon \Ichipad\to\mathbb C\) is generated by direct calls to the external Kerr solver. In practice the solver already carries a mode label \((\ell,m,n)\), but we nevertheless require an explicit continuity check because a silent branch jump would be fatal for the later inverse analysis. The check is done in the same spirit as Proposition~\ref{prop:nearest-neighbor-label-stability}. On every node set used below, adjacent values of \(\omegahat_j\) are required to vary continuously and to remain closer to the expected continuation of the same sequence than to any competing mode in the finite family under consideration. Agreement with the public Cook tables on the same spin interval is used as an independent external audit of that labeling convention.

Interpolation is carried out only on \(\Ichipad\) for the dimensionless sequences \(\omegahat_j\), and detector-frame values are obtained by the exact formula
\[
\omega_j(M,\chi)=M^{-1}\omegahat_j(\chi).
\]
This is consistent with the detector-frame convention fixed in Appendix~\ref{app:frame-conventions}. It also means that the redshift map of Appendix~\ref{app:frame-conventions} acts only after the local Kerr inverse has been carried out.

\subsection{Chebyshev--Lobatto construction}

We represent each dimensionless sequence on \(\Ichipad\) by a local polynomial interpolant on Chebyshev--Lobatto nodes. This is a standard choice for stable polynomial interpolation on compact intervals \cite{BerrutTrefethen2004,Higham2004Barycentric}. Let
\[
N_{\mathrm{main}}=64,
\qquad
N_{\mathrm{audit}}=96,
\]
and define the affine map
\[
\begin{aligned}
\chi(x)&=c_{\chi}+\rho_{\chi}x,\\
 c_{\chi}&:=\frac{0.60+0.74}{2}=0.67,\\
 \rho_{\chi}&:=\frac{0.74-0.60}{2}=0.07.
\end{aligned}
\]
For \(N\in\{N_{\mathrm{main}},N_{\mathrm{audit}}\}\) we set
\[
x_r^{(N)}=\cos\frac{\pi r}{N},
\qquad
\chi_r^{(N)}=\chi\bigl(x_r^{(N)}\bigr),
\qquad 0\le r\le N.
\]
At those nodes we evaluate the exact external sequence and write
\[
y_{j,r}^{(N)}:=\omegahat_j\bigl(\chi_r^{(N)}\bigr).
\]
The degree-\(N\) barycentric interpolant is then
\begin{equation}\label{eq:appendixD-barycentric}
\qnmMain_j(\chi)
=
\frac{\sum_{r=0}^{N_{\mathrm{main}}} \frac{\lambda_r^{(N_{\mathrm{main}})} y_{j,r}^{(N_{\mathrm{main}})}}{\chi-\chi_r^{(N_{\mathrm{main}})}}}
{\sum_{r=0}^{N_{\mathrm{main}}} \frac{\lambda_r^{(N_{\mathrm{main}})}}{\chi-\chi_r^{(N_{\mathrm{main}})}}},
\qquad
\lambda_r^{(N)}=(-1)^r\nu_r,
\end{equation}
with \(\nu_0=\nu_N=\tfrac12\) and \(\nu_r=1\) for \(1\le r\le N-1\). The same formula with \(N_{\mathrm{audit}}\) defines the higher-degree audit interpolant \(\qnmAudit_j\). By construction,
\[
\qnmMain_j\bigl(\chi_r^{(N_{\mathrm{main}})}\bigr)=y_{j,r}^{(N_{\mathrm{main}})},
\qquad
\qnmAudit_j\bigl(\chi_r^{(N_{\mathrm{audit}})}\bigr)=y_{j,r}^{(N_{\mathrm{audit}})}.
\]
Since both are ordinary polynomials, their derivatives are exact analytic objects. In practice one may differentiate either the barycentric formula or the equivalent Chebyshev expansion; both give the same polynomial derivative up to rounding.

We use \(\qnmMain_j\) as the working representation and \(\qnmAudit_j\) only as a diagnostic. The numerical audit radii are defined on the inner interval by
\begin{equation}\label{eq:appendixD-eta-def}
\etaval_j:=2\sup_{\chi\in\Ichi}\bigl|\qnmMain_j(\chi)-\qnmAudit_j(\chi)\bigr|,
\qquad
\etaval_j':=2\sup_{\chi\in\Ichi}\bigl|\qnmMain_j'(\chi)-\qnmAudit_j'(\chi)\bigr|.
\end{equation}
The factor of \(2\) is a conservative safety margin. It is not part of the analytic proof; it belongs to the reproducible numerical protocol. In the language of Appendix~\ref{app:assumption-ledger}, the exact functions \(\omegahat_j\) are external inputs, while \eqref{eq:appendixD-eta-def} records the local interpolation drift between two nested approximants built from independent node sets for the same external generator.

\begin{figure}[t]
\centering
\includegraphics[width=0.70\textwidth]{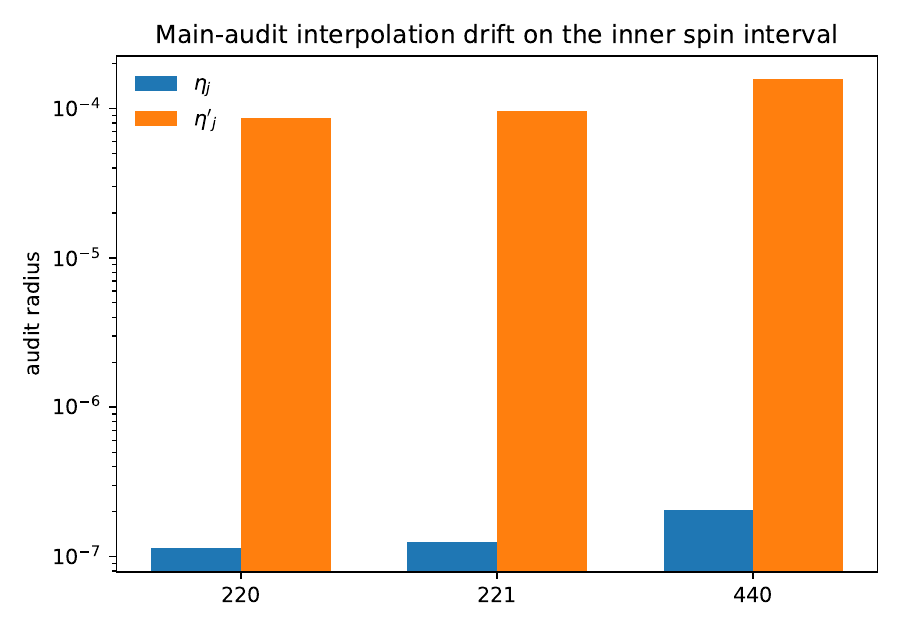}
\caption{Main--audit interpolation drift on the inner interval \(\Ichi\). The bars show the value radii \(\eta_j\) and derivative radii \(\eta_j'\) used in the later interpolation audit.}
\label{fig:appendixD-audit-radii}
\end{figure}

\begin{remark}\label{rem:appendixD-higher-derivatives}
Because \(\qnmMain_j\) is a polynomial, higher derivatives are available at essentially no extra conceptual cost. If later sections require a Jacobian Lipschitz constant, one may compute it directly from \(\qnmMain_j\), \(\qnmMain_j'\), and \(\qnmMain_j''\) using the exact mass-scaling identities of the next subsection. No new interpolation problem appears at second order.
\end{remark}

\subsection{Propagation to detector-frame frequencies}

We now pass from the dimensionless spin sequences to the detector-frame Kerr frequencies on the event-local box. Write
\[
\Mbox=[M_-,M_+]=[66.5,69.5] \, M_{\odot}.
\]
For each \(j\in\modeJ\) we define the detector-frame polynomial surrogate
\begin{equation}\label{eq:appendixD-mass-rescaled-surrogate}
\widetilde\omega_j(M,\chi):=M^{-1}\qnmMain_j(\chi),
\qquad (M,\chi)\in \Mbox\times\Ichi.
\end{equation}
The next proposition records the exact way in which one-dimensional spin interpolation errors propagate to the detector-frame quantities that will later enter the inverse map.

\begin{proposition}[Mass-rescaled interpolation error]\label{prop:appendixD-mass-rescaled-error}
Fix \(j\in\modeJ\). Assume that on \(\Ichi\) the dimensionless interpolation errors satisfy
\[
\sup_{\chi\in\Ichi}\bigl|\omegahat_j(\chi)-\qnmMain_j(\chi)\bigr|\le \etaval_j,
\qquad
\sup_{\chi\in\Ichi}\bigl|\omegahat_j'(\chi)-\qnmMain_j'(\chi)\bigr|\le \etaval_j'.
\]
Then for every \((M,\chi)\in\Mbox\times\Ichi\),
\begin{align}
\bigl|\omega_j(M,\chi)-\widetilde\omega_j(M,\chi)\bigr|
&\le \frac{\etaval_j}{M_-},
\label{eq:appendixD-omega-error}
\\
\bigl|\partial_M\omega_j(M,\chi)-\partial_M\widetilde\omega_j(M,\chi)\bigr|
&\le \frac{\etaval_j}{M_-^{2}},
\label{eq:appendixD-M-error}
\\
\bigl|\partial_{\chi}\omega_j(M,\chi)-\partial_{\chi}\widetilde\omega_j(M,\chi)\bigr|
&\le \frac{\etaval_j'}{M_-}.
\label{eq:appendixD-chi-error}
\end{align}
Moreover, if
\[
\begin{aligned}
G_j(M,\chi)&:=\bigl(\Re\omega_j(M,\chi),-\Im\omega_j(M,\chi)\bigr),\\
\widetilde G_j(M,\chi)&:=\bigl(\Re\widetilde\omega_j(M,\chi),-\Im\widetilde\omega_j(M,\chi)\bigr),
\end{aligned}
\]
then
\begin{equation}\label{eq:appendixD-jacobian-difference}
\sup_{(M,\chi)\in\Mbox\times\Ichi}
\bigl\|DG_j(M,\chi)-D\widetilde G_j(M,\chi)\bigr\|_2
\le
\sqrt{\frac{\etaval_j^2}{M_-^4}+\frac{(\etaval_j')^2}{M_-^2}}.
\end{equation}
\end{proposition}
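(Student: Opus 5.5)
The plan is to reduce every estimate to the exact mass-scaling identity
\[
\omega_j(M,\chi)=M^{-1}\omegahat_j(\chi),
\qquad
\widetilde\omega_j(M,\chi)=M^{-1}\qnmMain_j(\chi).
\]
The mass factor is explicit and identical in both functions, so all errors come from the one-dimensional spin interpolation error. We write \(e_j(\chi):=\omegahat_j(\chi)-\qnmMain_j(\chi)\). Then
\[
\omega_j-\widetilde\omega_j=M^{-1}e_j(\chi),
\qquad
\partial_M(\omega_j-\widetilde\omega_j)=-M^{-2}e_j(\chi),
\qquad
\partial_\chi(\omega_j-\widetilde\omega_j)=M^{-1}e_j'(\chi).
\]
The first and third identities need no justification beyond differentiating \(M^{-1}\) and the spin function separately. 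We use that \(\omegahat_j\) is differentiable on \(\Ichi\), which the hypothesis on \(\omegahat_j'\) presupposes.

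Taking absolute values and using \(M\ge M_-\) on \(\Mbox\) gives \(M^{-1}\le M_-^{-1}\) and \(M^{-2}\le M_-^{-2}\). Combined with the two assumed sup bounds \(|e_j|\le\etaval_j\) and \(|e_j'|\le\etaval_j'\), this yields \eqref{eq:appendixD-omega-error}, \eqref{eq:appendixD-M-error} and \eqref{eq:appendixD-chi-error} directly.

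For the Jacobian bound, I would realify as in \eqref{eq:sec5-jacobian-explicit}. The difference matrix \(DG_j-D\widetilde G_j\) has as its first column \(\Xi(\partial_M(\omega_j-\widetilde\omega_j))\) and as its second column \(\Xi(\partial_\chi(\omega_j-\widetilde\omega_j))\). Here \(\Xi(z)=(\Re z,-\Im z)\) is an isometry \(\mathbb C\to\mathbb R^2\), so the Euclidean length of each column equals the modulus of the corresponding complex derivative difference. I then bound the operator norm by the Frobenius norm, which is the square root of the sum of the squared column norms. Inserting the bounds from the previous step gives
\[
\bigl\|DG_j-D\widetilde G_j\bigr\|_2
\le
\sqrt{\frac{\etaval_j^2}{M_-^4}+\frac{(\etaval_j')^2}{M_-^2}},
\]
uniformly on \(\Mbox\times\Ichi\), which is \eqref{eq:appendixD-jacobian-difference}.

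There is no real obstacle. The only point requiring care is bookkeeping. The columns must be identified correctly with the \(M\)- and \(\chi\)-partials, as in the proof of Proposition~\ref{prop:sec5-exact-identities}. The isometry of \(\Xi\) must be used so that no constant is lost in passing from complex moduli to real column norms.
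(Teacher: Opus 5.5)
Your proposal is correct and follows the paper's proof essentially step for step: write $\omega_j-\widetilde\omega_j$ as $M^{-1}$ times the spin interpolation error $e_j$, differentiate in $M$ and $\chi$, use $M\ge M_-$, and bound the operator norm of the realified Jacobian difference by its Frobenius norm $\bigl(|e_j|^2/M^4+|e_j'|^2/M^2\bigr)^{1/2}$. The only cosmetic differences are that you read off the column lengths through the isometry $\Xi$ instead of writing the real and imaginary parts of $e_j$ and $e_j'$ entrywise, and that your sign convention for $e_j$ is opposite to the paper's; neither affects the argument.
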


\begin{proof}
Set
\[
e_j(\chi):=\qnmMain_j(\chi)-\omegahat_j(\chi).
\]
By the exact Kerr scaling,
\[
\omega_j(M,\chi)=M^{-1}\omegahat_j(\chi),
\qquad
\widetilde\omega_j(M,\chi)=M^{-1}\qnmMain_j(\chi)=M^{-1}\bigl(\omegahat_j(\chi)+e_j(\chi)\bigr).
\]
Hence
\[
\widetilde\omega_j(M,\chi)-\omega_j(M,\chi)=M^{-1}e_j(\chi),
\]
which immediately gives \eqref{eq:appendixD-omega-error}. Differentiating the same identity with respect to \(M\) and \(\chi\) yields
\[
\partial_M\bigl(\widetilde\omega_j-\omega_j\bigr)=-M^{-2}e_j(\chi),
\qquad
\partial_{\chi}\bigl(\widetilde\omega_j-\omega_j\bigr)=M^{-1}e_j'(\chi),
\]
and therefore \eqref{eq:appendixD-M-error} and \eqref{eq:appendixD-chi-error}.

For the Jacobian bound, write \(e_j=u+iv\) and \(e_j'=u'+iv'\). Then
\[
DG_j-D\widetilde G_j
=
\begin{pmatrix}
\Re(e_j)/M^2 & -\Re(e_j')/M \\
-\Im(e_j)/M^2 & \Im(e_j')/M
\end{pmatrix},
\]
so the Frobenius norm satisfies
\[
\bigl\|DG_j-D\widetilde G_j\bigr\|_F^2
=
\frac{|e_j|^2}{M^4}+\frac{|e_j'|^2}{M^2}.
\]
Since \(\|A\|_2\le \|A\|_F\) for every matrix \(A\), taking the supremum over \(\Mbox\times\Ichi\) gives \eqref{eq:appendixD-jacobian-difference}.
\end{proof}

This is where interpolation enters the inverse problem. Everything downstream is deterministic. Once \(\etaval_j\) and \(\etaval_j'\) are fixed numerically on \(\Ichi\), every bound used later follows from explicit algebraic propagation through \eqref{eq:appendixD-mass-rescaled-surrogate}.

\subsection{Primary Jacobian and inverse-map margins}

The primary inverse map used later takes the dominant mode \(220\) as a complex observable, or equivalently as the two real observables \(\Re\omega_{220}\) and \(-\Im\omega_{220}\). The local conditioning of that inversion is therefore controlled by the two-by-two Jacobian of the map \(G_{220}\). On a compact box the relevant question is not whether the determinant vanishes at an isolated point, but whether it stays uniformly away from zero. The next proposition reduces that question to a one-dimensional spin quantity.

\begin{proposition}[Determinant identity and uniform lower bound]\label{prop:appendixD-determinant}
Fix \(j\in\modeJ\) and set
\[
f_j(\chi):=\omegahat_j(\chi),
\qquad
p_j(\chi):=\qnmMain_j(\chi).
\]
Then the Jacobian of
\[
G_j(M,\chi)=\bigl(\Re\omega_j(M,\chi),-\Im\omega_j(M,\chi)\bigr)
\]
satisfies the exact identity
\begin{equation}\label{eq:appendixD-determinant-identity}
\det DG_j(M,\chi)
=
\frac{1}{M^3}\Im\!\bigl(\overline{f_j(\chi)}\,f_j'(\chi)\bigr).
\end{equation}
Define the interpolation margin
\begin{equation}\label{eq:appendixD-gamma-def}
\Gamma_j
:=
\Bigl(\sup_{\chi\in\Ichi}|p_j(\chi)|\Bigr)\etaval_j'
+
\Bigl(\sup_{\chi\in\Ichi}|p_j'(\chi)|\Bigr)\etaval_j
+
\etaval_j\etaval_j'
\end{equation}
and the determinant margin
\begin{equation}\label{eq:appendixD-detmargin-def}
\detmargin_j
:=
\inf_{\chi\in\Ichi}\Im\!\bigl(\overline{p_j(\chi)}\,p_j'(\chi)\bigr)-\Gamma_j.
\end{equation}
If \(\detmargin_j>0\), then
\begin{equation}\label{eq:appendixD-det-lower-bound}
\det DG_j(M,\chi)\ge \frac{\detmargin_j}{M_+^3}
\qquad \text{for all } (M,\chi)\in\Mbox\times\Ichi.
\end{equation}
\end{proposition}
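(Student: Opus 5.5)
The plan has three parts. First I obtain the determinant identity from the earlier Jacobian computation. Then I bound the difference between the exact and interpolated spin functionals by $\Gamma_j$. Finally I restore the mass factor to get the uniform lower bound.

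\emph{The identity.} Identity \eqref{eq:appendixD-determinant-identity} is exactly \eqref{eq:sec5-det-identity} of Proposition~\ref{prop:sec5-exact-identities}, so it suffices to recall that argument. Write $f_j=a_j+ib_j$, so that $G_j(M,\chi)=(a_j(\chi)/M,\,-b_j(\chi)/M)$. The Jacobian is the matrix \eqref{eq:sec5-jacobian-explicit}. Its determinant is $(a_jb_j'-b_ja_j')/M^3$, and this equals $M^{-3}\Im(\overline{f_j}f_j')$.

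\emph{The perturbation bound.} Set $e:=f_j-p_j$. By hypothesis, inherited from the audit radii of Proposition~\ref{prop:appendixD-mass-rescaled-error}, we have $|e|\le\eta_j$ and $|e'|\le\eta_j'$ on $\Ichi$. I would then expand
\[
\overline{f_j}f_j'-\overline{p_j}p_j'=\overline{p_j}\,e'+\overline{e}\,p_j'+\overline{e}\,e'.
\]
Taking imaginary parts and applying the triangle inequality pointwise gives
\[
\bigl|\Im(\overline{f_j}f_j')-\Im(\overline{p_j}p_j')\bigr|\le |p_j|\eta_j'+|p_j'|\eta_j+\eta_j\eta_j'\le\Gamma_j .
\]
It follows that $\Im(\overline{f_j(\chi)}f_j'(\chi))\ge \inf_{\Ichi}\Im(\overline{p_j}p_j')-\Gamma_j=\detmargin_j$ for every $\chi\in\Ichi$.

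\emph{The lower bound.} Assume $\detmargin_j>0$. Then the numerator in the identity is at least $\detmargin_j>0$, and $M^{-3}\ge M_+^{-3}$ on $\Mbox$. This yields \eqref{eq:appendixD-det-lower-bound}.

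I expect no real obstacle. The one point needing care is the interpretation of the hypothesis. The proposition implicitly assumes the interpolation error bounds $|\omegahat_j-p_j|\le\eta_j$ and $|\omegahat_j'-p_j'|\le\eta_j'$ on $\Ichi$, exactly as in Proposition~\ref{prop:appendixD-mass-rescaled-error}. The audit radii \eqref{eq:appendixD-eta-def} are a numerical certificate for these bounds rather than a proof of them. I would state this assumption explicitly.
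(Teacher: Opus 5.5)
Your proposal is correct and follows the paper's proof essentially step for step. Both use the same explicit Jacobian and determinant computation, the same three-term expansion of $\overline{p_j}p_j'-\overline{f_j}f_j'$, and the same final step of dividing by $M^3\le M_+^3$ using positivity of the margin; the paper writes the error as $p_j-f_j$, so its quadratic term carries a minus sign, which is immaterial. Your remark that the bounds $|f_j-p_j|\le\eta_j$ and $|f_j'-p_j'|\le\eta_j'$ on the inner spin interval must be stated as assumptions is apt, because the paper's proof also uses them without stating them.
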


\begin{proof}
Write \(f_j=a+ib\) with \(a=a(\chi)\) and \(b=b(\chi)\). Since
\[
\omega_j(M,\chi)=M^{-1}(a(\chi)+ib(\chi)),
\]
we have
\[
G_j(M,\chi)=\left(\frac{a(\chi)}{M},-\frac{b(\chi)}{M}\right),
\]
so that
\[
DG_j(M,\chi)=
\begin{pmatrix}
-a(\chi)/M^2 & a'(\chi)/M \\
 b(\chi)/M^2 & -b'(\chi)/M
\end{pmatrix}.
\]
Its determinant is
\[
\det DG_j(M,\chi)=\frac{a(\chi)b'(\chi)-b(\chi)a'(\chi)}{M^3}.
\]
But
\[
\Im\!\bigl(\overline{f_j}f_j'\bigr)
=
\Im\!\bigl((a-ib)(a'+ib')\bigr)
=
ab'-ba',
\]
which proves \eqref{eq:appendixD-determinant-identity}.

To estimate the determinant from below using the interpolant, write \(e_j:=p_j-f_j\). Since \(p_j=f_j+e_j\) and \(p_j'=f_j'+e_j'\), we have the exact identity
\[
\overline{p_j}p_j'-\overline{f_j}f_j'
=
\overline{p_j}e_j'+\overline{e_j}p_j'-\overline{e_j}e_j'.
\]
Taking imaginary parts and absolute values gives
\[
\Bigl|\Im\!\bigl(\overline{p_j}p_j'\bigr)-\Im\!\bigl(\overline{f_j}f_j'\bigr)\Bigr|
\le
|p_j|\,|e_j'|+|p_j'|\,|e_j|+|e_j|\,|e_j'|.
\]
On \(\Ichi\) we therefore obtain
\[
\Im\!\bigl(\overline{f_j(\chi)}f_j'(\chi)\bigr)
\ge
\Im\!\bigl(\overline{p_j(\chi)}p_j'(\chi)\bigr)-\Gamma_j,
\qquad \chi\in\Ichi.
\]
Taking the infimum over \(\Ichi\) and then dividing by \(M^3\le M_+^3\) yields \eqref{eq:appendixD-det-lower-bound}.
\end{proof}

\begin{corollary}[Uniform lower singular-value bound]\label{cor:appendixD-sigmamin}
Under the hypotheses of Proposition~\ref{prop:appendixD-determinant}, define
\begin{equation}\label{eq:appendixD-UV-def}
U_j:=\sup_{\chi\in\Ichi}|p_j(\chi)|+\etaval_j,
\qquad
V_j:=\sup_{\chi\in\Ichi}|p_j'(\chi)|+\etaval_j'.
\end{equation}
If \(\detmargin_j>0\), then for every \((M,\chi)\in\Mbox\times\Ichi\),
\begin{equation}\label{eq:appendixD-sigmamin-bound}
\sigma_{\min}\bigl(DG_j(M,\chi)\bigr)
\ge
\frac{\detmargin_j/M_+^3}
{\sqrt{U_j^2/M_-^4+V_j^2/M_-^2}}
=: \sigma^{\mathrm{cert}}_j>0.
\end{equation}
In particular, \(G_j\) is a local \(C^1\) diffeomorphism at every point of \(\Mbox\times\Ichi\).
\end{corollary}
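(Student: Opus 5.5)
The bound follows from the elementary two-by-two identity \(\sigma_{\min}(A)\,\sigma_{\max}(A)=|\det A|\). Applied to \(A=DG_j(M,\chi)\), it gives
\[
\sigma_{\min}\bigl(DG_j(M,\chi)\bigr)=\frac{|\det DG_j(M,\chi)|}{\|DG_j(M,\chi)\|_2}.
\]
The plan is to bound the numerator from below and the denominator from above, uniformly on \(\Mbox\times\Ichi\).

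For the numerator, Proposition~\ref{prop:appendixD-determinant} applies under the hypothesis \(\detmargin_j>0\). It gives \(\det DG_j(M,\chi)\ge \detmargin_j/M_+^3>0\), so in particular \(|\det DG_j|\ge \detmargin_j/M_+^3\).

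For the denominator, I would use \(\|A\|_2\le\|A\|_F\) together with the Frobenius identity of Proposition~\ref{prop:sec5-exact-identities}:
\[
\|DG_j\|_F^2=\frac{|\omegahat_j(\chi)|^2}{M^4}+\frac{|\omegahat_j'(\chi)|^2}{M^2}.
\]
The paper also writes out the explicit Jacobian at the start of the proof of Proposition~\ref{prop:appendixD-determinant}, and this identity is a direct computation from it. The interpolation hypotheses of Proposition~\ref{prop:appendixD-determinant} (those of Proposition~\ref{prop:appendixD-mass-rescaled-error}) give \(|\omegahat_j-p_j|\le\etaval_j\) and \(|\omegahat_j'-p_j'|\le\etaval_j'\) on \(\Ichi\). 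By the triangle inequality, \(|\omegahat_j(\chi)|\le U_j\) and \(|\omegahat_j'(\chi)|\le V_j\). Since \(M\ge M_-\) on \(\Mbox\), we get \(\|DG_j\|_2\le\sqrt{U_j^2/M_-^4+V_j^2/M_-^2}\). Dividing the two estimates yields \eqref{eq:appendixD-sigmamin-bound}, and positivity of \(\sigma_j^{\mathrm{cert}}\) follows from \(\detmargin_j>0\).

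The final clause follows from the inverse function theorem. Because \(\det DG_j\neq0\) at every point and \(G_j\) is \(C^1\) in a neighbourhood of the box (it is smooth in \(M>0\) and in \(\chi\) on the padded interval \(\Ichipad\)), \(G_j\) is a local \(C^1\) diffeomorphism at each point. The only step needing care is checking that \(\sigma_{\min}\sigma_{\max}=|\det|\) is used for the true Jacobian, not the surrogate one. That is why the constants \(U_j\) and \(V_j\) include the audit radii, so that the bounds refer to the exact \(\omegahat_j\) rather than the polynomial \(p_j\). I do not expect any genuine obstacle beyond this bookkeeping.
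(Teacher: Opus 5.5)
Your proposal is correct and follows the paper's proof essentially step for step. Both use the \(2\times2\) identity \(\sigma_{\min}=|\det|/\sigma_{\max}\), the bound \(\sigma_{\max}\le\|\cdot\|_F\) with the Frobenius identity controlled via \(U_j\), \(V_j\) and \(M\ge M_-\), the determinant lower bound of Proposition~\ref{prop:appendixD-determinant}, and finally the inverse function theorem. Your explicit triangle-inequality step showing \(|\widehat\omega_j|\le U_j\) and \(|\widehat\omega_j'|\le V_j\) spells out a point the paper leaves implicit.
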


\begin{proof}
For any real \(2\times 2\) matrix \(A\),
\[
\sigma_{\min}(A)=\frac{|\det A|}{\sigma_{\max}(A)}\ge \frac{|\det A|}{\|A\|_F}.
\]
For \(A=DG_j(M,\chi)\), the explicit matrix formula in the proof of Proposition~\ref{prop:appendixD-determinant} gives
\[
\|DG_j(M,\chi)\|_F^2
=
\frac{|f_j(\chi)|^2}{M^4}+\frac{|f_j'(\chi)|^2}{M^2}
\le
\frac{U_j^2}{M_-^4}+\frac{V_j^2}{M_-^2}.
\]
Combining this with \eqref{eq:appendixD-det-lower-bound} gives \eqref{eq:appendixD-sigmamin-bound}. The local diffeomorphism statement is then an immediate consequence of the inverse function theorem.
\end{proof}

Corollary~\ref{cor:appendixD-sigmamin} is the exact lower-bound statement needed later for the primary \(220\) inversion. In the table-based realization used here, the main--audit radii on the inner interval are
\[
\eta_{220}=1.13\times 10^{-7},\qquad
\eta_{221}=1.26\times 10^{-7},\qquad
\eta_{440}=2.06\times 10^{-7},
\]
and
\[
\eta'_{220}=8.67\times 10^{-5},\qquad
\eta'_{221}=9.70\times 10^{-5},\qquad
\eta'_{440}=1.57\times 10^{-4}.
\]
The corresponding singular-value floor for the dominant inverse atlas is
\[
\sigma^{\mathrm{cert}}_{220}=2.31\times 10^{-5},
\]
and the auxiliary forward Lipschitz constants are
\[
L_{221}=7.62\times 10^{-3},
\qquad
L_{440}=1.38\times 10^{-2}.
\]
These are the numerical constants carried later into Section~\ref{sec:kerr-inversion} and Appendix~\ref{app:primary-inversion}.

\subsection{Pairwise separation margins}

The auxiliary consistency checks and the finite-window label stability test both require that the mode family under consideration remain separated on \(\Kdet\). Since the exact Kerr scaling is known, this again reduces to a one-dimensional statement on \(\Ichi\).

\begin{proposition}[Pairwise separation bounds]\label{prop:appendixD-separation}
Let \(j,k\in\modeJ\) with \(j\neq k\). Define the detector-frame pairwise separation by
\[
\delta_{jk}(M,\chi):=|\omega_j(M,\chi)-\omega_k(M,\chi)|,
\qquad (M,\chi)\in \Mbox\times\Ichi,
\]
and the interpolation-based separation margin by
\begin{equation}\label{eq:appendixD-sepmargin-def}
\sepmargin_{jk}
:=
\inf_{\chi\in\Ichi}\bigl|p_j(\chi)-p_k(\chi)\bigr|-\etaval_j-\etaval_k.
\end{equation}
If \(\sepmargin_{jk}>0\), then
\begin{equation}\label{eq:appendixD-separation-lower}
\delta_{jk}(M,\chi)\ge \frac{\sepmargin_{jk}}{M_+}
\qquad \text{for all } (M,\chi)\in \Mbox\times\Ichi.
\end{equation}
Consequently, for any finite subfamily \(\mathcal M\subset\modeJ\), if
\[
\sepmargin(\mathcal M):=\frac12\min_{\substack{j,k\in\mathcal M \\ j\neq k}}\sepmargin_{jk}>0,
\]
then the detector-frame isolation margin of \(\mathcal M\) on \(\Kdet\) obeys
\[
\delta_{\mathrm{iso}}(\Kdet;\mathcal M)\ge \frac{\sepmargin(\mathcal M)}{M_+}.
\]
\end{proposition}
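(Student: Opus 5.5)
The argument mirrors the proof of Proposition~\ref{prop:appendixD-mass-rescaled-error}. It first reduces the detector-frame gap to a one-dimensional spin gap using exact Kerr scaling. It then uses the reverse triangle inequality to pass from the exact sequences \(\omegahat_j\) to the interpolants \(p_j\). Throughout I work under the standing interpolation hypothesis already used in Proposition~\ref{prop:appendixD-mass-rescaled-error}, namely \(\sup_{\chi\in\Ichi}|\omegahat_j(\chi)-p_j(\chi)|\le\etaval_j\) for every \(j\in\modeJ\). The factor-of-two audit radius \eqref{eq:appendixD-eta-def} is exactly what supplies this bound at the level of the numerical protocol.

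\emph{Step 1: reduction to spin.} By the identity \eqref{eq:sec5-gap-identity}, equivalently \(\omega_j(M,\chi)=M^{-1}\omegahat_j(\chi)\), one has
\[
\delta_{jk}(M,\chi)=\frac{1}{M}\,\bigl|\omegahat_j(\chi)-\omegahat_k(\chi)\bigr|.
\]
So it suffices to bound the dimensionless gap from below on \(\Ichi\) and then use \(M\le M_+\).

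\emph{Step 2: pass to the interpolants.} Write \(\omegahat_j=p_j-e_j\), where \(|e_j|\le\etaval_j\) on \(\Ichi\). The reverse triangle inequality gives
\[
|\omegahat_j(\chi)-\omegahat_k(\chi)|\ge |p_j(\chi)-p_k(\chi)|-\etaval_j-\etaval_k\ge\sepmargin_{jk}.
\]
Dividing by \(M\le M_+\), which preserves the inequality because \(\sepmargin_{jk}>0\), yields \eqref{eq:appendixD-separation-lower} uniformly on \(\Mbox\times\Ichi=\Kdet\).

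\emph{Step 3: the isolation consequence.} Fix \(p\in\Kdet\). Definition~\ref{def:isolation-margin} gives \(\delta_{\mathrm{iso}}(p;\mathcal M)=\tfrac12\min_{j\neq k}\delta_{jk}(p)\). Each term satisfies \(\delta_{jk}(p)\ge\sepmargin_{jk}/M_+\ge 2\sepmargin(\mathcal M)/M_+\), so \(\delta_{\mathrm{iso}}(p;\mathcal M)\ge\sepmargin(\mathcal M)/M_+\). The positivity hypothesis \(\sepmargin(\mathcal M)>0\) ensures that every pairwise margin \(\sepmargin_{jk}\) with \(j,k\in\mathcal M\) is positive, so Step 2 applies to each pair. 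Taking the infimum over \(p\in\Kdet\) gives the stated bound on \(\delta_{\mathrm{iso}}(\Kdet;\mathcal M)\).

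The only delicate point is the status of the interpolation hypothesis. The rest of the argument is purely algebraic. The bound \(|\omegahat_j-p_j|\le\etaval_j\) is a numerical-protocol assumption rather than a theorem, so I would state it explicitly as carried over from Proposition~\ref{prop:appendixD-mass-rescaled-error}. I would also check that only value radii, not derivative radii, enter this argument.
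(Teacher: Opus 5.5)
Your proposal is correct and follows the paper's proof essentially step for step. You use exact mass scaling to write $\delta_{jk}=M^{-1}\lvert\widehat\omega_j-\widehat\omega_k\rvert$, then the reverse triangle inequality with $\lvert e_j\rvert\le\eta_j$, then division by $M\le M_+$, and you read off the isolation bound from the definition of $\delta_{\mathrm{iso}}$ as half the minimum pairwise gap. Stating the interpolation hypothesis $\sup_{\chi\in I_\chi}\lvert\widehat\omega_j(\chi)-p_j(\chi)\rvert\le\eta_j$ explicitly is a sensible addition: the paper uses it silently, even though the audit radius in \eqref{eq:appendixD-eta-def} measures only main--audit drift and does not measure the true interpolation error.
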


\begin{proof}
By exact mass scaling,
\[
\delta_{jk}(M,\chi)=\frac{1}{M}\bigl|f_j(\chi)-f_k(\chi)\bigr|.
\]
Write again \(f_j=p_j-e_j\) and \(f_k=p_k-e_k\). Then
\[
\bigl|f_j-f_k\bigr|
\ge
\bigl|p_j-p_k\bigr|-|e_j|-|e_k|.
\]
Taking the infimum over \(\chi\in\Ichi\) and using \(|e_j|\le \etaval_j\), \(|e_k|\le \etaval_k\) gives
\[
\bigl|f_j(\chi)-f_k(\chi)\bigr|\ge \sepmargin_{jk}
\qquad \text{for all } \chi\in\Ichi.
\]
Dividing by \(M\le M_+\) yields \eqref{eq:appendixD-separation-lower}. The last statement is simply the definition of the isolation margin as half the minimum pairwise separation.
\end{proof}

This proposition is the main practical use of the public Cook tables. The overtone \(221\) is the only one among our three modes for which a silent branch error would be plausibly subtle in a fully automated pipeline. External agreement of the continued sequence with the Cook tables across \(\Ichipad\) therefore serves as a direct audit of the quantity entering \eqref{eq:appendixD-sepmargin-def}.

\subsection{Certification protocol}

We close by recording the exact protocol used to turn the abstract symbols \(\etaval_j\), \(\etaval_j'\), \(\detmargin_j\), and \(\sepmargin_{jk}\) into reproducible numbers. The procedure is the same for each \(j\in\modeJ\).

First, the exact external generator is evaluated on the \(65\) Chebyshev--Lobatto nodes of degree \(64\) on \(\Ichipad\), and the resulting values define the main polynomial \(\qnmMain_j\). The same generator is then evaluated on the \(97\) Chebyshev--Lobatto nodes of degree \(96\), giving the audit polynomial \(\qnmAudit_j\). The value and derivative radii are taken from \eqref{eq:appendixD-eta-def}. The use of two nested node sets is deliberate. It isolates interpolation drift without introducing a second external solver into the definition of the local error budget.

Second, the public Cook tables are sampled on the native table grid over \(\Ichipad\), and their mode labels are matched to the directly generated sequence by nearest-neighbor continuation. The resulting discrepancy is required to stay far below the detector-frame separation margins used later. This step is not used to define \(\etaval_j\) or \(\etaval_j'\). Its role is to verify that the branch and normalization convention of the primary generator agree with an independent public high-accuracy source.

Third, the quantities entering Propositions~\ref{prop:appendixD-determinant} and \ref{prop:appendixD-separation} are computed directly from the main interpolants on a dense evaluation grid on \(\Ichi\). In particular,
\[
\Im\!\bigl(\overline{p_{220}(\chi)}p_{220}'(\chi)\bigr)
\]
is evaluated on that grid to certify \(\detmargin_{220}\), and the pairwise gaps
\[
|p_{220}(\chi)-p_{221}(\chi)|,
\qquad
|p_{220}(\chi)-p_{440}(\chi)|,
\qquad
|p_{221}(\chi)-p_{440}(\chi)|
\]
are evaluated to establish the corresponding \(\sepmargin_{jk}\). The later event-level inverse and trust-region analysis uses those margins and nothing else from the QNM generator.

The conceptual point is simple. The exact Kerr spectrum is external. The local polynomial representation is ours. Once the latter is built and numerically audited on the compact interval \(\Ichi\), every later quantity that matters for reliability testing follows from explicit deterministic formulas. This is precisely the level of control needed for a trust-region analysis and no more.

\providecommand{\Log}{\operatorname{Log}}

\section{Complete proof of the abstract frequency extraction theorem}\label{app:abstract-frequency-extraction}

This appendix supplies the exact-arithmetic proof of the deterministic extraction statement used later in the trust-region analysis. The underlying mechanism is classical in the Prony, matrix-pencil, and ESPRIT literature \cite{HuaSarkarMPM1990,RoyKailath1989ESPRIT,PottsTasche2010APM,BatenkovYomdin2013Prony}. The conditioning quantities can be isolated in a form compatible with the finite-window decomposition fixed in Appendix~\ref{app:assumption-ledger}, the detector-frame conventions of Appendix~\ref{app:frame-conventions}, the preprocessing pipeline of Appendix~\ref{app:preprocessing}, and the Kerr input geometry in Appendix~\ref{app:kerr-qnm-data}. The proof below is fully self-contained.

We work on a fixed window \((t_0,T)\) and a fixed finite mode family
\[
\mathcal M=\{j_1,\dots,j_m\},
\qquad m:=|\mathcal M|.
\]
The integer \(m\) is small in the applications considered here, but the proof is written for general \(m\ge 1\). Let \(v=(v_I)_{I\in\Dnet}\in\mathbb C^{|\Dnet|}\) be a fixed network projection vector with \(\|v\|_{\ell^2(\Dnet)}=1\). After whitening, tapering, and Shannon reconstruction as in Appendix~\ref{app:preprocessing}, we evaluate the projected scalar signal at the equispaced sample nodes
\[
u_q=q\Delta,
\qquad
q=0,1,\dots,2m-1,
\]
where \(\Delta>0\) satisfies \((2m-1)\Delta\le T\). We then set
\begin{equation}\label{eq:appendixE-sampled-projection}
x_q
:=
\sum_{I\in\Dnet}\overline{v_I}\,y_I^{(t_0,T)}(u_q).
\end{equation}
By Assumption~\ref{ass:finite-window-decomposition}, there exist \(p_\star\in\Kdet\), projected amplitudes
\[
b_\nu:=\sum_{I\in\Dnet}\overline{v_I}\,A_{I,j_\nu},
\qquad \nu=1,\dots,m,
\]
nodes
\[
z_\nu:=e^{-i\omega_{j_\nu}(p_\star)\Delta},
\qquad \nu=1,\dots,m,
\]
and residual samples \(e_q\) such that
\begin{equation}\label{eq:appendixE-prony-model}
x_q=s_q+e_q,
\qquad
s_q:=\sum_{\nu=1}^m b_\nu z_\nu^q,
\qquad
q=0,1,\dots,2m-1.
\end{equation}
Throughout this appendix we assume
\begin{equation}\label{eq:appendixE-nonaliasing}
b_\nu\neq 0\ \text{ for all }\nu,
\qquad
z_\nu\neq z_\mu\ \text{ whenever }\nu\neq\mu.
\end{equation}
The first condition is the projected detectability condition on the chosen channel, while the second is the nonaliasing condition at step \(\Delta\). Both are event-local and are checked numerically in the data analysis.

\subsection{The exact annihilating polynomial}

We first record the exact algebraic structure of the noiseless samples \(s_q\).

\begin{definition}[Exact Hankel objects]\label{def:appendixE-hankel}
Define the exact Hankel matrix and the exact right-hand side vector by
\begin{equation}\label{eq:appendixE-exact-hankel}
H:=(s_{r+s})_{r,s=0}^{m-1}\in\mathbb C^{m\times m},
\qquad
h:=(s_m,\dots,s_{2m-1})^{\top}\in\mathbb C^m.
\end{equation}
Let
\begin{equation}\label{eq:appendixE-annihilating-polynomial}
Q(\zeta):=\prod_{\nu=1}^m(\zeta-z_\nu)
=\zeta^m+c_{m-1}\zeta^{m-1}+\cdots+c_1\zeta+c_0
\end{equation}
be the monic annihilating polynomial of the nodes, and let
\[
c:=(c_0,\dots,c_{m-1})^\top\in\mathbb C^m.
\]
\end{definition}

\begin{proposition}[Exact Prony system and Hankel factorization]\label{prop:appendixE-exact-prony}
Under \eqref{eq:appendixE-prony-model} and \eqref{eq:appendixE-nonaliasing}, the coefficients of \(Q\) satisfy the recurrence
\begin{equation}\label{eq:appendixE-exact-recurrence}
s_{q+m}+\sum_{k=0}^{m-1}c_k s_{q+k}=0,
\qquad q=0,1,\dots,m-1.
\end{equation}
Equivalently,
\begin{equation}\label{eq:appendixE-linear-system}
Hc=-h.
\end{equation}
Moreover, if
\begin{equation}\label{eq:appendixE-vandermonde}
V:=(z_\nu^r)_{\substack{0\le r\le m-1\\1\le \nu\le m}},
\qquad
B:=\operatorname{diag}(b_1,\dots,b_m),
\end{equation}
then
\begin{equation}\label{eq:appendixE-hankel-factorization}
H=V B V^\top,
\end{equation}
and therefore
\begin{equation}\label{eq:appendixE-hankel-det}
\det H
=
\Bigl(\prod_{\nu=1}^m b_\nu\Bigr)
\prod_{1\le \nu<\mu\le m}(z_\mu-z_\nu)^2.
\end{equation}
In particular, \(H\) is invertible.
\end{proposition}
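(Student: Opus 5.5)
The plan is to prove the three claims in order: the recurrence, the factorization, and then the determinant formula. All three are finite algebraic identities, and the argument is the same as for Proposition~\ref{prop:sec4-exact-structure} in Section~\ref{sec:abstract-extraction}.

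\textbf{Recurrence.} Since \(Q(z_\nu)=0\) for every \(\nu\), we have \(z_\nu^m+\sum_{k=0}^{m-1}c_kz_\nu^k=0\). Fix \(q\in\{0,\dots,m-1\}\), multiply this identity by \(b_\nu z_\nu^q\), and sum over \(\nu\). The definition of \(s_q\) in \eqref{eq:appendixE-prony-model} then gives \eqref{eq:appendixE-exact-recurrence}. Row \(q\) of \(Hc\) equals \(\sum_k s_{q+k}c_k\), which is exactly the sum appearing in the recurrence. Hence the \(m\) identities are, row by row, the linear system \eqref{eq:appendixE-linear-system}.

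\textbf{Factorization.} I will compute entries of \(VBV^\top\) directly:
\[
(VBV^\top)_{rs}=\sum_\nu z_\nu^r b_\nu z_\nu^s=s_{r+s}=H_{rs}.
\]
This proves \eqref{eq:appendixE-hankel-factorization}.

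\textbf{Determinant.} Using \(\det V^\top=\det V\), the determinant is multiplicative:
\[
\det H=(\det V)^2\prod_\nu b_\nu.
\]
Substituting the classical Vandermonde formula \(\det V=\prod_{\nu<\mu}(z_\mu-z_\nu)\) yields \eqref{eq:appendixE-hankel-det}.

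\textbf{Invertibility.} Hypothesis \eqref{eq:appendixE-nonaliasing} makes every factor in \eqref{eq:appendixE-hankel-det} nonzero, so \(\det H\neq 0\) and \(H\) is invertible.

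There is no genuine obstacle here. The only point needing care is index bookkeeping: the column index \(s\) of \(V^\top\) must match the power \(z_\nu^s\), and the range \(q\le m-1\) must use only the samples up to \(s_{2m-1}\). Both are available, since \(2m\) samples are taken.
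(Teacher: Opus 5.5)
Your proposal is correct and follows the paper's proof step for step. The paper also obtains the recurrence by multiplying $Q(z_\nu)=0$ by $b_\nu z_\nu^q$ and summing, checks $H=VBV^\top$ entrywise, and gets the determinant from multiplicativity and the Vandermonde formula. Invertibility then follows, as in your argument, from the nonzero amplitudes and distinct nodes in \eqref{eq:appendixE-nonaliasing}.
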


\begin{proof}
Since \(Q(z_\nu)=0\) for every \(\nu\), we have
\[
z_\nu^m+\sum_{k=0}^{m-1}c_k z_\nu^k=0.
\]
Multiplying by \(b_\nu z_\nu^q\) and summing over \(\nu\) yields
\[
\sum_{\nu=1}^m b_\nu z_\nu^{q+m}
+\sum_{k=0}^{m-1}c_k\sum_{\nu=1}^m b_\nu z_\nu^{q+k}
=0,
\]
which is exactly \eqref{eq:appendixE-exact-recurrence}. Writing the \(m\) relations for \(q=0,\dots,m-1\) in block form gives \eqref{eq:appendixE-linear-system}.

For the factorization, the \((r,s)\) entry of \(V B V^\top\) is
\[
(V B V^\top)_{rs}
=
\sum_{\nu=1}^m z_\nu^r b_\nu z_\nu^s
=
\sum_{\nu=1}^m b_\nu z_\nu^{r+s}
=
s_{r+s}
=
H_{rs},
\]
which proves \eqref{eq:appendixE-hankel-factorization}. Taking determinants and using
\[
\det V=\prod_{1\le \nu<\mu\le m}(z_\mu-z_\nu)
\]
gives \eqref{eq:appendixE-hankel-det}. Because all amplitudes are nonzero and all nodes are pairwise distinct, the right-hand side does not vanish, so \(H\) is invertible.
\end{proof}

\subsection{Perturbation of the recurrence coefficients}

We now replace the exact samples \(s_q\) by the observed samples \(x_q=s_q+e_q\). Define the observed Hankel matrix and right-hand side by
\begin{equation}\label{eq:appendixE-observed-hankel}
\widetilde H:=(x_{r+s})_{r,s=0}^{m-1},
\qquad
\widetilde h:=(x_m,\dots,x_{2m-1})^\top,
\end{equation}
and let
\begin{equation}\label{eq:appendixE-hankel-perturbations}
\Delta H:=\widetilde H-H,
\qquad
\Delta h:=\widetilde h-h.
\end{equation}
The observed recurrence coefficients are defined by
\begin{equation}\label{eq:appendixE-observed-coefficients}
\widetilde c:=-\widetilde H^{-1}\widetilde h,
\end{equation}
whenever \(\widetilde H\) is invertible, and the observed monic polynomial is
\begin{equation}\label{eq:appendixE-observed-polynomial}
\widetilde Q(\zeta):=\zeta^m+\widetilde c_{m-1}\zeta^{m-1}+\cdots+\widetilde c_1\zeta+\widetilde c_0.
\end{equation}

\begin{lemma}[Deterministic perturbation of the recurrence coefficients]\label{lem:appendixE-coefficient-perturbation}
Let
\begin{equation}\label{eq:appendixE-eta-def}
\eta:=\max_{0\le q\le 2m-1}|e_q|.
\end{equation}
Then
\begin{equation}\label{eq:appendixE-deltaH-bound}
\|\Delta H\|_2\le m\eta,
\qquad
\|\Delta h\|_2\le \sqrt m\,\eta.
\end{equation}
Fix an a priori envelope \(\eta_\star>0\) with
\begin{equation}\label{eq:appendixE-envelope-smallness}
\eta\le \eta_\star
\qquad\text{and}\qquad
m\eta_\star\|H^{-1}\|_2<1.
\end{equation}
Then \(\widetilde H\) is invertible and
\begin{equation}\label{eq:appendixE-coefficient-bound}
\|\widetilde c-c\|_2
\le
K_c(\eta_\star)\,\eta,
\end{equation}
where
\begin{equation}\label{eq:appendixE-Kc-def}
K_c(\eta_\star)
:=
\frac{\|H^{-1}\|_2\bigl(\sqrt m+m\|c\|_2\bigr)}{1-m\eta_\star\|H^{-1}\|_2}.
\end{equation}
\end{lemma}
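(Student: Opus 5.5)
The plan is to treat the claim as a standard linear-system perturbation estimate, built on Proposition~\ref{prop:appendixE-exact-prony}, which already gives invertibility of \(H\) and the exact relation \(Hc=-h\). There are three steps: bound the Hankel perturbations entrywise, use a Neumann-series argument for \(\widetilde H\), and subtract the exact and observed linear systems.

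\textbf{Step 1: the bounds \eqref{eq:appendixE-deltaH-bound}.} By construction \(\Delta H=(e_{r+s})_{r,s=0}^{m-1}\) and \(\Delta h=(e_m,\dots,e_{2m-1})^\top\). Every entry has modulus at most \(\eta\). I will use \(\|\Delta H\|_2\le\|\Delta H\|_F\le\sqrt{m^2\eta^2}=m\eta\). The Euclidean norm of a length-\(m\) vector with entries bounded by \(\eta\) gives \(\|\Delta h\|_2\le\sqrt m\,\eta\).

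\textbf{Step 2: invertibility of \(\widetilde H\).} Write \(\widetilde H=H(I+H^{-1}\Delta H)\). Then
\[
\|H^{-1}\Delta H\|_2\le m\eta\|H^{-1}\|_2\le m\eta_\star\|H^{-1}\|_2<1
\]
by \eqref{eq:appendixE-envelope-smallness}. The Neumann series then shows that \(I+H^{-1}\Delta H\) is invertible, so \(\widetilde H\) is invertible. It also gives
\[
\|\widetilde H^{-1}\|_2\le \frac{\|H^{-1}\|_2}{1-m\eta_\star\|H^{-1}\|_2}.
\]

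\textbf{Step 3: the coefficient bound \eqref{eq:appendixE-coefficient-bound}.} Subtract \(Hc=-h\) from \(\widetilde H\widetilde c=-\widetilde h\). This yields \(\widetilde H(\widetilde c-c)=-\Delta h-\Delta H\,c\), hence
\[
\widetilde c-c=-\widetilde H^{-1}(\Delta h+\Delta H\,c).
\]
Taking norms gives
\[
\|\widetilde c-c\|_2\le\|\widetilde H^{-1}\|_2\,(\sqrt m+m\|c\|_2)\,\eta.
\]
Inserting the bound on \(\|\widetilde H^{-1}\|_2\) from Step 2 produces exactly \(K_c(\eta_\star)\eta\).

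No step is genuinely hard. The only point needing care is to use the a priori envelope \(\eta_\star\), not the realized \(\eta\), inside the Neumann denominator, so that the constant \(K_c\) does not depend on the data.
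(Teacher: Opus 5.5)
Your proposal is correct and follows the paper's proof essentially step for step: the entrywise Frobenius bounds on \(\Delta H\) and \(\Delta h\), the Neumann-series argument for \(\widetilde H=H(I+H^{-1}\Delta H)\) with the envelope \(\eta_\star\) in the denominator, and subtraction of the exact and observed systems to get \(\widetilde c-c=-\widetilde H^{-1}(\Delta h+\Delta H\,c)\). Your remark that \(\eta_\star\) rather than the realized \(\eta\) must go into the denominator is exactly what makes \(K_c(\eta_\star)\) data-independent, as the paper intends.
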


\begin{proof}
Each entry of \(\Delta H\) is one of the residual samples \(e_q\), hence has modulus at most \(\eta\). Therefore
\[
\|\Delta H\|_2\le \|\Delta H\|_{\mathrm F}\le m\eta,
\]
because \(\Delta H\) has \(m^2\) entries. Likewise, every component of \(\Delta h\) has modulus at most \(\eta\), so
\[
\|\Delta h\|_2\le \sqrt m\,\eta.
\]
This proves \eqref{eq:appendixE-deltaH-bound}.

By \eqref{eq:appendixE-envelope-smallness} and \eqref{eq:appendixE-deltaH-bound},
\[
\|H^{-1}\Delta H\|_2
\le
\|H^{-1}\|_2\|\Delta H\|_2
\le
m\eta_\star\|H^{-1}\|_2
<1.
\]
Hence
\[
\widetilde H
=
H(I+H^{-1}\Delta H),
\]
and the Neumann-series criterion shows that \(I+H^{-1}\Delta H\) and therefore \(\widetilde H\) are invertible. Moreover,
\begin{equation}\label{eq:appendixE-Htildeinv-bound}
\|\widetilde H^{-1}\|_2
\le
\frac{\|H^{-1}\|_2}{1-\|H^{-1}\Delta H\|_2}
\le
\frac{\|H^{-1}\|_2}{1-m\eta_\star\|H^{-1}\|_2}.
\end{equation}

Subtracting \(Hc=-h\) from \(\widetilde H\widetilde c=-\widetilde h\) gives
\[
(H+\Delta H)(\widetilde c-c)
=
-\Delta h-\Delta H\,c.
\]
Multiplying by \(\widetilde H^{-1}\) and taking norms yields
\[
\|\widetilde c-c\|_2
\le
\|\widetilde H^{-1}\|_2
\bigl(\|\Delta h\|_2+\|\Delta H\|_2\|c\|_2\bigr).
\]
Using \eqref{eq:appendixE-deltaH-bound} and \eqref{eq:appendixE-Htildeinv-bound}, we obtain
\[
\|\widetilde c-c\|_2
\le
\frac{\|H^{-1}\|_2}{1-m\eta_\star\|H^{-1}\|_2}
\bigl(\sqrt m\,\eta + m\eta\|c\|_2\bigr),
\]
which is exactly \eqref{eq:appendixE-coefficient-bound}.
\end{proof}

\subsection{Localization of the perturbed roots}

The coefficient bound is not yet the desired frequency bound. We still have to translate polynomial-coefficient perturbations into labeled root perturbations. This is where node separation enters.

\begin{definition}[Root-separation constants]\label{def:appendixE-separation-constants}
For each \(\nu=1,\dots,m\), define
\begin{equation}\label{eq:appendixE-d-gamma}
d_\nu:=\min_{\mu\neq \nu}|z_\nu-z_\mu|,
\qquad
\Gamma_\nu:=|Q'(z_\nu)|=\prod_{\mu\neq \nu}|z_\nu-z_\mu|.
\end{equation}
Set
\begin{equation}\label{eq:appendixE-R-Lambda}
R_\nu:=|z_\nu|+\frac{d_\nu}{2},
\qquad
\Lambda_\nu:=\left(\sum_{k=0}^{m-1}R_\nu^{2k}\right)^{1/2}.
\end{equation}
For \(\eta_\star\) as in Lemma~\ref{lem:appendixE-coefficient-perturbation}, define the admissible root radius
\begin{equation}\label{eq:appendixE-root-radius}
r_\nu(\eta_\star)
:=
\frac{2^m\Lambda_\nu K_c(\eta_\star)}{\Gamma_\nu}\,\eta_\star.
\end{equation}
\end{definition}

\begin{lemma}[Rouch\'e localization of the perturbed roots]\label{lem:appendixE-root-localization}
Assume \eqref{eq:appendixE-envelope-smallness} and suppose that, for every \(\nu=1,\dots,m\),
\begin{equation}\label{eq:appendixE-root-smallness}
r_\nu(\eta_\star)<\min\Bigl\{\frac{d_\nu}{2},\frac{|z_\nu|}{2}\Bigr\}.
\end{equation}
Then, for every realized \(\eta\le \eta_\star\), the polynomial \(\widetilde Q\) has exactly one root \(\widehat z_\nu\) in the disk
\[
D\Bigl(z_\nu,\frac{2^m\Lambda_\nu K_c(\eta_\star)}{\Gamma_\nu}\eta\Bigr).
\]
In particular,
\begin{equation}\label{eq:appendixE-node-error}
|\widehat z_\nu-z_\nu|
\le
\frac{2^m\Lambda_\nu K_c(\eta_\star)}{\Gamma_\nu}\,\eta.
\end{equation}
\end{lemma}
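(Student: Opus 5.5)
The plan is to apply Rouché's theorem on the disk \(D_\nu:=D(z_\nu,\rho_\nu)\), with
\[
\rho_\nu:=\frac{2^m\Lambda_\nu K_c(\eta_\star)}{\Gamma_\nu}\,\eta\le r_\nu(\eta_\star).
\]
We compare \(\widetilde Q\) with the exact polynomial \(Q\), whose root inside \(D_\nu\) is the simple root \(z_\nu\). If \(\eta=0\), then Lemma~\ref{lem:appendixE-coefficient-perturbation} gives \(\widetilde c=c\), so \(\widetilde Q=Q\). In that case \(z_\nu\) itself is the root and the bound is trivial, and we read the disk statement as the closed degenerate disk. So assume \(\eta>0\). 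By Lemma~\ref{lem:appendixE-coefficient-perturbation}, \(\widetilde H\) is invertible and \(\|\widetilde c-c\|_2\le K_c(\eta_\star)\eta\). Hence \(\widetilde Q\) is well defined.

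\emph{Upper bound on the perturbation.} For \(|\zeta-z_\nu|=\rho_\nu\), Cauchy--Schwarz gives
\[
|\widetilde Q(\zeta)-Q(\zeta)|
=\Bigl|\sum_{k=0}^{m-1}(\widetilde c_k-c_k)\zeta^k\Bigr|
\le \|\widetilde c-c\|_2\Bigl(\sum_{k=0}^{m-1}|\zeta|^{2k}\Bigr)^{1/2}.
\]
Since \(\rho_\nu<d_\nu/2\), we have \(|\zeta|\le |z_\nu|+d_\nu/2=R_\nu\). Therefore
\[
|\widetilde Q(\zeta)-Q(\zeta)|\le \Lambda_\nu K_c(\eta_\star)\,\eta=\frac{\Gamma_\nu\rho_\nu}{2^m}.
\]

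\emph{Lower bound on \(|Q|\) on the circle.} Write \(Q(\zeta)=(\zeta-z_\nu)\prod_{\mu\ne\nu}(\zeta-z_\mu)\). For \(\mu\neq\nu\) we have \(|z_\nu-z_\mu|\ge d_\nu>2\rho_\nu\), so
\[
|\zeta-z_\mu|\ge |z_\nu-z_\mu|-\rho_\nu\ge \tfrac12|z_\nu-z_\mu|.
\]
Hence
\[
|Q(\zeta)|\ge \rho_\nu\,2^{-(m-1)}\Gamma_\nu=\frac{2\Gamma_\nu\rho_\nu}{2^m}
>|\widetilde Q(\zeta)-Q(\zeta)|.
\]
The inequality is strict because \(\rho_\nu>0\). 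By Rouché, \(\widetilde Q\) and \(Q\) have the same number of zeros in \(D_\nu\), counted with multiplicity. The disks \(D(z_\mu,d_\mu/2)\) are pairwise disjoint and contain no other node, so \(Q\) has exactly one zero there, namely \(z_\nu\). Thus \(\widetilde Q\) has exactly one root \(\widehat z_\nu\in D_\nu\), and \(|\widehat z_\nu-z_\nu|<\rho_\nu\), which gives \eqref{eq:appendixE-node-error}.

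The main obstacle I expect is the bookkeeping in the lower bound on \(|Q|\) on the circle. One has to check that the factor \(2^{-(m-1)}\) combines with the \(2^m\) in \(r_\nu\) to leave a spare factor of \(2\) for the strict Rouché inequality. One also has to justify \(|\zeta|\le R_\nu\) from \eqref{eq:appendixE-root-smallness}. The condition \(r_\nu<|z_\nu|/2\) is not needed here; it only keeps the disk away from the origin, which is used later for the logarithm.
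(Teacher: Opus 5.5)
Your proposal is correct and follows the paper's proof essentially step for step: Rouch\'e on the circle of radius $\frac{2^m\Lambda_\nu K_c(\eta_\star)}{\Gamma_\nu}\eta$, Cauchy--Schwarz with $|\zeta|\le R_\nu$ to bound $|\widetilde Q-Q|$, and the spare factor of $2$ from $2^{-(m-1)}$ versus $2^{-m}$. Two of your details improve on the paper's write-up. Your bound $|\zeta-z_\mu|\ge\tfrac12|z_\nu-z_\mu|$ is the correct justification of $|Q(\zeta)|\ge 2^{-(m-1)}\Gamma_\nu\rho_\nu$; the paper's intermediate estimate $|\zeta-z_\mu|>d_\nu/2$ by itself only yields a factor $(d_\nu/2)^{m-1}$. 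Your separate treatment of $\eta=0$ also covers a degenerate case the paper passes over.
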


\begin{proof}
Fix \(\nu\). Write
\[
\delta c:=\widetilde c-c,
\qquad
\delta Q(\zeta):=\widetilde Q(\zeta)-Q(\zeta)=\sum_{k=0}^{m-1}\delta c_k\zeta^k.
\]
Set
\begin{equation}\label{eq:appendixE-realized-radius}
r_\nu
:=
\frac{2^m\Lambda_\nu K_c(\eta_\star)}{\Gamma_\nu}\,\eta.
\end{equation}
Because \(\eta\le \eta_\star\), we have \(r_\nu\le r_\nu(\eta_\star)\), and therefore \(r_\nu<d_\nu/2\) and \(r_\nu<|z_\nu|/2\).

Consider the circle
\[
\Gamma_\nu^{\circ}:=\{\zeta\in\mathbb C:|\zeta-z_\nu|=r_\nu\}.
\]
If \(\zeta\in\Gamma_\nu^{\circ}\), then for every \(\mu\neq \nu\),
\[
|\zeta-z_\mu|
\ge
|z_\nu-z_\mu|-|\zeta-z_\nu|
\ge
d_\nu-r_\nu
>
\frac{d_\nu}{2}.
\]
Hence
\begin{equation}\label{eq:appendixE-Q-lower}
|Q(\zeta)|
=
|\zeta-z_\nu|\prod_{\mu\neq \nu}|\zeta-z_\mu|
\ge
r_\nu \Bigl(\frac12\Bigr)^{m-1}\Gamma_\nu.
\end{equation}
On the same circle we also have \(|\zeta|\le |z_\nu|+r_\nu\le |z_\nu|+d_\nu/2=R_\nu\), so by Cauchy--Schwarz and Lemma~\ref{lem:appendixE-coefficient-perturbation},
\[
|\delta Q(\zeta)|
\le
\|\delta c\|_2\left(\sum_{k=0}^{m-1}|\zeta|^{2k}\right)^{1/2}
\le
\|\delta c\|_2\Lambda_\nu
\le
K_c(\eta_\star)\eta\,\Lambda_\nu.
\]
Using the definition of \(r_\nu\), we may rewrite the last quantity as
\[
K_c(\eta_\star)\eta\,\Lambda_\nu
=
\frac{r_\nu\Gamma_\nu}{2^m}
<
\frac{r_\nu\Gamma_\nu}{2^{m-1}}
\le
|Q(\zeta)|
\]
by \eqref{eq:appendixE-Q-lower}. Thus \(|\delta Q(\zeta)|<|Q(\zeta)|\) on \(\Gamma_\nu^{\circ}\). By Rouch\'e's theorem, \(Q\) and \(\widetilde Q=Q+\delta Q\) have the same number of zeros in the disk bounded by \(\Gamma_\nu^{\circ}\). Since \(Q\) has exactly one zero there, namely \(z_\nu\), the polynomial \(\widetilde Q\) has exactly one zero \(\widehat z_\nu\) there as well. This proves \eqref{eq:appendixE-node-error}.
\end{proof}

\subsection{Local logarithm and frequency recovery}

The node \(z_\nu=e^{-i\omega_{j_\nu}(p_\star)\Delta}\) determines the frequency only up to a logarithm branch. For the abstract theorem this is a local issue: once the perturbed root remains in a disk that avoids the origin, there is a unique holomorphic branch through the true value.

\begin{lemma}[Local logarithm estimate]\label{lem:appendixE-logarithm}
Let \(z\in\mathbb C\setminus\{0\}\), let \(r<|z|/2\), and let \(D(z,r)\) be the open disk centered at \(z\) of radius \(r\). Then \(D(z,r)\cap\{0\}=\varnothing\). Consequently there exists a unique holomorphic branch \(\Log_z\) of the logarithm on \(D(z,r)\) satisfying \(\Log_z(z)=\log|z|+i\arg_z\), where \(\arg_z\) is any prescribed argument of \(z\). Moreover, for every \(\widehat z\in D(z,r)\),
\begin{equation}\label{eq:appendixE-log-bound}
|\Log_z(\widehat z)-\Log_z(z)|
\le
\frac{2}{|z|}\,|\widehat z-z|.
\end{equation}
\end{lemma}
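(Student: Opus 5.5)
The plan is to treat the three claims in order: the disk avoids the origin, a holomorphic branch exists and is unique, and the logarithm is Lipschitz on the disk.

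\emph{Origin avoidance.} If \(\zeta\in D(z,r)\), the reverse triangle inequality gives \(|\zeta|\ge |z|-|\zeta-z|>|z|-r>|z|/2>0\). So \(0\notin D(z,r)\). More usefully, this yields the uniform bound \(|\zeta|\ge |z|/2\) on the closed disk, which drives the final estimate.

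\emph{Existence and uniqueness of the branch.} The disk \(D(z,r)\) is convex, hence simply connected, and \(\zeta\mapsto 1/\zeta\) is holomorphic on it. I would define
\[
\Log_z(\zeta):=\log|z|+i\arg_z+\int_{[z,\zeta]}\frac{dw}{w},
\]
integrating along the straight segment. Differentiating gives \(\Log_z'(\zeta)=1/\zeta\), so \(\exp(\Log_z(\zeta))/\zeta\) has zero derivative and equals \(1\) at \(\zeta=z\); hence \(\exp\circ\Log_z=\mathrm{id}\). For uniqueness, suppose two branches agree at \(z\). Their difference is continuous, takes values in \(2\pi i\mathbb Z\), and the disk is connected, so the difference is identically zero.

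\emph{Lipschitz estimate.} For \(\widehat z\in D(z,r)\), the segment \([z,\widehat z]\) lies in the disk by convexity, and \(|w|\ge |z|/2\) on it. Therefore
\[
|\Log_z(\widehat z)-\Log_z(z)|=\Bigl|\int_{[z,\widehat z]}\frac{dw}{w}\Bigr|\le \frac{2}{|z|}\,|\widehat z-z|,
\]
which is the claimed estimate \eqref{eq:appendixE-log-bound}.

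None of these steps is a real obstacle. The only point needing care is to state the lower bound \(|w|>|z|/2\) uniformly on the segment rather than only at its endpoints, and convexity of the disk provides exactly that.
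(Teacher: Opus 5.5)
Your proof is correct and follows the paper's argument: the reverse triangle inequality gives $|\zeta|>|z|/2$ on the disk, and integrating $1/w$ along the segment from $z$ to $\widehat z$ yields the bound $2|\widehat z-z|/|z|$. The only difference is that you construct the branch explicitly as a primitive of $1/w$ and prove uniqueness via connectedness, whereas the paper simply asserts both facts for a disk that avoids the origin.
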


\begin{proof}
The inequality \(r<|z|/2\) implies that every \(\zeta\in D(z,r)\) satisfies
\[
|\zeta|\ge |z|-|\zeta-z|>|z|-r>\frac{|z|}{2}>0,
\]
so the disk avoids the origin and therefore admits a holomorphic logarithm. The normalization at \(z\) fixes the branch uniquely.

For \(\widehat z\in D(z,r)\), consider the segment \(\gamma(s)=z+s(\widehat z-z)\), \(0\le s\le 1\). The whole segment lies in \(D(z,r)\), so
\[
\Log_z(\widehat z)-\Log_z(z)
=
\int_0^1 \frac{\widehat z-z}{\gamma(s)}\,ds.
\]
Taking absolute values and using \(|\gamma(s)|\ge |z|/2\) gives
\[
|\Log_z(\widehat z)-\Log_z(z)|
\le
\int_0^1 \frac{|\widehat z-z|}{|\gamma(s)|}\,ds
\le
\frac{2}{|z|}\,|\widehat z-z|,
\]
which is \eqref{eq:appendixE-log-bound}.
\end{proof}

We now state and prove the theorem used later.

\begin{theorem}[Abstract sampled frequency extraction theorem]\label{thm:appendixE-abstract-frequency}
Assume the sampled model \eqref{eq:appendixE-prony-model}, the nonaliasing and detectability condition \eqref{eq:appendixE-nonaliasing}, and the envelope condition \eqref{eq:appendixE-envelope-smallness}. Suppose in addition that the root-smallness condition \eqref{eq:appendixE-root-smallness} holds for every \(\nu=1,\dots,m\). Then, for every realized residual level \(\eta\le \eta_\star\), the perturbed polynomial \(\widetilde Q\) has uniquely labeled roots \(\widehat z_\nu\), one in each disk of Lemma~\ref{lem:appendixE-root-localization}, and these roots determine frequencies
\begin{equation}\label{eq:appendixE-frequency-estimator}
\widehat\omega_{j_\nu}
:=
\frac{i}{\Delta}\Log_\nu(\widehat z_\nu),
\qquad
\nu=1,\dots,m,
\end{equation}
where \(\Log_\nu\) is the unique holomorphic logarithm branch on \(D(z_\nu,r_\nu(\eta_\star))\) satisfying
\[
\Log_\nu(z_\nu)=-i\omega_{j_\nu}(p_\star)\Delta.
\]
These frequency estimates obey
\begin{equation}\label{eq:appendixE-frequency-bound}
|\widehat\omega_{j_\nu}-\omega_{j_\nu}(p_\star)|
\le
K_{\omega,\nu}(\eta_\star)\,\eta,
\end{equation}
with
\begin{equation}\label{eq:appendixE-Komega-def}
K_{\omega,\nu}(\eta_\star)
:=
\frac{2^{m+1}\Lambda_\nu K_c(\eta_\star)}{\Delta\,|z_\nu|\,\Gamma_\nu}.
\end{equation}
\end{theorem}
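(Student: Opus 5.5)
The proof chains the three lemmas of this appendix. Lemma~\ref{lem:appendixE-coefficient-perturbation} supplies a coefficient bound, Lemma~\ref{lem:appendixE-root-localization} turns it into labeled roots, and Lemma~\ref{lem:appendixE-logarithm} passes from nodes to frequencies. The first step is to invoke Proposition~\ref{prop:appendixE-exact-prony}: under \eqref{eq:appendixE-nonaliasing} the exact Hankel matrix \(H\) is invertible, so \(c=-H^{-1}h\) is well defined and \(K_c(\eta_\star)\) is finite. The envelope condition \eqref{eq:appendixE-envelope-smallness} together with \(\eta\le\eta_\star\) then gives, through Lemma~\ref{lem:appendixE-coefficient-perturbation}, that \(\widetilde H\) is invertible. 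It also gives \(\|\widetilde c-c\|_2\le K_c(\eta_\star)\eta\). In particular \(\widetilde Q\) is a well-defined monic polynomial of degree \(m\).

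Next, apply Lemma~\ref{lem:appendixE-root-localization}. Under \eqref{eq:appendixE-root-smallness} it places exactly one root \(\widehat z_\nu\) of \(\widetilde Q\) in the realized disk of radius \(r_\nu:=2^m\Lambda_\nu K_c(\eta_\star)\eta/\Gamma_\nu\le r_\nu(\eta_\star)\). These disks sit inside \(D(z_\nu,r_\nu(\eta_\star))\), and since \(r_\nu(\eta_\star)<d_\nu/2\) the larger disks are pairwise disjoint. The labeling is therefore unique: no recovered root can lie in two disks. Because \(\widetilde Q\) has degree \(m\) and there are \(m\) disjoint disks each containing one root, all roots are accounted for. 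This last observation is not needed for the bound, but it confirms that the labeling is exhaustive.

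For the logarithm step, use \(r_\nu(\eta_\star)<|z_\nu|/2\). Lemma~\ref{lem:appendixE-logarithm} then provides a unique holomorphic branch on \(D(z_\nu,r_\nu(\eta_\star))\). We normalize it by choosing the argument of \(z_\nu\) so that \(\Log_\nu(z_\nu)=-i\omega_{j_\nu}(p_\star)\Delta\). This is possible because \(z_\nu=e^{-i\omega_{j_\nu}(p_\star)\Delta}\), so \(\log|z_\nu|=\Im\omega_{j_\nu}(p_\star)\Delta\) and the argument is \(-\Re\omega_{j_\nu}(p_\star)\Delta\). Since \(\widehat z_\nu\) lies in that disk, \eqref{eq:appendixE-log-bound} combined with \eqref{eq:appendixE-node-error} yields
\[
|\Log_\nu(\widehat z_\nu)-\Log_\nu(z_\nu)|\le \frac{2}{|z_\nu|}\,\frac{2^m\Lambda_\nu K_c(\eta_\star)}{\Gamma_\nu}\,\eta .
\]
Multiplying by \(|i/\Delta|=1/\Delta\) and using \(\widehat\omega_{j_\nu}-\omega_{j_\nu}(p_\star)=\tfrac{i}{\Delta}\bigl(\Log_\nu(\widehat z_\nu)-\Log_\nu(z_\nu)\bigr)\) gives \eqref{eq:appendixE-frequency-bound} with exactly the constant \eqref{eq:appendixE-Komega-def}.

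The only genuinely delicate point is consistency of domains. The branch must be defined on the \emph{a priori} disk \(D(z_\nu,r_\nu(\eta_\star))\), independent of the realized \(\eta\), so that \(\Log_\nu\) does not depend on the data. The realized root must then be shown to lie in that fixed disk. This is why the realized radius is compared with \(r_\nu(\eta_\star)\) before applying Lemma~\ref{lem:appendixE-logarithm}. Once that monotonicity \(r_\nu\le r_\nu(\eta_\star)\) is recorded, the remaining steps are direct composition of the lemmas.
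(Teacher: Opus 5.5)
Your proposal is correct and follows the paper's proof essentially verbatim. The paper likewise takes the labeled root and node bound from Lemma~\ref{lem:appendixE-root-localization}, uses $r_\nu(\eta_\star)<|z_\nu|/2$ to get the normalized branch from Lemma~\ref{lem:appendixE-logarithm} on the fixed disk $D(z_\nu,r_\nu(\eta_\star))$, and multiplies the logarithm estimate by $1/\Delta$; your extra bookkeeping (invertibility of $H$ and $\widetilde H$ via Proposition~\ref{prop:appendixE-exact-prony} and Lemma~\ref{lem:appendixE-coefficient-perturbation}, the explicit argument choice $-\Re\omega_{j_\nu}(p_\star)\Delta$, and the comparison $r_\nu\le r_\nu(\eta_\star)$) only makes explicit what the paper leaves inside those lemmas.
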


\begin{proof}
Lemma~\ref{lem:appendixE-root-localization} gives the unique labeled root \(\widehat z_\nu\) and the node error estimate
\[
|\widehat z_\nu-z_\nu|
\le
\frac{2^m\Lambda_\nu K_c(\eta_\star)}{\Gamma_\nu}\,\eta.
\]
Because \(r_\nu(\eta_\star)<|z_\nu|/2\), the disk \(D(z_\nu,r_\nu(\eta_\star))\) avoids the origin, so Lemma~\ref{lem:appendixE-logarithm} gives the required branch \(\Log_\nu\). Applying \eqref{eq:appendixE-log-bound} with \(z=z_\nu\) and \(\widehat z=\widehat z_\nu\), we obtain
\[
|\Log_\nu(\widehat z_\nu)-\Log_\nu(z_\nu)|
\le
\frac{2}{|z_\nu|}\,|\widehat z_\nu-z_\nu|.
\]
Since \(\Log_\nu(z_\nu)=-i\omega_{j_\nu}(p_\star)\Delta\), multiplying by \(1/\Delta\) yields
\begin{align*}
|\widehat\omega_{j_\nu}-\omega_{j_\nu}(p_\star)|
&=
\frac{1}{\Delta}
|\Log_\nu(\widehat z_\nu)-\Log_\nu(z_\nu)|\\
&\le
\frac{2}{\Delta|z_\nu|}
|\widehat z_\nu-z_\nu|\\
&\le
\frac{2}{\Delta|z_\nu|}
\cdot
\frac{2^m\Lambda_\nu K_c(\eta_\star)}{\Gamma_\nu}\,\eta\\
&=
K_{\omega,\nu}(\eta_\star)\,\eta,
\end{align*}
which is \eqref{eq:appendixE-frequency-bound}.
\end{proof}

\begin{remark}[Where the conditioning enters]\label{rem:appendixE-conditioning}
The theorem separates four distinct conditioning mechanisms. The factor \(\|H^{-1}\|_2\) controls the stability of the recurrence coefficients. The factor \(\Gamma_\nu^{-1}=|Q'(z_\nu)|^{-1}\) controls the sensitivity of the labeled root \(z_\nu\). The factor \(|z_\nu|^{-1}\) enters when the logarithm is applied, and therefore becomes worse for very rapidly damped modes or for very large shift steps. Finally, the requirement \(r_\nu(\eta_\star)<d_\nu/2\) shows explicitly that near-collisions of nodes are the first obstruction to reliable mode labeling. This is precisely the mechanism that will later be turned into a trust-region exclusion criterion.
\end{remark}

\subsection{Splitting the sample error into physical components}

To connect Theorem~\ref{thm:appendixE-abstract-frequency} with the organization of Appendix~\ref{app:assumption-ledger}, we now separate the residual samples into the pieces corresponding to detector noise, omitted linear content, model mismatch, and implementation error.

\begin{corollary}[Additive frequency-error decomposition at the sample level]\label{cor:appendixE-error-ledger}
Assume the hypotheses of Theorem~\ref{thm:appendixE-abstract-frequency}. Suppose that the sample residuals admit the decomposition
\begin{equation}\label{eq:appendixE-residual-split}
e_q
=
e_q^{\mathrm{stat}}
+
e_q^{\mathrm{tail}}
+
e_q^{\mathrm{mm}}
+
e_q^{\mathrm{alg}},
\qquad q=0,1,\dots,2m-1,
\end{equation}
and define the corresponding envelopes by
\begin{equation}\label{eq:appendixE-component-envelopes}
\eta^{\mathrm{stat}}:=\max_q |e_q^{\mathrm{stat}}|,
\quad
\eta^{\mathrm{tail}}:=\max_q |e_q^{\mathrm{tail}}|,
\quad
\eta^{\mathrm{mm}}:=\max_q |e_q^{\mathrm{mm}}|,
\quad
\eta^{\mathrm{alg}}:=\max_q |e_q^{\mathrm{alg}}|.
\end{equation}
Then
\begin{equation}\label{eq:appendixE-additive-frequency-bound}
|\widehat\omega_{j_\nu}-\omega_{j_\nu}(p_\star)|
\le
K_{\omega,\nu}(\eta_\star)
\bigl(
\eta^{\mathrm{stat}}
+
\eta^{\mathrm{tail}}
+
\eta^{\mathrm{mm}}
+
\eta^{\mathrm{alg}}
\bigr).
\end{equation}
Consequently one may set
\begin{align}
\varepsilon_{j_\nu}^{\mathrm{stat}}&:=K_{\omega,\nu}(\eta_\star)\eta^{\mathrm{stat}},\label{eq:appendixE-eps-stat}\\
\varepsilon_{j_\nu}^{\mathrm{tail}}&:=K_{\omega,\nu}(\eta_\star)\eta^{\mathrm{tail}},\label{eq:appendixE-eps-tail}\\
\varepsilon_{j_\nu}^{\mathrm{mm}}&:=K_{\omega,\nu}(\eta_\star)\eta^{\mathrm{mm}},\label{eq:appendixE-eps-mm}\\
\varepsilon_{j_\nu}^{\mathrm{alg}}&:=K_{\omega,\nu}(\eta_\star)\eta^{\mathrm{alg}},\label{eq:appendixE-eps-alg}
\end{align}
so that
\begin{equation}\label{eq:appendixE-total-ledger}
|\widehat\omega_{j_\nu}-\omega_{j_\nu}(p_\star)|
\le
\varepsilon_{j_\nu}^{\mathrm{stat}}
+
\varepsilon_{j_\nu}^{\mathrm{tail}}
+
\varepsilon_{j_\nu}^{\mathrm{mm}}
+
\varepsilon_{j_\nu}^{\mathrm{alg}}.
\end{equation}
\end{corollary}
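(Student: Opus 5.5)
The plan is to reduce Corollary~\ref{cor:appendixE-error-ledger} to Theorem~\ref{thm:appendixE-abstract-frequency} by controlling the total residual envelope \(\eta\) in terms of the four component envelopes. The key observation is that the frequency bound \eqref{eq:appendixE-frequency-bound} is linear in the realized envelope \(\eta\), with a constant \(K_{\omega,\nu}(\eta_\star)\) that depends only on the fixed a priori envelope \(\eta_\star\) and on the exact geometry (\(H\), \(c\), \(\Lambda_\nu\), \(\Gamma_\nu\), \(z_\nu\), \(\Delta\)). It does not depend on how the residual is split, so any upper bound on \(\eta\) passes directly into an upper bound on the frequency error.

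First, from \eqref{eq:appendixE-residual-split} and the triangle inequality, I will get, for each \(q\),
\[
|e_q|\le |e_q^{\mathrm{stat}}|+|e_q^{\mathrm{tail}}|+|e_q^{\mathrm{mm}}|+|e_q^{\mathrm{alg}}|\le \eta^{\mathrm{stat}}+\eta^{\mathrm{tail}}+\eta^{\mathrm{mm}}+\eta^{\mathrm{alg}}.
\]
Taking the maximum over \(q\) then gives \(\eta\le\eta^{\mathrm{stat}}+\eta^{\mathrm{tail}}+\eta^{\mathrm{mm}}+\eta^{\mathrm{alg}}\).

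Second, because the corollary assumes the hypotheses of Theorem~\ref{thm:appendixE-abstract-frequency}, in particular \(\eta\le\eta_\star\), the theorem applies unchanged. It yields the labeled roots, the branches \(\Log_\nu\), and the bound \(|\widehat\omega_{j_\nu}-\omega_{j_\nu}(p_\star)|\le K_{\omega,\nu}(\eta_\star)\eta\). Substituting the bound on \(\eta\) from the first step gives \eqref{eq:appendixE-additive-frequency-bound}. Distributing the nonnegative constant \(K_{\omega,\nu}(\eta_\star)\) over the sum, with the definitions \eqref{eq:appendixE-eps-stat}--\eqref{eq:appendixE-eps-alg}, gives \eqref{eq:appendixE-total-ledger}.

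The only point needing care is to make explicit that the admissibility condition constrains the realized total \(\eta\), not the sum of the component envelopes. The sum may exceed \(\eta_\star\) while \(\eta\) does not, and the argument needs only \(\eta\le\eta_\star\). The resulting bound stays valid in that case, although it is then looser than \(K_{\omega,\nu}(\eta_\star)\eta\). I expect no real obstacle beyond stating this clearly.
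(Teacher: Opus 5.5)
Your proposal is correct and follows the paper's proof: bound each \(|e_q|\) by the sum of the four component envelopes via the triangle inequality, take the maximum over \(q\) to get \(\eta\le\eta^{\mathrm{stat}}+\eta^{\mathrm{tail}}+\eta^{\mathrm{mm}}+\eta^{\mathrm{alg}}\), substitute into \eqref{eq:appendixE-frequency-bound}, and distribute \(K_{\omega,\nu}(\eta_\star)\). Your remark that admissibility constrains only the realized total \(\eta\le\eta_\star\), and not the sum of the component envelopes, is accurate; the paper leaves this implicit.
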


\begin{proof}
From \eqref{eq:appendixE-residual-split} and the triangle inequality,
\[
|e_q|
\le
|e_q^{\mathrm{stat}}|
+
|e_q^{\mathrm{tail}}|
+
|e_q^{\mathrm{mm}}|
+
|e_q^{\mathrm{alg}}|
\]
for each \(q\). Taking the maximum over \(q\) gives
\[
\eta
\le
\eta^{\mathrm{stat}}
+
\eta^{\mathrm{tail}}
+
\eta^{\mathrm{mm}}
+
\eta^{\mathrm{alg}}.
\]
Substituting this into \eqref{eq:appendixE-frequency-bound} proves \eqref{eq:appendixE-additive-frequency-bound}. The definitions \eqref{eq:appendixE-eps-stat}--\eqref{eq:appendixE-eps-alg} then yield \eqref{eq:appendixE-total-ledger}.
\end{proof}

\begin{remark}[Exact arithmetic versus implementation]\label{rem:appendixE-implementation}
The theorem itself is an exact-arithmetic statement. If the linear system for \(\widetilde c\) is solved by backward-stable linear algebra and the roots of \(\widetilde Q\) are computed with an explicit enclosure, the corresponding implementation error is naturally absorbed into \(e_q^{\mathrm{alg}}\) or, equivalently, into a backward perturbation of \(\widetilde H\) and \(\widetilde h\). This appendix isolates the structural dependence on node separation and on the Prony map. Numerical choices such as overdetermined least-squares fitting, regularization, or posterior sampling enter only through the constants that replace \(K_c(\eta_\star)\) and through the definition of the algorithmic ledger term.
\end{remark}

\providecommand{\AliasLat}{\Lambda_\Delta}
\providecommand{\unwrapop}{\operatorname{unwrap}}
\providecommand{\Arg}{\operatorname{Arg}}
\providecommand{\wpr}{\omega^{\mathrm{pr}}_\Delta}

\section{Branch selection and frequency unwrapping}\label{app:branch-selection}

Appendix~\ref{app:abstract-frequency-extraction} produces labeled roots \(\widehat z_\nu\) near the true nodes
\[
z_\nu=e^{-i\omega_{j_\nu}(p_\star)\Delta},
\]
and shows that each labeled root determines a frequency once one chooses a holomorphic logarithm branch through the true node. That statement is sufficient for the abstract perturbation theorem, but it is not yet the form needed in an actual pipeline. A numerical analysis must decide, for each recovered root, which representative of the alias class of the complex frequency is being reported. It must also keep that choice consistent when the ringdown start time changes, when the projection channel changes, or when the model family is enlarged. If this step is left implicit, a harmless addition of \(2\pi/\Delta\) to the real part can masquerade as a physical drift and contaminate the later trust-region diagnostics. This appendix removes that ambiguity completely.

The problem is classical in Prony-type methods, matrix-pencil methods, and related exponential-fitting schemes \cite{HuaSarkarMPM1990,RoyKailath1989ESPRIT,PottsTasche2010APM,BatenkovYomdin2013Prony}. The contribution here is a deterministic organization rule adapted to the event-local Kerr setting of Appendices~\ref{app:assumption-ledger} and \ref{app:kerr-qnm-data}. Mode labels are fixed first by the separation arguments of Appendices~\ref{app:assumption-ledger} and \ref{app:abstract-frequency-extraction}. Once the label is fixed, the remaining ambiguity is an alias ambiguity in the real part of the frequency, and that ambiguity is resolved by a prior. At the initial window the prior is supplied by a Kerr prediction \(p^\sharp\mapsto \omega_j(p^\sharp)\). At later windows it may be supplied recursively by the recovered representative from the previous window. Both cases are treated by the same formulas.

\subsection{Alias classes of the shift map}

Fix a sampling step \(\Delta>0\). The map
\[
\omega\longmapsto e^{-i\omega\Delta}
\]
is injective in imaginary part and periodic in real part. To make this exact, define the real alias lattice
\begin{equation}\label{eq:appendixF-alias-lattice}
\AliasLat:=\frac{2\pi}{\Delta}\mathbb Z\subset\mathbb R
\end{equation}
and the principal strip
\begin{equation}\label{eq:appendixF-principal-strip}
\mathfrak S_\Delta:=
\left\{
\omega\in\mathbb C:\;
-\frac{\pi}{\Delta}\le \Re\omega<\frac{\pi}{\Delta}
\right\}.
\end{equation}

\begin{proposition}[Alias classes of the exponential shift map]\label{prop:appendixF-alias}
Let \(\Delta>0\) and let \(\omega,\omega'\in\mathbb C\). Then
\begin{equation}\label{eq:appendixF-alias-equivalence}
e^{-i\omega\Delta}=e^{-i\omega'\Delta}
\quad\Longleftrightarrow\quad
\omega'-\omega\in\AliasLat.
\end{equation}
Equivalently, if \(z\in\mathbb C\setminus\{0\}\) and
\begin{equation}\label{eq:appendixF-principal-representative}
\wpr(z):=\frac{i}{\Delta}\Log z,
\end{equation}
where \(\Log\) denotes the principal logarithm on \(\mathbb C\setminus(-\infty,0]\), then
\begin{equation}\label{eq:appendixF-all-representatives}
\{\omega\in\mathbb C:\; e^{-i\omega\Delta}=z\}
=
\wpr(z)+\AliasLat.
\end{equation}
In particular, every representative of \(z\) has the same imaginary part,
\begin{equation}\label{eq:appendixF-imaginary-part-fixed}
\Im\omega=\frac{\log|z|}{\Delta},
\end{equation}
and only the real part is aliased.
\end{proposition}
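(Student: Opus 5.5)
The plan is to reduce everything to the kernel of the complex exponential. I first prove the equivalence \eqref{eq:appendixF-alias-equivalence}. Since \(e^{-i\omega\Delta}\neq 0\), the identity \(e^{-i\omega\Delta}=e^{-i\omega'\Delta}\) is equivalent to \(e^{-i(\omega'-\omega)\Delta}=1\). The kernel of \(\zeta\mapsto e^{\zeta}\) on \(\mathbb C\) is exactly \(2\pi i\mathbb Z\), so this holds if and only if \(-i(\omega'-\omega)\Delta=2\pi i k\) for some \(k\in\mathbb Z\), that is, \(\omega'-\omega=-2\pi k/\Delta\). Because \(\mathbb Z=-\mathbb Z\), this is the same as \(\omega'-\omega\in\AliasLat\). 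The fact that the difference is real is built into this computation, since \(2\pi k/\Delta\in\mathbb R\).

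For \eqref{eq:appendixF-all-representatives}, I first check that \(\wpr(z)\) is itself a representative: \(e^{-i\wpr(z)\Delta}=e^{-i\cdot(i/\Delta)\Log z\cdot\Delta}=e^{\Log z}=z\). The equivalence just proved, applied with \(\omega=\wpr(z)\), then identifies the full fiber \(\{\omega:e^{-i\omega\Delta}=z\}\) with \(\wpr(z)+\AliasLat\).

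There is one technical point, and it is the only real obstacle. The principal logarithm is defined as stated on \(\mathbb C\setminus(-\infty,0]\), but the claim is made for all \(z\neq 0\). I would handle this by extending \(\Log\) to the negative real axis with argument in \((-\pi,\pi]\), or by using any fixed logarithm of \(z\). The fiber description only needs one representative, so the choice does not affect the conclusion. With either convention, \(\Re\wpr(z)=-\Arg z/\Delta\) lies in a half-open interval of length \(2\pi/\Delta\). This matches the principal strip \(\mathfrak S_\Delta\) up to the boundary convention, which I would note in a short remark.

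Finally, for \eqref{eq:appendixF-imaginary-part-fixed}, I compute \(\Im\wpr(z)=\Im\bigl(\tfrac{i}{\Delta}(\log|z|+i\Arg z)\bigr)=\log|z|/\Delta\). Translation by elements of the real lattice \(\AliasLat\) leaves imaginary parts unchanged, so every representative has this same imaginary part, and only the real part is aliased.
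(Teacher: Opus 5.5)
Your proof is correct and is essentially the paper's argument: the paper writes \(\omega'-\omega=a+ib\), takes moduli in \(e^{-i(\omega'-\omega)\Delta}=1\) to force \(b=0\), and then concludes \(a\Delta\in 2\pi\mathbb Z\), which is just the fact \(\ker\exp=2\pi i\mathbb Z\) that you quote; it then checks \(e^{-i\omega^{\mathrm{pr}}_\Delta(z)\Delta}=z\) and reads off the imaginary part exactly as you do. Your branch-cut remark fills a small gap that the paper leaves open for negative real \(z\), and extending \(\operatorname{Log}\) with \(\operatorname{Arg}\in(-\pi,\pi]\) gives \(\Re\,\omega^{\mathrm{pr}}_\Delta(z)=-\operatorname{Arg}z/\Delta\in[-\pi/\Delta,\pi/\Delta)\), which is exactly the half-open strip \(\mathfrak S_\Delta\), so no boundary caveat is needed.
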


\begin{proof}
If \(\omega'-\omega=2\pi k/\Delta\) for some \(k\in\mathbb Z\), then
\[
e^{-i\omega'\Delta}
=
e^{-i\omega\Delta}e^{-i2\pi k}
=
e^{-i\omega\Delta},
\]
so the implication from right to left is immediate.

Conversely, assume \(e^{-i\omega\Delta}=e^{-i\omega'\Delta}\). Then
\[
e^{-i(\omega'-\omega)\Delta}=1.
\]
Write \(\omega'-\omega=a+ib\) with \(a,b\in\mathbb R\). Taking absolute values gives
\[
1=\left|e^{-i(a+ib)\Delta}\right|=e^{b\Delta},
\]
hence \(b=0\). Therefore \(\omega'-\omega\) is real. Since
\[
e^{-ia\Delta}=1,
\]
we must have \(a\Delta\in2\pi\mathbb Z\), that is, \(a\in\AliasLat\). This proves \eqref{eq:appendixF-alias-equivalence}.

Now fix \(z\neq0\). By definition of the principal logarithm,
\[
e^{-i\wpr(z)\Delta}
=
e^{-i(i/\Delta)\Log z\,\Delta}
=
e^{\Log z}
=
z.
\]
Hence \(\wpr(z)\) is one representative of \(z\). Formula \eqref{eq:appendixF-all-representatives} then follows from \eqref{eq:appendixF-alias-equivalence}. Finally, every representative has the form \(\wpr(z)+2\pi k/\Delta\), and the lattice term is real, so the imaginary part is independent of \(k\). Since
\[
\Im \wpr(z)
=
\Im\!\left(\frac{i}{\Delta}\bigl(\log|z|+i\Arg z\bigr)\right)
=
\frac{\log|z|}{\Delta},
\]
we obtain \eqref{eq:appendixF-imaginary-part-fixed}.
\end{proof}

The preceding proposition isolates the exact source of the ambiguity. Once a labeled root \(\widehat z_\nu\) is known, its damping rate is fixed. The only remaining freedom is the addition of a real lattice element \(2\pi k/\Delta\) to the oscillation frequency. The rest of the appendix explains how that lattice index is chosen and how the choice is propagated from one window to the next.

\subsection{Prior-centered logarithm branches}

Fix a prior frequency \(\omega^\sharp\in\mathbb C\) and let
\begin{equation}\label{eq:appendixF-prior-node}
z^\sharp:=e^{-i\omega^\sharp\Delta}.
\end{equation}
The prior may be a Kerr prediction \(\omega_j(p^\sharp)\) at a reference point \(p^\sharp\in\Kdet\), or it may be a previously recovered representative from an adjacent ringdown window. Whenever the ratio \(z/z^\sharp\) avoids the branch cut of the principal logarithm, the prior selects a unique representative of the alias class of \(z\).

\begin{definition}[Prior-selected representative]\label{def:appendixF-prior-selected}
Let \(z\in\mathbb C\setminus\{0\}\). If
\begin{equation}\label{eq:appendixF-cut-avoidance}
\frac{z}{z^\sharp}\notin(-\infty,0],
\end{equation}
we define the prior-selected representative of \(z\) relative to \(\omega^\sharp\) by
\begin{equation}\label{eq:appendixF-prior-selected-representative}
\omega_{\Delta}^{(\omega^\sharp)}(z)
:=
\omega^\sharp+\frac{i}{\Delta}\Log\!\left(\frac{z}{z^\sharp}\right).
\end{equation}
\end{definition}

\begin{proposition}[Characterization of the prior-selected representative]\label{prop:appendixF-prior-characterization}
Assume \eqref{eq:appendixF-cut-avoidance}. Then \(\omega_{\Delta}^{(\omega^\sharp)}(z)\) is well defined, satisfies
\begin{equation}\label{eq:appendixF-prior-characterization-1}
e^{-i\omega_{\Delta}^{(\omega^\sharp)}(z)\Delta}=z,
\end{equation}
and is the unique representative of \(z\) lying in the shifted principal strip
\begin{equation}\label{eq:appendixF-shifted-strip}
\omega^\sharp+\mathfrak S_\Delta
=
\left\{
\omega\in\mathbb C:\;
-\frac{\pi}{\Delta}\le \Re(\omega-\omega^\sharp)<\frac{\pi}{\Delta}
\right\}.
\end{equation}
Consequently, if \(\omega\) is any representative of \(z\) such that
\begin{equation}\label{eq:appendixF-strip-compatibility}
-\frac{\pi}{\Delta}\le \Re(\omega-\omega^\sharp)<\frac{\pi}{\Delta},
\end{equation}
then
\begin{equation}\label{eq:appendixF-prior-equals-physical}
\omega_{\Delta}^{(\omega^\sharp)}(z)=\omega.
\end{equation}
\end{proposition}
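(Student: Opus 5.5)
The plan is a direct computation with the principal logarithm, followed by an appeal to Proposition~\ref{prop:appendixF-alias} for uniqueness.

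\emph{Well-definedness.} Condition \eqref{eq:appendixF-cut-avoidance} puts \(z/z^\sharp\) in \(\mathbb C\setminus(-\infty,0]\), where \(\Log\) is holomorphic and single valued. Hence \(\omega_{\Delta}^{(\omega^\sharp)}(z)\) is a well-defined complex number.

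\emph{The identity \eqref{eq:appendixF-prior-characterization-1}.} Expanding the definition \eqref{eq:appendixF-prior-selected-representative} gives
\[
e^{-i\omega_{\Delta}^{(\omega^\sharp)}(z)\Delta}
=
e^{-i\omega^\sharp\Delta}\,e^{\Log(z/z^\sharp)}
=
z^\sharp\cdot\frac{z}{z^\sharp}
=z .
\]

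\emph{Strip membership.} Write \(\Log(z/z^\sharp)=\log|z/z^\sharp|+i\theta\) with \(\theta=\Arg(z/z^\sharp)\). Because the ratio avoids the negative real axis, \(\theta\in(-\pi,\pi)\). Then
\[
\Re\bigl(\omega_{\Delta}^{(\omega^\sharp)}(z)-\omega^\sharp\bigr)
=\Re\Bigl(\tfrac{i}{\Delta}(\log|z/z^\sharp|+i\theta)\Bigr)
=-\theta/\Delta\in(-\pi/\Delta,\pi/\Delta).
\]
This open interval lies inside the half-open strip \eqref{eq:appendixF-shifted-strip}, so the selected representative belongs to \(\omega^\sharp+\mathfrak S_\Delta\).

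\emph{Uniqueness.} Let \(\omega\) be any representative of \(z\) in the shifted strip. By Proposition~\ref{prop:appendixF-alias}, \(\omega-\omega_{\Delta}^{(\omega^\sharp)}(z)=2\pi k/\Delta\) for some \(k\in\mathbb Z\). The real parts of both points, measured relative to \(\omega^\sharp\), lie in \([-\pi/\Delta,\pi/\Delta)\), an interval of length exactly \(2\pi/\Delta\). Their difference therefore has absolute value strictly less than \(2\pi/\Delta\), which forces \(k=0\). This gives \eqref{eq:appendixF-prior-equals-physical}, and the final "consequently" clause is this uniqueness statement applied to a representative satisfying \eqref{eq:appendixF-strip-compatibility}.

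\emph{Main point to check.} The only delicate step is the sign and endpoint bookkeeping in the strip-membership step. The factor \(i/\Delta\) maps the argument \(\theta\) to real part \(-\theta/\Delta\). The excluded cut \(\theta=\pi\) is exactly what would produce the boundary value \(-\pi/\Delta\), so the selected representative never reaches that endpoint. The uniqueness argument still works on the closed-left, open-right strip because the strip has length exactly \(2\pi/\Delta\), so no two points of one alias class can both lie in it.
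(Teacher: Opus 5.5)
Your proposal is correct and follows the paper's proof essentially step for step: well-definedness from the cut-avoidance hypothesis, the direct exponential computation, strip membership from the range of the principal argument, and uniqueness via Proposition~\ref{prop:appendixF-alias} together with the fact that the half-open strip has width exactly $2\pi/\Delta$. Where the paper simply cites ``the definition of the principal logarithm'' for strip membership, you compute the real part explicitly as $-\theta/\Delta$ with $\theta\in(-\pi,\pi)$, which gives the slightly sharper open interval and is consistent with the paper's argument.
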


\begin{proof}
Because \(z/z^\sharp\) lies in the domain of the principal logarithm, \(\omega_{\Delta}^{(\omega^\sharp)}(z)\) is well defined. Using \eqref{eq:appendixF-prior-node},
\[
e^{-i\omega_{\Delta}^{(\omega^\sharp)}(z)\Delta}
=
e^{-i\omega^\sharp\Delta}
e^{-i(i/\Delta)\Log(z/z^\sharp)\Delta}
=
z^\sharp e^{\Log(z/z^\sharp)}
=
z,
\]
which proves \eqref{eq:appendixF-prior-characterization-1}.

Next,
\[
\omega_{\Delta}^{(\omega^\sharp)}(z)-\omega^\sharp
=
\frac{i}{\Delta}\Log\!\left(\frac{z}{z^\sharp}\right).
\]
By definition of the principal logarithm, the real part of the right-hand side belongs to \([-\pi/\Delta,\pi/\Delta)\). Hence \(\omega_{\Delta}^{(\omega^\sharp)}(z)\in\omega^\sharp+\mathfrak S_\Delta\).

To prove uniqueness, let \(\omega'\) be another representative of \(z\) in the same shifted strip. By Proposition~\ref{prop:appendixF-alias},
\[
\omega'-\omega_{\Delta}^{(\omega^\sharp)}(z)\in\AliasLat.
\]
On the other hand, both \(\omega'-\omega^\sharp\) and \(\omega_{\Delta}^{(\omega^\sharp)}(z)-\omega^\sharp\) belong to \(\mathfrak S_\Delta\), so their real parts differ by less than \(2\pi/\Delta\). The only lattice element with this property is \(0\). Therefore \(\omega'=\omega_{\Delta}^{(\omega^\sharp)}(z)\), proving uniqueness.

If \(\omega\) satisfies \eqref{eq:appendixF-strip-compatibility}, then \(\omega\) is a representative of \(z\) in the shifted principal strip, so uniqueness gives \eqref{eq:appendixF-prior-equals-physical}.
\end{proof}

Proposition~\ref{prop:appendixF-prior-characterization} is the exact bridge between the abstract logarithm of Appendix~\ref{app:abstract-frequency-extraction} and the representative actually reported by the pipeline. One never needs to know the true branch in advance. The prior determines a strip, and the reported representative is simply the unique representative in that strip.

The next result gives a fully explicit sufficient condition under which the prior-selected representative varies Lipschitz-continuously with the node. This is the form used later with the root disks coming from Appendix~\ref{app:abstract-frequency-extraction}.

\begin{proposition}[Local Lipschitz control in a prior disk]\label{prop:appendixF-prior-Lipschitz}
Let \(z,\widehat z\in\mathbb C\setminus\{0\}\). Assume that
\begin{equation}\label{eq:appendixF-prior-disk-condition}
|z-z^\sharp|+|\widehat z-z|<|z^\sharp|.
\end{equation}
Then both prior-selected representatives \(\omega_{\Delta}^{(\omega^\sharp)}(z)\) and \(\omega_{\Delta}^{(\omega^\sharp)}(\widehat z)\) are well defined, and
\begin{equation}\label{eq:appendixF-prior-Lipschitz-bound}
\left|
\omega_{\Delta}^{(\omega^\sharp)}(\widehat z)
-
\omega_{\Delta}^{(\omega^\sharp)}(z)
\right|
\le
\frac{|\widehat z-z|}
{\Delta\bigl(|z^\sharp|-|z-z^\sharp|-|\widehat z-z|\bigr)}.
\end{equation}
\end{proposition}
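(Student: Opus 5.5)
The plan is to reduce everything to the principal logarithm on the unit-centered disk \(D(1,1)\) by dividing through by the prior node \(z^\sharp\). Set
\[
w:=\frac{z}{z^\sharp},\qquad \widehat w:=\frac{\widehat z}{z^\sharp}.
\]
By Definition~\ref{def:appendixF-prior-selected}, the difference we must bound is
\[
\omega_{\Delta}^{(\omega^\sharp)}(\widehat z)-\omega_{\Delta}^{(\omega^\sharp)}(z)=\frac{i}{\Delta}\bigl(\Log\widehat w-\Log w\bigr).
\]
The prior \(\omega^\sharp\) cancels, so the proposition is really a statement about the principal logarithm near \(1\).

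The first step is well-definedness. Dividing \eqref{eq:appendixF-prior-disk-condition} by \(|z^\sharp|\) gives
\[
|w-1|+|\widehat w-w|<1.
\]
Hence \(|w-1|<1\), and by the triangle inequality \(|\widehat w-1|<1\). The open disk \(D(1,1)\) lies in the half-plane \(\Re\zeta>0\), so it avoids \((-\infty,0]\). Both \(w\) and \(\widehat w\) therefore satisfy the cut-avoidance condition \eqref{eq:appendixF-cut-avoidance}, and both representatives are defined.

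The second step is the Lipschitz estimate itself. I integrate \(1/\zeta\) along the segment \(\gamma(s)=w+s(\widehat w-w)\), \(0\le s\le1\). Every point of the segment satisfies
\[
|\gamma(s)-1|\le|w-1|+|\widehat w-w|<1,
\]
so the segment stays inside \(D(1,1)\), where \(\Log\) is holomorphic with derivative \(1/\zeta\). This gives
\[
\Log\widehat w-\Log w=\int_0^1\frac{\widehat w-w}{\gamma(s)}\,ds .
\]
For the lower bound on \(|\gamma(s)|\), I use
\[
|\gamma(s)|\ge 1-|w-1|-s|\widehat w-w|\ge 1-|w-1|-|\widehat w-w|>0.
\]
Consequently
\[
|\Log\widehat w-\Log w|\le\frac{|\widehat w-w|}{1-|w-1|-|\widehat w-w|}.
\]
Multiplying numerator and denominator by \(|z^\sharp|\) converts this into
\[
\frac{|\widehat z-z|}{|z^\sharp|-|z-z^\sharp|-|\widehat z-z|}.
\]
Dividing by \(\Delta\) then yields \eqref{eq:appendixF-prior-Lipschitz-bound}.

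The only point needing care is that the integral of \(1/\zeta\) along the segment really equals the difference of \emph{principal} logarithms rather than some other branch. This holds because the whole segment lies in the convex set \(D(1,1)\subset\mathbb C\setminus(-\infty,0]\), on which \(\Log\) is a primitive of \(1/\zeta\). This is the same mechanism as in Lemma~\ref{lem:appendixE-logarithm}, now recentered at \(1\) instead of at \(z\).
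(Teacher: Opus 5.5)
Your proof is correct and matches the paper's argument: you integrate $1/\zeta$ along the straight segment, use the triangle inequality to keep the rescaled segment in a disk about $1$ of radius less than one (hence off the branch cut), and bound the denominator below by $|z^\sharp|-|z-z^\sharp|-|\widehat z-z|$ before dividing by $\Delta$. The only difference is cosmetic: you normalize to $w=z/z^\sharp$ at the outset, whereas the paper integrates in the original $z$-plane and observes that the ratios $\zeta/z^\sharp$ lie in a disk about $1$ of radius $\rho<1$.
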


\begin{proof}
Set
\[
\rho:=\frac{|z-z^\sharp|+|\widehat z-z|}{|z^\sharp|}.
\]
By \eqref{eq:appendixF-prior-disk-condition}, \(0\le \rho<1\). The triangle inequality gives
\[
|z-z^\sharp|\le \rho|z^\sharp|,
\qquad
|\widehat z-z^\sharp|\le |z-z^\sharp|+|\widehat z-z|\le \rho|z^\sharp|.
\]
Hence both \(z\) and \(\widehat z\) lie in the closed disk
\[
\overline{D}\bigl(z^\sharp,\rho|z^\sharp|\bigr).
\]
For any point \(\zeta\) on the straight-line segment between \(z\) and \(\widehat z\),
\[
|\zeta-z^\sharp|
\le
|z-z^\sharp|+|\zeta-z|
\le
|z-z^\sharp|+|\widehat z-z|
\le
\rho|z^\sharp|.
\]
Therefore
\[
|\zeta|
\ge
|z^\sharp|-|\zeta-z^\sharp|
\ge
(1-\rho)|z^\sharp|
=
|z^\sharp|-|z-z^\sharp|-|\widehat z-z|
>0.
\]
In particular, the entire segment lies in \(\mathbb C\setminus\{0\}\) and the ratios \(\zeta/z^\sharp\) lie in the disk \(D(1,\rho)\), which is contained in the right half-plane because \(\rho<1\). Hence the principal logarithm is holomorphic on a neighborhood of the segment connecting \(z/z^\sharp\) and \(\widehat z/z^\sharp\).

Now apply the fundamental theorem of calculus along the straight-line segment:
\[
\Log\!\left(\frac{\widehat z}{z^\sharp}\right)
-
\Log\!\left(\frac{z}{z^\sharp}\right)
=
\int_0^1
\frac{d}{ds}
\Log\!\left(\frac{z+s(\widehat z-z)}{z^\sharp}\right)\,ds
=
\int_0^1
\frac{\widehat z-z}{z+s(\widehat z-z)}\,ds.
\]
Taking absolute values and using the lower bound above yields
\[
\left|
\Log\!\left(\frac{\widehat z}{z^\sharp}\right)
-
\Log\!\left(\frac{z}{z^\sharp}\right)
\right|
\le
\frac{|\widehat z-z|}
{|z^\sharp|-|z-z^\sharp|-|\widehat z-z|}.
\]
Multiplying by \(1/\Delta\) proves \eqref{eq:appendixF-prior-Lipschitz-bound}.
\end{proof}

The preceding proposition is the practical branch-selection estimate. It does not require access to the true frequency. It requires only a prior node \(z^\sharp\), a explicit node-error radius, and the elementary geometric condition \eqref{eq:appendixF-prior-disk-condition}. In applications \(z^\sharp\) is generated from the Kerr surrogate of Appendix~\ref{app:kerr-qnm-data}.

\begin{corollary}[Direct use with the Appendix E root disks]\label{cor:appendixF-from-appendixE}
Fix a labeled mode \(\nu\) from Theorem~\ref{thm:appendixE-abstract-frequency}. Let
\[
z_\nu=e^{-i\omega_{j_\nu}(p_\star)\Delta},
\qquad
\widehat z_\nu
\]
be the true and recovered nodes, and let
\[
z_\nu^\sharp:=e^{-i\omega_{j_\nu}(p^\sharp)\Delta}
\]
be a prior node generated from some guide parameter \(p^\sharp\in\Kdet\). If
\begin{equation}\label{eq:appendixF-E-compatibility}
|z_\nu-z_\nu^\sharp|+r_\nu(\eta_\star)<|z_\nu^\sharp|,
\end{equation}
then the prior-selected representatives of \(z_\nu\) and \(\widehat z_\nu\) are well defined, and
\begin{equation}\label{eq:appendixF-E-compatibility-bound}
\left|
\omega_{\Delta}^{(\omega_{j_\nu}(p^\sharp))}(\widehat z_\nu)
-
\omega_{\Delta}^{(\omega_{j_\nu}(p^\sharp))}(z_\nu)
\right|
\le
\frac{r_\nu(\eta_\star)}
{\Delta\bigl(|z_\nu^\sharp|-|z_\nu-z_\nu^\sharp|-r_\nu(\eta_\star)\bigr)}.
\end{equation}
\end{corollary}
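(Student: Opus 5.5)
The corollary is a direct specialization of Proposition~\ref{prop:appendixF-prior-Lipschitz}. I take \(z:=z_\nu\), \(\widehat z:=\widehat z_\nu\), and the prior node \(z^\sharp:=z_\nu^\sharp\), generated from \(\omega^\sharp:=\omega_{j_\nu}(p^\sharp)\). The plan has three steps: bound the node error, check the disk condition of that proposition, and substitute.

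First, Theorem~\ref{thm:appendixE-abstract-frequency} applies under its hypotheses, via Lemma~\ref{lem:appendixE-root-localization}. For the realized \(\eta\le\eta_\star\) it gives
\[
|\widehat z_\nu-z_\nu|\le \frac{2^m\Lambda_\nu K_c(\eta_\star)}{\Gamma_\nu}\,\eta\le \frac{2^m\Lambda_\nu K_c(\eta_\star)}{\Gamma_\nu}\,\eta_\star=r_\nu(\eta_\star).
\]
This uses the definition \eqref{eq:appendixE-root-radius} and monotonicity in \(\eta\). Combining it with the compatibility hypothesis \eqref{eq:appendixF-E-compatibility} yields
\[
|z_\nu-z_\nu^\sharp|+|\widehat z_\nu-z_\nu|\le |z_\nu-z_\nu^\sharp|+r_\nu(\eta_\star)<|z_\nu^\sharp|.
\]
This is exactly the disk condition \eqref{eq:appendixF-prior-disk-condition}. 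Both nodes are nonzero, since \(z_\nu\) is an exponential and \(\widehat z_\nu\) lies within \(r_\nu(\eta_\star)<|z_\nu|/2\) of \(z_\nu\).

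Second, Proposition~\ref{prop:appendixF-prior-Lipschitz} now gives well-definedness of both prior-selected representatives, together with the bound
\[
\frac{|\widehat z_\nu-z_\nu|}{\Delta\bigl(|z_\nu^\sharp|-|z_\nu-z_\nu^\sharp|-|\widehat z_\nu-z_\nu|\bigr)}.
\]

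Third, the map \(t\mapsto t/(\Delta(a-t))\) with \(a:=|z_\nu^\sharp|-|z_\nu-z_\nu^\sharp|\) is increasing on \([0,a)\). Since \(r_\nu(\eta_\star)<a\) by \eqref{eq:appendixF-E-compatibility}, replacing \(|\widehat z_\nu-z_\nu|\) by the larger \(r_\nu(\eta_\star)\) in both numerator and denominator gives \eqref{eq:appendixF-E-compatibility-bound}.

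There is no real obstacle here. The only point needing care is the final monotone substitution: one must make sure the denominator stays positive after replacing the realized node error by the a priori radius, and that positivity is exactly what \eqref{eq:appendixF-E-compatibility} supplies.
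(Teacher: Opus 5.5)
Your proposal is correct and follows the paper's proof. Both bound \(|\widehat z_\nu-z_\nu|\le r_\nu(\eta_\star)\) via Theorem~\ref{thm:appendixE-abstract-frequency} and then specialize Proposition~\ref{prop:appendixF-prior-Lipschitz}. You also make explicit the monotonicity of \(t\mapsto t/(a-t)\) on \([0,a)\), which is what justifies replacing the realized node error by \(r_\nu(\eta_\star)\) in the denominator; the paper leaves that step implicit.
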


\begin{proof}
Theorem~\ref{thm:appendixE-abstract-frequency} gives
\[
|\widehat z_\nu-z_\nu|\le r_\nu(\eta_\star).
\]
Substituting \(\widehat z=\widehat z_\nu\), \(z=z_\nu\), and \(z^\sharp=z_\nu^\sharp\) into Proposition~\ref{prop:appendixF-prior-Lipschitz}, and using \eqref{eq:appendixF-E-compatibility}, yields \eqref{eq:appendixF-E-compatibility-bound}.
\end{proof}

Corollary~\ref{cor:appendixF-from-appendixE} is the point at which the abstract node localization of Appendix~\ref{app:abstract-frequency-extraction} becomes an implementable frequency estimate. Once a guide parameter \(p^\sharp\) is chosen, the prior-selected representative is a concrete formula and the branch is no longer existential.

\subsection{Explicit unwrapping of the principal representative}

One may implement the same choice rule without forming the ratio \(z/z^\sharp\) explicitly. Starting from any representative of the recovered node, one simply adds the unique alias correction that places the result inside the strip centered at the prior.

\begin{definition}[Unwrapping around a prior]\label{def:appendixF-unwrap}
Let \(\widetilde\omega\in\mathbb C\). We define
\begin{equation}\label{eq:appendixF-unwrap-definition}
\unwrapop_{\Delta,\omega^\sharp}(\widetilde\omega)
:=
\widetilde\omega+\frac{2\pi}{\Delta}k_\star,
\end{equation}
where \(k_\star\in\mathbb Z\) is the unique integer such that
\begin{equation}\label{eq:appendixF-unwrap-strip}
-\frac{\pi}{\Delta}
\le
\Re\!\left(
\widetilde\omega+\frac{2\pi}{\Delta}k_\star-\omega^\sharp
\right)
<
\frac{\pi}{\Delta}.
\end{equation}
\end{definition}

The existence and uniqueness of \(k_\star\) are immediate because the half-open intervals
\[
\left[-\frac{\pi}{\Delta}+\frac{2\pi k}{\Delta},\frac{\pi}{\Delta}+\frac{2\pi k}{\Delta}\right),
\qquad k\in\mathbb Z,
\]
form a partition of \(\mathbb R\).

\begin{proposition}[Equivalence of ratio branching and strip unwrapping]\label{prop:appendixF-unwrap-equivalence}
Let \(\widehat z\in\mathbb C\setminus\{0\}\) and assume that \(\widehat z/z^\sharp\notin(-\infty,0]\). Then
\begin{equation}\label{eq:appendixF-unwrap-equivalence}
\unwrapop_{\Delta,\omega^\sharp}\bigl(\wpr(\widehat z)\bigr)
=
\omega_{\Delta}^{(\omega^\sharp)}(\widehat z).
\end{equation}
\end{proposition}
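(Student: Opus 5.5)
The argument rests on the uniqueness part of Proposition~\ref{prop:appendixF-prior-characterization}. The plan is to show that the left-hand side of \eqref{eq:appendixF-unwrap-equivalence} is a representative of \(\widehat z\) lying in the shifted strip \(\omega^\sharp+\mathfrak S_\Delta\). Once that is established, it must coincide with \(\omega_{\Delta}^{(\omega^\sharp)}(\widehat z)\), which is the unique such representative.

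First, \(\wpr(\widehat z)\) is a representative of \(\widehat z\) by Proposition~\ref{prop:appendixF-alias}, since \(\widehat z\neq0\). Here we should take the principal logarithm with the convention \(\Arg\in[-\pi,\pi)\), or \((-\pi,\pi]\), matching the strip used in \eqref{eq:appendixF-principal-strip}. The only point needed is that \(\wpr(\widehat z)\) is some representative, and this holds under either convention.

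Second, by Definition~\ref{def:appendixF-unwrap}, the unwrapped value \(\wpr(\widehat z)+2\pi k_\star/\Delta\) differs from \(\wpr(\widehat z)\) by an element of \(\AliasLat\). By \eqref{eq:appendixF-alias-equivalence} it is therefore again a representative of \(\widehat z\). By construction, condition \eqref{eq:appendixF-unwrap-strip} places it in \(\omega^\sharp+\mathfrak S_\Delta\), which is exactly the compatibility condition \eqref{eq:appendixF-strip-compatibility}.

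Third, the hypothesis \(\widehat z/z^\sharp\notin(-\infty,0]\) is precisely \eqref{eq:appendixF-cut-avoidance}. Hence Proposition~\ref{prop:appendixF-prior-characterization} applies with \(z=\widehat z\), and \eqref{eq:appendixF-prior-equals-physical} gives the claimed equality.

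The main obstacle is a boundary subtlety: the principal logarithm's imaginary part lies in \((-\pi,\pi]\), while the strip is half-open on the other side. I would check that Proposition~\ref{prop:appendixF-prior-characterization} tolerates this. The cut-avoidance hypothesis excludes the boundary case where \(\Re(\omega_\Delta^{(\omega^\sharp)}(\widehat z)-\omega^\sharp)=\pm\pi/\Delta\), so the representative lies strictly inside the strip. Uniqueness within the strip then holds regardless of which endpoint is included. Only the uniqueness argument (real parts differing by less than \(2\pi/\Delta\)) is needed, and it is insensitive to the endpoint convention.
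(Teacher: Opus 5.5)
Your proof is correct and is essentially the paper's argument: the paper reruns the uniqueness-in-a-strip argument for the ratio $\widehat z/z^\sharp$ in the unshifted strip $\mathfrak S_\Delta$, identifying $\omega^{\mathrm{pr}}_\Delta(\widehat z)-\omega^\sharp+\tfrac{2\pi}{\Delta}k_\star$ with $\tfrac{i}{\Delta}\operatorname{Log}(\widehat z/z^\sharp)$, whereas you invoke the same uniqueness, translated by $\omega^\sharp$, as already established in Proposition~\ref{prop:appendixF-prior-characterization}. The endpoint issue you flag is not actually an obstacle: $\Re\bigl(\tfrac{i}{\Delta}\operatorname{Log}w\bigr)=-\operatorname{Arg}(w)/\Delta$ maps the convention $\operatorname{Arg}\in(-\pi,\pi]$ exactly onto the half-open strip $[-\pi/\Delta,\pi/\Delta)$, and under cut avoidance the point is interior anyway.
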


\begin{proof}
Set
\[
\widetilde\omega:=\wpr(\widehat z)=\frac{i}{\Delta}\Log \widehat z.
\]
By Proposition~\ref{prop:appendixF-alias},
\[
e^{-i(\widetilde\omega-\omega^\sharp)\Delta}
=
\frac{e^{-i\widetilde\omega\Delta}}{e^{-i\omega^\sharp\Delta}}
=
\frac{\widehat z}{z^\sharp}.
\]
Thus \(\widetilde\omega-\omega^\sharp\) is one representative of the ratio \(\widehat z/z^\sharp\). Again by Proposition~\ref{prop:appendixF-alias}, all representatives of that ratio are obtained by adding elements of \(\AliasLat\). The principal representative of the ratio is
\[
\frac{i}{\Delta}\Log\!\left(\frac{\widehat z}{z^\sharp}\right),
\]
and by definition it is the unique representative lying in \(\mathfrak S_\Delta\). Therefore there exists a unique integer \(k_\star\) such that
\[
\frac{i}{\Delta}\Log\!\left(\frac{\widehat z}{z^\sharp}\right)
=
\widetilde\omega-\omega^\sharp+\frac{2\pi}{\Delta}k_\star,
\]
where \(k_\star\) is precisely the integer selected by \eqref{eq:appendixF-unwrap-strip}. Rearranging gives
\[
\omega^\sharp+\frac{i}{\Delta}\Log\!\left(\frac{\widehat z}{z^\sharp}\right)
=
\widetilde\omega+\frac{2\pi}{\Delta}k_\star
=
\unwrapop_{\Delta,\omega^\sharp}(\widetilde\omega),
\]
which is \eqref{eq:appendixF-unwrap-equivalence}.
\end{proof}

Proposition~\ref{prop:appendixF-unwrap-equivalence} shows that there are two exactly equivalent ways to implement branch selection. One may either form the prior-centered ratio and apply the principal logarithm once, or one may compute the principal representative \(\wpr(\widehat z)\) and then add the unique lattice correction that places it in the strip centered at the prior. The second form is often more convenient in code. The first form is often more convenient in proofs.

\subsection{Recursive continuation across ringdown windows}

The start-time scans used later require continuation from one window to the next. Once the representative at one window has been fixed, the representative at the next window can be selected by using the previous estimate as the new prior. The following theorem shows that, under a sharp and easily checked condition, this continuation cannot suffer a branch slip.

\begin{theorem}[Recursive continuation without branch slips]\label{thm:appendixF-recursive}
Fix \(\Delta>0\) and let \(n=0,1,\dots,N\) index a sequence of ringdown windows. For each \(n\), let \(z_n\in\mathbb C\setminus\{0\}\) be the true node of a fixed labeled mode and let \(\widehat z_n\in\mathbb C\setminus\{0\}\) be its recovered node. Let \(\omega_n\) be a chosen physical representative of \(z_n\), and let \(\widetilde\omega_n\) be any representative of \(\widehat z_n\); in practice one takes \(\widetilde\omega_n=\wpr(\widehat z_n)\).

Assume that for each \(n\) there exists a representative \(\omega_n^{\mathrm{loc}}\) of \(\widehat z_n\) such that
\begin{equation}\label{eq:appendixF-recursive-local-error}
|\omega_n^{\mathrm{loc}}-\omega_n|\le \epsilon_n
\end{equation}
for some \(\epsilon_n\ge0\). Suppose in addition that
\begin{equation}\label{eq:appendixF-recursive-safe-step}
|\Re(\omega_n-\omega_{n-1})|+\epsilon_{n-1}+\epsilon_n<\frac{\pi}{\Delta}
\qquad \text{for } n=1,\dots,N.
\end{equation}
Define recursively
\begin{equation}\label{eq:appendixF-recursive-estimator}
\widehat\omega_0:=\omega_0^{\mathrm{loc}},
\qquad
\widehat\omega_n:=\unwrapop_{\Delta,\widehat\omega_{n-1}}(\widetilde\omega_n),
\quad n\ge1.
\end{equation}
Then
\begin{equation}\label{eq:appendixF-recursive-conclusion}
\widehat\omega_n=\omega_n^{\mathrm{loc}}
\qquad \text{for all } n=0,1,\dots,N.
\end{equation}
In particular,
\begin{equation}\label{eq:appendixF-recursive-error}
|\widehat\omega_n-\omega_n|\le \epsilon_n
\qquad \text{for all } n=0,1,\dots,N.
\end{equation}
\end{theorem}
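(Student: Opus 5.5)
The plan is to argue by induction on \(n\), using Proposition~\ref{prop:appendixF-prior-characterization} in its unwrapping form, Proposition~\ref{prop:appendixF-unwrap-equivalence}. Equivalently, \(\unwrapop_{\Delta,\omega^\sharp}\) returns the unique representative of the given alias class whose real part lies in \([\Re\omega^\sharp-\pi/\Delta,\Re\omega^\sharp+\pi/\Delta)\). The base case \(n=0\) holds by the definition \(\widehat\omega_0:=\omega_0^{\mathrm{loc}}\). For the inductive step, assume \(\widehat\omega_{n-1}=\omega_{n-1}^{\mathrm{loc}}\).

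Two facts about \(\widetilde\omega_n\) and \(\omega_n^{\mathrm{loc}}\) are needed first. Both are representatives of the same recovered node \(\widehat z_n\), so by Proposition~\ref{prop:appendixF-alias} they differ by an element of \(\AliasLat\). Hence \(\unwrapop_{\Delta,\widehat\omega_{n-1}}(\widetilde\omega_n)=\unwrapop_{\Delta,\widehat\omega_{n-1}}(\omega_n^{\mathrm{loc}})\). This holds because the definition of the unwrapping operator in \eqref{eq:appendixF-unwrap-definition}--\eqref{eq:appendixF-unwrap-strip} depends only on the alias class: shifting the input by \(2\pi k/\Delta\) merely shifts \(k_\star\) by \(-k\). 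It therefore suffices to show that \(\omega_n^{\mathrm{loc}}\) already lies in the strip centred at \(\widehat\omega_{n-1}\), in which case the unwrap chooses \(k_\star=0\).

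The key estimate is a triangle inequality on real parts:
\[
|\Re(\omega_n^{\mathrm{loc}}-\omega_{n-1}^{\mathrm{loc}})|
\le |\Re(\omega_n^{\mathrm{loc}}-\omega_n)|+|\Re(\omega_n-\omega_{n-1})|+|\Re(\omega_{n-1}-\omega_{n-1}^{\mathrm{loc}})|
\le \epsilon_n+|\Re(\omega_n-\omega_{n-1})|+\epsilon_{n-1}.
\]
Here \eqref{eq:appendixF-recursive-local-error} is used together with \(|\Re w|\le|w|\). By the safe-step inequality \eqref{eq:appendixF-recursive-safe-step}, the right-hand side is strictly less than \(\pi/\Delta\). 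So \(\Re(\omega_n^{\mathrm{loc}}-\widehat\omega_{n-1})\in(-\pi/\Delta,\pi/\Delta)\subset[-\pi/\Delta,\pi/\Delta)\), and uniqueness of \(k_\star\) gives \(\widehat\omega_n=\omega_n^{\mathrm{loc}}\). This closes the induction and proves \eqref{eq:appendixF-recursive-conclusion}. The bound \eqref{eq:appendixF-recursive-error} then follows immediately from \eqref{eq:appendixF-recursive-local-error}.

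The main point needing care is the class-invariance of the unwrap, that is, that the result does not depend on which representative \(\widetilde\omega_n\) is fed in. The cleanest route is to verify directly from the partition of \(\mathbb R\) into half-open intervals of length \(2\pi/\Delta\). The strictness of the safe-step inequality also matters: it removes any ambiguity at the half-open endpoint \(-\pi/\Delta\). Finally, no branch-cut hypothesis on \(\widehat z_n/z^\sharp\) is needed if one works with the unwrapping definition directly, rather than with the ratio form of Definition~\ref{def:appendixF-prior-selected}.
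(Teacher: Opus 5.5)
Your proof is correct and follows the paper's argument essentially step for step. Both use induction on \(n\), bound the real part of \(\omega_n^{\mathrm{loc}}-\widehat\omega_{n-1}\) by the triangle inequality and the safe-step condition to place \(\omega_n^{\mathrm{loc}}\) in the strip centred at \(\widehat\omega_{n-1}\), and note that \(\omega_n^{\mathrm{loc}}\) and \(\widetilde\omega_n\) differ by a lattice element, so the unwrap returns \(\omega_n^{\mathrm{loc}}\). As you observe at the end, the argument needs only Definition~\ref{def:appendixF-unwrap} directly, which is also what the paper's proof uses, so your opening appeal to Proposition~\ref{prop:appendixF-unwrap-equivalence} and its branch-cut hypothesis can simply be dropped.
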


\begin{proof}
We argue by induction on \(n\). The claim is true for \(n=0\) by definition.

Assume now that \(\widehat\omega_{n-1}=\omega_{n-1}^{\mathrm{loc}}\) has been proved for some \(n\ge1\). Then
\[
\bigl|\Re(\omega_n^{\mathrm{loc}}-\widehat\omega_{n-1})\bigr|
\le
|\Re(\omega_n-\omega_{n-1})|
+
|\Re(\omega_n^{\mathrm{loc}}-\omega_n)|
+
|\Re(\widehat\omega_{n-1}-\omega_{n-1})|.
\]
Using the induction hypothesis and \eqref{eq:appendixF-recursive-local-error},
\[
\bigl|\Re(\omega_n^{\mathrm{loc}}-\widehat\omega_{n-1})\bigr|
\le
|\Re(\omega_n-\omega_{n-1})|+\epsilon_n+\epsilon_{n-1}
<
\frac{\pi}{\Delta}
\]
by \eqref{eq:appendixF-recursive-safe-step}. Hence \(\omega_n^{\mathrm{loc}}\) lies in the shifted principal strip
\[
\widehat\omega_{n-1}+\mathfrak S_\Delta.
\]

Now \(\omega_n^{\mathrm{loc}}\) and \(\widetilde\omega_n\) are both representatives of \(\widehat z_n\). By Proposition~\ref{prop:appendixF-alias},
\[
\omega_n^{\mathrm{loc}}-\widetilde\omega_n\in\AliasLat.
\]
Therefore \(\omega_n^{\mathrm{loc}}\) is one of the candidates produced by adding lattice corrections to \(\widetilde\omega_n\). Because it lies in \(\widehat\omega_{n-1}+\mathfrak S_\Delta\), Definition~\ref{def:appendixF-unwrap} shows that
\[
\unwrapop_{\Delta,\widehat\omega_{n-1}}(\widetilde\omega_n)=\omega_n^{\mathrm{loc}}.
\]
Thus \(\widehat\omega_n=\omega_n^{\mathrm{loc}}\), completing the induction and proving \eqref{eq:appendixF-recursive-conclusion}. Estimate \eqref{eq:appendixF-recursive-error} is then immediate from \eqref{eq:appendixF-recursive-local-error}.
\end{proof}

Theorem~\ref{thm:appendixF-recursive} is the statement that makes the later start-time drift plots meaningful. If the safe-step condition \eqref{eq:appendixF-recursive-safe-step} holds, then a window-to-window scan cannot jump by a lattice element without violating the theorem's hypotheses. Any observed mid-window drift is therefore attributable to actual estimator behavior or model mismatch, not to a silent branch slip.

\begin{corollary}[Event-local Kerr prior criterion]\label{cor:appendixF-kerr-prior}
Fix a mode index \(j\in\modeJ\), a compact guide set \(\mathcal U\subset\Kdet\), and a reference point \(p^\sharp\in\mathcal U\). Define
\begin{equation}\label{eq:appendixF-kerr-range}
R_j(p^\sharp;\mathcal U)
:=
\sup_{p\in\mathcal U}
\left|
\Re\omega_j(p)-\Re\omega_j(p^\sharp)
\right|.
\end{equation}
Assume that every recovered representative of mode \(j\) on \(\mathcal U\) obeys the deterministic or empirical error bound
\begin{equation}\label{eq:appendixF-kerr-local-error}
|\omega_j^{\mathrm{loc}}(p)-\omega_j(p)|\le \epsilon_j
\qquad \text{for all } p\in\mathcal U.
\end{equation}
If
\begin{equation}\label{eq:appendixF-kerr-safe}
R_j(p^\sharp;\mathcal U)+\epsilon_j<\frac{\pi}{\Delta},
\end{equation}
then, for every \(p\in\mathcal U\), unwrapping the recovered representative around the prior \(\omega_j(p^\sharp)\) returns the physical representative \(\omega_j^{\mathrm{loc}}(p)\) and no alias ambiguity remains.
\end{corollary}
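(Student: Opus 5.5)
The plan is to reduce Corollary~\ref{cor:appendixF-kerr-prior} to Proposition~\ref{prop:appendixF-prior-characterization}, using the static prior \(\omega^\sharp=\omega_j(p^\sharp)\) in place of a recursive chain. Fix \(p\in\mathcal U\). Let \(\widehat z\) be the recovered node of mode \(j\), so \(\omega_j^{\mathrm{loc}}(p)\) is a representative of \(\widehat z\). Let \(\widetilde\omega\) be any representative of \(\widehat z\), for instance \(\wpr(\widehat z)\). By Proposition~\ref{prop:appendixF-alias}, \(\omega_j^{\mathrm{loc}}(p)\) and \(\widetilde\omega\) differ by an element of \(\AliasLat\). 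Definition~\ref{def:appendixF-unwrap} therefore returns \(\omega_j^{\mathrm{loc}}(p)\) as soon as \(\omega_j^{\mathrm{loc}}(p)\) lies in the shifted strip \(\omega_j(p^\sharp)+\mathfrak S_\Delta\). This holds because the unwrapping integer \(k_\star\) is unique, since the half-open translates partition \(\mathbb R\).

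The key step is the strip estimate. By the triangle inequality and \(|\Re w|\le|w|\),
\[
\bigl|\Re\bigl(\omega_j^{\mathrm{loc}}(p)-\omega_j(p^\sharp)\bigr)\bigr|
\le
\bigl|\Re\omega_j(p)-\Re\omega_j(p^\sharp)\bigr|
+\bigl|\omega_j^{\mathrm{loc}}(p)-\omega_j(p)\bigr|
\le R_j(p^\sharp;\mathcal U)+\epsilon_j<\frac{\pi}{\Delta}.
\]
The middle inequality uses \eqref{eq:appendixF-kerr-range} and \eqref{eq:appendixF-kerr-local-error}, and the last uses \eqref{eq:appendixF-kerr-safe}. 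The real-part offset thus lies in the open interval \((-\pi/\Delta,\pi/\Delta)\), which is contained in the half-open strip. Uniqueness then gives \(\unwrapop_{\Delta,\omega_j(p^\sharp)}(\widetilde\omega)=\omega_j^{\mathrm{loc}}(p)\).

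It remains to make precise that no alias ambiguity is left. Every other representative \(\omega_j^{\mathrm{loc}}(p)+2\pi k/\Delta\) with \(k\neq0\) has real-part offset of modulus greater than \(2\pi/\Delta-\pi/\Delta=\pi/\Delta\), so it falls outside the strip. The reported value is therefore independent of which initial representative \(\widetilde\omega\) is used.

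If the ratio form of the branch is wanted instead, Proposition~\ref{prop:appendixF-unwrap-equivalence} identifies it with \(\omega_\Delta^{(\omega_j(p^\sharp))}(\widehat z)\). This needs the cut-avoidance condition \(\widehat z/z^\sharp\notin(-\infty,0]\), which follows from the strict inequality: the argument of the ratio equals \(-\Delta\) times the real-part offset, and that offset has modulus less than \(\pi/\Delta\), so the argument lies in \((-\pi,\pi)\). The only real subtlety is this endpoint \(\pm\pi/\Delta\) of the half-open strip, and strictness of \eqref{eq:appendixF-kerr-safe} removes it. I expect no other obstacle.
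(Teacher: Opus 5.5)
Your proposal is correct and follows the paper's argument. The same triangle-inequality estimate places \(\omega_j^{\mathrm{loc}}(p)\) in the shifted strip centered at \(\omega_j(p^\sharp)\), and uniqueness of the representative in that strip finishes the proof. The paper invokes Proposition~\ref{prop:appendixF-prior-characterization} for the last step without checking its cut-avoidance hypothesis; your direct appeal to Definition~\ref{def:appendixF-unwrap}, together with the explicit cut-avoidance check, is slightly more complete.
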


\begin{proof}
For \(p\in\mathcal U\),
\[
\bigl|\Re(\omega_j^{\mathrm{loc}}(p)-\omega_j(p^\sharp))\bigr|
\le
\bigl|\Re(\omega_j(p)-\omega_j(p^\sharp))\bigr|
+
\bigl|\Re(\omega_j^{\mathrm{loc}}(p)-\omega_j(p))\bigr|
\le
R_j(p^\sharp;\mathcal U)+\epsilon_j
<
\frac{\pi}{\Delta}.
\]
Thus \(\omega_j^{\mathrm{loc}}(p)\) lies in the shifted principal strip centered at \(\omega_j(p^\sharp)\). Since it is a representative of the recovered node, Proposition~\ref{prop:appendixF-prior-characterization} gives the claim.
\end{proof}

Corollary~\ref{cor:appendixF-kerr-prior} is the initialization rule used later. Appendix~\ref{app:kerr-qnm-data} provides numerically interpolants with error control for the maps \(p\mapsto \omega_j(p)\). Hence the range \(R_j(p^\sharp;\mathcal U)\) can itself be on the chosen event-local box. Once the first window is initialized in this way, Theorem~\ref{thm:appendixF-recursive} takes over and propagates the same branch along the window sequence.

\begin{remark}[When a window is branch-ambiguous]\label{rem:appendixF-ambiguous}
The hypotheses of Proposition~\ref{prop:appendixF-prior-Lipschitz} and Theorem~\ref{thm:appendixF-recursive} are meant to be used as acceptance tests, not as hidden assumptions. If a recovered node or representative fails the relevant prior-disk condition or the safe-step inequality, the window is marked as branch ambiguous and excluded from the trust region. This is the failure mode that later enters the trust-region definition.
\end{remark}

Once a mode has been labeled, branch selection reduces to the unique alias representative inside a prior-centered strip, provided the explicit sufficient conditions derived above hold. Under those conditions the chosen representative is stable and propagates consistently across the ringdown-window scan.

\providecommand{\sepz}{\delta_z}
\providecommand{\amin}{a_{\min}}
\providecommand{\amax}{a_{\max}}
\providecommand{\Rnode}{R_z}
\providecommand{\Yzero}{Y_0}
\providecommand{\Yone}{Y_1}
\providecommand{\Zmat}{Z}

\section{Prony and matrix-pencil conditioning near coalescence}\label{app:prony-conditioning}

Appendix~\ref{app:abstract-frequency-extraction} already identifies the two quantities that control deterministic extraction: the inverse Hankel norm and the root factor \(\Gamma_\nu^{-1}=|Q'(z_\nu)|^{-1}\). What that appendix does not yet say explicitly is how fast those quantities deteriorate when two nodes approach each other. This is the instability behind high-overtone correlations, near-degenerate fits, and start-time windows in which a nominally two-mode model becomes numerically fragile. This appendix makes that mechanism completely explicit in the simplest nontrivial case.

There are two distinct messages. The first is nonasymptotic and rigorous. If one reconstructs the nodes through the monic annihilating polynomial, then the deterministic radius obtained from the coefficient route grows like \(\sepz^{-3}\), where
\[
\sepz:=|z_1-z_2|
\]
is the node separation. The second is local and differential. If one studies the same two-node problem through the structured matrix pencil itself, then the simple-eigenvalue condition number grows like \(\sepz^{-2}\). The difference is real rather than cosmetic. The coefficient route is the robust bound needed for trust-region exclusion rules. The matrix-pencil derivative describes the leading-order behavior of a structured implementation once a branch has already been fixed. Both viewpoints are classical in Prony-type analysis, Vandermonde inversion, and matrix-pencil perturbation theory \cite{Gautschi1962Vandermonde,HuaSarkarMPM1990,RoyKailath1989ESPRIT,PottsTasche2010APM,BatenkovYomdin2013Prony}. This appendix does not introduce a new algorithm. It records the exact separation exponents that later guide the interpretation of spectroscopy windows.

\subsection{Two-node Hankel and pencil factorizations}

Consider the noiseless two-node model
\begin{equation}\label{eq:appendixG-two-node-model}
 y_q=a_1 z_1^q+a_2 z_2^q,
 \qquad q=0,1,2,3,
\end{equation}
with
\begin{equation}\label{eq:appendixG-nondegenerate-data}
 a_1a_2\neq 0,
 \qquad z_1\neq z_2.
\end{equation}
In the ringdown application one has \(z_j=e^{-i\omega_j\Delta}\) with \(\Im\omega_j<0\), hence \(|z_j|<1\). For the algebra below we keep a general radius parameter
\begin{equation}\label{eq:appendixG-radius-def}
 \Rnode:=\max\{|z_1|,|z_2|\}
\end{equation}
and the amplitude scale
\begin{equation}\label{eq:appendixG-amin-def}
 \amin:=\min\{|a_1|,|a_2|\},
 \qquad
 \amax:=\max\{|a_1|,|a_2|\}.
\end{equation}
The two basic matrices are the order-two Hankel block and its one-step shift,
\begin{equation}\label{eq:appendixG-pencils}
 \Yzero:=\begin{pmatrix} y_0 & y_1\\ y_1 & y_2\end{pmatrix},
 \qquad
 \Yone:=\begin{pmatrix} y_1 & y_2\\ y_2 & y_3\end{pmatrix}.
\end{equation}
We also write
\begin{equation}\label{eq:appendixG-VAZ}
 V:=\begin{pmatrix} 1 & 1\\ z_1 & z_2\end{pmatrix},
 \qquad
 A:=\operatorname{diag}(a_1,a_2),
 \qquad
 \Zmat:=\operatorname{diag}(z_1,z_2).
\end{equation}

\begin{proposition}[Exact factorizations for the two-node model]\label{prop:appendixG-exact-factorizations}
Under \eqref{eq:appendixG-two-node-model} and \eqref{eq:appendixG-nondegenerate-data},
\begin{equation}\label{eq:appendixG-Y0-factorization}
 \Yzero=VAV^\top,
 \qquad
 \Yone=VA\Zmat V^\top.
\end{equation}
Consequently,
\begin{equation}\label{eq:appendixG-detY0}
 \det \Yzero=a_1a_2(z_1-z_2)^2,
\end{equation}
and the matrix pencil is diagonalizable in the exact form
\begin{equation}\label{eq:appendixG-pencil-diagonalization}
 M:=\Yzero^{-1}\Yone=V^{-\top}\Zmat V^\top.
\end{equation}
If
\begin{equation}\label{eq:appendixG-annihilating-poly}
 Q(\zeta)=\zeta^2+c_1\zeta+c_0=(\zeta-z_1)(\zeta-z_2),
\end{equation}
then
\begin{equation}\label{eq:appendixG-c0c1}
 c_0=z_1z_2,
 \qquad
 c_1=-(z_1+z_2),
\end{equation}
and the four samples satisfy the recurrence
\begin{equation}\label{eq:appendixG-recurrence}
 y_{q+2}+c_1 y_{q+1}+c_0 y_q=0,
 \qquad q=0,1.
\end{equation}
Equivalently, with
\begin{equation}\label{eq:appendixG-h-def}
 h:=\begin{pmatrix} y_2\\ y_3\end{pmatrix},
 \qquad
 c:=\begin{pmatrix} c_0\\ c_1\end{pmatrix},
\end{equation}
we have
\begin{equation}\label{eq:appendixG-hankel-system}
 \Yzero c=-h.
\end{equation}
\end{proposition}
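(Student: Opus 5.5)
The plan is to verify everything by direct entrywise computation, mirroring the proof of Proposition~\ref{prop:sec4-exact-structure} specialized to \(m=2\), with one extra identity for the shifted Hankel block \(\Yone\).

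\textbf{Step 1: the two factorizations.} For \(\Yzero\), the \((r,s)\) entry of \(VAV^\top\) (indices \(r,s\in\{0,1\}\)) is \(\sum_{\nu}z_\nu^r a_\nu z_\nu^s=y_{r+s}\), which is exactly the entry of \(\Yzero\). For \(\Yone\), the same computation with the extra diagonal factor gives \(\sum_\nu z_\nu^r a_\nu z_\nu z_\nu^s=y_{r+s+1}\), which is the entry of \(\Yone\). This proves \eqref{eq:appendixG-Y0-factorization}.

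\textbf{Step 2: determinant and pencil.} Since \(\det V=z_2-z_1\), multiplicativity gives \(\det\Yzero=a_1a_2(z_2-z_1)^2\), which equals \eqref{eq:appendixG-detY0}. By \eqref{eq:appendixG-nondegenerate-data} this is nonzero, so \(V\), \(A\) and \(\Yzero\) are all invertible. Then
\[
\Yzero^{-1}\Yone=V^{-\top}A^{-1}V^{-1}VA\Zmat V^\top=V^{-\top}A^{-1}A\Zmat V^\top=V^{-\top}\Zmat V^\top,
\]
where the middle step uses that the diagonal matrices \(A\) and \(\Zmat\) commute, so \(A^{-1}A\Zmat=\Zmat\). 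This gives \eqref{eq:appendixG-pencil-diagonalization}, and the diagonalizability of the pencil follows because \(\Zmat\) is diagonal with distinct entries.

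\textbf{Step 3: coefficients and recurrence.} Expanding \((\zeta-z_1)(\zeta-z_2)\) gives \eqref{eq:appendixG-c0c1}. Because \(Q(z_\nu)=0\), multiplying \(z_\nu^2+c_1z_\nu+c_0=0\) by \(a_\nu z_\nu^q\) and summing over \(\nu\) yields \eqref{eq:appendixG-recurrence} for \(q=0,1\). The case \(q=0\) gives the first row of \eqref{eq:appendixG-hankel-system}, namely \(y_0c_0+y_1c_1=-y_2\), and the case \(q=1\) gives the second row, \(y_1c_0+y_2c_1=-y_3\). Alternatively, this is Proposition~\ref{prop:sec4-exact-structure} with \(m=2\).

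No step is a real obstacle. The only points needing care are the sign convention in \(\det V\), which squares away, and the commutation of \(A\) with \(\Zmat\) in Step 2.
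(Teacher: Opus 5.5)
Your proof is correct and follows the paper's argument essentially step for step: entrywise verification of both factorizations, $\det Y_0=(\det V)^2\det A$, direct cancellation in $(VAV^\top)^{-1}(VAZV^\top)$ for the pencil, expansion of $Q$, and summing $a_\nu z_\nu^q\,Q(z_\nu)=0$ over $\nu$ for the recurrence and the Hankel system. One small remark: the cancellation $A^{-1}AZ=Z$ is just associativity, so the commutation of $A$ and $Z$ is not needed. Likewise, diagonalizability of $M$ is immediate from the similarity $M=V^{-\top}ZV^\top$; distinctness of the nodes is used only to make $V$, and hence $Y_0$, invertible.
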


\begin{proof}
The \((r,s)\) entry of \(VAV^\top\) equals
\[
\sum_{\nu=1}^2 z_\nu^r a_\nu z_\nu^s
=
\sum_{\nu=1}^2 a_\nu z_\nu^{r+s}
=
 y_{r+s},
\]
for \(r,s\in\{0,1\}\), which gives \eqref{eq:appendixG-Y0-factorization} for \(\Yzero\). The same computation with one extra factor of \(\Zmat\) gives \(\Yone=VA\Zmat V^\top\).

Since \(\det V=z_2-z_1\), formula \eqref{eq:appendixG-detY0} follows at once:
\[
\det \Yzero=(\det V)^2\det A=a_1a_2(z_2-z_1)^2.
\]
Because \eqref{eq:appendixG-nondegenerate-data} implies \(\det \Yzero\neq 0\), the inverse exists, and
\[
M=\Yzero^{-1}\Yone=(VAV^\top)^{-1}(VA\Zmat V^\top)=V^{-\top}\Zmat V^\top,
\]
which is \eqref{eq:appendixG-pencil-diagonalization}.

The factorization
\[
Q(\zeta)=(\zeta-z_1)(\zeta-z_2)=\zeta^2-(z_1+z_2)\zeta+z_1z_2
\]
gives \eqref{eq:appendixG-c0c1}. Multiplying \(Q(z_\nu)=0\) by \(a_\nu z_\nu^q\) and summing over \(\nu=1,2\) yields
\[
\sum_{\nu=1}^2 a_\nu z_\nu^{q+2}
+c_1\sum_{\nu=1}^2 a_\nu z_\nu^{q+1}
+c_0\sum_{\nu=1}^2 a_\nu z_\nu^q=0,
\]
which is exactly \eqref{eq:appendixG-recurrence}. Writing the two relations for \(q=0\) and \(q=1\) in vector form gives \eqref{eq:appendixG-hankel-system}.
\end{proof}

The exact factorization already exhibits the two dangerous denominators. Solving the coefficient system involves \(\Yzero^{-1}\), and mode labeling through the polynomial involves
\[
Q'(z_1)=z_1-z_2,
\qquad
Q'(z_2)=z_2-z_1.
\]
Thus coalescence appears once at the Hankel level and once again at the root-selection level.

\subsection{Explicit separation bounds}

The next lemma makes the Hankel loss completely explicit. We keep the notation
\begin{equation}\label{eq:appendixG-sepz-def}
\sepz:=|z_1-z_2|.
\end{equation}

\begin{lemma}[Vandermonde and Hankel bounds]\label{lem:appendixG-vandermonde-bounds}
Under \eqref{eq:appendixG-two-node-model} and \eqref{eq:appendixG-nondegenerate-data},
\begin{equation}\label{eq:appendixG-V-bound}
 \|V\|_2\le \sqrt{2(1+\Rnode^2)},
\end{equation}

a closed formula for the inverse is
\begin{equation}\label{eq:appendixG-Vinv-formula}
 V^{-1}=\frac{1}{z_2-z_1}
 \begin{pmatrix}
 z_2 & -1\\
 -z_1 & 1
 \end{pmatrix},
\end{equation}
and therefore
\begin{equation}\label{eq:appendixG-Vinv-bound}
 \|V^{-1}\|_2\le \frac{\sqrt{2(1+\Rnode^2)}}{\sepz},
 \qquad
 \kappa_2(V)\le \frac{2(1+\Rnode^2)}{\sepz}.
\end{equation}
Finally,
\begin{equation}\label{eq:appendixG-Y0inv-bound}
 \|\Yzero^{-1}\|_2
 \le
 \frac{2(1+\Rnode^2)}{\amin\sepz^2}.
\end{equation}
\end{lemma}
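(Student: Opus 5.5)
The plan is to bound every quantity entrywise through the Frobenius norm, using the explicit $2\times 2$ structure, and then to chain the bounds through the factorization of Proposition~\ref{prop:appendixG-exact-factorizations}.

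\emph{Step 1: the bound on $\|V\|_2$.} We use $\|V\|_2\le\|V\|_F$. The squared entries of $V$ sum to $\|V\|_F^2=2+|z_1|^2+|z_2|^2$. By the definition \eqref{eq:appendixG-radius-def} of $\Rnode$, this is at most $2+2\Rnode^2=2(1+\Rnode^2)$, which gives \eqref{eq:appendixG-V-bound}.

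\emph{Step 2: the inverse formula and its norm.} The inverse formula \eqref{eq:appendixG-Vinv-formula} is the adjugate formula for a $2\times 2$ matrix, using $\det V=z_2-z_1$. One checks directly that $VV^{-1}=I$. The matrix $\begin{pmatrix} z_2&-1\\-z_1&1\end{pmatrix}$ has the same multiset of entry moduli as $V$, so it has the same Frobenius norm. Dividing by $|z_2-z_1|=\sepz$ therefore gives $\|V^{-1}\|_2\le\sqrt{2(1+\Rnode^2)}/\sepz$. Multiplying this by the bound from Step 1 gives the condition-number bound in \eqref{eq:appendixG-Vinv-bound}.

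\emph{Step 3: the bound on $\|\Yzero^{-1}\|_2$.} From $\Yzero=VAV^\top$ we get
\[
\Yzero^{-1}=V^{-\top}A^{-1}V^{-1}.
\]
Since $A$ is diagonal, $\|A^{-1}\|_2=1/\amin$. Transposition preserves the spectral norm, so $\|V^{-\top}\|_2=\|V^{-1}\|_2$. Submultiplicativity then yields
\[
\|\Yzero^{-1}\|_2\le\frac{\|V^{-1}\|_2^2}{\amin}\le\frac{2(1+\Rnode^2)}{\amin\sepz^2},
\]
which is \eqref{eq:appendixG-Y0inv-bound}.

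All three steps are routine. The only point needing care is to keep the non-conjugate transpose $V^\top$ rather than the adjoint, since the factorization is complex-symmetric. This causes no difficulty, because $\|V^\top\|_2=\|V\|_2$ holds for the plain transpose as well.
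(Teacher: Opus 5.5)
Your proposal is correct and follows the paper's proof essentially step for step. The paper likewise bounds $\|V\|_2$ and $\|V^{-1}\|_2$ through Frobenius norms of $V$ and of the explicit adjugate-form inverse. It then obtains the Hankel bound from $Y_0^{-1}=V^{-\top}A^{-1}V^{-1}$ together with $\|A^{-1}\|_2=1/a_{\min}$ and submultiplicativity.
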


\begin{proof}
The Frobenius norm bound gives
\[
\|V\|_2\le \|V\|_{\mathrm F}
=\bigl(2+|z_1|^2+|z_2|^2\bigr)^{1/2}
\le \sqrt{2(1+\Rnode^2)},
\]
which is \eqref{eq:appendixG-V-bound}.

The explicit inverse \eqref{eq:appendixG-Vinv-formula} is the usual \(2\times2\) formula. Its Frobenius norm satisfies
\[
\left\|\begin{pmatrix}
 z_2 & -1\\
 -z_1 & 1
 \end{pmatrix}\right\|_{\mathrm F}^2
=|z_2|^2+1+|z_1|^2+1
\le 2(1+\Rnode^2),
\]
so
\[
\|V^{-1}\|_2\le \|V^{-1}\|_{\mathrm F}
\le \frac{\sqrt{2(1+\Rnode^2)}}{\sepz}.
\]
Combining this with \eqref{eq:appendixG-V-bound} yields the condition-number estimate in \eqref{eq:appendixG-Vinv-bound}.

Now use Proposition~\ref{prop:appendixG-exact-factorizations}. Since
\[
\Yzero^{-1}=V^{-\top}A^{-1}V^{-1},
\]
we obtain
\[
\|\Yzero^{-1}\|_2
\le \|V^{-1}\|_2^2\|A^{-1}\|_2
\le \frac{2(1+\Rnode^2)}{\sepz^2}\cdot \frac{1}{\amin},
\]
which is \eqref{eq:appendixG-Y0inv-bound}.
\end{proof}

Lemma~\ref{lem:appendixG-vandermonde-bounds} is the first half of the near-coalescence story. It says that the coefficient system becomes quadratically ill-conditioned as the two nodes collide. The second half is the loss incurred when one turns coefficients into labeled roots.

\subsection{A quantitative Prony bound}

We now perturb the four samples. Let
\begin{equation}\label{eq:appendixG-perturbed-samples}
 \widetilde y_q=y_q+e_q,
 \qquad q=0,1,2,3,
\end{equation}
with
\begin{equation}\label{eq:appendixG-eta-def}
 |e_q|\le \eta.
\end{equation}
Define
\begin{equation}\label{eq:appendixG-tilde-objects}
 \widetilde \Yzero:=\begin{pmatrix} \widetilde y_0 & \widetilde y_1\\ \widetilde y_1 & \widetilde y_2\end{pmatrix},
 \qquad
 \widetilde h:=\begin{pmatrix} \widetilde y_2\\ \widetilde y_3\end{pmatrix},
\end{equation}
and let
\begin{equation}\label{eq:appendixG-tilde-c}
 \widetilde c:=-\widetilde \Yzero^{-1}\widetilde h
\end{equation}
whenever \(\widetilde\Yzero\) is invertible. The observed Prony polynomial is then
\begin{equation}\label{eq:appendixG-tildeQ}
 \widetilde Q(\zeta)=\zeta^2+\widetilde c_1\zeta+\widetilde c_0.
\end{equation}

\begin{lemma}[Coefficient perturbation in the two-node case]\label{lem:appendixG-coefficient-perturbation}
Assume
\begin{equation}\label{eq:appendixG-smallness1}
 \eta\le \frac{\amin\sepz^2}{8(1+\Rnode^2)}.
\end{equation}
Then \(\widetilde\Yzero\) is invertible and
\begin{equation}\label{eq:appendixG-deltac-bound}
 \|\widetilde c-c\|_2
 \le
 \frac{C_{\mathrm{coef}}(\Rnode)}{\amin\sepz^2}\,\eta,
\end{equation}
where
\begin{equation}\label{eq:appendixG-Ccoef}
 C_{\mathrm{coef}}(\Rnode):=4(1+\Rnode^2)\bigl(\sqrt2+2\Rnode(\Rnode+2)\bigr).
\end{equation}
\end{lemma}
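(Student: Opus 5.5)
The plan is to specialize the argument of Lemma~\ref{lem:appendixE-coefficient-perturbation} to \(m=2\), using the explicit inverse bound \eqref{eq:appendixG-Y0inv-bound} from Lemma~\ref{lem:appendixG-vandermonde-bounds} in place of the abstract \(\|H^{-1}\|_2\).

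\textbf{Step 1: perturbation sizes.} Set \(\Delta\Yzero:=\widetilde\Yzero-\Yzero\) and \(\Delta h:=\widetilde h-h\). Entrywise, \eqref{eq:appendixG-eta-def} gives
\[
\|\Delta\Yzero\|_2\le\|\Delta\Yzero\|_{\mathrm F}\le 2\eta,
\qquad
\|\Delta h\|_2\le\sqrt2\,\eta .
\]

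\textbf{Step 2: invertibility.} By \eqref{eq:appendixG-Y0inv-bound},
\[
\|\Yzero^{-1}\Delta\Yzero\|_2\le \frac{4(1+\Rnode^2)}{\amin\sepz^2}\,\eta .
\]
Under the smallness hypothesis \eqref{eq:appendixG-smallness1} the right-hand side is at most \(1/2\). The Neumann series then shows that \(\widetilde\Yzero=\Yzero(I+\Yzero^{-1}\Delta\Yzero)\) is invertible, with
\[
\|\widetilde\Yzero^{-1}\|_2\le 2\|\Yzero^{-1}\|_2\le \frac{4(1+\Rnode^2)}{\amin\sepz^2}.
\]

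\textbf{Step 3: the coefficient difference.} Subtract \(\Yzero c=-h\) (from \eqref{eq:appendixG-hankel-system}) from \(\widetilde\Yzero\widetilde c=-\widetilde h\). This gives
\[
\widetilde\Yzero(\widetilde c-c)=-\Delta h-\Delta\Yzero\,c,
\]
and hence
\[
\|\widetilde c-c\|_2\le\|\widetilde\Yzero^{-1}\|_2\bigl(\sqrt2\,\eta+2\eta\|c\|_2\bigr).
\]

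\textbf{Step 4: bounding \(\|c\|_2\).} From \eqref{eq:appendixG-c0c1},
\[
\|c\|_2\le|c_0|+|c_1|\le \Rnode^2+2\Rnode=\Rnode(\Rnode+2).
\]
Substituting this and the bound on \(\|\widetilde\Yzero^{-1}\|_2\) from Step~2 yields
\[
\|\widetilde c-c\|_2\le \frac{4(1+\Rnode^2)}{\amin\sepz^2}\bigl(\sqrt2+2\Rnode(\Rnode+2)\bigr)\eta ,
\]
which is exactly \eqref{eq:appendixG-deltac-bound} with the constant \eqref{eq:appendixG-Ccoef}.

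The argument has no real obstacle; it is bookkeeping. The constants close as stated: the factor \(2\) from Step~2 combines with \(2(1+\Rnode^2)\) from \eqref{eq:appendixG-Y0inv-bound} to give \(4(1+\Rnode^2)\), and the factor \(8\) in \eqref{eq:appendixG-smallness1} is precisely what forces the Neumann ratio in Step~2 to be at most \(1/2\).
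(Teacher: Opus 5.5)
Your proof is correct and follows the paper's argument step for step. You use the same entrywise bounds \(\|\Delta Y_0\|_2\le 2\eta\) and \(\|\Delta h\|_2\le\sqrt2\,\eta\), the same Neumann-series estimate \(\|\widetilde Y_0^{-1}\|_2\le 2\|Y_0^{-1}\|_2\), and the same identity \(\widetilde Y_0(\widetilde c-c)=-\Delta h-\Delta Y_0\,c\). The only cosmetic difference is that you bound \(\|c\|_2\) by \(|c_0|+|c_1|\le R_z(R_z+2)\), whereas the paper computes \(\|c\|_2^2\le R_z^4+4R_z^2\) and uses \(R_z\sqrt{R_z^2+4}\le R_z(R_z+2)\); both give the same constant.
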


\begin{proof}
Set
\[
\Delta\Yzero:=\widetilde\Yzero-\Yzero,
\qquad
\Delta h:=\widetilde h-h.
\]
Because each entry of \(\Delta\Yzero\) has modulus at most \(\eta\),
\[
\|\Delta\Yzero\|_2\le \|\Delta\Yzero\|_{\mathrm F}\le 2\eta.
\]
Likewise,
\[
\|\Delta h\|_2\le \sqrt2\,\eta.
\]
By Lemma~\ref{lem:appendixG-vandermonde-bounds} and \eqref{eq:appendixG-smallness1},
\[
\|\Yzero^{-1}\Delta\Yzero\|_2
\le \|\Yzero^{-1}\|_2\|\Delta\Yzero\|_2
\le \frac{2(1+\Rnode^2)}{\amin\sepz^2}\cdot 2\eta
\le \frac12.
\]
Hence \(I+\Yzero^{-1}\Delta\Yzero\) is invertible, so \(\widetilde\Yzero=\Yzero(I+\Yzero^{-1}\Delta\Yzero)\) is invertible and
\begin{equation}\label{eq:appendixG-Y0tildeinv-bound}
 \|\widetilde\Yzero^{-1}\|_2
 \le \frac{\|\Yzero^{-1}\|_2}{1-\|\Yzero^{-1}\Delta\Yzero\|_2}
 \le 2\|\Yzero^{-1}\|_2.
\end{equation}

Since \(\Yzero c=-h\),
\[
\widetilde c-c
=-\widetilde\Yzero^{-1}\widetilde h+\Yzero^{-1}h
=-\widetilde\Yzero^{-1}(\Delta h+\Delta\Yzero\,c).
\]
Taking norms and using \eqref{eq:appendixG-Y0tildeinv-bound},
\begin{equation}\label{eq:appendixG-deltac-prebound}
 \|\widetilde c-c\|_2
 \le 2\|\Yzero^{-1}\|_2\bigl(\sqrt2\,\eta+2\eta\,\|c\|_2\bigr).
\end{equation}
It remains to bound \(\|c\|_2\). From \eqref{eq:appendixG-c0c1},
\[
\|c\|_2^2=|z_1z_2|^2+|z_1+z_2|^2
\le \Rnode^4+4\Rnode^2
=\Rnode^2(\Rnode^2+4),
\]
so
\begin{equation}\label{eq:appendixG-c-bound}
 \|c\|_2\le \Rnode\sqrt{\Rnode^2+4}\le \Rnode(\Rnode+2).
\end{equation}
Substituting \eqref{eq:appendixG-c-bound} and Lemma~\ref{lem:appendixG-vandermonde-bounds} into \eqref{eq:appendixG-deltac-prebound} gives
\[
\|\widetilde c-c\|_2
\le 2\cdot \frac{2(1+\Rnode^2)}{\amin\sepz^2}
\bigl(\sqrt2+2\Rnode(\Rnode+2)\bigr)\eta,
\]
which is exactly \eqref{eq:appendixG-deltac-bound}.
\end{proof}

The next theorem turns the coefficient bound into a labeled root bound. This is the step where the extra factor \(\sepz^{-1}\) appears.

\begin{theorem}[Certified two-node Prony bound]\label{thm:appendixG-prony-conditioning}
Assume \eqref{eq:appendixG-two-node-model}, \eqref{eq:appendixG-nondegenerate-data}, and \eqref{eq:appendixG-eta-def}. Let
\begin{equation}\label{eq:appendixG-Lambda-star}
 \Lambda_\star:=\sqrt{1+\bigl(\Rnode+\sepz/2\bigr)^2}
\end{equation}
and
\begin{equation}\label{eq:appendixG-Cprony}
 C_{\mathrm{Pr}}(\Rnode,\sepz):=4\Lambda_\star\,C_{\mathrm{coef}}(\Rnode).
\end{equation}
If
\begin{equation}\label{eq:appendixG-smallness2}
 \eta<\min\left\{
 \frac{\amin\sepz^2}{8(1+\Rnode^2)},
 \frac{\amin\sepz^4}{8C_{\mathrm{Pr}}(\Rnode,\sepz)}
 \right\},
\end{equation}
then the polynomial \(\widetilde Q\) has exactly one root in each disk
\begin{equation}\label{eq:appendixG-root-disks}
 D\!\left(z_j,\frac{C_{\mathrm{Pr}}(\Rnode,\sepz)}{\amin\sepz^3}\,\eta\right),
 \qquad j=1,2,
\end{equation}
and its roots can therefore be labeled as \(\widetilde z_1,\widetilde z_2\) so that
\begin{equation}\label{eq:appendixG-root-error}
 \max_{j=1,2}|\widetilde z_j-z_j|
 \le
 \frac{C_{\mathrm{Pr}}(\Rnode,\sepz)}{\amin\sepz^3}\,\eta.
\end{equation}
\end{theorem}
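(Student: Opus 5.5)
The proof will be a Rouché argument on small circles around each node. It mirrors Lemma~\ref{lem:appendixE-root-localization} with \(m=2\), now fed by the explicit two-node coefficient bound of Lemma~\ref{lem:appendixG-coefficient-perturbation}. The first branch of \eqref{eq:appendixG-smallness2} is exactly \eqref{eq:appendixG-smallness1}. So Lemma~\ref{lem:appendixG-coefficient-perturbation} applies: \(\widetilde\Yzero\) is invertible, \(\widetilde Q\) is well defined, and
\[
\|\widetilde c-c\|_2\le \frac{C_{\mathrm{coef}}(\Rnode)}{\amin\sepz^2}\,\eta.
\]
Fix \(j\in\{1,2\}\) and set \(r:=C_{\mathrm{Pr}}(\Rnode,\sepz)\eta/(\amin\sepz^3)\). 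The second branch of \eqref{eq:appendixG-smallness2} gives \(r<\sepz/8\), in particular \(r<\sepz/2\). Hence the two disks \(D(z_j,r)\) are disjoint, which will make the labeling unique.

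On the circle \(|\zeta-z_j|=r\) I will bound \(|Q|\) from below and \(|\delta Q|\) from above, where \(\delta Q:=\widetilde Q-Q\).

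\emph{Lower bound.} Write \(Q(\zeta)=(\zeta-z_1)(\zeta-z_2)\). The factor at \(z_j\) has modulus \(r\). The other factor satisfies \(|\zeta-z_{3-j}|\ge \sepz-r>\sepz/2\). Therefore \(|Q(\zeta)|>r\sepz/2\).

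\emph{Upper bound.} Write \(\delta Q(\zeta)=\delta c_0+\delta c_1\zeta\), and note \(|\zeta|\le \Rnode+r\le \Rnode+\sepz/2\). Cauchy--Schwarz then gives
\[
|\delta Q(\zeta)|\le \|\widetilde c-c\|_2\sqrt{1+|\zeta|^2}\le \Lambda_\star\,\frac{C_{\mathrm{coef}}(\Rnode)}{\amin\sepz^2}\,\eta .
\]
Substituting \(C_{\mathrm{Pr}}=4\Lambda_\star C_{\mathrm{coef}}\), the right-hand side equals \(r\sepz/4\), which is strictly less than \(r\sepz/2<|Q(\zeta)|\).

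With \(|\delta Q|<|Q|\) on the circle, Rouché's theorem says \(\widetilde Q\) has exactly as many zeros in \(D(z_j,r)\) as \(Q\), namely one. Call it \(\widetilde z_j\). Since \(\widetilde Q\) has degree two and the disks are disjoint, these are all of its roots. That gives both the labeling and \eqref{eq:appendixG-root-error}.

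The only delicate step is the bookkeeping of the factor \(4\) in \(C_{\mathrm{Pr}}\). It must absorb both the factor \(2\) lost in the lower bound on \(|\zeta-z_{3-j}|\) and the \(2\) needed for a strict inequality in Rouché. I expect this to close as written. If \(r\) is taken sharply as the radius, the lower bound is at least \(r(\sepz-r)\ge 7r\sepz/8\), so there is ample slack.

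The \(\sepz^{-4}\) condition in the second branch is used only to keep \(r\) below \(\sepz/8\); the \(\sepz^{-3}\) rate itself comes from one power of \(\sepz^{-1}\) at the Rouché step on top of the \(\sepz^{-2}\) Hankel loss.

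If the case \(\eta=0\) needs separate mention, it is trivial: then \(\widetilde Q=Q\).
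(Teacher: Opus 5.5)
Your proof is correct and follows the paper's argument: the first branch of \eqref{eq:appendixG-smallness2} triggers Lemma~\ref{lem:appendixG-coefficient-perturbation}, and Rouch\'e's theorem on circles around each node, with \(|Q|>r\sepz/2\) and \(|\delta Q|\le\Lambda_\star\|\widetilde c-c\|_2\), closes thanks to the factor \(4\) in \(C_{\mathrm{Pr}}\). The only difference is cosmetic: the paper runs Rouch\'e on the radius \(r_\star=4\Lambda_\star\|\widetilde c-c\|_2/\sepz\le r\) and then passes to the stated disks, whereas you work directly with the stated radius \(r\). That lands the exactly-one-root claim on the disks of \eqref{eq:appendixG-root-disks} without the extra containment step, and it avoids a degenerate circle when \(\widetilde c=c\) but \(\eta>0\).
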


\begin{proof}
Let
\[
\delta c:=\widetilde c-c,
\qquad
\delta Q(\zeta):=\widetilde Q(\zeta)-Q(\zeta)=\delta c_1\zeta+\delta c_0.
\]
Define
\begin{equation}\label{eq:appendixG-rstar-def}
 r_\star:=\frac{4\Lambda_\star}{\sepz}\,\|\delta c\|_2.
\end{equation}
By Lemma~\ref{lem:appendixG-coefficient-perturbation},
\[
 r_\star
 \le \frac{4\Lambda_\star C_{\mathrm{coef}}(\Rnode)}{\amin\sepz^3}\,\eta
 =\frac{C_{\mathrm{Pr}}(\Rnode,\sepz)}{\amin\sepz^3}\,\eta.
\]
The second inequality in \eqref{eq:appendixG-smallness2} implies \(r_\star<\sepz/2\).

Fix \(j\in\{1,2\}\) and consider the circle \(|\zeta-z_j|=r_\star\). Since \(r_\star<\sepz/2\),
\[
|\zeta-z_{3-j}|
\ge |z_j-z_{3-j}|-|\zeta-z_j|
=\sepz-r_\star
>\frac{\sepz}{2}.
\]
Therefore
\begin{equation}\label{eq:appendixG-Q-lower}
|Q(\zeta)|
=|\zeta-z_j|\,|\zeta-z_{3-j}|
\ge r_\star\frac{\sepz}{2}.
\end{equation}
On the same circle,
\[
|\zeta|\le |z_j|+r_\star\le \Rnode+\frac{\sepz}{2},
\]
so by Cauchy--Schwarz,
\begin{equation}\label{eq:appendixG-deltaQ-upper}
|\delta Q(\zeta)|
\le \|\delta c\|_2\sqrt{1+|\zeta|^2}
\le \Lambda_\star\|\delta c\|_2
=\frac{r_\star\sepz}{4}.
\end{equation}
Comparing \eqref{eq:appendixG-deltaQ-upper} with \eqref{eq:appendixG-Q-lower}, we obtain
\[
|\delta Q(\zeta)|<|Q(\zeta)|
\qquad \text{on } |\zeta-z_j|=r_\star.
\]
Rouch\'e's theorem therefore implies that \(Q\) and \(\widetilde Q\) have the same number of zeros in the disk \(D(z_j,r_\star)\). Since \(Q\) has exactly one zero there, namely \(z_j\), the perturbed polynomial \(\widetilde Q\) also has exactly one zero there. The two disks are disjoint because \(r_\star<\sepz/2\), so the roots are uniquely labeled. The bound \eqref{eq:appendixG-root-error} follows immediately from the definition of the disks.
\end{proof}

\begin{remark}[Why the exponent is three]\label{rem:appendixG-three}
The proof makes the exponent completely transparent. Solving the Hankel system costs \(\sepz^{-2}\) through \(\|\Yzero^{-1}\|_2\). Passing from coefficients to labeled roots costs one more power through \(|Q'(z_j)|^{-1}=\sepz^{-1}\). The resulting Prony radius therefore scales like \(\sepz^{-3}\). This is the exponent relevant for deterministic exclusion tests because it is obtained without any hidden asymptotics and without assuming that the perturbation points in a particularly favorable direction.
\end{remark}

\subsection{The local matrix-pencil condition number}

The same two-node model can be analyzed directly at the level of the structured pencil. The resulting local sensitivity is one power of \(\sepz\) better than the Prony radius above. That improvement does not contradict Theorem~\ref{thm:appendixG-prony-conditioning}; it reflects the fact that the matrix-pencil derivative keeps the nonlinear structure of the generalized eigenvalue problem intact instead of passing through the monic coefficient map.

\begin{proposition}[Left and right eigenvectors of the exact pencil]\label{prop:appendixG-eigenvectors}
For \(j=1,2\), let \(e_j\) denote the \(j\)-th Euclidean basis vector of \(\mathbb C^2\), and define
\begin{equation}\label{eq:appendixG-xjyj}
 x_j:=V^{-\top}e_j,
 \qquad
 y_j^\top:=e_j^\top A^{-1}V^{-1}.
\end{equation}
Then
\begin{equation}\label{eq:appendixG-right-left}
 \Yone x_j=z_j\Yzero x_j,
 \qquad
 y_j^\top \Yone=z_j y_j^\top \Yzero,
\end{equation}
and the normalization
\begin{equation}\label{eq:appendixG-normalization}
 y_j^\top \Yzero x_j=1
\end{equation}
holds. Moreover,
\begin{equation}\label{eq:appendixG-xy-bound}
 \|x_j\|_2\,\|y_j\|_2
 \le
 \frac{2(1+\Rnode^2)}{\amin\sepz^2}.
\end{equation}
\end{proposition}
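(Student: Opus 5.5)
The argument is a direct computation from the exact factorizations in Proposition~\ref{prop:appendixG-exact-factorizations}, namely $\Yzero=VAV^\top$ and $\Yone=VA\Zmat V^\top$.

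\textbf{Right eigenvector relation.} Substituting $x_j=V^{-\top}e_j$ gives $V^\top x_j=e_j$. Hence
\[
\Yone x_j=VA\Zmat e_j=z_jVAe_j,
\qquad
\Yzero x_j=VAe_j,
\]
and therefore $\Yone x_j=z_j\Yzero x_j$.

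\textbf{Left eigenvector relation.} With $y_j^\top=e_j^\top A^{-1}V^{-1}$ we get $y_j^\top V=e_j^\top A^{-1}$. Since $A$ and $\Zmat$ are diagonal, they commute, so
\[
y_j^\top\Yone=e_j^\top A^{-1}A\Zmat V^\top=z_je_j^\top V^\top,
\qquad
y_j^\top\Yzero=e_j^\top A^{-1}AV^\top=e_j^\top V^\top .
\]
These combine to give the second identity in \eqref{eq:appendixG-right-left}.

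\textbf{Normalization.} Using both reductions,
\[
y_j^\top\Yzero x_j=e_j^\top A^{-1}V^{-1}\,VAV^\top\,V^{-\top}e_j=e_j^\top e_j=1,
\]
which is \eqref{eq:appendixG-normalization}.

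\textbf{Norm bound.} This is the only step that needs any care, but it reduces to Lemma~\ref{lem:appendixG-vandermonde-bounds}. First, $\|x_j\|_2\le\|V^{-\top}\|_2=\|V^{-1}\|_2$. Second, $\|y_j\|_2=\|(V^{-1})^\top A^{-1}e_j\|_2\le\|V^{-1}\|_2\,|a_j|^{-1}\le\|V^{-1}\|_2/\amin$. Multiplying the two estimates gives
\[
\|x_j\|_2\|y_j\|_2\le\frac{\|V^{-1}\|_2^2}{\amin}.
\]
Inserting \eqref{eq:appendixG-Vinv-bound}, which gives $\|V^{-1}\|_2^2\le 2(1+\Rnode^2)/\sepz^2$, yields \eqref{eq:appendixG-xy-bound}.

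We do not expect any genuine obstacle here. The one point to watch is that throughout we use the transpose rather than the conjugate transpose, so that the bilinear normalization matches the convention $y_j^\top$ of the statement. This is harmless because $\|M^\top\|_2=\|M\|_2$ for every matrix $M$.
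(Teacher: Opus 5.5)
Your proposal is correct and follows essentially the same route as the paper: both eigen-relations and the normalization are read off directly from the factorizations $\Yzero=VAV^\top$ and $\Yone=VA\Zmat V^\top$, and the norm bound comes from $\|x_j\|_2\|y_j\|_2\le\|V^{-1}\|_2^2/\amin$ combined with Lemma~\ref{lem:appendixG-vandermonde-bounds}. The commutation of $A$ and $\Zmat$ that you invoke is not actually needed, since $A^{-1}A\Zmat=\Zmat$ directly.
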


\begin{proof}
Using Proposition~\ref{prop:appendixG-exact-factorizations},
\[
\Yone x_j=VA\Zmat V^\top V^{-\top}e_j=VA\Zmat e_j=z_j VAe_j,
\]
while
\[
\Yzero x_j=VAV^\top V^{-\top}e_j=VAe_j.
\]
This proves the first identity in \eqref{eq:appendixG-right-left}. The second is similar:
\[
 y_j^\top \Yone=e_j^\top A^{-1}V^{-1}VA\Zmat V^\top=e_j^\top \Zmat V^\top=z_j e_j^\top V^\top,
\]
whereas
\[
 y_j^\top \Yzero=e_j^\top A^{-1}V^{-1}VAV^\top=e_j^\top V^\top.
\]
The normalization is immediate:
\[
 y_j^\top \Yzero x_j=e_j^\top A^{-1}V^{-1}VAV^\top V^{-\top}e_j=e_j^\top e_j=1.
\]
Finally,
\[
\|x_j\|_2\le \|V^{-\top}\|_2=\|V^{-1}\|_2,
\qquad
\|y_j\|_2\le \|A^{-1}\|_2\|V^{-1}\|_2=\frac{\|V^{-1}\|_2}{\amin}.
\]
Applying Lemma~\ref{lem:appendixG-vandermonde-bounds} yields \eqref{eq:appendixG-xy-bound}.
\end{proof}

\begin{theorem}[Local matrix-pencil condition number]\label{thm:appendixG-pencil-local}
Let \(\Delta_0,\Delta_1\in\mathbb C^{2\times2}\) be fixed perturbation directions and consider the analytic pencil
\begin{equation}\label{eq:appendixG-analytic-pencil}
 P_\varepsilon(\lambda):=(\Yone+\varepsilon\Delta_1)-\lambda(\Yzero+\varepsilon\Delta_0).
\end{equation}
For each \(j=1,2\), there is an analytic eigenvalue branch \(\lambda_j(\varepsilon)\) defined for \(|\varepsilon|\) small, with \(\lambda_j(0)=z_j\), and
\begin{equation}\label{eq:appendixG-derivative-formula}
 \lambda_j'(0)=y_j^\top(\Delta_1-z_j\Delta_0)x_j,
\end{equation}
where \(x_j\) and \(y_j\) are given by Proposition~\ref{prop:appendixG-eigenvectors}. Consequently,
\begin{equation}\label{eq:appendixG-pencil-condition-bound}
 |\lambda_j'(0)|
 \le
 \frac{2(1+\Rnode^2)}{\amin\sepz^2}
 \bigl(\|\Delta_1\|_2+|z_j|\,\|\Delta_0\|_2\bigr).
\end{equation}
If the perturbation is induced by sample directions \(d_q\in\mathbb C\) through
\begin{equation}\label{eq:appendixG-sample-directions}
 \Delta_0=\begin{pmatrix} d_0 & d_1\\ d_1 & d_2\end{pmatrix},
 \qquad
 \Delta_1=\begin{pmatrix} d_1 & d_2\\ d_2 & d_3\end{pmatrix},
\end{equation}
then
\begin{equation}\label{eq:appendixG-sample-direction-bound}
 |\lambda_j'(0)|
 \le
 \frac{4(1+\Rnode)(1+\Rnode^2)}{\amin\sepz^2}
 \max_{0\le q\le 3}|d_q|.
\end{equation}
\end{theorem}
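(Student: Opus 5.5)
The plan is to invoke the standard first-order perturbation theory for a simple eigenvalue of a regular matrix pencil, with the exact diagonalization of Proposition~\ref{prop:appendixG-exact-factorizations} and the eigenvector bounds of Proposition~\ref{prop:appendixG-eigenvectors} supplying all the constants.

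\textbf{Existence of analytic branches.} By \eqref{eq:appendixG-detY0}, $\Yzero$ is invertible. Hence $\Yzero+\varepsilon\Delta_0$ stays invertible for $|\varepsilon|$ small, and the eigenvalues of $P_\varepsilon$ are the roots of the quadratic
\[
\det\bigl(\Yone+\varepsilon\Delta_1-\lambda(\Yzero+\varepsilon\Delta_0)\bigr),
\]
whose coefficients are polynomial in $\varepsilon$. At $\varepsilon=0$ these roots are $z_1\neq z_2$ by \eqref{eq:appendixG-pencil-diagonalization}, so both are simple. The analytic implicit function theorem then yields analytic branches $\lambda_j(\varepsilon)$ with $\lambda_j(0)=z_j$.

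\textbf{Derivative formula.} Take an analytic right eigenvector $x_j(\varepsilon)$ with $x_j(0)=x_j$; one option is a column of the adjugate of $P_\varepsilon(\lambda_j(\varepsilon))$, nonzero near $0$ because the eigenvalue is simple. Differentiate
\[
P_\varepsilon(\lambda_j(\varepsilon))\,x_j(\varepsilon)=0
\]
at $\varepsilon=0$ and multiply on the left by $y_j^\top$. The term $y_j^\top(\Yone-z_j\Yzero)\,x_j'(0)$ vanishes by the left-eigenvector identity in \eqref{eq:appendixG-right-left}. What remains is
\[
y_j^\top(\Delta_1-z_j\Delta_0)x_j-\lambda_j'(0)\,y_j^\top\Yzero x_j=0.
\]
The normalization \eqref{eq:appendixG-normalization} then gives \eqref{eq:appendixG-derivative-formula}.

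\textbf{The two bounds.} Bound \eqref{eq:appendixG-pencil-condition-bound} follows from
\[
|\lambda_j'(0)|\le\|y_j\|_2\,\|x_j\|_2\,\bigl(\|\Delta_1\|_2+|z_j|\,\|\Delta_0\|_2\bigr)
\]
together with \eqref{eq:appendixG-xy-bound}. For the sample-direction case \eqref{eq:appendixG-sample-directions}, each $\Delta_i$ is a $2\times2$ matrix with entries bounded by $d:=\max_q|d_q|$. The Frobenius bound gives $\|\Delta_i\|_2\le 2d$, so
\[
\|\Delta_1\|_2+|z_j|\,\|\Delta_0\|_2\le 2(1+\Rnode)\,d.
\]
Substituting this yields \eqref{eq:appendixG-sample-direction-bound} with the constant $4(1+\Rnode)(1+\Rnode^2)$.

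\textbf{Main obstacle.} The only delicate step is the rigorous analyticity of the eigenvector branch, which is needed to justify the differentiation. If choosing a smooth eigenvector proves awkward, I will instead differentiate the characteristic polynomial $p(\varepsilon,\lambda)$ implicitly:
\[
\lambda_j'(0)=-\frac{\partial_\varepsilon p}{\partial_\lambda p}\bigg|_{(0,z_j)}.
\]
Jacobi's formula evaluates both partials through the adjugate of $\Yone-z_j\Yzero$, which has rank one, is proportional to $x_jy_j^\top$ with the scaling fixed by \eqref{eq:appendixG-normalization}, and thus recovers the same expression.
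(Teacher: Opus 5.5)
Your proposal is correct and matches the paper's proof essentially step for step. You use the analytic implicit function theorem on the characteristic determinant to get the simple branches, differentiate $P_\varepsilon(\lambda_j(\varepsilon))x_j(\varepsilon)=0$ and contract with $y_j^\top$ using the left-eigenvector identity and the normalization, and finish with the Cauchy--Schwarz eigenvector-product bound and the Frobenius estimate $\|\Delta_i\|_2\le 2\max_q|d_q|$. Your adjugate-column construction of the eigenvector branch (taking a column that is nonzero at $\varepsilon=0$ and rescaling it by a constant) makes explicit a step the paper only asserts, and your Jacobi-formula fallback is a valid alternative that avoids that step entirely.
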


\begin{proof}
The exact characteristic determinant is
\[
\det(\Yone-\lambda\Yzero)=\det(\Yzero)(z_1-\lambda)(z_2-\lambda)
\]
by Proposition~\ref{prop:appendixG-exact-factorizations}. Since \(z_1\neq z_2\), both eigenvalues are simple. The analytic implicit function theorem applied to \(\det P_\varepsilon(\lambda)=0\) therefore gives analytic branches \(\lambda_j(\varepsilon)\) with \(\lambda_j(0)=z_j\).

Because the generalized eigenvalue is simple, one may choose a differentiable right-eigenvector branch \(x_j(\varepsilon)\) satisfying
\begin{equation}\label{eq:appendixG-eig-branch}
 P_\varepsilon(\lambda_j(\varepsilon))x_j(\varepsilon)=0,
\end{equation}
with \(x_j(0)=x_j\). Differentiate \eqref{eq:appendixG-eig-branch} at \(\varepsilon=0\):
\[
(\Delta_1-z_j\Delta_0)x_j-\lambda_j'(0)\Yzero x_j+(\Yone-z_j\Yzero)x_j'(0)=0.
\]
Left-multiplying by \(y_j^\top\) and using Proposition~\ref{prop:appendixG-eigenvectors},
\[
 y_j^\top(\Yone-z_j\Yzero)=0,
 \qquad
 y_j^\top\Yzero x_j=1,
\]
we obtain \eqref{eq:appendixG-derivative-formula}.

Now apply the Cauchy--Schwarz inequality and Proposition~\ref{prop:appendixG-eigenvectors}:
\[
|\lambda_j'(0)|
\le \|y_j\|_2\|x_j\|_2\bigl(\|\Delta_1\|_2+|z_j|\,\|\Delta_0\|_2\bigr)
\le \frac{2(1+\Rnode^2)}{\amin\sepz^2}\bigl(\|\Delta_1\|_2+|z_j|\,\|\Delta_0\|_2\bigr),
\]
which is \eqref{eq:appendixG-pencil-condition-bound}.

If \eqref{eq:appendixG-sample-directions} holds, then each of \(\Delta_0\) and \(\Delta_1\) has Frobenius norm at most \(2\max_q |d_q|\), hence operator norm at most the same quantity. Since \(|z_j|\le \Rnode\), substituting into \eqref{eq:appendixG-pencil-condition-bound} gives
\[
|\lambda_j'(0)|
\le \frac{2(1+\Rnode^2)}{\amin\sepz^2}\cdot 2(1+\Rnode)\max_q |d_q|,
\]
which is exactly \eqref{eq:appendixG-sample-direction-bound}.
\end{proof}

Theorem~\ref{thm:appendixG-pencil-local} is the local counterpart of Theorem~\ref{thm:appendixG-prony-conditioning}. Its exponent is \(\sepz^{-2}\) rather than \(\sepz^{-3}\). The reason is that the structured pencil derivative does not first pass through the monic coefficient map. It perturbs the generalized eigenvalue problem directly.

\begin{proposition}[A sharp structured direction]\label{prop:appendixG-sharp-direction}
Consider the perturbation that changes only the third sample, namely
\begin{equation}\label{eq:appendixG-y2-direction}
 d_0=d_1=d_3=0,
 \qquad d_2=1.
\end{equation}
Equivalently,
\begin{equation}\label{eq:appendixG-y2-matrices}
 \Delta_0=\begin{pmatrix} 0 & 0\\ 0 & 1\end{pmatrix},
 \qquad
 \Delta_1=\begin{pmatrix} 0 & 1\\ 1 & 0\end{pmatrix}.
\end{equation}
Then the local eigenvalue derivatives are
\begin{equation}\label{eq:appendixG-sharp-derivatives}
 \lambda_1'(0)=-\frac{z_1+2z_2}{a_1(z_1-z_2)^2},
 \qquad
 \lambda_2'(0)=-\frac{2z_1+z_2}{a_2(z_1-z_2)^2}.
\end{equation}
In particular, unless one of the numerators vanishes accidentally, the \(\sepz^{-2}\) exponent in Theorem~\ref{thm:appendixG-pencil-local} is sharp.
\end{proposition}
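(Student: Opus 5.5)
The plan is to apply the derivative formula \eqref{eq:appendixG-derivative-formula} of Theorem~\ref{thm:appendixG-pencil-local} directly. It gives $\lambda_j'(0)=y_j^\top(\Delta_1-z_j\Delta_0)x_j$, with $x_j=V^{-\top}e_j$ and $y_j^\top=e_j^\top A^{-1}V^{-1}$ from Proposition~\ref{prop:appendixG-eigenvectors}. The first step is to verify that \eqref{eq:appendixG-y2-direction} inserted into \eqref{eq:appendixG-sample-directions} yields exactly the matrices \eqref{eq:appendixG-y2-matrices}. This holds because $d_2$ occupies the $(2,2)$ slot of $\Delta_0$ and the off-diagonal slots of $\Delta_1$. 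Everything else is an explicit $2\times 2$ computation using the closed inverse \eqref{eq:appendixG-Vinv-formula}.

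Concretely, \eqref{eq:appendixG-Vinv-formula} gives
\[
V^{-\top}=\frac{1}{z_2-z_1}\begin{pmatrix} z_2 & -z_1\\ -1 & 1\end{pmatrix}.
\]
Hence $x_1=(z_2,-1)^\top/(z_2-z_1)$ and $x_2=(-z_1,1)^\top/(z_2-z_1)$. The first row of $V^{-1}$ is $(z_2,-1)/(z_2-z_1)$ and the second row is $(-z_1,1)/(z_2-z_1)$, so $y_j=x_j/a_j$. In particular the left and right eigenvectors are parallel, which reflects the symmetry of the Hankel pencil. The bilinear form therefore reduces to
\[
\lambda_j'(0)=\frac{u_j^\top M_j u_j}{a_j(z_1-z_2)^2},
\qquad u_1=(z_2,-1)^\top,\quad u_2=(-z_1,1)^\top,
\]
where
\[
M_j:=\Delta_1-z_j\Delta_0=\begin{pmatrix}0&1\\1&-z_j\end{pmatrix}.
\]
For $j=1$ we have $M_1u_1=(-1,z_1+z_2)^\top$, so $u_1^\top M_1u_1=-z_2-(z_1+z_2)=-(z_1+2z_2)$. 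For $j=2$ we have $M_2u_2=(1,-z_1-z_2)^\top$, so $u_2^\top M_2u_2=-z_1-(z_1+z_2)=-(2z_1+z_2)$. This gives \eqref{eq:appendixG-sharp-derivatives}. Note that the form is bilinear, with no complex conjugation, so $(z_2-z_1)^2=(z_1-z_2)^2$ carries no sign ambiguity.

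For sharpness, \eqref{eq:appendixG-sharp-derivatives} gives $|\lambda_1'(0)|=|z_1+2z_2|/(|a_1|\sepz^2)$. Along any coalescing family with $z_1,z_2\to z_0\neq 0$ (nonzero because $z=e^{-i\omega\Delta}$), the numerator tends to $3|z_0|>0$. The sample-direction norm is $\max_q|d_q|=1$ and is fixed. Comparing with \eqref{eq:appendixG-sample-direction-bound} then shows that the ratio of the actual derivative to the upper bound stays bounded below by a positive constant depending only on $|z_0|$, $\Rnode$, and $|a_1|/\amin$. So no better power of $\sepz$ is possible; the same argument applies to $\lambda_2'(0)$. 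There is no real obstacle here. The only care needed is bookkeeping of transposes: the left eigenvector uses $V^{-1}$ while the right one uses $V^{-\top}$. The factor $(z_2-z_1)^{-1}$ from each of the two eigenvectors then combines into $(z_1-z_2)^{-2}$.
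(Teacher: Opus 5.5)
Your proposal is correct and follows essentially the same route as the paper: both insert the explicit eigenvectors $x_j$, $y_j$ obtained from the closed-form $V^{-1}$ into the derivative formula \eqref{eq:appendixG-derivative-formula} and evaluate the resulting $2\times 2$ bilinear forms, arriving at the numerators $-(z_1+2z_2)$ and $-(2z_1+z_2)$ over $a_j(z_1-z_2)^2$. Your observation that $y_j=x_j/a_j$, together with the explicit coalescing-limit comparison against \eqref{eq:appendixG-sample-direction-bound}, only makes the paper's one-line sharpness remark more precise.
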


\begin{proof}
Apply \eqref{eq:appendixG-derivative-formula}. For \(j=1\), the explicit formulas
\[
 x_1=\frac{1}{z_2-z_1}\begin{pmatrix} z_2\\ -1\end{pmatrix},
 \qquad
 y_1^\top=\frac{1}{a_1(z_2-z_1)}\begin{pmatrix} z_2 & -1\end{pmatrix}
\]
follow from \eqref{eq:appendixG-xjyj}. Hence
\[
(\Delta_1-z_1\Delta_0)x_1
=
\frac{1}{z_2-z_1}
\begin{pmatrix}
-1\\ z_2+z_1
\end{pmatrix}.
\]
Taking the dot product with \(y_1\) gives
\[
\lambda_1'(0)
=
\frac{1}{a_1(z_2-z_1)^2}\bigl(-z_2-(z_1+z_2)\bigr)
=-\frac{z_1+2z_2}{a_1(z_1-z_2)^2}.
\]
The computation for \(j=2\) is identical:
\[
 x_2=\frac{1}{z_2-z_1}\begin{pmatrix} -z_1\\ 1\end{pmatrix},
 \qquad
 y_2^\top=\frac{1}{a_2(z_2-z_1)}\begin{pmatrix} -z_1 & 1\end{pmatrix},
\]
and therefore
\[
\lambda_2'(0)
=-\frac{2z_1+z_2}{a_2(z_1-z_2)^2}.
\]
This proves \eqref{eq:appendixG-sharp-derivatives}.
\end{proof}

\begin{remark}[Certified and local exponents serve different roles]\label{rem:appendixG-different-roles}
The Prony exponent \(\sepz^{-3}\) and the structured pencil exponent \(\sepz^{-2}\) are not competing claims. They answer different questions. Theorem~\ref{thm:appendixG-prony-conditioning} is a finite, nonasymptotic guarantee that survives arbitrary perturbation directions compatible with the sample bound \(|e_q|\le \eta\). Theorem~\ref{thm:appendixG-pencil-local} describes the first-order condition number of a simple generalized eigenvalue inside a fixed smooth perturbation family. For the trust-region bounds the former is the safe object. For interpreting actual estimator behavior in windows that are already well inside the trusted regime, the latter often captures the leading sensitivity more faithfully.
\end{remark}

\subsection{More than two nodes}

Although the sharpest formulas are easiest to read in the two-node case, the same mechanism is already encoded in the root factor of Appendix~\ref{app:abstract-frequency-extraction}. The next observation records the basic cluster scaling.

\begin{proposition}[Cluster scaling of the root factor]\label{prop:appendixG-cluster}
Let \(z_1,\dots,z_m\in\mathbb C\) be distinct, let
\[
Q(\zeta)=\prod_{\mu=1}^m(\zeta-z_\mu),
\qquad
\Gamma_\nu=|Q'(z_\nu)|=\prod_{\mu\ne \nu}|z_\nu-z_\mu|,
\]
and fix an index \(\nu\). Suppose there is a subset \(\mathcal C\subset\{1,\dots,m\}\) of cardinality \(q\ge 2\) such that \(\nu\in\mathcal C\) and
\begin{equation}\label{eq:appendixG-cluster-hyp}
 |z_\alpha-z_\beta|\le \delta
 \qquad \text{for all } \alpha,\beta\in\mathcal C.
\end{equation}
Then
\begin{equation}\label{eq:appendixG-cluster-upper}
 \Gamma_\nu
 \le
 \delta^{q-1}\prod_{\mu\notin\mathcal C}|z_\nu-z_\mu|.
\end{equation}
In particular, if the factors with \(\mu\notin\mathcal C\) remain uniformly bounded above and below, then
\begin{equation}\label{eq:appendixG-cluster-lower}
 \Gamma_\nu^{-1}=\Omega\!\bigl(\delta^{-(q-1)}\bigr)
 \qquad \text{as } \delta\to 0.
\end{equation}
\end{proposition}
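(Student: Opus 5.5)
The plan is to split the product defining \(\Gamma_\nu\) according to whether the index lies in the cluster \(\mathcal C\). Since \(\nu\in\mathcal C\) and the factor \(\mu=\nu\) is excluded, I would write
\[
\Gamma_\nu=\prod_{\mu\in\mathcal C\setminus\{\nu\}}|z_\nu-z_\mu|\cdot\prod_{\mu\notin\mathcal C}|z_\nu-z_\mu|.
\]
The first product has exactly \(q-1\) factors. By the cluster hypothesis \eqref{eq:appendixG-cluster-hyp}, applied with \(\alpha=\nu\) and \(\beta=\mu\), each of these factors is at most \(\delta\). Because all factors are nonnegative, bounding each one by \(\delta\) and multiplying gives \eqref{eq:appendixG-cluster-upper} at once. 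The second product is left untouched.

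For the asymptotic statement \eqref{eq:appendixG-cluster-lower}, I would use only the upper half of the assumption: suppose the external factors satisfy \(|z_\nu-z_\mu|\le C\) for \(\mu\notin\mathcal C\), with \(C\) independent of \(\delta\). Then \(\Gamma_\nu\le C^{m-q}\delta^{q-1}\). Distinctness of the nodes gives \(\Gamma_\nu>0\), so reciprocals are legitimate, and
\[
\Gamma_\nu^{-1}\ge C^{-(m-q)}\delta^{-(q-1)},
\]
which is the claimed \(\Omega(\delta^{-(q-1)})\) behaviour. The lower bound on the external factors is not needed for this direction; it only matters if one also wants a matching \(O\) bound, which the statement does not claim.

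There is no genuine obstacle here. The one point that needs care is the factor count: the self-term \(\mu=\nu\) must be excluded so that the cluster contributes exactly \(q-1\) factors rather than \(q\). I would also end with a remark tying the result back to Appendix~\ref{app:abstract-frequency-extraction}. There \(\Gamma_\nu^{-1}\) enters the root radius \(r_\nu(\eta_\star)\) and \(K_{\omega,\nu}\) multiplicatively, so a \(q\)-cluster inflates the extraction constants by at least \(\delta^{-(q-1)}\). That inflation is on top of the separate Hankel loss through \(\|H^{-1}\|_2\), consistent with the two-node exponents.
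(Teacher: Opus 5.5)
Your proposal is correct and matches the paper's proof: split off the $q-1$ factors with $\mu\in\mathcal C\setminus\{\nu\}$, bound each by $\delta$ using the cluster hypothesis, and leave the external factors unchanged. Your remark that only the upper bound on the external factors is needed for the $\Omega(\delta^{-(q-1)})$ conclusion, with distinctness ensuring $\Gamma_\nu>0$, is a harmless sharpening of the paper's ``bounded above and below'' phrasing.
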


\begin{proof}
For every \(\mu\in\mathcal C\setminus\{\nu\}\), assumption \eqref{eq:appendixG-cluster-hyp} gives \(|z_\nu-z_\mu|\le \delta\). Multiplying those \(q-1\) factors and leaving the remaining factors unchanged yields \eqref{eq:appendixG-cluster-upper}. Statement \eqref{eq:appendixG-cluster-lower} is immediate when the nonclustered factors stay within fixed positive bounds.
\end{proof}

Proposition~\ref{prop:appendixG-cluster} is the direct higher-mode analogue of the two-node discussion. A cluster of three nearby nodes costs at least two powers of the cluster scale in \(\Gamma_\nu^{-1}\), a cluster of four nearby nodes costs at least three, and so on. Accordingly, overtone-rich windows can become numerically unstable even before any individual mode has obviously vanished in amplitude.

Near coalescence is a first-order obstruction to trustworthy spectroscopy. Appendix~\ref{app:abstract-frequency-extraction} translates it into explicit extraction radii, Appendix~\ref{app:branch-selection} prevents the resulting estimates from being polluted by alias slips, and this appendix explains why the resulting trust-region test must reject windows in which the effective node separation becomes too small.

\section{Primary inversion and auxiliary consistency}\label{app:primary-inversion}

This appendix proves the deterministic inverse statements behind the $220$-based remnant recovery and the auxiliary $221$ and $440$ consistency tests. The basic geometric input has already been isolated in Appendix~\ref{app:kerr-qnm-data}. There we established a strictly positive lower singular-value margin for the real primary map associated with the dominant Kerr mode. The remaining work is to turn that local nondegeneracy into a quantitative inverse theorem on the event-local parameter box and then to propagate the resulting parameter error into the auxiliary mode predictions.

Throughout this appendix we write
\[
\Omega_{\mathrm{phys}}:=(0,\infty)\times(-1,1)
\]
for the open physical parameter domain and keep the event-local compact box
\[
\Kdet=[66.5,69.5]\,M_\odot\times[0.64,0.71]\subset \Omega_{\mathrm{phys}}
\]
fixed. The primary mode is $220$, and the auxiliary modes are $221$ and $440$.

\subsection{The primary Kerr map}

It is convenient to pass from the complex dominant mode frequency to its equivalent real two-vector.

\begin{definition}\label{def:appendixH-primary-map}
Define the real-linear isometry
\[
\Xi:\mathbb C\to\mathbb R^2,
\qquad
\Xi(z):=(\Re z,-\Im z),
\]
and the primary Kerr maps
\begin{equation}\label{eq:appendixH-primary-maps}
F_{220}(p):=\omega_{220}(p)\in\mathbb C,
\qquad
G_{220}(p):=\Xi\bigl(F_{220}(p)\bigr)
=\bigl(\Re\omega_{220}(p),-\Im\omega_{220}(p)\bigr)
\in\mathbb R^2,
\qquad p\in\Omega_{\mathrm{phys}}.
\end{equation}
The Euclidean norm on $\mathbb R^2$ and the modulus on $\mathbb C$ are compatible through $\Xi$:
\begin{equation}\label{eq:appendixH-Xi-isometry}
|\Xi(z_1)-\Xi(z_2)|=|z_1-z_2|,
\qquad z_1,z_2\in\mathbb C.
\end{equation}
\end{definition}

The following constant is the primary nondegeneracy margin furnished by Corollary~\ref{cor:appendixD-sigmamin}.
\begin{equation}\label{eq:appendixH-sigma0-def}
\sigma_0:=\sigma^{\mathrm{cert}}_{220}>0.
\end{equation}
By construction,
\begin{equation}\label{eq:appendixH-primary-sigmamin-lower}
\sigma_{\min}\bigl(DG_{220}(p)\bigr)\ge \sigma_0
\qquad\text{for all }p\in\Kdet.
\end{equation}

\begin{corollary}[Explicit primary chart constants for the Kerr tables]\label{cor:appendixH-explicit-chart}
For the Kerr tables used in Appendix~\ref{app:kerr-qnm-data}, one may take
\[
\sigma_0=2.31\times 10^{-5},
\qquad
L_{220}=\frac{2}{\sigma_0}=8.66\times 10^{4},
\qquad
\rho_{220}=2.47\times 10^{-4}.
\]
The last value is obtained from an explicit global Jacobian-Lipschitz bound for \(DG_{220}\) on \(\Kdet\) rather than from compactness alone. In particular, the four public detector-frame medians all lie more than \(10^2\rho_{220}\) away from the boundary of \(\Kdet\).
\end{corollary}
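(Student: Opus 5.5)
The plan is to treat the three constants separately: one is read off from an earlier result, one is pure arithmetic, and one needs a short quantitative version of Lemma~\ref{lem:sec5-chart-radius}. The boundary claim is then a comparison with the public medians.

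\emph{Step 1 ($\sigma_0$ and $L_{220}$).} The value $\sigma_0=\sigma^{\mathrm{cert}}_{220}=2.31\times10^{-5}$ is taken directly from Corollary~\ref{cor:appendixD-sigmamin}. It is evaluated numerically from $\detmargin_{220}$, $U_{220}$ and $V_{220}$ using the audit radii $\eta_{220}$ and $\eta'_{220}$ recorded after that corollary; this is also the value quoted in Corollary~\ref{cor:sec5-explicit-atlas}. Theorem~\ref{thm:sec5-primary-inverse-atlas} then gives $L_{220}=2/\sigma_0$, and $2/(2.31\times10^{-5})\approx 8.66\times10^{4}$.

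\emph{Step 2 (explicit chart radius).} I would replace the compactness argument in Lemma~\ref{lem:sec5-chart-radius} by a Lipschitz bound. Suppose
\[
\|DG_{220}(q)-DG_{220}(r)\|_2\le L_D\|q-r\|_{\mathbb R^2}
\qquad\text{for all } q,r\in\Kdet .
\]
Any two points of $B(p,\rho)\cap\Kdet$ are within $2\rho$ of each other. The oscillation condition \eqref{eq:sec5-jacobian-oscillation} therefore holds whenever $2\rho L_D\le\sigma_0/2$, so one may take $\rho_{220}:=\sigma_0/(4L_D)$.

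To bound $L_D$, differentiate the explicit matrix \eqref{eq:sec5-jacobian-explicit} once more in $M$ and in $\chi$. The second-order entries are
\[
2a/M^3,\quad a'/M^2,\quad a''/M,
\]
together with their analogues in $b$. The Frobenius norm of the third-order derivative tensor is then bounded by a combination of $\sup|p_{220}|$, $\sup|p_{220}'|$ and $\sup|p_{220}''|$ over $\Ichi$, plus interpolation radii, with $M\ge M_-$ throughout. This is exactly the route indicated in Remark~\ref{rem:appendixD-higher-derivatives}. The target value is $L_D\approx 2.34\times10^{-2}$, which is consistent with $|p''|/M_-$ being of order $10^{-2}$. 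With that value, $\rho_{220}=\sigma_0/(4L_D)\approx2.47\times10^{-4}$.

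\emph{Step 3 (boundary distance).} For each of the four public detector-frame medians, compute its distance to the four faces of $\Kdet$. The mass faces are of order $1\,M_\odot$ away, so the binding face is a spin face; the nearest one lies at about $2.62\times10^{-2}$. Since $10^2\rho_{220}=2.47\times10^{-2}$, the claim follows.

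\emph{Main obstacle.} The hard part is certifying the second-derivative interpolation error, because Appendix~\ref{app:kerr-qnm-data} audits only values and first derivatives. My first attempt would be to add a radius $\eta''_{220}$ computed by the same main--audit comparison, namely $2\sup_{\Ichi}|p''-q''|$. If a fully rigorous bound is preferred, I would instead use Bernstein/Markov inequalities for the polynomial difference on the padded interval $\Ichipad$, which converts the value radius into a derivative enclosure. The bound has a comfortable margin, so the computed $L_D$ need only be accurate to within the rounding in $2.47\times10^{-4}$.
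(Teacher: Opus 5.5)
Your proposal takes the route the paper itself points to; the paper gives no separate proof, only a reference to an explicit Jacobian--Lipschitz bound. The steps match: $\sigma_0=\sigma^{\mathrm{cert}}_{220}$ from Corollary~\ref{cor:appendixD-sigmamin}, then $L_{220}=2/\sigma_0$ by arithmetic, then $\rho_{220}=\sigma_0/(4L_D)$ with $L_D$ built from $p_{220}$, $p_{220}'$, $p_{220}''$ as in Remark~\ref{rem:appendixD-higher-derivatives}, and finally the face-distance comparison $2.62\times10^{-2}>10^{2}\rho_{220}$. Your factor $4$ is consistent with the quoted numbers, since the implied $L_D\approx 2.34\times10^{-2}$ is of the size of $\sup|\widehat\omega_{220}''|/M_-$.

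Correct your closing remark, though. The slack is not comfortable: that leading term already uses most of the budget $L_D\le\sigma_0/(4\rho_{220})$. The Markov/Bernstein fallback would not rescue this either. It controls only the polynomial main--audit difference, not the error relative to $\widehat\omega_{220}$, and its degree-$96$ amplification factor is far too large for that slack.
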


The first task is to make the continuity of the primary Jacobian uniform on the whole event-local box at the scale set by $\sigma_0$.

\begin{lemma}[Uniform local Jacobian control]\label{lem:appendixH-jacobian-control}
There exists a radius $\rho_{220}>0$ such that for every $p\in\Kdet$ and every pair of points
\[
q,r\in B_{\mathbb R^2}(p,\rho_{220})\cap\Kdet,
\]
one has
\begin{equation}\label{eq:appendixH-jacobian-control}
\bigl\|DG_{220}(q)-DG_{220}(r)\bigr\|_2\le \frac{\sigma_0}{2}.
\end{equation}
\end{lemma}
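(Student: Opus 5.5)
The plan is to prove Lemma~\ref{lem:appendixH-jacobian-control} quantitatively, by producing an explicit Lipschitz constant for the Jacobian. This is preferable to rerunning the compactness argument of Lemma~\ref{lem:sec5-chart-radius}, which gives existence but no value for \(\rho_{220}\), whereas Corollary~\ref{cor:appendixH-explicit-chart} asserts one.

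The starting point is the explicit Jacobian formula \eqref{eq:sec5-jacobian-explicit}. With \(\widehat\omega_{220}=a+ib\),
\[
DG_{220}(M,\chi)=\begin{pmatrix}-a(\chi)/M^2 & a'(\chi)/M\\ b(\chi)/M^2 & -b'(\chi)/M\end{pmatrix}.
\]
Each entry is smooth on a neighborhood of \(\Kdet\subset\Omega_{\mathrm{phys}}\), because \(M\ge M_->0\) and the Kerr sequence is smooth in \(\chi\) on the padded interval \(\Ichipad\). Differentiating the entries in \(M\) and \(\chi\) gives first derivatives of the form \(2a/M^3\), \(a'/M^2\), \(a''/M\), and similarly for \(b\).

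I would then bound these derivatives uniformly on \(\Kdet\) by quantities of the type \(U_{220}/M_-^3\), \(V_{220}/M_-^2\), and \(W_{220}/M_-\), where \(W_{220}\) bounds \(|\widehat\omega_{220}''|\) on \(\Ichi\). Following Remark~\ref{rem:appendixD-higher-derivatives}, \(W_{220}\) can be taken as \(\sup|p_{220}''|\) plus a second-derivative audit radius. Bounding the operator norm by the Frobenius norm gives a constant
\[
L_J\le\Bigl(\tfrac{4U_{220}^2}{M_-^6}+\tfrac{2V_{220}^2}{M_-^4}+\tfrac{W_{220}^2}{M_-^2}\Bigr)^{1/2}
\]
up to harmless numerical factors. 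The resulting estimate is \(\|DG_{220}(q)-DG_{220}(r)\|_2\le L_J\|q-r\|\) for \(q,r\in\Kdet\). Here convexity of \(\Kdet\) lets me apply the mean value inequality along the segment \([q,r]\), bounding the directional derivative of \(DG_{220}\) by \(L_J\).

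To conclude, set \(\rho_{220}:=\sigma_0/(4L_J)\). If \(q,r\in B(p,\rho_{220})\cap\Kdet\), then \(\|q-r\|<2\rho_{220}\), and so \(\|DG_{220}(q)-DG_{220}(r)\|_2\le 2L_J\rho_{220}=\sigma_0/2\), which is \eqref{eq:appendixH-jacobian-control}.

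The main obstacle is the second-derivative bound \(W_{220}\). Proposition~\ref{prop:appendixD-mass-rescaled-error} only controls values and first derivatives, so the interpolation-error transfer must be extended one order, using the main–audit drift of \(p_{220}''\). If that certification is unavailable, the fallback is the purely qualitative version. In that case smoothness makes \(DG_{220}\) uniformly continuous on a compact neighborhood of \(\Kdet\), and a Lebesgue-number argument over a finite subcover, exactly as in Lemma~\ref{lem:sec5-chart-radius}, yields some positive \(\rho_{220}\). This still proves the lemma as stated, but the explicit value in Corollary~\ref{cor:appendixH-explicit-chart} would then remain uncertified.
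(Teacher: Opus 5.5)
Your proof is correct, but your main route is not the paper's. The paper's proof is precisely your fallback. There, $G_{220}$ is only assumed $C^1$ on $\Omega_{\mathrm{phys}}$. For each $p\in\Kdet$ the paper picks $r_p$ so that $DG_{220}$ oscillates by at most $\sigma_0/4$ on $\overline B_{\mathbb R^2}(p,2r_p)$, and extracts a finite subcover of the balls $B_{\mathbb R^2}(p,r_p)$. It then sets $\rho_{220}=\min_\alpha r_{p_\alpha}$ and uses $|q-p_\alpha|<\rho_{220}+r_{p_\alpha}\le 2r_{p_\alpha}$ to put $q$ and $r$ in one doubled ball. That argument needs nothing beyond continuity of the Jacobian, and it delivers pure existence.

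Your quantitative route needs more: $\widehat\omega_{220}\in C^2$ on $I_{\chi}$, with a finite bound $W_{220}$ on $|\widehat\omega_{220}''|$. This holds for Kerr, but it is not among the appendix's stated inputs. In exchange you get an explicit radius. Your directional-derivative computation does yield $L_J^2\le 4U_{220}^2/M_-^6+2V_{220}^2/M_-^4+W_{220}^2/M_-^2$, with no extra numerical factors. Convexity of $\Kdet$ justifies the segment integral, and $\rho_{220}=\sigma_0/(4L_J)$ closes the estimate.

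That explicit form is what Corollary~\ref{cor:appendixH-explicit-chart} actually invokes when it says its value comes from a global Jacobian-Lipschitz bound \emph{rather than from compactness alone}. The paper only gestures at such a bound in Remark~\ref{rem:appendixD-higher-derivatives}. So your remark is accurate: the quoted numerical $\rho_{220}$ is not certified by the lemma's proof, and certifying it would require extending Proposition~\ref{prop:appendixD-mass-rescaled-error} to second derivatives. For the lemma as stated, either route suffices.
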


\begin{proof}
Fix $p\in\Kdet$. Since $G_{220}$ is $C^1$ on the open set $\Omega_{\mathrm{phys}}$, its Jacobian is continuous at $p$. Hence there exists $r_p>0$ such that
\[
\overline B_{\mathbb R^2}(p,2r_p)\subset\Omega_{\mathrm{phys}}
\qquad\text{and}\qquad
\bigl\|DG_{220}(x)-DG_{220}(p)\bigr\|_2\le \frac{\sigma_0}{4}
\quad\text{for all }x\in \overline B_{\mathbb R^2}(p,2r_p).
\]
Therefore, whenever $x,y\in \overline B_{\mathbb R^2}(p,2r_p)$,
\[
\bigl\|DG_{220}(x)-DG_{220}(y)\bigr\|_2
\le
\bigl\|DG_{220}(x)-DG_{220}(p)\bigr\|_2
+
\bigl\|DG_{220}(y)-DG_{220}(p)\bigr\|_2
\le \frac{\sigma_0}{2}.
\]
The family of open balls $\{B_{\mathbb R^2}(p,r_p):p\in\Kdet\}$ covers the compact set $\Kdet$, so there is a finite subcover
\[
\Kdet\subset \bigcup_{\alpha=1}^N B_{\mathbb R^2}(p_\alpha,r_{p_\alpha}).
\]
Set
\[
\rho_{220}:=\min_{1\le \alpha\le N} r_{p_\alpha}>0.
\]
Now fix $p\in\Kdet$ and choose $\alpha$ such that $p\in B_{\mathbb R^2}(p_\alpha,r_{p_\alpha})$. If $q,r\in B_{\mathbb R^2}(p,\rho_{220})\cap\Kdet$, then
\[
|q-p_\alpha|\le |q-p|+|p-p_\alpha|<\rho_{220}+r_{p_\alpha}\le 2r_{p_\alpha},
\]
and the same estimate holds for $r$. Thus both $q$ and $r$ lie in $B_{\mathbb R^2}(p_\alpha,2r_{p_\alpha})$, where the pairwise Jacobian bound proved above applies. This yields \eqref{eq:appendixH-jacobian-control}.
\end{proof}

The next theorem is the quantitative local inverse statement used throughout the analysis.

\begin{theorem}[Quantitative local inversion for the dominant mode]\label{thm:appendixH-primary-inverse}
Let $\rho_{220}$ be as in Lemma~\ref{lem:appendixH-jacobian-control}. Then for every $p\in\Kdet$ and every pair of points
\[
q,r\in B_{\mathbb R^2}(p,\rho_{220})\cap\Kdet,
\]
one has
\begin{equation}\label{eq:appendixH-primary-bilipschitz}
\bigl|G_{220}(q)-G_{220}(r)\bigr|
\ge
\frac{\sigma_0}{2}\,\|q-r\|_{\mathbb R^2}.
\end{equation}
Consequently, the restriction of $G_{220}$ to $B_{\mathbb R^2}(p,\rho_{220})\cap\Kdet$ is injective, and its inverse on the image satisfies
\begin{equation}\label{eq:appendixH-primary-inverse-Lipschitz}
\bigl\|G_{220}^{-1}(y_1)-G_{220}^{-1}(y_2)\bigr\|_{\mathbb R^2}
\le
\frac{2}{\sigma_0}\,|y_1-y_2|
\qquad
\text{for all }y_1,y_2\in G_{220}\bigl(B_{\mathbb R^2}(p,\rho_{220})\cap\Kdet\bigr).
\end{equation}
If we set
\begin{equation}\label{eq:appendixH-L220-def}
L_{220}:=\frac{2}{\sigma_0},
\end{equation}
then $L_{220}$ is a valid local inverse Lipschitz constant for the primary map on every event-local chart of radius $\rho_{220}$.
\end{theorem}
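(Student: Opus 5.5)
The proof is a mean-value argument along segments, reusing the structure already used for Theorem~\ref{thm:sec5-primary-inverse-atlas}. Fix \(p\in\Kdet\) and \(q,r\in B_{\mathbb R^2}(p,\rho_{220})\cap\Kdet\). Both the Euclidean ball and the rectangle \(\Kdet\) are convex, so the segment \(\gamma(t)=r+t(q-r)\), \(t\in[0,1]\), stays in \(B_{\mathbb R^2}(p,\rho_{220})\cap\Kdet\). Since \(G_{220}\) is \(C^1\) on the open set \(\Omega_{\mathrm{phys}}\supset\Kdet\), the fundamental theorem of calculus gives
\[
G_{220}(q)-G_{220}(r)=\int_0^1 DG_{220}(\gamma(t))(q-r)\,dt .
\]

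Next, freeze the Jacobian at a point of the segment. I will freeze it at \(r\) rather than at the center \(p\), so that the oscillation bound of Lemma~\ref{lem:appendixH-jacobian-control} applies directly: the lemma bounds \(\|DG_{220}(x)-DG_{220}(y)\|_2\) for any two points of the chart, and \(\gamma(t)\) and \(r\) are two such points. Writing \(DG_{220}(\gamma(t))=DG_{220}(r)+\bigl(DG_{220}(\gamma(t))-DG_{220}(r)\bigr)\) and applying the reverse triangle inequality, we get
\[
|G_{220}(q)-G_{220}(r)|\ge \sigma_{\min}\bigl(DG_{220}(r)\bigr)\|q-r\|-\sup_t\|DG_{220}(\gamma(t))-DG_{220}(r)\|_2\,\|q-r\|.
\]
The uniform bound \eqref{eq:appendixH-primary-sigmamin-lower} gives \(\sigma_{\min}(DG_{220}(r))\ge\sigma_0\), because \(r\in\Kdet\). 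Lemma~\ref{lem:appendixH-jacobian-control} bounds the oscillation term by \(\sigma_0/2\). Together these yield \eqref{eq:appendixH-primary-bilipschitz}.

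The remaining claims follow at once. If \(G_{220}(q)=G_{220}(r)\), then \eqref{eq:appendixH-primary-bilipschitz} forces \(q=r\), so \(G_{220}\) is injective on the chart. For \(y_i=G_{220}(q_i)\) in the image, \eqref{eq:appendixH-primary-bilipschitz} rearranges to \eqref{eq:appendixH-primary-inverse-Lipschitz}, and this is exactly the statement that \(L_{220}=2/\sigma_0\) is an inverse Lipschitz constant. The isometry \eqref{eq:appendixH-Xi-isometry} transfers every statement verbatim to \(F_{220}\).

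The only real obstacle is checking that Lemma~\ref{lem:appendixH-jacobian-control} applies at every point used in the estimate. This needs the segment to lie inside \(\Kdet\) as well as inside the ball, since the lemma is stated only for points of \(B_{\mathbb R^2}(p,\rho_{220})\cap\Kdet\) and \eqref{eq:appendixH-primary-sigmamin-lower} holds only on \(\Kdet\). Convexity of the intersection of two convex sets settles both requirements. For the same reason the singular-value lower bound is taken at \(r\in\Kdet\), where it is guaranteed.
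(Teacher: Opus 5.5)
Your proposal is correct and follows the paper's proof essentially step for step: the segment in the convex chart $B_{\mathbb R^2}(p,\rho_{220})\cap\Kdet$, the fundamental theorem of calculus, freezing the Jacobian, the reverse triangle inequality combined with Lemma~\ref{lem:appendixH-jacobian-control} and the uniform floor $\sigma_0$, and then injectivity and the inverse Lipschitz rearrangement. The only difference is that the paper freezes $DG_{220}$ at the center $p$ rather than at $r$; both choices are equally valid, because $p$ itself lies in $B_{\mathbb R^2}(p,\rho_{220})\cap\Kdet$, so your stated reason for preferring $r$ is not actually needed.
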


\begin{proof}
Fix $p\in\Kdet$ and $q,r\in B_{\mathbb R^2}(p,\rho_{220})\cap\Kdet$. Since both the ball and the rectangle $\Kdet$ are convex, the segment
\[
\gamma(t):=r+t(q-r),\qquad 0\le t\le 1,
\]
lies in $B_{\mathbb R^2}(p,\rho_{220})\cap\Kdet$. By the fundamental theorem of calculus,
\[
G_{220}(q)-G_{220}(r)
=
\int_0^1 DG_{220}(\gamma(t))(q-r)\,dt.
\]
Subtract and add the constant matrix $DG_{220}(p)$:
\[
G_{220}(q)-G_{220}(r)
=
DG_{220}(p)(q-r)
+
\int_0^1\bigl(DG_{220}(\gamma(t))-DG_{220}(p)\bigr)(q-r)\,dt.
\]
Taking norms, using the reverse triangle inequality, and then applying Lemma~\ref{lem:appendixH-jacobian-control}, we obtain
\begin{align*}
\bigl|G_{220}(q)-G_{220}(r)\bigr|
&\ge
\bigl|DG_{220}(p)(q-r)\bigr|
-
\int_0^1 \bigl\|DG_{220}(\gamma(t))-DG_{220}(p)\bigr\|_2\,dt\,\|q-r\|_{\mathbb R^2} \\
&\ge
\sigma_{\min}\bigl(DG_{220}(p)\bigr)\,\|q-r\|_{\mathbb R^2}
-
\frac{\sigma_0}{2}\,\|q-r\|_{\mathbb R^2}.
\end{align*}
Since $p\in\Kdet$, the uniform lower bound \eqref{eq:appendixH-primary-sigmamin-lower} gives
\[
\sigma_{\min}\bigl(DG_{220}(p)\bigr)\ge \sigma_0,
\]
and therefore
\[
\bigl|G_{220}(q)-G_{220}(r)\bigr|
\ge
\left(\sigma_0-\frac{\sigma_0}{2}\right)\|q-r\|_{\mathbb R^2}
=
\frac{\sigma_0}{2}\,\|q-r\|_{\mathbb R^2}.
\]
This proves \eqref{eq:appendixH-primary-bilipschitz}. If $G_{220}(q)=G_{220}(r)$, the lower bound forces $q=r$, so the restriction of $G_{220}$ to the chart is injective. Let $y_i=G_{220}(q_i)$ with $q_i\in B_{\mathbb R^2}(p,\rho_{220})\cap\Kdet$. Applying \eqref{eq:appendixH-primary-bilipschitz} to $q_1$ and $q_2$ yields
\[
\|q_1-q_2\|_{\mathbb R^2}
\le
\frac{2}{\sigma_0}\,|G_{220}(q_1)-G_{220}(q_2)|
=
\frac{2}{\sigma_0}\,|y_1-y_2|,
\]
which is exactly \eqref{eq:appendixH-primary-inverse-Lipschitz}.
\end{proof}

The complex form of the same statement is immediate because $\Xi$ is an isometry.

\begin{corollary}[Complex-frequency form of the primary inverse]\label{cor:appendixH-primary-complex}
For $p\in\Kdet$, define the local primary image
\begin{equation}\label{eq:appendixH-primary-image}
V_{220}(p):=F_{220}\bigl(B_{\mathbb R^2}(p,\rho_{220})\cap\Kdet\bigr)\subset\mathbb C.
\end{equation}
Then $F_{220}$ is injective on $B_{\mathbb R^2}(p,\rho_{220})\cap\Kdet$, and its inverse on $V_{220}(p)$ satisfies
\begin{equation}\label{eq:appendixH-primary-complex-Lipschitz}
\bigl\|F_{220}^{-1}(z_1)-F_{220}^{-1}(z_2)\bigr\|_{\mathbb R^2}
\le
L_{220}\,|z_1-z_2|,
\qquad z_1,z_2\in V_{220}(p).
\end{equation}
In particular, if $\widehat\omega_{220}\in V_{220}(p)$ and
\begin{equation}\label{eq:appendixH-primary-error-hyp}
\bigl|\widehat\omega_{220}-\omega_{220}(p)\bigr|\le \varepsilon_{220},
\end{equation}
then the primary estimate
\begin{equation}\label{eq:appendixH-primary-estimate}
\widehat p:=F_{220}^{-1}(\widehat\omega_{220})
\end{equation}
is well-defined in $B_{\mathbb R^2}(p,\rho_{220})\cap\Kdet$ and obeys
\begin{equation}\label{eq:appendixH-primary-parameter-error}
\|\widehat p-p\|_{\mathbb R^2}
\le
L_{220}\,\varepsilon_{220}.
\end{equation}
\end{corollary}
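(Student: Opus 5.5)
The statement to prove is Corollary~\ref{cor:appendixH-primary-complex}, the complex-frequency form of Theorem~\ref{thm:appendixH-primary-inverse}. The plan is to transfer everything through the real-linear isometry \(\Xi\) of Definition~\ref{def:appendixH-primary-map}, which satisfies \(G_{220}=\Xi\circ F_{220}\), and then read off the parameter-error bound. I would not redo any analysis; the whole content is the chart-wise bi-Lipschitz estimate already established for \(G_{220}\).

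\textbf{Step 1: injectivity and the Lipschitz bound.}
\begin{itemize}
\item Fix \(p\in\Kdet\) and write \(U:=B_{\mathbb R^2}(p,\rho_{220})\cap\Kdet\).
\item For \(q,r\in U\), the isometry identity \eqref{eq:appendixH-Xi-isometry} gives \(|F_{220}(q)-F_{220}(r)|=|G_{220}(q)-G_{220}(r)|\).
\item By \eqref{eq:appendixH-primary-bilipschitz} this is at least \(\tfrac{\sigma_0}{2}\|q-r\|\).
\item Hence \(F_{220}(q)=F_{220}(r)\) forces \(q=r\), so \(F_{220}\) is injective on \(U\).
\item Given \(z_1,z_2\in V_{220}(p)\), write \(z_i=F_{220}(q_i)\) with unique \(q_i\in U\). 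Applying the lower bound to \(q_1,q_2\) yields \(\|q_1-q_2\|\le \tfrac{2}{\sigma_0}|z_1-z_2|\), which is \eqref{eq:appendixH-primary-complex-Lipschitz} with \(L_{220}=2/\sigma_0\) from \eqref{eq:appendixH-L220-def}.
\end{itemize}

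\textbf{Step 2: the parameter-error bound.}
\begin{itemize}
\item Since \(\widehat\omega_{220}\in V_{220}(p)\) by hypothesis, \(\widehat p:=F_{220}^{-1}(\widehat\omega_{220})\) is a well-defined point of \(U\) by Step~1.
\item The center \(p\) lies in \(U\), because \(p\in\Kdet\) and \(p\) lies in its own ball. So \(\omega_{220}(p)=F_{220}(p)\in V_{220}(p)\) and \(F_{220}^{-1}(\omega_{220}(p))=p\).
\item Apply \eqref{eq:appendixH-primary-complex-Lipschitz} with \(z_1=\widehat\omega_{220}\) and \(z_2=\omega_{220}(p)\), then use \eqref{eq:appendixH-primary-error-hyp}. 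This gives \(\|\widehat p-p\|\le L_{220}|\widehat\omega_{220}-\omega_{220}(p)|\le L_{220}\varepsilon_{220}\), which is \eqref{eq:appendixH-primary-parameter-error}.
\end{itemize}

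There is no real obstacle; the argument is bookkeeping. The one point needing care is that the error bound compares against the chart center \(p\) itself. This requires \(\omega_{220}(p)\) to lie in the chart image, which holds only because \(p\) belongs to the chart. For a general reference point one would also have to assume it lies in the same chart.
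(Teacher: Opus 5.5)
Your proposal is correct and follows the paper's proof: both transfer the chart-wise estimate of Theorem~\ref{thm:appendixH-primary-inverse} through the isometry $\Xi$, then specialize to $z_1=\widehat\omega_{220}$ and $z_2=\omega_{220}(p)$. The differences are cosmetic. You invoke the lower bound \eqref{eq:appendixH-primary-bilipschitz} directly rather than its inverse-Lipschitz form \eqref{eq:appendixH-primary-inverse-Lipschitz}, and you make explicit that $p$ itself lies in the chart, so that $\omega_{220}(p)\in V_{220}(p)$; the paper leaves this point implicit.
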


\begin{proof}
Because $\Xi$ is bijective and satisfies \eqref{eq:appendixH-Xi-isometry}, the injectivity of $F_{220}$ on the chart is equivalent to the injectivity of $G_{220}$ there. For $z_i\in V_{220}(p)$ write $y_i=\Xi(z_i)$. Then
\[
F_{220}^{-1}(z_i)=G_{220}^{-1}(y_i),
\qquad
|y_1-y_2|=|z_1-z_2|.
\]
Applying \eqref{eq:appendixH-primary-inverse-Lipschitz} gives \eqref{eq:appendixH-primary-complex-Lipschitz}. The parameter error bound \eqref{eq:appendixH-primary-parameter-error} follows by taking $z_1=\widehat\omega_{220}$ and $z_2=\omega_{220}(p)$.
\end{proof}

Corollary~\ref{cor:appendixH-primary-complex} is the missing existence statement behind the local inverse map used abstractly in Appendix~\ref{app:assumption-ledger}. The constant $L_{220}=2/\sigma_0$ is not an ad hoc condition number. It is the deterministic inverse Lipschitz constant produced by the Jacobian margin of Appendix~\ref{app:kerr-qnm-data}.

\subsection{Auxiliary mode Lipschitz control}

To turn the primary parameter error into a tolerance for the auxiliary checks, one needs a forward Lipschitz bound for the $221$ and $440$ Kerr maps on $\Kdet$.

\begin{proposition}[Certified forward Lipschitz constants for the auxiliary modes]\label{prop:appendixH-aux-lipschitz}
For each auxiliary mode $j\in\{221,440\}$ define
\begin{equation}\label{eq:appendixH-Lj-def}
L_j:=\sqrt{\frac{U_j^2}{M_-^4}+\frac{V_j^2}{M_-^2}},
\end{equation}
where $U_j$ and $V_j$ are the quantities introduced in \eqref{eq:appendixD-UV-def}. Then for all $p_1,p_2\in\Kdet$,
\begin{equation}\label{eq:appendixH-aux-lipschitz}
\bigl|\omega_j(p_1)-\omega_j(p_2)\bigr|
\le
L_j\,\|p_1-p_2\|_{\mathbb R^2}.
\end{equation}
Thus $L_{221}$ and $L_{440}$ are valid global Lipschitz constants for the auxiliary Kerr maps on the event-local box.
\end{proposition}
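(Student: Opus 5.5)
The plan is to run a mean-value argument on the compact convex box \(\Kdet\), with the supremum of the Jacobian operator norm bounded through the explicit Frobenius identity of Proposition~\ref{prop:sec5-exact-identities}. Convexity of the rectangle \(\Kdet\) ensures that, for any \(p_1,p_2\in\Kdet\), the segment \(\gamma(t)=p_2+t(p_1-p_2)\) stays in \(\Kdet\). Via the realification \(\Xi\), which is an isometry by \eqref{eq:appendixH-Xi-isometry}, it suffices to bound \(|G_j(p_1)-G_j(p_2)|\), where \(G_j=\Xi\circ\omega_j\).

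The first step is the fundamental theorem of calculus along \(\gamma\):
\[
G_j(p_1)-G_j(p_2)=\int_0^1 DG_j(\gamma(t))(p_1-p_2)\,dt .
\]
This gives \(|\omega_j(p_1)-\omega_j(p_2)|\le \sup_{p\in\Kdet}\|DG_j(p)\|_2\,\|p_1-p_2\|_{\mathbb R^2}\). The second step bounds the operator norm by the Frobenius norm. Using \eqref{eq:sec5-frobenius-identity},
\[
\|DG_j(M,\chi)\|_2^2\le \frac{|\widehat\omega_j(\chi)|^2}{M^4}+\frac{|\widehat\omega_j'(\chi)|^2}{M^2}.
\]
Because \(M\ge M_-\) on \(\Kdet\), the right-hand side is at most \(\sup|\widehat\omega_j|^2/M_-^4+\sup|\widehat\omega_j'|^2/M_-^2\), with the suprema taken over \(\Ichi\).

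The third step converts these exact suprema into the tabulated quantities \(U_j\) and \(V_j\) of \eqref{eq:appendixD-UV-def}. Write \(\widehat\omega_j=p_j-e_j\), where \(p_j\) is the main interpolant. Under the interpolation hypotheses of Proposition~\ref{prop:appendixD-mass-rescaled-error}, namely \(|e_j|\le\eta_j\) and \(|e_j'|\le\eta_j'\) on \(\Ichi\), the triangle inequality gives \(|\widehat\omega_j|\le \sup|p_j|+\eta_j=U_j\) and \(|\widehat\omega_j'|\le \sup|p_j'|+\eta_j'=V_j\). Combining the three steps yields \eqref{eq:appendixH-aux-lipschitz} with \(L_j\) as in \eqref{eq:appendixH-Lj-def}.

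The main obstacle is the justification in the third step. It depends on the audit radii \(\eta_j,\eta_j'\) actually dominating the true interpolation errors. This is a standing numerical hypothesis of Appendix~\ref{app:kerr-qnm-data}, not something the argument proves, so the proof should state it explicitly as the hypothesis under which \(U_j\) and \(V_j\) bound the exact Kerr sequences. Everything else is elementary calculus on a convex compact set.
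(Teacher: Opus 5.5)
Your proposal is correct and follows the paper's proof essentially step for step. Both arguments bound $\|DG_j\|_2$ by the explicit Frobenius norm, use $M\ge M_-$ together with $|\widehat\omega_j|\le U_j$ and $|\widehat\omega_j'|\le V_j$ on $I_\chi$, and then integrate along the segment in the convex box using the isometry $\Xi$. You are right to state the interpolation-error hypothesis $|e_j|\le\eta_j$, $|e_j'|\le\eta_j'$ explicitly: the paper's phrase ``the definitions of $U_j$ and $V_j$ imply'' relies on it without saying so.
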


\begin{proof}
Fix $j\in\{221,440\}$ and let
\[
G_j(M,\chi):=\bigl(\Re\omega_j(M,\chi),-\Im\omega_j(M,\chi)\bigr).
\]
By the explicit Jacobian formula used in the proof of Corollary~\ref{cor:appendixD-sigmamin}, one has
\[
\|DG_j(M,\chi)\|_F^2
=
\frac{|\widehat\omega_j(\chi)|^2}{M^4}
+
\frac{|\widehat\omega_j'(\chi)|^2}{M^2}.
\]
The definitions of $U_j$ and $V_j$ imply
\[
|\widehat\omega_j(\chi)|\le U_j,
\qquad
|\widehat\omega_j'(\chi)|\le V_j
\qquad\text{for all }\chi\in\Ichi,
\]
so on $\Kdet=\Mbox\times\Ichi$ we obtain
\[
\|DG_j(M,\chi)\|_2
\le
\|DG_j(M,\chi)\|_F
\le
\sqrt{\frac{U_j^2}{M_-^4}+\frac{V_j^2}{M_-^2}}
=L_j.
\]
Now let $p_1,p_2\in\Kdet$. Since $\Kdet$ is a rectangle, the segment
\[
\gamma(t):=p_2+t(p_1-p_2),\qquad 0\le t\le 1,
\]
remains in $\Kdet$. Using the real map $G_j=\Xi\circ\omega_j$, the fundamental theorem of calculus, and the fact that $\Xi$ is an isometry, we find
\begin{align*}
|\omega_j(p_1)-\omega_j(p_2)|
&=|G_j(p_1)-G_j(p_2)| \\
&=\left|\int_0^1 DG_j\bigl(\gamma(t)\bigr)(p_1-p_2)\,dt\right| \\
&\le \int_0^1 \bigl\|DG_j\bigl(\gamma(t)\bigr)\bigr\|_2\,dt\,\|p_1-p_2\|_{\mathbb R^2} \\
&\le L_j\,\|p_1-p_2\|_{\mathbb R^2},
\end{align*}
which is \eqref{eq:appendixH-aux-lipschitz}.
\end{proof}

\subsection{Propagation from primary inversion to auxiliary consistency}

We now combine the primary inverse theorem with the forward Lipschitz control of the auxiliary modes.

\begin{theorem}[Primary error propagation into auxiliary residuals]\label{thm:appendixH-aux-propagation}
Fix $p\in\Kdet$ and let $\widehat\omega_{220}\in V_{220}(p)$ with primary estimate $\widehat p=F_{220}^{-1}(\widehat\omega_{220})\in B_{\mathbb R^2}(p,\rho_{220})\cap\Kdet$. Let $j\in\{221,440\}$ and let $\widehat\omega_j\in\mathbb C$ be an extracted auxiliary frequency. Assume that
\begin{equation}\label{eq:appendixH-aux-hypotheses}
\bigl|\widehat\omega_{220}-\omega_{220}(p)\bigr|\le \varepsilon_{220},
\qquad
\bigl|\widehat\omega_j-\omega_j(p)\bigr|\le \varepsilon_j.
\end{equation}
Then
\begin{equation}\label{eq:appendixH-aux-primary-error}
\|\widehat p-p\|_{\mathbb R^2}\le L_{220}\,\varepsilon_{220},
\end{equation}
and the auxiliary residual
\begin{equation}\label{eq:appendixH-aux-residual-def}
R_j(\widehat p,\widehat\omega_j):=\bigl|\widehat\omega_j-\omega_j(\widehat p)\bigr|
\end{equation}
satisfies
\begin{equation}\label{eq:appendixH-aux-residual-bound}
R_j(\widehat p,\widehat\omega_j)
\le
\varepsilon_j+L_jL_{220}\,\varepsilon_{220}.
\end{equation}
Equivalently, if one defines the deterministic auxiliary tolerance
\begin{equation}\label{eq:appendixH-aux-threshold}
\tau_j(\varepsilon_{220},\varepsilon_j):=\varepsilon_j+L_jL_{220}\,\varepsilon_{220},
\end{equation}
then every auxiliary mode compatible with the same underlying remnant must lie in the disk
\begin{equation}\label{eq:appendixH-aux-disk}
\widehat\omega_j\in \overline D\bigl(\omega_j(\widehat p),\tau_j(\varepsilon_{220},\varepsilon_j)\bigr).
\end{equation}
\end{theorem}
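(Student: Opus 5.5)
The argument chains Corollary~\ref{cor:appendixH-primary-complex} with Proposition~\ref{prop:appendixH-aux-lipschitz} through a single triangle inequality. No new estimate is required. The work lies in checking that the hypotheses of those two results are met by the data in the statement.

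First, the primary parameter error. By hypothesis $\widehat\omega_{220}\in V_{220}(p)$, and $\omega_{220}(p)$ also lies in $V_{220}(p)$ because $p$ is the center of its own chart. The first inequality in \eqref{eq:appendixH-aux-hypotheses} is exactly \eqref{eq:appendixH-primary-error-hyp}. Corollary~\ref{cor:appendixH-primary-complex} therefore applies with $z_1=\widehat\omega_{220}$ and $z_2=\omega_{220}(p)$. Since $F_{220}^{-1}(\omega_{220}(p))=p$ by injectivity on the chart, we obtain \eqref{eq:appendixH-aux-primary-error}. It also places $\widehat p$ in $B_{\mathbb R^2}(p,\rho_{220})\cap\Kdet\subset\Kdet$.

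Second, the auxiliary residual. Insert and subtract $\omega_j(p)$:
\[
R_j(\widehat p,\widehat\omega_j)\le|\widehat\omega_j-\omega_j(p)|+|\omega_j(p)-\omega_j(\widehat p)|.
\]
The first term is at most $\varepsilon_j$ by hypothesis. For the second term, both $p$ and $\widehat p$ lie in $\Kdet$, so Proposition~\ref{prop:appendixH-aux-lipschitz} gives the bound $L_j\|\widehat p-p\|_{\mathbb R^2}$. Combining this with step one yields $L_jL_{220}\varepsilon_{220}$, which proves \eqref{eq:appendixH-aux-residual-bound}. The disk statement \eqref{eq:appendixH-aux-disk} is then just the definition of $\tau_j$ together with the definition of the closed disk.

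The only point needing care is the membership bookkeeping, and it is mild. The forward Lipschitz bound is stated only on $\Kdet$, so we need $\widehat p\in\Kdet$, not merely $\widehat p$ in the open ball. This holds because the chart is by definition intersected with $\Kdet$, and the inverse takes values in the chart. I would state this explicitly before invoking Proposition~\ref{prop:appendixH-aux-lipschitz}.
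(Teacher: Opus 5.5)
Your proposal is correct and matches the paper's proof. The primary bound is read off from Corollary~\ref{cor:appendixH-primary-complex} with $z_2=\omega_{220}(p)$, and the residual bound comes from inserting $\omega_j(p)$ and applying the forward Lipschitz estimate of Proposition~\ref{prop:appendixH-aux-lipschitz}; your explicit checks that $\omega_{220}(p)\in V_{220}(p)$ and $\widehat p\in\Kdet$ are bookkeeping the paper leaves implicit.
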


\begin{proof}
The primary parameter bound \eqref{eq:appendixH-aux-primary-error} is precisely Corollary~\ref{cor:appendixH-primary-complex}. For the auxiliary residual, insert and subtract $\omega_j(p)$:
\[
R_j(\widehat p,\widehat\omega_j)
\le
\bigl|\widehat\omega_j-\omega_j(p)\bigr|
+
\bigl|\omega_j(p)-\omega_j(\widehat p)\bigr|.
\]
The first term is bounded by $\varepsilon_j$ from \eqref{eq:appendixH-aux-hypotheses}. The second term is controlled by Proposition~\ref{prop:appendixH-aux-lipschitz}:
\[
\bigl|\omega_j(p)-\omega_j(\widehat p)\bigr|
\le
L_j\,\|\widehat p-p\|_{\mathbb R^2}
\le
L_jL_{220}\,\varepsilon_{220}.
\]
Combining the two inequalities yields \eqref{eq:appendixH-aux-residual-bound}. Statement \eqref{eq:appendixH-aux-disk} is just a rewriting of the same bound.
\end{proof}

Theorem~\ref{thm:appendixH-aux-propagation} shows that even if the auxiliary extraction itself were exact, the comparison radius could not be smaller than $L_jL_{220}\,\varepsilon_{220}$, because the primary inversion uncertainty already moves the Kerr prediction for the auxiliary mode by that amount. The auxiliary threshold is therefore forced by deterministic geometry.

\subsection{Consistency tubes and failure certificates}

The previous theorem is still phrased relative to an underlying parameter point $p$. For the trust-region test one also needs the practical contrapositive: if the observed auxiliary residual is too large, then no single remnant in the same local primary chart can account for both the primary and the auxiliary observables within the stated budgets.

\begin{proposition}[Local common-remnant failure certificate]\label{prop:appendixH-failure-certificate}
Fix $p\in\Kdet$ and let $\widehat p\in B_{\mathbb R^2}(p,\rho_{220})\cap\Kdet$. Set
\[
\widehat\omega_{220}:=F_{220}(\widehat p).
\]
Let $j\in\{221,440\}$ and let $\widehat\omega_j\in\mathbb C$. Assume there exists a point
\[
q\in B_{\mathbb R^2}(p,\rho_{220})\cap\Kdet
\]
such that
\begin{equation}\label{eq:appendixH-failure-hypotheses}
\bigl|\omega_{220}(q)-\widehat\omega_{220}\bigr|\le \varepsilon_{220},
\qquad
\bigl|\omega_j(q)-\widehat\omega_j\bigr|\le \varepsilon_j.
\end{equation}
Then
\begin{equation}\label{eq:appendixH-failure-bound}
R_j(\widehat p,\widehat\omega_j)
\le
\varepsilon_j+L_jL_{220}\,\varepsilon_{220}.
\end{equation}
Consequently, if the computed residual violates
\begin{equation}\label{eq:appendixH-failure-violation}
R_j(\widehat p,\widehat\omega_j)
>
\varepsilon_j+L_jL_{220}\,\varepsilon_{220},
\end{equation}
then there is no point $q$ in the same local primary chart for which both inequalities in \eqref{eq:appendixH-failure-hypotheses} hold simultaneously.
\end{proposition}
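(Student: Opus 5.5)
The plan is to apply Theorem~\ref{thm:appendixH-aux-propagation} with the witness point \(q\) playing the role of the reference remnant, and then take the contrapositive. The key observation is that the hypothesis of that theorem only requires some point in the chart at which both frequency budgets hold. It does not require that point to be the chart center. Here \(q\) is such a point, and \(\widehat p\) is the exact primary inverse of \(\widehat\omega_{220}=F_{220}(\widehat p)\) by construction.

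The steps, in order:
\begin{enumerate}
\item \emph{Control \(\widehat p-q\).} Both \(q\) and \(\widehat p\) lie in \(B_{\mathbb R^2}(p,\rho_{220})\cap\Kdet\), the chart on which Theorem~\ref{thm:appendixH-primary-inverse} gives the bi-Lipschitz lower bound \eqref{eq:appendixH-primary-bilipschitz}. Through the isometry \(\Xi\), this bound combined with the first inequality in \eqref{eq:appendixH-failure-hypotheses} gives
\[
\|\widehat p-q\|_{\mathbb R^2}\le L_{220}\,\bigl|F_{220}(\widehat p)-F_{220}(q)\bigr|=L_{220}\,\bigl|\widehat\omega_{220}-\omega_{220}(q)\bigr|\le L_{220}\varepsilon_{220}.
\]
This is exactly the content of Corollary~\ref{cor:appendixH-primary-complex}, applied with \(z_1=\widehat\omega_{220}\) and \(z_2=\omega_{220}(q)\), both of which lie in \(V_{220}(p)\).
\item \emph{Bound the residual.} Insert and subtract \(\omega_j(q)\):
\[
R_j(\widehat p,\widehat\omega_j)\le|\widehat\omega_j-\omega_j(q)|+|\omega_j(q)-\omega_j(\widehat p)|.
\]
The first term is at most \(\varepsilon_j\) by the second inequality in \eqref{eq:appendixH-failure-hypotheses}. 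The second term is at most \(L_j\|q-\widehat p\|\le L_jL_{220}\varepsilon_{220}\), using the global Lipschitz bound of Proposition~\ref{prop:appendixH-aux-lipschitz}; this applies because both \(q\) and \(\widehat p\) lie in \(\Kdet\). Together these give \eqref{eq:appendixH-failure-bound}.
\item \emph{Contrapositive.} If \eqref{eq:appendixH-failure-violation} holds, then any such \(q\) would contradict \eqref{eq:appendixH-failure-bound}. Hence no point of the chart satisfies both inequalities in \eqref{eq:appendixH-failure-hypotheses}.
\end{enumerate}

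The only point that needs care, and the step I expect to be the main obstacle, is checking that the inverse Lipschitz estimate is legitimately applicable to the pair \((\widehat p,q)\). Both points must lie in one common chart \(B_{\mathbb R^2}(p,\rho_{220})\cap\Kdet\). That is guaranteed here by hypothesis, so Lemma~\ref{lem:appendixH-jacobian-control} and the segment argument behind Theorem~\ref{thm:appendixH-primary-inverse} apply directly. In particular, the estimate is not used across two different charts, where it could fail.
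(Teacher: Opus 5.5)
Your proof is correct and follows the paper's argument essentially line for line: you bound $\|\widehat p-q\|_{\mathbb R^2}\le L_{220}\varepsilon_{220}$ with the inverse Lipschitz estimate on the common chart $B_{\mathbb R^2}(p,\rho_{220})\cap\Kdet$, insert and subtract $\omega_j(q)$, apply the auxiliary Lipschitz bound, and take the contrapositive. One small caveat concerns your opening sentence about applying Theorem~\ref{thm:appendixH-aux-propagation} with $q$ as the reference point: read literally, it would re-center the chart at $q$, where $\widehat p$ need not lie within radius $\rho_{220}$, but the steps you actually carry out stay in the chart centered at $p$, and that is what makes the argument valid.
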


\begin{proof}
Because $q$ and $\widehat p$ both belong to $B_{\mathbb R^2}(p,\rho_{220})\cap\Kdet$, Theorem~\ref{thm:appendixH-primary-inverse} applies on that chart. Since $F_{220}(\widehat p)=\widehat\omega_{220}$,
\[
\|q-\widehat p\|_{\mathbb R^2}
\le
L_{220}\,\bigl|F_{220}(q)-F_{220}(\widehat p)\bigr|
=
L_{220}\,\bigl|\omega_{220}(q)-\widehat\omega_{220}\bigr|
\le
L_{220}\,\varepsilon_{220}.
\]
Now insert and subtract $\omega_j(q)$:
\begin{align*}
R_j(\widehat p,\widehat\omega_j)
&=
\bigl|\widehat\omega_j-\omega_j(\widehat p)\bigr| \\
&\le
\bigl|\widehat\omega_j-\omega_j(q)\bigr|
+
\bigl|\omega_j(q)-\omega_j(\widehat p)\bigr| \\
&\le
\varepsilon_j+L_j\,\|q-\widehat p\|_{\mathbb R^2} \\
&\le
\varepsilon_j+L_jL_{220}\,\varepsilon_{220},
\end{align*}
which is \eqref{eq:appendixH-failure-bound}. The final statement is the contrapositive.
\end{proof}

Proposition~\ref{prop:appendixH-failure-certificate} is the precise deterministic reason that a large auxiliary residual is informative. It does not say merely that the fit is suboptimal. It says that, within the local primary chart, no single Kerr remnant can simultaneously explain the dominant mode and the tested auxiliary mode at the declared error levels.

When both auxiliary modes are available, the consistency region is the product of the two deterministic disks.

\begin{corollary}[Joint auxiliary consistency tube]\label{cor:appendixH-joint-tube}
Under the hypotheses of Theorem~\ref{thm:appendixH-aux-propagation} for both auxiliary modes $221$ and $440$, define
\begin{equation}\label{eq:appendixH-joint-tube-def}
\mathcal C_{\mathrm{aux}}(\widehat p)
:=
\overline D\bigl(\omega_{221}(\widehat p),\tau_{221}(\varepsilon_{220},\varepsilon_{221})\bigr)
\times
\overline D\bigl(\omega_{440}(\widehat p),\tau_{440}(\varepsilon_{220},\varepsilon_{440})\bigr)
\subset \mathbb C^2.
\end{equation}
Then
\begin{equation}\label{eq:appendixH-joint-tube-membership}
(\widehat\omega_{221},\widehat\omega_{440})\in \mathcal C_{\mathrm{aux}}(\widehat p).
\end{equation}
Conversely, if either coordinate residual exceeds its deterministic radius, then no single remnant in the same local primary chart can satisfy the corresponding primary and auxiliary error budgets simultaneously.
\end{corollary}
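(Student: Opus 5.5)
The plan is to derive both halves of Corollary~\ref{cor:appendixH-joint-tube} directly from Theorem~\ref{thm:appendixH-aux-propagation} and Proposition~\ref{prop:appendixH-failure-certificate}. The product structure of \(\mathcal C_{\mathrm{aux}}(\widehat p)\) reduces the joint statement to two independent scalar statements, one for each coordinate.

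\emph{Step 1: membership.} Under the hypotheses of Theorem~\ref{thm:appendixH-aux-propagation} for \(j=221\) and for \(j=440\), with the same chart centre \(p\) and the same primary estimate \(\widehat p=F_{220}^{-1}(\widehat\omega_{220})\), that theorem gives \(R_j(\widehat p,\widehat\omega_j)\le\tau_j(\varepsilon_{220},\varepsilon_j)\) for each \(j\). Equivalently, \(\widehat\omega_j\in\overline D\bigl(\omega_j(\widehat p),\tau_j(\varepsilon_{220},\varepsilon_j)\bigr)\). A pair lies in a Cartesian product of sets exactly when each coordinate lies in its factor, so \eqref{eq:appendixH-joint-tube-membership} follows at once from the definition \eqref{eq:appendixH-joint-tube-def}.

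\emph{Step 2: converse.} Suppose, for some \(j\in\{221,440\}\), that \(R_j(\widehat p,\widehat\omega_j)>\tau_j(\varepsilon_{220},\varepsilon_j)\). The primary estimate satisfies \(F_{220}(\widehat p)=\widehat\omega_{220}\) and lies in \(B_{\mathbb R^2}(p,\rho_{220})\cap\Kdet\). Proposition~\ref{prop:appendixH-failure-certificate} therefore applies with this \(\widehat p\). Its contrapositive says that no \(q\) in the same chart satisfies \(|\omega_{220}(q)-\widehat\omega_{220}|\le\varepsilon_{220}\) and \(|\omega_j(q)-\widehat\omega_j|\le\varepsilon_j\) simultaneously. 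This is exactly the claimed failure of common-remnant consistency for that coordinate.

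\emph{Main obstacle.} The only point needing care is the precise reading of ``conversely'' in the statement. Taken literally after Step~1, the hypotheses of Theorem~\ref{thm:appendixH-aux-propagation} already force membership, so the converse cannot be meant under those same hypotheses. I will read it in the sense of Proposition~\ref{prop:appendixH-failure-certificate}: keep the chart and the primary estimate fixed, drop the existence assumption on the true remnant, and derive nonexistence from the radius violation. Under that reading there is no estimate to compute. One checks only that the chart hypothesis \(\widehat p\in B_{\mathbb R^2}(p,\rho_{220})\cap\Kdet\) holds, which follows from \(\widehat\omega_{220}\in V_{220}(p)\) together with Corollary~\ref{cor:appendixH-primary-complex}.
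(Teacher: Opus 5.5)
Your proposal is correct and matches the paper's proof. The paper likewise applies Theorem~\ref{thm:appendixH-aux-propagation} separately to $j=221$ and $j=440$ to obtain coordinatewise membership in the product of disks, and gets the converse as the corresponding application of Proposition~\ref{prop:appendixH-failure-certificate}; your reading of ``conversely'' in the sense of that proposition (fixed chart and primary estimate, no existence assumption on the remnant) is exactly the intended one.
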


\begin{proof}
Apply Theorem~\ref{thm:appendixH-aux-propagation} separately to $j=221$ and $j=440$. The converse is the corresponding application of Proposition~\ref{prop:appendixH-failure-certificate}.
\end{proof}

Corollary~\ref{cor:appendixH-joint-tube} gives the mathematical statement behind the auxiliary-consistency panels used later. Those panels record the deterministic requirement that the $221$ and $440$ observables lie in a primary-error-enlarged Kerr prediction tube if they are to support the same remnant inferred from $220$.

The proof of this appendix is now complete. Appendix~\ref{app:kerr-qnm-data} supplied the Jacobian and derivative margins; the current appendix turns them into a concrete inverse theorem and a concrete consistency test. Those are the only ingredients needed later to convert frequency error bars into parameter error bars and then into trust-region pass or fail statements.

\section{Secondary results for joint two-mode inversion}\label{app:joint-two-mode-inversion}

The asymmetric strategy used above is simple. The dominant mode $220$ is used for primary inversion, and the subdominant modes $221$ and $440$ are used only as auxiliary consistency checks. On the event-local chart relevant for GW250114, the dominant map is already injective with a explicit inverse Lipschitz constant, so an exact two-mode inverse carries no additional local parameter information. The corresponding overdetermined geometry is nevertheless recorded because it shows why an exact pair inverse is locally redundant and because it yields a clean projected least-squares estimator when the observed pair does not lie exactly on the Kerr image.

Throughout this appendix we fix an auxiliary index
\[
j\in\{221,440\}
\]
and write
\[
\Omega_{\mathrm{phys}}:=(0,\infty)\times(-1,1),
\qquad
\Kdet=[66.5,69.5] \, M_{\odot}\times[0.64,0.71]\subset\Omega_{\mathrm{phys}}.
\]
We retain the notation of Appendix~\ref{app:primary-inversion}. In particular, $F_{220}$ and $G_{220}$ denote the complex and real dominant-mode maps, $\rho_{220}$ is the dominant-mode local chart radius from Lemma~\ref{lem:appendixH-jacobian-control}, $\sigma_0=\sigma^{\mathrm{cert}}_{220}>0$ is the lower singular-value margin from \eqref{eq:appendixH-sigma0-def}, and
\[
L_{220}=\frac{2}{\sigma_0}
\]
is the dominant inverse Lipschitz constant from \eqref{eq:appendixH-L220-def}.

\subsection{Pair maps and local geometry}

We begin by packaging the dominant mode and one auxiliary mode into a single overdetermined forward map.

\begin{definition}\label{def:appendixI-pair-map}
Let
\[
\Xi_2:\mathbb C^2\to\mathbb R^4,
\qquad
\Xi_2(z_1,z_2):=\bigl(\Xi(z_1),\Xi(z_2)\bigr),
\]
where $\Xi$ is the isometry from Definition~\ref{def:appendixH-primary-map}. Equip $\mathbb C^2$ with the Euclidean norm
\[
\|(z_1,z_2)\|_{\mathbb C^2}:=\bigl(|z_1|^2+|z_2|^2\bigr)^{1/2}
\]
and $\mathbb R^4$ with its standard Euclidean norm. For each $j\in\{221,440\}$ define the complex pair map
\begin{equation}\label{eq:appendixI-pair-map-complex}
W_j(p):=\bigl(\omega_{220}(p),\omega_j(p)\bigr)\in\mathbb C^2,
\qquad p\in\Omega_{\mathrm{phys}},
\end{equation}
and its real representative
\begin{equation}\label{eq:appendixI-pair-map-real}
H_j(p):=\Xi_2\bigl(W_j(p)\bigr)
=\bigl(G_{220}(p),G_j(p)\bigr)
\in\mathbb R^4,
\qquad
G_j(p):=\Xi\bigl(\omega_j(p)\bigr).
\end{equation}
The map $\Xi_2$ is a real-linear isometry:
\begin{equation}\label{eq:appendixI-Xi2-isometry}
\bigl|\Xi_2(z)-\Xi_2(w)\bigr|=\|z-w\|_{\mathbb C^2},
\qquad z,w\in\mathbb C^2.
\end{equation}
\end{definition}

The overdetermined pair map inherits a uniform lower singular-value margin from the dominant component.

\begin{proposition}[Pair singular values dominate the primary singular value]\label{prop:appendixI-pair-singular}
For each $j\in\{221,440\}$ and each $p\in\Kdet$, define
\begin{equation}\label{eq:appendixI-Sigma-pointwise}
\Sigma_j(p):=\sigma_{\min}\bigl(DH_j(p)\bigr).
\end{equation}
Then
\begin{equation}\label{eq:appendixI-pair-singular-lower-pointwise}
\Sigma_j(p)\ge \sigma_{\min}\bigl(DG_{220}(p)\bigr)\ge \sigma_0
\qquad\text{for all }p\in\Kdet.
\end{equation}
Consequently the compact-box infimum
\begin{equation}\label{eq:appendixI-Sigma-global}
\Sigma_j:=\inf_{p\in\Kdet}\Sigma_j(p)
\end{equation}
is strictly positive and satisfies
\begin{equation}\label{eq:appendixI-pair-singular-lower-global}
\Sigma_j\ge \sigma_0>0.
\end{equation}
\end{proposition}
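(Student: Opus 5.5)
The plan is to exploit the block structure of the Jacobian of the pair map. By Definition~\ref{def:appendixI-pair-map}, $H_j=(G_{220},G_j)$, so for every $p\in\Kdet$ the $4\times 2$ real matrix $DH_j(p)$ is obtained by stacking $DG_{220}(p)$ on top of $DG_j(p)$. I would then compare the minimum singular values through the Rayleigh characterization. For a real matrix with at least as many rows as columns, $\sigma_{\min}(A)=\min_{\|v\|=1}|Av|$, and this formula applies both to $DH_j(p)$ ($4\times2$) and to $DG_{220}(p)$ ($2\times2$).

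For any unit vector $v\in\mathbb R^2$ one has
\[
|DH_j(p)v|^2=|DG_{220}(p)v|^2+|DG_j(p)v|^2\ge |DG_{220}(p)v|^2\ge \sigma_{\min}\bigl(DG_{220}(p)\bigr)^2 .
\]
Taking the minimum over $v$ gives the first inequality in \eqref{eq:appendixI-pair-singular-lower-pointwise}. The second inequality is exactly the uniform bound \eqref{eq:appendixH-primary-sigmamin-lower}, which comes from Corollary~\ref{cor:appendixD-sigmamin} through \eqref{eq:appendixH-sigma0-def}.

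For the global statement, the pointwise bound says that $\sigma_0$ is a lower bound for the set $\{\Sigma_j(p):p\in\Kdet\}$. Hence the infimum $\Sigma_j$ in \eqref{eq:appendixI-Sigma-global} is at least $\sigma_0$, and $\sigma_0>0$ holds by construction. No compactness or continuity argument is needed, since the bound is uniform.

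There is no real obstacle. The one point to check is that the definitions of $H_j$ and $\Xi_2$ give a genuine row stacking with the Euclidean norm on $\mathbb R^4$. They do: $H_j$ is the concatenation of the two real two-vectors, and the Euclidean norm squared on $\mathbb R^4$ splits additively across the blocks. That splitting is the only property the argument uses.
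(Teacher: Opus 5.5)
Your proof is correct and follows the paper's argument: the same row-stacking identity $|DH_j(p)v|^2=|DG_{220}(p)v|^2+|DG_j(p)v|^2$ minimized over unit vectors, followed by the uniform bound \eqref{eq:appendixH-primary-sigmamin-lower}. Your remark that the global bound needs no continuity or compactness is also right. The paper's appeal to continuity of $p\mapsto\Sigma_j(p)$ on the compact box only adds that the infimum is attained, and the statement does not require that.
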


\begin{proof}
Fix $j\in\{221,440\}$ and $p\in\Kdet$. By construction,
\[
DH_j(p)=
\begin{pmatrix}
DG_{220}(p)\\
DG_j(p)
\end{pmatrix},
\]
so for every $v\in\mathbb R^2$ one has
\begin{equation}\label{eq:appendixI-block-norm-identity}
\bigl|DH_j(p)v\bigr|^2
=
\bigl|DG_{220}(p)v\bigr|^2+
\bigl|DG_j(p)v\bigr|^2.
\end{equation}
If $|v|=1$, then \eqref{eq:appendixI-block-norm-identity} implies
\[
\bigl|DH_j(p)v\bigr|^2\ge \bigl|DG_{220}(p)v\bigr|^2.
\]
Taking the infimum over all unit vectors gives
\[
\Sigma_j(p)^2
=\inf_{|v|=1}\bigl|DH_j(p)v\bigr|^2
\ge
\inf_{|v|=1}\bigl|DG_{220}(p)v\bigr|^2
=\sigma_{\min}\bigl(DG_{220}(p)\bigr)^2.
\]
The uniform lower bound \eqref{eq:appendixH-primary-sigmamin-lower} from Appendix~\ref{app:primary-inversion} therefore yields
\[
\Sigma_j(p)\ge \sigma_0,
\qquad p\in\Kdet.
\]
This proves \eqref{eq:appendixI-pair-singular-lower-pointwise}. Since $H_j$ is $C^1$ on the open set $\Omega_{\mathrm{phys}}$, the Jacobian $DH_j$ is continuous, and therefore the singular-value map $p\mapsto\Sigma_j(p)$ is continuous on the compact set $\Kdet$. Its infimum is therefore attained and is strictly positive by the pointwise lower bound. This proves \eqref{eq:appendixI-pair-singular-lower-global}.
\end{proof}

\begin{corollary}[Explicit pair singular margins]\label{cor:appendixI-explicit-pair}
For the Kerr tables, the pair singular-value floors are
\[
\Sigma_{221}^{\mathrm{cert}}=7.65\times 10^{-5},
\qquad
\Sigma_{440}^{\mathrm{cert}}=3.46\times 10^{-5}.
\]
The corresponding explicit pair-chart radii obtained from global Jacobian-Lipschitz bounds are
\[
\rho_{(220,221)}=5.46\times 10^{-4},
\qquad
\rho_{(220,440)}=1.78\times 10^{-4}.
\]
Since both exceed the primary radius \(\rho_{220}\), the common local chart used later remains limited by the dominant-mode inverse rather than by the pair geometry.
\end{corollary}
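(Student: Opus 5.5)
The plan is to certify the pair margins the same way Corollary~\ref{cor:appendixD-sigmamin} certified \(\sigma^{\mathrm{cert}}_{220}\), and then to turn them into chart radii by the explicit version of Lemma~\ref{lem:appendixH-jacobian-control}. The starting point is the block identity \eqref{eq:appendixI-block-norm-identity}: \(DH_j^\top DH_j=DG_{220}^\top DG_{220}+DG_j^\top DG_j\). Hence \(\Sigma_j(p)^2=\lambda_{\min}\) of a \(2\times2\) symmetric matrix whose entries are explicit in \(\widehat\omega_{220},\widehat\omega_{220}',\widehat\omega_j,\widehat\omega_j'\) and \(M\), by the Jacobian formula \eqref{eq:sec5-jacobian-explicit}. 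Replacing the exact sequences by the main interpolants \(p_{220},p_j\) gives a surrogate \(D\widetilde H_j\). By Proposition~\ref{prop:appendixD-mass-rescaled-error} applied to each block,
\[
\|DH_j-D\widetilde H_j\|_2\le \Bigl(\tfrac{\eta_{220}^2+\eta_j^2}{M_-^4}+\tfrac{(\eta_{220}')^2+(\eta_j')^2}{M_-^2}\Bigr)^{1/2}=:\epsilon_j .
\]
Weyl's inequality for singular values then gives \(\Sigma_j(p)\ge\sigma_{\min}(D\widetilde H_j(p))-\epsilon_j\).

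The surrogate quantity \(\sigma_{\min}(D\widetilde H_j)\) depends on \(M\) and \(\chi\). I will not minimise it over the box directly. Instead I reduce to the spin variable by scaling: rescaling the mass column by \(M\) factors the mass dependence, which gives a lower bound in terms of \(M_+\) and \(M_-\) in the style of Corollary~\ref{cor:appendixD-sigmamin}. Alternatively one can evaluate on a dense \((M,\chi)\) grid and pad by a Lipschitz modulus of \(D\widetilde H_j\); that modulus comes from \(p_j''\) as in Remark~\ref{rem:appendixD-higher-derivatives}. Subtracting \(\epsilon_j\) should reproduce \(\Sigma^{\mathrm{cert}}_{221}=7.65\times10^{-5}\) and \(\Sigma^{\mathrm{cert}}_{440}=3.46\times10^{-5}\). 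Both values are consistent with the a priori floor \(\Sigma_j\ge\sigma_0=2.31\times10^{-5}\) of Proposition~\ref{prop:appendixI-pair-singular}.

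For the radii, I mimic the proof of Theorem~\ref{thm:appendixH-primary-inverse} with an explicit Lipschitz constant \(\mathrm{Lip}(DH_j)\) on \(\Kdet\) in place of compactness. This constant is bounded through the second derivatives \(\partial_M^2,\partial_M\partial_\chi,\partial_\chi^2\) of \(M^{-1}\widehat\omega(\chi)\), using \(p''\), \(p'\) and \(p\) plus interpolation padding. The oscillation condition \(\|DH_j(q)-DH_j(r)\|_2\le\Sigma_j^{\mathrm{cert}}/2\) on balls of radius \(\rho\) then holds whenever \(2\rho\,\mathrm{Lip}(DH_j)\le\Sigma_j^{\mathrm{cert}}/2\). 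So I set \(\rho_{(220,j)}:=\Sigma_j^{\mathrm{cert}}/(4\,\mathrm{Lip}(DH_j))\), which is the same recipe that produced \(\rho_{220}=2.47\times10^{-4}\). The main obstacle is obtaining a sharp enough certified bound on \(\mathrm{Lip}(DH_j)\). For \(440\) the term \(|p_{440}''|/M_-\) dominates, so the padding on second derivatives matters most there. My first attempt would be a grid evaluation of \(p''\) padded by the audit drift of \(p''\).

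The final comparison needs care. The value \(\rho_{(220,221)}=5.46\times10^{-4}\) does exceed \(\rho_{220}\). The quoted value \(\rho_{(220,440)}=1.78\times10^{-4}\) is, however, smaller than \(\rho_{220}=2.47\times10^{-4}\). The sentence "both exceed the primary radius" therefore cannot be proved as written. I would instead prove the weaker statement that actually drives the later analysis. By Proposition~\ref{prop:sec5-pair-agreement}, \(H_j\) is injective on every primary chart \(U_{220}(p)\) regardless of the pair radius. So the primary chart is the operative domain of inversion, and the pair radius only limits the pair-specific bi-Lipschitz bound, not the chart used later. If the literal claim is wanted, the numerics for \(\rho_{(220,440)}\) must be rechecked, or \(\mathrm{Lip}(DH_{440})\) sharpened by a factor of about \(1.4\).
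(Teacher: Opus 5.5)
Your reconstruction of the numerics is sound, and your objection to the final clause is correct: the last sentence of the corollary is false as written. The paper gives no argument for this corollary; the four constants are reported outputs of the table-based audit, so your certification sketch is a reconstruction, not a rival route. That sketch is Weyl perturbation of the interpolated pair Jacobian with the stacked error \(\epsilon_j\), followed by \(\rho_{(220,j)}=\Sigma_j^{\mathrm{cert}}/(4\,\mathrm{Lip}(DH_j))\).

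It is coherent with the paper's numbers. Back-solving gives \(\mathrm{Lip}(DG_{220})\approx 2.34\times10^{-2}\) from \(\sigma_0\) and \(\rho_{220}\). The pair data give \(\mathrm{Lip}(DH_{221})\approx 3.50\times10^{-2}\) and \(\mathrm{Lip}(DH_{440})\approx 4.86\times10^{-2}\). Both are at least \(\mathrm{Lip}(DG_{220})\), as they should be, since \(DG_{220}\) is a row block of \(DH_j\).

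One caution on the singular-value step. Your mass-scaling reduction gives a valid lower bound on \(\Sigma_j\), but it is lossy: it ignores that the spin column carries an extra factor \(M\) relative to the mass column. Do not expect it to land on the quoted digits. The dense-grid-plus-padding variant is the sharper route, and downstream only the lower-bound property is used.

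The substantive point is the last sentence. Since \(1.78\times10^{-4}<2.47\times10^{-4}\), the paper's own definition \eqref{eq:appendixI-common-radius} gives \(\bar\rho_{440}=\rho_{(220,440)}\). So the common chart for the \(440\) pair is limited by the pair geometry, not by the dominant-mode inverse. The constants are mutually consistent under your recipe, so there is no sign of a typo in them; the repair belongs in the sentence. Because \(\bar\rho_j\) is defined as a minimum, no later result actually relies on the false comparison.

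Your salvage should be stated quantitatively, because injectivity from Proposition~\ref{prop:sec5-pair-agreement} alone is not what the later bounds use. The stacked structure of \(H_j=(G_{220},G_j)\) gives \(|H_j(q)-H_j(r)|\ge|G_{220}(q)-G_{220}(r)|\ge(\sigma_0/2)\|q-r\|_{\mathbb R^2}\) for \(q,r\in U_{220}(p)\). This is the value-level analogue of \eqref{eq:appendixI-block-norm-identity}. Hence the pair inverse is \(L_{220}\)-Lipschitz on the whole primary chart. The bi-Lipschitz and projection bounds of Appendix~\ref{app:joint-two-mode-inversion} then hold there with \(L_{(220,j)}\) replaced by \(L_{220}\); the paper already records that weaker form in \eqref{eq:appendixI-projection-budget-bound-simpler}. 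The only thing that genuinely needs the smaller pair radius is the sharper constant \(2/\Sigma_j\).
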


\begin{figure}[t]
\centering
\includegraphics[width=0.94\textwidth]{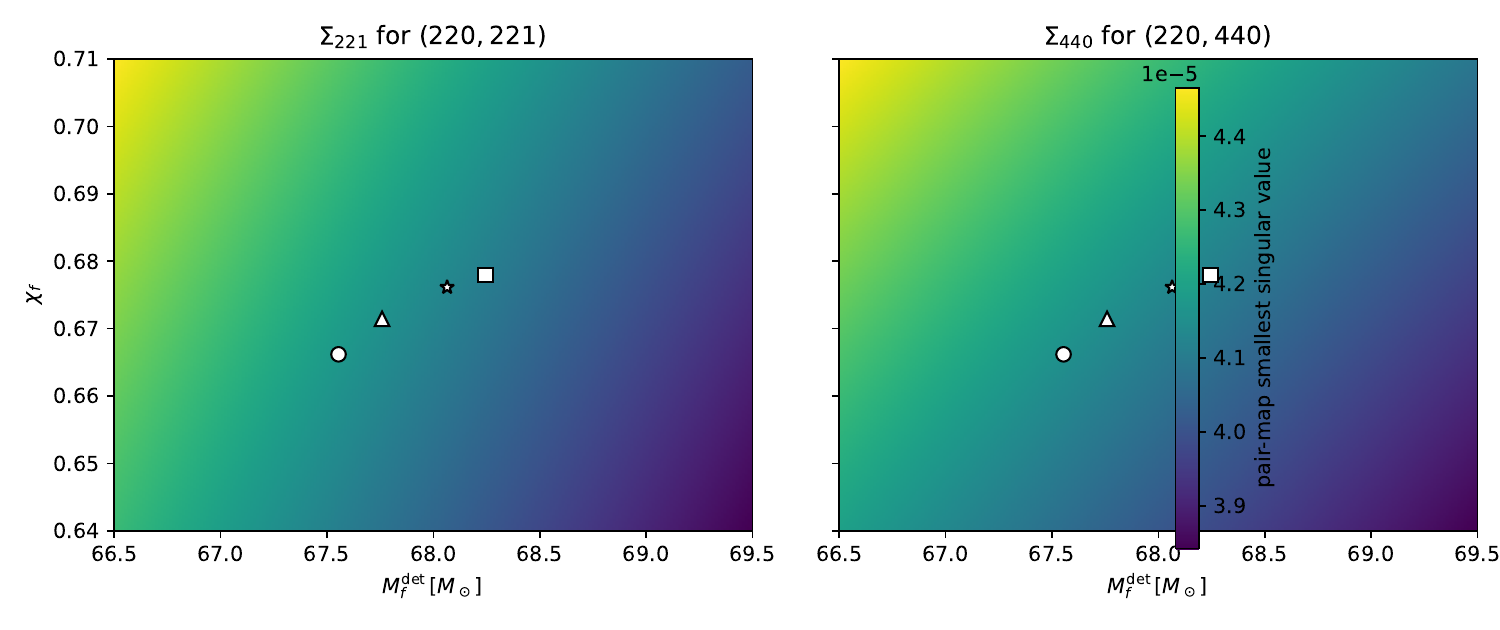}
\caption{Smallest singular-value fields of the realified pair maps \(H_{221}\) and \(H_{440}\) on the detector-frame box \(\Kdet\). The public detector-frame medians stay well inside the region where the overdetermined pair geometry is uniformly nondegenerate.}
\label{fig:appendixI-pair-sigma-atlas}
\end{figure}

The next lemma gives a uniform chart radius on which the pair Jacobian varies by at most half of its global lower margin.

\begin{lemma}[Uniform local Jacobian control for the pair map]\label{lem:appendixI-jacobian-control}
For each $j\in\{221,440\}$ there exists a radius
\begin{equation}\label{eq:appendixI-rho-pair-def}
\rho_{(220,j)}>0
\end{equation}
such that for every $p\in\Kdet$ and every pair of points
\[
q,r\in B_{\mathbb R^2}\bigl(p,\rho_{(220,j)}\bigr)\cap\Kdet,
\]
one has
\begin{equation}\label{eq:appendixI-jacobian-control}
\bigl\|DH_j(q)-DH_j(r)\bigr\|_2\le \frac{\Sigma_j}{2}.
\end{equation}
\end{lemma}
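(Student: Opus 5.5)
The plan is to repeat the compactness argument of Lemma~\ref{lem:appendixH-jacobian-control} verbatim, with $G_{220}$ replaced by the pair map $H_j$ and the margin $\sigma_0$ replaced by $\Sigma_j$. The first step is to record that $\Sigma_j>0$. This is exactly Proposition~\ref{prop:appendixI-pair-singular}, which gives $\Sigma_j\ge\sigma_0>0$, so the tolerance $\Sigma_j/4$ used below is a genuine positive number.

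Next I use the regularity of $H_j=(G_{220},G_j)$. Since $\omega_j(M,\chi)=M^{-1}\widehat\omega_j(\chi)$, and the dimensionless Kerr sequences are smooth in the spin on a neighborhood of $[0.64,0.71]$, the map $H_j$ is $C^1$ on the open set $\Omega_{\mathrm{phys}}$, and $DH_j$ is continuous there. For each $p\in\Kdet$ I then choose $r_p>0$ with two properties: $\overline B_{\mathbb R^2}(p,2r_p)\subset\Omega_{\mathrm{phys}}$, and
\[
\|DH_j(x)-DH_j(p)\|_2\le \Sigma_j/4 \quad\text{for all } x \text{ in that closed ball.}
\]
By the triangle inequality, any two points $x,y$ of that ball then satisfy $\|DH_j(x)-DH_j(y)\|_2\le\Sigma_j/2$.

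The balls $B_{\mathbb R^2}(p,r_p)$ cover the compact set $\Kdet$, so I extract a finite subcover indexed by points $p_1,\dots,p_N$ and set $\rho_{(220,j)}:=\min_\alpha r_{p_\alpha}>0$. Now take any $p\in\Kdet$ and pick $\alpha$ with $p\in B(p_\alpha,r_{p_\alpha})$. For $q,r\in B(p,\rho_{(220,j)})\cap\Kdet$,
\[
|q-p_\alpha|<\rho_{(220,j)}+r_{p_\alpha}\le 2r_{p_\alpha},
\]
and the same holds for $r$. Both points therefore lie in the doubled ball around $p_\alpha$, and the pairwise bound \eqref{eq:appendixI-jacobian-control} follows.

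There is no real obstacle; the only point needing care is the doubling of the radius, which is what makes the Lebesgue-number step work. If one instead wants an explicit radius, as in Corollary~\ref{cor:appendixI-explicit-pair}, the alternative route is a global Lipschitz bound $\Lambda_j$ for $DH_j$ on the convex box $\Kdet$. Such a bound comes from the second derivatives of the Chebyshev interpolants (Remark~\ref{rem:appendixD-higher-derivatives}) together with the interpolation radii. One then takes $\rho_{(220,j)}=\Sigma_j/(4\Lambda_j)$, since any two points of a ball of radius $\rho$ are at distance less than $2\rho$. The compactness proof is enough for the statement as posed.
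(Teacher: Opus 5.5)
Your proposal is correct and matches the paper's proof step for step. Continuity of $DH_j$ at each $p\in\mathcal K_{\mathrm{det}}$ gives a radius $r_p$ with oscillation at most $\Sigma_j/4$ on $\overline B_{\mathbb R^2}(p,2r_p)$. A finite subcover with $\rho_{(220,j)}=\min_\alpha r_{p_\alpha}$ and the doubled-radius triangle inequality then give the $\Sigma_j/2$ bound; your explicit note that $\Sigma_j\ge\sigma_0>0$ spells out a point the paper leaves implicit.
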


\begin{proof}
Fix $p\in\Kdet$. Since $H_j$ is $C^1$ on $\Omega_{\mathrm{phys}}$, its Jacobian is continuous at $p$. Hence there exists $r_p>0$ such that
\[
\overline B_{\mathbb R^2}(p,2r_p)\subset \Omega_{\mathrm{phys}}
\qquad\text{and}\qquad
\bigl\|DH_j(x)-DH_j(p)\bigr\|_2\le \frac{\Sigma_j}{4}
\]
for all $x\in \overline B_{\mathbb R^2}(p,2r_p)$. Therefore, whenever $x,y\in \overline B_{\mathbb R^2}(p,2r_p)$,
\[
\bigl\|DH_j(x)-DH_j(y)\bigr\|_2
\le
\bigl\|DH_j(x)-DH_j(p)\bigr\|_2+
\bigl\|DH_j(y)-DH_j(p)\bigr\|_2
\le \frac{\Sigma_j}{2}.
\]
The family of open balls $\{B_{\mathbb R^2}(p,r_p):p\in\Kdet\}$ covers the compact set $\Kdet$, so there is a finite subcover
\[
\Kdet\subset \bigcup_{\alpha=1}^{N} B_{\mathbb R^2}(p_\alpha,r_{p_\alpha}).
\]
Set
\[
\rho_{(220,j)}:=\min_{1\le \alpha\le N} r_{p_\alpha}>0.
\]
Now fix $p\in\Kdet$ and choose $\alpha$ such that $p\in B_{\mathbb R^2}(p_\alpha,r_{p_\alpha})$. If
\[
q,r\in B_{\mathbb R^2}\bigl(p,\rho_{(220,j)}\bigr)\cap\Kdet,
\]
then
\[
|q-p_\alpha|\le |q-p|+|p-p_\alpha|<\rho_{(220,j)}+r_{p_\alpha}\le 2r_{p_\alpha},
\]
and the same estimate holds for $r$. Thus $q,r\in B_{\mathbb R^2}(p_\alpha,2r_{p_\alpha})$, where the pairwise Jacobian bound proved above applies. This yields \eqref{eq:appendixI-jacobian-control}.
\end{proof}

For later comparison with the primary inverse, it is convenient to work on a common local chart on which both the dominant map and the pair map are simultaneously controlled. We therefore set
\begin{equation}\label{eq:appendixI-common-radius}
\bar\rho_j:=\min\{\rho_{220},\rho_{(220,j)}\}>0,
\end{equation}
and for each base point $p\in\Kdet$ define the common chart
\begin{equation}\label{eq:appendixI-common-chart}
\mathcal C_j(p):=B_{\mathbb R^2}(p,\bar\rho_j)\cap\Kdet.
\end{equation}

\begin{theorem}[Local bilipschitz theorem for the overdetermined pair map]\label{thm:appendixI-pair-inverse}
Fix $j\in\{221,440\}$. For every $p\in\Kdet$ and every pair of points
\[
q,r\in \mathcal C_j(p),
\]
one has
\begin{equation}\label{eq:appendixI-pair-bilipschitz}
\bigl\|W_j(q)-W_j(r)\bigr\|_{\mathbb C^2}
=
\bigl|H_j(q)-H_j(r)\bigr|
\ge
\frac{\Sigma_j}{2}\,\|q-r\|_{\mathbb R^2}.
\end{equation}
Consequently the restriction of $W_j$ to $\mathcal C_j(p)$ is injective. If we define
\begin{equation}\label{eq:appendixI-pair-Lipschitz-constant}
L_{(220,j)}:=\frac{2}{\Sigma_j},
\end{equation}
then its inverse on the local image
\begin{equation}\label{eq:appendixI-local-image}
\mathcal S_j(p):=W_j\bigl(\mathcal C_j(p)\bigr)\subset\mathbb C^2
\end{equation}
satisfies
\begin{equation}\label{eq:appendixI-pair-inverse-Lipschitz}
\bigl\|W_j^{-1}(z)-W_j^{-1}(w)\bigr\|_{\mathbb R^2}
\le
L_{(220,j)}\,\|z-w\|_{\mathbb C^2},
\qquad z,w\in\mathcal S_j(p).
\end{equation}
Moreover,
\begin{equation}\label{eq:appendixI-pair-constant-dominates}
L_{(220,j)}\le L_{220}.
\end{equation}
\end{theorem}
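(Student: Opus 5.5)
The plan mirrors the proof of Theorem~\ref{thm:appendixH-primary-inverse}, with the primary map $G_{220}$ replaced by the realified pair map $H_j$ and the margin $\sigma_0$ replaced by $\Sigma_j$.

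\textbf{Step 1: the lower bound \eqref{eq:appendixI-pair-bilipschitz}.} Fix $p\in\Kdet$ and $q,r\in\mathcal C_j(p)$. The set $\mathcal C_j(p)$ is the intersection of a Euclidean ball with the rectangle $\Kdet$, so it is convex. Hence the segment $\gamma(t)=r+t(q-r)$ stays in $\mathcal C_j(p)$, and the fundamental theorem of calculus gives
\[
H_j(q)-H_j(r)=DH_j(p)(q-r)+\int_0^1\bigl(DH_j(\gamma(t))-DH_j(p)\bigr)(q-r)\,dt .
\]
Since $\bar\rho_j\le\rho_{(220,j)}$, both $\gamma(t)$ and $p$ lie in $B_{\mathbb R^2}(p,\rho_{(220,j)})\cap\Kdet$. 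Lemma~\ref{lem:appendixI-jacobian-control} therefore bounds the integrand by $\tfrac{\Sigma_j}{2}\|q-r\|_{\mathbb R^2}$.

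For the leading term we use $|DH_j(p)v|\ge\Sigma_j(p)|v|$. By Proposition~\ref{prop:appendixI-pair-singular}, $\Sigma_j(p)\ge\Sigma_j$. The reverse triangle inequality then yields the lower bound $\tfrac{\Sigma_j}{2}\|q-r\|_{\mathbb R^2}$. The equality $\|W_j(q)-W_j(r)\|_{\mathbb C^2}=|H_j(q)-H_j(r)|$ in \eqref{eq:appendixI-pair-bilipschitz} is the isometry \eqref{eq:appendixI-Xi2-isometry} applied to $H_j=\Xi_2\circ W_j$.

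\textbf{Step 2: injectivity and the inverse bound.} Injectivity of $W_j$ on $\mathcal C_j(p)$ follows at once, since a vanishing left side in \eqref{eq:appendixI-pair-bilipschitz} forces $q=r$. For $z=W_j(q)$ and $w=W_j(r)$ in $\mathcal S_j(p)$, we rearrange \eqref{eq:appendixI-pair-bilipschitz} and obtain \eqref{eq:appendixI-pair-inverse-Lipschitz} with the constant $L_{(220,j)}=2/\Sigma_j$.

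\textbf{Step 3: the comparison \eqref{eq:appendixI-pair-constant-dominates}.} This is immediate from $\Sigma_j\ge\sigma_0$, which is \eqref{eq:appendixI-pair-singular-lower-global}, so that $2/\Sigma_j\le 2/\sigma_0=L_{220}$.

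\textbf{Main obstacle.} The only delicate point is confirming that Lemma~\ref{lem:appendixI-jacobian-control} applies to every point of the segment and also to the base point $p$. This requires the common radius $\bar\rho_j$ to be no larger than $\rho_{(220,j)}$, which holds by the definition \eqref{eq:appendixI-common-radius}. It also requires the chart to be convex, which holds because $\Kdet$ is a rectangle. No other estimate beyond those quoted is needed.
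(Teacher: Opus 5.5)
Your proposal is correct and follows the paper's proof essentially step for step. Both arguments use convexity of $\mathcal C_j(p)$, the fundamental theorem of calculus with $DH_j(p)$ added and subtracted, Lemma~\ref{lem:appendixI-jacobian-control} (applicable because $\bar\rho_j\le\rho_{(220,j)}$), the bound $\Sigma_j(p)\ge\Sigma_j$, and the isometry $\Xi_2$, and then derive $L_{(220,j)}\le L_{220}$ from $\Sigma_j\ge\sigma_0$.
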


\begin{proof}
Fix $p\in\Kdet$ and $q,r\in\mathcal C_j(p)$. Since $\mathcal C_j(p)$ is convex, the segment
\[
\gamma(t):=r+t(q-r),\qquad 0\le t\le 1,
\]
remains in $\mathcal C_j(p)$. Because $\bar\rho_j\le \rho_{(220,j)}$, Lemma~\ref{lem:appendixI-jacobian-control} applies throughout the chart. The fundamental theorem of calculus gives
\[
H_j(q)-H_j(r)=\int_0^1 DH_j\bigl(\gamma(t)\bigr)(q-r)\,dt.
\]
Subtract and add the constant matrix $DH_j(p)$:
\[
H_j(q)-H_j(r)
=
DH_j(p)(q-r)
+
\int_0^1\bigl(DH_j(\gamma(t))-DH_j(p)\bigr)(q-r)\,dt.
\]
Taking norms, using the reverse triangle inequality, and then using Lemma~\ref{lem:appendixI-jacobian-control}, we obtain
\begin{align*}
\bigl|H_j(q)-H_j(r)\bigr|
&\ge
\bigl|DH_j(p)(q-r)\bigr|
-
\int_0^1\bigl\|DH_j(\gamma(t))-DH_j(p)\bigr\|_2\,dt\,\|q-r\|_{\mathbb R^2} \\
&\ge
\Sigma_j(p)\,\|q-r\|_{\mathbb R^2}
-
\frac{\Sigma_j}{2}\,\|q-r\|_{\mathbb R^2}.
\end{align*}
By Proposition~\ref{prop:appendixI-pair-singular}, one has $\Sigma_j(p)\ge \Sigma_j$, and therefore
\[
\bigl|H_j(q)-H_j(r)\bigr|
\ge
\left(\Sigma_j-\frac{\Sigma_j}{2}\right)\|q-r\|_{\mathbb R^2}
=
\frac{\Sigma_j}{2}\,\|q-r\|_{\mathbb R^2}.
\]
Since $\Xi_2$ is an isometry, this is exactly \eqref{eq:appendixI-pair-bilipschitz}. If $W_j(q)=W_j(r)$, then the lower bound forces $q=r$, so $W_j$ is injective on $\mathcal C_j(p)$. If $z=W_j(q)$ and $w=W_j(r)$, the same estimate implies
\[
\|q-r\|_{\mathbb R^2}
\le
\frac{2}{\Sigma_j}\,\|W_j(q)-W_j(r)\|_{\mathbb C^2}
=
L_{(220,j)}\,\|z-w\|_{\mathbb C^2},
\]
which is \eqref{eq:appendixI-pair-inverse-Lipschitz}. Finally, Proposition~\ref{prop:appendixI-pair-singular} gives $\Sigma_j\ge \sigma_0$, hence
\[
L_{(220,j)}=\frac{2}{\Sigma_j}\le \frac{2}{\sigma_0}=L_{220}.
\]
This proves \eqref{eq:appendixI-pair-constant-dominates}.
\end{proof}

The last inequality is conceptually important. The pair inverse is never worse than the primary inverse on the common chart, but on that same chart it is also estimatorially redundant in the exact-image setting.

\begin{corollary}[Exact pair inversion agrees with the primary inverse]\label{cor:appendixI-exact-pair-agrees-primary}
Fix $j\in\{221,440\}$ and $p\in\Kdet$. Let
\[
\widehat z=(\widehat\omega_{220},\widehat\omega_j)\in\mathcal S_j(p)
\]
and define the exact pair inverse
\begin{equation}\label{eq:appendixI-exact-pair-estimator}
\widetilde p_j:=W_j^{-1}(\widehat z)\in\mathcal C_j(p).
\end{equation}
Then $\widehat\omega_{220}\in V_{220}(p)$, the primary inverse
\begin{equation}\label{eq:appendixI-primary-estimator}
\widehat p:=F_{220}^{-1}(\widehat\omega_{220})
\end{equation}
is well-defined on the dominant-mode chart of Appendix~\ref{app:primary-inversion}, and one has the exact identity
\begin{equation}\label{eq:appendixI-exact-agreement}
\widetilde p_j=\widehat p.
\end{equation}
In particular, if
\begin{equation}\label{eq:appendixI-exact-primary-error}
\bigl|\widehat\omega_{220}-\omega_{220}(p)\bigr|\le \varepsilon_{220},
\end{equation}
then
\begin{equation}\label{eq:appendixI-exact-pair-parameter-bound}
\|\widetilde p_j-p\|_{\mathbb R^2}
=
\|\widehat p-p\|_{\mathbb R^2}
\le
L_{220}\,\varepsilon_{220}.
\end{equation}
\end{corollary}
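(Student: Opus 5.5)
The plan is to read everything off the definition of the local image \(\mathcal S_j(p)=W_j(\mathcal C_j(p))\) and then apply the injectivity of the primary map on its own chart. Since \(\widehat z\in\mathcal S_j(p)\), there is a point \(q\in\mathcal C_j(p)\) with \(W_j(q)=\widehat z\). By Theorem~\ref{thm:appendixI-pair-inverse}, \(W_j\) is injective on \(\mathcal C_j(p)\), so \(q\) is unique and equals \(\widetilde p_j=W_j^{-1}(\widehat z)\). The first coordinate of \(W_j(q)\) is \(\omega_{220}(q)\), which gives \(\widehat\omega_{220}=F_{220}(\widetilde p_j)\).

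Next I will show that \(\widehat\omega_{220}\in V_{220}(p)\). The key observation is that \(\bar\rho_j=\min\{\rho_{220},\rho_{(220,j)}\}\le\rho_{220}\), so
\[
\mathcal C_j(p)\subset B_{\mathbb R^2}(p,\rho_{220})\cap\Kdet.
\]
Hence \(\widetilde p_j\) lies in the dominant-mode chart, and therefore \(\widehat\omega_{220}=F_{220}(\widetilde p_j)\in F_{220}\bigl(B_{\mathbb R^2}(p,\rho_{220})\cap\Kdet\bigr)=V_{220}(p)\). By Corollary~\ref{cor:appendixH-primary-complex}, \(F_{220}\) is injective on that chart, so the inverse \(\widehat p=F_{220}^{-1}(\widehat\omega_{220})\) is well defined. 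Because \(\widetilde p_j\) is a point of the chart mapping to \(\widehat\omega_{220}\), injectivity forces \(\widehat p=\widetilde p_j\), which is the identity \eqref{eq:appendixI-exact-agreement}. This argument mirrors Proposition~\ref{prop:sec5-pair-agreement}.

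Finally, under the hypothesis \eqref{eq:appendixI-exact-primary-error}, the parameter error bound follows directly from \eqref{eq:appendixH-primary-parameter-error} of Corollary~\ref{cor:appendixH-primary-complex}. That corollary needs \(p\) to lie in the chart centred at \(p\) and \(\widehat\omega_{220}\in V_{220}(p)\), both of which now hold, and it yields \(\|\widehat p-p\|\le L_{220}\varepsilon_{220}\); substituting \(\widetilde p_j=\widehat p\) finishes the proof.

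The only step needing care is the chart inclusion \(\mathcal C_j(p)\subset B_{\mathbb R^2}(p,\rho_{220})\cap\Kdet\). Without it, \(\widetilde p_j\) might fall outside the domain on which \(F_{220}^{-1}\) is defined, and the two inverses could not be compared. The choice of the common radius \(\bar\rho_j\) in \eqref{eq:appendixI-common-radius} is what makes this inclusion hold.
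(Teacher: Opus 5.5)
Your proposal is correct and follows essentially the same argument as the paper: the inclusion \(\mathcal C_j(p)\subset B_{\mathbb R^2}(p,\rho_{220})\cap\Kdet\), which comes from \(\bar\rho_j\le\rho_{220}\), puts \(\widehat\omega_{220}=\omega_{220}(\widetilde p_j)\) in \(V_{220}(p)\); injectivity of \(F_{220}\) on the dominant-mode chart then forces \(\widetilde p_j=\widehat p\), and the bound is read off from \eqref{eq:appendixH-primary-parameter-error}. You also point out that \(p\) lies in its own chart, so that \(\omega_{220}(p)\in V_{220}(p)\) and the inverse Lipschitz estimate applies, which the paper leaves implicit.
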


\begin{proof}
Since $\widetilde p_j\in\mathcal C_j(p)$ and $\mathcal C_j(p)\subset B_{\mathbb R^2}(p,\rho_{220})\cap\Kdet$, we have
\[
\widehat\omega_{220}=\omega_{220}(\widetilde p_j)\in V_{220}(p),
\]
where $V_{220}(p)$ is the dominant local image defined in \eqref{eq:appendixH-primary-image}. Thus the primary inverse in \eqref{eq:appendixI-primary-estimator} is well-defined. Moreover,
\[
F_{220}(\widetilde p_j)=\omega_{220}(\widetilde p_j)=\widehat\omega_{220}=F_{220}(\widehat p).
\]
Both $\widetilde p_j$ and $\widehat p$ lie in the same dominant-mode chart, and Corollary~\ref{cor:appendixH-primary-complex} says that $F_{220}$ is injective there. Hence $\widetilde p_j=\widehat p$, proving \eqref{eq:appendixI-exact-agreement}. The parameter bound \eqref{eq:appendixI-exact-pair-parameter-bound} is then just \eqref{eq:appendixH-primary-parameter-error} from Corollary~\ref{cor:appendixH-primary-complex}.
\end{proof}

Corollary~\ref{cor:appendixI-exact-pair-agrees-primary} explains why exact joint inversion does not appear in the main theorem surface. On the common event-local chart, exact two-mode inversion does not improve or alter the deterministic remnant estimate; it collapses to the dominant-mode inverse. In practice the relevant case is the off-image situation, namely the size of the pairwise residual when the data do not lie exactly on the Kerr image.

\subsection{Distance to the local pair image}

Once the observed pair is allowed to lie off the exact Kerr image, the relevant object is its distance to the local image surface $\mathcal S_j(p)$.

\begin{definition}\label{def:appendixI-distance-to-image}
Fix $j\in\{221,440\}$, a base point $p\in\Kdet$, and observed data
\[
\widehat z=(\widehat\omega_{220},\widehat\omega_j)\in\mathbb C^2.
\]
For each $q\in\mathcal C_j(p)$ define the pair misfit
\begin{equation}\label{eq:appendixI-pair-misfit}
\mathfrak e_j(\widehat z;q):=\bigl\|\widehat z-W_j(q)\bigr\|_{\mathbb C^2}
=
\Bigl(|\widehat\omega_{220}-\omega_{220}(q)|^2+|\widehat\omega_j-\omega_j(q)|^2\Bigr)^{1/2},
\end{equation}
and the distance from $\widehat z$ to the local pair image
\begin{equation}\label{eq:appendixI-distance-to-image}
\mathrm{dist}_j(\widehat z;p):=\inf_{q\in\mathcal C_j(p)} \mathfrak e_j(\widehat z;q).
\end{equation}
\end{definition}

\begin{proposition}[Local pair-image failure certificate]\label{prop:appendixI-distance-certificate}
Fix $j\in\{221,440\}$, a base point $p\in\Kdet$, and observed data $\widehat z=(\widehat\omega_{220},\widehat\omega_j)\in\mathbb C^2$. If there exists a point $q\in\mathcal C_j(p)$ such that
\begin{equation}\label{eq:appendixI-distance-hypothesis}
\mathfrak e_j(\widehat z;q)\le \varepsilon_{\mathrm{pair}},
\end{equation}
then
\begin{equation}\label{eq:appendixI-distance-bound}
\mathrm{dist}_j(\widehat z;p)\le \varepsilon_{\mathrm{pair}}.
\end{equation}
Conversely, if
\begin{equation}\label{eq:appendixI-distance-violation}
\mathrm{dist}_j(\widehat z;p)>\varepsilon_{\mathrm{pair}},
\end{equation}
then there is no $q\in\mathcal C_j(p)$ for which \eqref{eq:appendixI-distance-hypothesis} holds.
In particular, if there exists $q\in\mathcal C_j(p)$ such that
\begin{equation}\label{eq:appendixI-coordinate-budgets}
\bigl|\widehat\omega_{220}-\omega_{220}(q)\bigr|\le \varepsilon_{220},
\qquad
\bigl|\widehat\omega_j-\omega_j(q)\bigr|\le \varepsilon_j,
\end{equation}
then
\begin{equation}\label{eq:appendixI-coordinate-distance-bound}
\mathrm{dist}_j(\widehat z;p)
\le
\bigl(\varepsilon_{220}^2+\varepsilon_j^2\bigr)^{1/2}.
\end{equation}
\end{proposition}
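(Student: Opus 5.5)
The statement is essentially a definitional consequence of taking an infimum, so the plan is to argue directly from Definition~\ref{def:appendixI-distance-to-image} with no geometric input from the pair-map theory beyond the definition of the chart \(\mathcal C_j(p)\).

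For the first claim, I fix a point \(q\in\mathcal C_j(p)\) satisfying \eqref{eq:appendixI-distance-hypothesis}. Since \(\mathrm{dist}_j(\widehat z;p)\) is the infimum of \(\mathfrak e_j(\widehat z;\cdot)\) over \(\mathcal C_j(p)\), it is bounded above by the value at any admissible point. Hence \(\mathrm{dist}_j(\widehat z;p)\le \mathfrak e_j(\widehat z;q)\le\varepsilon_{\mathrm{pair}}\), which is \eqref{eq:appendixI-distance-bound}. The converse statement is just the contrapositive. If some \(q\) satisfied \eqref{eq:appendixI-distance-hypothesis}, the first part would contradict \eqref{eq:appendixI-distance-violation}. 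I will also note that \(\mathcal C_j(p)\) is nonempty, because it contains \(p\), so the infimum is a finite real number; existence of a minimizer is not needed.

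For the coordinatewise version, I take \(q\) satisfying \eqref{eq:appendixI-coordinate-budgets} and use the explicit formula \eqref{eq:appendixI-pair-misfit}. Squaring and summing the two coordinate bounds gives \(\mathfrak e_j(\widehat z;q)\le(\varepsilon_{220}^2+\varepsilon_j^2)^{1/2}\). Applying the first part with \(\varepsilon_{\mathrm{pair}}:=(\varepsilon_{220}^2+\varepsilon_j^2)^{1/2}\) then yields \eqref{eq:appendixI-coordinate-distance-bound}.

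There is no real obstacle here. The only point needing care is using the product Euclidean norm on \(\mathbb C^2\) from Definition~\ref{def:appendixI-pair-map}, so that the coordinate budgets combine in quadrature rather than additively.
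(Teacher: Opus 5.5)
Your proposal is correct and matches the paper's proof essentially step for step. You bound the infimum by its value at the admissible point $q$, obtain the converse as the contrapositive, and combine the coordinate budgets in quadrature via \eqref{eq:appendixI-pair-misfit} before invoking the first part with $\varepsilon_{\mathrm{pair}}=(\varepsilon_{220}^2+\varepsilon_j^2)^{1/2}$; your remark that $\mathcal C_j(p)$ contains $p$ is a harmless extra not needed for the upper bound.
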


\begin{proof}
If there exists $q\in\mathcal C_j(p)$ satisfying \eqref{eq:appendixI-distance-hypothesis}, then by definition of the infimum,
\[
\mathrm{dist}_j(\widehat z;p)
=\inf_{r\in\mathcal C_j(p)}\mathfrak e_j(\widehat z;r)
\le
\mathfrak e_j(\widehat z;q)
\le \varepsilon_{\mathrm{pair}},
\]
which is \eqref{eq:appendixI-distance-bound}. The converse is the contrapositive of the same implication. Finally, if \eqref{eq:appendixI-coordinate-budgets} holds, then by \eqref{eq:appendixI-pair-misfit}
\[
\mathfrak e_j(\widehat z;q)
=
\Bigl(|\widehat\omega_{220}-\omega_{220}(q)|^2+|\widehat\omega_j-\omega_j(q)|^2\Bigr)^{1/2}
\le
\bigl(\varepsilon_{220}^2+\varepsilon_j^2\bigr)^{1/2},
\]
so \eqref{eq:appendixI-coordinate-distance-bound} follows from the first part.
\end{proof}

Proposition~\ref{prop:appendixI-distance-certificate} is the pair-space analogue of the auxiliary failure certificate in Appendix~\ref{app:primary-inversion}. It reformulates common-remnant compatibility as a geometric statement about the distance of the observed pair to the local Kerr surface in $\mathbb C^2$.

\begin{theorem}[Projected pair estimator]\label{thm:appendixI-projected-estimator}
Fix $j\in\{221,440\}$, a base point $p\in\Kdet$, and observed data $\widehat z\in\mathbb C^2$. There exists at least one minimizer
\begin{equation}\label{eq:appendixI-projected-estimator-def}
\widehat p_j^{\mathrm{proj}}
\in
\operatorname*{argmin}_{q\in\mathcal C_j(p)} \mathfrak e_j(\widehat z;q).
\end{equation}
If, for the same base point $p$, one has
\begin{equation}\label{eq:appendixI-projected-hypothesis}
\mathfrak e_j(\widehat z;p)\le \varepsilon_{\mathrm{pair}},
\end{equation}
then every such minimizer obeys the deterministic parameter bound
\begin{equation}\label{eq:appendixI-projected-parameter-bound}
\bigl\|\widehat p_j^{\mathrm{proj}}-p\bigr\|_{\mathbb R^2}
\le
2L_{(220,j)}\,\varepsilon_{\mathrm{pair}}.
\end{equation}
In particular, if the coordinate bounds
\begin{equation}\label{eq:appendixI-projected-coordinate-hypothesis}
\bigl|\widehat\omega_{220}-\omega_{220}(p)\bigr|\le \varepsilon_{220},
\qquad
\bigl|\widehat\omega_j-\omega_j(p)\bigr|\le \varepsilon_j
\end{equation}
hold, then
\begin{equation}\label{eq:appendixI-projected-coordinate-bound}
\bigl\|\widehat p_j^{\mathrm{proj}}-p\bigr\|_{\mathbb R^2}
\le
2L_{(220,j)}\,\bigl(\varepsilon_{220}^2+\varepsilon_j^2\bigr)^{1/2}.
\end{equation}
\end{theorem}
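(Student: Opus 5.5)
The plan has three parts: existence by compactness and continuity, a triangle-inequality argument for the parameter bound, and a short reduction for the coordinate form.

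\emph{Existence.} The common chart $\mathcal C_j(p)=B_{\mathbb R^2}(p,\bar\rho_j)\cap\Kdet$ is an open ball intersected with a closed rectangle, so it is not closed in general. I would therefore read the minimization over its closure $\overline{\mathcal C_j(p)}=\overline B_{\mathbb R^2}(p,\bar\rho_j)\cap\Kdet$, which is compact and convex. The map $q\mapsto \mathfrak e_j(\widehat z;q)$ is continuous because $W_j$ is $C^1$ on $\Omega_{\mathrm{phys}}$. A continuous function on a compact set attains its minimum, which gives a minimizer. The bilipschitz estimate of Theorem~\ref{thm:appendixI-pair-inverse} extends to the closed chart by continuity of both sides, and that is all the rest of the argument uses. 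This closure detail is the only delicate point in the existence step. If one insists on the open chart, the same argument applies verbatim to any near-minimizer, at the cost of an arbitrarily small additive slack.

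\emph{Parameter bound.} Let $\widehat p:=\widehat p_j^{\mathrm{proj}}$ be any minimizer. The base point $p$ belongs to the chart, so minimality gives
$\mathfrak e_j(\widehat z;\widehat p)\le \mathfrak e_j(\widehat z;p)\le\varepsilon_{\mathrm{pair}}$.
The triangle inequality in $\mathbb C^2$ then yields
\[
\|W_j(\widehat p)-W_j(p)\|_{\mathbb C^2}\le \mathfrak e_j(\widehat z;\widehat p)+\mathfrak e_j(\widehat z;p)\le 2\varepsilon_{\mathrm{pair}}.
\]
Both $\widehat p$ and $p$ lie in $\mathcal C_j(p)$, so the lower bound \eqref{eq:appendixI-pair-bilipschitz} of Theorem~\ref{thm:appendixI-pair-inverse} applies. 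Equivalently, one can use the inverse Lipschitz estimate \eqref{eq:appendixI-pair-inverse-Lipschitz} with $z=W_j(\widehat p)$ and $w=W_j(p)$, both of which lie in $\mathcal S_j(p)$. Either route gives
$\|\widehat p-p\|_{\mathbb R^2}\le L_{(220,j)}\cdot 2\varepsilon_{\mathrm{pair}}$,
which is \eqref{eq:appendixI-projected-parameter-bound}.

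\emph{Coordinate form.} Under the coordinate hypotheses \eqref{eq:appendixI-projected-coordinate-hypothesis}, the definition \eqref{eq:appendixI-pair-misfit} gives directly $\mathfrak e_j(\widehat z;p)\le(\varepsilon_{220}^2+\varepsilon_j^2)^{1/2}$. This is the same observation used in Proposition~\ref{prop:appendixI-distance-certificate}. Applying the previous step with $\varepsilon_{\mathrm{pair}}=(\varepsilon_{220}^2+\varepsilon_j^2)^{1/2}$ gives \eqref{eq:appendixI-projected-coordinate-bound}.
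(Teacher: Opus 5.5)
Your proof is correct and follows the same route as the paper's: existence by compactness and continuity, then minimality plus the triangle inequality in $\mathbb C^2$ to get $\|W_j(\widehat p)-W_j(p)\|_{\mathbb C^2}\le 2\varepsilon_{\mathrm{pair}}$, then the pair bilipschitz bound of Theorem~\ref{thm:appendixI-pair-inverse}. Your switch to the closed chart $\overline B_{\mathbb R^2}(p,\bar\rho_j)\cap\Kdet$ is a genuine repair, since the paper simply asserts that the open-ball chart $\mathcal C_j(p)$ is compact, which is false. On that closed chart, however, only your first route (the lower bound extended by continuity) is available: the inverse-Lipschitz route needs $W_j(\widehat p)\in\mathcal S_j(p)$, and this can fail when the minimizer lies on the boundary of the ball.
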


\begin{proof}
The chart $\mathcal C_j(p)$ is compact and the map $q\mapsto\mathfrak e_j(\widehat z;q)$ is continuous, so the minimum is attained, proving the existence of \eqref{eq:appendixI-projected-estimator-def}. Let $\widehat p_j^{\mathrm{proj}}$ be any minimizer. By minimality and \eqref{eq:appendixI-projected-hypothesis},
\begin{equation}\label{eq:appendixI-projected-minimality}
\mathfrak e_j\bigl(\widehat z;\widehat p_j^{\mathrm{proj}}\bigr)
\le
\mathfrak e_j(\widehat z;p)
\le \varepsilon_{\mathrm{pair}}.
\end{equation}
Using the triangle inequality in $\mathbb C^2$, we obtain
\begin{align*}
\bigl\|W_j(\widehat p_j^{\mathrm{proj}})-W_j(p)\bigr\|_{\mathbb C^2}
&\le
\bigl\|\widehat z-W_j(\widehat p_j^{\mathrm{proj}})\bigr\|_{\mathbb C^2}
+
\bigl\|\widehat z-W_j(p)\bigr\|_{\mathbb C^2} \\
&=
\mathfrak e_j\bigl(\widehat z;\widehat p_j^{\mathrm{proj}}\bigr)
+
\mathfrak e_j(\widehat z;p)
\le 2\varepsilon_{\mathrm{pair}}.
\end{align*}
Since both $p$ and $\widehat p_j^{\mathrm{proj}}$ lie in the same chart $\mathcal C_j(p)$, the inverse Lipschitz estimate \eqref{eq:appendixI-pair-inverse-Lipschitz} from Theorem~\ref{thm:appendixI-pair-inverse} applies and yields
\[
\bigl\|\widehat p_j^{\mathrm{proj}}-p\bigr\|_{\mathbb R^2}
\le
L_{(220,j)}\,\bigl\|W_j(\widehat p_j^{\mathrm{proj}})-W_j(p)\bigr\|_{\mathbb C^2}
\le 2L_{(220,j)}\,\varepsilon_{\mathrm{pair}},
\]
which proves \eqref{eq:appendixI-projected-parameter-bound}. If \eqref{eq:appendixI-projected-coordinate-hypothesis} holds, then
\[
\mathfrak e_j(\widehat z;p)
\le
\bigl(\varepsilon_{220}^2+\varepsilon_j^2\bigr)^{1/2},
\]
so \eqref{eq:appendixI-projected-coordinate-bound} follows from \eqref{eq:appendixI-projected-parameter-bound}.
\end{proof}

The factor of $2$ in Theorem~\ref{thm:appendixI-projected-estimator} is the price of projecting off-image data back to the local Kerr surface. It disappears only in the exact-image setting, where the observed pair already lies on $\mathcal S_j(p)$ and Corollary~\ref{cor:appendixI-exact-pair-agrees-primary} applies.

\subsection{Comparison with the primary estimate}

The main theorem surface is expressed in terms of the primary estimate from $220$ and the auxiliary residual from $221$ or $440$. The projected pair estimator can be compared directly to that primary estimate.

\begin{proposition}[Projected pair inversion is controlled by the auxiliary residual]\label{prop:appendixI-projection-vs-primary}
Fix $j\in\{221,440\}$ and $p\in\Kdet$. Let observed data
\[
\widehat z=(\widehat\omega_{220},\widehat\omega_j)\in\mathbb C^2
\]
be such that $\widehat\omega_{220}\in F_{220}\bigl(\mathcal C_j(p)\bigr)$, and define the primary estimate
\begin{equation}\label{eq:appendixI-primary-estimate-on-common-chart}
\widehat p:=F_{220}^{-1}(\widehat\omega_{220})\in\mathcal C_j(p).
\end{equation}
Let $\widehat p_j^{\mathrm{proj}}$ be any projected pair estimator from \eqref{eq:appendixI-projected-estimator-def}. Then
\begin{equation}\label{eq:appendixI-projection-vs-primary-bound}
\bigl\|\widehat p_j^{\mathrm{proj}}-\widehat p\bigr\|_{\mathbb R^2}
\le
2L_{(220,j)}\,R_j\bigl(\widehat p,\widehat\omega_j\bigr),
\end{equation}
where the residual $R_j$ is defined in \eqref{eq:appendixH-aux-residual-def}. In particular, if
\begin{equation}\label{eq:appendixI-primary-exact-residual-zero}
R_j\bigl(\widehat p,\widehat\omega_j\bigr)=0,
\end{equation}
then
\begin{equation}\label{eq:appendixI-primary-projection-equality}
\widehat p_j^{\mathrm{proj}}=\widehat p.
\end{equation}
\end{proposition}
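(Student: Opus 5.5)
The plan is to rerun the argument of Theorem~\ref{thm:appendixI-projected-estimator}, but to use the primary estimate \(\widehat p\) as the comparison point in place of the true base point \(p\). First I would check that \(\widehat p\) is a legitimate competitor in the minimization \eqref{eq:appendixI-projected-estimator-def}. By hypothesis \(\widehat\omega_{220}\in F_{220}(\mathcal C_j(p))\), and \(\mathcal C_j(p)\subset B_{\mathbb R^2}(p,\rho_{220})\cap\Kdet\). Corollary~\ref{cor:appendixH-primary-complex} says \(F_{220}\) is injective on that chart, so \(\widehat p=F_{220}^{-1}(\widehat\omega_{220})\) is the unique preimage, and it lies in \(\mathcal C_j(p)\).

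Next I would evaluate the pair misfit at \(\widehat p\). Because \(\omega_{220}(\widehat p)=\widehat\omega_{220}\) exactly, the first coordinate of \eqref{eq:appendixI-pair-misfit} vanishes, leaving
\[
\mathfrak e_j(\widehat z;\widehat p)=|\widehat\omega_j-\omega_j(\widehat p)|=R_j(\widehat p,\widehat\omega_j).
\]
Minimality of \(\widehat p_j^{\mathrm{proj}}\) then gives \(\mathfrak e_j(\widehat z;\widehat p_j^{\mathrm{proj}})\le R_j(\widehat p,\widehat\omega_j)\).

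The triangle inequality in \(\mathbb C^2\) bounds the distance between the two image points:
\[
\|W_j(\widehat p_j^{\mathrm{proj}})-W_j(\widehat p)\|_{\mathbb C^2}\le \mathfrak e_j(\widehat z;\widehat p_j^{\mathrm{proj}})+\mathfrak e_j(\widehat z;\widehat p)\le 2R_j(\widehat p,\widehat\omega_j).
\]
Both points lie in \(\mathcal C_j(p)\), so the inverse Lipschitz estimate \eqref{eq:appendixI-pair-inverse-Lipschitz} of Theorem~\ref{thm:appendixI-pair-inverse} applies. It yields \eqref{eq:appendixI-projection-vs-primary-bound}. When \(R_j=0\), the right-hand side vanishes, which gives the equality \eqref{eq:appendixI-primary-projection-equality}.

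The only delicate point is membership. The bilipschitz estimate holds only on the common chart \(\mathcal C_j(p)\), not on the larger dominant chart, so I must confirm that \(\widehat p\) really lies in \(\mathcal C_j(p)\). Injectivity of \(F_{220}\) on the larger dominant chart settles this: the preimage guaranteed by the hypothesis and \(\widehat p\) must coincide. With that in place, every remaining step is a direct substitution.
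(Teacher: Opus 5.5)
Your proof is correct and follows the paper's argument step for step: you use $\widehat p$ as an admissible competitor, observe that its pair misfit reduces to $R_j(\widehat p,\widehat\omega_j)$ because the $220$ coordinate matches exactly, combine minimality with the triangle inequality to get the bound $2R_j$, and finish with the inverse Lipschitz estimate of Theorem~\ref{thm:appendixI-pair-inverse}. Your extra check that $\widehat p$ actually lies in $\mathcal C_j(p)$, using injectivity of $F_{220}$ on the larger dominant chart, covers a point the paper simply builds into the definition \eqref{eq:appendixI-primary-estimate-on-common-chart}, so it makes the argument more explicit without changing it.
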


\begin{proof}
Since $\widehat p\in\mathcal C_j(p)$, it is an admissible competitor in the minimization problem defining $\widehat p_j^{\mathrm{proj}}$. Therefore
\begin{align*}
\bigl\|\widehat z-W_j(\widehat p_j^{\mathrm{proj}})\bigr\|_{\mathbb C^2}
&\le
\bigl\|\widehat z-W_j(\widehat p)\bigr\|_{\mathbb C^2} \\
&=
\Bigl(|\widehat\omega_{220}-\omega_{220}(\widehat p)|^2+|\widehat\omega_j-\omega_j(\widehat p)|^2\Bigr)^{1/2} \\
&=
\Bigl(0^2+R_j\bigl(\widehat p,\widehat\omega_j\bigr)^2\Bigr)^{1/2}
=
R_j\bigl(\widehat p,\widehat\omega_j\bigr),
\end{align*}
because $F_{220}(\widehat p)=\widehat\omega_{220}$. Applying the triangle inequality in $\mathbb C^2$ then gives
\begin{align*}
\bigl\|W_j(\widehat p_j^{\mathrm{proj}})-W_j(\widehat p)\bigr\|_{\mathbb C^2}
&\le
\bigl\|\widehat z-W_j(\widehat p_j^{\mathrm{proj}})\bigr\|_{\mathbb C^2}
+
\bigl\|\widehat z-W_j(\widehat p)\bigr\|_{\mathbb C^2} \\
&\le
2R_j\bigl(\widehat p,\widehat\omega_j\bigr).
\end{align*}
Both points belong to the same chart $\mathcal C_j(p)$, so Theorem~\ref{thm:appendixI-pair-inverse} yields
\[
\bigl\|\widehat p_j^{\mathrm{proj}}-\widehat p\bigr\|_{\mathbb R^2}
\le
L_{(220,j)}\,\bigl\|W_j(\widehat p_j^{\mathrm{proj}})-W_j(\widehat p)\bigr\|_{\mathbb C^2}
\le
2L_{(220,j)}\,R_j\bigl(\widehat p,\widehat\omega_j\bigr),
\]
which is \eqref{eq:appendixI-projection-vs-primary-bound}. If the residual vanishes, the same argument gives
\[
\bigl\|\widehat p_j^{\mathrm{proj}}-\widehat p\bigr\|_{\mathbb R^2}=0,
\]
proving \eqref{eq:appendixI-primary-projection-equality}.
\end{proof}

The next corollary turns the abstract comparison bound into a deterministic error statement using Appendix~\ref{app:primary-inversion}.

\begin{corollary}[Projected pair inversion under common-remnant error budgets]\label{cor:appendixI-projection-budget}
Fix $j\in\{221,440\}$ and $p\in\Kdet$. Assume the hypotheses of Theorem~\ref{thm:appendixH-aux-propagation} hold, and assume moreover that the primary estimate
\[
\widehat p=F_{220}^{-1}(\widehat\omega_{220})
\]
lies in the common chart $\mathcal C_j(p)$. Let $\widehat p_j^{\mathrm{proj}}$ be any projected pair estimator from \eqref{eq:appendixI-projected-estimator-def}. Then
\begin{equation}\label{eq:appendixI-projection-budget-bound}
\bigl\|\widehat p_j^{\mathrm{proj}}-\widehat p\bigr\|_{\mathbb R^2}
\le
2L_{(220,j)}\,\tau_j(\varepsilon_{220},\varepsilon_j)
=
2L_{(220,j)}\,\bigl(\varepsilon_j+L_jL_{220}\,\varepsilon_{220}\bigr),
\end{equation}
where $\tau_j$ is the deterministic auxiliary tolerance defined in \eqref{eq:appendixH-aux-threshold}. Since $L_{(220,j)}\le L_{220}$, one also has the weaker but simpler estimate
\begin{equation}\label{eq:appendixI-projection-budget-bound-simpler}
\bigl\|\widehat p_j^{\mathrm{proj}}-\widehat p\bigr\|_{\mathbb R^2}
\le
2L_{220}\,\bigl(\varepsilon_j+L_jL_{220}\,\varepsilon_{220}\bigr).
\end{equation}
\end{corollary}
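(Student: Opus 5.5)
The plan is to chain Proposition~\ref{prop:appendixI-projection-vs-primary} with Theorem~\ref{thm:appendixH-aux-propagation}. Proposition~\ref{prop:appendixI-projection-vs-primary} is the only place where the projected estimator enters, and it already controls $\|\widehat p_j^{\mathrm{proj}}-\widehat p\|_{\mathbb R^2}$ by $2L_{(220,j)}\,R_j(\widehat p,\widehat\omega_j)$. What remains is to bound the auxiliary residual by the deterministic tolerance $\tau_j(\varepsilon_{220},\varepsilon_j)$, and then to weaken the constant.

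\textbf{Step 1: verify the hypotheses of Proposition~\ref{prop:appendixI-projection-vs-primary}.} That result requires $\widehat\omega_{220}\in F_{220}(\mathcal C_j(p))$ and defines $\widehat p$ as the inverse on the common chart. By assumption, $\widehat p=F_{220}^{-1}(\widehat\omega_{220})$ lies in $\mathcal C_j(p)$, so $\widehat\omega_{220}=F_{220}(\widehat p)\in F_{220}(\mathcal C_j(p))$.

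There is a consistency point to check. Since $\mathcal C_j(p)\subset B_{\mathbb R^2}(p,\rho_{220})\cap\Kdet$, injectivity of $F_{220}$ on the dominant chart (Corollary~\ref{cor:appendixH-primary-complex}) guarantees that the inverse taken on $\mathcal C_j(p)$ agrees with the inverse used in Theorem~\ref{thm:appendixH-aux-propagation}. Applying Proposition~\ref{prop:appendixI-projection-vs-primary} then gives
\[
\|\widehat p_j^{\mathrm{proj}}-\widehat p\|_{\mathbb R^2}\le 2L_{(220,j)}\,R_j(\widehat p,\widehat\omega_j).
\]

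\textbf{Step 2: bound the residual.} Under the hypotheses of Theorem~\ref{thm:appendixH-aux-propagation}, estimate \eqref{eq:appendixH-aux-residual-bound} gives
\[
R_j(\widehat p,\widehat\omega_j)\le \tau_j(\varepsilon_{220},\varepsilon_j)=\varepsilon_j+L_jL_{220}\varepsilon_{220}.
\]
Substituting this into Step~1 yields \eqref{eq:appendixI-projection-budget-bound}.

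\textbf{Step 3: the simpler estimate.} Inequality \eqref{eq:appendixI-pair-constant-dominates} from Theorem~\ref{thm:appendixI-pair-inverse} states $L_{(220,j)}\le L_{220}$. Replacing $L_{(220,j)}$ by $L_{220}$ in \eqref{eq:appendixI-projection-budget-bound} gives \eqref{eq:appendixI-projection-budget-bound-simpler}.

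The only step requiring care is Step~1: the same $\widehat p$ must serve in both cited results. This is settled by the chart inclusion $\mathcal C_j(p)\subset B_{\mathbb R^2}(p,\rho_{220})\cap\Kdet$ together with injectivity of $F_{220}$ there. Everything else is direct substitution.
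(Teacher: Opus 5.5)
Your proposal is correct and matches the paper's proof. It bounds $R_j(\widehat p,\widehat\omega_j)\le\tau_j(\varepsilon_{220},\varepsilon_j)$ via Theorem~\ref{thm:appendixH-aux-propagation}, substitutes this into Proposition~\ref{prop:appendixI-projection-vs-primary}, and then weakens the constant using $L_{(220,j)}\le L_{220}$. Your explicit check that the inverse of $F_{220}$ on $\mathcal C_j(p)$ agrees with the one on $B_{\mathbb R^2}(p,\rho_{220})\cap\Kdet$ is left implicit in the paper, and you handle it correctly using the chart inclusion and injectivity.
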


\begin{proof}
By Theorem~\ref{thm:appendixH-aux-propagation}, the residual of the primary estimate satisfies
\[
R_j\bigl(\widehat p,\widehat\omega_j\bigr)
\le
\tau_j(\varepsilon_{220},\varepsilon_j)
=
\varepsilon_j+L_jL_{220}\,\varepsilon_{220}.
\]
Substituting this bound into Proposition~\ref{prop:appendixI-projection-vs-primary} proves \eqref{eq:appendixI-projection-budget-bound}. The simpler estimate \eqref{eq:appendixI-projection-budget-bound-simpler} follows from \eqref{eq:appendixI-pair-constant-dominates}.
\end{proof}

Exact two-mode inversion is locally redundant because it agrees with the dominant-mode inverse on the common chart. Once the data move off the exact Kerr image, however, the projected pair estimator becomes meaningful, and its deviation from the primary estimate is controlled precisely by the same auxiliary residual that already appears in the trust-region test. The analysis therefore keeps the $220$ inversion and the auxiliary consistency checks separate, so that the place where off-image or model-mismatch effects enter remains explicit.

\section{Synthetic bank construction and mismatch calibration}\label{app:synthetic-bank}

This appendix fixes the event-local synthetic bank used in Section~\ref{sec:numerical-calibration}. The bank is event local. It converts the GW250114 public products into explicit window-dependent calibration radii and uses only the public posterior samples, the H1/L1 strain, and the Kerr QNM tables for \(220\), \(221\), and \(440\).

\begin{definition}\label{def:appendixJ-source-families}
The public source-family set is
\[
\mathfrak F_{\mathrm{bank}}
=
\{\mathrm{NRSur7dq4},\mathrm{PhenomXO4a},\mathrm{PhenomXPHM},\mathrm{SEOBNRv5PHM}\}.
\]
For each \(\mathfrak f\in\mathfrak F_{\mathrm{bank}}\), let \(\mathcal P_{\mathfrak f}\) denote the detector-frame posterior sample set stored in the corresponding HDF5 file. Let \(\mathcal P_{\mathfrak f}^{\mathrm{loc}}\subset \mathcal P_{\mathfrak f}\) be the restriction to the compact event-local detector-frame box \(\Kdet\).
\end{definition}

The posterior files carry no direct overtone amplitudes, so the synthetic bank supplements the remnant samples with an explicit finite subdominant-mode audit box. These amplitudes are treated only as audit variables, not as inferred quantities. We therefore fix three event-local configurations,
\[
\mathcal C_{\mathrm{amp}}
=
\{\mathrm{mild},\mathrm{medium},\mathrm{strong}\},
\]
with amplitude ratios
\[
(a_{221},a_{440})
=
(0.25,0.05),\ (0.50,0.10),\ (0.75,0.20),
\]
and fixed phases
\[
(\phi_{221},\phi_{440})
=
(0,0),\ (\pi/4,0),\ (\pi/2,\pi/4).
\]
These values are not used as astrophysical claims. They define a compact event-local audit box that spans weak, moderate, and strong subdominant content.

\begin{definition}\label{def:appendixJ-bank-design}
For each public source family \(\mathfrak f\), choose the three representative detector-frame posterior samples at the 20th, 50th, and 80th order-statistic locations in \(M_f^{\mathrm{det}}\) inside \(\mathcal P_{\mathfrak f}^{\mathrm{loc}}\). The resulting set of remnant anchors is denoted by \(\mathfrak B_{\mathrm{rem}}\). The zero-noise bank is
\begin{equation}\label{eq:appendixJ-zero-noise-bank}
\mathfrak B_0
=
\mathfrak B_{\mathrm{rem}}\times \mathcal C_{\mathrm{amp}},
\end{equation}
so \(|\mathfrak B_0|=36\).
\end{definition}

For \(b\in\mathfrak B_0\), write
\[
p_b=(M_b,\chi_b)\in\Kdet
\]
for its detector-frame remnant parameter. The reference synthetic signal is generated in the Kerr family
\begin{equation}\label{eq:appendixJ-reference-family}
\mathcal M^{\sharp}:=\Mtwo=\{220,221,440\}.
\end{equation}
The fitted families remain \(\Mzero\), \(\Mone\), and \(\Mtwo\). For a fixed window \(W=(t_0,T)\), each bank element \(b\) determines a network signal
\begin{equation}\label{eq:appendixJ-bank-signal}
y_b^{0,W}
=
\sum_{j\in\mathcal M^{\sharp}}
A_{b,j}^{H1} e^{-i\omega_j(p_b)u}
\oplus
\sum_{j\in\mathcal M^{\sharp}}
A_{b,j}^{L1} e^{-i\omega_j(p_b)u},
\qquad
u\in[0,T],
\end{equation}
where the detector weights are fixed by the H1/L1 optimal-SNR split carried by the corresponding posterior sample, and the overall normalization is chosen so that the discrete network norm over the reference \(24M_\star\) window matches the network optimal SNR of that sample.

\begin{definition}\label{def:appendixJ-model-subspace}
For \(\mathcal A\subset \mathcal M^{\sharp}\), \(p\in\Kdet\), and \(W=(t_0,T)\), let
\begin{equation}\label{eq:appendixJ-model-subspace}
\mathcal S_{\mathcal A}(p;W)
=
\left\{
\sum_{j\in\mathcal A}
B_j^{H1}e^{-i\omega_j(p)u}
\oplus
\sum_{j\in\mathcal A}
B_j^{L1}e^{-i\omega_j(p)u}
:
B_j^{H1},B_j^{L1}\in\mathbb C
\right\}
\subset \mathcal H_W
\end{equation}
be the network model space over the window \(W\).
\end{definition}

Because \(\Mzero\subset\Mone\subset\Mtwo=\mathcal M^{\sharp}\), every fitted family is a literal subspace truncation of the reference family. The omitted-linear content of \(y_b^{0,W}\) relative to \(\mathcal A\) is therefore
\begin{equation}\label{eq:appendixJ-tail-def}
r_{b,\mathcal A}^{\mathrm{tail},W}
=
\bigl(\Pi_{\mathcal M^{\sharp},p_b}^{W}-\Pi_{\mathcal A,p_b}^{W}\bigr)y_b^{0,W},
\end{equation}
and the implementation sets the non-Kerr mismatch term to zero,
\begin{equation}\label{eq:appendixJ-mm-def}
r_b^{\mathrm{mm},W}=0,
\end{equation}
because the calibration bank is generated inside the same Kerr family \(\mathcal M^{\sharp}\) that is used as the synthetic reference. This choice isolates finite-window truncation and colored-noise effects before any wider NR/CCE mismatch budget is introduced.

\begin{proposition}\label{prop:appendixJ-orthogonal-split}
For every \(b\in\mathfrak B_0\), every fixed window \(W\), and every fitted family \(\mathcal A\subset\mathcal M^{\sharp}\),
\[
y_b^{0,W}
=
\Pi_{\mathcal A,p_b}^{W}y_b^{0,W}
+
r_{b,\mathcal A}^{\mathrm{tail},W},
\]
with
\[
\left\langle
\Pi_{\mathcal A,p_b}^{W}y_b^{0,W},
r_{b,\mathcal A}^{\mathrm{tail},W}
\right\rangle_{\mathcal H_W}=0.
\]
Consequently,
\begin{equation}\label{eq:appendixJ-pythagoras}
\|y_b^{0,W}\|_{\mathcal H_W}^2
=
\|\Pi_{\mathcal A,p_b}^{W}y_b^{0,W}\|_{\mathcal H_W}^2
+
\|r_{b,\mathcal A}^{\mathrm{tail},W}\|_{\mathcal H_W}^2.
\end{equation}
\end{proposition}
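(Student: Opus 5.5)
The plan is to show that the split of \(y_b^{0,W}\) is the orthogonal decomposition induced by the nested projectors \(\Pi_{\mathcal A,p_b}^{W}\) and \(\Pi_{\mathcal M^{\sharp},p_b}^{W}\) on \(\mathcal H_W\). The first step is to record that \(\mathcal S_{\mathcal A}(p_b;W)\subset\mathcal S_{\mathcal M^{\sharp}}(p_b;W)\) whenever \(\mathcal A\subset\mathcal M^{\sharp}\); this is immediate from Definition~\ref{def:appendixJ-model-subspace} by setting the amplitudes of the omitted labels to zero. Both spaces are finite dimensional, hence closed, so the orthogonal projectors exist. Nesting then gives the standard identities
\[
\Pi_{\mathcal A,p_b}^{W}\Pi_{\mathcal M^{\sharp},p_b}^{W}=\Pi_{\mathcal M^{\sharp},p_b}^{W}\Pi_{\mathcal A,p_b}^{W}=\Pi_{\mathcal A,p_b}^{W}.
\]
From these, \(\Pi_{\mathcal M^{\sharp},p_b}^{W}-\Pi_{\mathcal A,p_b}^{W}\) is itself an orthogonal projector, and its range is orthogonal to \(\mathcal S_{\mathcal A}(p_b;W)\).

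The second step, and the only place where the construction of the bank enters, is to check that \(y_b^{0,W}\in\mathcal S_{\mathcal M^{\sharp}}(p_b;W)\). By \eqref{eq:appendixJ-bank-signal}, the bank signal is exactly a network sum of the atoms \(e^{-i\omega_j(p_b)u}\) for \(j\in\mathcal M^{\sharp}\), at the same parameter \(p_b\), with detector amplitudes \(A_{b,j}^{H1}\) and \(A_{b,j}^{L1}\). Hence \(\Pi_{\mathcal M^{\sharp},p_b}^{W}y_b^{0,W}=y_b^{0,W}\). Substituting this into \eqref{eq:appendixJ-tail-def} gives
\[
r_{b,\mathcal A}^{\mathrm{tail},W}=(I-\Pi_{\mathcal A,p_b}^{W})y_b^{0,W}.
\]
The displayed decomposition is then a tautology. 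It is also consistent with the three-term split \eqref{eq:sec3-tail-mismatch-preview}, whose third term vanishes here because of \eqref{eq:appendixJ-mm-def}.

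For orthogonality, self-adjointness and idempotence of \(\Pi:=\Pi_{\mathcal A,p_b}^{W}\) give
\[
\langle \Pi y,(I-\Pi)y\rangle=\langle y,\Pi(I-\Pi)y\rangle=0 .
\]
Expanding \(\|\Pi y+(I-\Pi)y\|^2\) and dropping the vanishing cross terms then yields \eqref{eq:appendixJ-pythagoras}.

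I do not expect a substantive obstacle. The one point needing care is that the amplitudes in \eqref{eq:appendixJ-bank-signal} are fixed complex numbers per detector. After the SNR normalization they remain admissible coefficients in Definition~\ref{def:appendixJ-model-subspace}, because that normalization only rescales them and so does not take the signal out of the span. With that confirmed, the remainder is elementary Hilbert-space algebra.
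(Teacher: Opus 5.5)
Your proposal is correct and follows the same route as the paper, which proves the statement in one line from the nesting \(\mathcal S_{\mathcal A}(p_b;W)\subset\mathcal S_{\mathcal M^{\sharp}}(p_b;W)\) and the orthogonal projector identity. You also check explicitly that \(y_b^{0,W}\in\mathcal S_{\mathcal M^{\sharp}}(p_b;W)\), so that \(\Pi_{\mathcal M^{\sharp},p_b}^{W}y_b^{0,W}=y_b^{0,W}\) and the tail reduces to \((I-\Pi_{\mathcal A,p_b}^{W})y_b^{0,W}\); this step is needed for the first displayed identity, and the paper leaves it implicit in the convention \(r_b^{\mathrm{mm},W}=0\).
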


\begin{proof}
This is immediate from the orthogonal projector identity and the fact that \(\mathcal S_{\mathcal A}(p_b;W)\subset \mathcal S_{\mathcal M^{\sharp}}(p_b;W)\).
\end{proof}

The bank does not calibrate every possible window. It calibrates a finite grid. Here,
\[
\mathfrak W_{\mathrm{cal}}
=
\{(t_0,24M_\star): t_0/M_\star\in\{3.0,3.5,\dots,11.0\}\}.
\]
This is the exact grid used to generate the new figures in Section~\ref{sec:numerical-calibration}. The synthetic map is assembled pointwise from this finite collection of fixed windows.

\subsection{Zero-noise radii}

For \(b\in\mathfrak B_0\), \(W\in\mathfrak W_{\mathrm{cal}}\), and \(\mathcal A\in\{\Mzero,\Mone,\Mtwo\}\), the default extraction pipeline returns a fitted detector-frame remnant
\[
\widehat p_{b,\mathcal A}^{\,0}(W)
=
\bigl(\widehat M_{b,\mathcal A}^{\,0}(W),\widehat\chi_{b,\mathcal A}^{\,0}(W)\bigr).
\]
The corresponding zero-noise remnant errors are
\[
R_{M,b}^{\,0,\mathcal A,W}
=
\bigl|\widehat M_{b,\mathcal A}^{\,0}(W)-M_b\bigr|,
\qquad
R_{\chi,b}^{\,0,\mathcal A,W}
=
\bigl|\widehat\chi_{b,\mathcal A}^{\,0}(W)-\chi_b\bigr|.
\]
The bankwise deterministic radii are the maxima
\begin{equation}\label{eq:appendixJ-zero-noise-bias-bound-pointwise}
\overline R_{M}^{\,0,\mathcal A,W}
=
\max_{b\in\mathfrak B_0}R_{M,b}^{\,0,\mathcal A,W},
\qquad
\overline R_{\chi}^{\,0,\mathcal A,W}
=
\max_{b\in\mathfrak B_0}R_{\chi,b}^{\,0,\mathcal A,W}.
\end{equation}

In this bank these maxima are explicitly computable. The strongest deterministic hierarchy is already visible at this stage. For the full synthetic family \(\Mtwo\), the zero-noise mass radius stays at the level of the grid and local-chart discretization, while for \(\Mzero\) and \(\Mone\) it is dominated by omitted \(221\) and \(440\) content. Figure~\ref{fig:sec7-zero-noise-score} plots the public-envelope normalization of \eqref{eq:appendixJ-zero-noise-bias-bound-pointwise}.

\subsection{Colored-noise quantiles}

The H1/L1 strain provides the colored-noise component. We extract eight off-source detector chunks, apply the same \(50\)–\(500\) Hz stabilization used in the real-event analysis, and normalize each chunk to unit sample standard deviation on its own support. For a fixed \(b\), \(W\), and noise draw index \(r\), the noisy bank element is
\begin{equation}\label{eq:appendixJ-noisy-bank-element}
y_{b,r}^{W}
=
y_b^{0,W}
+
n_r^{H1,W}\oplus n_r^{L1,W}.
\end{equation}
Applying the same extraction pipeline yields
\[
\widehat p_{b,\mathcal A}^{\,r}(W)
=
\bigl(\widehat M_{b,\mathcal A}^{\,r}(W),\widehat\chi_{b,\mathcal A}^{\,r}(W)\bigr)
\]
and the colored-noise increments
\[
\Delta_{M,b,r}^{\mathcal A,W}
=
\bigl|\widehat M_{b,\mathcal A}^{\,r}(W)-\widehat M_{b,\mathcal A}^{\,0}(W)\bigr|,
\qquad
\Delta_{\chi,b,r}^{\mathcal A,W}
=
\bigl|\widehat \chi_{b,\mathcal A}^{\,r}(W)-\widehat \chi_{b,\mathcal A}^{\,0}(W)\bigr|.
\]

For each fixed \(W\) and \(\mathcal A\), the calibration therefore produces \(N=288\) noisy realizations. Let \(Q_M^{\mathcal A,W}(q)\) and \(Q_\chi^{\mathcal A,W}(q)\) denote the left \(q\)-quantiles of the pooled increments \(\Delta_{M,b,r}^{\mathcal A,W}\) and \(\Delta_{\chi,b,r}^{\mathcal A,W}\). Their empirical versions are built in the obvious way from the \(288\) samples.

\begin{theorem}[Per-window upper-confidence quantiles]\label{thm:appendixJ-dkw}
Fix a fitted family \(\mathcal A\in\{\Mzero,\Mone,\Mtwo\}\), a window \(W\in\mathfrak W_{\mathrm{cal}}\), a nominal quantile level \(0<q<1\), and a confidence level \(1-\alpha\in(0,1)\). Let
\begin{equation}\label{eq:appendixJ-dkw-delta}
\delta_N(\alpha)
=
\sqrt{\frac{\log(2/\alpha)}{2N}},
\qquad N=288.
\end{equation}
Then, simultaneously for the mass and spin increment distributions,
\begin{equation}\label{eq:appendixJ-dkw-quantile-bound}
Q_M^{\mathcal A,W}(q)
\le
\widehat Q_M^{\mathcal A,W}\!\bigl(q+\delta_N(\alpha)\bigr),
\qquad
Q_\chi^{\mathcal A,W}(q)
\le
\widehat Q_\chi^{\mathcal A,W}\!\bigl(q+\delta_N(\alpha)\bigr)
\end{equation}
with probability at least \(1-\alpha\).
\end{theorem}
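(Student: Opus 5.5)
The plan is to reduce the statement to two applications of the one-sided Dvoretzky--Kiefer--Wolfowitz inequality with Massart's constant, followed by a union bound over the two marginals. The probability model is the one already implicit in Section~\ref{sec:numerical-calibration}. For fixed \(\mathcal A\) and \(W\), the \(N=288\) pooled increments \(\Delta_{M,b,r}^{\mathcal A,W}\) are regarded as independent draws from a common law with distribution function \(F_M\), and likewise for \(F_\chi\). The quantiles \(Q_M^{\mathcal A,W}(q)\) and \(Q_\chi^{\mathcal A,W}(q)\) are the left quantiles \(\inf\{x:F(x)\ge q\}\) of these laws. The empirical quantiles are \(\widehat Q(s)=\inf\{x:\widehat F_N(x)\ge s\}\), with the convention \(\widehat Q(s)=+\infty\) when \(s>1\). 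No independence between the mass and spin samples is needed, and I would state that explicitly, since the two increments come from the same fit.

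\emph{Step 1 (one-sided DKW).} Massart's one-sided form states that, for every \(\epsilon>0\) with \(e^{-2N\epsilon^2}\le 1/2\),
\[
\Pr\Bigl(\sup_x\bigl(\widehat F_N(x)-F(x)\bigr)>\epsilon\Bigr)\le e^{-2N\epsilon^2}.
\]
Choosing \(\epsilon=\delta_N(\alpha)\) from \eqref{eq:appendixJ-dkw-delta} gives \(e^{-2N\delta_N(\alpha)^2}=\alpha/2\le 1/2\), so the inequality applies. Each marginal bad event \(E_M=\{\sup_x(\widehat F_{N,M}-F_M)>\delta_N\}\) and \(E_\chi\), defined in the same way, therefore has probability at most \(\alpha/2\). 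The union bound then gives \(\Pr(E_M\cup E_\chi)\le\alpha\). This union bound is exactly where the constant \(\log(2/\alpha)\), rather than \(\log(4/\alpha)\), comes from: only the \emph{one-sided} tail is charged to each marginal.

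\emph{Step 2 (quantile inversion on the good event).} On \(E_M^c\) we have \(\widehat F_{N,M}(x)\le F_M(x)+\delta_N\) for all \(x\). Take any \(x<Q_M^{\mathcal A,W}(q)\). Then \(F_M(x)<q\) by the definition of the left quantile, so \(\widehat F_{N,M}(x)<q+\delta_N\). Hence \(x\) does not belong to \(\{y:\widehat F_{N,M}(y)\ge q+\delta_N\}\). Since this holds for every \(x<Q_M^{\mathcal A,W}(q)\), the infimum of that set satisfies \(\widehat Q_M^{\mathcal A,W}(q+\delta_N)\ge Q_M^{\mathcal A,W}(q)\). The case \(q+\delta_N>1\) is trivial under the \(+\infty\) convention. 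The identical argument on \(E_\chi^c\) gives the spin bound. On \(E_M^c\cap E_\chi^c\), whose probability is at least \(1-\alpha\), both inequalities in \eqref{eq:appendixJ-dkw-quantile-bound} therefore hold simultaneously.

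I expect the main obstacle to be the modelling step rather than the analysis. The bank has a product design over 36 anchors and 8 noise chunks, so the 288 increments are not literally i.i.d. from an external law. I would resolve this by defining the target law as the one obtained by drawing a bank element and a noise realization independently from their respective source distributions, and by declaring the 288 pooled samples to be i.i.d. draws from it. This is the only reading under which the per-window probability semantics announced in Section~\ref{sec:numerical-calibration} make sense. The remaining care is purely about getting the direction of the one-sided inequality and the left-continuity conventions right in Step 2.
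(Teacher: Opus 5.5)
Your argument is correct and has the same skeleton as the paper's proof: apply DKW to the pooled mass and spin increments, then invert the empirical distribution function at level $q+\delta_N(\alpha)$. The difference is in the key lemma, and it matters.

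The paper's one-line proof applies DKW ``separately'' to the two marginals. Read with the standard two-sided Massart bound $2e^{-2N\varepsilon^2}$, the choice $\varepsilon=\delta_N(\alpha)$ makes each marginal fail with probability up to $\alpha$. That reading therefore gives each inequality at level $1-\alpha$, but both together only at level $1-2\alpha$. A two-sided route to simultaneity at $1-\alpha$ would need $\log(4/\alpha)$ in place of $\log(2/\alpha)$.

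You observe that the inversion uses only the upper deviation $\sup_x(\widehat F_N-F)\le\delta_N(\alpha)$. You then combine Massart's one-sided bound $e^{-2N\varepsilon^2}$ (applicable since $\alpha/2\le 1/2$) with a union bound over the two marginals. This is exactly what justifies the word ``simultaneously'' with the stated $\delta_N(\alpha)$. The paper's version buys only brevity, and it is adequate if one wants each marginal statement separately.

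Your Step 2 also makes explicit what the paper leaves implicit: the direction of the inversion ($F(x)<q$ for $x<Q(q)$, hence $\widehat F_N(x)<q+\delta_N$) and the $+\infty$ convention when $q+\delta_N>1$.

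The i.i.d.\ hypothesis you flag is used tacitly by the paper too, so it is not a gap relative to its proof. Present it as a standing assumption of the theorem, though, not as something the bank construction delivers. In the crossed $36\times 8$ design each noise chunk is added to all $36$ bank elements. The bank elements themselves are fixed order-statistic choices rather than random draws. So the $288$ pooled increments are not independent draws from the product law you describe.
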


\begin{proof}
Apply the Dvoretzky--Kiefer--Wolfowitz inequality to the empirical distribution functions of the pooled mass and spin increment samples separately, and then invert the empirical distribution functions at the shifted level \(q+\delta_N(\alpha)\) \cite{DvoretzkyKieferWolfowitz1956,Massart1990DKW}. The statement is per fixed \(W\) and \(\mathcal A\); no union over \(|\mathfrak W_{\mathrm{cal}}|\) is claimed here.
\end{proof}

This theorem fixes the probability semantics. The synthetic map above is assembled from these fixed-window objects. It is not itself a single simultaneous probability statement over the entire start-time grid.

Define the high-confidence statistical radii
\begin{equation}\label{eq:appendixJ-bankwise-stat-envelope}
\overline R_{M}^{\,\mathrm{stat},\mathcal A,W}(q;\alpha)
=
\widehat Q_M^{\mathcal A,W}\!\bigl(q+\delta_N(\alpha)\bigr),
\qquad
\overline R_{\chi}^{\,\mathrm{stat},\mathcal A,W}(q;\alpha)
=
\widehat Q_\chi^{\mathcal A,W}\!\bigl(q+\delta_N(\alpha)\bigr).
\end{equation}

\subsection{Total calibrated radii}

The synthetic calibration enters the trust inequalities through remnant-space radii and their induced modewise envelopes. For each fixed \(W\) and \(\mathcal A\), define the total detector-frame remnant radii
\begin{equation}\label{eq:appendixJ-total-radius-def}
\overline R_{M}^{\,\mathrm{tot},\mathcal A,W}(q;\alpha)
=
\overline R_{M}^{\,0,\mathcal A,W}
+
\overline R_{M}^{\,\mathrm{stat},\mathcal A,W}(q;\alpha),
\end{equation}
\[
\overline R_{\chi}^{\,\mathrm{tot},\mathcal A,W}(q;\alpha)
=
\overline R_{\chi}^{\,0,\mathcal A,W}
+
\overline R_{\chi}^{\,\mathrm{stat},\mathcal A,W}(q;\alpha).
\]

The main-text figures report these radii through the normalized synthetic ratio
\[
\mathfrak S_{\mathcal A}(W)
=
\max\!\left\{
\frac{\overline R_{M}^{\,\mathrm{tot},\mathcal A,W}(q;\alpha)}{\Delta M_{\mathrm{pub}}/2},
\frac{\overline R_{\chi}^{\,\mathrm{tot},\mathcal A,W}(q;\alpha)}{\Delta \chi_{\mathrm{pub}}/2}
\right\},
\]
where \(\Delta M_{\mathrm{pub}}/2=1.315\,M_\odot\) and \(\Delta\chi_{\mathrm{pub}}/2=0.01877\) are the half-widths of the public detector-frame comparison envelope.

To reconnect this remnant-space calibration to the symbolic frequency decomposition used in Section~\ref{sec:trust-region}, let \(L_{220}^{W}\), \(L_{221}^{W}\), and \(L_{440}^{W}\) denote any detector-frame local forward Lipschitz constants for the Kerr maps \(G_{220}\), \(G_{221}\), and \(G_{440}\) on the event-local chart used at \(W\). Then the induced modewise calibrated radii are
\begin{align}
\label{eq:appendixJ-eps-tail-cal}
\varepsilon_{220}^{\mathrm{cal},\mathcal A,W}(q;\alpha)
&=
L_{220}^{W}\,
\bigl\|
(\overline R_{M}^{\,\mathrm{tot},\mathcal A,W}(q;\alpha),
 \overline R_{\chi}^{\,\mathrm{tot},\mathcal A,W}(q;\alpha))
\bigr\|_{\mathbb R^2},
\\
\label{eq:appendixJ-eps-mm-cal}
\varepsilon_{221}^{\mathrm{cal},\mathcal A,W}(q;\alpha)
&=
L_{221}^{W}\,
\bigl\|
(\overline R_{M}^{\,\mathrm{tot},\mathcal A,W}(q;\alpha),
 \overline R_{\chi}^{\,\mathrm{tot},\mathcal A,W}(q;\alpha))
\bigr\|_{\mathbb R^2},
\\
\label{eq:appendixJ-eps-stat-cal}
\varepsilon_{440}^{\mathrm{cal},\mathcal A,W}(q;\alpha)
&=
L_{440}^{W}\,
\bigl\|
(\overline R_{M}^{\,\mathrm{tot},\mathcal A,W}(q;\alpha),
 \overline R_{\chi}^{\,\mathrm{tot},\mathcal A,W}(q;\alpha))
\bigr\|_{\mathbb R^2}.
\end{align}
Collecting them yields
\begin{equation}\label{eq:appendixJ-total-calibrated-radius}
\varepsilon_{j}^{\mathrm{cal},\mathcal A,W}(q;\alpha),
\qquad
j\in\{220,221,440\}.
\end{equation}
This is the object referenced symbolically in Section~\ref{sec:trust-region}. The numerically instantiated calibration is therefore remnant local first and modewise second.

Figure~\ref{fig:appJ-anchor-decomposition} separates the zero-noise and colored-noise parts at the anchor windows \(t_0/M_\star\in\{3,6,9,11\}\). The plot confirms the qualitative picture already visible above: the earliest window is expensive in both pieces, while the later windows are controlled mainly by the noise floor, and the full family \(\Mtwo\) remains strictly below \(\Mzero\) and \(\Mone\) at all four anchor windows.

\begin{figure}[t]
\centering
\includegraphics[width=0.74\textwidth]{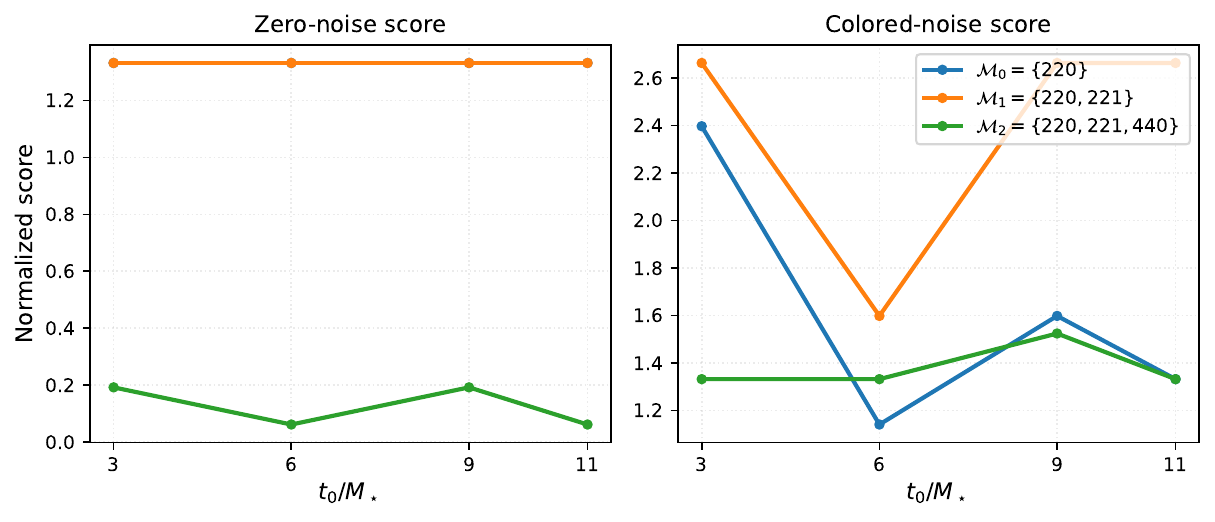}
\caption{Zero-noise and colored-noise contributions at the four anchor windows \(t_0/M_\star\in\{3,6,9,11\}\).}
\label{fig:appJ-anchor-decomposition}
\end{figure}

\providecommand{\ScanTau}{\mathfrak W_{\mathrm{scan}}^{\tau}}
\providecommand{\LenSet}{\mathfrak L}
\providecommand{\PsdSet}{\mathfrak P}
\providecommand{\TapSet}{\mathfrak T}
\providecommand{\GridSet}{\mathfrak G}
\providecommand{\PEset}{\mathfrak R_{\mathrm{PE}}}
\providecommand{\NuSet}{\mathfrak N}
\providecommand{\HullPE}{\mathcal H_{\mathrm{PE}}}
\providecommand{\Mstar}{M_{\bullet}}
\providecommand{\BaseB}{\mathfrak B^{\mathrm{base}}}

\section{GW250114 robustness checks}\label{app:gw250114-robustness}

This appendix fixes the finite real-data rerun suite used to test how the GW250114 event-level diagnostics respond to small changes in preprocessing and to the fixed nuisance dictionary of Section~\ref{sec:model-hierarchy}. The reruns do not introduce a new discovery model. It reruns the same event under an explicitly listed family of PSD prescriptions, taper fractions, and scan grids, while keeping the direct-wave order \(\Ndir\), the left-edge envelope \(\Enu_W\), and the quadratic pair set \(\Pquad\) fixed.

\subsection{Fixed diagnostic classes and finite specification sets}

The nuisance dictionary is exactly the one fixed in Definition~\ref{def:sec3-nuisance-families}. In particular,
\[
\Ndir=4,
\qquad
\Enu_W(u)=\exp\!\Bigl[-16\Bigl(\frac{u}{T}\Bigr)^2\Bigr],
\qquad
\Pquad=\{(220,220)\}.
\]
No additional direct-wave atom and no additional quadratic pair is introduced anywhere in the robustness checks.

For the real-data reruns we fix the finite preprocessing sets
\begin{equation}\label{eq:appK-psd-set}
\PsdSet=
\{4\mathrm{sym},8\mathrm{sym},16\mathrm{sym},8\mathrm{left},8\mathrm{right}\},
\end{equation}
\begin{equation}\label{eq:appK-taper-set}
\TapSet=
\{0.05,0.10,0.20\},
\end{equation}
and the finite scan-grid set
\begin{equation}\label{eq:appK-grid-set}
\GridSet=
\{g_{\mathrm{anc}},g_{\mathrm{dense}}\},
\end{equation}
where
\begin{equation}\label{eq:appK-anchor-grid}
g_{\mathrm{anc}}=\{3,6,9,11\},
\end{equation}
and
\begin{equation}\label{eq:appK-dense-grid}
g_{\mathrm{dense}}=\{0,0.25,0.50,\dots,15\}.
\end{equation}
Both grids use the fixed detector-frame window length \(\Theta=24\). The nuisance labels are
\begin{equation}\label{eq:appK-nuisance-set}
\NuSet=\{\varnothing,\mathrm{dir},\mathrm{quad},\mathrm{dir+quad}\}.
\end{equation}

For every baseline family \(\mathcal A\subset\Mref\) with \(220\in\mathcal A\), every nuisance label \(\nu\in\NuSet\), and every grid label \(g\in\GridSet\), the event-level rerun family is indexed by
\begin{equation}\label{eq:appK-rerun-set}
\mathfrak B_{\mathcal A}^{\nu}(g)
:=
\PsdSet\times\TapSet\times\{g\}\times\{\nu\}.
\end{equation}
The corresponding baseline specification set is
\begin{equation}\label{eq:appK-base-set}
\mathfrak B_{\mathcal A}^{\mathrm{base}}(g)
:=
\PsdSet\times\TapSet\times\{g\}.
\end{equation}
The reference specification used for the baseline curves in Section~\ref{sec:gw250114-empirical} is
\begin{equation}\label{eq:appK-ref-spec}
\sigma_{\mathrm{ref}}:=(8\mathrm{sym},0.10,g_{\mathrm{dense}}).
\end{equation}
The anchor reruns displayed below restricts the grid label to \(g_{\mathrm{anc}}\), so each public anchor window is rerun \(5\times 3=15\) times.

\begin{definition}[Allowed nuisance map]\label{def:appendixK-allowed-map}
For every baseline family \(\mathcal A\in\{\Mzero,\Mone,\Mtwo\}\), the only admissible nuisance extensions in the rerun suite are the four spaces \(\mathcal S_{\mathcal A}^{\nu}(p;W)\) with \(\nu\in\NuSet\) from Definition~\ref{def:sec3-nuisance-families}. The values of \(\Ndir\), \(\Enu_W\), and \(\Pquad\) are fixed globally and are never retuned across scan points, reruns, or baseline families.
\end{definition}

The combined class \(\nu=\mathrm{dir+quad}\) is computed throughout the present GW250114 release. The later normalized ledgers sometimes keep its appearance conditional only to allow comparison with reduced releases. In the actual rerun suite behind this manuscript, the combined nuisance diagnostic is always available.

\subsection{Specwise residuals and nuisance gains}

Let \(\Xi=(\tau_0,\Theta)\in\ScanTau\) be a master scan point, let \(g\in\GridSet\), and let
\[
\sigma=(P,\vartheta,g)\in\mathfrak B_{\mathcal A}^{\mathrm{base}}(g)
\]
be a baseline specification. Denote by \(W_g(\Xi)\) the detector-frame window induced by \(\Xi\) on grid \(g\), and by \(y^{\sigma,\Xi}\) the corresponding preprocessed H1/L1 data vector.

For \(\nu\in\NuSet\), define the specwise normalized residual
\begin{equation}\label{eq:appK-rho-nu}
\rho_\nu^{\sigma,\mathcal A}(\Xi)
:=
\inf_{p\in\Kdet}
\frac{
\dist_{\mathcal H_{W_g(\Xi)}}\!\bigl(
y^{\sigma,\Xi},
\mathcal S_{\mathcal A}^{\nu}(p;W_g(\Xi))
\bigr)
}{
\|y^{\sigma,\Xi}\|_{\mathcal H_{W_g(\Xi)}} }.
\end{equation}
For \(\nu\in\NuSet\setminus\{\varnothing\}\), the corresponding nuisance gain is
\begin{equation}\label{eq:appK-gamma-spec}
\Gamma_\nu^{\sigma,\mathcal A}(\Xi)
:=
\rho_{\varnothing}^{\sigma,\mathcal A}(\Xi)
-
\rho_\nu^{\sigma,\mathcal A}(\Xi).
\end{equation}
The robust gain entering the event-level inequalities is the worst-case gain across the finite rerun family,
\begin{equation}\label{eq:appK-gamma-agg}
\Gamma_\nu^{\mathcal A}(\Xi)
:=
\sup_{\sigma\in\mathfrak B_{\mathcal A}^{\mathrm{base}}(g_{\mathrm{dense}})}
\Gamma_\nu^{\sigma,\mathcal A}(\Xi),
\qquad
\nu\in\NuSet\setminus\{\varnothing\},
\end{equation}
while the anchor-display median is
\begin{equation}\label{eq:appK-gamma-med}
\operatorname{med}\Gamma_\nu^{\mathcal A}(\tau_0)
:=
\operatorname{median}
\Bigl\{
\Gamma_\nu^{\sigma,\mathcal A}\bigl((\tau_0,24)\bigr)
:
\sigma\in\mathfrak B_{\mathcal A}^{\mathrm{base}}(g_{\mathrm{anc}})
\Bigr\}.
\end{equation}

\begin{proposition}[Nuisance gains are monotone residual improvements]\label{prop:appendixK-gain-monotonicity}
For every admissible \(\Xi\), every baseline family \(\mathcal A\), every baseline specification \(\sigma\), and every \(\nu\in\NuSet\setminus\{\varnothing\}\),
\[
\Gamma_\nu^{\sigma,\mathcal A}(\Xi)\ge 0.
\]
Consequently \(\Gamma_\nu^{\mathcal A}(\Xi)\ge 0\).
\end{proposition}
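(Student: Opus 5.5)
The plan is to reduce the claim to the residual monotonicity of Proposition~\ref{prop:sec3-residual-monotonicity}, applied parameter by parameter, and then pass through the infimum over \(\Kdet\) and finally the supremum over the rerun family. Fix \(\Xi\), \(\mathcal A\), \(\sigma\), and \(\nu\in\NuSet\setminus\{\varnothing\}\), and write \(W=W_g(\Xi)\) and \(y=y^{\sigma,\Xi}\). First I will record that, by Definition~\ref{def:sec3-nuisance-families}, for every \(p\in\Kdet\) one has the inclusion \(\mathcal S_{\mathcal A}^{\varnothing}(p;W)=\mathcal S_{\mathcal A}(p;W)\subset\mathcal S_{\mathcal A}^{\nu}(p;W)\). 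Each of these is a finite sum of finite-dimensional subspaces of \(\mathcal H_W\), so all of them are closed and the hypotheses of Proposition~\ref{prop:sec3-residual-monotonicity} hold.

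Applying that proposition pointwise in \(p\) gives
\[
\operatorname{dist}\bigl(y,\mathcal S_{\mathcal A}^{\nu}(p;W)\bigr)\le \operatorname{dist}\bigl(y,\mathcal S_{\mathcal A}^{\varnothing}(p;W)\bigr)
\]
for all \(p\in\Kdet\). Dividing by the common positive normalization \(\|y\|_{\mathcal H_W}\) preserves the inequality. Taking the infimum over \(p\in\Kdet\) on both sides then yields \(\rho_\nu^{\sigma,\mathcal A}(\Xi)\le\rho_\varnothing^{\sigma,\mathcal A}(\Xi)\), because the infimum of a pointwise smaller function is no larger. Note that the infima may be attained at different \(p\); this is harmless, since only the pointwise comparison is used. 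It follows that \(\Gamma_\nu^{\sigma,\mathcal A}(\Xi)\ge0\).

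For the aggregate statement, \(\Gamma_\nu^{\mathcal A}(\Xi)\) is a supremum over the nonempty finite set \(\mathfrak B_{\mathcal A}^{\mathrm{base}}(g_{\mathrm{dense}})\) of quantities that are each nonnegative. Hence the aggregate gain is itself nonnegative.

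The only point needing care is the degenerate case \(y=0\), where the normalized residual is undefined. I would handle it by noting that the admissibility of \(\Xi\) presupposes \(\|y\|_{\mathcal H_W}>0\); alternatively, one can adopt the convention that the ratio is \(0\), in which case the gain is \(0\). Beyond that, the argument is routine, and I expect no real obstacle.
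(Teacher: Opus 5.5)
Your proposal is correct and follows the paper's proof: the inclusion \(\mathcal S_{\mathcal A}^{\varnothing}(p;W)\subset\mathcal S_{\mathcal A}^{\nu}(p;W)\) at each fixed \(p\), then Proposition~\ref{prop:sec3-residual-monotonicity}, then the supremum over the finite rerun family. Beyond the paper, you make explicit that the pointwise-in-\(p\) inequality passes through the infimum over \(\mathcal K_{\mathrm{det}}\) even if the minimizers differ, and you also address the degenerate case \(\|y\|=0\) and the nonemptiness of the rerun set, which are harmless refinements.
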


\begin{proof}
For fixed \(p\) and \(W_g(\Xi)\), one has
\[
\mathcal S_{\mathcal A}^{\varnothing}(p;W_g(\Xi))
\subset
\mathcal S_{\mathcal A}^{\nu}(p;W_g(\Xi)).
\]
By Proposition~\ref{prop:sec3-residual-monotonicity}, enlarging the fitted space can only decrease the distance of the data vector to that space. Hence
\[
\rho_\nu^{\sigma,\mathcal A}(\Xi)\le \rho_{\varnothing}^{\sigma,\mathcal A}(\Xi),
\]
which is exactly \(\Gamma_\nu^{\sigma,\mathcal A}(\Xi)\ge 0\). Taking the supremum over the finite set \(\mathfrak B_{\mathcal A}^{\mathrm{base}}(g_{\mathrm{dense}})\) proves the second claim.
\end{proof}

The supremum in \eqref{eq:appK-gamma-agg} is the quantity that enters the event-level nuisance inequalities. The medians in \eqref{eq:appK-gamma-med} are purely descriptive summaries of the anchor reruns and should not be confused with acceptance thresholds.

\subsection{Compatibility with the numerical acceptance notation}

Later appendices summarize the event-level decision rule through normalized acceptance diagnostics. To keep the notation consistent, we record the corresponding baseline and event-level conditions here. Fix a baseline family \(\mathcal A\subset\mathcal M^{\sharp}\) with \(220\in\mathcal A\), a master scan point \(\Xi\in\ScanTau\), and a finite baseline specification set \(\mathfrak B_{\mathcal A}^{\mathrm{base}}(g_{\mathrm{dense}})\).

\begin{definition}[Robust baseline trust]\label{def:appendixK-robust-base-trust}
A master scan point \(\Xi\) belongs to \(\mathcal T_{\eta}^{\mathrm{rob,base}}(\mathcal A)\) if there exists \(\sigma\in\mathfrak B_{\mathcal A}^{\mathrm{base}}(g_{\mathrm{dense}})\) such that
\begin{equation}\label{eq:appendixK-rob-sep}
\overline\varepsilon_{220}^{\mathcal A}(\Xi)+\Delta_{220}^{\mathcal A}(\Xi)\le \eta_{\mathrm{sep}},
\qquad
\delta_{\mathrm{iso}}\bigl(\widehat p^{\sigma,\mathcal A}(\Xi);\mathcal A\bigr)-L_{\mathrm{iso}}(\mathcal A)L_{220}\Delta_{220}^{\mathcal A}(\Xi)>\eta_{\mathrm{sep}},
\end{equation}
\begin{equation}\label{eq:appendixK-rob-param}
L_{220}\bigl(\overline\varepsilon_{220}^{\mathcal A}(\Xi)+\Delta_{220}^{\mathcal A}(\Xi)\bigr)\le \eta_{\mathrm p},
\end{equation}
\begin{equation}\label{eq:appendixK-rob-drift}
\overline D_{\Delta}^{\mathcal A}(\Xi)\le \eta_{\mathrm d},
\end{equation}
and, whenever the relevant auxiliary mode is included,
\begin{equation}\label{eq:appendixK-rob-221}
R_{221}^{\sigma,\mathcal A}(\Xi)+\Delta_{221}^{\mathcal A}(\Xi)+L_{221}L_{220}\Delta_{220}^{\mathcal A}(\Xi)\le \eta_{221},
\end{equation}
\begin{equation}\label{eq:appendixK-rob-440}
R_{440}^{\sigma,\mathcal A}(\Xi)+\Delta_{440}^{\mathcal A}(\Xi)+L_{440}L_{220}\Delta_{220}^{\mathcal A}(\Xi)\le \eta_{440}.
\end{equation}
\end{definition}

\begin{definition}[Event-level acceptance]\label{def:appendixK-event-acceptance}
A master scan point \(\Xi\) belongs to the numerical GW250114 acceptance set \(\mathcal T_{\eta,\eta_{\mathrm{PE}}}^{\mathrm{GW250114}}(\mathcal A)\) if \(\Xi\in\mathcal T_{\eta}^{\mathrm{rob,base}}(\mathcal A)\) and, in addition,
\begin{equation}\label{eq:appendixK-dpe}
d_{\mathrm{PE}}^{\mathcal A}(\Xi)\le \eta_{\mathrm{PE}},
\end{equation}
\begin{equation}\label{eq:appendixK-gamma-nu}
\Gamma_{\mathrm{dir}}^{\mathcal A}(\Xi)\le \etammstar(\Xi),
\qquad
\Gamma_{\mathrm{quad}}^{\mathcal A}(\Xi)\le \etammstar(\Xi),
\end{equation}
and, when the combined nuisance certificate is invoked,
\begin{equation}\label{eq:appendixK-gamma-fail}
\Gamma_{\mathrm{dir+quad}}^{\mathcal A}(\Xi)\le \etammstar(\Xi).
\end{equation}
The present release computes \(\Gamma_{\mathrm{dir+quad}}^{\mathcal A}\) at every scan point. The formal distinction is kept only because the minimal acceptance set already requires the two single-class inequalities, while the combined inequality is used as an additional failure certificate and as a stronger release option.
\end{definition}

\subsection{Observed anchor-window spreads}

The real-data anchor reruns is centered on the same public H1/L1 strain used in Section~\ref{sec:gw250114-empirical}. We keep the detector-frame window length fixed at \(\Theta=24\) and inspect the public anchor windows
\begin{equation}\label{eq:appK-anchor-set}
\tau_0\in\{3,6,9,11\}.
\end{equation}
Because the displayed reruns fix the grid label to \(g_{\mathrm{anc}}\), each anchor window is rerun fifteen times, once for every pair in \(\PsdSet\times\TapSet\). The labels have literal meaning. The symmetric prescriptions use both off-source wings in the \(1024\,\mathrm{s}\) working epoch, while the left and right prescriptions use only one wing with an eight-second Welch segment. The taper parameter is the Tukey fraction applied after window extraction. No other aspect of the problem changes in the anchor reruns. In particular, the detector-local peak proxies, the detector-frame box \(\Kdet\), and the fitted family remain fixed.

The first displayed quantity is the anchor-window spread of the primary detector-frame remnant estimate. The second is the median direct-wave gain \(\operatorname{med}\Gamma_{\mathrm{dir}}(\tau_0)\). For compactness the figure below plots only the direct-wave medians; the quadratic and combined medians are defined by the same rule \eqref{eq:appK-gamma-med} and enter the event-level scan through the robust gains \(\Gamma_\nu^{\mathcal A}\).

The resulting summary is encoded in Figure~\ref{fig:appK-anchor-rerun} and in the numerical envelopes
\begin{equation}\label{eq:appK-anchor-spreads}
\begin{aligned}
 t_0=3M_\star &: \, M_f^{\mathrm{det}}\in[66.8,69.5],\quad \chi_f\in[0.6400,0.7100],\quad \operatorname{med}\Gamma_{\mathrm{dir}}=0.316,\\
 t_0=6M_\star &: \, M_f^{\mathrm{det}}\in[66.5,69.5],\quad \chi_f\in[0.6400,0.6975],\quad \operatorname{med}\Gamma_{\mathrm{dir}}=0.129,\\
 t_0=9M_\star &: \, M_f^{\mathrm{det}}\in[68.1,69.5],\quad \chi_f\in[0.6400,0.6675],\quad \operatorname{med}\Gamma_{\mathrm{dir}}=0.110,\\
 t_0=11M_\star &: \, M_f^{\mathrm{det}}\in[68.6,69.5],\quad \chi_f\in[0.6675,0.7100],\quad \operatorname{med}\Gamma_{\mathrm{dir}}=0.069.
\end{aligned}
\end{equation}
These are finite observed ranges, not asymptotic uncertainty bars.

\begin{figure}[ht]
\centering
\includegraphics[width=0.92\textwidth]{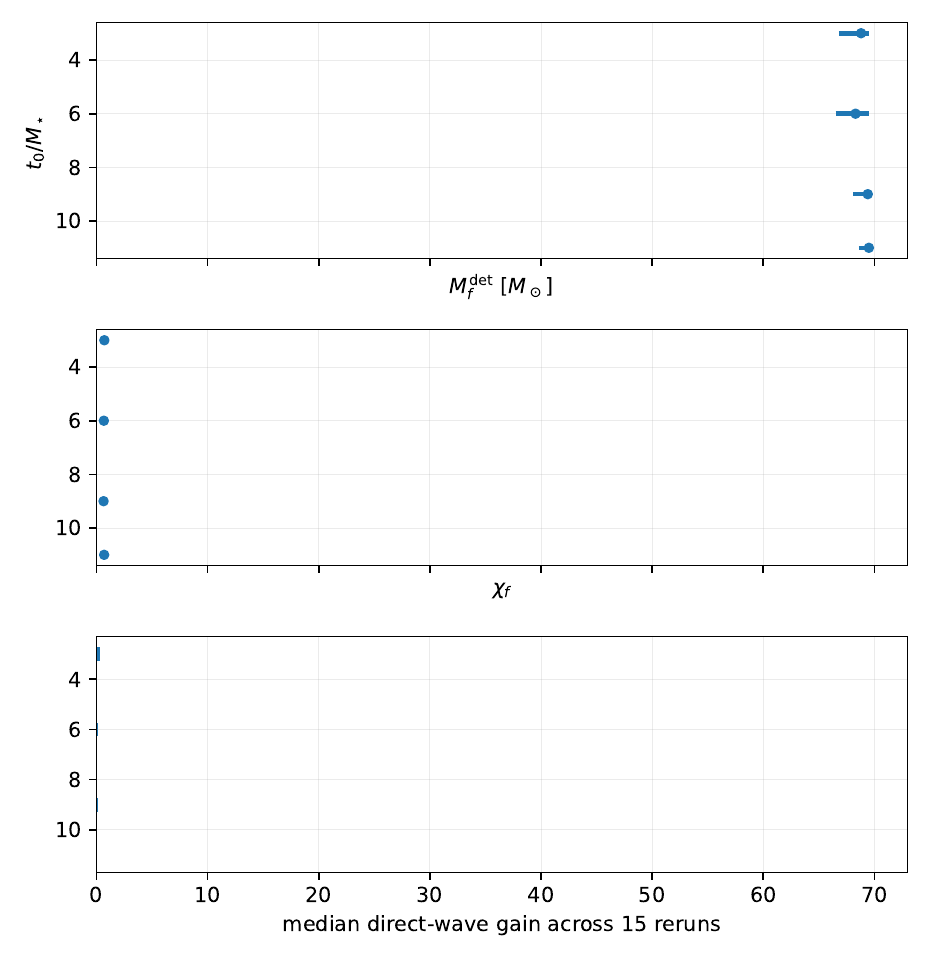}
\caption{Finite anchor-window robustness checks from the real public H1/L1 data. The upper two panels show the observed ranges of the primary detector-frame remnant estimate across the fifteen reruns at each public anchor window. The lower panel shows the corresponding median direct-wave gain. The dots mark the baseline specification at each anchor.}
\label{fig:appK-anchor-rerun}
\end{figure}

Three empirical facts are immediate. First, the earliest public anchor at \(3M_\star\) exhibits the widest direct-wave sensitivity of the four anchors. Second, the intermediate anchors at \(6M_\star\) and \(9M_\star\) have materially smaller direct-wave medians while still keeping the remnant estimate inside a compact subset of the event-local box. Third, the later \(11M_\star\) anchor has the smallest direct-wave median, but its remnant estimate remains concentrated against the upper edge of the detector-frame box and therefore should not be read as a cleaner version of the \(6M_\star\) window.

\subsection{Use of the reruns in the event map}

The event map of Section~\ref{sec:gw250114-empirical} does not turn Figure~\ref{fig:appK-anchor-rerun} into theorem-level constants. Its role is narrower. The finite reruns determine the scale on which preprocessing changes are treated as routine, while the fixed nuisance dictionary prevents those reruns from absorbing arbitrarily flexible early-time structure.

The dense-scan cutoffs of \eqref{eq:sec8-trusted-cutoffs}--\eqref{eq:sec8-transitional-cutoffs} are interpreted against two empirical constraints. A trusted window should look at least as stable as the \(6M_\star\) and \(9M_\star\) anchors, which lie near the center of the finite rerun suite. A rejected window should fail clearly enough that it resembles the early \(3M_\star\) anchor rather than a marginal perturbation of the accepted band. The real-data dense scan follows exactly that pattern. The \(3M_\star\) anchor has both large direct-wave gain and large quadratic gain. The \(6M_\star\) anchor minimizes the baseline residual and lies inside the compact part of the public detector-frame envelope. The \(9M_\star\) anchor remains in the same intermediate band, although it already lies closer to the edge of the public comparison. The \(11M_\star\) anchor is conservative but mixed: direct-wave sensitivity is small, yet the baseline residual and the quadratic gain are no longer minimal.

In this sense the rerun suite supports the numerical trust map. It fixes the finite real-data scale on which the accepted band is drawn.

\end{document}